\documentclass[11pt]{article}
\usepackage[a4paper,margin=27mm]{geometry}
\usepackage{amsmath,amssymb,amsthm}
\usepackage{bm}
\usepackage{graphicx}
\usepackage{booktabs}
\usepackage{tabularx}

\newcolumntype{Y}{>{\raggedright\arraybackslash}X}
\usepackage{tikz}
\usetikzlibrary{arrows.meta,positioning}
\usepackage[
  colorlinks=true,
  linkcolor=black,
  citecolor=black,
  urlcolor=black,
  pdftitle={Partial Stabilizer Learning under Fixed Commuting Constraints and the Absence of a Copy-Rate Discount},
  pdfauthor={Masahito Hayashi and Yimin Lu},
  pdfsubject={Partial stabilizer learning under fixed commuting constraints},
  pdfkeywords={stabilizer learning, quantum state identification, verification, Clifford covariance, sample complexity}
]{hyperref}
\allowdisplaybreaks
\newcommand{\F}{\mathbb F}

\newcommand{\Sym}{\operatorname{Sym}}
\newcommand{\rank}{\operatorname{rank}}
\newcommand{\im}{\operatorname{Im}}

\newcommand{\Tr}{\operatorname{Tr}}

\newcommand{\1}{\mathbf 1}

\newcommand{\supp}{\operatorname{supp}}
\newcommand{\PGM}{\mathrm{PGM}}
\newcommand{\spair}[2]{\langle #1,#2\rangle_{\mathrm{sp}}}

\newcommand{\ket}[1]{\lvert #1\rangle}
\newcommand{\bra}[1]{\langle #1\rvert}
\newcommand{\braket}[2]{\langle #1\mid #2\rangle}

\theoremstyle{plain}
\newtheorem{theorem}{Theorem}[section]
\newtheorem{lemma}[theorem]{Lemma}
\newtheorem{proposition}[theorem]{Proposition}
\newtheorem{corollary}[theorem]{Corollary}
\theoremstyle{definition}
\newtheorem{definition}[theorem]{Definition}
\theoremstyle{remark}
\newtheorem{remark}[theorem]{Remark}

\title{Partial Stabilizer Learning under Fixed Commuting Constraints and the Absence of a Copy-Rate Discount}
\author{%
\parbox{0.95\textwidth}{\centering
Masahito Hayashi\textsuperscript{1,2,3}
\quad
Yimin Lu\textsuperscript{4}\\[0.75em]
\small
\textsuperscript{1}School of Data Science, The Chinese University of Hong Kong,
Shenzhen, Longgang District, Shenzhen 518172 Guangdong, China.\\
\textsuperscript{2}International Quantum Academy, Futian District,
Shenzhen 518048 Guangdong, China.\\
\textsuperscript{3}Graduate School of Mathematics, Nagoya University,
Chikusa-ku, Nagoya 464--8602 Aichi, Japan.\\
\textsuperscript{4}Yau Mathematical Sciences Center, Tsinghua University,
Beijing 100084, China.}}
\date{}

\begin{document}
\maketitle

\begin{abstract}
We consider an unknown pure stabilizer state of \(n\) qudits with a fixed prime local dimension \(p\), together with prescribed commuting Pauli observables. Rather than learning the whole state, the learner must recover only complementary stabilizers that commute with the prescribed observables, together with their eigenvalues. Once the prescribed measurement is performed, this information and the observed outcome determine the corresponding conditional stabilizer state. We compare the number \(k\) of copies available to the learner with the number \(n\) of qudits and refer to their ratio \(k/n\) as the copy rate.
Suppose that the dimension \(m\) of the requested complementary stabilizer subspace satisfies \(m=\beta n+o(n)\) for a fixed \(0<\beta\leq1\). If the copy rate converges to a value below one, the optimal exact recovery probabilities and verification scores converge to zero, both in the worst case and under the uniform prior. Conversely, if the number \(k\) of copies minus the number \(n\) of qudits tends to positive infinity, the entire stabilizer state can be identified with probability tending to one, and the requested complementary information can then be extracted. Thus, restricting the learning target to the stabilizer information needed after the prescribed measurement yields no copy-rate discount.
\end{abstract}
\section{Introduction}
\label{sec:introduction}
Many quantum-information tasks require only selected classical information about an unknown state rather than a complete state description. 
Examples include estimating the acceptance probabilities of prescribed measurements~\cite{Aaronson2018ShadowTomography, BadescuODonnell2024ImprovedQuantumData, KingEtAl2025TriplyEfficient, Sinha2025DimensionIndependent}, 
constructing reusable classical representations for predicting selected state properties~\cite{HuangKuengPreskill2020ClassicalShadows, ChenYuZengFlammia2021RobustShadows, HadfieldEtAl2022LocallyBiasedShadows, BertoniEtAl2024ShallowShadows}, 
and reconstructing reduced density matrices on specified subsystems~\cite{CotlerWilczek2020OverlappingTomography, BonetMonroigBabbushOBrien2020PartialTomography, AraujoEtAl2022LocalOverlappingTomography, HansenneEtAl2025OptimalOverlapping}. 
In their respective settings, these results show that task-specific characterization can require substantially fewer quantum samples or measurement settings than full tomography. 
We ask whether requiring only selected information about a structured quantum state, rather than its complete description, can likewise reduce the leading-order quantum copy complexity.

We study an unknown pure stabilizer state of $n$ qudits with arbitrary prime local dimension $p$, together with prescribed commuting Pauli observables.   
The learner receives copies of the initial state and knows the prescribed measurement, but must produce its report before the measurement outcome is observed. Once an outcome has been obtained, the report and that outcome should identify the corresponding conditional stabilizer state.
To meet this requirement, the learner need not identify the entire initial state, but only the part of its stabilizer description that commutes with the prescribed observables, together with the corresponding eigenvalues. 
Pauli directions are represented by the symplectic vector space $\F_p^{2n}$. 
Two Pauli directions commute exactly when their symplectic inner product vanishes. 
A subspace on which this inner product vanishes identically is called isotropic and represents a family of mutually commuting Pauli observables. 
Let $W$ be the isotropic subspace generated by the prescribed observables, and write $\dim W=n-m$. 
We label the unknown initial pure stabilizer state by $x$, and 
denote its maximal stabilizer subspace and eigenvalue character by $L_x$ and $\chi_x$, respectively. 
The symplectic orthogonal complement $W^\perp$ consists of all Pauli directions commuting with every element of $W$. 
Consequently, $L_x\cap W^\perp$ consists precisely of the initial stabilizer directions that commute with all the prescribed observables.

After the measurement of $W$, the measured Pauli observables $W$ are stabilizers of 
the conditional state, with eigenvalues given by the observed measurement outcome.  
Furthermore, any initial stabilizer within $L_x\cap W^\perp$ is preserved along with its eigenvalue, whereas initial stabilizers that fail to commute with the measured family are irrelevant for defining the conditional state.  
This insight motivates the primary learning objective of this paper. 
We seek an $m$-dimensional subspace
\[
 K\subseteq L_x\cap W^\perp \quad\text{such that}\quad K\cap W=\{0\},
\]
together with the restricted character $\chi_x|_K$.  
The subspace $K$ consists of stabilizers inherited from the initial state and completes the measured Pauli family $W$ to a maximal commuting family.  
Consequently, for any measurement outcome having positive probability, $K$ and its character, together with the observed outcome, determine the corresponding conditional pure stabilizer state completely.  
Complete identification of the initial stabilizer state is therefore unnecessary for this purpose.  

The pair $(K,\chi_x|_K)$ cannot, however, predict which outcomes of the measurement associated with $W$ can occur.
That information may depend on the part of the initial stabilizer contained in $W$ and its eigenvalue character, neither of which is included in the requested output.
We therefore study the problem of identifying this complementary stabilizer information directly from copies of the initial state. While complete identification of the initial stabilizer state would certainly solve the problem, it would also redundantly recover stabilizers that are destroyed by the measurement outcomes and play no role in specifying conditional states. 
When \(m=\beta n+o(n)\) with \(0<\beta\leq1\), this complementary subspace has only \(m\) dimensions, 
compared with the \(n\) independent stabilizer directions needed for a full stabilizer description.
This substantial reduction in the requested information raises the central question of this paper: 
does identifying an $m$-dimensional complementary stabilizer subspace $K$, together with its eigenvalues, reduce the coefficient of the leading linear term in the required number of copies? Our result concerns this natural complementary-stabilizer report and does not claim that it is the only possible classical representation of the conditional states.

Our main result shows that this reduction of the requested information does not reduce the coefficient of the leading linear term in the copy complexity. 
For every fixed prime \(p\) and every fixed \(0<\beta\leq1\), the complementary stabilizer information sufficient to determine the conditional state associated with any observed outcome has copy rate one. More precisely, 
if the number of available copies is \(k=\alpha n+o(n)\) for some \(\alpha<1\), 
the optimal success probability asymptotically vanishes, whereas the required information can be recovered with success probability tending to one whenever \(k=n+\omega(1)\). 
Thus, recovering the requested complementary stabilizer information yields no asymptotic saving in the leading coefficient of the required number of copies.

The proof of the central theorem is divided into two parts. The converse, or impossibility part, shows that recovery fails when the number of copies has asymptotic rate strictly below one. The direct, or achievability part, constructs a procedure whose success probability tends to one when $k=n+\omega(1)$. For the converse, we first analyze a coordinate version of the problem in which the learner is supplied with additional reference information. This can only make recovery easier. We then transfer the resulting bound to the original fixed-constraint task. The verification criterion requires a separate comparison, but the same finite-copy structure yields the same threshold.  
A symplectic transformation sends the announced ordered maximal commuting Pauli family to the standard reference frame, in which the requested complementary stabilizer information becomes a fixed partial coordinate label.
In the resulting coordinates, every stabilizer subspace transverse to that frame has the unique form \( \left\{\binom{u}{Au}:u\in\F_p^n\right\}, \) 
where $A$ is an $n\times n$ symmetric matrix over $\F_p$. 
The corresponding eigenvalue assignment is represented by an additional vector $b\in\F_p^n$. Thus, the pair $(A,b)$ provides explicit and unique coordinates for each stabilizer state in the transverse domain.
For odd \(p\), these coordinates are realized by ordinary quadratic phases, whereas \(p=2\) requires a \(\mathbb Z_4\)-valued quadratic enhancement.
The two state expressions are characteristic-dependent realizations of the same idea: they encode both the symmetric-matrix stabilizer subspace and its eigenvalue assignment in a single state vector. These explicit representations allow us to organize the computational-basis expansion of the tensor-power states by their aggregate linear and quadratic moments.

The same state formulas reveal the structure of the finite-copy discrimination problem.  Our finite-copy analysis is based on a moment decomposition of the tensor-power states, obtained by grouping computational-basis tuples according to the sums of their basis vectors and their quadratic products. 
These linear and quadratic moments reveal a block structure in the mixed-state discrimination problem associated with the partial labels. 
Exploiting this structure, rather than covariance alone, we prove that the pretty-good measurement (PGM) is optimal and obtain an exact formula for the optimal finite-copy success probability.
The converse then bounds the exact finite-copy formula by weighting the number of occurring full moment values above each lower moment by the probability of that lower moment, and normalizing by the fixed partial-label alphabet size. This probability-weighted counting argument, unlike a bound based only on the total number of parameters or the dimension of the common support, proves that the success probability vanishes at every copy rate strictly below one. 

For the direct part, we first solve the stronger problem of identifying the complete stabilizer state. Clifford transitivity makes the pretty-good measurement optimal for the uniform ensemble of pure stabilizer states. Grouping the competing states according to the codimension of the intersection of their stabilizer subspaces, and combining the number of states in each class with their pairwise overlaps, shows that the identification error vanishes whenever $k-n\to+\infty$. The recovered complete stabilizer description is then converted by deterministic classical postprocessing into a valid complementary subspace $K$ and its eigenvalues. Hence the requested partial information can be recovered with probability tending to one when $k=n+\omega(1)$, completing the proof that its copy rate is one.

Previous work has established linear-copy procedures for complete stabilizer identification, together with information-theoretic limitations on that task \cite{AaronsonGottesman2008,Montanaro2017BellSampling}. 
The problem considered here has a strictly smaller exact target: the learner needs only to find a valid $m$-dimensional complementary stabilizer subspace $K$ and determine its eigenvalues. Nevertheless, our converse shows that this target cannot be recovered with nonvanishing success probability at any copy rate strictly below one. Thus the reduction from complete identification to the requested complementary stabilizer information does not reduce the leading copy rate.

The present problem is also distinct from stabilizer testing and known-target stabilizer verification. In those settings, the task is to test a property of a supplied state or its proximity to a target specified in advance, rather than to recover an unknown stabilizer description \cite{HayashiMorimae2015StabilizerTesting, DangniamHanZhu2020Verification,HinscheHelsen2025Testing}. 
It also differs from limited-memory stabilizer learning, which retains complete stabilizer identification while restricting the available measurement architecture \cite{ArunachalamSchatzki2026LimitedMemory}. In our problem, arbitrary collective measurements are allowed, while the requested output itself is reduced according to the prescribed commuting Pauli observables.

The finite discrimination result also differs from the established optimality of square-root measurements for pure geometrically uniform ensembles \cite{EldarForney2001SRM}. The states associated with the partial labels in our auxiliary problem are mixed states obtained by averaging over complete labels that give the same partial label. Covariance alone therefore does not imply the optimality of the pretty-good measurement. Its optimality follows instead from the block structure exposed by the moment decomposition.

The remainder of this paper follows the proof logic of the fixed-constraint problem. Section~\ref{sec:fixed-W-main-results} formalizes the learning and verification tasks, and Sections~\ref{sec:model-main-results} and~\ref{sec:generic-fixed-W-frame-relation} relate them to the coordinate problem used in the converse. Sections~\ref{sec:odd-prime-foundations} and~\ref{sec:integrated-qubit-foundations} develop the odd-prime and characteristic-two state representations. Sections~\ref{sec:partial-learning} and~\ref{sec:common-finite-size-framework} formulate and solve the finite-copy discrimination problem. Section~\ref{sec:converse} proves the asymptotic converse, Section~\ref{sec:fourier-achievability} proves achievability through complete stabilizer identification and classical postprocessing, and Section~\ref{sec:integrated-all-prime-graded} treats the verification criterion. Section~\ref{sec:fixed-diagonal-graph-basis-submodels} gives an application to fixed-diagonal qubit graph-basis submodels, and Section~\ref{sec:discussion} discusses the implications and open regimes. The appendices collect the auxiliary algebraic and counting arguments.  
\section{The fixed-constraint learning and verification tasks}
\label{sec:fixed-W-main-results}

We begin with the operational problem that motivates the paper.  An
$(n-m)$-dimensional isotropic subspace $W\subseteq\F_p^{2n}$ is fixed before
the unknown state is presented.  It specifies the commuting Pauli or Weyl
observables that will be measured.  The learner is not asked to reconstruct the
entire initial stabilizer state.  Instead, it must recover complementary
stabilizer information that, together with the observed measurement outcome,
determines the resulting conditional state.  This requested complement need
not include every initial stabilizer that commutes with the prescribed
measurement.  The central question is whether this reduced output can be
learned with a copy rate below that required for complete stabilizer
identification.

We formulate two versions of this task.  The first is exact learning: the
learner must output valid complementary stabilizers and their character.  The
second is an operational verification game: a separate verifier tests the
learner's reported postmeasurement stabilizer description on an independent
conditional-state copy.  The exact worst-case and uniform-average optimal
success probabilities introduced below are the primary quantities evaluated in
this paper.

\subsection{Complementary stabilizer information after a fixed measurement}
\label{subsec:fixed-W-task}
\label{subsec:fixed-W-conditional-sufficiency}

Let $\mathsf{Stab}_{n,p}$ denote the set of projective pure stabilizer states
of $n$ qudits of local dimension $p$.  For $x\in\mathsf{Stab}_{n,p}$, let
$\rho_x$ denote the corresponding rank-one density operator, and let $L_x$
and $\chi_x$ denote its stabilizer subspace and eigenvalue character,
respectively.  
Writing phase-space vectors as $z=\binom{u}{v}$ and $z'=\binom{u'}{v'}$ with $u,v,u',v'\in\F_p^n$, we equip $\F_p^{2n}$ with the alternating bilinear form 
\begin{equation} 
\spair{z}{z'} :=u^Tv'-v^Tu'. \label{eq:global-symplectic-form} 
\end{equation} 
Fix a physical Pauli or Weyl section $z\mapsto P_z$ satisfying
\begin{equation}
 P_zP_{z'}=c_p(z,z')P_{z+z'},
 \label{eq:global-Pauli-section-cocycle}
\end{equation}
for its section cocycle $c_p$.  On an isotropic subspace $U$, a
cocycle-compatible eigenvalue assignment is a function $\eta:U\to\mathbb C$
of unit modulus satisfying
\begin{equation}
 \eta(z+z')=c_p(z,z')^{-1}\eta(z)\eta(z')
 \qquad(z,z'\in U).
 \label{eq:cocycle-compatible-assignment}
\end{equation}
It defines the joint eigenspace projector
\begin{equation}
 \Pi_{U,\eta}:=p^{-\dim U}\sum_{z\in U}\eta(z)^{-1}P_z.
 \label{eq:intrinsic-character-projector}
\end{equation}
For the odd-prime Weyl section used later, the cocycle is trivial on every
isotropic subspace and these assignments are ordinary additive characters.
In characteristic two, the coordinate sections may have a nontrivial cocycle;
the compatible assignment in Eq.~\eqref{eq:cocycle-compatible-assignment} is
the intrinsic object represented by the later enhanced coordinates.  The
subspace $L_x$ is Lagrangian, meaning that it is an $n$-dimensional maximal
isotropic subspace of $\F_p^{2n}$.  For a subspace
$U\subseteq\F_p^{2n}$, write
\begin{equation}
 U^\perp:=\{v:\spair{v}{u}=0\text{ for every }u\in U\}.
 \label{eq:symplectic-orthogonal-complement}
\end{equation}
The space $L_x\cap W^\perp$ consists precisely of the initial stabilizer
directions commuting with the whole prescribed measurement.  These subspaces
satisfy the dimension identity
\begin{equation}
 \dim(L_x\cap W^\perp)=m+\dim(L_x\cap W).
 \label{eq:fixed-W-dimension}
\end{equation}
Indeed, put $r:=\dim W=n-m$ and $t:=\dim(L_x\cap W)$.  The linear map
$L_x\to W^*$ induced by the symplectic pairing has kernel
$L_x\cap W^\perp$.  Its transpose has kernel $L_x\cap W$, because
$L_x=L_x^\perp$.  Hence its rank is $r-t$, and rank--nullity on the
$n$-dimensional space $L_x$ gives
$\dim(L_x\cap W^\perp)=n-(r-t)=m+t$, proving
Eq.~\eqref{eq:fixed-W-dimension}.  It follows that one may choose an
$m$-dimensional complement $K$ satisfying
\begin{equation}
 L_x\cap W^\perp=(L_x\cap W)\oplus K.
 \label{eq:fixed-W-direct-complement}
\end{equation}
A syntactically admissible report is a pair $(K,\eta)$ in the
state-independent alphabet
\begin{equation}
 \mathsf{Out}_{W,m}:=
 \left\{(K,\eta):
 \begin{array}{l}
 K\subseteq W^\perp\text{ is an }m\text{-dimensional isotropic subspace},\\
 K\cap W=\{0\},\text{ and }\eta\text{ is a cocycle-compatible}\\
 \text{eigenvalue assignment for the commuting family over }K
 \end{array}
 \right\}.
 \label{eq:fixed-W-output-alphabet}
\end{equation}
Compatibility is understood in the sense of
Eq.~\eqref{eq:cocycle-compatible-assignment} relative to the fixed physical
section.  The later characteristic-specific coordinate sections give its
explicit odd-prime and binary realizations; the present definition is
intrinsic and does not depend on the unknown state $x$.

For a given input $x$, a report $(K,\eta)\in\mathsf{Out}_{W,m}$ is valid if and
only if
\begin{equation}
 K\subseteq L_x\cap W^\perp,
 \qquad \eta=\chi_x|_K.
 \label{eq:fixed-W-valid-output}
\end{equation}
The dimension, isotropy, and transversality conditions are already part of
membership in $\mathsf{Out}_{W,m}$.  We use
\begin{equation}
 K_W(x):=L_x\cap W^\perp.
 \label{eq:fixed-W-K}
\end{equation}

The reason for this target is operational.  Let
$\mathsf P_W=\{\Pi_{W,\nu}:\nu\in\Omega_W\}$ be the joint projective
measurement associated with $W$, where $\Omega_W$ is its compatible
joint-eigenvalue outcome alphabet.  The physical update is the associated
L\"uders instrument, with operation
$\rho\mapsto\Pi_{W,\nu}\rho\Pi_{W,\nu}$ for outcome $\nu$.  The learner
knows $W$ but reports before this instrument is applied and therefore does not
know $\nu$.  Let $\nu\in\Omega_W$ be any outcome having positive
probability.  The measured family $W$ becomes part of the postmeasurement
stabilizer with assignment $\nu$, while every stabilizer in $K$ survives with
its original assignment $\eta$.  Since $W\oplus K$ is an $n$-dimensional
isotropic space, it is Lagrangian.  Since $W\cap K=\{0\}$, every element of $W\oplus K$ has a unique
form $w+k$ with $w\in W$ and $k\in K$.  Relative to the fixed physical section
in Eq.~\eqref{eq:global-Pauli-section-cocycle}, define
\begin{equation}
 \operatorname{Ext}_{W\oplus K}(\nu,\eta)(w+k)
 :=c_p(w,k)^{-1}\nu(w)\eta(k).
 \label{eq:fixed-W-explicit-cocycle-extension}
\end{equation}
The definition is unique because the direct-sum decomposition is unique.  It
restricts to $\nu$ on $W$ and to $\eta$ on $K$.  Associativity of the physical
section gives the cocycle identity for $c_p$, while $W\perp K$ ensures that the
operators over the two summands commute.  Substitution in
Eq.~\eqref{eq:fixed-W-explicit-cocycle-extension} therefore shows that the
extension is cocycle-compatible on $W\oplus K$.  Conversely, any compatible
extension with the prescribed restrictions must have the displayed value,
because $P_wP_k=c_p(w,k)P_{w+k}$.  Hence the conditional state is the unique pure stabilizer state specified by
the pair consisting of this Lagrangian and this compatible extension.
\begin{equation}
 \bigl(W\oplus K,\ \operatorname{Ext}_{W\oplus K}(\nu,\eta)\bigr).
 \label{eq:fixed-W-postmeasurement-space}
\end{equation}
Thus the requested output contains all initial-state information needed to
identify every positive-probability conditional state once the actual outcome
is known.  It deliberately omits stabilizers that do not survive the
measurement.  It also need not determine which outcomes can occur, because
that may depend on $L_x\cap W$ and its character.

The input is called \emph{generic} when $L_x\cap W=\{0\}$ and
\emph{exceptional} otherwise:
\begin{equation}
 \mathsf{Stab}_{n,p}^{\mathrm{gen}}(W):=\{x:L_x\cap W=\{0\}\},\quad
 \mathsf{Stab}_{n,p}^{\mathrm{exc}}(W):=\{x:L_x\cap W\ne\{0\}\}.
 \label{eq:fixed-W-generic-exceptional}
\end{equation}
For a generic input, $K_W(x)$ has dimension $m$ and is the unique valid
subspace.  For an exceptional input, the target is intentionally
set-valued: any complement in Eq.~\eqref{eq:fixed-W-direct-complement}, supplied
with the correct restricted character, is accepted.  This distinction is
important only in transferring the later generic converse to the full uniform
prior.

\subsection{Exact learning and its two performance criteria}
\label{subsec:fixed-W-operational-criteria}

The learner receives $k$ identical copies $\rho_x^{\otimes k}$ of the unknown
initial state.  It may choose its collective measurement using the public
constraint $W$, but not using the unknown label $x$.  A strategy is therefore a positive-operator-valued measure (POVM)
\begin{equation}
 \mathsf M=\{M_{K,\eta}:(K,\eta)\in\mathsf{Out}_{W,m}\}.
 \label{eq:fixed-W-exact-learner-POVM}
\end{equation}
whose outcome is interpreted as the proposed complementary stabilizer
description.  Because an exceptional state can admit several valid outputs,
the success probability of a fixed strategy on a fixed state is the total
probability of all valid outcomes:
\begin{equation}
 s_{\mathsf M}^{(k)}(x;W)
 :=\sum_{(K,\eta)\text{ valid for }x}
 \Tr[M_{K,\eta}\rho_x^{\otimes k}].
 \label{eq:fixed-W-statewise-success}
\end{equation}

We evaluate a strategy in two complementary ways.  The worst-case criterion
asks for a guarantee that holds uniformly over every pure stabilizer input.
The uniform-average criterion is the Bayesian success probability when the
unknown stabilizer state is sampled uniformly.  Optimizing over all collective
learner POVMs defines
\begin{align}
 P_{W,\rm wc}^{(k),*}(n,m;p)
 &:=\sup_{\mathsf M}\inf_{x\in\mathsf{Stab}_{n,p}}
 s_{\mathsf M}^{(k)}(x;W),
 \label{eq:fixed-W-wc-success}\\
 P_{W,\rm av}^{(k),*}(n,m;p)
 &:=\sup_{\mathsf M}\frac1{|\mathsf{Stab}_{n,p}|}
 \sum_{x\in\mathsf{Stab}_{n,p}}s_{\mathsf M}^{(k)}(x;W).
 \label{eq:fixed-W-av-success}
\end{align}
These are the two principal exact-learning quantities of the paper.  Our main
theorem determines their first-order copy threshold when $m$ is linear in
$n$, and shows that neither criterion admits a copy-rate discount below one.

The endpoint $W=\{0\}$, equivalently $m=n$, is ordinary complete stabilizer
identification.  Indeed, $W^\perp=\F_p^{2n}$ and
$L_x\cap W^\perp=L_x$, so the only valid $n$-dimensional output is
$(K,\eta)=(L_x,\chi_x)$.  We denote the unrestricted complete-identification worst-case and
uniform-average optima by
\begin{align}
 P_{\rm stab,id,wc}^{(k),*}(n;p)
 &:=\max_{\{M_x\}}\min_{x\in\mathsf{Stab}_{n,p}}
 \Tr[M_x\rho_x^{\otimes k}],
 \label{eq:unrestricted-stabilizer-wc-success}\\
 P_{\rm stab,id,av}^{(k),*}(n;p)
 &:=\max_{\{M_x\}}\frac1{|\mathsf{Stab}_{n,p}|}
 \sum_{x\in\mathsf{Stab}_{n,p}}\Tr[M_x\rho_x^{\otimes k}].
 \label{eq:unrestricted-stabilizer-av-success}
\end{align}
The complete-identification model is group covariant.  Twirling a decoder
together with its output labels preserves its uniform-average success and
makes its conditional success independent of the input state.  Hence
\begin{equation}
 P_{\rm stab,id,wc}^{(k),*}(n;p)
 =P_{\rm stab,id,av}^{(k),*}(n;p).
 \label{eq:unrestricted-stabilizer-wc-av-equality}
\end{equation}
We write $P_{\rm stab,id}^{(k),*}(n;p)$ for this common value.

\begin{theorem}[Complete stabilizer identification with diverging overhead]
\label{thm:complete-identification-achievability}
Fix a prime $p$.  For every $n$ and $k$,
\begin{equation}
 P_{\rm stab,id,wc}^{(k),*}(n;p)
 =P_{\rm stab,id,av}^{(k),*}(n;p)
 =P_{\rm stab,id}^{(k),*}(n;p).
 \label{eq:complete-identification-wc-av-common-value}
\end{equation}
If $k_n-n\to+\infty$, then
\begin{equation}
 P_{\rm stab,id,wc}^{(k_n),*}(n;p)
 =P_{\rm stab,id,av}^{(k_n),*}(n;p)
 \longrightarrow1.
 \label{eq:complete-identification-wc-av-achievability}
\end{equation}
Equivalently, at the fixed-constraint endpoint $W=\{0\}$ and $m=n$, both
exact-learning success probabilities tend to one.
\end{theorem}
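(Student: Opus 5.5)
The plan is to prove the two claims separately. First I establish the equality of the worst-case and average optima for every $n$ and $k$. Then I show that the average error vanishes, using the pretty-good measurement (PGM) together with an overlap count.

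For the equality, the Clifford group $\mathcal C_{n,p}$ acts transitively on $\mathsf{Stab}_{n,p}$ by $x\mapsto g\cdot x$, with $U_g\rho_xU_g^\dagger=\rho_{g\cdot x}$. Any worst-case value is at most the average, so $P_{\rm wc}\le P_{\rm av}$. For the reverse inequality, take an optimal average POVM $\{M_x\}$ and form the twirl
\[
\tilde M_x=|\mathcal C_{n,p}|^{-1}\sum_g U_g^{\dagger\otimes k}M_{g\cdot x}U_g^{\otimes k},
\]
where the twirl may be taken modulo Paulis and phases, so the group is finite. This is again a POVM. Its success probability on $x$ equals the average of $\Tr[M_{g\cdot x}\rho_{g\cdot x}^{\otimes k}]$ over $g$. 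Because the action is transitive, every orbit point is hit equally often, so this quantity equals the uniform average for every $x$. Hence $P_{\rm wc}\ge P_{\rm av}$, which gives Eq.~\eqref{eq:complete-identification-wc-av-common-value}.

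For the limit, I lower-bound the average by the PGM success, since any fixed POVM gives a lower bound. For the pure ensemble $\psi_x=\ket{\phi_x}^{\otimes k}$ with uniform prior, the standard Barnum--Knill-type bound for the PGM gives an error probability at most
\[
N^{-1}\sum_{x\ne y}|\braket{\phi_x}{\phi_y}|^{2k},\qquad N=|\mathsf{Stab}_{n,p}|.
\]
(The paper also notes that the PGM is in fact optimal here, but the bound is all that is needed.) By transitivity, this error bound equals $\sum_{y\ne x_0}|\braket{\phi_{x_0}}{\phi_y}|^{2k}$ for a fixed reference state $x_0$.

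The overlap computation uses a standard fact. If $\dim(L_{x_0}\cap L_y)=n-d$, then $|\braket{\phi_{x_0}}{\phi_y}|^2$ equals $p^{-d}$ when the characters agree on the intersection, and $0$ otherwise. Grouping the states $y$ by $d$:

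\begin{itemize}
\item For $d=0$, the only state with a nonzero overlap is $y=x_0$ itself, since states with $L_y=L_{x_0}$ but a different character are orthogonal.
\item For $d\ge1$, the number of Lagrangians $L$ with $\dim(L\cap L_{x_0})=n-d$ is at most $\binom{n}{d}_p\,p^{d(d+1)/2}$. Choose the intersection $I$ as a codimension-$d$ subspace of $L_{x_0}$. The Lagrangians in $I^\perp/I\cong\F_p^{2d}$ transverse to $L_{x_0}/I$ correspond to symmetric $d\times d$ matrices, which gives the factor $p^{d(d+1)/2}$.
\item Among the characters on such an $L$, exactly those agreeing with $\chi_{x_0}$ on $I$ contribute, and there are $p^{d}$ of them.
\end{itemize}

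Using $\binom{n}{d}_p\le C_p\,p^{d(n-d)}$, the error is bounded by
\[
\sum_{d=1}^n C_p\,p^{d(n-d)+d(d+1)/2+d-dk}
= C_p\sum_{d\ge1}p^{-d(k-n)}\,p^{-d(d-3)/2}.
\]
The factor $p^{-d(d-3)/2}$ is at most $p$ for every $d\ge1$, and equals $1$ for $d=1,2$. The sum is therefore at most $C_p'\,p^{-(k-n)}/(1-p^{-(k-n)})$, which tends to $0$ when $k_n-n\to\infty$. The endpoint statement for $W=\{0\}$ then follows from the identification made earlier in this section.

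The main obstacle is this counting step. The bound must hold with the exact $p^{d}$ power for the number of compatible characters, and the Gaussian binomial must absorb the quadratic term $p^{d(d+1)/2}$ for every $d$, not just small $d$. This is why the exponent has to come out as $-d(k-n)$ minus a nonnegative quadratic, up to a bounded constant. I would verify the transverse-Lagrangian count carefully via the symmetric-matrix parametrization in $I^\perp/I$.
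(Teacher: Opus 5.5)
Your proof is correct and follows essentially the paper's route: Clifford twirling for the worst-case/average equality, then a pairwise-overlap bound on the PGM error, then the $p^{d}$ compatible characters, the symmetric-matrix shell count, and a Gaussian-binomial geometric sum. In the overlap step, the averaged pure-state bound you quote (due to Hausladen et al.\ rather than Barnum--Knill) replaces the paper's transitive Jensen lemma, and, as you note, PGM optimality is not needed. Two slips are harmless. First, $p^{-d(d-3)/2}$ equals $p$ (not $1$) at $d=1,2$, which is exactly why your bound by $p$ and the constant $C_p'=pC_p$ are needed. Second, the twirl should be over the Clifford group modulo global phases only, since quotienting by the Paulis would lose the action on the characters.
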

Theorem~\ref{thm:complete-identification-achievability} is proved in
Section~\ref{sec:fourier-achievability}.

The verification objective at $W=\{0\}$ should not be confused with the exact
indicator above: it is the Born score of the announced full-state projector,
so an incorrect nonorthogonal report may receive a positive score.  Exact
complete recovery nevertheless produces score one and therefore supplies the
direct part for that verification task as well.

The reduction will also require the same optimization on a restricted
candidate set.  For any nonempty
$\mathcal S\subseteq\mathsf{Stab}_{n,p}$, define
\begin{align}
 P_{W,\rm wc}^{(k),*}(\mathcal S)
 &:=\sup_{\mathsf M}\inf_{x\in\mathcal S}s_{\mathsf M}^{(k)}(x;W),
 \label{eq:fixed-W-restricted-wc}\\
 P_{W,\rm av}^{(k),*}(\mathcal S)
 &:=\sup_{\mathsf M}\frac1{|\mathcal S|}
 \sum_{x\in\mathcal S}s_{\mathsf M}^{(k)}(x;W).
 \label{eq:fixed-W-restricted-av}
\end{align}
We write $P_{W,\rm gen,wc}^{(k),*}$ and
$P_{W,\rm gen,av}^{(k),*}$ for the specialization to
$\mathsf{Stab}_{n,p}^{\mathrm{gen}}(W)$.  The corresponding error probabilities
are one minus these success probabilities.

\subsection{Postmeasurement verification game}
\label{subsec:fixed-W-verification-game}

Exact identification is a stringent requirement.  To connect it directly to
the physical purpose of predicting the postmeasurement state, we also consider
a game with a separate verifier.  The chronological order is fixed: the
learner first measures the $k$ copies of the initial state and reports without
knowing $\nu$; afterward, the verifier applies the joint projective L\"uders
instrument for $W$ to an independent test copy and obtains $\nu$.  Let
$\Pi_{W,\nu}$ be the joint-eigenvalue
projector for an outcome $\nu$ of the measurement associated with $W$, and
let $\Pi_{\widehat K,\widehat\eta}$ be the joint-character projector for an
announcement $(\widehat K,\widehat\eta)\in\mathsf{Out}_{W,m}$.  Since
$\widehat K\subseteq W^\perp$ and $\widehat K\cap W=\{0\}$, the two projectors
commute and jointly describe the direct-sum stabilizer family
$W\oplus\widehat K$.  Thus the projector for the announced postmeasurement
description is
\begin{equation}
 \Pi_{\widehat K,\widehat\eta}^{W,\nu}
 :=\Pi_{W,\nu}\Pi_{\widehat K,\widehat\eta}.
 \label{eq:fixed-W-announced-projector-factorization}
\end{equation}
For an input $x$, the outcome probability and normalized L\"uders
postmeasurement state are
\begin{equation}
 q_x(\nu):=\Tr[\Pi_{W,\nu}\rho_x],\quad
 \rho_{x|\nu}:=
 \frac{\Pi_{W,\nu}\rho_x\Pi_{W,\nu}}{q_x(\nu)}.
 \label{eq:fixed-W-outcome-probability}
\end{equation}
whenever $q_x(\nu)>0$.

The learner measures $\rho_x^{\otimes k}$ with a POVM
$\{M_{\widehat K,\widehat\eta}\}$ and reports
$(\widehat K,\widehat\eta)$.  On one independent copy of
$\rho_{x|\nu}$, the verifier applies the projector in
Eq.~\eqref{eq:fixed-W-announced-projector-factorization}.  Its conditional
acceptance probability is
\begin{equation}
 g_x^W(\widehat K,\widehat\eta\mid\nu)
 :=\Tr[\Pi_{\widehat K,\widehat\eta}^{W,\nu}\rho_{x|\nu}].
 \label{eq:fixed-W-conditional-verification-score}
\end{equation}
This conditional score may depend on $\nu$.  What is needed below is its
outcome average.  Commutation of the projectors and
$\sum_{\nu\in\Omega_W}\Pi_{W,\nu}=I$ give
\begin{align}
 \sum_{\substack{\nu\in\Omega_W\\q_x(\nu)>0}}
 q_x(\nu)
 \Tr\!\left[
   \Pi_{W,\nu}\Pi_{\widehat K,\widehat\eta}
   \rho_{x|\nu}
 \right]
 =
 \sum_{\nu\in\Omega_W}
 \Tr\!\left[
   \Pi_{W,\nu}\Pi_{\widehat K,\widehat\eta}
   \Pi_{W,\nu}\rho_x
 \right]
 =\Tr[\Pi_{\widehat K,\widehat\eta}\rho_x].
 \label{eq:fixed-W-outcome-averaged-score}
\end{align}
Consequently, the statewise expected verification score is
\begin{equation}
 v_{\mathsf M}^{(k)}(x;W)
 :=\sum_{(\widehat K,\widehat\eta)\in\mathsf{Out}_{W,m}}
 \Tr[M_{\widehat K,\widehat\eta}\rho_x^{\otimes k}]
 \Tr[\Pi_{\widehat K,\widehat\eta}\rho_x].
 \label{eq:fixed-W-statewise-verification-score-reduced}
\end{equation}
Equation~\eqref{eq:fixed-W-statewise-verification-score-reduced} is an
outcome-averaged identity, not a pointwise assertion about the conditional
score in Eq.~\eqref{eq:fixed-W-conditional-verification-score}.  A valid report
has score one; an incorrect report may receive a smaller nonzero score.
Replacing $s_{\mathsf M}^{(k)}$ in
Eqs.~\eqref{eq:fixed-W-wc-success}--\eqref{eq:fixed-W-av-success} by
$v_{\mathsf M}^{(k)}$ defines
$S_{W,\rm ver,wc}^{(k),*}(n,m;p)$ and
$S_{W,\rm ver,av}^{(k),*}(n,m;p)$, with restricted-set versions defined
analogously.  The verifier's test copy is independent of, and is not counted
among, the learner's $k$ input copies.

\section{The refined frame-assisted problem}
\label{sec:model-main-results}
The fixed-$W$ task has a basis-free report, and exceptional inputs can admit
several correct reports.  To obtain the single-valued labels needed for the
finite discrimination analysis, we introduce a learner-favorable auxiliary
problem: reveal an ordered maximal commuting frame and restrict the candidate
states to its transverse chart.  Revealing this information can only help the
learner.  A suitable frame whose last $n-m$ vectors span $W$ gives an ordered
realization of the generic complementary stabilizer target.
Section~\ref{sec:generic-fixed-W-frame-relation}
will justify that correspondence and the averaging over charts needed to
transfer the converse.

\subsection{The fixed-frame state family and reference-dual target}
\label{subsec:main-results-frame-model}

Let $\mathsf{Ref}_{n,p}$ denote the ordered bases of Lagrangian Pauli
subspaces.  Fix
\begin{equation}
 R=(R_1,\ldots,R_n),\quad
 M_R:=\operatorname{span}\{R_1,\ldots,R_n\},
 \label{eq:announced-reference-frame}
\end{equation}
and restrict the unknown state to
\begin{equation}
 \mathsf S(R):=\{x\in\mathsf{Stab}_{n,p}:L_x\cap M_R=\{0\}\}.
 \label{eq:frame-state-family}
\end{equation}
The transversality condition is precisely what makes the announced frame a
complete reference system for the unknown stabilizer.  Indeed, for
$x\in\mathsf S(R)$ the map
\begin{equation}
 \Gamma_{x,R}:L_x\longrightarrow\F_p^n,\quad
 \Gamma_{x,R}(s):=(\spair{s}{R_1},\ldots,\spair{s}{R_n})
 \label{eq:reference-pairing-map}
\end{equation}
is an isomorphism.  Hence there are unique reference-dual stabilizers
$S_1(x;R),\ldots,S_n(x;R)$ satisfying
\begin{equation}
 \spair{S_i(x;R)}{R_j}=\delta_{ij}.
 \label{eq:reference-dual-family}
\end{equation}
The refined target consists of the first $m$ of these directions and their
restricted character:
\begin{equation}
 y_R(x):=\bigl(S_1(x;R),\ldots,S_m(x;R),
 \chi_x|_{\operatorname{span}\{S_1(x;R),\ldots,S_m(x;R)\}}\bigr).
 \label{eq:frame-assisted-exact-target}
\end{equation}
Thus the auxiliary task is exact identification of $y_R(x)$ from
$\rho_x^{\otimes k}$, with $R$ public.  We denote its worst-case optimum on
$\mathsf S(R)$ by $P_{\rm frame,wc}^{(k),*}(R,m;p)$ and its uniform-average
optimum by $P_{\rm frame,av}^{(k),*}(R,m;p)$.  Unlike the original exceptional
problem, the fixed-frame target is single-valued.  This feature is essential
for the exact discrimination analysis developed later.

\subsection{Verification and coordinate standardization}
The refined fixed-frame problem uses the single-valued reference-dual target
in Eq.~\eqref{eq:frame-assisted-exact-target}.  Its report alphabet consists of
the corresponding frame-relative $m$-generator families and their characters.
After standardizing the public frame, this is the canonical partial-label
alphabet introduced below.  This refined alphabet need not exhaust the larger
basis-free alphabet $\mathsf{Out}_{W,m}$ of the original fixed-constraint
problem.  In particular, a syntactically admissible basis-free report need not
be a graph over the first $m$ announced frame directions.

Within the refined problem, exact learning and verification use the same
canonical report alphabet and the same learner POVMs.  Exact learning assigns
the indicator of equality with $y_R(x)$, whereas verification assigns the Born
acceptance probability of the announced commuting family on an independent
test copy.  A canonical report has score one exactly when it equals the true
refined target; false canonical reports have a smaller score determined by
their common stabilizer family.  Let
$S_{\rm frame,ver,wc}^{(k),*}(R,m;p)$ and
$S_{\rm frame,ver,av}^{(k),*}(R,m;p)$ denote these refined fixed-frame
verification optima.

The original fixed-constraint verification problem remains distinct at finite
size because its learner may use every report in $\mathsf{Out}_{W,m}$.  We do
not identify this full basis-free alphabet with the refined canonical
alphabet.  Section~\ref{sec:generic-fixed-W-frame-relation} states the direct
generic Bayes converse for that larger alphabet and transfers it to the full
prior.  Section~\ref{sec:integrated-all-prime-graded} proves the refined and
basis-free converses by parallel bounded-list reductions.

We now define the two standardized quantities that appear in the
frame-to-coordinate equivalence below.  Fix the standard ordered Lagrangian
frame and restrict the candidate state family to the corresponding transverse
chart.  In the standardized exact-identification task, the learner reports the
first $m$ reference-dual stabilizer directions and their restricted character.
We denote the optimal uniform-average exact success probability, optimized
over arbitrary collective POVMs on the learner's $k$ copies, by
$P_{\mathrm{id}}^{(k),*}(n,m;p)$.  When a later abbreviated symbol suppresses some
of $m,k,p$, that dependence is understood from the surrounding theorem
statement.  In the standardized verification task, the input
ensemble and learner reports are the same, but a report is evaluated by the
Born acceptance probability of its announced commuting family on one
independent verification copy.  We denote the corresponding optimal
uniform-average verification score by $S_{\mathrm{ver}}^{(k),*}(n,m;p)$.  The
verifier's independent copy is not included among the learner's $k$ copies.
Section~\ref{sec:partial-learning} gives the explicit symmetric-matrix
realizations of these standardized tasks, including the state ensemble,
partial-label space, learner POVMs, and verification projectors.

The public frame can now be standardized without changing either task.  Choose
a symplectic map $F_R$ sending $R_j$ to the $j$th standard vertical direction.
Every transformed Lagrangian $F_RL_x$ is then the graph $L_A$ of a unique
symmetric matrix, and
\begin{equation}
 F_RS_i(x;R)=\binom{e_i}{Ae_i}.
 \label{eq:dual-family-normal-form-coordinate}
\end{equation}
Accordingly, the intrinsic target becomes the first $m$ columns $AJ_m$ and the
associated character coordinates.  The implementing public Clifford unitary
bijects the candidate states, conjugates learner POVMs and verifier projectors,
and preserves every exact-success and Born-score term.  It therefore creates
no new learning problem; it is only a coordinate realization of the fixed-frame
task.

\begin{proposition}[Frame-to-coordinate equivalence]
\label{prop:frame-conditioned-coordinate-reduction}
For every prime $p$ and all finite $n,m,k$, the fixed-frame optima are
independent of $R$.  With the canonical notation used in the technical
sections,
\begin{align}
 P_{\rm frame,av}^{(k),*}(R,m;p)&=P_{\mathrm{id}}^{(k),*}(n,m;p),
 \label{eq:finite-frame-coordinate-equivalence}\\
 S_{\rm frame,ver,av}^{(k),*}(R,m;p)&=S_{\mathrm{ver}}^{(k),*}(n,m;p).
 \label{eq:finite-frame-verification-equivalence}
\end{align}
The same conjugation identifies the corresponding worst-case problems.
\end{proposition}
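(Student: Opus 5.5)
The plan is to build one public Clifford unitary $U_R$ for each frame $R$ and show that conjugation by $U_R$ is a bijection between the fixed-frame task and the standardized coordinate task. This bijection should act on the candidate states, the report labels, the learner POVMs and the verification projectors, and it should preserve every success and score term termwise. Independence of $R$ then follows, because each fixed-frame optimum equals the corresponding standardized optimum.

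\emph{Step 1 (frame map and unitary).} Since $M_R$ is Lagrangian and $R_1,\ldots,R_n$ is a basis of it, the vectors $R_j$ can be completed to a symplectic basis $(Q_1,\ldots,Q_n,R_1,\ldots,R_n)$ with $\spair{Q_i}{R_j}=\delta_{ij}$ and the $Q_i$ spanning a Lagrangian. Let $F_R$ be the symplectic map sending $R_j$ to the $j$th standard vertical vector $\binom{0}{e_j}$ and $Q_i$ to $\binom{e_i}{0}$. Orientation and sign conventions are fixed so that $\spair{F_Rz}{F_Rz'}=\spair{z}{z'}$.

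By the standard Clifford-group surjectivity onto $\mathrm{Sp}(2n,\F_p)$, there is a unitary $U_R$ with $U_RP_zU_R^\dagger=\omega(z)P_{F_Rz}$ for a phase function $\omega$. This transports stabilizer states to stabilizer states, mapping $(L_x,\chi_x)$ to $(F_RL_x,\chi_x')$, where $\chi_x'(F_Rz)=\omega(z)^{-1}\chi_x(z)$. I expect the main obstacle to be the case $p=2$. There the phase $\omega$ and the section cocycle are nontrivial, so one must check that the transported assignment is again cocycle-compatible, i.e. that it satisfies the analogue of Eq.~\eqref{eq:cocycle-compatible-assignment}. This follows from applying $U_R(\cdot)U_R^\dagger$ to Eq.~\eqref{eq:global-Pauli-section-cocycle}. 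I would also check that the map $\chi\mapsto\chi'$ is a bijection on compatible assignments over each isotropic subspace. For odd $p$ the check is immediate.

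\emph{Step 2 (states and labels).} $F_RL_x\cap F_RM_R=F_R(L_x\cap M_R)$, and $F_RM_R$ is the standard vertical Lagrangian. Hence $x\in\mathsf S(R)$ if and only if the image state lies in the standard transverse chart, so $x\mapsto U_R\cdot x$ is a bijection $\mathsf S(R)\to\mathsf S(R_{\rm std})$. In that chart $F_RL_x=L_A=\{\binom{u}{Au}\}$ with $A$ symmetric. The pairing $\spair{\binom{e_i}{Ae_i}}{\binom{0}{e_j}}=\delta_{ij}$, combined with uniqueness of the reference-dual family in Eq.~\eqref{eq:reference-dual-family}, gives Eq.~\eqref{eq:dual-family-normal-form-coordinate}. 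The label $y_R(x)$ is therefore sent to the standardized partial label (the columns $AJ_m$ together with the transported character). This map on labels is a fixed bijection $\phi_R$ independent of $x$, and $y_{\rm std}(U_R\cdot x)=\phi_R(y_R(x))$.

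\emph{Step 3 (optima).} Given a frame POVM $\{M_y\}$, define $M'_{\phi_R(y)}:=U_R^{\otimes k}M_yU_R^{\dagger\otimes k}$. We have $\Tr[M'_{\phi_R(y)}\rho_{U_R\cdot x}^{\otimes k}]=\Tr[M_y\rho_x^{\otimes k}]$, and the correctness indicator is preserved by Step 2. Summing over the bijected uniform ensemble therefore gives equal averages, and taking infima over the bijected state sets gives equal worst cases. The inverse conjugation gives the reverse inequality, which yields Eq.~\eqref{eq:finite-frame-coordinate-equivalence} and its worst-case analogue.

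For verification, the announced projector \eqref{eq:intrinsic-character-projector} for a report maps under $U_R(\cdot)U_R^\dagger$ to the projector of the transported report; this is where the phase bookkeeping of Step 1 is used again. Hence $\Tr[\Pi_{\widehat K,\widehat\eta}\rho_x]$ is invariant, and each term of \eqref{eq:fixed-W-statewise-verification-score-reduced} (restricted to the canonical alphabet) is preserved. This gives Eq.~\eqref{eq:finite-frame-verification-equivalence}.
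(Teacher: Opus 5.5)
Your proposal is correct and follows the same route as the paper. Both arguments use a public symplectic map $F_R$ sending the frame to the standard vertical frame, implemented by a Clifford unitary whose conjugation bijects the transverse chart, the reference-dual target, the learner POVMs and the verifier projectors while preserving every exact-success and Born-score term; your version additionally spells out details the paper leaves implicit, namely the symplectic-basis completion, the transported cocycle-compatible character (as in Proposition~\ref{prop:fixed-W-dimension-only-covariance}), and the normal form $F_RS_i(x;R)=\binom{e_i}{Ae_i}$.
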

\begin{proof}
The map $x\mapsto F_Rx$ is a bijection from $\mathsf S(R)$ onto the standard
transverse ensemble and carries the reference-dual target to its canonical
partial label.  Conjugation by the implementing Clifford unitary is a
bijection of feasible learner measurements and verifier tests and preserves
their probabilities term by term.
\end{proof}

For reference, the proof uses the following model distinctions.
\begin{center}
\small
\renewcommand{\arraystretch}{1.15}
\begin{tabularx}{\textwidth}{@{}>{\raggedright\arraybackslash}p{0.20\textwidth}YY@{}}
\toprule
Model & State family and report & Role in the proof\\
\midrule
Original fixed-$W$ task
& All pure stabilizer states; any valid complementary pair $(K,\eta)$
& The operational task; exceptional inputs may have several correct reports.\\
Generic fixed-$W$ task
& States with $L_x\cap W=\{0\}$; unique correct pair $(K_W(x),\chi_x|_{K_W(x)})$
& The exact incidence reduction and the basis-free verification converse.\\
Refined frame-assisted task
& A public transverse chart; the first $m$ reference-dual stabilizers and their assignment
& The canonical quotient ensemble for the moment analysis.\\
Unrestricted complete identification
& All pure stabilizer states; the complete pair $(L_x,\chi_x)$
& A statewise decoder followed by classical postprocessing proves achievability.\\
\bottomrule
\end{tabularx}
\end{center}

\subsection{Main results for the refined problem}
\label{subsec:main-results-operational-tasks}
\label{subsec:rate-functions}

The standardized quantities defined above have a direct operational meaning,
and the frame-to-coordinate equivalence makes them independent of the
announced frame.  The endpoint $m=n$ here means complete-label identification inside one
fixed transverse chart.  It is not the unrestricted complete-identification
endpoint $W=\{0\}$ introduced in Section~\ref{sec:fixed-W-main-results}.

For the converse, define
\begin{equation}
 J(\alpha,\beta):=
 \begin{cases}
 \beta(1-\alpha-\beta/2),&0\le\alpha\le1-\beta,\\
 (1-\alpha)^2/2,&1-\beta\le\alpha<1.
 \end{cases}
 \label{eq:main-results-J}
\end{equation}

\begin{theorem}[Converse for exact learning in the refined problem]
\label{thm:refined-exact-converse}
Fix a prime $p$, $0<\beta\le1$, and $0\le\alpha<1$.  Let $(m_n)$ and
$(k_n)$ be integer sequences satisfying
$m_n=\beta n+o(n)$ and $k_n=\alpha n+o(n)$, and let
$R_n\in\mathsf{Ref}_{n,p}$ be any sequence of announced frames.  Then
\begin{align}
 P_{\rm frame,wc}^{(k_n),*}(R_n,m_n;p)
 &\le P_{\rm frame,av}^{(k_n),*}(R_n,m_n;p)
 \notag\\
 &=P_{\mathrm{id}}^{(k_n),*}(n,m_n;p)
 \le p^{-J(\alpha,\beta)n^2+o(n^2)}\longrightarrow0.
 \label{eq:main-results-converse}
\end{align}
The remainder is independent of the particular sequence $(R_n)$, because the
finite frame-to-coordinate equivalence is exact.
\end{theorem}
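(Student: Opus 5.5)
The first inequality, worst case at most average, holds because the infimum over $\mathsf S(R_n)$ of a fixed strategy's success is at most its uniform average; taking suprema gives the bound. The equality with $P_{\mathrm{id}}^{(k_n),*}(n,m_n;p)$ is Proposition~\ref{prop:frame-conditioned-coordinate-reduction}, which holds exactly for each finite $n$. So all of the work is in bounding $P_{\mathrm{id}}^{(k),*}(n,m;p)$, and this bound does not depend on $R_n$.

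In coordinates, the candidate states are labelled by pairs $(A,b)$ with $A$ symmetric $n\times n$ and $b\in\F_p^n$. The partial label is $y=(AJ_m, b_{[m]})$, possibly with a characteristic-dependent adjustment when $p=2$. The uniform-average task is discrimination of the mixed states $\sigma_y=\mathbb E[\rho_{A,b}^{\otimes k}\mid y]$, each with prior $p^{-|\mathcal Y|}$ where $|\mathcal Y|=p^{mn-\binom m2+m}$. I will expand $\ket{\psi_{A,b}}^{\otimes k}$ in the computational basis. Each tuple $(u_1,\dots,u_k)$ then carries a phase that depends only on the linear moment $s=\sum_j u_j$ and the quadratic moment $Q=\sum_j u_ju_j^T$ (for $p=2$, its $\mathbb Z_4$-enhanced version). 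Hence $\rho_{A,b}^{\otimes k}$ is block-structured by the moment value $\mu=(s,Q)$. I then split $\mu$ into its part paired with the partial-label coordinates and the remaining "lower moment", and average over the unlabelled part of $(A,b)$. This makes the $\sigma_y$ block diagonal with respect to the lower moment. Within each block the label enters only through characters of $\mu$.

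From this, I plan to invoke the paper's finite-copy result: the PGM is optimal, and its success has the exact form
\[
P_{\mathrm{id}}^{(k),*}=|\mathcal Y|^{-1}\sum_{\lambda}\Pr[\lambda]\cdot N(\lambda),
\]
where $N(\lambda)$ counts the distinct full moment values compatible with lower moment $\lambda$. Taking this formula as given, the converse reduces to counting. The moment $Q$ is a sum of $k$ rank-one symmetric matrices, so it has rank at most $k$. I will bound the occurring values of the portion of $Q$ visible to the label, namely the first $m$ columns, modulo what the lower moment already fixes. The count should be the number of symmetric $n\times n$ matrices of rank at most $k$ restricted to an $m$-column window, times $p^{O(n)}$ for the linear moments. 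A standard rank count gives roughly $p^{kn-\binom k2+O(n)}$ for the full symmetric matrix. Restricted to the $m$ labelled columns, the exponent is $\min(k,m)\,n-\binom{\min(k,m)}{2}$ plus lower-order terms, up to the appropriate correction.

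Dividing by $|\mathcal Y|\approx p^{mn-m^2/2}$ should give the exponent $-J(\alpha,\beta)n^2$. When $\alpha\ge 1-\beta$, the resulting exponent is $-(1-\alpha)^2/2$; this is what comes out once the column window overlaps the rank deficiency. When $\alpha\le1-\beta$, it is $-\beta(1-\alpha-\beta/2)$; I would verify this case by direct algebra of the two quadratic forms.

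The main obstacle is making the weighted count sharp. A crude bound by total support dimension only gives vanishing for $\alpha<\beta$-type regimes, so I need the probability weighting by $\Pr[\lambda]$. Concretely, I must show that for typical lower moments the conditional number of compatible labelled moments is at most $p^{(\text{rank-}k\text{ count})+o(n^2)}$. For this I would parametrize $Q=V V^T$ with $V\in\F_p^{n\times k}$, fix the lower block $V_{\mathrm{low}}$, and count the possible labelled blocks $V_{[m]}V^T$. The $o(n^2)$ remainder absorbs the $p^{O(n)}$ factors from $s$, the rank-stratification sum over at most $k+1$ ranks, and the $o(n)$ perturbations of $m_n,k_n$. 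Since $J>0$ whenever $\alpha<1$ and $\beta>0$, the bound tends to zero.
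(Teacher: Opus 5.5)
Your outer reductions match the paper's: worst case $\le$ average, the exact frame-to-coordinate equivalence, moment sectors, and coset blocks indexed by the lower moment. One correction: $\lvert\mathcal Y\rvert^{-1}\sum_\lambda\Pr[\lambda]N(\lambda)$ is not the exact PGM value. The exact value is $\frac{p^{-nk}}{\lvert Y\rvert}\sum_\zeta\bigl(\sum_{z\in\mathcal C_\zeta}\sqrt{N_z}\bigr)^2$, and your expression is its cosetwise Cauchy--Schwarz upper bound, which is enough for a converse.

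The genuine gap is the step that is supposed to produce $J(\alpha,\beta)$. The count you write down, exponent $\min(k,m)\,n-\binom{\min(k,m)}2$ for the $m$ labelled columns of a rank-$\le k$ symmetric matrix, is not conditioned on the lower moment. Dividing by $\lvert\mathcal Y\rvert\approx p^{mn-m^2/2}$ gives $-(\beta-\alpha)\bigl(1-\tfrac{\alpha+\beta}2\bigr)n^2$ for $\alpha<\beta$, and only $p^{-m}$ for $\alpha\ge\beta$. At $\beta=\tfrac12$, $\alpha=\tfrac34$ there is no quadratic decay at all, although $J=\tfrac1{32}$; the two agree only when $\beta=1$ or $\alpha=0$.

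The conditional object that works is the set of visible pairs $(Q,C)$ that complete the fixed augmented Gram matrix $B_\zeta=Z_2Z_2^T$, with $Z_2=(X_2;\1_k^T)$, to total rank at most $k$. Fixing a single $V_{\rm low}$ is not enough, because $d_\zeta$ is a union over all realizations $X_2$ of the same $\zeta$; this is why the paper enlarges to all such completions. Their number depends only on $\rank B_\zeta=a-u$. Here $a=\dim W$ for $W=\operatorname{row}Z_2$, and $u=\dim(W\cap W^\perp)$ is the dimension of its radical. After a Schur reduction, a coupling of rank $\ell$ into the radical costs $2\ell$ units of the rank budget, and a residual symmetric block of rank $t$ costs $t$.

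Restricting attention to ``typical'' lower moments also cannot close the argument. Consider events where $a$ falls $\theta n$ below its typical value, or where $u=\epsilon n$. These still have probability of order $p^{-O(\theta n^2)}$, respectively $p^{-O(\epsilon^2 n^2)}$, which is far larger than $p^{-Jn^2}$ for small $\theta,\epsilon$. On some of them the clipped completion factor exceeds its typical value, since the freed budget can be spent on radical couplings. So you need two ingredients:
\begin{itemize}
\item the uniform row-space estimate $\Pr\{a,u\}\le(K_p^{\mathrm{class}})^k\,p^{-(k-a)(n-m-a)-u(u+1)/2}$;
\item a proof that $I+D-E\ge J(\alpha,\beta)-O(1/n)$ over the whole feasible set of $(a,u,\ell,t)$.
\end{itemize}
In the proof of Theorem~\ref{thm:converse-common-asymptotic-reduction}, the second comes from the exact slack identities $D-E=r(B-\lambda)+s^2/2$ and $I-r\lambda=\lambda^2/2+(r-v)(\delta-v)+v^2/2$.

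Your heuristic does land on the correct equality points, $(a_0,u_0,\lambda,\tau)=(\alpha,0,0,0)$ and $(1-\beta,0,0,\alpha+\beta-1)$. But the radical parameter $u$ and this global probability--completion tradeoff are missing from the plan, and they are the substance of the converse. For $p=2$ you would also need the $2^m$ bound on the mod-$4$ diagonal lifts that $VV^T$ over $\F_2$ forgets; this only affects the exponent at order $O(n)$.
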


\begin{theorem}[Verification converse for the refined problem]
\label{thm:refined-verification-threshold}
Under the assumptions and sequence quantifiers of
Theorem~\ref{thm:refined-exact-converse},
\begin{equation}
 S_{\rm frame,ver,wc}^{(k_n),*}(R_n,m_n;p)
 \le S_{\rm frame,ver,av}^{(k_n),*}(R_n,m_n;p)
 =S_{\mathrm{ver}}^{(k_n),*}(n,m_n;p)\longrightarrow0.
 \label{eq:refined-verification-converse}
\end{equation}
The convergence is uniform over the announced-frame sequence for the same
finite-equivalence reason.
\end{theorem}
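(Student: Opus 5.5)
The plan is to prove the verification converse by reducing it to the exact-learning converse of Theorem~\ref{thm:refined-exact-converse}, using a bounded-list argument. The first inequality (worst case at most average) and the equality with $S_{\mathrm{ver}}^{(k_n),*}(n,m_n;p)$ are immediate: the infimum over $\mathsf S(R_n)$ is at most the uniform average, and Proposition~\ref{prop:frame-conditioned-coordinate-reduction} gives the coordinate identity exactly at every finite size. Because that identity does not depend on $R_n$, the convergence is automatically uniform over frames. The only real content is therefore $S_{\mathrm{ver}}^{(k_n),*}(n,m_n;p)\to0$ in the standardized chart.

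In the standard chart, a canonical report is $(AJ_m,b_{[m]})$, i.e.\ the first $m$ columns of a symmetric matrix together with $m$ character coordinates. The true state is the graph state $L_A$ with character $b$. For a false report $(\widehat C,\widehat b)$, the Born score $\Tr[\Pi_{\widehat K,\widehat\eta}\rho_x]$ is zero if the characters disagree on $\widehat K\cap L_x$. Otherwise it equals $p^{-\operatorname{codim}_{\widehat K}(\widehat K\cap L_x)}$. This is the standard stabilizer overlap formula for the commuting families $\widehat K$ and $L_x$.

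Writing $\widehat K=\operatorname{span}\{\binom{e_i}{c_i}\}$, the intersection $\widehat K\cap L_A$ consists of the combinations $\sum_i u_i\binom{e_i}{c_i}$ with $\sum u_i(c_i-Ae_i)=0$. Its codimension is therefore $r:=\rank(\widehat C-AJ_m)$. I will stratify the learner's report by $r$. For each true label, the number of reports at distance exactly $r$ with nonzero score is at most $N_r\le p^{O(rn)}$. This count comes from matrices of rank $r$ with symmetric constraint on the top block, together with characters determined on the intersection and free on $r$ directions.

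The key step is then a list-decoding reduction. Fix a threshold $r_0=\lceil \epsilon n\rceil$ and split
\[
v_{\mathsf M}^{(k)}(x)\le \Pr[\text{report at distance}<r_0]+p^{-r_0}.
\]
For the first term, I convert $\mathsf M$ into an exact learner that post-processes the report by outputting a uniformly random element of the distance-$<r_0$ ball around it. This ball has size at most $p^{O(\epsilon n^2)}$. The resulting exact learner's average success is at least $\Pr[\text{distance}<r_0]\,p^{-O(\epsilon n^2)}$. Combining this with Theorem~\ref{thm:refined-exact-converse} gives
\[
\Pr[\text{distance}<r_0]\le p^{-J(\alpha,\beta)n^2+C\epsilon n^2+o(n^2)}.
\]
Since $J(\alpha,\beta)>0$ for $\alpha<1$ and $\beta>0$, choosing $\epsilon$ small makes this term vanish, and $p^{-\epsilon n}\to0$ as well.

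The main obstacle is the ball-size estimate. It needs a careful count of symmetric-compatible $n\times m$ perturbations of rank below $r_0$ and of the accompanying character freedoms, uniformly in $m$, showing the count is $p^{O(r_0 n)}$ rather than something like $p^{O(nm)}$. I expect the standard count of rank-$r$ matrices, $\#\le p^{r(n+m)}$, to suffice, with the characters contributing at most $p^{m}$.
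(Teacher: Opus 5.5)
Your proposal is correct and follows the paper's argument. The paper likewise reduces to the standardized chart via Proposition~\ref{prop:frame-conditioned-coordinate-reduction}. It then converts an arbitrary graded POVM into an exact decoder by uniform randomization over the bounded-rank compatible list around the report, which gives $S_{\mathrm{ver}}^{(k),*}\le L_D\,P_{\mathrm{id}}^{(k),*}+p^{-(D+1)}$, and finally invokes the quadratic exact converse.

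The only difference is the cutoff. The paper takes $D_n=\lfloor\sqrt n\rfloor$, so that the list factor is $p^{o(n^2)}$. Your linear cutoff $r_0=\lceil\epsilon n\rceil$ also works, provided $\epsilon<\min\{\beta,\,J(\alpha,\beta)/(1+\beta)\}$; it even yields an exponentially small tail $p^{-\epsilon n}$ rather than $p^{-\sqrt n}$.
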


The exact converse is proved in Section~\ref{sec:converse}, and the
verification converse is proved in
Section~\ref{sec:integrated-all-prime-graded}.  For the opposite direction,
apply Theorem~\ref{thm:complete-identification-achievability} and then
extract the first $m_n$ reference-dual stabilizers and their restricted
assignment from the recovered complete state.  The complete-identification
decoder has the same success probability for every input, and a correct
extracted report is exact and has verification score one.  Hence both the
worst-case and uniform-average performances of the refined exact and
verification tasks tend to one when $k_n-n\to+\infty$.  Together with the two
converse theorems, this gives copy rate one for every sequence
$m_n=\beta n+o(n)$ with $\beta>0$.

\section{Reduction and main results for the fixed-constraint problem}
\label{sec:generic-fixed-W-frame-relation}
We now relate the refined problem to the original fixed-constraint task.
For exact learning, a transverse completion identifies the unique generic
correct report with a frame-relative target.  Averaging over compatible
state--frame pairs then transfers the uniform generic prior to fixed-frame
priors; a separate exceptional-mass bound restores the full prior.  Verification
requires a different comparison because the original learner may announce
basis-free reports outside the canonical alphabet.  We formulate that generic
Bayes game directly here and state the converse proved by the bounded-list
argument in Section~\ref{sec:integrated-all-prime-graded}.  Together with
complete-state achievability, these reductions give the fixed-constraint
copy-rate theorem at the end of this section.

\subsection{Transverse-frame incidence reduction}
\label{subsec:fixed-W-frame-refinement}

This subsection first identifies the generic target in the fixed-constraint
problem on a single transverse chart, then proves the regularity needed to average over charts,
 and finally transfers the resulting comparison to the operational optima.

Fix an ordered basis $(R_{m+1},\ldots,R_n)$ of $W$.  An ordered $m$-tuple
$T=(T_1,\ldots,T_m)$ is called an \emph{admissible transverse completion for
$x$} if
\begin{equation}
 R(T):=(T_1,\ldots,T_m,R_{m+1},\ldots,R_n).
 \label{eq:fixed-W-ordered-extension}
\end{equation}
is a basis of a Lagrangian subspace $M_{R(T)}$ and
$L_x\cap M_{R(T)}=\{0\}$.  Let $\mathfrak T_W(x)$ denote the set of all such
completions and define the incidence set
\begin{equation}
 \mathcal I_W:=\{(x,T):x\in\mathsf{Stab}_{n,p}^{\mathrm{gen}}(W),\
 T\in\mathfrak T_W(x)\}.
 \label{eq:fixed-W-incidence-set}
\end{equation}
For every generic $x$, this set of completions is nonempty.  To see this
without choosing characteristic-dependent coordinates, pass to the symplectic
quotient $W^\perp/W$, which has dimension $2m$.  Because
$L_x\cap W=\{0\}$, the image
\begin{equation}
 \overline K_x:=((L_x\cap W^\perp)+W)/W
 \subseteq W^\perp/W
 \label{eq:generic-target-quotient-image}
\end{equation}
is an $m$-dimensional isotropic subspace and hence is Lagrangian.  Choose a
Lagrangian complement $\overline M_x$ transverse to $\overline K_x$ in the
quotient and let $M_x$ be its inverse image in $W^\perp$.  Then $M_x$ is a
Lagrangian subspace containing $W$.  Moreover, if $v\in L_x\cap M_x$, its
quotient class lies in
$\overline K_x\cap\overline M_x=\{0\}$, so $v\in W$; genericity then gives
$v\in L_x\cap W=\{0\}$.  Thus $L_x\cap M_x=\{0\}$.  Extending the fixed
ordered basis of $W$ to an ordered basis of $M_x$ produces an admissible tuple
$T$.  This quotient argument proves nonemptiness in every characteristic.

\subsubsection{Target equivalence on one transverse chart}
\label{subsubsec:transverse-chart-target-equivalence}

We begin at fixed incidence data. The next two lemmas show that the public frame only coordinatizes the basis-free target and preserves the outcome-averaged
 verification payoff.

\begin{lemma}[Equivalence of the generic and frame-relative targets]
\label{lem:fixed-W-frame-target-equivalence}
Fix $(x,T)\in\mathcal I_W$.  The pairing map
\begin{equation}
 K_W(x)\longrightarrow\F_p^m,\quad
 s\longmapsto(\spair{s}{T_1},\ldots,\spair{s}{T_m}).
 \label{eq:fixed-W-pairing-isomorphism}
\end{equation}
is an isomorphism.  Consequently, the basis-free pair
$(K_W(x),\chi_x|_{K_W(x)})$, together with the public frame $R(T)$, uniquely
determines the first $m$ reference-dual stabilizers and their restricted character
\[
 S_1(x;R(T)),\ldots,S_m(x;R(T)).
\]  Conversely, their span
and restricted character recover the basis-free pair.  Thus, for each generic input, the unique correct basis-free report and
the unique refined target determine one another once the frame is public.
This statement concerns the correct-report event; it does not identify the
full basis-free report alphabet with the refined canonical alphabet.
\end{lemma}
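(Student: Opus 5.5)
The plan is to prove the pairing map is injective, compare dimensions, and then read off the reference-dual stabilizers from this isomorphism. Since $x$ is generic, Eq.~\eqref{eq:fixed-W-dimension} gives $\dim K_W(x)=m$, so injectivity suffices.

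\textbf{Injectivity.} Let $s\in K_W(x)=L_x\cap W^\perp$ with $\spair{s}{T_i}=0$ for $i\le m$. Because $s\in W^\perp$, it also pairs to zero with $R_{m+1},\ldots,R_n$, which span $W$. Hence $s$ pairs to zero with all of $M_{R(T)}$, i.e. $s\in M_{R(T)}^\perp=M_{R(T)}$, the last equality holding because $M_{R(T)}$ is Lagrangian. Then $s\in L_x\cap M_{R(T)}=\{0\}$ by admissibility. This step uses only the definitions and is the core of the argument.

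\textbf{Identifying the reference-dual stabilizers.} By Eq.~\eqref{eq:reference-dual-family}, each $S_i(x;R(T))$ with $i\le m$ lies in $L_x$ and pairs to zero with $R_{m+1},\ldots,R_n$. Therefore $S_i\in L_x\cap W^\perp=K_W(x)$. Inside $K_W(x)$, the vector $S_i$ is the unique preimage of $e_i$ under the isomorphism \eqref{eq:fixed-W-pairing-isomorphism}; uniqueness is guaranteed because $\Gamma_{x,R(T)}$ is an isomorphism on all of $L_x$.

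\textbf{Both directions of the correspondence.} Given $(K_W(x),\chi_x|_{K_W(x)})$ and the public frame $R(T)$, compute $S_1,\ldots,S_m$ by inverting the pairing map on $K_W(x)$, and restrict $\chi_x|_{K_W(x)}$ to their span. That span is all of $K_W(x)$, since the $S_i$ form the preimage of a basis, so nothing is lost. Conversely, the span of $S_1,\ldots,S_m$ equals $K_W(x)$, and the character recorded in $y_{R(T)}(x)$ is exactly $\chi_x|_{K_W(x)}$. The two constructions are inverse to each other.

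The final caveat in the statement, about the full report alphabet, needs no argument: the lemma is a claim about the correct target only. The only point needing care is remembering $M_{R(T)}^\perp=M_{R(T)}$, and I do not expect any genuine obstacle.
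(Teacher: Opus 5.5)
Your proposal is correct and follows the paper's proof essentially step for step: injectivity from orthogonality to $W$ together with $L_x\cap M_{R(T)}=\{0\}$, the dimension count $\dim K_W(x)=m$ from genericity, and identification of the preimages of $e_1,\ldots,e_m$ with the reference-dual stabilizers, whose span is $K_W(x)$. You also spell out two points the paper leaves implicit: the Lagrangian identity $M_{R(T)}^\perp=M_{R(T)}$, and the membership $S_i(x;R(T))\in K_W(x)$ for $i\le m$.
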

\begin{proof}
Every element of $K_W(x)$ is orthogonal to the last $n-m$ frame vectors,
because they span $W$.  If it is also orthogonal to $T_1,\ldots,T_m$, it lies
in $L_x\cap M_{R(T)}$, which is zero by incidence.  The displayed map is
therefore injective and, since both spaces have dimension $m$, bijective.  Its
inverse sends the standard vector $e_i$ to the unique element paired as
$\delta_{ij}$ with the first $m$ frame vectors and orthogonal to the remaining
ones.  These are precisely the reference-dual stabilizers.  Taking their span
and transporting the same character restriction gives the inverse output
map.
\end{proof}

\begin{lemma}[Outcome-averaged score of a basis-free report]
\label{lem:fixed-W-frame-verification-score}
Fix $(x,T)\in\mathcal I_W$ and a syntactically admissible announcement
$(\widehat K,\widehat\eta)\in\mathsf{Out}_{W,m}$.  The acceptance probability
of the announced postmeasurement projector, averaged over the
positive-probability outcomes $\nu$ of the measurement associated with
$W$, is
\begin{equation}
 \sum_{\substack{\nu\in\Omega_W\\q_x(\nu)>0}}
 q_x(\nu)
 \Tr\!\left[
   \Pi_{\widehat K,\widehat\eta}^{W,\nu}
   \rho_{x|\nu}
 \right]
 =
 \Tr\!\left[
   \Pi_{\widehat K,\widehat\eta}\rho_x
 \right].
 \label{eq:fixed-W-frame-outcome-averaged-score}
\end{equation}
The identity is basis-free and holds whether or not the announcement belongs
to the refined canonical alphabet.  The individual conditional acceptance
probabilities need not be independent of $\nu$.
\end{lemma}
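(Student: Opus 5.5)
The plan is to reproduce the computation of Eq.~\eqref{eq:fixed-W-outcome-averaged-score} carefully, checking at each step that only basis-free facts are used: $\widehat K\subseteq W^\perp$, membership in $\mathsf{Out}_{W,m}$, and the completeness of the $W$-measurement. Neither the incidence data $T$ nor any frame is needed, which is why the identity holds for every report in $\mathsf{Out}_{W,m}$.

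\textbf{Commutation.} First I will show that $\Pi_{W,\nu}$ and $\Pi_{\widehat K,\widehat\eta}$ commute. By Eq.~\eqref{eq:intrinsic-character-projector}, each is a linear combination of the operators $P_w$ ($w\in W$) and $P_k$ ($k\in\widehat K$), respectively. Since $\widehat K\subseteq W^\perp$, we have $\spair{w}{k}=0$, so the Weyl commutation relation implied by Eq.~\eqref{eq:global-Pauli-section-cocycle} gives $P_wP_k=P_kP_w$. Hence the announced projector $\Pi^{W,\nu}_{\widehat K,\widehat\eta}=\Pi_{W,\nu}\Pi_{\widehat K,\widehat\eta}$ is itself a projector, and
\[
\Pi_{W,\nu}\Pi_{\widehat K,\widehat\eta}\Pi_{W,\nu}=\Pi_{W,\nu}\Pi_{\widehat K,\widehat\eta}.
\]

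\textbf{The outcome sum.} For $\nu$ with $q_x(\nu)>0$, substitute Eq.~\eqref{eq:fixed-W-outcome-probability}:
\[
q_x(\nu)\Tr[\Pi_{W,\nu}\Pi_{\widehat K,\widehat\eta}\rho_{x|\nu}]=\Tr[\Pi_{W,\nu}\Pi_{\widehat K,\widehat\eta}\Pi_{W,\nu}\rho_x\Pi_{W,\nu}].
\]
Cyclicity of the trace and $\Pi_{W,\nu}^2=\Pi_{W,\nu}$ reduce this to $\Tr[\Pi_{W,\nu}\Pi_{\widehat K,\widehat\eta}\rho_x]$. For outcomes with $q_x(\nu)=0$, positivity gives $\Pi_{W,\nu}\rho_x\Pi_{W,\nu}=0$, and since $\rho_x$ is pure this forces $\Pi_{W,\nu}\rho_x=0$. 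Therefore $\Tr[\Pi_{W,\nu}\Pi_{\widehat K,\widehat\eta}\rho_x]=\Tr[\Pi_{\widehat K,\widehat\eta}\rho_x\Pi_{W,\nu}]=0$. The restricted sum can thus be extended to all of $\Omega_W$, and the completeness relation $\sum_\nu\Pi_{W,\nu}=I$ yields $\Tr[\Pi_{\widehat K,\widehat\eta}\rho_x]$.

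\textbf{Main obstacle.} The step I expect to need the most care is justifying $\sum_{\nu\in\Omega_W}\Pi_{W,\nu}=I$ in characteristic two, where the compatible assignments $\nu$ on $W$ are cocycle-twisted rather than ordinary characters. I would argue that $\nu\mapsto\nu\cdot\psi$ with $\psi$ ranging over the $p^{\dim W}$ additive characters of $W$ is a bijection of $\Omega_W$. Summing Eq.~\eqref{eq:intrinsic-character-projector} over $\psi$ then kills every $P_w$ with $w\neq0$ by character orthogonality and leaves $P_0=I$.

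\textbf{Non-constancy remark.} For the final remark of the statement, I only need to note that $g^W_x(\widehat K,\widehat\eta\mid\nu)=\Tr[\Pi_{W,\nu}\Pi_{\widehat K,\widehat\eta}\rho_x]/q_x(\nu)$, and that this ratio has no reason to be independent of $\nu$. Nothing beyond the identity above is claimed.
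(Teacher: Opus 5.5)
Your proposal is correct and follows the paper's route. The paper proves the lemma by citing Eq.~\eqref{eq:fixed-W-outcome-averaged-score}, which it derives from exactly your two ingredients: commutation of $\Pi_{W,\nu}$ with $\Pi_{\widehat K,\widehat\eta}$ (since $\widehat K\subseteq W^\perp$) and the completeness relation $\sum_{\nu\in\Omega_W}\Pi_{W,\nu}=I$, with the incidence data $T$ playing no role. Your treatment of zero-probability outcomes and your torsor argument for completeness in characteristic two fill in details the paper leaves implicit. Two small points: purity of $\rho_x$ is not needed for the zero-probability step, and the surviving term in the completeness sum is strictly $\nu(0)^{-1}P_0$, which equals $I$ by the cocycle relation.
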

\begin{proof}
This is Eq.~\eqref{eq:fixed-W-outcome-averaged-score}, applied to the announced
basis-free report.  No chartwise relabeling of the report and no pointwise
independence from $\nu$ are required.
\end{proof}

For an admissible completion $T$, let
\begin{equation}
 \mathsf S_W(T):=\{x\in\mathsf{Stab}_{n,p}^{\mathrm{gen}}(W):
 T\in\mathfrak T_W(x)\}.
 \label{eq:fixed-W-incidence-chart}
\end{equation}
Since $W\subseteq M_{R(T)}$, the transversality condition
$L_x\cap M_{R(T)}=\{0\}$ already implies $L_x\cap W=\{0\}$.  Hence the
generic restriction in the definition above is automatic, and
\begin{equation}
 \mathsf S_W(T)
 =\{x\in\mathsf{Stab}_{n,p}:L_x\cap M_{R(T)}=\{0\}\}
 =\mathsf S(R(T)).
 \label{eq:fixed-W-incidence-chart-equality}
\end{equation}
In particular, the uniform prior on the incidence chart is exactly the
uniform prior used in the fixed-frame problem with public frame $R(T)$.

\subsubsection{Incidence regularity and prior decomposition}
\label{subsubsec:transverse-incidence-regularity}

We next pass from one chart to the uniform generic prior. Transitivity gives constant incidence degrees, and the resulting double count yields the precise mixture of uniform fixed-frame priors used below.

\begin{lemma}[Transitivity on the generic Lagrangian stratum]
\label{lem:fixed-W-generic-lagrangian-transitivity}
Let
\begin{equation}
 \mathcal P_W:=\{g\in\operatorname{Sp}(2n,\F_p):gW=W\}
 \label{eq:fixed-W-parabolic-subgroup}
\end{equation}
be the setwise stabilizer of $W$, where $\operatorname{Sp}(2n,\F_p)$ is the
symplectic group of the phase space $\F_p^{2n}$.  This subgroup preserves the condition
$L\cap W=\{0\}$ and acts transitively on the Lagrangian subspaces satisfying
that condition.  In fact, for any two such Lagrangians $L,L'$, one may choose
$g\in\mathcal P_W$ that fixes $W$ pointwise and satisfies $gL=L'$.
\end{lemma}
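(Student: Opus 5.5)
The plan is to build an explicit symplectic basis adapted to $W$ and $L$, and to define $g$ by sending one adapted basis to another. The preservation claim is immediate: if $gW=W$, then $gL\cap W=g(L\cap W)$. Hence $L\cap W=\{0\}$ if and only if $gL\cap W=\{0\}$. Moreover $gL$ is Lagrangian because $g$ is symplectic. It remains to construct, for two generic Lagrangians $L,L'$, an element $g$ with $g|_W=\mathrm{id}$ and $gL=L'$.

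\emph{Adapted basis attached to $L$.} Write $r=n-m$ and fix a basis $w_1,\dots,w_r$ of $W$. By Eq.~\eqref{eq:fixed-W-dimension} with $t=0$, $K:=L\cap W^\perp$ has dimension $m$. The argument given there also shows that the pairing map $L\to W^*$ has rank $r$, so it is onto. Choose a complement $F$ of $K$ in $L$; then $F\to W^*$ is an isomorphism. Let $f_1,\dots,f_r\in F$ be the basis dual to the $w_j$, so that $\spair{f_i}{w_j}=\delta_{ij}$. The $f_i$ are mutually orthogonal because they lie in $L$. Since $L\cap W=\{0\}$, $K$ maps injectively into $W^\perp/W$. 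Its image $\overline K$ is isotropic of dimension $m$, hence Lagrangian in the $2m$-dimensional quotient. Choose a basis $k_1,\dots,k_m$ of $K$ and a complementary Lagrangian $\overline N$ of $\overline K$ in $W^\perp/W$, with a basis $\bar h_1,\dots,\bar h_m$ dual to the $k$'s.

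\emph{Lifting the dual vectors.} Lift the $\bar h_i$ to vectors $h_i\in W^\perp$, adjusting each lift by an element of $W$. The lift may be corrected by $\sum_j c_{ij}w_j$, and this changes $\spair{h_i}{f_l}$ by $-c_{il}$, since $\spair{w_j}{f_l}=-\delta_{jl}$. Choosing the $c_{il}$ appropriately gives $\spair{h_i}{f_l}=0$ for all $i,l$. This correction leaves $\spair{h_i}{k_j}=\delta_{ij}$ and $\spair{h_i}{h_j}=0$ unchanged, because $w_j$ pairs to zero with $W^\perp$. Then $(w_1,\dots,w_r,k_1,\dots,k_m;\,f_1,\dots,f_r,h_1,\dots,h_m)$ is a symplectic basis of $\F_p^{2n}$ up to the sign convention. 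Its isotropic half $\{f,k\}$ spans $L$, and its other half contains the $w$'s.

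\emph{Defining $g$.} Perform the same construction for $L'$ with the same $w_j$, obtaining $f'_i,k'_i,h'_i$. Define $g$ by $w_j\mapsto w_j$, $f_i\mapsto f'_i$, $k_i\mapsto k'_i$, and $h_i\mapsto h'_i$. The map $g$ sends one symplectic basis to another with identical Gram matrix, so it lies in $\operatorname{Sp}(2n,\F_p)$. It fixes $W$ pointwise and maps $L=\operatorname{span}\{f,k\}$ onto $L'$.

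\emph{Main obstacle.} The delicate step is the lifting correction: one must get the lifted $h_i$ orthogonal to both the $f$'s and each other while keeping the pairings with the $k$'s. The cleanest check is that every required pairing with $W$-shifts is linear and triangular in the correction coefficients. Once the two bases are verified to have identical Gram matrices, this step is purely linear algebra and works in every characteristic, $p=2$ included.
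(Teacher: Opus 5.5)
Your proof is correct, and its first half is the paper's. The paper also splits $L$ into vectors $\ell_1,\dots,\ell_r$ dual to a fixed basis of $W$ together with a basis $a_1,\dots,a_m$ of the kernel $L\cap W^\perp$ of the pairing with $W$. It then sends $w_i\mapsto w_i$, $\ell_i\mapsto\ell_i'$, $a_j\mapsto a_j'$; these are your $f_i$ and $k_j$.

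The routes differ only in how this map is promoted to an element of $\operatorname{Sp}(2n,\F_p)$.

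\emph{The paper's route.} It notes that the map $W\oplus L\to W\oplus L'$ preserves the restricted (possibly degenerate) form, and then cites the symplectic Witt extension theorem.

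\emph{Your route.} You build the extension by hand:
\begin{itemize}
\item Pass to the $2m$-dimensional quotient $W^\perp/W$, where the image of $L\cap W^\perp$ is Lagrangian.
\item Choose a dual Lagrangian complement there and lift it into $W^\perp$.
\item Kill the pairings of the lifts with the $f_i$ by a $W$-correction. This leaves the pairings with the $k_j$ and among the $h_i$ unchanged, since $W$ is orthogonal to all of $W^\perp$.
\end{itemize}
The result is a full basis $(w,k;f,h)$ with the same standard Gram matrix for $L$ and for $L'$, so $g$ is defined simply by matching the two bases.

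\emph{Comparison.} The paper's argument is shorter but outsources the key step to Witt's theorem. Yours is self-contained and produces $g$ explicitly; in effect it proves the needed special case of Witt's theorem. It also makes plain that nothing depends on the characteristic, including $p=2$.
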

\begin{proof}
Put $r:=\dim W=n-m$ and choose a basis $w_1,\ldots,w_r$ of $W$.  For a
Lagrangian $L$ transverse to $W$, the pairing map
\begin{equation}
 W\longrightarrow L^*,\quad
 w\longmapsto\bigl(\ell\mapsto\spair{w}{\ell}\bigr)
 \label{eq:fixed-W-pairing-injection}
\end{equation}
is injective, because its kernel is $W\cap L^\perp=W\cap L=\{0\}$.  Hence
there are vectors $\ell_1,\ldots,\ell_r\in L$ satisfying
$\spair{w_i}{\ell_j}=\delta_{ij}$.  The kernel in $L$ of the pairing with $W$
has dimension $m$, so choose a basis $a_1,\ldots,a_m$ of that kernel.  Then
\begin{equation}
 (\ell_1,\ldots,\ell_r,a_1,\ldots,a_m)
 \label{eq:fixed-W-adapted-L-basis}
\end{equation}
is a basis of $L$, and every $a_j$ is orthogonal to $W$.  Construct
$\ell'_1,\ldots,\ell'_r,a'_1,\ldots,a'_m$ in the same way for $L'$, using
the same basis of $W$.

Define $f:W\oplus L\to W\oplus L'$ by
$f(w_i)=w_i$, $f(\ell_i)=\ell'_i$, and $f(a_j)=a'_j$.  The alternating form
vanishes on each of $W,L,L'$, and the displayed adapted bases have the same
pairings between $W$ and the Lagrangian summand.  Thus $f$ preserves the
restricted symplectic form.  The symplectic Witt extension theorem extends
$f$ to an element $g\in\operatorname{Sp}(2n,\F_p)$.  This extension fixes
$W$ pointwise and maps $L$ onto $L'$, proving transitivity.  Finally,
$gL\cap W=g(L\cap W)$ for every $g\in\mathcal P_W$, so the generic stratum is
preserved.
\end{proof}

\begin{lemma}[Regularity of ordered transverse completions]
\label{lem:fixed-W-ordered-completion-regularity}
For a generic Lagrangian $L$, let $\mathcal C_W(L)$ be the set of Lagrangian
subspaces $M$ satisfying
\begin{equation}
 W\subseteq M,\quad L\cap M=\{0\}.
 \label{eq:fixed-W-frame-level-completions}
\end{equation}
Then $|\mathcal C_W(L)|$ is independent of $L$.  Moreover, each
$M\in\mathcal C_W(L)$ contributes exactly
\begin{equation}
 |\operatorname{GL}(m,\F_p)|\,p^{m(n-m)}
 \label{eq:fixed-W-ordered-completion-fiber-size}
\end{equation}
ordered tuples $T$ for which
$(T_1,\ldots,T_m,R_{m+1},\ldots,R_n)$ is a basis of $M$.  Consequently,
$|\mathfrak T_W(x)|$ is constant over
$x\in\mathsf{Stab}_{n,p}^{\mathrm{gen}}(W)$.
\end{lemma}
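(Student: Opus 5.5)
The argument has three steps: show $|\mathcal C_W(L)|$ is constant by transitivity, count the ordered tuples attached to each $M$, and then pass from Lagrangians to states $x$.

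\textbf{Constancy of $|\mathcal C_W(L)|$.} I would use Lemma~\ref{lem:fixed-W-generic-lagrangian-transitivity}. Take two generic Lagrangians $L,L'$. That lemma gives $g\in\mathcal P_W$ fixing $W$ pointwise with $gL=L'$. The map $M\mapsto gM$ sends Lagrangians to Lagrangians. Since $gW=W$, it preserves the condition $W\subseteq M$. Since $gL\cap gM=g(L\cap M)$, it preserves transversality to $L$. So $M\mapsto gM$ is a bijection $\mathcal C_W(L)\to\mathcal C_W(L')$, with inverse given by $g^{-1}$. Hence $|\mathcal C_W(L)|=|\mathcal C_W(L')|$.

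\textbf{Fiber count.} Fix $M\in\mathcal C_W(L)$. I must count ordered tuples $(T_1,\dots,T_m)$ in $M$ such that $(T_1,\dots,T_m,R_{m+1},\dots,R_n)$ is a basis of $M$. This holds exactly when the images $\overline T_1,\dots,\overline T_m$ form a basis of the $m$-dimensional quotient $M/W$. There are $|\operatorname{GL}(m,\F_p)|$ ordered bases of $M/W$. Each class $\overline T_i$ has $|W|=p^{n-m}$ lifts, chosen independently for each $i$. This gives $|\operatorname{GL}(m,\F_p)|\,p^{m(n-m)}$ tuples.

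Two further points need checking. First, $R(T)$ spans a Lagrangian precisely when its span is the chosen $M$. Second, the tuple is admissible for $x$, i.e. $L_x\cap M=\{0\}$, exactly when $M\in\mathcal C_W(L_x)$. Since $M=\operatorname{span}R(T)$ is determined by $T$, distinct $M$ contribute disjoint sets of tuples. This gives the disjoint decomposition
\[
\mathfrak T_W(x)=\bigsqcup_{M\in\mathcal C_W(L_x)}\{T:\operatorname{span}R(T)=M\}.
\]

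\textbf{Constancy in $x$.} From the decomposition, $|\mathfrak T_W(x)|=|\mathcal C_W(L_x)|\cdot|\operatorname{GL}(m,\F_p)|\,p^{m(n-m)}$. This depends on $x$ only through the generic Lagrangian $L_x$, so by the first step it is constant. The character $\chi_x$ plays no role, since admissibility involves only $L_x$.

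No step is a real obstacle. The only care needed is to justify that the linear independence of the full tuple is equivalent to the quotient images forming a basis of $M/W$. This holds because $R_{m+1},\dots,R_n$ already form a basis of $W\subseteq M$ and $\dim M=n$.
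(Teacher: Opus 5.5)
Your proposal is correct and follows essentially the paper's own proof: the same transitivity argument via Lemma~\ref{lem:fixed-W-generic-lagrangian-transitivity}, and a fiber count that is equivalent to the paper's parametrization $T_i=\sum_j e_jG_{ji}+w_i$ using a complement of $W$ in $M$ (you count ordered bases of the quotient $M/W$ and their lifts instead). You also write out the disjoint decomposition of $\mathfrak T_W(x)$ over $M\in\mathcal C_W(L_x)$, which the paper leaves implicit.
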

\begin{proof}
For $g\in\mathcal P_W$, the map $M\mapsto gM$ is a bijection from
$\mathcal C_W(L)$ to $\mathcal C_W(gL)$: it preserves containment of $W$ and
sends $L\cap M$ to $gL\cap gM$.  Lemma~\ref{lem:fixed-W-generic-lagrangian-transitivity}
therefore makes $|\mathcal C_W(L)|$ constant on the generic stratum.

Fix $M\supseteq W$, choose a complement $A$ of $W$ in $M$, and fix an ordered
basis $e_1,\ldots,e_m$ of $A$.  Every ordered tuple extending the prescribed
basis of $W$ has a unique form
\begin{equation}
 T_i=\sum_{j=1}^m e_jG_{ji}+w_i,
 \quad G\in\operatorname{GL}(m,\F_p),\quad w_i\in W.
 \label{eq:fixed-W-ordered-completion-parametrization}
\end{equation}
Conversely, every such choice gives an ordered extension.  This proves
Eq.~\eqref{eq:fixed-W-ordered-completion-fiber-size}.  The incidence condition
and its completion tuples depend only on $L_x$, not on the stabilizer
character.  Since every Lagrangian supports the same number of characters,
the tuple degree is constant over all generic stabilizer states.
\end{proof}

\begin{lemma}[Incidence decomposition of the generic prior]
\label{lem:uniform-frame-generic-decomposition}
Choose an incidence pair $(X,T)$ uniformly from $\mathcal I_W$.  Then $X$ is
uniform on $\mathsf{Stab}_{n,p}^{\mathrm{gen}}(W)$.  Conditional on $T$, the state
$X$ is uniform on $\mathsf S_W(T)$.  If
\begin{equation}
 \mu_W(T):=\frac{|\mathsf S_W(T)|}{|\mathcal I_W|},
 \label{eq:fixed-W-frame-marginal}
\end{equation}
then $\mu_W$ is the actual $T$-marginal and
\begin{equation}
 \frac1{|\mathsf{Stab}_{n,p}^{\mathrm{gen}}(W)|}\sum_x f(x)
 =\sum_T\mu_W(T)\frac1{|\mathsf S_W(T)|}
 \sum_{x\in\mathsf S_W(T)}f(x),
 \label{eq:fixed-W-incidence-decomposition}
\end{equation}
for every function $f$ on the generic states.
\end{lemma}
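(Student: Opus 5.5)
The argument is a double count on the finite incidence set $\mathcal I_W$, with Lemma~\ref{lem:fixed-W-ordered-completion-regularity} as the key input.

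\textbf{Step 1 (marginal of $X$).} By Lemma~\ref{lem:fixed-W-ordered-completion-regularity}, the degree $d:=|\mathfrak T_W(x)|$ is the same for every generic $x$. Since the nonemptiness argument shows $d\ge1$, we have $|\mathcal I_W|=d\,|\mathsf{Stab}_{n,p}^{\mathrm{gen}}(W)|$. Hence
\[
\Pr[X=x]=\frac{d}{|\mathcal I_W|}=\frac{1}{|\mathsf{Stab}_{n,p}^{\mathrm{gen}}(W)|}
\]
for every generic $x$, so $X$ is uniform. This step is the only place where regularity is used, and it is the main point of the lemma. Without constant degree, the $X$-marginal would be tilted toward states with more completions.

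\textbf{Step 2 (marginal of $T$ and conditional law).} Fix $T$. The fiber of $\mathcal I_W$ over $T$ is $\{(x,T):x\in\mathsf S_W(T)\}$, by the definition in Eq.~\eqref{eq:fixed-W-incidence-chart}. Therefore
\[
\Pr[T]=\frac{|\mathsf S_W(T)|}{|\mathcal I_W|}=\mu_W(T).
\]
The joint law is uniform, so conditioning on $T$ gives the uniform law on this fiber, that is, $X$ uniform on $\mathsf S_W(T)$. Only $T$ with $\mathsf S_W(T)\neq\emptyset$ carry mass. We sum over those $T$, which avoids dividing by zero.

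\textbf{Step 3 (the identity).} Compute $\mathbb E[f(X)]$ in two ways. By Step 1 it equals the left-hand side of Eq.~\eqref{eq:fixed-W-incidence-decomposition}. By the tower property together with Step 2 it equals
\[
\sum_T\mu_W(T)\,|\mathsf S_W(T)|^{-1}\sum_{x\in\mathsf S_W(T)}f(x).
\]
Equivalently, write $\sum_{(x,T)\in\mathcal I_W}f(x)$ once as $d\sum_x f(x)$ and once as $\sum_T\sum_{x\in\mathsf S_W(T)}f(x)$, then divide by $|\mathcal I_W|$.

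By Eq.~\eqref{eq:fixed-W-incidence-chart-equality}, each inner uniform average is exactly the fixed-frame prior on $\mathsf S(R(T))$, which is how this lemma will be used later.
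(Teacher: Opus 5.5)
Your proof is correct and follows the same route as the paper: the constant incidence degree from Lemma~\ref{lem:fixed-W-ordered-completion-regularity} gives the uniform $X$-marginal, counting the pairs with second component $T$ gives $\mu_W$, and conditioning yields Eq.~\eqref{eq:fixed-W-incidence-decomposition}. Your explicit double count $d\sum_x f(x)=\sum_T\sum_{x\in\mathsf S_W(T)}f(x)$ and your restriction to those $T$ with $\mathsf S_W(T)\neq\varnothing$ just spell out the paper's brief argument in more detail.
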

\begin{proof}
By Lemma~\ref{lem:fixed-W-ordered-completion-regularity}, every generic state
has the same number of incident completion tuples.  The $X$-marginal of the
uniform distribution on $\mathcal I_W$ is therefore uniform.  For fixed $T$,
every incident pair with second component $T$ has the same mass, so the
conditional distribution is uniform on $\mathsf S_W(T)$.  Counting the pairs
with second component $T$ gives Eq.~\eqref{eq:fixed-W-frame-marginal}, and
conditioning gives Eq.~\eqref{eq:fixed-W-incidence-decomposition}.  The
measure $\mu_W$ is the actual tuple marginal; no uniform distribution on the
completion tuples is assumed.
\end{proof}

\subsubsection{Comparison of the operational optima}
\label{subsubsec:transverse-operational-comparison}
The preceding incidence decomposition yields a common revelation inequality
for both operational criteria: after the completion is announced, the learner
may choose a chart-dependent POVM.  What happens after that revelation is
criterion dependent.  For exact identification, only the unique correct report
matters and it is equivalent to the refined target.  For verification, the
chartwise learner still has the larger basis-free report alphabet, so no
identification with the refined verification optimum is made.

\begin{proposition}[Chartwise comparison after revealing the completion]
\label{prop:chartwise-comparison-after-revelation}
Let $F_{\mathsf M}^{(k)}(x;W)$ denote either
$s_{\mathsf M}^{(k)}(x;W)$ or $v_{\mathsf M}^{(k)}(x;W)$. Then
\begin{align}
&\sup_{\mathsf M}
\frac1{|\mathsf{Stab}_{n,p}^{\mathrm{gen}}(W)|}
\sum_{x\in\mathsf{Stab}_{n,p}^{\mathrm{gen}}(W)}F_{\mathsf M}^{(k)}(x;W)
\notag\\
&\qquad\le
\sum_T\mu_W(T)\sup_{\mathsf M_T}
\frac1{|\mathsf S_W(T)|}
\sum_{x\in\mathsf S_W(T)}F_{\mathsf M_T}^{(k)}(x;W).
\label{eq:common-chartwise-revelation-comparison}
\end{align}
Here $T$, and hence the frame $R(T)$, is public before the learner chooses
$\mathsf M_T$; thus the POVM on the right may depend on the revealed chart.
\end{proposition}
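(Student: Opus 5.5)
The plan is a standard revelation argument built on the incidence decomposition of Lemma~\ref{lem:uniform-frame-generic-decomposition}. Fix an arbitrary learner POVM $\mathsf M$ on the $k$ copies. It is admissible for the generic fixed-$W$ task and may depend only on $W$. Apply Eq.~\eqref{eq:fixed-W-incidence-decomposition} with $f(x):=F_{\mathsf M}^{(k)}(x;W)$; this is a legitimate function on the generic states, whether $F$ is the exact indicator sum $s_{\mathsf M}^{(k)}$ or the verification score $v_{\mathsf M}^{(k)}$. This rewrites the uniform generic average as
\[
\sum_T\mu_W(T)\frac1{|\mathsf S_W(T)|}\sum_{x\in\mathsf S_W(T)}F_{\mathsf M}^{(k)}(x;W).
\]
Here $\mu_W$ is the true tuple marginal, and each $\mathsf S_W(T)$ is nonempty whenever $\mu_W(T)>0$.

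Next, for each fixed $T$ in the support of $\mu_W$, the same $\mathsf M$ is one admissible choice for a chart-dependent POVM $\mathsf M_T$. Indeed, the feasible set on the right-hand side consists of all POVMs with outcomes in $\mathsf{Out}_{W,m}$, and the payoff functional $F_{\mathsf M_T}^{(k)}(x;W)$ is defined by exactly the same formula as on the left. Hence each inner chart average is bounded by its supremum over $\mathsf M_T$. Since $\mu_W(T)\ge0$, summing these bounds gives an upper bound that does not depend on $\mathsf M$. Taking the supremum over $\mathsf M$ on the left then yields Eq.~\eqref{eq:common-chartwise-revelation-comparison}.

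The main point to check carefully is that the suprema are well defined and that nothing in the argument uses criterion-specific structure. The report alphabet $\mathsf{Out}_{W,m}$ is finite and the payoffs are linear in the POVM, so each supremum is attained on a compact set and the inequality is between finite numbers. The argument relies only on linearity of the average in $f$ and on the fact that the constant choice $\mathsf M_T=\mathsf M$ is always feasible. In particular, no identification with refined-alphabet optima is needed here, consistent with the remark preceding the proposition; that identification is made later, and only for exact learning, via Lemma~\ref{lem:fixed-W-frame-target-equivalence} and Eq.~\eqref{eq:fixed-W-incidence-chart-equality}.
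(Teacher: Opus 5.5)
Your proposal is correct and matches the paper's proof: for a fixed $\mathsf M$, apply the incidence decomposition of Lemma~\ref{lem:uniform-frame-generic-decomposition} to $x\mapsto F_{\mathsf M}^{(k)}(x;W)$. Then use $\sup_{\mathsf M}\sum_T\mu_W(T)f_T(\mathsf M)\le\sum_T\mu_W(T)\sup_{\mathsf M_T}f_T(\mathsf M_T)$, which needs nothing specific to either criterion. Your remarks on attainment of the suprema and on nonemptiness of $\mathsf S_W(T)$ when $\mu_W(T)>0$ are harmless additions that the paper leaves implicit.
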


\begin{proof}
For every fixed learner POVM $\mathsf M$, apply
Eq.~\eqref{eq:fixed-W-incidence-decomposition} to
$x\mapsto F_{\mathsf M}^{(k)}(x;W)$. This gives
\begin{align}
&\frac1{|\mathsf{Stab}_{n,p}^{\mathrm{gen}}(W)|}
\sum_xF_{\mathsf M}^{(k)}(x;W)
\notag\\
&\qquad=
\sum_T\mu_W(T)\frac1{|\mathsf S_W(T)|}
\sum_{x\in\mathsf S_W(T)}F_{\mathsf M}^{(k)}(x;W).
\label{eq:common-chartwise-revelation-fixed-POVM}
\end{align}
Taking the supremum over one POVM common to all completion tuples and then
allowing a separate POVM for each revealed tuple gives
\begin{equation}
\sup_{\mathsf M}\sum_T\mu_W(T)f_T(\mathsf M)
\le\sum_T\mu_W(T)\sup_{\mathsf M_T}f_T(\mathsf M_T),
\label{eq:supremum-inside-incidence-mixture}
\end{equation}
where $f_T$ is the corresponding chart average. This argument is independent
of which of the two criteria defines $F_{\mathsf M}^{(k)}$.
\end{proof}

\begin{corollary}[Identification of the chartwise exact optimum]
\label{cor:chartwise-fixed-frame-optima}
For every admissible completion tuple $T$,
\begin{equation}
\sup_{\mathsf M}\frac1{|\mathsf S_W(T)|}
\sum_{x\in\mathsf S_W(T)}s_{\mathsf M}^{(k)}(x;W)
=P_{\rm frame,av}^{(k),*}(R(T),m;p).
\label{eq:chartwise-exact-fixed-frame-optimum}
\end{equation}
\end{corollary}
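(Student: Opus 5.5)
The plan is to show that, on the fixed chart $\mathsf S_W(T)=\mathsf S(R(T))$, the exact-learning problem with the basis-free alphabet $\mathsf{Out}_{W,m}$ and the refined frame-assisted problem with the canonical alphabet have the same optimal average success. Both optimize over learner POVMs acting on the same ensemble. By Eq.~\eqref{eq:fixed-W-incidence-chart-equality}, the uniform prior on $\mathsf S_W(T)$ is exactly the uniform prior of the fixed-frame problem with public frame $R(T)$. So only the report alphabets and the scoring rules need to be compared.

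First I would record that every $x\in\mathsf S_W(T)$ is generic. Hence, as noted after Eq.~\eqref{eq:fixed-W-generic-exceptional}, the only valid basis-free report for $x$ is $(K_W(x),\chi_x|_{K_W(x)})$, and $s_{\mathsf M}^{(k)}(x;W)$ reduces to $\Tr[M_{K_W(x),\chi_x|_{K_W(x)}}\rho_x^{\otimes k}]$.

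Lemma~\ref{lem:fixed-W-frame-target-equivalence} gives a map $\Phi_T$ on the canonical alphabet: it sends a frame-relative $m$-generator family with its character to its span with the transported character. This map is injective, and it sends the true refined target $y_{R(T)}(x)$ to the true basis-free report. I would check that the image of every canonical report lies in $\mathsf{Out}_{W,m}$. The span is isotropic and orthogonal to $W$, since it pairs trivially with $R_{m+1},\ldots,R_n$ and is contained in some $L_x$. It is transverse to $W$, because the pairing with $T_1,\ldots,T_m$ is nondegenerate on it while $W$ pairs trivially with those vectors inside the Lagrangian $M_{R(T)}$. Conversely, I would define a partial inverse $\Psi_T$ on $\mathsf{Out}_{W,m}$. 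It applies the pairing with $T_1,\ldots,T_m$, and whenever this pairing is an isomorphism on $\widehat K$ it reads off the dual basis; otherwise it is undefined.

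Now I would prove the two inequalities. ($\ge$) Given a refined POVM $\{N_y\}$, set $M_{\Phi_T(y)}:=N_y$ and put zero on all other basis-free reports. By the equivalence lemma, the correct-outcome probabilities agree for every $x$. ($\le$) Given a basis-free POVM $\mathsf M$, define $N_y:=M_{\Phi_T(y)}$ for each $y$ in the canonical alphabet, and dump the remaining mass $\sum_{(K,\eta)\notin\operatorname{Im}\Phi_T}M_{K,\eta}$ onto an arbitrary fixed canonical label. This yields a valid POVM whose success on each $x$ is at least $s_{\mathsf M}^{(k)}(x;W)$. The reason is that the correct basis-free report $(K_W(x),\chi_x|_{K_W(x)})$ always lies in the image of $\Phi_T$, so its weight is retained at the correct canonical label. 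Taking the supremum on both sides gives Eq.~\eqref{eq:chartwise-exact-fixed-frame-optimum}.

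I expect the main obstacle to be the bookkeeping of the canonical alphabet. It must be exactly the set of frame-relative families that arise as the reference-dual families of some state in $\mathsf S(R(T))$, equivalently the coordinate labels $(AJ_m,\text{character})$ after standardization. The task is to verify that $\Phi_T$ maps this set injectively into $\mathsf{Out}_{W,m}$ with cocycle-compatible characters. For $p=2$ the transported character must remain compatible relative to the fixed physical section. This holds because $\Phi_T$ does not change the underlying function on $\widehat K$, only its coordinatization. Everything else follows from Lemma~\ref{lem:fixed-W-frame-target-equivalence}.
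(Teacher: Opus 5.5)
Your proposal is correct and follows essentially the same route as the paper. In both, the two inequalities are explicit POVM conversions through Lemma~\ref{lem:fixed-W-frame-target-equivalence}: for one direction, basis-free reports that cannot be correct on the chart are merged into one fixed canonical outcome, and for the other, canonical outcomes are relabeled by their spans. Your additional checks that $\Phi_T$ is injective and lands in $\mathsf{Out}_{W,m}$ (using that every canonical label is realized by some state in the chart) make explicit what the paper leaves implicit.
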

\begin{proof}
Equation~\eqref{eq:fixed-W-incidence-chart-equality} identifies the chart and
its uniform prior with the fixed-frame state family.  For each input, the
unique correct basis-free report is equivalent, given the public frame, to the
unique reference-dual target by
Lemma~\ref{lem:fixed-W-frame-target-equivalence}.  We now give both POVM
conversions explicitly.

First let $\{M_r\}_{r\in\mathcal R}$ be a POVM with the basis-free report
alphabet.  Every report $r$ that can be correct on this chart determines a
unique canonical label $y(r)$ by
Lemma~\ref{lem:fixed-W-frame-target-equivalence}.  Sum all such operators with
the same $y$.  Every remaining report has exact score zero; sum its operator
into one fixed canonical outcome $y_0$.  The resulting family is a canonical
POVM, and the additional operator in outcome $y_0$ is positive and cannot
decrease exact success.  Thus the basis-free chart optimum is no larger than
the canonical optimum.

Conversely, every canonical partial label extends to a full symmetric label in
the transverse chart, and hence determines its basis-free report through the
public frame.  Relabel each outcome of a canonical POVM by this report.  The
result is a legal basis-free POVM with exactly the same exact-success
probability.  Therefore the reverse inequality also holds, proving
Eq.~\eqref{eq:chartwise-exact-fixed-frame-optimum}.  No analogous equality of
the full verification optima is asserted.
\end{proof}

The exact-learning worst-case restriction remains separate because it uses a
minimum rather than the incidence mixture. For every incident chart,
\begin{equation}
 P_{W,\rm wc}^{(k),*}(n,m;p)
 \le P_{W,\rm gen,wc}^{(k),*}
 \le P_{W,\rm wc}^{(k),*}(\mathsf S_W(T))
 \le P_{\mathrm{id}}^{(k),*}(n,m;p).
 \label{eq:fixed-W-wc-comparison}
\end{equation}
The last inequality reveals $T$, replaces the chart minimum by its uniform
average, and uses Eq.~\eqref{eq:chartwise-exact-fixed-frame-optimum} together
with the frame-to-coordinate equivalence.

For verification, we do not use a chartwise worst-case comparison. We begin
with the general inequality
\begin{equation}
 S_{W,\rm ver,wc}^{(k),*}(n,m;p)
 \le S_{W,\rm ver,av}^{(k),*}(n,m;p),
 \label{eq:fixed-W-ver-wc-to-av}
\end{equation}
and later split the full uniform prior into its generic and exceptional parts.

Applying Proposition~\ref{prop:chartwise-comparison-after-revelation} with
$F_{\mathsf M}^{(k)}=s_{\mathsf M}^{(k)}$, followed by
Corollary~\ref{cor:chartwise-fixed-frame-optima} and
Proposition~\ref{prop:frame-conditioned-coordinate-reduction}, gives
\begin{align}
 P_{W,\rm gen,av}^{(k),*}
 &\le\sum_T\mu_W(T)P_{\rm frame,av}^{(k),*}(R(T),m;p)
 \notag\\
 &=P_{\mathrm{id}}^{(k),*}(n,m;p).
 \label{eq:fixed-W-generic-average-comparison}
\end{align}
For verification, revealing the chart still gives the learner additional
information, but the resulting chartwise problem retains the full basis-free
report alphabet.  We therefore do not identify its optimum with
$S_{\mathrm{ver}}^{(k),*}(n,m;p)$.  Instead,
Section~\ref{sec:integrated-all-prime-graded} applies a basis-free
bounded-rank exact-target-list reduction directly to the generic original
Bayes problem.
Let
\begin{equation}
 \delta_{n,m,p}:=\Pr_{X\sim{\rm Unif}(\mathsf{Stab}_{n,p})}
 \{L_X\cap W\ne\{0\}\}.
 \label{eq:fixed-W-delta}
\end{equation}
A fixed one-dimensional subspace is contained in a uniformly random Lagrangian with
probability $1/(p^n+1)$.  The union bound over the
$(p^{n-m}-1)/(p-1)$ one-dimensional subspaces of $W$ gives
\begin{equation}
 \delta_{n,m,p}\le\frac{p^{n-m}-1}{(p-1)(p^n+1)}=O(p^{-m}).
 \label{eq:fixed-W-exceptional-bound}
\end{equation}
when $m=\beta n+o(n)$ with $\beta>0$.

\begin{theorem}[Finite exact comparison of the fixed-constraint and refined tasks]
\label{thm:fixed-W-operational-comparison}
For every prime $p$ and all finite $n,m,k$,
\begin{align}
 P_{W,\rm wc}^{(k),*}(n,m;p)&\le P_{\mathrm{id}}^{(k),*}(n,m;p),
 \label{eq:fixed-W-wc-final-comparison}\\
 P_{W,\rm av}^{(k),*}(n,m;p)
 &\le\delta_{n,m,p}+(1-\delta_{n,m,p})P_{\mathrm{id}}^{(k),*}(n,m;p).
 \label{eq:fixed-W-av-comparison}
\end{align}
For verification we retain at this stage only the general relation
\begin{equation}
 S_{W,\rm ver,wc}^{(k),*}(n,m;p)
 \le S_{W,\rm ver,av}^{(k),*}(n,m;p).
 \label{eq:fixed-W-verification-wc-average-only}
\end{equation}
For completeness, the finite mechanism behind the average verification
bound is as follows.  Fix a rank-distance cutoff $D$.  Reports within distance
$D$ are converted into exact guesses by randomization over the compatible
exact-target list, while reports outside that list contribute at most
$p^{-(D+1)}$.  The resulting bound is the compatible-list size times the
exact Bayes optimum, plus this tail term.  Section~\ref{sec:integrated-all-prime-graded}
defines the basis-free lists, proves their uniform size bound, and formalizes
this argument in
Theorem~\ref{thm:generic-original-bounded-rank-graded-to-exact}.
\end{theorem}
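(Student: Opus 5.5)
The plan is to assemble the statement from the comparisons already established in this section; no new quantum analysis is needed. I would treat the three displayed claims in turn, starting with the worst-case bound \eqref{eq:fixed-W-wc-final-comparison}. This is the chain \eqref{eq:fixed-W-wc-comparison}, and I would make each link explicit.

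\emph{First link.} Restricting the infimum from $\mathsf{Stab}_{n,p}$ to the generic stratum can only increase it. The generic stratum is nonempty, since $W$ is isotropic of dimension $n-m<n$ and hence admits a transverse Lagrangian.

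\emph{Second link.} Fix any incident completion $T$; the set $\mathcal I_W$ is nonempty by the quotient argument. By \eqref{eq:fixed-W-incidence-chart-equality}, $\mathsf S_W(T)$ is a subset of the generic stratum, so the infimum over it is larger still.

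\emph{Third link.} A worst-case value never exceeds the uniform average over the same set under the same POVM. Taking suprema, the chart worst case is at most the chartwise uniform optimum. Corollary~\ref{cor:chartwise-fixed-frame-optima} identifies that optimum with $P_{\rm frame,av}^{(k),*}(R(T),m;p)$. Proposition~\ref{prop:frame-conditioned-coordinate-reduction} then identifies it with $P_{\mathrm{id}}^{(k),*}(n,m;p)$.

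For the average bound \eqref{eq:fixed-W-av-comparison}, fix a learner POVM $\mathsf M$. I would split the uniform average of $s_{\mathsf M}^{(k)}(x;W)$ over $\mathsf{Stab}_{n,p}$ into its exceptional and generic parts, weighted by $\delta_{n,m,p}$ and $1-\delta_{n,m,p}$ respectively.

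\emph{Exceptional part.} Each statewise success is a probability, hence at most one. This part therefore contributes at most $\delta_{n,m,p}$.

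\emph{Generic part.} The generic average under $\mathsf M$ is at most $P_{W,\rm gen,av}^{(k),*}$. By \eqref{eq:fixed-W-generic-average-comparison} this is at most $P_{\mathrm{id}}^{(k),*}(n,m;p)$. That inequality is obtained from Proposition~\ref{prop:chartwise-comparison-after-revelation} with $F=s$, combined with Corollary~\ref{cor:chartwise-fixed-frame-optima}, and it uses that $\mu_W$ is a probability measure. The point to check is that the same $\mathsf M$, with the full alphabet $\mathsf{Out}_{W,m}$, is admissible in the generic restricted problem. It is, because the restricted optimization ranges over the same POVM class.

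Taking the supremum over $\mathsf M$ of the combined bound $\delta+(1-\delta)P_{\mathrm{id}}$ gives \eqref{eq:fixed-W-av-comparison}.

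The verification relation \eqref{eq:fixed-W-verification-wc-average-only} is immediate. For every POVM the infimum over inputs is at most the uniform average, and the inequality is preserved under the supremum. The bounded-list description in the statement is only a forward reference to Section~\ref{sec:integrated-all-prime-graded} and needs no proof here.

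The only delicate step is the last link of the worst-case chain, namely the identification in Corollary~\ref{cor:chartwise-fixed-frame-optima}. There I would recheck that folding the never-correct basis-free reports into a single canonical outcome $y_0$ preserves exact success on the chart. It does, because the exact score is an indicator of the unique correct report on generic inputs, and Lemma~\ref{lem:fixed-W-frame-target-equivalence} makes the correct basis-free report and the canonical label determine each other.
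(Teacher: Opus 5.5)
Your proposal is correct and follows the paper's own proof. The worst-case bound is exactly the chain \eqref{eq:fixed-W-wc-comparison}: restrict to the generic stratum, then to one incident chart, replace the minimum by the chart average, and apply Corollary~\ref{cor:chartwise-fixed-frame-optima} together with Proposition~\ref{prop:frame-conditioned-coordinate-reduction}. The average bound is the generic/exceptional split, with the exceptional part bounded by one and \eqref{eq:fixed-W-generic-average-comparison} used on the generic part, and the verification relation is the infimum-versus-average inequality (the qualifier $n-m<n$ is unnecessary, since every isotropic $W$ admits a transverse Lagrangian).
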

\begin{proof}
The exact worst-case bound is Eq.~\eqref{eq:fixed-W-wc-comparison}.  For the
uniform-average exact quantity, split the prior into its generic and
exceptional parts, bound the exceptional conditional success by one, and use
Eq.~\eqref{eq:fixed-W-generic-average-comparison} on the generic part.
Equation~\eqref{eq:fixed-W-verification-wc-average-only} is the general
inequality between an optimized worst-case score and the corresponding
uniform-average score.
\end{proof}

\subsection{The generic fixed-constraint Bayes verification game}
\label{subsec:generic-fixed-W-Bayes-verification-game}
The preceding incidence reduction identifies the unique correct report on the
generic stratum with a refined frame-relative target.  It does not, however,
replace the full basis-free report alphabet by the canonical alphabet.  We
therefore formulate the remaining verification problem directly in its
original basis-free form.

Fix an $(n-m)$-dimensional isotropic subspace $W$ and draw the unknown state
$X$ uniformly from $\mathsf{Stab}_{n,p}^{\mathrm{gen}}(W)$.  For a generic
input $x$, define its unique basis-free exact target by
\begin{equation}
 y_W(x):=\bigl(K_W(x),\chi_x|_{K_W(x)}\bigr),
 \qquad K_W(x)=L_x\cap W^\perp.
 \label{eq:section4-generic-original-exact-target}
\end{equation}
Genericity and Eq.~\eqref{eq:fixed-W-dimension} give
$\dim K_W(x)=m$, so this target is single-valued.  The learner receives
$k$ copies of $\rho_x$ and may announce any
$\widehat y=(\widehat K,\widehat\eta)\in\mathsf{Out}_{W,m}$; in particular,
the report alphabet is the full original basis-free alphabet and not merely
the refined canonical alphabet associated with a revealed chart.

For a learner POVM
$\mathsf M=\{M_{\widehat K,\widehat\eta}\}$, the generic Bayes verification
score is
\begin{align}
 S_{W,\rm ver,gen,av}^{(k)}(\mathsf M)
 :=\frac1{|\mathsf{Stab}_{n,p}^{\mathrm{gen}}(W)|}
 \sum_{x\in\mathsf{Stab}_{n,p}^{\mathrm{gen}}(W)}
 \sum_{(\widehat K,\widehat\eta)\in\mathsf{Out}_{W,m}}
 &\Tr[M_{\widehat K,\widehat\eta}\rho_x^{\otimes k}]\notag\\[-1mm]
 &\times\Tr[\Pi_{\widehat K,\widehat\eta}\rho_x].
 \label{eq:section4-generic-original-Bayes-score}
\end{align}
The second factor is exactly the prescribed-measurement-outcome-averaged
acceptance probability by Eq.~\eqref{eq:fixed-W-outcome-averaged-score}.
Optimizing over all learner POVMs defines
$S_{W,\rm ver,gen,av}^{(k),*}(n,m;p)$.

For later reference, if $y=(K,\eta)$ is a generic exact target and
$\widehat y=(\widehat K,\widehat\eta)$ is a legal report, put
\begin{equation}
 d_W(y,\widehat y):=m-\dim(K\cap\widehat K).
 \label{eq:section4-generic-original-distance}
\end{equation}
Call the pair compatible when the two assignments agree on the common
physical Pauli or Weyl family represented by $K\cap\widehat K$.  A basis-free character-sum calculation gives
\begin{equation}
 \Tr[\Pi_{\widehat K,\widehat\eta}\rho_x]
 =\begin{cases}
 p^{-d_W(y_W(x),\widehat y)},&
 \text{if $y_W(x)$ and $\widehat y$ are compatible},\\
 0,&\text{otherwise}.
 \end{cases}
 \label{eq:section4-generic-original-score-spectrum}
\end{equation}
This identity is proved intrinsically in
Lemma~\ref{lem:generic-original-score-spectrum}; the statement here records
all information from that later calculation needed for the theorem below.
Thus score one is equivalent to the unique correct report, whereas an
incorrect basis-free report may still receive a nonzero verification score.
Revealing a transverse completion helps the learner, but it does not convert
every legal basis-free report into a canonical report.  Consequently,
Eq.~\eqref{eq:fixed-W-generic-average-comparison}, which is sufficient for
exact recovery, does not identify the two finite verification optima.  The
verification converse must instead control the full basis-free alphabet.

\begin{theorem}[Verification converse for the generic fixed-constraint Bayes problem]
\label{thm:generic-fixed-W-Bayes-verification-converse}
Fix a prime $p$, $0<\beta\le1$, and $0\le\alpha<1$.  Let
$m_n=\beta n+o(n)$ and $k_n=\alpha n+o(n)$, and let
$W_n\subseteq\F_p^{2n}$ be any isotropic sequence satisfying
$\dim W_n=n-m_n$.  Then
\begin{equation}
 S_{W_n,\rm ver,gen,av}^{(k_n),*}(n,m_n;p)\longrightarrow0,
 \label{eq:section4-generic-original-verification-converse}
\end{equation}
uniformly over the admissible sequence $(W_n)$.  More precisely, the proof
with $D_n=\lfloor\sqrt n\rfloor$ gives
\begin{equation}
 S_{W_n,\rm ver,gen,av}^{(k_n),*}(n,m_n;p)
 \le p^{-J(\alpha,\beta)n^2+o(n^2)}+p^{-\sqrt n+O(1)}.
 \label{eq:section4-generic-original-verification-quantitative}
\end{equation}
\end{theorem}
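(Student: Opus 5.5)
The proof is a bounded-list reduction from verification to exact recovery, following the mechanism sketched after Theorem~\ref{thm:fixed-W-operational-comparison}. Fix a cutoff $D=D_n=\lfloor\sqrt n\rfloor$ and a learner POVM $\mathsf M$ for the generic Bayes verification game. By Eq.~\eqref{eq:section4-generic-original-score-spectrum}, I would split the score of each report $\widehat y$ on input $x$ into two parts. If $\widehat y$ is compatible with $y_W(x)$ and $d_W(y_W(x),\widehat y)\le D$, the score is at most one. Otherwise it is at most $p^{-(D+1)}$. The far part therefore contributes at most $p^{-(D+1)}=p^{-\sqrt n+O(1)}$ to $S^{(k)}_{W,\rm ver,gen,av}(\mathsf M)$. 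It remains to bound the near part,
\[
\mathbb E_X\sum_{\widehat y\in\mathcal L_D(y_W(X))}\Tr[M_{\widehat y}\rho_X^{\otimes k}],
\]
where $\mathcal L_D(\widehat y)$ denotes the set of generic exact targets that are compatible with $\widehat y$ and lie within distance $D$; by the symmetric notation, the sum runs over reports $\widehat y$ such that $y_W(X)$ lies in this list.

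Next I convert $\mathsf M$ into an exact-recovery strategy. On outcome $\widehat y$, output a uniformly random element of $\mathcal L_D(\widehat y)$, or a fixed default if the list is empty. The exact success of this randomized POVM is at least the near part divided by $N_D:=\max_{\widehat y}|\mathcal L_D(\widehat y)|$. By Eq.~\eqref{eq:fixed-W-generic-average-comparison}, that exact success is at most $P_{\mathrm{id}}^{(k)}(n,m;p)$. Hence
\[
S^{(k),*}_{W,\rm ver,gen,av}\le N_D\,P_{\mathrm{id}}^{(k),*}(n,m;p)+p^{-(D+1)}.
\]

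The list-size bound is a counting estimate, done basis-free in the quotient $W^\perp/W$. A generic target $y=(K,\eta)$ satisfies $K\subseteq W^\perp$ and $K\cap W=\{0\}$, so it is determined by the Lagrangian $\overline K\subseteq W^\perp/W$, the lift of $\overline K$ into $W^\perp$, and the character. The lift involves a choice in $\operatorname{Hom}(\overline K,W)$, and this is the delicate part. Given $\widehat y$ and a value $d\le D$, I would count as follows. First choose the intersection $K\cap\widehat K$ of codimension $d$ in $\widehat K$; there are at most $p^{dm}$ choices. Then extend it to $K$. The $m-d$ common directions are fixed, and each of the $d$ new directions, taken modulo that intersection, ranges over a space of size at most $p^{O(n)}$. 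The character on $K$ is fixed on the common part and has at most $p^{d}$ choices elsewhere. This gives $N_D\le\sum_{d\le D}p^{O(dn)}=p^{O(Dn)}=p^{O(n^{3/2})}=p^{o(n^2)}$.

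Combining this with Theorem~\ref{thm:refined-exact-converse}, which gives $P_{\mathrm{id}}^{(k_n),*}\le p^{-J(\alpha,\beta)n^2+o(n^2)}$, yields Eq.~\eqref{eq:section4-generic-original-verification-quantitative}. Since $J(\alpha,\beta)>0$ for $\alpha<1$ and $\beta>0$, the right-hand side tends to zero. Every constant involved depends only on $n$, $m_n$, $k_n$ and $p$, so the bound is uniform in $(W_n)$.

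The main obstacle is making $N_D$ subquadratic in the exponent uniformly over all legal basis-free reports, including reports whose $\widehat K$ is positioned arbitrarily relative to $W$. I would first try to verify that the per-direction freedom is indeed $p^{O(n)}$ rather than something depending on $m$ nontrivially. Here $\dim W^\perp=n+m\le 2n$ bounds it, and the isotropy constraints only shrink the count. This gives the crude but sufficient bound $N_D\le(D+1)\,p^{Dm+D(n+m)+D}$.
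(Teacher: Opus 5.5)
Your proposal is correct and follows essentially the paper's route. You split the generic Bayes score at rank distance $D_n=\lfloor\sqrt n\rfloor$ using the basis-free score spectrum, convert the near compatible-list event into an exact decoder bounded via Eq.~\eqref{eq:fixed-W-generic-average-comparison} and Theorem~\ref{thm:refined-exact-converse}, and control the list size by a uniform subquadratic count. The differences are cosmetic: you randomize over the actual list $\mathcal L_D(\widehat y)$ instead of using the fixed denominator with a failure operator as in Lemma~\ref{lem:generic-bayes-bounded-list-conversion}, and your count $(D+1)p^{D(2m+n+1)}$ is cruder than the Grassmannian-shell bound $\sum_{d\le D}\genfrac{[}{]}{0pt}{}{m}{d}_p\genfrac{[}{]}{0pt}{}{n}{d}_p p^{d^2+d}$ of Lemma~\ref{lem:generic-original-list-count}, but both are $p^{O(n^{3/2})}=p^{o(n^2)}$.
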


Theorem~\ref{thm:generic-fixed-W-Bayes-verification-converse} is the
basis-free counterpart of Theorem~\ref{thm:refined-verification-threshold},
but the two statements concern different finite report alphabets.  The proof
uses only the mechanism already described above: the displayed score spectrum
separates a compatible rank-$D$ list from a tail of size at most $p^{-(D+1)}$;
the compatible-list exponent is $O(Dn+D m)$; and the exact Bayes term has
quadratic converse exponent $J(\alpha,\beta)$.  Taking
$D=\lfloor\sqrt n\rfloor$ makes the list exponent subquadratic and the tail
vanish.  Subsection~\ref{subsec:generic-original-graded-application} supplies
the list construction and the complete finite proof.

\begin{corollary}[Verification converse for the full fixed-constraint task]
\label{cor:full-fixed-W-verification-converse}
Under the assumptions of
Theorem~\ref{thm:generic-fixed-W-Bayes-verification-converse},
\begin{equation}
 S_{W_n,\rm ver,av}^{(k_n),*}(n,m_n;p)\longrightarrow0,
 \qquad
 S_{W_n,\rm ver,wc}^{(k_n),*}(n,m_n;p)\longrightarrow0.
 \label{eq:section4-full-original-verification-converse}
\end{equation}
\end{corollary}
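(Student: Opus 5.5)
The plan is to split the full uniform prior into its generic and exceptional parts, exactly as in the proof of the exact-learning bound in Theorem~\ref{thm:fixed-W-operational-comparison}. Then we invoke Theorem~\ref{thm:generic-fixed-W-Bayes-verification-converse} on the generic part and the exceptional-mass bound in Eq.~\eqref{eq:fixed-W-exceptional-bound} on the remainder. The worst-case statement then follows from Eq.~\eqref{eq:fixed-W-verification-wc-average-only}, since the verification scores are nonnegative.

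Concretely, fix any learner POVM $\mathsf M$ with the basis-free alphabet $\mathsf{Out}_{W_n,m_n}$. Every statewise score $v_{\mathsf M}^{(k)}(x;W)$ in Eq.~\eqref{eq:fixed-W-statewise-verification-score-reduced} lies in $[0,1]$: it is a convex combination, with weights $\Tr[M_{\widehat K,\widehat\eta}\rho_x^{\otimes k}]$, of Born probabilities $\Tr[\Pi_{\widehat K,\widehat\eta}\rho_x]\le1$. Writing the uniform average over $\mathsf{Stab}_{n,p}$ as $(1-\delta_{n,m,p})$ times the generic average plus $\delta_{n,m,p}$ times the exceptional average, we get
\[
\frac1{|\mathsf{Stab}_{n,p}|}\sum_x v_{\mathsf M}^{(k)}(x;W)\le \delta_{n,m,p}+(1-\delta_{n,m,p})\,S_{W,\rm ver,gen,av}^{(k)}(\mathsf M).
\]
The generic average of $v_{\mathsf M}^{(k)}$ is exactly the score in Eq.~\eqref{eq:section4-generic-original-Bayes-score}, because the same POVM is a legal strategy for the generic Bayes game with the same alphabet. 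Taking the supremum over $\mathsf M$ gives
\[
S_{W,\rm ver,av}^{(k),*}\le\delta_{n,m,p}+S_{W,\rm ver,gen,av}^{(k),*}.
\]

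For the asymptotics, Eq.~\eqref{eq:fixed-W-exceptional-bound} gives $\delta_{n,m_n,p}=O(p^{-m_n})$, which tends to zero because $m_n=\beta n+o(n)$ with $\beta>0$. This bound depends only on $\dim W_n$ and is therefore uniform over the admissible sequences $(W_n)$. Theorem~\ref{thm:generic-fixed-W-Bayes-verification-converse} makes the generic term tend to zero, also uniformly in $(W_n)$. Hence $S_{W_n,\rm ver,av}^{(k_n),*}\to0$, and the quantitative form $p^{-J(\alpha,\beta)n^2+o(n^2)}+p^{-\sqrt n+O(1)}+O(p^{-\beta n+o(n)})$ follows. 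The worst-case claim follows from Eq.~\eqref{eq:fixed-W-verification-wc-average-only}.

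Given the cited results there is no real obstacle; the one point needing care is the identification of the generic part of a full-prior strategy with a strategy for the generic Bayes game. This works because $\mathsf{Out}_{W,m}$ is state-independent, so restricting the average to generic inputs does not change the set of feasible POVMs.
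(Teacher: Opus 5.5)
Your proposal is correct and follows essentially the same route as the paper: you split the full uniform prior into generic and exceptional parts, bound the exceptional contribution by its mass $\delta_{n,m_n,p}=O(p^{-m_n})$ and the generic contribution by Theorem~\ref{thm:generic-fixed-W-Bayes-verification-converse}, and then pass to the worst case via Eq.~\eqref{eq:fixed-W-verification-wc-average-only}. Your explicit remark that any POVM over the state-independent alphabet $\mathsf{Out}_{W,m}$ is also feasible for the generic Bayes game spells out a step the paper's short proof leaves implicit.
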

\begin{proof}
Split the full uniform prior into its generic and exceptional parts.  The
generic conditional score vanishes by
Theorem~\ref{thm:generic-fixed-W-Bayes-verification-converse}, whereas the
exceptional conditional score is at most one and its mass satisfies
$\delta_{n,m_n,p}=O(p^{-m_n})=p^{-\beta n+o(n)}$ by
Eq.~\eqref{eq:fixed-W-exceptional-bound}.  Thus the exceptional contribution
is negligible relative to the asserted convergence, and this proves the
uniform-average claim.  The worst-case claim follows from
Eq.~\eqref{eq:fixed-W-verification-wc-average-only}.
\end{proof}

\subsection{Copy-rate theorem for the fixed-constraint tasks}
\label{subsec:fixed-W-rates}

\begin{proposition}[Dimension-only covariance of the prescribed constraint]
\label{prop:fixed-W-dimension-only-covariance}
Let $W,W'\subseteq\F_p^{2n}$ be isotropic subspaces of the same dimension
$n-m$.  Then, for every $k$, the exact worst-case and uniform-average optima
and the corresponding verification optima for $W$ and $W'$ are equal.
Consequently, these four finite-size quantities depend on the prescribed
constraint only through $n,m$, and $p$.
\end{proposition}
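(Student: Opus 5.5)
The plan is to build a symplectic map $g\in\operatorname{Sp}(2n,\F_p)$ with $gW=W'$, implement it by a Clifford unitary $U_g$, and show that conjugation by $U_g$ transports every ingredient of the $W$-problem to the corresponding ingredient of the $W'$-problem, term by term.

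\emph{Step 1: construct $g$.} Choose bases $w_1,\ldots,w_r$ of $W$ and $w'_1,\ldots,w'_r$ of $W'$, where $r=n-m$. The linear map $w_i\mapsto w'_i$ is an isometry between totally isotropic subspaces, since the form vanishes on both. By the symplectic Witt extension theorem, already used in Lemma~\ref{lem:fixed-W-generic-lagrangian-transitivity}, it extends to some $g\in\operatorname{Sp}(2n,\F_p)$. By construction $gW=W'$, and therefore $gW^\perp=W'^\perp$.

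\emph{Step 2: implement $g$ and fix the phases.} Take a Clifford unitary $U_g$ with $U_gP_zU_g^\dagger=\phi_g(z)P_{gz}$, where $\phi_g$ is a phase function. For odd $p$ we may take $\phi_g\equiv1$ on the Weyl section. In general $\phi_g$ converts a cocycle-compatible assignment $\eta$ on an isotropic $U$ into a compatible assignment $g_*\eta$ on $gU$. Comparing the products $P_zP_{z'}$ before and after conjugation shows that $g_*\eta(gz)=\phi_g(z)\eta(z)$ again satisfies Eq.~\eqref{eq:cocycle-compatible-assignment} relative to $c_p$. This phase bookkeeping, especially for $p=2$ where the cocycle is nontrivial, is the only delicate point. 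The fact that makes it work is that conjugation by a unitary preserves the multiplication law of the operators, so compatibility is a statement about operators and is automatically transported.

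\emph{Step 3: check the transport.} From Step 2 we get $U_g\Pi_{U,\eta}U_g^\dagger=\Pi_{gU,g_*\eta}$, with $\Pi_{U,\eta}$ as in Eq.~\eqref{eq:intrinsic-character-projector}. Several consequences follow.

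1. Stabilizer states go to stabilizer states: $\rho_{gx}:=U_g\rho_xU_g^\dagger$ has $L_{gx}=gL_x$ and $\chi_{gx}=g_*\chi_x$. So $x\mapsto gx$ is a bijection of $\mathsf{Stab}_{n,p}$, and hence preserves the uniform prior.
2. The report map $(K,\eta)\mapsto(gK,g_*\eta)$ is a bijection $\mathsf{Out}_{W,m}\to\mathsf{Out}_{W',m}$. Dimension, isotropy, $K\subseteq W^\perp$, $K\cap W=\{0\}$ and compatibility are all preserved.
3. Validity is preserved: $K\subseteq L_x\cap W^\perp$ and $\eta=\chi_x|_K$ hold if and only if $gK\subseteq L_{gx}\cap W'^\perp$ and $g_*\eta=\chi_{gx}|_{gK}$, so Eq.~\eqref{eq:fixed-W-valid-output} transfers.
4. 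Measurements transport: given a POVM $\mathsf M$ for $W$, define $M'_{gK,g_*\eta}:=U_g^{\otimes k}M_{K,\eta}U_g^{\dagger\otimes k}$. Then $\Tr[M'_{gK,g_*\eta}\rho_{gx}^{\otimes k}]=\Tr[M_{K,\eta}\rho_x^{\otimes k}]$.
5. Verification weights agree: $\Tr[\Pi_{gK,g_*\eta}\rho_{gx}]=\Tr[\Pi_{K,\eta}\rho_x]$.

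\emph{Step 4: conclude.} By consequences 3–5, $s^{(k)}_{\mathsf M'}(gx;W')=s^{(k)}_{\mathsf M}(x;W)$. Using Eq.~\eqref{eq:fixed-W-statewise-verification-score-reduced}, also $v^{(k)}_{\mathsf M'}(gx;W')=v^{(k)}_{\mathsf M}(x;W)$. Since $x\mapsto gx$ is a bijection of the input set, the infimum and the uniform average over inputs coincide for $(\mathsf M,W)$ and $(\mathsf M',W')$. The map $\mathsf M\mapsto\mathsf M'$ is a bijection of feasible POVMs, with inverse given by $g^{-1}$. Hence all four optima satisfy $\le$ in both directions, so they are equal. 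Since any two isotropic subspaces of dimension $n-m$ are related in this way, the four optima depend on $W$ only through $n$, $m$ and $p$.
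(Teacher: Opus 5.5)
Your proposal is correct and follows essentially the same route as the paper's proof. Both arguments take a symplectic $g$ with $gW=W'$ and an implementing Clifford $U_g$ whose section phase $\phi_g$ plays the role of the paper's $\tau_g$, then use phase-twisted transport of report and true characters, covariance of the character projectors, and term-by-term transfer of both payoffs under conjugated, relabeled POVMs; you additionally make the Witt-extension step explicit where the paper only asserts that $g$ exists.

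One small correction concerns the direction of the phase. The transported assignment must be $(g_*\eta)(gz)=\phi_g(z)^{-1}\eta(z)$. With your $\phi_g(z)\eta(z)$, the identities $U_g\Pi_{U,\eta}U_g^\dagger=\Pi_{gU,g_*\eta}$ and $\chi_{gx}=g_*\chi_x$ hold only when $\phi_g^2\equiv1$ on $U$. That condition is satisfied by your trivial-phase choice at odd $p$ and by Hermitian binary representatives, but not by an arbitrary section.
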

\begin{proof}
A symplectic map $g$ sends $W$ onto $W'$.  Choose an implementing Clifford
unitary $U_g$.  Relative to the physical Pauli or Weyl section
$v\mapsto P_v$ used to define the state and report characters, let
$\tau_g(v)$ be the phase determined by
\begin{equation}
 U_gP_vU_g^\dagger=\tau_g(v)P_{gv}.
 \label{eq:fixed-W-clifford-section-phase}
\end{equation}
For odd $p$, $\tau_g(v)$ is a $p$th-root phase in the chosen Weyl convention;
for $p=2$ and Hermitian Pauli representatives, it lies in $\{\pm1\}$ and
includes, in particular, the possible sign produced by Clifford conjugation.

The unitary $U_g$ maps each stabilizer state $x$ bijectively to a stabilizer
state $gx$ and conjugates its $k$-copy density operator accordingly.  For a
joint character $\eta$ on $K$, define the transported joint character
$g_*\eta$ on $gK$ by
\begin{equation}
 (g_*\eta)(gv):=\tau_g(v)^{-1}\eta(v),
 \qquad v\in K.
 \label{eq:fixed-W-clifford-character-transport}
\end{equation}
This is well defined because $g$ is bijective.  It is an admissible joint
character: equivalently, the lifted commuting family with the announced
eigenvalues is carried by conjugation to the lifted commuting family over
$gK$.  The true stabilizer character obeys the same rule,
\begin{equation}
 \chi_{gx}(gv)=\tau_g(v)^{-1}\chi_x(v),
 \qquad v\in L_x.
 \label{eq:fixed-W-true-character-transport}
\end{equation}
Hence validity and exact correctness are preserved.  The subspace conditions
are preserved because
\begin{equation}
 g(L_x\cap W^\perp)=L_{gx}\cap (W')^\perp,
 \quad gK\cap W'=g(K\cap W)=\{0\}.
 \label{eq:fixed-W-covariance-valid-output}
\end{equation}

The character-conditioned projectors transform covariantly.  Indeed, writing
the projector with the same physical section and using
Eq.~\eqref{eq:fixed-W-clifford-section-phase},
\begin{align}
 U_g\Pi_{K,\eta}U_g^\dagger
 &=p^{-m}\sum_{v\in K}\overline{\eta(v)}\,
   \tau_g(v)P_{gv}\notag\\
 &=p^{-m}\sum_{v\in K}\overline{(g_*\eta)(gv)}\,P_{gv}
 =\Pi_{gK,g_*\eta}.
 \label{eq:fixed-W-report-projector-covariance}
\end{align}
Thus conjugating learner POVMs and relabeling their outcomes preserves every
exact-success term and every Born verification score.  Applying the same
construction with $U_g^\dagger$ gives the inverse transport.  Therefore the
four optima for $W$ and $W'$ are equal.  The phase function depends on the
chosen implementing Clifford and section, but the resulting correspondence is
unitary and invertible, so the optimized values do not depend on that choice.
\end{proof}

For the asymptotic definitions, let $W_n\subseteq\F_p^{2n}$ be any sequence of
isotropic subspaces with $\dim W_n=n-m_n$.  Proposition~\ref{prop:fixed-W-dimension-only-covariance}
shows that the resulting rates are independent of the chosen sequence.  For
$0<\epsilon<1$, define
\begin{align}
 R_{\mathrm{wc}}(p,\beta,\epsilon)
 &:={\inf}\{\alpha:\exists m_n,k_n,
 m_n/n\to\beta,\ k_n/n\to\alpha,
 \liminf_nP_{W_n,\rm wc}^{(k_n),*}(n,m_n;p)\ge\epsilon\},
 \label{eq:fixed-W-wc-rate}\\
 R_{\mathrm{av}}(p,\beta,\epsilon)
 &:={\inf}\{\alpha:\exists m_n,k_n,
 m_n/n\to\beta,\ k_n/n\to\alpha,
 \liminf_nP_{W_n,\rm av}^{(k_n),*}(n,m_n;p)\ge\epsilon\}.
 \label{eq:fixed-W-av-rate}
\end{align}

\begin{theorem}[No copy-rate discount under fixed commuting constraints]
\label{thm:fixed-W-no-discount}
Fix a prime $p$ and $0<\beta\le1$.  Let $W_n\subseteq\F_p^{2n}$ be any
isotropic sequence with $\dim W_n=n-m_n$.  If $m_n=\beta n+o(n)$ and
$k_n=\alpha n+o(n)$ with $0\le\alpha<1$, then the following bounds hold
uniformly over the admissible choice of the sequence $W_n$:
\begin{align}
 P_{W_n,\rm wc}^{(k_n),*}(n,m_n;p)
 &\le p^{-J(\alpha,\beta)n^2+o(n^2)}\longrightarrow0,
 \label{eq:fixed-W-wc-converse}\\
 P_{W_n,\rm av}^{(k_n),*}(n,m_n;p)
 &\le O(p^{-m_n})+p^{-J(\alpha,\beta)n^2+o(n^2)}\longrightarrow0,
 \label{eq:fixed-W-av-converse}\\
 S_{W_n,\rm ver,wc}^{(k_n),*}(n,m_n;p)&\longrightarrow0,
 \label{eq:fixed-W-ver-wc-converse}\\
 S_{W_n,\rm ver,av}^{(k_n),*}(n,m_n;p)&\longrightarrow0.
 \label{eq:fixed-W-ver-av-converse}
\end{align}
If $k_n-n\to+\infty$, the worst-case and uniform-average exact-recovery
probabilities and the worst-case and uniform-average verification scores all
tend to one.
Consequently,
\begin{equation}
 R_{\mathrm{wc}}(p,\beta,\epsilon)=R_{\mathrm{av}}(p,\beta,\epsilon)=1.
 \label{eq:fixed-W-two-rates-one}
\end{equation}
For verification, the two displayed scores tend to zero below copy rate one
and tend to one when $k_n-n\to+\infty$.  The verification statement is an
asymptotic vanishing claim.  Its finite precursor has the explicit form described
above: the exceptional mass plus the generic mass times the sum of a
cutoff-dependent compatible-list factor multiplied by the generic exact Bayes
optimum and the tail $p^{-(D+1)}$.  The precise list definition and finite
inequality are proved later in
Theorem~\ref{thm:generic-original-bounded-rank-graded-to-exact}.
\end{theorem}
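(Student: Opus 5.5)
The plan is to assemble Theorem~\ref{thm:fixed-W-no-discount} from the finite comparisons already established, handling the converse and direct parts separately. Uniformity over the sequence $(W_n)$ is free: by Proposition~\ref{prop:fixed-W-dimension-only-covariance}, all four finite optima depend only on $(n,m_n,p)$.

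\emph{Exact-learning converse.} Equation~\eqref{eq:fixed-W-wc-final-comparison} of Theorem~\ref{thm:fixed-W-operational-comparison} gives $P_{W_n,\rm wc}^{(k_n),*}\le P_{\mathrm{id}}^{(k_n),*}(n,m_n;p)$. Theorem~\ref{thm:refined-exact-converse} bounds the right side by $p^{-J(\alpha,\beta)n^2+o(n^2)}$, which proves Eq.~\eqref{eq:fixed-W-wc-converse}. For the average, Eq.~\eqref{eq:fixed-W-av-comparison} gives
\[
P_{W_n,\rm av}^{(k_n),*}\le\delta_{n,m_n,p}+P_{\mathrm{id}}^{(k_n),*}.
\]
Combining Eq.~\eqref{eq:fixed-W-exceptional-bound}, which bounds $\delta_{n,m_n,p}$ by $O(p^{-m_n})$, with the same refined converse yields Eq.~\eqref{eq:fixed-W-av-converse}. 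Since $J(\alpha,\beta)>0$ for $\alpha<1$ and $\beta>0$, and $m_n\to\infty$, both bounds tend to zero.

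\emph{Verification converse.} Equations~\eqref{eq:fixed-W-ver-wc-converse}--\eqref{eq:fixed-W-ver-av-converse} are exactly Corollary~\ref{cor:full-fixed-W-verification-converse}. That corollary rests on Theorem~\ref{thm:generic-fixed-W-Bayes-verification-converse}, together with the exceptional-mass split and Eq.~\eqref{eq:fixed-W-verification-wc-average-only}. The genuinely hard ingredient is the bounded-rank list reduction inside Theorem~\ref{thm:generic-fixed-W-Bayes-verification-converse}. It requires showing that, with $D_n=\lfloor\sqrt n\rfloor$, the compatible-list size is $p^{O(D_n n)}$, which is subquadratic and therefore absorbed by $p^{-J n^2}$. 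I take it as proved in Section~\ref{sec:integrated-all-prime-graded}, as the statement indicates.

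\emph{Direct part.} Suppose $k_n-n\to+\infty$. Apply the optimal complete-identification decoder of Theorem~\ref{thm:complete-identification-achievability}, which has the same success probability for every input, since the worst-case and average values coincide at $P_{\rm stab,id}^{(k_n),*}\to1$. Then postprocess the output $(L,\chi)$ deterministically.
\begin{itemize}
\item Compute $L\cap W_n^\perp$ and $L\cap W_n$.
\item Choose a complement $K$ as in Eq.~\eqref{eq:fixed-W-direct-complement}, which exists by the dimension identity~\eqref{eq:fixed-W-dimension}, using a fixed canonical rule.
\item Output $(K,\chi|_K)$.
\end{itemize}
The output $K$ is $m_n$-dimensional, isotropic, contained in $W_n^\perp$ and transverse to $W_n$, and $\chi|_K$ is cocycle-compatible, so the report lies in $\mathsf{Out}_{W_n,m_n}$. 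Whenever $L=L_x$ and $\chi=\chi_x$, the report is valid by Eq.~\eqref{eq:fixed-W-valid-output}. Hence for every input $x$ we have $s^{(k_n)}(x;W_n)\ge P_{\rm stab,id}^{(k_n),*}$, and also $v^{(k_n)}(x;W_n)\ge P_{\rm stab,id}^{(k_n),*}$, because a valid report has $\Tr[\Pi_{K,\eta}\rho_x]=1$. It follows that all four worst-case and average quantities tend to one.

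\emph{Rates.} From the converse, if $\alpha<1$ then every sequence with $k_n/n\to\alpha$ has success tending to zero, hence below $\epsilon$; this gives $R\ge1$. Note that a sequence with $k_n/n\to\alpha$ is of the form $\alpha n+o(n)$. For the upper bound, take $k_n=n+\lceil\sqrt n\rceil$. Then $k_n/n\to1$ and $k_n-n\to\infty$, so the success tends to one, and $R\le1$. This gives Eq.~\eqref{eq:fixed-W-two-rates-one}.
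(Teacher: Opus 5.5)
Your proposal is correct and follows the paper's proof. It obtains the exact converse from Theorem~\ref{thm:fixed-W-operational-comparison}, Theorem~\ref{thm:refined-exact-converse} and the exceptional-mass bound, takes the verification converse from Corollary~\ref{cor:full-fixed-W-verification-converse}, and gets achievability from the statewise-constant complete-identification decoder followed by a fixed complement-selection rule; your explicit rate argument and your deduction of constant statewise success from the equality of the worst-case and average optima just fill in steps the paper leaves implicit.
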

\begin{proof}
The exact converse follows from
Theorem~\ref{thm:fixed-W-operational-comparison},
Theorem~\ref{thm:refined-exact-converse}, and
Eq.~\eqref{eq:fixed-W-exceptional-bound}.  For verification,
Theorem~\ref{thm:generic-fixed-W-Bayes-verification-converse} controls the
full basis-free report alphabet under the generic conditional prior.
Corollary~\ref{cor:full-fixed-W-verification-converse} then restores the
exceptional stratum and yields the full uniform-average and worst-case
conclusions.

For the direct part, apply the covariant complete-identification decoder from
Theorem~\ref{thm:complete-identification-achievability}.  From the recovered
pair $(L_x,\chi_x)$, choose by a fixed Gaussian-elimination rule a complement
$K$ in $L_x\cap W_n^\perp=(L_x\cap W_n)\oplus K$.  Correct complete recovery
produces a valid exact output, and the same report has verification score one.
Because the complete-identification decoder has the same success probability
for every pure stabilizer input, the resulting exact-learning success
probabilities and verification scores, both worst-case and uniform-average,
tend to one.
\end{proof}

At the endpoint $m=n$, the prescribed subspace is $W=\{0\}$, and the exact
fixed-constraint task is unrestricted complete stabilizer identification.
Thus the case $\beta=1$ of Eq.~\eqref{eq:fixed-W-wc-converse} gives
\begin{equation}
 P_{\rm stab,id}^{(k),*}(n;p)
 \le p^{-\frac12(1-\alpha)^2n^2+o(n^2)}\longrightarrow0
 \label{eq:unrestricted-stabilizer-converse}
\end{equation}
whenever $k=\alpha n+o(n)$ with $0\le\alpha<1$.

\begin{remark}[Scope of the rate statement]
The quadratic exponent belongs to the generic refined problem.  In the full
uniform-average exact bound, the exponentially rare exceptional term may
dominate the quadratically small generic term.  The bounded-offset regime
$k=n+O(1)$ and the sublinear query regime $m=o(n)$ remain open.
\end{remark}

\section{Quadratic stabilizer bases over odd prime fields}
\label{sec:odd-prime-foundations}

For odd prime local dimension, the standardized frame identifies each transverse Lagrangian and its character with an explicit quadratic stabilizer state.  In these coordinates,
the requested reference-dual directions become the first $m$ columns of
$(I_n;A)$ together with their character restriction.  This representation
makes partial labels, overlaps, tensor powers, and the finite-copy analysis
explicit.

Throughout this section, $p$ is an odd prime, $\F_p$ is the field with
$p$ elements, and
\begin{equation}
 \omega:=e^{2\pi i/p}.
 \label{eq:omega-definition}
\end{equation}
All vector and matrix operations appearing in exponents are evaluated over
$\F_p$.  In particular, $1/2$ denotes the multiplicative inverse of $2$ in
$\F_p$, rather than the real number interpreted before reduction modulo $p$.
The Hilbert space of $n$ qudits is
$\mathcal H_n:=(\mathbb C^p)^{\otimes n}$, with computational basis
$\{\ket{x}:x\in\F_p^n\}$.

\subsection{Weyl operators and the symmetric-matrix normal form}
\label{subsec:weyl-symmetric-matrix-normal-form}

For $u,v\in\F_p^n$, define the shift and phase operators by
\begin{equation}
  X(u)|x\rangle:=|x+u\rangle,
  \quad
  Z(v)|x\rangle:=\omega^{v^Tx}|x\rangle.
  \label{eq:odd-XZ-definitions}
\end{equation}
They obey
\begin{equation}
  X(u)Z(v)=\omega^{-u^Tv}Z(v)X(u).
  \label{eq:odd-XZ-commutation}
\end{equation}
We use the phase-adjusted Weyl operators
\begin{equation}
  W(u,v):=\omega^{-\frac12u^Tv}Z(v)X(u).
  \label{eq:odd-Weyl-definition}
\end{equation}
With the commutation convention in Eq.~\eqref{eq:odd-XZ-commutation}, this definition gives the multiplication rule
\begin{equation}
  W(u,v)W(u',v')
  =\omega^{-\frac12\spair{(u,v)}{(u',v')}}
   W(u+u',v+v').
  \label{eq:odd-Weyl-multiplication}
\end{equation}

Let $\Sym_n(\F_p)$ denote the space of symmetric $n\times n$ matrices over
$\F_p$.
For $A\in\Sym_n(\F_p)$, define the Lagrangian subspace in symmetric-matrix normal form
\begin{equation}
  L_A:=\left\{\binom{u}{Au}:u\in\F_p^n\right\}
  \subseteq\F_p^{2n}.
  \label{eq:symmetric-matrix-normal-form-LA}
\end{equation}
The symmetry of $A$ implies
\begin{equation}
  u^TAu'-(Au)^Tu'=0
  \quad
  (u,u'\in\F_p^n).
  \label{eq:symmetric-matrix-normal-form-isotropy}
\end{equation}
Consequently, the Weyl operators indexed by $L_A$ form an abelian
representation of the additive group $\F_p^n$:
\begin{equation}
  W(u,Au)W(u',Au')=W(u+u',A(u+u')).
  \label{eq:symmetric-matrix-Weyl-group}
\end{equation}
Thus, $A$ specifies a maximal commuting Weyl family and its associated joint eigenbasis.  It does not, by itself, specify one particular state in
the associated joint eigenbasis; the missing information is an eigenvalue
character.

\begin{lemma}[Symmetric-matrix normal form for transverse Lagrangians]
\label{lem:transverse-symmetric-matrix-normal-form}
Let $V=\{(0,v):v\in\F_p^n\}$. An $n$-dimensional subspace
$L\subseteq\F_p^{2n}$ is Lagrangian and satisfies $L\cap V=\{0\}$ if and only if
there is a unique $A\in\Sym_n(\F_p)$ with $L=L_A$.
\end{lemma}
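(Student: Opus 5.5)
We prove the two directions separately, using only linear algebra over $\F_p$ and the symplectic form in Eq.~\eqref{eq:global-symplectic-form}.

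For the ``if'' direction, fix $A\in\Sym_n(\F_p)$. The map $u\mapsto\binom{u}{Au}$ is injective, so $L_A$ has dimension $n$. Equation~\eqref{eq:symmetric-matrix-normal-form-isotropy} shows that the form vanishes identically on $L_A$, so $L_A$ is isotropic of dimension $n$ and hence Lagrangian. If $\binom{u}{Au}\in V$, then $u=0$, and therefore $L_A\cap V=\{0\}$.

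For the ``only if'' direction, let $L$ be Lagrangian with $L\cap V=\{0\}$.
\begin{itemize}
\item Consider the projection $\pi:L\to\F_p^n$, $\binom{u}{v}\mapsto u$. Its kernel is $L\cap V=\{0\}$, and $\dim L=n$, so $\pi$ is an isomorphism.
\item Define the linear map $A:=\pi_2\circ\pi^{-1}:\F_p^n\to\F_p^n$, where $\pi_2$ is the second-coordinate projection. Then $L=\{\binom{u}{Au}:u\in\F_p^n\}$.
\item For symmetry, take two elements $\binom{u}{Au},\binom{u'}{Au'}$ of $L$. Isotropy gives $u^TAu'-(Au)^Tu'=0$, that is, $u^T(A-A^T)u'=0$ for all $u,u'$. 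Hence $A=A^T$.
\item For uniqueness, suppose $L_A=L_{A'}$. For each $u$, the vector $\binom{u}{Au}$ lies in $L_{A'}$. Since its first coordinate is $u$ and $\pi$ is injective on $L_{A'}$, it must equal $\binom{u}{A'u}$. So $Au=A'u$ for all $u$, and $A=A'$.
\end{itemize}

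Every step is routine. The only point needing care is that the kernel of $\pi$ is exactly $L\cap V$; this is what turns transversality into the graph form. No property of odd $p$ is used, so the argument is characteristic-free and applies equally in the binary section.
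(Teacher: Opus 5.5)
Your proof is correct and follows the paper's argument essentially step for step: transversality makes the first-coordinate projection an isomorphism, so $L$ is the graph of a unique linear map, and isotropy forces $u^T(A-A^T)u'=0$ for all $u,u'$, hence $A=A^T$. Your explicit uniqueness check and your remark that no property of odd $p$ is used are accurate, and the paper leaves both implicit.
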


\begin{proof}
For $L_A$, symmetry gives isotropy, dimension $n$ gives maximality, and
$L_A\cap V=\{0\}$. Conversely, transversality makes the projection of $L$ onto
the first component an isomorphism, so $L$ is the graph of a unique linear map
$A$. Isotropy gives $u^T(A^T-A)u'=0$ for all $u,u'$, hence $A=A^T$.
\end{proof}

\begin{remark}[Size and scope of the normal-form domain]
\label{rem:symmetric-matrix-normal-form-nonvanishing-fraction}
The normal-form domain contains $p^{n(n+1)/2}$ Lagrangians, while the total number is
$\prod_{j=1}^n(p^j+1)$ \cite{Gross2006,SingalEtAl2023Counting}. Its fraction is
$\prod_{j=1}^n(1+p^{-j})^{-1}$, which stays positive for fixed $p$. The domain
is therefore asymptotically non-negligible but remains a proper subset of the
stabilizer-state space.
\end{remark}

\subsection{Joint eigenvectors and eigenvalue characters}
\label{subsec:odd-joint-eigenvectors}
The symmetric-matrix normal form specifies the maximal commuting Weyl family,
but a stabilizer state also requires the choice of its eigenvalue character.
The following quadratic-phase representation combines these two pieces of
information in a single explicit state vector.  It provides a
calculation-friendly label for every stabilizer state whose Lagrangian
subspace lies in the present transverse normal-form domain.  As will be
verified below, the label is unique at the level of projective states.  This
representation makes the eigenvalue character, pairwise overlaps, tensor-power
expansions, averaging over unreported coordinates, and the partial-label map
directly accessible from the parameters $(A,b)$.

For $A\in\Sym_n(\F_p)$ and $b\in\F_p^n$, define
\begin{equation}
  |\psi_{A,b}\rangle
  :=p^{-n/2}\sum_{x\in\F_p^n}
  \omega^{\frac12x^TAx+b^Tx}|x\rangle.
  \label{eq:odd-quadratic-state}
\end{equation}
Every computational-basis amplitude has modulus $p^{-n/2}$. We therefore call
this the symmetric-matrix or full-support quadratic stabilizer family.  A general
stabilizer state may instead have support on a proper affine subspace.  Such
states are excluded only from this adapted coordinate realization of the
reference-relative partial tasks; they remain part of the unrestricted complete-identification
ensemble in Eqs.~\eqref{eq:unrestricted-stabilizer-wc-success} and~\eqref{eq:unrestricted-stabilizer-av-success}.

The next lemma identifies both the commuting operator family and its
joint eigenvalues.

\begin{lemma}[Eigenvalue character]
\label{lem:odd-eigenvalue-character}
For every $u\in\F_p^n$,
\begin{equation}
  W(u,Au)|\psi_{A,b}\rangle
  =\omega^{-b^Tu}|\psi_{A,b}\rangle.
  \label{eq:odd-eigenvalue-character}
\end{equation}
Hence $|\psi_{A,b}\rangle$ is a joint eigenvector of
$\{W(u,Au):u\in\F_p^n\}$, with character
\begin{equation}
  \chi_b(u):=\omega^{-b^Tu}.
  \label{eq:odd-character}
\end{equation}
\end{lemma}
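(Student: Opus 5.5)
The plan is a direct computation of $W(u,Au)$ acting on each computational-basis component of $|\psi_{A,b}\rangle$, followed by a reindexing of the sum. We first write
\[
W(u,Au)=\omega^{-\frac12u^TAu}Z(Au)X(u)
\]
from Eq.~\eqref{eq:odd-Weyl-definition}. Applying $X(u)$ to $|\psi_{A,b}\rangle$ shifts $|x\rangle$ to $|x+u\rangle$. Substituting $y=x+u$, the coefficient of $|y\rangle$ becomes
\[
p^{-n/2}\,\omega^{\frac12(y-u)^TA(y-u)+b^T(y-u)}.
\]
Then $Z(Au)$ multiplies this coefficient by $\omega^{(Au)^Ty}=\omega^{u^TAy}$, where we use $A=A^T$.

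Next we expand the quadratic exponent over $\F_p$, with $\tfrac12$ the inverse of $2$. Symmetry of $A$ gives $y^TAu=u^TAy$, so
\[
\tfrac12(y-u)^TA(y-u)=\tfrac12y^TAy-u^TAy+\tfrac12u^TAu.
\]
Collecting all exponents of the coefficient of $|y\rangle$ gives
\[
-\tfrac12u^TAu+\tfrac12y^TAy-u^TAy+\tfrac12u^TAu+b^Ty-b^Tu+u^TAy=\tfrac12y^TAy+b^Ty-b^Tu.
\]
The cross terms $\pm u^TAy$ cancel, and the constants $\pm\frac12u^TAu$ cancel. The coefficient of $|y\rangle$ is therefore $\omega^{-b^Tu}$ times the original amplitude, which proves Eq.~\eqref{eq:odd-eigenvalue-character}.

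The joint-eigenvector claim follows because this holds for every $u$. The family $\{W(u,Au)\}$ is commuting by Eq.~\eqref{eq:symmetric-matrix-Weyl-group}. The map $\chi_b(u)=\omega^{-b^Tu}$ is additive, $\chi_b(u+u')=\chi_b(u)\chi_b(u')$, so it is consistent with the trivial cocycle on $L_A$ recorded in Eq.~\eqref{eq:symmetric-matrix-Weyl-group}.

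The only real risk in this plan is a phase-convention slip. One must consistently use the convention $W=\omega^{-\frac12u^Tv}Z(v)X(u)$, apply $X$ before $Z$, and rely on symmetry of $A$ in the cross term. This is exactly where the sign of the character and the cancellation of the $\tfrac12u^TAu$ terms are decided. No other obstacle is expected.
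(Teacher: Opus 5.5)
Your proposal is correct and follows essentially the same route as the paper: write $W(u,Au)=\omega^{-\frac12u^TAu}Z(Au)X(u)$, shift by $X(u)$, reindex with $y=x+u$, and use $A=A^T$ to cancel the cross terms $\pm u^TAy$ and the constants $\pm\tfrac12u^TAu$, leaving the factor $\omega^{-b^Tu}$. Your additional remarks on commutativity and additivity of $\chi_b$ are harmless but not needed, since the joint-eigenvector claim already follows from the identity holding for every $u$.
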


\begin{proof}
Using Eqs.~\eqref{eq:odd-XZ-definitions} and
\eqref{eq:odd-Weyl-definition}, we obtain
\begin{align}
  W(u,Au)|\psi_{A,b}\rangle
  &={p^{-n/2}}
    \sum_{x\in\F_p^n}
    \omega^{\frac12x^TAx+b^Tx-\frac12u^TAu+(Au)^T(x+u)}
    |x+u\rangle.                                          \label{eq:eigenproof-1}
\end{align}
Set $y=x+u$.  Since $A^T=A$, the exponent in
Eq.~\eqref{eq:eigenproof-1} becomes
\begin{equation}
  \frac12(y-u)^TA(y-u)+b^T(y-u)-\frac12u^TAu+(Au)^Ty
  =\frac12y^TAy+b^Ty-b^Tu.
  \label{eq:eigenproof-exponent}
\end{equation}
Factoring out $\omega^{-b^Tu}$ gives
Eq.~\eqref{eq:odd-eigenvalue-character}.
\end{proof}

The phase-unadjusted Pauli operators have the equivalent eigenvalue relation
\begin{equation}
  Z(Au)X(u)|\psi_{A,b}\rangle
  =\omega^{\frac12u^TAu-b^Tu}|\psi_{A,b}\rangle.
  \label{eq:odd-unadjusted-eigenvalue}
\end{equation}
The quadratic term in Eq.~\eqref{eq:odd-unadjusted-eigenvalue} is a consequence
of the phase convention for $Z(Au)X(u)$.  It disappears in the Weyl convention
\eqref{eq:odd-Weyl-definition}, leaving the linear character
\eqref{eq:odd-character}.

If one prefers stabilizer operators with eigenvalue $+1$, set
\begin{equation}
  S_{A,b}(u)
  :=\omega^{b^Tu}W(u,Au)
  =\omega^{b^Tu-\frac12u^TAu}Z(Au)X(u).
  \label{eq:odd-plus-one-stabilizer}
\end{equation}
Then
\begin{equation}
  S_{A,b}(u)|\psi_{A,b}\rangle=|\psi_{A,b}\rangle
  \quad (u\in\F_p^n),
  \label{eq:odd-plus-one-property}
\end{equation}
and $\{S_{A,b}(u):u\in\F_p^n\}$ is the stabilizer group of
$|\psi_{A,b}\rangle$ in the present convention.

\subsection{The joint eigenbasis determined by \texorpdfstring{$A$}{A}}
\label{subsec:odd-measurement-basis}
\begin{equation}
  \mathcal B_A
  :=\{|\psi_{A,b}\rangle:b\in\F_p^n\}.
  \label{eq:basis-BA-definition}
\end{equation}

\begin{lemma}[Fixed-$A$ orthonormality]
\label{lem:fixed-A-orthonormality}
For every fixed $A\in\Sym_n(\F_p)$, $\mathcal B_A$ is an orthonormal basis of
$\mathcal H_n$.  More explicitly,
\begin{equation}
  \langle\psi_{A,b'}|\psi_{A,b}\rangle=\delta_{b,b'}.
  \label{eq:fixed-A-orthonormality}
\end{equation}
\end{lemma}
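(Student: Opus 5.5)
The plan is to compute the inner product directly from the quadratic-phase expansion in Eq.~\eqref{eq:odd-quadratic-state}, and then conclude by counting. Since both vectors share the same matrix $A$, the quadratic phases $\omega^{\frac12x^TAx}$ cancel termwise in
\[
\langle\psi_{A,b'}|\psi_{A,b}\rangle
=p^{-n}\sum_{x\in\F_p^n}\omega^{-\frac12x^TAx-b'^Tx}\,\omega^{\frac12x^TAx+b^Tx}
=p^{-n}\sum_{x\in\F_p^n}\omega^{(b-b')^Tx}.
\]
Here all exponents are evaluated over $\F_p$, and $\omega^{(\cdot)}$ is well defined on $\F_p$ because $\omega$ is a primitive $p$th root of unity. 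The factor $\frac12$ is the inverse of $2$ in $\F_p$, so the cancellation holds exactly.

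The remaining sum is the standard character orthogonality on the additive group $\F_p^n$. If $b=b'$, every term equals $1$ and the sum is $p^n$. If $b\ne b'$, choose a coordinate $j$ with $c_j:=(b-b')_j\neq0$. The sum factorizes over coordinates, and the $j$th factor is $\sum_{t\in\F_p}\omega^{c_jt}=\sum_{s\in\F_p}\omega^{s}=0$, because $t\mapsto c_jt$ is a bijection of $\F_p$ and the $p$th roots of unity sum to zero. This gives Eq.~\eqref{eq:fixed-A-orthonormality}.

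Finally, $\mathcal B_A$ consists of $p^n$ pairwise orthonormal vectors indexed injectively by $b\in\F_p^n$. Since $\dim\mathcal H_n=p^n$, they form an orthonormal basis.

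As a consistency check, Lemma~\ref{lem:odd-eigenvalue-character} shows that the vectors for different $b$ are joint eigenvectors of the commuting unitary family $\{W(u,Au)\}$ with distinct characters $\chi_b$, which independently forces orthogonality. The direct computation is shorter, and it is the one I would present. There is no real obstacle. The only point needing care is noting that conjugation negates the exponents modulo $p$, so the argument does not depend on the choice of real lifts of elements of $\F_p$.
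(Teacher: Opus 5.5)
Your proof is correct and matches the paper's argument: the quadratic phases cancel, character orthogonality on $\F_p^n$ gives $\delta_{b,b'}$, and $p^n$ orthonormal vectors in the $p^n$-dimensional space $\mathcal H_n$ form a basis. The eigenvector remark is a valid alternative check but is not needed.
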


\begin{proof}
The quadratic phases cancel, and therefore
\begin{align}
  \langle\psi_{A,b'}|\psi_{A,b}\rangle
  =p^{-n}\sum_{x\in\F_p^n}\omega^{(b-b')^Tx}
  =\delta_{b,b'},
  \label{eq:character-orthogonality}
\end{align}
by character orthogonality on $\F_p^n$.  Since the family contains $p^n$
orthonormal vectors in the $p^n$-dimensional space $\mathcal H_n$, it is an
orthonormal basis.
\end{proof}

\begin{remark}[Injectivity of the state labels]
\label{rem:odd-state-label-injectivity}
The map $(A,b)\mapsto|\psi_{A,b}\rangle$ is injective even at the level of
projective states.  If $A=A'$, fixed-$A$ orthonormality implies that
$|\psi_{A,b}\rangle$ and $|\psi_{A,b'}\rangle$ represent the same ray only when
$b=b'$.  If $A\ne A'$, then $t=\rank(A-A')\ge1$, and
Proposition~\ref{prop:rank-dependent-overlap} gives squared overlap either zero
or $p^{-t}\le p^{-1}<1$.  Hence states with distinct symmetric matrices cannot represent
the same ray.  In particular, the ensemble contains exactly $p^{n(n+3)/2}$ distinct
projective states.
\end{remark}

Accordingly, ``the joint eigenbasis specified by $A$'' means the projective
measurement
\begin{equation}
  \mathsf M_A
  :=\{|\psi_{A,b}\rangle\langle\psi_{A,b}|:b\in\F_p^n\}.
  \label{eq:joint-eigenbasis-measurement}
\end{equation}
Its outcome $b$ records the joint eigenvalue character
$u\mapsto\omega^{-b^Tu}$ of the commuting family indexed by $L_A$.
This interpretation will be used in the fixed-query partial-learning problem
introduced in Section~\ref{sec:partial-learning}.

\subsection{Pairwise overlaps and rank dependence}
\label{subsec:odd-overlaps}

The overlap of two basis vectors is the normalized quadratic Gauss sum
\begin{equation}
  \langle\psi_{A',b'}|\psi_{A,b}\rangle
  =p^{-n}\sum_{x\in\F_p^n}
   \omega^{\frac12x^T(A-A')x+(b-b')^Tx}.
  \label{eq:odd-overlap-Gauss-sum}
\end{equation}
The next proposition provides the relation needed in the operational analysis.

\begin{proposition}[Rank-dependent overlap]
\label{prop:rank-dependent-overlap}
Let
\begin{equation}
  \Delta A:=A-A',
  \quad
  \Delta b:=b-b',
  \quad
  t:=\rank(\Delta A).
  \label{eq:overlap-differences}
\end{equation}
Then
\begin{equation}
  \lvert\langle\psi_{A',b'}|\psi_{A,b}\rangle\rvert^2
  =
  \begin{cases}
    p^{-t},&\Delta b\in\im(\Delta A),\\
    0,&\Delta b\notin\im(\Delta A).
  \end{cases}.
  \label{eq:rank-dependent-overlap}
\end{equation}
\end{proposition}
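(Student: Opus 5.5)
We compute the absolute square of the normalized Gauss sum in Eq.~\eqref{eq:odd-overlap-Gauss-sum} by expanding it as a double sum and changing variables. This avoids diagonalizing $\Delta A$, which works because $p$ is odd. Write $G:=\sum_{x}\omega^{\frac12x^T\Delta A\,x+\Delta b^Tx}$, so that the squared overlap is $p^{-2n}|G|^2$. Then
\[
|G|^2=\sum_{x,y\in\F_p^n}\omega^{\frac12x^T\Delta A\,x-\frac12y^T\Delta A\,y+\Delta b^T(x-y)}.
\]
Substitute $x=y+h$. Symmetry of $\Delta A$ gives $\frac12(y+h)^T\Delta A(y+h)-\frac12y^T\Delta A\,y=y^T\Delta A\,h+\frac12h^T\Delta A\,h$, and therefore
\[
|G|^2=\sum_{h\in\F_p^n}\omega^{\frac12h^T\Delta A\,h+\Delta b^Th}\sum_{y\in\F_p^n}\omega^{(\Delta A\,h)^Ty}.
\]
By character orthogonality, as in Lemma~\ref{lem:fixed-A-orthonormality}, the inner sum equals $p^n$ when $\Delta A\,h=0$ and vanishes otherwise. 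Hence
\[
|G|^2=p^n\sum_{h\in\ker\Delta A}\omega^{\Delta b^Th},
\]
because the quadratic term $\frac12h^T\Delta A\,h$ vanishes on $\ker\Delta A$.

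The remaining sum is a character sum over the subspace $\ker\Delta A$, which has dimension $n-t$. It equals $|\ker\Delta A|=p^{n-t}$ if $\Delta b^T h=0$ for every $h\in\ker\Delta A$, and it equals zero otherwise. Consequently $|G|^2=p^{2n-t}$ or $0$, and the squared overlap is $p^{-t}$ or $0$, as claimed.

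The last step identifies the condition: $\Delta b$ annihilates $\ker\Delta A$ exactly when $\Delta b\in(\ker\Delta A)^{\perp}$, where orthogonality is taken with respect to the standard dot product. Over any field, $(\ker B)^\perp=\im(B^T)$. Since $\Delta A$ is symmetric, this equals $\im(\Delta A)$. The equality follows from the inclusion $\im B^T\subseteq(\ker B)^\perp$ together with the count $\dim\im B^T=\rank B=n-\dim\ker B$.

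There is no real obstacle. The only points needing care are that the cross term combines to $y^T\Delta A\,h$, which uses $\Delta A^T=\Delta A$ and the fact that $\tfrac12$ is the inverse of $2$ in $\F_p$, and this linear-algebra identification of the annihilator.
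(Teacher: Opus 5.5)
Your proof is correct, but it takes a different route from the paper's.

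The paper, in Proposition~\ref{prop:appB-degenerate-Gauss-dichotomy}, handles the two cases separately:
\begin{itemize}
\item When $\Delta b\notin\im(\Delta A)$, it translates the Gauss sum $G$ itself by a kernel vector $z$ with $\Delta b^Tz\neq0$. This gives $G=\omega^{\Delta b^Tz}G$, hence $G=0$.
\item When $\Delta b\in\im(\Delta A)$, it completes the square and diagonalizes $\Delta A$ by congruence (Lemma~\ref{lem:appB-congruence-diagonalization}). It then factors $G$, up to a unit-modulus phase, as $p^{n-t}\prod_{j=1}^tG_p(\lambda_j)$. Each one-dimensional factor has modulus $\sqrt p$ by Lemma~\ref{lem:appB-one-dimensional-Gauss-absolute-value}, itself a one-variable squaring argument.
\end{itemize}

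You instead apply the squaring-and-differencing trick directly to the $n$-dimensional sum. This collapses $|G|^2$ to $p^n\sum_{h\in\ker\Delta A}\omega^{\Delta b^Th}$ and settles both cases at once. Your route avoids the congruence diagonalization, the completion of the square, and the one-dimensional Gauss-sum lemma. It uses oddness of $p$ only through the existence of $\tfrac12$ in $\F_p$, whereas the paper also uses it for $2D_{12}\neq0$ in the diagonalization and for the change of variables $r=x-y$, $s=x+y$.

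What the paper's route buys is the value of $G$ itself up to an explicit phase, that is, the complex overlap rather than only its modulus. Since the proposition concerns only $\lvert\langle\psi_{A',b'}|\psi_{A,b}\rangle\rvert^2$, your shorter argument suffices. Your final identification $(\ker\Delta A)^\perp=\im(\Delta A)$ is the same as the paper's image--kernel lemma.
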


\begin{proof}
The complete calculation is given in Appendix~\ref{app:quadratic-Gauss-sums}. We record the two decisive steps here.  Since $\Delta A$ is symmetric, every vector $\Delta Ay$ is orthogonal
to $\ker\Delta A$, so
$\im(\Delta A)\subseteq(\ker\Delta A)^\perp$.  Both spaces have
dimension $\rank(\Delta A)$ by rank--nullity; therefore
\begin{equation}
  \im(\Delta A)=(\ker\Delta A)^\perp.
  \label{eq:image-kernel-orthogonal}
\end{equation}
If $\Delta b\notin\im(\Delta A)$, there exists
$z\in\ker\Delta A$ with $\Delta b^Tz\ne0$.  Translating the sum in
Eq.~\eqref{eq:odd-overlap-Gauss-sum} by $z$ multiplies it by the nontrivial
phase $\omega^{\Delta b^Tz}$; hence the sum vanishes.

Suppose instead that $\Delta b\in\im(\Delta A)$.  Choose $y$ satisfying
$\Delta Ay=-\Delta b$.  Translating $x\mapsto x+y$ completes the square and
reduces the absolute value to that of the quadratic Gauss sum associated with
$\Delta A$.  After a congruence transformation, $\Delta A$ is diagonal with
exactly $t$ nonzero entries.  Each nondegenerate one-dimensional Gauss sum has
absolute value $p^{1/2}$, while each null direction contributes a factor $p$.
Thus the sum in Eq.~\eqref{eq:odd-overlap-Gauss-sum} has absolute value
$p^{n-t/2}$, and the normalized squared absolute value is $p^{-t}$.
\end{proof}

\begin{corollary}[Common stabilizer structure]
\label{cor:common-stabilizer-structure}
If $t=\rank(A-A')$, then
\begin{equation}
 L_A\cap L_{A'}=\left\{\binom{u}{Au}:u\in\ker(A-A')\right\},
 \quad \dim(L_A\cap L_{A'})=n-t.
 \label{eq:common-lagrangian-intersection}
\end{equation}
Moreover, $b-b'\in\im(A-A')$ if and only if the two eigenvalue characters
agree on this common stabilizer subgroup.
\end{corollary}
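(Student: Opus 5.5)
The corollary has two claims: an intersection formula, and a criterion for when the two eigenvalue characters agree. I will prove the intersection formula by direct linear algebra in the normal form of Lemma~\ref{lem:transverse-symmetric-matrix-normal-form}, and the character criterion from Lemma~\ref{lem:odd-eigenvalue-character} together with the identity $\im(\Delta A)=(\ker\Delta A)^\perp$ from Eq.~\eqref{eq:image-kernel-orthogonal}.

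\emph{Intersection.} A vector of $L_A$ has the form $\binom{u}{Au}$, and a vector of $L_{A'}$ has the form $\binom{u'}{A'u'}$. Equality of first components forces $u=u'$. Equality of second components then says $Au=A'u$, that is, $u\in\ker(A-A')$. Conversely, every $u\in\ker(A-A')$ gives a vector lying in both subspaces. The map $u\mapsto\binom{u}{Au}$ is injective, so the dimension of the intersection equals $\dim\ker(A-A')=n-t$ by rank--nullity.

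\emph{Characters.} On the common family, the Weyl operator indexed by $u\in\ker\Delta A$ is the same operator $W(u,Au)=W(u,A'u)$ for both states. By Lemma~\ref{lem:odd-eigenvalue-character}, its eigenvalues are $\omega^{-b^Tu}$ on $|\psi_{A,b}\rangle$ and $\omega^{-b'^Tu}$ on $|\psi_{A',b'}\rangle$. The two characters therefore agree on the common subgroup if and only if $(b-b')^Tu=0$ for every $u\in\ker\Delta A$, that is, if and only if $\Delta b\in(\ker\Delta A)^\perp$.

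The step that needs care is rewriting this perp condition as an image condition. Here $\perp$ is the ordinary dot-product orthogonal complement in $\F_p^n$, not the symplectic one. I will use Eq.~\eqref{eq:image-kernel-orthogonal}, which rests on the symmetry of $\Delta A$ and a dimension count. It gives $(\ker\Delta A)^\perp=\im(\Delta A)$, so the agreement condition becomes exactly $b-b'\in\im(A-A')$.

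As a consistency check, this matches Proposition~\ref{prop:rank-dependent-overlap}. The overlap is nonzero precisely when the characters agree on the $(n-t)$-dimensional common stabilizer, and in that case the squared overlap $p^{-t}$ equals $p^{-(n-\dim(L_A\cap L_{A'}))}$.
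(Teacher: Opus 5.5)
Your proposal is correct and follows essentially the same route as the paper: the intersection is read off from $Au=A'u$ plus rank--nullity, and character agreement on the common family reduces to $(b-b')^Tu=0$ for all $u\in\ker(A-A')$, converted to $b-b'\in\im(A-A')$ via the symmetric-matrix identity $(\ker\Delta A)^\perp=\im(\Delta A)$. Your extra remark that both states use the same Weyl operator $W(u,Au)=W(u,A'u)$ on the common directions, so the comparison is section-independent, is a welcome detail the paper leaves implicit.
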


\begin{proof}
The intersection formula follows directly from $(u,Au)=(u,A'u)$ and
rank--nullity. Character agreement is $(b-b')^Tu=0$ for every
$u\in\ker(A-A')$, which is equivalent to
$b-b'\in(\ker(A-A'))^\perp=\im(A-A')$ because $A-A'$ is symmetric.
\end{proof}

Proposition~\ref{prop:rank-dependent-overlap} separates the discrepancy
between two labels into two parts.  The matrix discrepancy $A-A'$ determines
the rank penalty $p^{-t}$, whereas the character discrepancy $b-b'$ determines
whether the overlap is nonzero at all.  In particular, if $A'=A$, then
$t=0$ and Eq.~\eqref{eq:rank-dependent-overlap} reduces to
Eq.~\eqref{eq:fixed-A-orthonormality}; the outcome $b'$ occurs with
probability one exactly when $b'=b$.

\begin{remark}[Why no \texorpdfstring{$\mathbb Z_{2p}$}{Z2p} extension is needed]
\label{rem:no-Z2p}
The factor $1/2$ in Eq.~\eqref{eq:odd-quadratic-state} is an element of
$\F_p$.  Equivalently, the squaring map is an automorphism of the group of
$p$th roots of unity because $p$ is odd.  Thus all phases in this section are
$p$th roots of unity, and no $\mathbb Z_{2p}$-valued quadratic form is needed.
This is precisely the point at which the qubit case differs and must be treated
separately.
\end{remark}

\section{Characteristic-two symmetric-matrix coordinates}
\label{sec:integrated-qubit-foundations}

For qubits, division by two is unavailable in $\F_2$.  The same operational coordinate model therefore requires a $\mathbb Z_4$-valued quadratic enhancement.  This replacement preserves both the partial-learning target and the finite-copy formulation used below.

\subsection{Enhanced quadratic phases}
\label{subsec:integrated-qubit-phases}
In characteristic two, the ordinary phase $\frac12x^TAx$ used for odd primes
is unavailable.  We therefore introduce a $\mathbb Z_4$-valued quadratic
enhancement.  It plays the same role as the odd-prime quadratic phase: it
combines the symmetric-matrix normal form and the eigenvalue character into an
explicit state vector.  The resulting representation is unique at the level
of projective states and makes the later operator, overlap, tensor-power, and
finite-copy calculations tractable.

For a bit $a\in\F_2$, let $\widetilde a\in\{0,1\}\subset\mathbb Z$
denote its canonical integer lift.  All sums defining a $\mathbb Z_4$-valued
phase below are integer sums of these lifted bits, followed by reduction
modulo four.  Let $A\in\Sym_n(\F_2)$ and $b\in\F_2^n$.  Define
\begin{equation}
 q_{A,b}(x):=
 \sum_{j=1}^n \widetilde A_{jj}\widetilde x_j
 +2\sum_{1\le j<\ell\le n}
   \widetilde A_{j\ell}\widetilde x_j\widetilde x_\ell
 +2\sum_{j=1}^n\widetilde b_j\widetilde x_j\pmod4,
 \label{eq:integrated-qubit-phase}
\end{equation}
and
\begin{equation}
 \ket{\psi_{A,b}^{(2)}}
 :=2^{-n/2}\sum_{x\in\F_2^n}i^{q_{A,b}(x)}\ket{x}.
 \label{eq:integrated-qubit-state}
\end{equation}
Equivalently, set
\begin{equation}
 \alpha_j:=A_{jj}+2b_j\pmod4,
 \quad E_{j\ell}:=A_{j\ell}\quad(j<\ell).
 \label{eq:integrated-qubit-additive-coordinates}
\end{equation}
Then $(\alpha,E)$ belongs to the additive group
\begin{equation}
 G_{n,2}:=\mathbb Z_4^n\times\F_2^{\binom n2},
 \quad |G_{n,2}|=2^{n(n+3)/2},
 \label{eq:integrated-qubit-group}
\end{equation}
and Eq.~\eqref{eq:integrated-qubit-phase} is additive in these coordinates.
For $g=(\alpha,E)\in G_{n,2}$, we write $\ket{\psi_g^{(2)}}$ for the
state $\ket{\psi_{A,b}^{(2)}}$ determined by the bijection in
Eq.~\eqref{eq:integrated-qubit-additive-coordinates}.
This is the characteristic-two replacement for the odd-prime quadratic phase
$\frac12x^TAx+b^Tx$.

\begin{remark}[Injectivity of the characteristic-two coordinate labels]
\label{rem:qubit-state-label-injectivity}
The map $(A,b)\mapsto|\psi_{A,b}^{(2)}\rangle$ is injective at the level of
projective states.  Indeed, the state determines its underlying symmetric-matrix-normal-form
Lagrangian $L_A$, and the symmetric-matrix normal form determines $A$
uniquely.  For fixed $A$, Lemma~\ref{lem:integrated-qubit-symmetric-matrix-representation}
shows that the joint eigenvalue character is
$u\mapsto(-1)^{b^Tu}$, which determines $b$.  Equivalently, the coordinate
change in Eq.~\eqref{eq:integrated-qubit-additive-coordinates} is bijective:
$E$ determines the off-diagonal entries of $A$, while each
$\alpha_j\in\mathbb Z_4$ determines the pair $(A_{jj},b_j)$ uniquely.
Thus the enhanced coordinates label the characteristic-two symmetric-matrix
projective stabilizer states without duplication.
\end{remark}

\subsection{The fixed query and its quotient}
\label{subsec:integrated-qubit-query}
As in the odd-prime realization, after standardizing the announced frame the
columns of $(J_m;AJ_m)$ represent the first $m$ stabilizer directions
symplectically dual to that frame.  Thus the coordinates introduced here are a
characteristic-two representation of the same intrinsic partial target, not a
separate query.  For the canonical inclusion $J_m=(I_m\ 0)^T$, the queried
symmetric-matrix action is $AJ_m$ and the induced character is $J_m^Tb$.  The sign used in the odd-prime
Weyl convention causes no ambiguity here because $-b=b$ over $\F_2$.
The quotient element records the queried label in the enhanced additive
coordinates, whereas $(AJ_m,J_m^Tb)$ is its phase-section-dependent vector
representation.  Equality of queried labels is unambiguous, but comparison
of character representatives defined by different phase sections requires the
correction $\delta_{A,C}$ introduced below.
In the additive coordinates of Eq.~\eqref{eq:integrated-qubit-additive-coordinates},
the kernel of the partial-label map is the embedded subgroup
\begin{align}
 H_{n,m,2}:=\{(\alpha,E)\in G_{n,2}:{}&
 \alpha_i=0\ \text{for }1\le i\le m,\notag\\[-1mm]
 &E_{ij}=0\ \text{whenever }i\le m\text{ or }j\le m\}.
 \label{eq:integrated-qubit-unreported-subgroup}
\end{align}
Only the coordinates $\alpha_i$ with $i>m$ and the edge coordinates
$E_{ij}$ with $m<i<j\le n$ may vary.  Consequently,
\begin{equation}
 H_{n,m,2}\cong
 \mathbb Z_4^{n-m}\times\F_2^{\binom{n-m}{2}}.
 \label{eq:integrated-qubit-unreported-subgroup-isomorphism}
\end{equation}
The partial-label space is
\begin{equation}
 Y_{n,m,2}:=G_{n,2}/H_{n,m,2}.
 \label{eq:integrated-qubit-label-quotient}
\end{equation}
Its cardinality is
\begin{equation}
 \lvert Y_{n,m,2}\rvert
 =2^{(n+1)m-m(m-1)/2}.
 \label{eq:integrated-qubit-label-cardinality}
\end{equation}
Thus the base-$p$ visible-label exponent used later has the same expression for
$p=2$ and for odd $p$.

\subsection{Qubit Pauli families and verifier score}
\label{subsec:integrated-qubit-verifier}
We now pass from the enhanced quadratic coordinates to the verifier's
operator-level test.  The true label determines one phase-adjusted Pauli
family, the learner's announcement determines another, and their common
operator space determines whether a nonzero Born score is possible.  The only
additional characteristic-two datum is the phase-section correction on that
common space.  This operator-level treatment is necessary because the
phase-adjusted Pauli representative associated with a symmetric matrix depends
on its $\mathbb Z_4$-valued quadratic enhancement.

For $u,v\in\F_2^n$, define
\begin{equation}
 X(u)\ket{x}:=\ket{x+u},
 \quad
 Z(v)\ket{x}:=(-1)^{v^Tx}\ket{x}.
 \label{eq:integrated-qubit-XZ}
\end{equation}
Then
\begin{equation}
 X(u)Z(v)=(-1)^{u^Tv}Z(v)X(u).
 \label{eq:integrated-qubit-XZ-commutation}
\end{equation}
For $A\in\Sym_n(\F_2)$, write
\begin{equation}
 q_A(x):=
 \sum_{j=1}^n \widetilde A_{jj}\widetilde x_j
 +2\sum_{1\le j<\ell\le n}
   \widetilde A_{j\ell}\widetilde x_j\widetilde x_\ell
 \pmod4,
 \label{eq:integrated-qubit-phase-section-enhancement}
\end{equation}
so that
$q_{A,b}(x)=q_A(x)+2\sum_j\widetilde b_j\widetilde x_j\pmod4$.
In the polarization formula, $u+v$ denotes addition in $\F_2^n$; the
resulting bits are then canonically lifted when $q_A(u+v)$ is evaluated.
With this convention,
\begin{equation}
 q_A(u+v)-q_A(u)-q_A(v)=2u^TAv\pmod4,
 \label{eq:integrated-qubit-polarization}
\end{equation}
where the right-hand side is first evaluated in $\F_2$ and then embedded as
$0$ or $2$ modulo four.
Define the phase-adjusted symmetric-matrix Pauli operators by
\begin{equation}
 \mathsf W_A(u):=i^{-q_A(u)}Z(Au)X(u),
 \quad u\in\F_2^n.
 \label{eq:integrated-qubit-symmetric-matrix-Pauli}
\end{equation}

\subsubsection{True and announced Pauli representations}
\label{subsubsec:qubit-true-announced-representations}

We first verify that both the true family and every admissible announced family form additive Pauli representations, so that their character projectors are
well defined.

\begin{lemma}[Qubit symmetric-matrix representation and eigenvalue character]
\label{lem:integrated-qubit-symmetric-matrix-representation}
For every $A\in\Sym_n(\F_2)$,
\begin{equation}
 \mathsf W_A(u)\mathsf W_A(v)=\mathsf W_A(u+v)
 \quad(u,v\in\F_2^n).
 \label{eq:integrated-qubit-symmetric-matrix-representation}
\end{equation}
Moreover,
\begin{equation}
 \mathsf W_A(u)\ket{\psi_{A,b}^{(2)}}
 =(-1)^{b^Tu}\ket{\psi_{A,b}^{(2)}}.
 \label{eq:integrated-qubit-eigenvalue-character}
\end{equation}
Thus $u\mapsto(-1)^{b^Tu}$ is the eigenvalue character relative to the
phase-adjusted section $u\mapsto\mathsf W_A(u)$.
\end{lemma}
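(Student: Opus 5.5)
The plan is a direct computation with the definitions \eqref{eq:integrated-qubit-XZ}, \eqref{eq:integrated-qubit-phase-section-enhancement} and \eqref{eq:integrated-qubit-symmetric-matrix-Pauli}, organized around the polarization identity \eqref{eq:integrated-qubit-polarization}. I will prove the representation property first and then the eigenvalue relation. Throughout, all phases are tracked in $\mathbb Z_4$, and $\F_2$-valued bilinear terms are embedded as $0$ or $2$.

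\emph{Representation property.} Expand
\[
\mathsf W_A(u)\mathsf W_A(v)=i^{-q_A(u)-q_A(v)}Z(Au)X(u)Z(Av)X(v).
\]
Use \eqref{eq:integrated-qubit-XZ-commutation} to move $X(u)$ past $Z(Av)$. This produces the sign $(-1)^{u^TAv}$, which equals $i^{2u^TAv}$. Collecting operators gives $Z(A(u+v))X(u+v)$, because $Z$'s and $X$'s compose additively over $\F_2$; for $X$ this holds exactly, since $X(u)X(v)=X(u+v)$. The total phase is therefore
\[
i^{-q_A(u)-q_A(v)+2u^TAv}.
\]
By \eqref{eq:integrated-qubit-polarization}, this equals $i^{-q_A(u+v)}$, which is exactly the phase of $\mathsf W_A(u+v)$. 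The main care point is the sign convention: the commutation rule $X(u)Z(v)=(-1)^{u^Tv}Z(v)X(u)$ is symmetric in sign, since $-1=+1$ modulo the $i^2$ embedding. So the only real content is the polarization identity, which the paper states.

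\emph{Eigenvalue character.} Apply $\mathsf W_A(u)$ to
\[
\ket{\psi^{(2)}_{A,b}}=2^{-n/2}\sum_x i^{q_{A,b}(x)}\ket{x}.
\]
First, $X(u)$ sends $\ket{x}$ to $\ket{x+u}$. Reindexing with $y=x+u$ turns the amplitude on $\ket y$ into $i^{q_{A,b}(y+u)}$ before the $Z$ factor. Next, $Z(Au)$ multiplies $\ket y$ by $(-1)^{(Au)^Ty}=i^{2u^TAy}$. Finally, the prefactor contributes $i^{-q_A(u)}$. The phase on $\ket y$ is then
\[
q_A(y+u)+2b^T(y+u)+2u^TAy-q_A(u).
\]
Here I use that the linear part $2\sum_j\widetilde b_j\widetilde x_j$ is additive modulo $4$ under $\F_2$-addition, since $2(\widetilde{a+c})\equiv 2\widetilde a+2\widetilde c \pmod 4$. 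Applying polarization, $q_A(y+u)=q_A(y)+q_A(u)+2u^TAy$. The two copies of $2u^TAy$ cancel modulo $4$, and $q_A(u)$ cancels. What remains is $q_{A,b}(y)+2b^Tu$, so the state is multiplied by $i^{2b^Tu}=(-1)^{b^Tu}$.

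\emph{Main obstacle.} The only delicate point is bookkeeping of integer lifts: $q_A$ evaluated on the $\F_2$-sum $y+u$ must be compared correctly with the lifted sums. I would rely on \eqref{eq:integrated-qubit-polarization} exactly as stated. If a proof of that identity is wanted, it follows from $\widetilde{(a+c)}=\widetilde a+\widetilde c-2\widetilde a\widetilde c$. The diagonal term then yields $-2\widetilde A_{jj}\widetilde u_j\widetilde v_j$, and the off-diagonal term $2\widetilde A_{j\ell}\widetilde x_j\widetilde x_\ell$ changes only by $2$ times the cross terms modulo $4$. Together these give $2u^TAv$, using $-2\equiv2\pmod 4$.
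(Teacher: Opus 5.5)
Your proposal is correct and follows essentially the same route as the paper. You commute $X(u)$ past $Z(Av)$ to pick up $(-1)^{u^TAv}$ and absorb it with the polarization identity \eqref{eq:integrated-qubit-polarization}; you then write the coefficient of $\ket{y}$ as $i^{q_{A,b}(y+u)}(-1)^{(Au)^Ty}$ and cancel the $y$-dependent sign by polarization. Your added checks — the polarization identity derived from $\widetilde{(a+c)}=\widetilde a+\widetilde c-2\widetilde a\widetilde c$, and the mod-$4$ additivity of the linear term $2b^Tx$ — fill in steps that the paper uses without proof.
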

\begin{proof}
Using Eq.~\eqref{eq:integrated-qubit-XZ-commutation},
\begin{align*}
 \mathsf W_A(u)\mathsf W_A(v)
 &=i^{-q_A(u)-q_A(v)}(-1)^{u^TAv}
   Z(A(u+v))X(u+v)\\
 &=i^{-q_A(u+v)}Z(A(u+v))X(u+v),
\end{align*}
where the second equality follows from Eq.~\eqref{eq:integrated-qubit-polarization}, establishing the representation property in Eq.~\eqref{eq:integrated-qubit-symmetric-matrix-representation}.

For the eigenvalue relation, applying $Z(Au)X(u)$ gives the following coefficient of $\ket{y}$:
\[
 i^{q_{A,b}(y+u)}(-1)^{(Au)^Ty}.
\]
The polarization identity gives
\[
 q_{A,b}(y+u)
 =q_{A,b}(y)+q_A(u)+2y^TAu+2b^Tu\pmod4.
\]
The $y$-dependent sign cancels $(-1)^{(Au)^Ty}$.  Multiplication by
$i^{-q_A(u)}$ leaves $(-1)^{b^Tu}$, proving
Eq.~\eqref{eq:integrated-qubit-eigenvalue-character}.
\end{proof}

Let $C\in\F_2^{n\times m}$ satisfy
$J_m^TC=C^TJ_m$, and put
\begin{equation}
 B_C:=J_m^TC\in\Sym_m(\F_2).
 \label{eq:integrated-qubit-announced-B}
\end{equation}
Let $q_{B_C}$ be the enhancement defined by
Eq.~\eqref{eq:integrated-qubit-phase-section-enhancement}, with $A$ replaced by
$B_C$.  Define the announced operators
\begin{equation}
 \mathsf W_C(x):=
 i^{-q_{B_C}(x)}Z(Cx)X(J_mx),
 \quad x\in\F_2^m.
 \label{eq:integrated-qubit-announced-Pauli}
\end{equation}

\begin{lemma}[Qubit announced-family representation and character projector]
\label{lem:integrated-qubit-announced-representation}
The operators in Eq.~\eqref{eq:integrated-qubit-announced-Pauli} satisfy
\begin{equation}
 \mathsf W_C(x)\mathsf W_C(y)=\mathsf W_C(x+y)
 \quad(x,y\in\F_2^m).
 \label{eq:integrated-qubit-announced-representation}
\end{equation}
For $\gamma\in\F_2^m$, define
\begin{equation}
 \Pi_{C,\gamma}^{(m,2)}
 :=2^{-m}\sum_{x\in\F_2^m}(-1)^{\gamma^Tx}\mathsf W_C(x).
 \label{eq:integrated-qubit-accept-projector}
\end{equation}
Then $\Pi_{C,\gamma}^{(m,2)}$ is the joint spectral projector for the
character $x\mapsto(-1)^{\gamma^Tx}$, and
\begin{equation}
 \Pi_{C,\gamma}^{(m,2)}\Pi_{C,\gamma'}^{(m,2)}
 =\delta_{\gamma,\gamma'}\Pi_{C,\gamma}^{(m,2)},
 \quad
 \sum_{\gamma\in\F_2^m}\Pi_{C,\gamma}^{(m,2)}=I.
 \label{eq:integrated-qubit-projector-resolution}
\end{equation}
\end{lemma}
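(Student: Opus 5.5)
The plan is to reduce both claims to the representation property of Lemma~\ref{lem:integrated-qubit-symmetric-matrix-representation}, after rewriting the announced family as a symmetric-matrix family on the first $m$ coordinates. The condition $J_m^TC=C^TJ_m$ says that $B_C=J_m^TC$ is symmetric. Set $u=J_mx$ and $v=J_my$.

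\textbf{Commutation phase.} The commutation rule \eqref{eq:integrated-qubit-XZ-commutation} gives
\[
 Z(Cx)X(J_mx)\,Z(Cy)X(J_my)=(-1)^{(J_mx)^TCy}\,Z(C(x+y))X(J_m(x+y)),
\]
where $(J_mx)^TCy=x^TJ_m^TCy=x^TB_Cy$. Multiplying by the prefactors $i^{-q_{B_C}(x)-q_{B_C}(y)}$, we get $\mathsf W_C(x)\mathsf W_C(y)=i^{-q_{B_C}(x)-q_{B_C}(y)}(-1)^{x^TB_Cy}Z(C(x+y))X(J_m(x+y))$.

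\textbf{Absorbing the sign.} Apply the polarization identity \eqref{eq:integrated-qubit-polarization} to the symmetric matrix $B_C$ in dimension $m$. It gives $q_{B_C}(x+y)=q_{B_C}(x)+q_{B_C}(y)+2x^TB_Cy \pmod 4$, with the right-hand side read as $0$ or $2$. Hence
\[
 i^{-q_{B_C}(x)-q_{B_C}(y)}(-1)^{x^TB_Cy}=i^{-q_{B_C}(x+y)}.
\]
This proves \eqref{eq:integrated-qubit-announced-representation}, exactly as in the proof of Lemma~\ref{lem:integrated-qubit-symmetric-matrix-representation}. The only care needed is that the $\F_2$ sum $x+y$ is lifted canonically before $q_{B_C}$ is evaluated, which is precisely the convention under which \eqref{eq:integrated-qubit-polarization} was stated. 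Taking $y=x$ and noting $\mathsf W_C(0)=I$ shows that each $\mathsf W_C(x)$ is a unitary involution. Since $\mathsf W_C(x)^\dagger=\mathsf W_C(x)^{-1}=\mathsf W_C(x)$, each operator is Hermitian. So $x\mapsto\mathsf W_C(x)$ is a unitary representation of $\F_2^m$ by commuting Hermitian involutions.

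\textbf{Projector identities.} The operator $\Pi_{C,\gamma}^{(m,2)}$ is the standard character projector of this representation.
\begin{itemize}
\item Hermiticity follows from the Hermiticity of each $\mathsf W_C(x)$.
\item For the product, expand $\Pi_{C,\gamma}\Pi_{C,\gamma'}$, apply the representation property, and substitute $z=x+y$. This gives $2^{-2m}\sum_{x,z}(-1)^{(\gamma-\gamma')^Tx}(-1)^{\gamma'^Tz}\mathsf W_C(z)$, and character orthogonality on $\F_2^m$ yields $\delta_{\gamma,\gamma'}\Pi_{C,\gamma}$.
\item Summing over $\gamma$ gives $\sum_\gamma(-1)^{\gamma^Tx}=2^m\delta_{x,0}$, leaving $\mathsf W_C(0)=I$.
\item Finally, $\mathsf W_C(y)\Pi_{C,\gamma}=(-1)^{\gamma^Ty}\Pi_{C,\gamma}$ by reindexing $x\mapsto x+y$. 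So the range of $\Pi_{C,\gamma}$ lies in the joint eigenspace for the character $x\mapsto(-1)^{\gamma^Tx}$. Conversely, $\Pi_{C,\gamma}$ acts as the identity on that eigenspace, so it is exactly the joint spectral projector.
\end{itemize}
No step is a genuine obstacle. The only delicate point is the mod-4 bookkeeping in the polarization step, which is already handled by \eqref{eq:integrated-qubit-polarization}.
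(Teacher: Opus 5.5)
Your proposal is correct and follows essentially the same route as the paper. The commutation sign $(-1)^{x^TB_Cy}$ is absorbed by the polarization identity for $q_{B_C}$, exactly as in Lemma~\ref{lem:integrated-qubit-symmetric-matrix-representation}, and the projector claims follow from the reindexing identity $\mathsf W_C(y)\Pi_{C,\gamma}^{(m,2)}=(-1)^{\gamma^Ty}\Pi_{C,\gamma}^{(m,2)}$ together with character orthogonality on $\F_2^m$. Your extra checks usefully fill in details the paper leaves implicit: Hermiticity via the involution property, and the fact that $\Pi_{C,\gamma}^{(m,2)}$ acts as the identity on the joint eigenspace.
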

\begin{proof}
The multiplication calculation used in
Lemma~\ref{lem:integrated-qubit-symmetric-matrix-representation} applies with the
symmetric matrix $B_C=J_m^TC$ and gives
Eq.~\eqref{eq:integrated-qubit-announced-representation}.  Moreover, for every
$y\in\F_2^m$, a change of variable $x\mapsto x+y$ gives
\begin{equation}
 \mathsf W_C(y)\Pi_{C,\gamma}^{(m,2)}
 =(-1)^{\gamma^Ty}\Pi_{C,\gamma}^{(m,2)}.
 \label{eq:integrated-qubit-projector-character-action}
\end{equation}
Character orthogonality then gives pairwise orthogonality and the resolution
of the identity in Eq.~\eqref{eq:integrated-qubit-projector-resolution}.
\end{proof}

\begin{lemma}[Intrinsic realization of a qubit coordinate report]
\label{lem:qubit-intrinsic-coordinate-report-bridge}
Fix the physical Pauli section $z\mapsto P_z$ from
Eq.~\eqref{eq:global-Pauli-section-cocycle}.  For every admissible coordinate
report $(C,\gamma)$, there is a unique phase function
$s_C:\F_2^m\to\mathbb C$ of unit modulus such that
\begin{equation}
 \mathsf W_C(x)=s_C(x)P_{(J_mx,Cx)}.
 \label{eq:qubit-coordinate-to-physical-section-phase}
\end{equation}
Put
\begin{equation}
 K_C:=\{(J_mx,Cx):x\in\F_2^m\}
 \label{eq:qubit-coordinate-report-subspace}
\end{equation}
and define an intrinsic assignment on $K_C$ by
\begin{equation}
 \eta_{C,\gamma}(J_mx,Cx)
 :=s_C(x)^{-1}(-1)^{\gamma^Tx}.
 \label{eq:qubit-coordinate-to-intrinsic-assignment}
\end{equation}
Then $\eta_{C,\gamma}$ is cocycle-compatible relative to the fixed physical
section, and
\begin{equation}
 \Pi_{K_C,\eta_{C,\gamma}}=\Pi_{C,\gamma}^{(m,2)}.
 \label{eq:qubit-intrinsic-coordinate-projector-equality}
\end{equation}
Thus the binary coordinate report is an explicit realization of the intrinsic
report used in the fixed-constraint operational problem.
\end{lemma}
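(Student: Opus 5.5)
The plan is to compare each operator $\mathsf W_C(x)$ with the physical section element on the same phase-space vector, and then read off both claims from the two multiplication laws.

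\emph{Step 1: existence and uniqueness of $s_C$.} Since $Z(Cx)X(J_mx)$ and $P_{(J_mx,Cx)}$ are Pauli operators attached to the same phase-space vector $(J_mx,Cx)$, they agree up to a unit scalar. Indeed, any physical section element $P_z$ equals a unit phase times $Z(v)X(u)$ for $z=\binom{u}{v}$. Consequently $\mathsf W_C(x)P_{(J_mx,Cx)}^{-1}$ is a unit scalar, and we take this scalar as the definition of $s_C(x)$. Uniqueness holds because $P_z$ is invertible.

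\emph{Step 2: $K_C$ and the cocycle identity.} First, $K_C$ is an $m$-dimensional isotropic subspace. Isotropy follows from $(J_mx)^TCy-(Cx)^TJ_my=x^T(J_m^TC-C^TJ_m)y=0$, and $J_m$ is injective, so the dimension is $m$. Next, substitute $\mathsf W_C(x)=s_C(x)P_{z_x}$, with $z_x:=(J_mx,Cx)$, into the representation law of Lemma~\ref{lem:integrated-qubit-announced-representation} and compare with the section cocycle law Eq.~\eqref{eq:global-Pauli-section-cocycle}:
\[
 s_C(x)s_C(y)c_2(z_x,z_y)P_{z_{x+y}}=s_C(x+y)P_{z_{x+y}}.
\]
This gives $s_C(x+y)=c_2(z_x,z_y)s_C(x)s_C(y)$. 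Inverting this relation and multiplying by the additive character $(-1)^{\gamma^Tx}$ yields
\[
 \eta_{C,\gamma}(z_x+z_y)=c_2(z_x,z_y)^{-1}\eta_{C,\gamma}(z_x)\eta_{C,\gamma}(z_y),
\]
which is exactly the compatibility condition Eq.~\eqref{eq:cocycle-compatible-assignment}. The map $x\mapsto z_x$ is a bijection $\F_2^m\to K_C$, so $\eta_{C,\gamma}$ is well defined on all of $K_C$.

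\emph{Step 3: equality of the projectors.} Expand Eq.~\eqref{eq:intrinsic-character-projector} over $K_C$ using the parametrization $z=z_x$. Since $\eta_{C,\gamma}(z_x)^{-1}=s_C(x)(-1)^{\gamma^Tx}$ (the sign is real), we get
\[
 \eta_{C,\gamma}(z_x)^{-1}P_{z_x}=(-1)^{\gamma^Tx}s_C(x)P_{z_x}=(-1)^{\gamma^Tx}\mathsf W_C(x).
\]
Summing over $x$ with the factor $2^{-m}$ reproduces Eq.~\eqref{eq:integrated-qubit-accept-projector}.

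The only delicate point is the direction of the cocycle in Step 2, namely showing that $s_C$ satisfies the cocycle relation with the sign that makes its inverse compatible. This is a bookkeeping check rather than a real obstacle. One should also note that $s_C$ is fixed by the chosen section and by the specific enhancement $q_{B_C}$, so the resulting intrinsic assignment is unambiguous.
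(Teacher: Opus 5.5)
Your proposal is correct and follows the paper's proof essentially step for step. The paper likewise uses injectivity of $x\mapsto(J_mx,Cx)$ for well-definedness and derives $s_C(x+y)=s_C(x)s_C(y)c_2(z_x,z_y)$ from the representation property of $\mathsf W_C$ together with the section multiplication law. It then inverts this relation to obtain cocycle compatibility and substitutes $\eta_{C,\gamma}(z_x)^{-1}P_{z_x}=(-1)^{\gamma^Tx}\mathsf W_C(x)$ into the projector. Your explicit justification of the existence of $s_C$ and of the isotropy and dimension of $K_C$ supplies points that the paper leaves implicit.
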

\begin{proof}
The map $x\mapsto(J_mx,Cx)$ is injective because $J_m$ has full column rank.
Hence Eq.~\eqref{eq:qubit-coordinate-to-intrinsic-assignment} is well defined.
The representation property of $\mathsf W_C$ and the multiplication law for
the fixed section imply
\begin{equation}
 s_C(x+y)
 =s_C(x)s_C(y)c_2((J_mx,Cx),(J_my,Cy)).
 \label{eq:qubit-bridge-section-cocycle-relation}
\end{equation}
Taking inverses in this identity and combining the result with the additivity
of $x\mapsto(-1)^{\gamma^Tx}$ gives
\[
 \eta_{C,\gamma}(z_x+z_y)
 =c_2(z_x,z_y)^{-1}\eta_{C,\gamma}(z_x)\eta_{C,\gamma}(z_y),
 \qquad z_x=(J_mx,Cx),
\]
which is the cocycle-compatibility condition
Eq.~\eqref{eq:cocycle-compatible-assignment}.  Finally,
\begin{align}
 \Pi_{K_C,\eta_{C,\gamma}}
 &=2^{-m}\sum_{x\in\F_2^m}
   \eta_{C,\gamma}(J_mx,Cx)^{-1}P_{(J_mx,Cx)}\notag\\
 &=2^{-m}\sum_{x\in\F_2^m}(-1)^{\gamma^Tx}\mathsf W_C(x)
 =\Pi_{C,\gamma}^{(m,2)},
 \label{eq:qubit-bridge-projector-calculation}
\end{align}
where phases have unit modulus and the signs are real.  This proves the stated
projector equality.
\end{proof}

\subsubsection{Common operator space and phase-section correction}
\label{subsubsec:qubit-common-space-phase-correction}

The two representations can agree as physical Pauli labels while differing by
a phase convention. The common kernel below identifies the shared operator space, and the linear correction records this phase-section mismatch.

For a true symmetric matrix $A$ and an announced queried family $C$, define
\begin{equation}
 N(A,C):=\ker(C-AJ_m),
 \quad
 \kappa(A,C):=\dim N(A,C).
 \label{eq:integrated-qubit-compatibility-space}
\end{equation}
For $x\in N(A,C)$, the two Pauli labels agree, but their phase-adjusted
representatives need not agree.  Define the resulting correction by
\begin{equation}
 \delta_{A,C}(x)
 :=\frac{q_A(J_mx)-q_{B_C}(x)}2\pmod2,
 \quad x\in N(A,C).
 \label{eq:integrated-qubit-phase-correction}
\end{equation}
Thus $\delta_{A,C}$ measures the phase-section mismatch between the true and
announced representatives on their common physical-operator space.  It does
not alter which operators are common; it corrects the character comparison on
$N(A,C)$.
Put $D:=J_m^TAJ_m-B_C\in\Sym_m(\F_2)$.  If $x\in N(A,C)$, then
$Dx=0$.  Reduction modulo two of the numerator in
Eq.~\eqref{eq:integrated-qubit-phase-correction} equals
$\operatorname{diag}(D)^Tx=x^TDx=0$, where the second equality uses symmetry
of $D$ in characteristic two.  Thus the numerator is even and
$\delta_{A,C}$ is well defined.

The difference between the true and announced quadratic phases on the common
operator space is captured by a function $f_{A,C}:N(A,C)\to\mathbb Z_4$.
Its value at $z\in N(A,C)$ is defined by
\[
f_{A,C}(z):=q_A(J_mz)-q_{B_C}(z)\pmod 4.
\]
For $x,y\in N(A,C)$, subtracting the polarization identity
for $q_{B_C}$ from that for $q_A\circ J_m$ gives
\begin{align}
f_{A,C}(x+y)-f_{A,C}(x)-f_{A,C}(y)
&=2x^T\bigl(J_m^TAJ_m-B_C\bigr)y \notag\\
&=2x^TDy
=0\pmod4.
\label{eq:integrated-qubit-delta-additivity-calculation}
\end{align}
The last equality follows from the defining kernel condition for $N(A,C)$.
Indeed, $B_C=J_m^TC$ and $y\in N(A,C)$ imply
\[
Dy=J_m^T(AJ_m-C)y=0.
\]
Thus the polarization difference vanishes on $N(A,C)$.
Since $f_{A,C}(z)$ is even for every $z\in N(A,C)$,
division by two modulo two yields
\[
\delta_{A,C}(x+y)
=\delta_{A,C}(x)+\delta_{A,C}(y).
\]

The preceding calculation turns the operator comparison into a linear
functional on $N(A,C)$.  The next lemma takes the true label $(A,b)$ and the
announcement $(C,\gamma)$ as input and returns the verifier's Born acceptance
probability: the dimension of $N(A,C)$ fixes its nonzero value, while the
corrected character comparison decides whether that value is attained.

\subsubsection{Verifier-score formula}
\label{subsubsec:qubit-verifier-score-formula}

With the common operator space and corrected character in hand, the Born score
 reduces to a character sum on that space.

\begin{lemma}[Qubit verifier score]
\label{lem:integrated-qubit-score-spectrum}
For every true label $(A,b)$ and announcement $(C,\gamma)$, the verifier
acceptance probability is determined by the dimension of the common operator
space and by compatibility of the corrected characters.  Explicitly,
\begin{equation}
 \Tr\!\left[\Pi_{C,\gamma}^{(m,2)}
 \ket{\psi_{A,b}^{(2)}}\!\bra{\psi_{A,b}^{(2)}}\right]
 =\begin{cases}
 2^{\kappa(A,C)-m},&
 \bigl(\gamma+J_m^Tb+\delta_{A,C}\bigr)|_{N(A,C)}=0,\\[1mm]
 0,&
 \bigl(\gamma+J_m^Tb+\delta_{A,C}\bigr)|_{N(A,C)}\ne0.
 \end{cases}.
 \label{eq:integrated-qubit-score-spectrum}
\end{equation}
Here $\delta_{A,C}$ is regarded as a linear functional on $N(A,C)$.
In particular, the score is one if and only if
$C=AJ_m$ and $\gamma=J_m^Tb$; every false announcement has score either zero
or at most $1/2$.
\end{lemma}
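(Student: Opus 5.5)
The plan is to expand the projector and reduce the Born score to a character sum over $\F_2^m$. By Eq.~\eqref{eq:integrated-qubit-accept-projector},
\[
 \Tr\!\left[\Pi_{C,\gamma}^{(m,2)}\ket{\psi_{A,b}^{(2)}}\!\bra{\psi_{A,b}^{(2)}}\right]
 =2^{-m}\sum_{x\in\F_2^m}(-1)^{\gamma^Tx}\,
 \bra{\psi_{A,b}^{(2)}}\mathsf W_C(x)\ket{\psi_{A,b}^{(2)}},
\]
so the work is to evaluate each expectation $\bra{\psi_{A,b}^{(2)}}\mathsf W_C(x)\ket{\psi_{A,b}^{(2)}}$.

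\textbf{Step 1: off-kernel terms vanish.} The Pauli label of $\mathsf W_C(x)$ is $(J_mx,Cx)$. The stabilizer Lagrangian of the state is $L_A$, and $(J_mx,Cx)\in L_A$ exactly when $Cx=AJ_mx$, i.e.\ $x\in N(A,C)$. If $x\notin N(A,C)$, the label is not in $L_A=L_A^\perp$, so some $\mathsf W_A(u)$ anticommutes with $\mathsf W_C(x)$. Since $\mathsf W_A(u)$ stabilizes the state up to the sign $(-1)^{b^Tu}$ by Lemma~\ref{lem:integrated-qubit-symmetric-matrix-representation}, the expectation equals its own negative and is zero.

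\textbf{Step 2: kernel terms are signs.} For $x\in N(A,C)$, both $\mathsf W_C(x)$ and $\mathsf W_A(J_mx)$ equal $Z(AJ_mx)X(J_mx)$ up to the phases $i^{-q_{B_C}(x)}$ and $i^{-q_A(J_mx)}$, respectively. Their ratio is therefore $i^{q_A(J_mx)-q_{B_C}(x)}=(-1)^{\delta_{A,C}(x)}$; Eq.~\eqref{eq:integrated-qubit-phase-correction} and the evenness argument show this ratio is real. Applying Eq.~\eqref{eq:integrated-qubit-eigenvalue-character} with $u=J_mx$ then gives
\[
 \bra{\psi}\mathsf W_C(x)\ket{\psi}=(-1)^{\delta_{A,C}(x)+b^TJ_mx}.
\]

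\textbf{Step 3: character orthogonality.} Substituting Steps 1 and 2, the score becomes
\[
 2^{-m}\sum_{x\in N(A,C)}(-1)^{(\gamma+J_m^Tb+\delta_{A,C})(x)}.
\]
The exponent is a linear functional on $N(A,C)$, because $\delta_{A,C}$ is additive there. By orthogonality of characters, the sum equals $2^{\kappa(A,C)}$ if the functional vanishes and $0$ otherwise, which is Eq.~\eqref{eq:integrated-qubit-score-spectrum}.

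\textbf{Step 4: the score-one criterion.} The score equals one iff $\kappa(A,C)=m$ and the character condition holds. The condition $\kappa=m$ means $N(A,C)=\F_2^m$, i.e.\ $C=AJ_m$. In that case $B_C=J_m^TAJ_m$.

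The main obstacle is checking that $\delta_{A,C}\equiv0$ when $C=AJ_m$, i.e.\ that $q_A(J_mx)=q_{J_m^TAJ_m}(x)$ modulo $4$. This should hold because $J_mx$ has zero entries beyond position $m$, so only the top-left $m\times m$ block of $A$ contributes to the lifted sums in Eq.~\eqref{eq:integrated-qubit-phase-section-enhancement}. Granting this, the character condition reduces to $\gamma=J_m^Tb$.

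For a false announcement, either $\kappa\le m-1$, giving score at most $1/2$, or $\kappa=m$ with $\gamma\ne J_m^Tb$, giving score zero.
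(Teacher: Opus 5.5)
Your proposal is correct and follows the paper's proof essentially step for step: expand the accept projector, show the off-kernel expectations vanish, identify each on-kernel expectation as the sign $(-1)^{\delta_{A,C}(x)+b^TJ_mx}$ via the section comparison $\mathsf W_C(x)=(-1)^{\delta_{A,C}(x)}\mathsf W_A(J_mx)$, apply character orthogonality on $N(A,C)$, and read off the score-one criterion. The only differences are minor. You kill the off-kernel terms with an anticommuting stabilizer $\mathsf W_A(u)$ rather than the paper's computational-basis sum $2^{-n}\sum_y(-1)^{(v-Au)^Ty}=0$. You also justify $\delta_{A,C}=0$ when $C=AJ_m$ explicitly, noting that only the top-left block of $A$ enters $q_A(J_mx)$, whereas the paper simply asserts it.
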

\begin{proof}
For $x\notin N(A,C)$, put $u=J_mx$ and $v=Cx$.  Then $v-Au\ne0$.
Using the computational-basis expansion of
$\ket{\psi_{A,b}^{(2)}}$, the expectation of $Z(v)X(u)$ is a fixed phase
times
\begin{equation}
 2^{-n}\sum_{y\in\F_2^n}(-1)^{(v-Au)^Ty}=0,
 \label{eq:integrated-qubit-off-intersection-character-sum}
\end{equation}
because $y\mapsto(-1)^{(v-Au)^Ty}$ is a nontrivial character.  The scalar
phase $i^{-q_{B_C}(x)}$ in $\mathsf W_C(x)$ does not change the vanishing.
Therefore
\begin{equation}
 \bra{\psi_{A,b}^{(2)}}\mathsf W_C(x)
 \ket{\psi_{A,b}^{(2)}}=0.
 \label{eq:integrated-qubit-off-intersection-zero}
\end{equation}
For $x\in N(A,C)$, Eqs.~\eqref{eq:integrated-qubit-symmetric-matrix-Pauli},
\eqref{eq:integrated-qubit-announced-Pauli}, and
\eqref{eq:integrated-qubit-phase-correction} give
\begin{equation}
 \mathsf W_C(x)
 =(-1)^{\delta_{A,C}(x)}\mathsf W_A(J_mx).
 \label{eq:integrated-qubit-section-comparison}
\end{equation}
Therefore Lemma~\ref{lem:integrated-qubit-symmetric-matrix-representation} yields
\begin{equation}
 \bra{\psi_{A,b}^{(2)}}\mathsf W_C(x)
 \ket{\psi_{A,b}^{(2)}}
 =(-1)^{\delta_{A,C}(x)+(J_m^Tb)^Tx}
 \quad(x\in N(A,C)).
 \label{eq:integrated-qubit-on-intersection-character}
\end{equation}
Substitute Eqs.~\eqref{eq:integrated-qubit-off-intersection-zero} and
\eqref{eq:integrated-qubit-on-intersection-character} into
Eq.~\eqref{eq:integrated-qubit-accept-projector}.  Character orthogonality on
$N(A,C)$ gives Eq.~\eqref{eq:integrated-qubit-score-spectrum}.

If the score is one, then $\kappa(A,C)=m$, so $C=AJ_m$.  In this case
$B_C=J_m^TAJ_m$ and $\delta_{A,C}=0$, and compatibility forces
$\gamma=J_m^Tb$.  Conversely, the true queried label plainly has score one.
If the label is false and $\kappa(A,C)<m$, every nonzero score is at most
$2^{-1}$; if $C=AJ_m$ but $\gamma\ne J_m^Tb$, the score is zero.
\end{proof}

The verifier score is therefore defined intrinsically.  The common operator
space $N(A,C)$ supplies the rank penalty, and the corrected character
$\gamma+J_m^Tb+\delta_{A,C}$ supplies the compatibility test.  The chosen
quadratic phase sections enter only through this correction and do not define
a different operational task.

\begin{remark}[Phase-section dependence in characteristic two]
\label{rem:integrated-qubit-phase-section-dependent-character}
The vector $b$ specifies the eigenvalue character relative to the
phase section $u\mapsto\mathsf W_A(u)$ determined by $A$.  Consequently, when two symmetric-matrix labels are compared on an intersection, equality of the vector labels alone
is not the intrinsic compatibility condition: the correction
$\delta_{A,C}$ in Eq.~\eqref{eq:integrated-qubit-phase-correction} must also be
included.  This correction vanishes when the announced queried family equals
the true queried family.  The later all-prime compatible exact-label-list formulation must
therefore either retain this correction or use the additive enhanced
coordinates of Eq.~\eqref{eq:integrated-qubit-additive-coordinates}.
\end{remark}

The graph-basis interpretation and the fixed-diagonal threshold are treated after the converse and direct arguments in Section~\ref{sec:fixed-diagonal-graph-basis-submodels}.

\section{Finite-copy analysis of the refined frame-assisted problem}
\label{sec:partial-learning}

We now express the reference-relative tasks of Subsection~\ref{subsec:main-results-frame-model} in coordinates adapted to the announced frame.  The announced frame and the
queried indices are fixed before measurement.  Sending that frame to the
standard vertical frame represents the first $m$ reference-dual stabilizer
directions by the columns of $(J_m;AJ_m)$ and represents their restricted
character by the associated coordinate restriction.  Proposition~\ref{prop:frame-conditioned-coordinate-reduction} both makes the
transformed state uniform over $\{\ket{\psi_{A,b}}\}$ and identifies the
coordinate optima below with the operational frame-assisted optima.  Thus the
canonical query is a coordinate realization of the externally fixed query, not
a query chosen from the learner's data.

\subsection{Coordinate formulation and notation}
\label{subsec:finite-copy-coordinate-notation}

For exact identification, conditioning on a partial label averages the
complete-label tensor powers over the corresponding constant fiber of $q_m$.
For convenient reference, Table~\ref{tab:basic-state-game-notation}
summarizes the notation used in the odd-prime realization of the fixed-query
game.  The characteristic-two operational objects are defined in
Section~\ref{sec:integrated-qubit-foundations}; the all-prime theorems use the
quotient labels rather than the odd-prime phase convention recorded in this
table.
\begin{table}[t]
\centering
\caption{Notation for the odd-prime realization of the fixed-query partial-learning game.}
\label{tab:basic-state-game-notation}
\small
\renewcommand{\arraystretch}{1.13}
\begin{tabularx}{\textwidth}{@{}l l Y@{}}
\toprule
Symbol & Domain or definition & Role \\
\midrule
$p$ & odd prime & Local qudit dimension and cardinality of the base field. \\
$\F_p$ & field with $p$ elements & All vector and matrix operations in the auxiliary coordinate results are over $\F_p$. \\
$n$ & integer, $n\geq1$ & Number of qudits and ambient vector-space dimension. \\
$m$ & $1\leq m\leq n$ & Dimension of the prescribed reference-dual stabilizer family; $m=n$ is the complete endpoint. \\
$k$ & integer, $k\geq0$ & Number of learner copies; the verifier's independent test copy is not included. \\
$\omega$ & $e^{2\pi i/p}$ & Basic additive-character phase. \\
$A$ & $A\in\Sym_n(\F_p)$ & Symmetric matrix specifying the normal-form Lagrangian $L_A$. \\
$b$ & $b\in\F_p^n$ & Parameter specifying the eigenvalue character $u\mapsto\omega^{-b^Tu}$. \\
$\ket{\psi_{A,b}}$, $\rho_{A,b}$ & $\mathcal H_n$, $\ket{\psi_{A,b}}\!\bra{\psi_{A,b}}$ & coordinate stabilizer state and its density operator. \\
$J_m$ & $\binom{I_m}{0}\in\F_p^{n\times m}$ & Coordinate inclusion for the first $m$ announced reference directions after standardization. \\
$\mathcal C_{n,m}$ & $\{C:J_m^TC=C^TJ_m\}$ & Admissible partial symmetric-matrix columns. \\
$(C,\gamma)$ & $\mathcal C_{n,m}\times\F_p^m$ & Partial label announced by the learner. \\
$q_m(A,b)$ & $(AJ_m,-J_m^Tb)$ & True partial-label map. \\
$\Theta_{n,m,p}^{\mathrm{part}}$ & $\mathcal C_{n,m}\times\F_p^m$ & Learner output alphabet: an admissible partial symmetric-matrix column and an announced character. \\
$\Pi_{C,\gamma}^{(m)}$ & joint-character projector & Verifier's accept projector for the announcement $(C,\gamma)$. \\
$s_m$ & $\Tr[\Pi_{C,\gamma}^{(m)}\rho_{A,b}]$ & Conditional verification score; for $m<n$ it is not fidelity to a unique output state. \\
$S_{\mathrm{ver}}^{(k),*}$ & optimum over collective POVMs & Optimal average partial-verification score. \\
\bottomrule
\end{tabularx}
\end{table}
\subsection{Canonical query and partial labels}
\label{subsec:canonical-partial-query}

Fix an integer $m$ with $1\leq m\leq n$ and define the canonical inclusion
\begin{equation}
 J_m:=\begin{pmatrix}I_m\\0\end{pmatrix}
 \in\F_p^{n\times m}.
 \label{eq:canonical-query-Jm}
\end{equation}
Its columns represent, in the standardized proof coordinates, the first $m$ announced reference directions.  The corresponding normal-form columns are the unique reference-dual stabilizers; $J_m$ is therefore a coordinate device, not a query selected after measurement.  For a true parameter
$(A,b)\in\Sym_n(\F_p)\times\F_p^n$, the corresponding restricted symmetric-matrix and
character labels are
\begin{equation}
 q_m(A,b):=(AJ_m,-J_m^Tb).
 \label{eq:true-partial-label-map}
\end{equation}
The symmetric-matrix component belongs to
\begin{equation}
 \mathcal C_{n,m}
 :=\{C\in\F_p^{n\times m}:J_m^TC=C^TJ_m\}.
 \label{eq:partial-symmetric-matrix-space}
\end{equation}
Indeed, $J_m^TAJ_m$ is symmetric.  Conversely, every $C$ satisfying
Eq.~\eqref{eq:partial-symmetric-matrix-space} specifies an isotropic family with phase
space labels given by the columns of $(J_m;C)$.  We therefore define the
partial label space
\begin{equation}
 \Theta_{n,m,p}^{\mathrm{part}}
 :=\mathcal C_{n,m}\times\F_p^m.
 \label{eq:partial-label-space}
\end{equation}

To record its size without reusing the subgroup symbol $H$, write an
admissible partial symmetric-matrix column in symmetric and rectangular blocks as
\begin{equation}
 C=\binom{C_{\rm sym}}{C_{\rm rect}},
 \quad C_{\rm sym}\in\Sym_m(\F_p),\quad
 C_{\rm rect}\in\F_p^{(n-m)\times m}.
 \label{eq:partial-symmetric-matrix-block-form}
\end{equation}
The block decomposition separates the symmetric and rectangular degrees of
freedom.  Their total dimension is
\begin{equation}
 \dim\mathcal C_{n,m}
 =\frac{m(m+1)}2+(n-m)m
 =nm-\frac{m(m-1)}2,
 \label{eq:partial-symmetric-matrix-dimension}
\end{equation}
Including the $m$ character coordinates gives the following cardinality for
the full partial-label space:
\begin{equation}
 M_{n,m,p}
 :=|\Theta_{n,m,p}^{\mathrm{part}}|
 =p^{(n+1)m-m(m-1)/2}.
 \label{eq:partial-label-cardinality}
\end{equation}

\begin{lemma}[Uniform partial labels and constant fibers]
\label{lem:uniform-partial-labels}
The map $q_m$ in Eq.~\eqref{eq:true-partial-label-map} is surjective.  Every
fiber has cardinality
\begin{equation}
 F_{n,m,p}
 =p^{(n-m)(n-m+1)/2+n-m}.
 \label{eq:partial-label-fiber-size}
\end{equation}
Consequently, uniform $(A,b)$ induces the uniform distribution on
$\Theta_{n,m,p}^{\mathrm{part}}$.
\end{lemma}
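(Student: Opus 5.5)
The map $q_m$ is a homomorphism of additive groups, so the plan is to show it is surjective and compute its kernel. Then every fiber is a coset of the kernel, all fibers have the same size, and uniformity of the induced distribution is immediate. Note that $q_m$ is $\F_p$-linear from $\Sym_n(\F_p)\times\F_p^n$ to $\mathcal C_{n,m}\times\F_p^m$. The matrix component is $A\mapsto AJ_m$, and the fact that $AJ_m$ lies in $\mathcal C_{n,m}$ was already observed after Eq.~\eqref{eq:partial-symmetric-matrix-space}. The character component is $b\mapsto -J_m^Tb$.

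For surjectivity, fix $(C,\gamma)$ and write $C$ in the block form of Eq.~\eqref{eq:partial-symmetric-matrix-block-form}. Define $A$ blockwise by putting $C_{\rm sym}$ in the upper-left $m\times m$ block, $C_{\rm rect}$ in the lower-left block, $C_{\rm rect}^T$ in the upper-right block, and $0$ in the lower-right $(n-m)\times(n-m)$ block. This $A$ is symmetric because $C_{\rm sym}$ is symmetric, and $AJ_m=C$. For the character part, take $b=(-\gamma,0)$, which gives $-J_m^Tb=\gamma$.

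For the kernel, $AJ_m=0$ means that the first $m$ columns of $A$ vanish. By symmetry the first $m$ rows vanish as well, so $A$ is determined by an arbitrary symmetric block in $\Sym_{n-m}(\F_p)$. This contributes $p^{(n-m)(n-m+1)/2}$ elements. The condition $J_m^Tb=0$ leaves the last $n-m$ entries of $b$ free, which contributes $p^{n-m}$. The product is $F_{n,m,p}$ as in Eq.~\eqref{eq:partial-label-fiber-size}.

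As a consistency check, compare with $|\Sym_n|\cdot p^n=p^{n(n+3)/2}$. The identity $M_{n,m,p}F_{n,m,p}=p^{n(n+3)/2}$ follows from Eqs.~\eqref{eq:partial-symmetric-matrix-dimension}--\eqref{eq:partial-label-cardinality} by elementary algebra.

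There is no real obstacle in this argument. The only point that needs care is the symmetric completion in the surjectivity step, since it is what shows that the image is all of $\mathcal C_{n,m}$. For $p=2$ the same argument runs verbatim with the quotient $G_{n,2}/H_{n,m,2}$ in place of the partial-label space.
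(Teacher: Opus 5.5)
Your proof is correct and follows essentially the paper's approach: a block decomposition of $A$ into $A_{11},A_{21},A_{22}$ and of $b$ into $b_1,b_2$, in which the visible blocks are the partial label and $(A_{22},b_2)$ are free. The only difference is presentational: you compute the kernel and use that fibers of a linear map are its cosets, whereas the paper reads off directly that each fiber is parametrized by the free choice of $(A_{22},b_2)$.
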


\begin{proof}
Decompose
\begin{equation}
 A=\begin{pmatrix}A_{11}&A_{21}^T\\A_{21}&A_{22}\end{pmatrix},
 \quad b=\binom{b_1}{b_2},
 \label{eq:partial-label-fiber-blocks}
\end{equation}
where $A_{11}\in\Sym_m(\F_p)$, $A_{21}\in\F_p^{(n-m)\times m}$,
$A_{22}\in\Sym_{n-m}(\F_p)$, $b_1\in\F_p^m$, and
$b_2\in\F_p^{n-m}$.  The partial-label map retains exactly the visible
matrix blocks and the first character block:
\begin{equation}
 q_m(A,b)=\left(\binom{A_{11}}{A_{21}},-b_1\right).
 \label{eq:partial-label-visible-blocks}
\end{equation}
Thus the abstract blocks in Eq.~\eqref{eq:partial-symmetric-matrix-block-form} are
$C_{\rm sym}=A_{11}$ and $C_{\rm rect}=A_{21}$.  Hence every admissible
partial label is attained, while $(A_{22},b_2)$ remain
arbitrary.  Their number is exactly Eq.~\eqref{eq:partial-label-fiber-size},
independently of the visible label.
\end{proof}

\subsection{Learner and operational verification score}
\label{subsec:partial-learning-protocol}

Nature samples $(A,b)$ uniformly and supplies the learner with $k$ copies of
\begin{equation}
 \rho_{A,b}:=\ket{\psi_{A,b}}\!\bra{\psi_{A,b}}.
 \label{eq:partial-input-state}
\end{equation}
The learner performs an arbitrary collective POVM
\begin{equation}
 \mathsf M
 =\{M_{C,\gamma}:(C,\gamma)\in
       \Theta_{n,m,p}^{\mathrm{part}}\}.
 \label{eq:partial-learning-POVM}
\end{equation}
on the $k$ copies, where
\begin{equation}
 M_{C,\gamma}\geq0,
 \quad
 \sum_{(C,\gamma)\in\Theta_{n,m,p}^{\mathrm{part}}}
 M_{C,\gamma}=I.
 \label{eq:partial-learning-POVM-normalization}
\end{equation}
Zero-payoff auxiliary outcomes may be added without changing the optimum.

For the operational test, a verifier receives one additional independent copy
of $\rho_{A,b}$.  Given the announcement $(C,\gamma)$, the verifier measures
the commuting Weyl family
\begin{equation}
 \{W(J_mx,Cx):x\in\F_p^m\},
 \label{eq:partial-announced-Weyl-family}
\end{equation}
and accepts only the announced character
$x\mapsto\omega^{\gamma^Tx}$.  Equivalently, the accept projector is
\begin{equation}
 \Pi_{C,\gamma}^{(m)}
 :=\frac{1}{p^m}\sum_{x\in\F_p^m}
   \omega^{-\gamma^Tx}W(J_mx,Cx).
 \label{eq:partial-accept-projector}
\end{equation}
The isotropy condition in Eq.~\eqref{eq:partial-symmetric-matrix-space} ensures that this
is a joint spectral projector.

The conditional score is
\begin{equation}
 s_m((A,b),(C,\gamma))
 :=\Tr\!\left[\Pi_{C,\gamma}^{(m)}\rho_{A,b}\right].
 \label{eq:partial-score-definition}
\end{equation}

For odd $p$, the score can be read directly from the common queried
stabilizers.  Put $N:=\ker(C-AJ_m)$ and $d:=\rank(C-AJ_m)$.
Then
\begin{equation}
 s_m((A,b),(C,\gamma))
 =\begin{cases}
 p^{-d},& (\gamma+J_m^Tb)^Tx=0\text{ for every }x\in N,\\
 0,&\text{otherwise}.
 \end{cases}
 \label{eq:odd-partial-score-spectrum}
\end{equation}
To verify this formula in the chosen Weyl convention, the quadratic-state
expansion gives, for $u,v\in\F_p^n$,
\begin{align}
 \langle\psi_{A,b}|W(u,v)|\psi_{A,b}\rangle
 &=\omega^{\frac12u^TAu-b^Tu-\frac12u^Tv}
 p^{-n}\sum_{z\in\F_p^n}\omega^{(v-Au)^Tz}\notag\\
 &=\begin{cases}
 \omega^{-b^Tu},&v=Au,\\
 0,&v\ne Au.
 \end{cases}
 \label{eq:odd-Weyl-expectation-bridge}
\end{align}
Substituting $u=J_mx$ and $v=Cx$ into the accept projector therefore yields
\begin{equation}
 s_m((A,b),(C,\gamma))
 =p^{-m}\sum_{x\in N}\omega^{-(\gamma+J_m^Tb)^Tx}.
 \label{eq:odd-partial-score-character-sum}
\end{equation}
Character orthogonality on the $(m-d)$-dimensional space $N$ proves
Eq.~\eqref{eq:odd-partial-score-spectrum}.  Thus the common-subspace
codimension determines the nonzero score, while character agreement decides
whether it is attained.  Score one occurs exactly at
$(C,\gamma)=(AJ_m,-J_m^Tb)$, consistently with
Eq.~\eqref{eq:true-partial-label-map}.  The binary counterpart is
Eq.~\eqref{eq:integrated-qubit-score-spectrum}, which retains the additional
phase-section correction.

For a POVM $\mathsf M$, the average score is
\begin{align}
 S_{\mathrm{ver}}^{(k)}(\mathsf M;n,m,p)
 :=&\ \frac{1}{p^{n(n+3)/2}}
 \sum_{A\in\Sym_n(\F_p)}\sum_{b\in\F_p^n}
 \notag\\[-2pt]
 &\quad\times
 \sum_{(C,\gamma)\in\Theta_{n,m,p}^{\mathrm{part}}}
 \Tr\!\left[M_{C,\gamma}\rho_{A,b}^{\otimes k}\right]
 s_m((A,b),(C,\gamma)).
 \label{eq:partial-average-score}
\end{align}
We denote its optimum over all collective POVMs by
\begin{equation}
 S_{\mathrm{ver}}^{(k),*}(n,m;p)
 :=\max_{\mathsf M}S_{\mathrm{ver}}^{(k)}(\mathsf M;n,m,p).
 \label{eq:partial-optimal-score}
\end{equation}
The letter $S$ is reserved for graded verification scores, while
$P_{\mathrm{id}}^{(k),*}$ denotes exact-identification probability.  The verifier's
test copy is independent of the learner's $k$ copies.  Equations
\eqref{eq:partial-score-definition}--\eqref{eq:partial-optimal-score} give the
odd-prime coordinate form of the intrinsic Born verification payoff introduced
in the fixed-frame formulation.

The optimization is a finite semidefinite program.  For an announcement
$(C,\gamma)$ define the partial score operator
\begin{equation}
 R_{C,\gamma}^{(k)}
 :=\frac{1}{p^{n(n+3)/2}}
 \sum_{A\in\Sym_n(\F_p)}\sum_{b\in\F_p^n}
 s_m((A,b),(C,\gamma))\rho_{A,b}^{\otimes k}.
 \label{eq:partial-score-operator}
\end{equation}
Then
\begin{equation}
 S_{\mathrm{ver}}^{(k)}(\mathsf M;n,m,p)
 =\sum_{(C,\gamma)\in\Theta_{n,m,p}^{\mathrm{part}}}
 \Tr[M_{C,\gamma}R_{C,\gamma}^{(k)}].
 \label{eq:partial-score-operator-SDP}
\end{equation}
The prior and the operational score are both included in
$R_{C,\gamma}^{(k)}$.  The graded-score comparison is developed later in Section~\ref{sec:integrated-all-prime-graded}.

\begin{figure}[t]
\centering
\resizebox{0.96\linewidth}{!}{%
\begin{tikzpicture}[
 font=\footnotesize, node distance=8mm and 9mm,
 source/.style={draw,rounded corners,align=center,minimum height=10mm,text width=38mm,fill=blue!5},
 learner/.style={draw,rounded corners,align=center,minimum height=10mm,text width=43mm,fill=green!7},
 verifier/.style={draw,rounded corners,align=center,minimum height=10mm,text width=46mm,fill=orange!10},
 result/.style={draw,rounded corners,align=center,minimum height=10mm,text width=46mm,fill=blue!5},
 arr/.style={-{Latex[length=2mm]},thick}]
\node[source] (nature) {Nature chooses $(A,b)$ uniformly\\and prepares $\rho_{A,b}$};
\node[learner,below left=of nature] (lin) {Learner receives $k$ input copies\\$\rho_{A,b}^{\otimes k}$};
\node[verifier,below right=of nature] (vin) {Verifier receives one independent\\test copy $\rho_{A,b}$};
\node[learner,below=of lin] (povm) {Arbitrary collective POVM\\$\{M_{C,\gamma}\}$};
\node[learner,below=of povm] (announce) {Partial announcement\\$(C,\gamma)$ for the fixed query};
\node[verifier,below=of vin] (test) {Characteristic-adapted commuting\\Pauli/Weyl joint-character test};
\node[result,below=of test] (accept) {Accept iff the observed joint character\\matches the announcement};
\draw[arr] (nature.south west)--(lin.north);
\draw[arr] (nature.south east)--(vin.north);
\draw[arr] (lin)--(povm); \draw[arr] (povm)--(announce);
\draw[arr] (vin)--(test); \draw[arr] (announce.east)--(test.west);
\draw[arr] (test)--(accept);
\end{tikzpicture}%
}
\caption{Operational structure of finite-copy partial learning for every prime,
shown after the announced reference frame has been sent to the standard
coordinate frame.  In these coordinates, $(C,\gamma)$ is the learner's
announcement for the first $m$ reference-dual stabilizer directions and their
character; the remaining $n-m$ directions are not requested.  Nature supplies
the learner's $k$ systems separately from the verifier's independent test
system, and the verifier performs the characteristic-adapted commuting
joint-character test.  At $m=n$ the acceptance probability equals full-label
fidelity within the complete coordinate ensemble, whereas for $m<n$ it is a
partial verification score.}
\label{fig:prediction-game-protocol}
\end{figure}

\subsection{Score achievable without using the input copies}
\label{subsec:partial-no-information-baseline}
\begin{proposition}[Input-independent score $p^{-m}$]
\label{prop:partial-no-information-baseline}
For every $k\geq0$,
\begin{equation}
 S_{\mathrm{ver}}^{(k),*}(n,m;p)\geq p^{-m}.
 \label{eq:partial-no-information-baseline}
\end{equation}
\end{proposition}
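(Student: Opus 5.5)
The plan is to exhibit an input-independent learner whose average score equals $p^{-m}$. Since $S_{\mathrm{ver}}^{(k),*}$ is a maximum over all collective POVMs, this lower-bounds it for every $k\ge0$. I will fix one admissible partial column $C_0\in\mathcal C_{n,m}$, the simplest choice being $C_0=0$ (so that $J_m^TC_0=0$ is symmetric). The learner then uses the trivial POVM that ignores the copies and outputs $(C_0,\gamma)$ with $\gamma$ uniform on $\F_p^m$. Concretely, $M_{C_0,\gamma}=p^{-m}I$ and all other operators vanish.

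For this POVM, $\Tr[M_{C_0,\gamma}\rho_{A,b}^{\otimes k}]=p^{-m}$ for every true label. Substituting into Eq.~\eqref{eq:partial-average-score}, the average score becomes
\[
\frac1{p^{n(n+3)/2}}\sum_{A,b}p^{-m}\sum_{\gamma\in\F_p^m}\Tr\!\left[\Pi_{C_0,\gamma}^{(m)}\rho_{A,b}\right].
\]
The projectors $\Pi_{C_0,\gamma}^{(m)}$, $\gamma\in\F_p^m$, are the joint spectral projectors of the commuting family \eqref{eq:partial-announced-Weyl-family}, so they resolve the identity. This can be checked directly from Eq.~\eqref{eq:partial-accept-projector}: summing over $\gamma$ leaves only the term $x=0$, which is $W(0,0)=I$. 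Hence the inner sum equals $\Tr\rho_{A,b}=1$ for every $(A,b)$, and the average score is exactly $p^{-m}$.

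In characteristic two the same argument applies with $\Pi_{C_0,\gamma}^{(m,2)}$, using the resolution of the identity in Eq.~\eqref{eq:integrated-qubit-projector-resolution}. No step here is a real obstacle. The only point needing care is that the announced family is admissible, which $C_0=0$ (or $C_0=AJ_m$ for any fixed $A$) guarantees, so that the $\gamma$-indexed projectors form a complete orthogonal family.
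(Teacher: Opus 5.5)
Your proposal is correct and follows the paper's proof: the same input-independent POVM $M_{C_0,\gamma}=p^{-m}I$ (zero for $C\ne C_0$), the statewise score $p^{-m}\sum_{\gamma}\Tr[\Pi_{C_0,\gamma}^{(m)}\rho_{A,b}]=p^{-m}$ via the resolution of the identity, and the same remark for $p=2$ using the binary projectors. Your explicit check that summing Eq.~\eqref{eq:partial-accept-projector} over $\gamma$ leaves only the $x=0$ term is a valid direct verification of that resolution.
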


\begin{proof}
Fix $C_0\in\mathcal C_{n,m}$.  On the learner's $k$-copy Hilbert space,
define measurement operators indexed by
$\Theta_{n,m,p}^{\mathrm{part}}=\mathcal C_{n,m}\times\F_p^m$ by
\begin{equation}
 M_{C,\gamma}^{(0)}
 :=
 \begin{cases}
 p^{-m}I,&C=C_0,\\
 0,&C\ne C_0.
 \end{cases}.
 \label{eq:partial-baseline-explicit-POVM}
\end{equation}
Every operator in Eq.~\eqref{eq:partial-baseline-explicit-POVM} is positive.
The sum over all outcomes is normalized because only the fixed matrix label
$C_0$ contributes:
\begin{equation}
 \sum_{C\in\mathcal C_{n,m}}\sum_{\gamma\in\F_p^m}
 M_{C,\gamma}^{(0)}
 =\sum_{\gamma\in\F_p^m}p^{-m}I=I.
 \label{eq:partial-baseline-POVM-normalization}
\end{equation}
Thus Eq.~\eqref{eq:partial-baseline-explicit-POVM} defines a feasible learner POVM.  
Operationally, it ignores the learner's input systems, fixes the matrix label  
$C_0$, and chooses the character label $\gamma$ uniformly.  We now verify that  
its graded score is $p^{-m}$ for every true state, rather than only after  
averaging over the prior.  For a fixed true state $\rho_{A,b}$, the conditional  
graded score is  
\begin{align}
 &\sum_{C\in\mathcal C_{n,m}}\sum_{\gamma\in\F_p^m}
 \Tr\!\left[M_{C,\gamma}^{(0)}\rho_{A,b}^{\otimes k}\right]
 \Tr\!\left[\Pi_{C,\gamma}^{(m)}\rho_{A,b}\right]
 =p^{-m}\sum_{\gamma\in\F_p^m}
 \Tr\!\left[\Pi_{C_0,\gamma}^{(m)}\rho_{A,b}\right]\notag\\
 &\quad=p^{-m}\Tr\!\left[
 \left(\sum_{\gamma\in\F_p^m}\Pi_{C_0,\gamma}^{(m)}\right)
 \rho_{A,b}\right]
 =p^{-m}.
 \label{eq:partial-baseline-average}
\end{align}
The second equality uses
$\Tr[M_{C_0,\gamma}^{(0)}\rho_{A,b}^{\otimes k}]=p^{-m}$, and the last equality
uses the joint-projector resolution of the identity and
$\Tr\rho_{A,b}=1$.  This equality holds for every $(A,b)$, so averaging over
the uniform prior gives the feasible value $p^{-m}$.  Maximizing over all
learner POVMs proves Eq.~\eqref{eq:partial-no-information-baseline}.
For $p=2$, the same input-independent construction uses the binary
joint-character projectors $\Pi_{C,\gamma}^{(m,2)}$ from
Eq.~\eqref{eq:integrated-qubit-accept-projector}; their resolution of the
identity in Eq.~\eqref{eq:integrated-qubit-projector-resolution} gives the
same value $2^{-m}$.
\end{proof}

\subsection{Fixed query subspaces and canonical coordinates}
\label{subsec:query-coordinate-equivalence}

For the odd-prime Weyl realization, the coordinate equivalence can be written explicitly.  The corresponding all-prime
operational conclusion is formulated intrinsically in
Proposition~\ref{prop:frame-conditioned-coordinate-reduction}; the phase-adjusted binary
Pauli realization is not obtained by reusing the Weyl phase convention below.
Within the odd-prime realization, the use of the first $m$ standard generators
is only a coordinate choice.  Let $J\in\F_p^{n\times m}$ have full column rank
and define
\begin{equation}
 \mathcal C_J:=\{C\in\F_p^{n\times m}:J^TC=C^TJ\}.
 \label{eq:general-query-symmetric-matrix-space}
\end{equation}
The associated true label is $(AJ,-J^Tb)$, and the verifier tests the family
$(J;C)$.  Write
\begin{equation}
 \Theta_J:=\mathcal C_J\times\F_p^m,
 \quad
 \Theta_{\rm can}:=\mathcal C_{n,m}\times\F_p^m.
 \label{eq:general-and-canonical-query-label-spaces}
\end{equation}
For $(C,\gamma)\in\Theta_J$, let $\Pi_{C,\gamma}^{(J)}$ denote the joint-character
projector for the Weyl family $\{W(Jx,Cx):x\in\F_p^m\}$.  For $y\in\Theta_J$,
let $\sigma_y^{(J)}$ be the $k$-copy state obtained by averaging
$\rho_{A,b}^{\otimes k}$ over the complete coordinate labels $(A,b)$ satisfying
$(AJ,-J^Tb)=y$.  The superscript $(J_m)$ denotes the corresponding canonical
objects.

\subsubsection{Label, state, and measurement correspondences}
\label{subsubsec:fixed-query-correspondences}

\begin{proposition}[Explicit equivalence of fixed full-rank queries]
\label{prop:fixed-query-equivalence}
Let $J\in\F_p^{n\times m}$ have full column rank.  Choose
$R\in\operatorname{GL}(n,\F_p)$ such that $J=RJ_m$, and define
\begin{equation}
 T_R:\Theta_J\longrightarrow\Theta_{\rm can},
 \quad
 T_R(C,\gamma):=(R^TC,\gamma).
 \label{eq:fixed-query-label-map}
\end{equation}
Then $T_R$ is a bijection with inverse
\begin{equation}
 T_R^{-1}(C',\gamma)=(R^{-T}C',\gamma).
 \label{eq:fixed-query-label-map-inverse}
\end{equation}
The change of query frame induces a corresponding transformation of every
complete coordinate label in the transverse model.  For $(A,b)$, denote the transformed label by $(A',b')$, where
\begin{equation}
 A':=R^TAR,
 \quad b':=R^Tb.
 \label{eq:fixed-query-transformed-complete-label}
\end{equation}
The true queried labels obey
\begin{equation}
 T_R(AJ,-J^Tb)=(A'J_m,-J_m^Tb').
 \label{eq:fixed-query-true-label-correspondence}
\end{equation}
The same change of frame is implemented quantumly by the computational-basis
unitary $U_R\ket{x}:=\ket{R^{-1}x}$.  It intertwines both the verifier
projectors and the conditioned learner states.  For
$y=(C,\gamma)\in\Theta_J$, the two correspondence relations are
\begin{align}
 U_R\Pi_{C,\gamma}^{(J)}U_R^\dagger
 =\Pi_{T_R(y)}^{(J_m)},~
 U_R^{\otimes k}\sigma_y^{(J)}U_R^{\dagger\otimes k}
 =\sigma_{T_R(y)}^{(J_m)}.
 \label{eq:fixed-query-projector-correspondence}
\end{align}
Consequently, conjugation by $U_R^{\otimes k}$ together with the label map
$T_R$ gives a bijection between the feasible learner POVMs for the $J$-query
problem and those for the canonical $J_m$-query problem.  It preserves every
exact-identification term and every graded-score term.  Therefore the optimal
exact-identification probability and the optimal graded-verification score for
query $J$ equal, respectively,
$P_{\mathrm{id}}^{(k),*}(n,m;p)$ and $S_{\mathrm{ver}}^{(k),*}(n,m;p)$.
\end{proposition}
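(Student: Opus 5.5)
The proof is a direct verification in the odd-prime Weyl convention, carried out in four steps: the label map, the action of $U_R$ on the states, its action on the Weyl operators, and finally the transport of POVMs.

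\emph{Label map.} Since $J=RJ_m$, we have $J_m^T(R^TC)=J^TC$ and $(R^TC)^TJ_m=C^TRJ_m=C^TJ$. Hence $C\in\mathcal C_J$ if and only if $R^TC\in\mathcal C_{n,m}$. So $T_R$ is well defined, and its inverse is $(C',\gamma)\mapsto(R^{-T}C',\gamma)$. For the true labels, $R^TAJ=R^TARJ_m=A'J_m$ and $J^Tb=J_m^TR^Tb=J_m^Tb'$, which gives Eq.~\eqref{eq:fixed-query-true-label-correspondence}. The map $(A,b)\mapsto(R^TAR,R^Tb)$ is a bijection of $\Sym_n(\F_p)\times\F_p^n$. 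Therefore it maps the fiber over $y$ in the $J$-problem bijectively onto the fiber over $T_R(y)$ in the canonical problem, and both fibers have the same size.

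\emph{States.} Substituting $x\mapsto Rx$ in Eq.~\eqref{eq:odd-quadratic-state} gives
\[
U_R\ket{\psi_{A,b}}=p^{-n/2}\sum_x\omega^{\frac12(Rx)^TA(Rx)+b^TRx}\ket{x}=\ket{\psi_{A',b'}}.
\]
Averaging $\rho_{A,b}^{\otimes k}$ over a fiber and using the fiber bijection then yields the second identity in Eq.~\eqref{eq:fixed-query-projector-correspondence}.

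\emph{Weyl operators.} One checks that $U_RX(u)U_R^\dagger=X(R^{-1}u)$ and $U_RZ(v)U_R^\dagger=Z(R^Tv)$. The phase $\omega^{-\frac12u^Tv}$ is invariant because $(R^{-1}u)^T(R^Tv)=u^Tv$. Hence $U_RW(u,v)U_R^\dagger=W(R^{-1}u,R^Tv)$. Taking $u=Jx$ and $v=Cx$ gives $W(J_mx,R^TCx)$. Inserting this into the character sum defining the projector, which is the analogue of Eq.~\eqref{eq:partial-accept-projector}, shows $U_R\Pi^{(J)}_{C,\gamma}U_R^\dagger=\Pi^{(J_m)}_{T_R(C,\gamma)}$, the first identity.

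\emph{POVMs and payoffs.} Given a POVM $\{M_y\}$ for the $J$-problem, define $M'_{T_R(y)}:=U_R^{\otimes k}M_yU_R^{\dagger\otimes k}$. This is a canonical POVM, and conjugation by $U_R^{\dagger\otimes k}$ gives the inverse, so the correspondence is a bijection. The exact-identification payoff is a uniform average of $\Tr[M_y\sigma_y^{(J)}]$ over the uniform partial labels of Lemma~\ref{lem:uniform-partial-labels}, and each such term is preserved. For the graded payoff, write the score as
\[
\Tr[M_y\rho_{A,b}^{\otimes k}]\,\Tr[\Pi^{(J)}_y\rho_{A,b}].
\]
Both factors are preserved under the simultaneous change of $(A,b)$ to $(A',b')$ and of $y$ to $T_R(y)$. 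Summing over the uniform prior, which is invariant under this bijection, gives equal values. The two optima therefore coincide with $P_{\mathrm{id}}^{(k),*}$ and $S_{\mathrm{ver}}^{(k),*}$.

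The only step I expect to need real care is the Weyl conjugation, specifically getting the direction of $R^{-1}$ versus $R^T$ right and confirming that the half-phase is invariant. Everything else is bookkeeping.
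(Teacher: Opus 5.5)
Your proposal is correct and follows essentially the same route as the paper: the admissibility check for $T_R$ using $J=RJ_m$, the substitution $x\mapsto Rx$ giving $U_R\ket{\psi_{A,b}}=\ket{\psi_{R^TAR,R^Tb}}$, the conjugation $U_RW(u,v)U_R^\dagger=W(R^{-1}u,R^Tv)$ with the invariant half-phase, the fiber bijection for the conditioned states, and the transport of POVMs by $U_R^{\otimes k}$, which preserves each exact and graded term. The only difference is that you check admissibility with a single ``if and only if'', where the paper checks the forward and inverse maps separately; this is a slightly more compact version of the same step.
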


\begin{proof}
First verify the domains in Eqs.~\eqref{eq:fixed-query-label-map} and
\eqref{eq:fixed-query-label-map-inverse}.  If $C\in\mathcal C_J$, then
\begin{equation}
 J_m^T(R^TC)=J^TC=C^TJ=(R^TC)^TJ_m,
 \label{eq:fixed-query-forward-admissibility}
\end{equation}
so $R^TC\in\mathcal C_{n,m}$.  The inverse transformation also preserves
admissibility.  Specifically, if $C'\in\mathcal C_{n,m}$, then
\begin{equation}
 J^T(R^{-T}C')=J_m^TC'=C'^TJ_m=(R^{-T}C')^TJ,
 \label{eq:fixed-query-inverse-admissibility}
\end{equation}
so $R^{-T}C'\in\mathcal C_J$.  The two displayed maps are inverse because
$R^{-T}R^T=I$.  This proves the asserted bijection, including its domain,
codomain, forward map, and inverse.

Applying $U_R\ket{x}=\ket{R^{-1}x}$ and changing the summation variable
from $x$ to $Rx$ gives
\begin{equation}
 U_R\ket{\psi_{A,b}}=\ket{\psi_{R^TAR,R^Tb}}
 =\ket{\psi_{A',b'}}.
 \label{eq:fixed-query-state-correspondence}
\end{equation}
Since $J=RJ_m$, Eq.~\eqref{eq:fixed-query-transformed-complete-label} gives
\begin{equation}
 R^TAJ=R^TARJ_m=A'J_m,
 \quad
 J^Tb=J_m^TR^Tb=J_m^Tb',
 \label{eq:fixed-query-true-coordinate-calculation}
\end{equation}
which proves Eq.~\eqref{eq:fixed-query-true-label-correspondence}.

The action of $U_R$ on computational-basis states determines its conjugation
action on the shift and phase operators.  For all $u,v\in\F_p^n$, one has
\begin{align}
 U_RX(u)U_R^\dagger=X(R^{-1}u),~
 U_RZ(v)U_R^\dagger=Z(R^Tv).
 \label{eq:fixed-query-Z-conjugation}
\end{align}
Moreover $(R^{-1}u)^T(R^Tv)=u^Tv$, so the phase
$\omega^{-u^Tv/2}$ in the Weyl convention is preserved.  Combining this phase
invariance with the two conjugation identities gives the Weyl-operator
correspondence
\begin{equation}
 U_RW(u,v)U_R^\dagger=W(R^{-1}u,R^Tv).
 \label{eq:fixed-query-general-Weyl-conjugation}
\end{equation}
For $u=Jx=RJ_mx$, $v=Cx$, and $C'=R^TC$, this becomes
\begin{equation}
 U_RW(Jx,Cx)U_R^\dagger=W(J_mx,C'x).
 \label{eq:query-coordinate-Weyl-family}
\end{equation}
Substituting Eq.~\eqref{eq:query-coordinate-Weyl-family} into the character-sum
definition of the joint projector proves
Eq.~\eqref{eq:fixed-query-projector-correspondence}.  The transformation
$(A,b)\mapsto(A',b')$ is a bijection of the uniform complete-label space, and
Eq.~\eqref{eq:fixed-query-true-label-correspondence} maps the fiber over
$y\in\Theta_J$ bijectively onto the fiber over $T_R(y)$.  Conjugating each term
in the fiber average by $U_R^{\otimes k}$ therefore proves
Eq.~\eqref{eq:fixed-query-projector-correspondence}.

Finally, for a $J$-query POVM $\{M_y:y\in\Theta_J\}$ define the canonical POVM
by
\begin{equation}
 \widetilde M_{T_R(y)}
 :=U_R^{\otimes k}M_yU_R^{\dagger\otimes k}.
 \label{eq:fixed-query-POVM-correspondence}
\end{equation}
Positivity and normalization are preserved, and the inverse construction uses
$T_R^{-1}$ and $U_R^{\dagger\otimes k}$.  Hence
Eq.~\eqref{eq:fixed-query-POVM-correspondence} is a bijection of feasible POVMs.
Equation~\eqref{eq:fixed-query-state-correspondence} preserves the
probability of each relabeled learner outcome.  Equation
\eqref{eq:fixed-query-projector-correspondence} preserves the verifier's
conditional acceptance probability.  Thus exact success and graded score are
preserved term by term, and taking the two optima proves the final assertion.
\end{proof}

The preceding equivalence also completes the connection with the announced
reference frame.  The opening reduction of this section treats that frame by a
publicly specified symplectic coordinate change, while
Proposition~\ref{prop:fixed-query-equivalence} shows, within the odd-prime
realization, that replacing the canonical inclusion $J_m$ by any fixed
full-rank query preserves every exact and graded term.  Together with
Proposition~\ref{prop:frame-conditioned-coordinate-reduction}, this completes
the passage from the intrinsic frame-relative task to the canonical formulas
used below.
\subsection{The full-label fidelity endpoint \texorpdfstring{$m=n$}{m=n}}
\label{subsec:complete-endpoint}
At $m=n$, the queried coordinate label becomes complete within the ensemble
conditioned on the announced frame.  This is still not the unrestricted
stabilizer ensemble of Eqs.~\eqref{eq:unrestricted-stabilizer-wc-success} and~\eqref{eq:unrestricted-stabilizer-av-success}, which is
used separately as the direct benchmark.  Exact identification tests equality
of complete coordinate labels, whereas the graded payoff becomes the squared
overlap with the announced coordinate state; accordingly, the two finite
optima need not coincide.  The characteristic-specific score formulas established above ensure only
that score one is attained exactly by the correct label.

\begin{definition}[Full-label fidelity learning]
At the endpoint $m=n$, the learner outputs a complete coordinate label
$(A',b')$.  Against the true label $(A,b)$, the score is
\begin{equation}
 \left|\braket{\psi_{A',b'}}{\psi_{A,b}}\right|^2.
 \label{eq:full-label-fidelity-score}
\end{equation}
We call this the full-label fidelity-learning problem, and denote its optimal
average score by $P_{\mathrm{succ}}^{(k),*}(n,p)$.  Despite the letter $P$,
this endpoint notation denotes a graded fidelity score, not an
exact-identification probability.
\end{definition}
\begin{corollary}[Recovery of full-label fidelity learning]
\label{cor:complete-fidelity-endpoint}
If $m=n$, then $J_n=I_n$,
$\mathcal C_{n,n}=\Sym_n(\F_p)$, and an announcement can be written uniquely
as $(C,\gamma)=(A',-b')$.  Under this identification, the verifier score
coincides exactly with the fidelity between the two complete-label states:
\begin{equation}
 s_n((A,b),(A',-b'))
 =\left|\braket{\psi_{A',b'}}{\psi_{A,b}}\right|^2.
 \label{eq:partial-score-complete-fidelity}
\end{equation}
For odd $p$, this identification is the symmetric-matrix coordinate statement
shown above.  For $p=2$, the same conclusion uses the enhanced binary
complete-label coordinates and the binary rank-one projector construction;
it is not obtained by applying the odd-prime Weyl phase formula verbatim.
In both cases the unreported subgroup is trivial when $m=n$, so the quotient
is the full label group.  The announcement is therefore a complete label,
and its accept projector is the corresponding pure-state projector.

The two optimization problems therefore have the same objective on the same
complete-label output alphabet.  Their optimal values are equal:
\begin{equation}
 S_{\mathrm{ver}}^{(k),*}(n,n;p)
 =P_{\mathrm{succ}}^{(k),*}(n,p),
 \label{eq:partial-complete-optimal-equality}
\end{equation}
where the right-hand side is the full-label fidelity-learning optimum.
\end{corollary}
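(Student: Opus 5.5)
The plan is to prove Corollary~\ref{cor:complete-fidelity-endpoint} by specializing the partial-label machinery to $m=n$ and comparing the two optimization problems term by term.

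\textbf{Step 1: the label alphabet.} When $m=n$ we have $J_n=I_n$. The admissibility condition in Eq.~\eqref{eq:partial-symmetric-matrix-space} then reads $C=C^T$, so $\mathcal C_{n,n}=\Sym_n(\F_p)$. The map $(A',b')\mapsto(A',-b')$ is a bijection from complete labels onto $\Theta^{\rm part}_{n,n,p}$. For $p=2$, the unreported subgroup $H_{n,n,2}$ of Eq.~\eqref{eq:integrated-qubit-unreported-subgroup} is trivial, so $Y_{n,n,2}=G_{n,2}$ is already the full enhanced label group.

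\textbf{Step 2: the score identity for odd $p$.} I would compare Eq.~\eqref{eq:odd-partial-score-spectrum} with Proposition~\ref{prop:rank-dependent-overlap}.
\begin{itemize}
\item With $C=A'$, we get $N=\ker(A'-A)$ and $d=\rank(A-A')$.
\item The compatibility condition in the score formula reads $(-b'+b)^Tx=0$ for all $x\in N$.
\item Since $A-A'$ is symmetric, Eq.~\eqref{eq:image-kernel-orthogonal} makes this equivalent to $b-b'\in\im(A-A')$.
\end{itemize}
Both sides therefore equal $p^{-d}$ under the same condition and vanish otherwise, which gives Eq.~\eqref{eq:partial-score-complete-fidelity}.

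A more conceptual route is also available. By Lemma~\ref{lem:odd-eigenvalue-character}, $\Pi^{(n)}_{A',-b'}$ is the projector onto the joint eigenspace of $\{W(u,A'u)\}$ with character $\omega^{-b'^Tu}$. That eigenspace is one-dimensional and equals $\ket{\psi_{A',b'}}\!\bra{\psi_{A',b'}}$ by Lemma~\ref{lem:fixed-A-orthonormality}.

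\textbf{Step 3: the score identity for $p=2$.} I would use the projector route rather than the formula. Lemma~\ref{lem:integrated-qubit-announced-representation} shows that $\Pi^{(n,2)}_{C,\gamma}$ is a rank-one projector, since its $2^n$ characters resolve the identity on a $2^n$-dimensional space. With $C=A'$ we have $B_C=A'$, so $\mathsf W_C=\mathsf W_{A'}$. Lemma~\ref{lem:integrated-qubit-symmetric-matrix-representation} then identifies the eigenvector for character $(-1)^{\gamma^Tx}$ as $\ket{\psi^{(2)}_{A',\gamma}}$, and the sign identification is vacuous because $-b'=b'$ over $\F_2$. This gives the pure-state projector directly and avoids any phase correction.

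\textbf{Step 4: equality of the optima.} The label correspondence of Step~1 identifies POVMs on $\Theta^{\rm part}_{n,n,p}$ with POVMs on complete labels. By Steps~2 and~3, the objective functions of the two problems agree term by term under this correspondence. Their optima over the same feasible set are therefore equal, which is Eq.~\eqref{eq:partial-complete-optimal-equality}.

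The only delicate step is Step~3, because of the binary phase sections. Taking the rank-one projector route there sidesteps $\delta_{A,C}$ entirely.
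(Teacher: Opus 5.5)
Your proposal is correct and follows the paper's proof. At $m=n$ the accept projector for $(A',-b')$ is the rank-one projector onto $\ket{\psi_{A',b'}}$, Born's rule gives the fidelity, and the identical POVM sets and priors give equal optima; your odd-$p$ comparison of Eq.~\eqref{eq:odd-partial-score-spectrum} with Proposition~\ref{prop:rank-dependent-overlap} is an equivalent check of the same identity. In Step~3, the resolution of the identity alone yields rank one only once each $\Pi^{(n,2)}_{A',\gamma}$ is known to be nonzero, and your eigenvector identification (or the computation $\Tr\Pi^{(n,2)}_{A',\gamma}=1$) supplies exactly that.
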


\begin{proof}
For $m=n$, Eq.~\eqref{eq:partial-accept-projector} is the rank-one projector
onto $\ket{\psi_{A',b'}}$ when $(C,\gamma)=(A',-b')$.  Equation
\eqref{eq:partial-score-complete-fidelity} follows from Born's rule, and the
feasible measurements and priors are identical under this relabeling.
\end{proof}

\section{Finite-size exact identification}
\label{sec:common-finite-size-framework}
We now solve the finite exact-identification problem for the refined task.
The states conditioned on a partial label are mixed, so covariance alone does
not determine an optimal measurement.  After formulating the quotient
ensemble, we decompose the $k$-copy pure states into odd-prime or enhanced binary
moment sectors.  Averaging over the unreported coordinates leaves orthogonal
rank-one coherence blocks.  A common seed semidefinite program then exploits
these blocks to prove PGM optimality and obtain the exact finite-copy formula
used in the converse.

\subsection{Quotient formulation and proof architecture}
\label{subsec:finite-size-quotient-architecture}

We now introduce the canonical quotient notation used in the proof.  Let
$G_{n,p}$ be the additive complete-label group and $H_{n,m,p}$ the subgroup of
unreported coordinates.  For odd $p$,
\begin{equation}
 H_{n,m,p}:=\left\{\left(\begin{pmatrix}0&0\\0&F\end{pmatrix},
 \binom{0}{c}\right):F\in\Sym_{n-m}(\F_p),\ c\in\F_p^{n-m}\right\}.
 \label{eq:H}
\end{equation}
For $p=2$ its enhanced-coordinate version is
Eq.~\eqref{eq:integrated-qubit-unreported-subgroup}.  Put
$Y_{n,m,p}:=G_{n,p}/H_{n,m,p}$.  For $g\in G_{n,p}$, let $\rho_g$
denote the pure coordinate stabilizer state associated with the complete
additive label $g$.  Thus $\rho_g=\rho_{A,b}$ in odd characteristic and is
the state associated with $\ket{\psi_g^{(2)}}$ in characteristic two.  We have
\begin{equation}
 |Y_{n,m,p}|=p^{(n+1)m-m(m-1)/2}.
 \label{eq:Y-size}
\end{equation}
For $y\in Y_{n,m,p}$, define the conditioned state
\begin{equation}
 \sigma_y:=\frac1{|H_{n,m,p}|}\sum_{h\in H_{n,m,p}}
 \rho_{s_Y(y)+h}^{\otimes k},
 \label{eq:sigma-y}
\end{equation}
where $s_Y(y)$ is any quotient representative.  Also set
\begin{equation}
 \kappa_p^{\mathrm{GL}}:=\prod_{j=1}^{\infty}(1-p^{-j})>0.
 \label{eq:kappa-GL-product}
\end{equation}

The refined operational reduction is now complete.  The proof has three steps.  
First, the characteristic-specific moment maps identify the aggregate data on  
which the complete-label phases depend.  Second, averaging over the unreported  
subgroup determines which moment sectors retain mutual coherence and groups  
them into annihilator-coset blocks.  Third, a characteristic-independent seed  
semidefinite program optimizes those blocks.  The first two steps require  
separate odd-prime and binary realizations, whereas the final optimization is  
common.  The following table summarizes this division of roles.  All results  
in this section are finite-copy statements.  
\begin{center}
\small
\setlength{\tabcolsep}{7pt}
\renewcommand{\arraystretch}{1.22}
\begin{tabularx}{0.96\textwidth}{@{}>{\raggedright\arraybackslash}p{0.20\textwidth}YY@{}}
\toprule
\textbf{Layer}
& \textbf{Odd-prime realization}
& \textbf{Characteristic-two realization}\\
\midrule
\textbf{Additive parameters}
& $\Sym_n(\F_p)\times\F_p^n$
& $\mathbb Z_4^n\times\F_2^{\binom n2}$\\
\addlinespace[1pt]
\textbf{Moment data}
& Symmetric Gram moment and linear moment
& Mod-$4$ diagonal and mod-$2$ off-diagonal moments\\
\addlinespace[1pt]
\textbf{Fixed lower-coordinate blocks}
& $H^\perp$-cosets indexed by $(Q_{22},S_2)$
& Annihilator cosets of the enhanced moment group\\
\addlinespace[1pt]
\textbf{Common finite-size conclusion}
& \multicolumn{2}{>{\raggedright\arraybackslash}p{0.70\textwidth}@{}}{Fixed-diagonal block of the seeds, rank-one optimizer, exact coset formula, and PGM optimality}\\
\bottomrule
\end{tabularx}
\end{center}

\subsection{Odd-prime moment realization}
\label{sec:sectors}
For odd $p$, computational-basis tuples decompose into moment sectors, and averaging over the unreported subgroup preserves coherence exactly within the blocks identified below.  The optimization itself is postponed to
Section~\ref{sec:seed-sdp}; no success-probability bound is taken here.

\subsubsection{The quotient ensemble and the moment map}
The first task is to replace the computational-basis expansion, whose terms are indexed by $k$ vectors, by a decomposition indexed by the aggregate data on which the label-dependent phase actually depends.  This removes the irrelevant ordering information while retaining the complete character carried by each group orbit.  Recall the additive parameter group $G=\Sym_n(\F_p)\times\F_p^n$, the unreported
subgroup $H$ in Eq.~\eqref{eq:H}, the partial-label quotient $Y=G/H$, and the
conditioned states $\{\sigma_y:y\in Y\}$ in Eq.~\eqref{eq:sigma-y}.  We first
introduce coordinates that diagonalize the $G$-action.

\begin{definition}[Moment values, fibers, and normalized sector vectors]
An ordered computational-basis tuple in the $k$-copy expansion is an element
\begin{equation}
 \boldsymbol x=(x_1,\ldots,x_k)\in(\F_p^n)^k.
 \label{eq:ordered-basis-tuple-domain}
\end{equation}
Define its moment value by the map
\begin{equation}
 \Phi_k:(\F_p^n)^k\longrightarrow
 \Sym_n(\F_p)\times\F_p^n,
 \quad
 \Phi_k(\boldsymbol x)
 :=\left(\sum_{j=1}^k x_jx_j^T,\ \sum_{j=1}^k x_j\right).
 \label{eq:moment-map}
\end{equation}
For $z=(Q,S)\in G$, define the preimage of a moment value and its cardinality by
\begin{equation}
 \Phi_k^{-1}(z)
 :=\{\boldsymbol x\in(\F_p^n)^k:\Phi_k(\boldsymbol x)=z\},
 \quad
 N_z:=|\Phi_k^{-1}(z)|.
 \label{eq:moment-fiber-and-cardinality}
\end{equation}
We say that $z$ \emph{occurs at copy number $k$} when any, and hence all, of
the following equivalent conditions hold:
\begin{equation}
 z\in\operatorname{im}\Phi_k,
 \quad
 \Phi_k^{-1}(z)\ne\varnothing,
 \quad
 N_z>0.
 \label{eq:occurring-moment-equivalences}
\end{equation}
For every occurring moment value, put
\begin{equation}
 |v_z\rangle
 :=\sum_{\boldsymbol x\in\Phi_k^{-1}(z)}|\boldsymbol x\rangle,
 \quad
 |e_z\rangle:=N_z^{-1/2}|v_z\rangle.
 \label{eq:fiber-sector}
\end{equation}
The one-dimensional space $\operatorname{span}\{|e_z\rangle\}$ is the
\emph{moment sector at $z$}.  The common orbit support is
\begin{equation}
 \mathcal K_{k,n}
 :=\operatorname{span}\{|e_z\rangle:z\in\operatorname{im}\Phi_k\}.
 \label{eq:orbit-support}
\end{equation}
Thus the moment sectors in $\mathcal K_{k,n}$ are indexed exactly by the
moment values that occur at copy number $k$.
\end{definition}
Here $N_z$ is the multiplicity of one moment value in the ordered-tuple
expansion, while $|e_z\rangle$ is the normalized direction carrying its
character.  The vectors $|e_z\rangle$ are orthonormal because distinct fibers
of $\Phi_k$ are disjoint.

Use the character pairing
\begin{equation}
 \chi_{(Q,S)}(A,b):=\omega^{\frac12\Tr(AQ)+b^TS}.       \label{eq:pairing}.
\end{equation}
The factor $1/2$ compensates for the doubling of off-diagonal entries in the
trace pairing on symmetric matrices.

\begin{lemma}[Moment expansion of the orbit states]
\label{lem:orbit-moment-expansion}
For $g=(A,b)\in G$,
\begin{equation}
 |\Psi_g\rangle:=|\psi_g\rangle^{\otimes k}
 =p^{-nk/2}\sum_{z\in\operatorname{im}\Phi_k}
 \sqrt{N_z}\,\chi_z(g)|e_z\rangle.              \label{eq:orbit-sector-expansion}.
\end{equation}
\end{lemma}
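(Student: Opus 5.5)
The plan is to expand the tensor power directly in the computational basis and then regroup the terms by moment value. By Eq.~\eqref{eq:odd-quadratic-state},
\[
 |\psi_{A,b}\rangle^{\otimes k}
 =p^{-nk/2}\sum_{\boldsymbol x\in(\F_p^n)^k}
 \omega^{\sum_{j=1}^k\left(\frac12x_j^TAx_j+b^Tx_j\right)}|\boldsymbol x\rangle .
\]

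\emph{Step 1: rewrite the phase.} The exponent depends on $\boldsymbol x$ only through $\Phi_k(\boldsymbol x)$. Indeed, $x_j^TAx_j=\Tr(Ax_jx_j^T)$, so summing over $j$ gives
\[
 \sum_{j=1}^k\Bigl(\tfrac12x_j^TAx_j+b^Tx_j\Bigr)=\tfrac12\Tr(AQ)+b^TS,
 \qquad (Q,S)=\Phi_k(\boldsymbol x).
\]
By Eq.~\eqref{eq:pairing}, the corresponding phase is exactly $\chi_{\Phi_k(\boldsymbol x)}(A,b)$. Both $\tfrac12$ and the trace are computed in $\F_p$, which is legitimate because $p$ is odd (Remark~\ref{rem:no-Z2p}).

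\emph{Step 2: regroup.} The fibers $\Phi_k^{-1}(z)$, for $z\in\operatorname{im}\Phi_k$, partition $(\F_p^n)^k$. Splitting the sum along this partition gives
\[
 p^{-nk/2}\sum_{z\in\operatorname{im}\Phi_k}\chi_z(g)\sum_{\boldsymbol x\in\Phi_k^{-1}(z)}|\boldsymbol x\rangle
 =p^{-nk/2}\sum_{z\in\operatorname{im}\Phi_k}\chi_z(g)\,|v_z\rangle .
\]
Substituting $|v_z\rangle=\sqrt{N_z}\,|e_z\rangle$ from Eq.~\eqref{eq:fiber-sector} then yields Eq.~\eqref{eq:orbit-sector-expansion}.

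\emph{Consistency checks.} Normalization holds because the $|e_z\rangle$ are orthonormal and $\sum_z N_z=p^{nk}$. The pairing $\chi_z$ is well defined on $\Sym_n(\F_p)\times\F_p^n$ because $\Tr(AQ)$ is bilinear in $(A,Q)$.

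\emph{Main subtlety.} The only delicate point is the meaning of $\tfrac12\Tr(AQ)$: it must be evaluated in $\F_p$ with $\tfrac12$ the field inverse of $2$, and only afterwards exponentiated. With that convention there is no residual obstacle; the identity is a direct reorganization of the sum.
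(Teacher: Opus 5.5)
Your proposal is correct and follows the same route as the paper's proof. Both expand $|\psi_g\rangle^{\otimes k}$ in the computational basis, note that the phase of each tuple equals $\chi_{\Phi_k(\boldsymbol x)}(g)$, and regroup along the fibers $\Phi_k^{-1}(z)$ using $|v_z\rangle=\sqrt{N_z}\,|e_z\rangle$; you additionally spell out the identity $\sum_j x_j^TAx_j=\Tr(AQ)$, which the paper leaves implicit.
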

\begin{proof}
Expand $|\psi_g\rangle^{\otimes k}$ in the computational basis.  The phase
of an ordered basis tuple $(x_1,\ldots,x_k)$ depends on that tuple only through
$\Phi_k(x_1,\ldots,x_k)$.  Partitioning the basis tuples into the fibers in
Eq.~\eqref{eq:moment-fiber-and-cardinality} gives the stated expansion, with
coefficient $\sqrt{N_z}$ after normalization.
\end{proof}

\subsubsection{Moment coordinates associated with unreported parameters}
The moment expansion alone does not yet reflect the partial nature of the learning problem.  We must next determine which coherences survive when labels differing only in unreported coordinates are averaged together.  The answer is expressed by the restriction of a moment character to the unreported subgroup.  The terminology used below is relative to the requested partial label.  Write
the odd-prime symmetric-matrix parameters as
\begin{equation}
 A=\begin{pmatrix}A_{11}&A_{12}\\A_{12}^T&A_{22}\end{pmatrix},
 \quad b=\binom{b_1}{b_2}.
 \label{eq:finite-size-parameter-split}
\end{equation}
For the canonical query, the partial label records $(A_{11},A_{12}^T,b_1)$
and does not record $(A_{22},b_2)$.  The subgroup $H$ in Eq.~\eqref{eq:H} is
therefore the kernel of the partial-label homomorphism: its translations vary
only $(A_{22},b_2)$ and leave the requested label unchanged.

Under the character pairing in Eq.~\eqref{eq:pairing}, the parameter block
$(A_{22},b_2)$ pairs with the moment block $(Q_{22},S_2)$.  Consequently,
averaging an orbit state over $H$ preserves a matrix element between moment
values $z$ and $z'$ exactly when the characters $\chi_z$ and $\chi_{z'}$ have
the same restriction to $H$.  The next definition and lemma express this
condition as equality of the lower moment coordinates.  Let $\1_k\in\F_p^k$ denote the all-ones column vector.  All subsequent
statements use the coordinate map $r$, its fibers, and the explicit event
$(X_2X_2^T,X_2\1_k)=\zeta$; no information-theoretic notion of observability
is required.

\begin{definition}[Coordinate invariant of the $H^\perp$-cosets]
Let
\begin{equation}
 H^\perp:=\{z\in G:\chi_z(h')=1\text{ for every }h'\in H\}.
 \label{eq:odd-annihilator-definition}
\end{equation}
For
\begin{equation}
 Q=\begin{pmatrix}Q_{11}&Q_{12}\\Q_{12}^T&Q_{22}\end{pmatrix},
 \quad S=\binom{S_1}{S_2},
 \label{eq:moment-coordinate-split}
\end{equation}
define
\begin{equation}
 r:G\longrightarrow\Sym_{n-m}(\F_p)\times\F_p^{n-m},
 \quad r(Q,S):=(Q_{22},S_2).
 \label{eq:unreported-moment-early}
\end{equation}
We write $\zeta=r(z)$.  Thus $\zeta$ is the value of the explicit coordinate
map $r$, not an additional information-theoretic notion.
\end{definition}

\begin{lemma}[Coordinate description of the $H^\perp$-cosets]
\label{lem:unreported-moment-cosets}
The map $r$ in Eq.~\eqref{eq:unreported-moment-early} satisfies
\begin{equation}
 H^\perp=\ker r
 =\{(Q,S):Q_{22}=0,\ S_2=0\},
 \quad |H^\perp|=|G/H|=\lvert Y\rvert.
 \label{eq:H-perp}
\end{equation}
For $z,z'\in G$,
\begin{equation}
 z-z'\in H^\perp
 \quad\Longleftrightarrow\quad
 r(z)=r(z').
 \label{eq:coset-coordinate-equivalence}
\end{equation}
Hence, for every $\zeta=(\zeta_Q,\zeta_S)$ in the image of $r$, the associated
coset is the explicitly defined fiber
\begin{equation}
 \mathcal C_{\zeta}:=r^{-1}(\zeta)
 =\{(Q,S)\in G:Q_{22}=\zeta_Q,\ S_2=\zeta_S\}.
 \label{eq:coset-as-coordinate-fiber}
\end{equation}
These fibers are exactly the $H^\perp$-cosets in $G$.
\end{lemma}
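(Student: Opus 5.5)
The plan is to compute the annihilator directly from the character pairing in Eq.~\eqref{eq:pairing}, restricted to the translations that make up $H$. By Eq.~\eqref{eq:H}, a general element of $H$ is
\[
h'=\left(\begin{pmatrix}0&0\\0&F\end{pmatrix},\binom{0}{c}\right),\qquad F\in\Sym_{n-m}(\F_p),\ c\in\F_p^{n-m}.
\]
For $z=(Q,S)$, written in the block form of Eq.~\eqref{eq:moment-coordinate-split}, the trace only sees the lower-right block:
\[
\Tr\!\left(\begin{pmatrix}0&0\\0&F\end{pmatrix}Q\right)=\Tr(FQ_{22}).
\]
Hence
\[
\chi_z(h')=\omega^{\frac12\Tr(FQ_{22})+c^TS_2}.
\]

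The kernel computation follows from this formula.
\begin{itemize}
\item If $Q_{22}=0$ and $S_2=0$, then $\chi_z(h')=1$ for all $h'$. This gives $\ker r\subseteq H^\perp$.
\item For the converse, take $F=0$ and let $c$ range over all vectors. Triviality forces $c^TS_2=0$ for all $c$, so $S_2=0$.
\item Next take $c=0$. Triviality gives $\frac12\Tr(FQ_{22})=0$ for every symmetric $F$.
\item Choosing $F=E_{jj}$ gives $\frac12(Q_{22})_{jj}=0$. Choosing $F=E_{j\ell}+E_{\ell j}$ for $j\neq\ell$ gives $\frac12\cdot2(Q_{22})_{j\ell}=(Q_{22})_{j\ell}=0$, using the symmetry of $Q_{22}$.
\item Since $2$ is invertible for odd $p$, both conditions force $Q_{22}=0$.
\end{itemize}
Together these give $H^\perp=\ker r$. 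Equivalently, the pairing $(F,Q_{22})\mapsto\frac12\Tr(FQ_{22})$ is a nondegenerate pairing on $\Sym_{n-m}(\F_p)$.

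For the cardinality, $r$ is clearly surjective, since any lower blocks can be chosen freely. Therefore
\[
|\ker r|=\frac{|G|}{|\Sym_{n-m}(\F_p)\times\F_p^{n-m}|}=\frac{|G|}{|H|}=|G/H|,
\]
and this equals $|Y|$ by Eq.~\eqref{eq:Y-size}, or equivalently by Lemma~\ref{lem:uniform-partial-labels}.

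Finally, $r$ is additive, so $z-z'\in\ker r=H^\perp$ exactly when $r(z)=r(z')$. This proves Eq.~\eqref{eq:coset-coordinate-equivalence}. As a consequence, the $H^\perp$-cosets are precisely the fibers $r^{-1}(\zeta)$, each of which is nonempty because $r$ is surjective. This identifies them with the sets $\mathcal C_\zeta$ in Eq.~\eqref{eq:coset-as-coordinate-fiber}.

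The only step needing care is the nondegeneracy of the halved trace pairing on symmetric matrices, where the off-diagonal doubling must cancel against the factor $\tfrac12\in\F_p$. The choice of test matrices $E_{j\ell}+E_{\ell j}$ above handles this cancellation explicitly.
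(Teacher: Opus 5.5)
Your proof is correct and follows the same route as the paper: restrict the pairing $\chi_z$ to $H$, read off $\chi_{(Q,S)}(h')=\omega^{\frac12\Tr(FQ_{22})+c^TS_2}$, conclude $H^\perp=\ker r$, and obtain the coset description from linearity of $r$. The paper states the nondegeneracy and the cardinality claim $|H^\perp|=|Y|$ without elaboration, and your test matrices $E_{jj}$, $E_{j\ell}+E_{\ell j}$ and your count $|\ker r|=|G|/|H|$ via surjectivity of $r$ supply those details.
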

\begin{proof}
For an element $h'\in H$ with lower parameter blocks $(F,c)$,
\begin{equation}
 \chi_{(Q,S)}(h')
 =\omega^{\frac12\Tr(FQ_{22})+c^TS_2}.
 \label{eq:annihilator-lower-block-pairing}
\end{equation}
This character equals one for every symmetric $F$ and every $c$ if and only if
$Q_{22}=0$ and $S_2=0$.  This proves $H^\perp=\ker r$ and
Eq.~\eqref{eq:H-perp}.  Since $r$ is linear,
$z-z'\in\ker r$ if and only if $r(z)=r(z')$, proving
Eq.~\eqref{eq:coset-coordinate-equivalence}.  Equation
\eqref{eq:coset-as-coordinate-fiber} is then the coordinate description of
the corresponding coset.
\end{proof}

\begin{definition}[Coset vectors, total fiber size, and occurring moment values]
For $\zeta$ in the image of $r$, define
\begin{align}
 |w_{\zeta}\rangle
 &:=\sum_{z\in\mathcal C_{\zeta}\cap\operatorname{im}\Phi_k}
       \sqrt{N_z}|e_z\rangle,&
 |u_{\zeta}\rangle
 &:=\sum_{z\in\mathcal C_{\zeta}\cap\operatorname{im}\Phi_k}|e_z\rangle,
 \label{eq:w-u}\\
 T_{\zeta}
 &:=\sum_{z\in\mathcal C_{\zeta}}N_z,&
 d_{\zeta}
 &:=|\mathcal C_{\zeta}\cap\operatorname{im}\Phi_k|.
 \label{eq:early-Th-dh}
\end{align}
The integer $T_{\zeta}$ counts the ordered computational-basis tuples
$\boldsymbol x\in(\F_p^n)^k$ whose moment value lies in
$\mathcal C_{\zeta}$:
\begin{equation}
 T_{\zeta}
 =|\{\boldsymbol x\in(\F_p^n)^k:
       \Phi_k(\boldsymbol x)\in\mathcal C_{\zeta}\}|.
 \label{eq:T-zeta-counted-set}
\end{equation}
The integer $d_{\zeta}$ counts the full moment values in that coset that occur
at copy number $k$.  Thus $T_{\zeta}$ is the total tuple mass in one surviving
coherence block, whereas $d_{\zeta}$ is the number of distinct occurring
moment values above the same lower coordinate.

Because $r(z)=\zeta$ fixes $(Q_{22},S_2)$, the map
\begin{equation}
 (Q,S)\longmapsto(Q_{11},Q_{12},S_1).
 \label{eq:coset-to-remaining-coordinate-map}
\end{equation}
is injective on $\mathcal C_{\zeta}$ and identifies
$\mathcal C_{\zeta}\cap\operatorname{im}\Phi_k$ with the set of remaining
coordinate triples arising from some ordered basis tuple.  Moreover,
\begin{equation}
 z\longmapsto\operatorname{span}\{|e_z\rangle\}.
 \label{eq:occurring-moment-to-sector-map}
\end{equation}
is a bijection from
$\mathcal C_{\zeta}\cap\operatorname{im}\Phi_k$ onto the nonzero moment
sectors in this coset.  Thus $d_{\zeta}$ has the three equivalent, explicitly
related interpretations: occurring full moment values, occurring remaining
coordinate triples, and nonzero moment sectors.
\end{definition}

\subsubsection{Conditioned and average states in moment coordinates}
The preceding definitions convert subgroup averaging into equality of the lower moment coordinate $r(z)$.  Translating this character-theoretic condition into the density operators yields the ensemble used for minimum-error discrimination.  The next proposition gives the odd-prime block structure used in the common seed optimization.

\begin{proposition}[Block form of the exact-identification ensemble]
\label{prop:conditioned-average-block-form}
For the zero partial label,
\begin{equation}
 \sigma_0=p^{-nk}\sum_{\zeta}|w_{\zeta}\rangle\!\langle w_{\zeta}|.          \label{eq:sigma-zero-blocks}.
\end{equation}
The blocks indexed by distinct fibers of $r$ are mutually orthogonal.  The
uniform ensemble average is
\begin{equation}
 \overline\sigma:=\frac1{\lvert Y\rvert}\sum_{y\in Y}\sigma_y
 =p^{-nk}\sum_{z:N_z>0}N_z|e_z\rangle\!\langle e_z|,      \label{eq:average-state},
\end{equation}
and hence, on $\mathcal K_{k,n}$,
\begin{equation}
 \overline\sigma^{-1/2}
 =p^{nk/2}\sum_{z:N_z>0}N_z^{-1/2}|e_z\rangle\!\langle e_z|.
                                                        \label{eq:average-inverse}
\end{equation}
\end{proposition}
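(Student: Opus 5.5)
The argument will start from the moment expansion in Lemma~\ref{lem:orbit-moment-expansion} and compute the $H$-average entrywise in the orthonormal sector basis $\{|e_z\rangle\}$. For $h\in H$ and an arbitrary complete label $g$, multiplicativity of the pairing gives $\chi_z(g+h)=\chi_z(g)\chi_z(h)$. Hence
\[
\frac{1}{|H|}\sum_{h\in H}|\Psi_{g+h}\rangle\langle\Psi_{g+h}|
=p^{-nk}\sum_{z,z'}\sqrt{N_zN_{z'}}\,\chi_z(g)\overline{\chi_{z'}(g)}
\Bigl(\frac{1}{|H|}\sum_{h\in H}\chi_{z-z'}(h)\Bigr)|e_z\rangle\langle e_{z'}|.
\]
The inner average is the indicator of $z-z'\in H^\perp$, by character orthogonality on the finite abelian group $H$. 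Lemma~\ref{lem:unreported-moment-cosets} turns this into the condition $r(z)=r(z')$.

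For the zero label, take the representative $g=0$. All characters then equal one, and the surviving terms regroup by the fibers $\mathcal C_\zeta$. This gives $\sigma_0=p^{-nk}\sum_\zeta|w_\zeta\rangle\langle w_\zeta|$, with the sum restricted to occurring moments, as in the definition of $|w_\zeta\rangle$.

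Orthogonality of the blocks is immediate. The vectors $|w_\zeta\rangle$ for distinct $\zeta$ are supported on disjoint sets of sectors $|e_z\rangle$, because the fibers of $r$ partition $G$. Disjoint supports in an orthonormal basis make the blocks mutually orthogonal.

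For the ensemble average, note that averaging $\sigma_y$ over $y\in Y$ equals averaging $\rho_g^{\otimes k}$ over all of $G$, since the fibers of $G\to Y$ have the common size $|H|$. The same computation as above, now with $H$ replaced by $G$, keeps only the pairs with $z-z'\in G^\perp$. The main point to check is that the pairing $\chi$ is nondegenerate on $G=\Sym_n(\F_p)\times\F_p^n$. This holds for odd $p$: $\frac12\Tr(AQ)$ pairs $A_{ii}$ with $\frac12Q_{ii}$ and $A_{ij}$ with $Q_{ij}$ for $i<j$, and $2$ is invertible in $\F_p$. Therefore $G^\perp=\{0\}$, only the diagonal terms $z=z'$ survive, and
\[
\overline\sigma=p^{-nk}\sum_{z:N_z>0}N_z|e_z\rangle\langle e_z|.
\]
Finally, $\overline\sigma$ is diagonal with strictly positive eigenvalues on $\mathcal K_{k,n}$, so taking the inverse square root entrywise gives Eq.~\eqref{eq:average-inverse}. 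The only step requiring care is the nondegeneracy argument.
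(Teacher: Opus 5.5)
Your proof is correct and follows the same route as the paper: insert the moment expansion, use character orthogonality on $H$ together with the coset description $z-z'\in H^\perp\iff r(z)=r(z')$ to get $\sigma_0$, use character orthogonality on all of $G$ to get $\overline\sigma$, and apply functional calculus on $\mathcal K_{k,n}$ for the inverse square root. The one addition is your explicit check that the pairing is nondegenerate for odd $p$, so that $G^\perp=\{0\}$; the paper leaves this step implicit under ``character orthogonality''.
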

\begin{proof}
Insert Eq.~\eqref{eq:orbit-sector-expansion} into the definition of $\sigma_0$.
Averaging over $H$ annihilates the matrix element $(z,z')$ unless
$z-z'\in H^\perp$.  By
Eq.~\eqref{eq:coset-coordinate-equivalence}, this is exactly the condition
$r(z)=r(z')$, which establishes Eq.~\eqref{eq:sigma-zero-blocks}.  Averaging over
the whole quotient, equivalently over $G$, leaves only $z=z'$ by character
orthogonality and gives Eq.~\eqref{eq:average-state}.  Functional calculus on
its support gives Eq.~\eqref{eq:average-inverse}.
\end{proof}

The odd-prime conditioned ensemble has therefore been reduced to mutually  
orthogonal annihilator-coset blocks with explicitly defined multiplicities.  
This block form is the structural input needed by the common optimization.  
The quantities $T_{\zeta}$ and $d_{\zeta}$ will later become, respectively, a  
lower-event probability mass and an occurring-moment count in the converse.

\subsection{Characteristic-two moment realization}
\label{sec:integrated-qubit-moments}
In characteristic two, the enhanced quadratic phase produces mixed mod-$4$/mod-$2$ moment coordinates.  Nevertheless, the resulting conditioned states again decompose into orthogonal annihilator-coset blocks, with one multiplicity for each occurring moment value.

\subsubsection{Moment group and character pairing in characteristic two}
The binary argument follows the same logic as the odd-prime construction, but its aggregate data cannot be represented by an ordinary symmetric Gram matrix alone.  The diagonal phase is naturally modulo four, while off-diagonal products remain modulo two.  We therefore first build the character group in which these two kinds of data coexist.  Let
\begin{equation}
 \mathcal M_{n,2}:=\mathbb Z_4^n\times
 \F_2^{\binom n2}.
 \label{eq:binary-enhanced-moment-group}
\end{equation}
We write a parameter as $g=(\alpha,E)\in G_{n,2}$, with
$\alpha=(\alpha_a)_{a=1}^n\in\mathbb Z_4^n$ and
$E=(E_{ab})_{a<b}\in\F_2^{\binom n2}$, and a moment value as
$z=(r,T)\in\mathcal M_{n,2}$, with the same respective coordinate groups.
Define the character pairing by
\begin{equation}
 \chi_z^{(2)}(g)
 :=i^{\sum_{a=1}^n\alpha_a r_a}
   (-1)^{\sum_{1\le a<b\le n}E_{ab}T_{ab}}.
 \label{eq:binary-enhanced-character-pairing}
\end{equation}
The first exponent is evaluated modulo $4$ and the second modulo $2$.
The pairing is nondegenerate: summing $\chi_z^{(2)}(g)$ over all
$g\in G_{n,2}$ gives $|G_{n,2}|$ when $z=0$ and zero otherwise.
Thus $\mathcal M_{n,2}$ is identified explicitly with the character group
$\widehat G_{n,2}$.

For an ordered computational-basis tuple
$\boldsymbol x=(x_1,\ldots,x_k)\in(\F_2^n)^k$, define
\begin{equation}
 \Phi_k^{(2)}(\boldsymbol x):=(r(\boldsymbol x),T(\boldsymbol x))
 \in\mathcal M_{n,2},
 \label{eq:integrated-qubit-moment-map}
\end{equation}
where
\begin{align}
 r_a(\boldsymbol x)&:=\sum_{j=1}^k(x_j)_a\pmod4,
 \label{eq:integrated-qubit-linear-moment}\\
 T_{ab}(\boldsymbol x)&:=\sum_{j=1}^k(x_j)_a(x_j)_b\pmod2,
 \quad a<b.
 \label{eq:integrated-qubit-edge-moment}
\end{align}

\begin{lemma}[Tuple phase factors through the enhanced moment]
\label{lem:binary-tuple-phase-factorization}
For every $g=(\alpha,E)\in G_{n,2}$ and every ordered tuple
$\boldsymbol x\in(\F_2^n)^k$,
\begin{equation}
 \prod_{j=1}^k
 i^{\sum_a\alpha_a(x_j)_a+2\sum_{a<b}E_{ab}(x_j)_a(x_j)_b}
 =\chi_{\Phi_k^{(2)}(\boldsymbol x)}^{(2)}(g).
 \label{eq:binary-tuple-phase-factorization}
\end{equation}
\end{lemma}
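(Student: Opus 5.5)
The plan is to prove the identity one copy at a time and then multiply the $k$ single-copy identities. Fix $g=(\alpha,E)$ and a single computational-basis vector $x\in\F_2^n$. By Eq.~\eqref{eq:binary-enhanced-character-pairing}, the character of the moment of the one-tuple $(x)$ is
\[
 i^{\sum_a\alpha_a\widetilde x_a}(-1)^{\sum_{a<b}E_{ab}\widetilde x_a\widetilde x_b}.
\]
Since $(-1)^{t}=i^{2t}$, this equals the $j$th factor on the left of Eq.~\eqref{eq:binary-tuple-phase-factorization}. The only point to check is that the exponents are read consistently. The term $2E_{ab}x_ax_b$ only matters modulo $4$, which is exactly what $E_{ab}x_ax_b\bmod 2$ records. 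The term $\alpha_a x_a$ uses the integer lift of the bit $x_a\in\{0,1\}$ multiplied by $\alpha_a\in\mathbb Z_4$, so it is well defined modulo four.

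Next I would use the fact that the pairing is a bicharacter in the moment variable. The map $z\mapsto\chi^{(2)}_z(g)$ is a homomorphism $\mathcal M_{n,2}\to\mathbb C^\times$: the exponent $\sum_a\alpha_ar_a$ is $\mathbb Z_4$-bilinear, and $\sum_{a<b}E_{ab}T_{ab}$ is $\F_2$-bilinear. The product over $j$ of the single-copy characters is therefore $\chi^{(2)}$ evaluated at the sum of the single-copy moments in $\mathcal M_{n,2}$. That sum has $\mathbb Z_4$ coordinates $\sum_j\widetilde{(x_j)}_a\bmod 4$, which is $r_a(\boldsymbol x)$ in Eq.~\eqref{eq:integrated-qubit-linear-moment}. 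Its $\F_2$ coordinates are $\sum_j (x_j)_a(x_j)_b\bmod 2$, which is $T_{ab}(\boldsymbol x)$ in Eq.~\eqref{eq:integrated-qubit-edge-moment}. So the sum of single-copy moments is precisely $\Phi_k^{(2)}(\boldsymbol x)$.

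The step that needs care is that the linear moment is reduced modulo $4$, not modulo $2$. Lifted bits are summed in $\mathbb Z$ and then reduced modulo four, and this is compatible with $i^{\alpha_a(\cdot)}$ being a function on $\mathbb Z_4$. There is no analogous issue for the off-diagonal part, because a factor $(-1)$ is insensitive to parity-preserving changes of the exponent. Finally, it would be worth noting that the left-hand side is exactly the $k$-fold product of the amplitudes $i^{q_{A,b}(x_j)}$ from Eq.~\eqref{eq:integrated-qubit-phase}, rewritten in the additive coordinates of Eq.~\eqref{eq:integrated-qubit-additive-coordinates}. This is what makes the lemma the binary analogue of Lemma~\ref{lem:orbit-moment-expansion}.
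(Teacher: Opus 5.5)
Your proof is correct and is essentially the paper's argument. The paper adds the $k$ exponents modulo $4$ in one step, interchanges the sum over $j$ with the coordinate sums, and reads off $r_a(\boldsymbol x)$ and, via $i^{2t}=(-1)^t$, $T_{ab}(\boldsymbol x)\bmod 2$. This is exactly your single-copy identity combined with the additivity of $z\mapsto\chi^{(2)}_z(g)$, only packaged without the intermediate homomorphism language.
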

\begin{proof}
Multiplying the $k$ phase factors adds their exponents modulo $4$ and gives
\begin{align}
 i^{\sum_a\alpha_a\sum_j(x_j)_a}
  i^{2\sum_{a<b}E_{ab}\sum_j(x_j)_a(x_j)_b}
 =i^{\sum_a\alpha_a r_a(\boldsymbol x)}
  (-1)^{\sum_{a<b}E_{ab}T_{ab}(\boldsymbol x)},
 \label{eq:binary-tuple-phase-calculation}
\end{align}
which is the right-hand side of
Eq.~\eqref{eq:binary-tuple-phase-factorization} by
Eq.~\eqref{eq:binary-enhanced-character-pairing}.
\end{proof}

\subsubsection{Fibers of the characteristic-two moment map and orbit expansion}
Having identified the phase of each tuple with a character of the enhanced moment group, we may now collect all tuples carrying the same character.  This is the binary counterpart of the odd-prime sector decomposition and will provide the orthogonal directions used in the conditioned states.  For $z\in\mathcal M_{n,2}$ define
\begin{equation}
 \mathcal F_z^{(2)}
 :=\{\boldsymbol x\in(\F_2^n)^k:
          \Phi_k^{(2)}(\boldsymbol x)=z\},
 \quad
 N_z^{(2)}:=|\mathcal F_z^{(2)}|.
 \label{eq:binary-enhanced-moment-fiber}
\end{equation}
Call $z$ occurring when $N_z^{(2)}>0$.  For every occurring $z$, put
\begin{equation}
 \ket{e_z^{(2)}}
 :=(N_z^{(2)})^{-1/2}
   \sum_{\boldsymbol x\in\mathcal F_z^{(2)}}\ket{\boldsymbol x},
 \quad
 \mathcal K_{k,n}^{(2)}
 :=\operatorname{span}\{\ket{e_z^{(2)}}:N_z^{(2)}>0\}.
 \label{eq:integrated-qubit-sector-vector}
\end{equation}
Here $N_z^{(2)}$ is the multiplicity of the enhanced moment $z$, and
$\ket{e_z^{(2)}}$ is the normalized character-sector direction carrying that
moment.  Distinct fibers are disjoint, so the occurring sector vectors are
orthonormal.

\begin{lemma}[Moment expansion of the binary orbit states]
\label{lem:binary-orbit-moment-expansion}
For $g=(\alpha,E)\in G_{n,2}$,
\begin{equation}
 \ket{\Psi_g^{(2)}}
 :=\ket{\psi_g^{(2)}}^{\otimes k}
 =2^{-nk/2}\sum_{z:N_z^{(2)}>0}
   \sqrt{N_z^{(2)}}\,\chi_z^{(2)}(g)\ket{e_z^{(2)}}.
 \label{eq:integrated-qubit-orbit-expansion}
\end{equation}
\end{lemma}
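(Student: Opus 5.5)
The plan is to expand the tensor power directly in the computational basis and regroup the $2^{nk}$ ordered tuples according to their enhanced moment. This mirrors the odd-prime Lemma~\ref{lem:orbit-moment-expansion}.

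First, by Eq.~\eqref{eq:integrated-qubit-state} and the additive coordinates of Eq.~\eqref{eq:integrated-qubit-additive-coordinates}, each single-copy amplitude is
\[
 2^{-n/2}\,i^{\sum_a\alpha_a(x)_a+2\sum_{a<b}E_{ab}(x)_a(x)_b}.
\]
To see this, write $q_{A,b}(x)=\sum_a(\widetilde A_{aa}+2\widetilde b_a)\widetilde x_a+2\sum_{a<b}\widetilde A_{ab}\widetilde x_a\widetilde x_b \pmod 4$. The diagonal coefficient is exactly the lift of $\alpha_a\in\mathbb Z_4$. The off-diagonal term only matters modulo $2$, since it is multiplied by $2$. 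Hence the dependence on $g=(\alpha,E)$ is well defined on $G_{n,2}$.

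Next, take the $k$-fold tensor product:
\[
 \ket{\psi_g^{(2)}}^{\otimes k}=2^{-nk/2}\sum_{\boldsymbol x\in(\F_2^n)^k}\prod_{j=1}^k i^{(\cdots)}\ket{\boldsymbol x}.
\]
By Lemma~\ref{lem:binary-tuple-phase-factorization}, the product of phases equals $\chi^{(2)}_{\Phi_k^{(2)}(\boldsymbol x)}(g)$. The one point to check is that reducing $r_a$ mod $4$ and $T_{ab}$ mod $2$ loses nothing. This holds because $i^{\alpha_a r_a}$ depends on $r_a$ only mod $4$, and $(-1)^{E_{ab}T_{ab}}$ depends on $T_{ab}$ only mod $2$.

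Finally, partition $(\F_2^n)^k$ into the disjoint fibers $\mathcal F_z^{(2)}$ of Eq.~\eqref{eq:binary-enhanced-moment-fiber}. On each fiber the phase is the constant $\chi_z^{(2)}(g)$, so the fiber contributes
\[
 \chi_z^{(2)}(g)\sum_{\boldsymbol x\in\mathcal F_z^{(2)}}\ket{\boldsymbol x}=\chi_z^{(2)}(g)\sqrt{N_z^{(2)}}\ket{e_z^{(2)}}.
\]
Empty fibers contribute nothing. Summing over occurring $z$ gives Eq.~\eqref{eq:integrated-qubit-orbit-expansion}.

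The only delicate step is the bookkeeping of the integer lifts modulo four in the first two steps. In particular, the products $(x_j)_a(x_j)_b$ of lifted bits coincide with the lifts of the $\F_2$ products, because both are $0$ or $1$. Everything else is a regrouping of a finite sum.
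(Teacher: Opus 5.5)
Your proposal is correct and follows the same route as the paper. Both expand the tensor power in the ordered computational basis, use Lemma~\ref{lem:binary-tuple-phase-factorization} to give every tuple in a fiber $\mathcal F_z^{(2)}$ the common phase $\chi_z^{(2)}(g)$, and replace each nonempty fiber sum by $\sqrt{N_z^{(2)}}\ket{e_z^{(2)}}$. Your extra bookkeeping on the mod-$4$ lifts only spells out what the paper absorbs into Eq.~\eqref{eq:integrated-qubit-additive-coordinates} and the cited factorization lemma.
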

\begin{proof}
Expand the tensor power in the ordered computational basis.  By
Lemma~\ref{lem:binary-tuple-phase-factorization}, all tuples in the fiber
$\mathcal F_z^{(2)}$ have the common phase $\chi_z^{(2)}(g)$.
Partitioning the tuples by their moment value and replacing the uniform sum
on each nonempty fiber by
$\sqrt{N_z^{(2)}}\ket{e_z^{(2)}}$ gives
Eq.~\eqref{eq:integrated-qubit-orbit-expansion}.
\end{proof}

\subsubsection{The kernel of the partial-label map and explicit annihilator cosets}
We next impose the distinction between reported and unreported coordinates.  Averaging over the latter preserves a matrix element precisely when the two enhanced moments define the same character on the unreported subgroup.  The restriction map below records exactly the coordinates needed to test this condition.  Split the enhanced moment coordinates according to the queried indices
$1,\ldots,m$ and the lower indices $m+1,\ldots,n$.  Define
\begin{equation}
 \mathcal L_{n,m,2}
 :=\mathbb Z_4^{n-m}\times
   \F_2^{\binom{n-m}{2}},
 \label{eq:binary-lower-moment-space}
\end{equation}
and the restriction map
\begin{equation}
 r_2:\mathcal M_{n,2}\longrightarrow\mathcal L_{n,m,2},
 \quad
 r_2(r,T)
 :=\bigl((r_a)_{a>m},(T_{ab})_{m<a<b\le n}\bigr).
 \label{eq:binary-lower-moment-restriction}
\end{equation}
For $\zeta\in\mathcal L_{n,m,2}$ set
\begin{equation}
 \mathcal C_\zeta^{(2)}:=r_2^{-1}(\zeta).
 \label{eq:binary-annihilator-coset-fiber}
\end{equation}

\begin{lemma}[Coordinate description of the binary annihilator cosets]
\label{lem:binary-annihilator-cosets}
Under the pairing in Eq.~\eqref{eq:binary-enhanced-character-pairing},
\begin{equation}
 H_{n,m,2}^\perp=\ker r_2.
 \label{eq:binary-annihilator-kernel}
\end{equation}
Consequently, for $z,z'\in\mathcal M_{n,2}$,
\begin{equation}
 z-z'\in H_{n,m,2}^\perp
 \quad\Longleftrightarrow\quad
 r_2(z)=r_2(z'),
 \label{eq:binary-annihilator-coset-equivalence}
\end{equation}
and the fibers $\mathcal C_\zeta^{(2)}$ are exactly the
$H_{n,m,2}^\perp$-cosets in $\mathcal M_{n,2}$.
\end{lemma}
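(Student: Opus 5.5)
The plan is to repeat the odd-prime argument of Lemma~\ref{lem:unreported-moment-cosets} in the enhanced coordinates. The annihilator is computed directly from the pairing in Eq.~\eqref{eq:binary-enhanced-character-pairing}, restricted to the subgroup $H_{n,m,2}$ of Eq.~\eqref{eq:integrated-qubit-unreported-subgroup}.

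First fix $z=(r,T)\in\mathcal M_{n,2}$ and evaluate $\chi^{(2)}_z$ on an arbitrary $h=(\alpha,E)\in H_{n,m,2}$. By definition, $\alpha_a=0$ for $a\le m$, and $E_{ab}=0$ unless $m<a<b$. The pairing therefore collapses to
\[
 \chi^{(2)}_z(h)=i^{\sum_{a>m}\alpha_a r_a}(-1)^{\sum_{m<a<b}E_{ab}T_{ab}}.
\]
This expression depends on $z$ only through $r_2(z)$. Hence $\ker r_2\subseteq H_{n,m,2}^\perp$ is immediate.

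For the reverse inclusion, suppose $\chi^{(2)}_z(h)=1$ for all $h\in H_{n,m,2}$. I would test on generators.
\begin{itemize}
\item Take $h$ with a single $\alpha_a=1$ for some $a>m$ and all other coordinates zero. This gives $i^{r_a}=1$ in $\mathbb Z_4$, so $r_a=0$.
\item Take $h$ with a single edge $E_{ab}=1$ for some $m<a<b$. This gives $(-1)^{T_{ab}}=1$, so $T_{ab}=0$.
\end{itemize}
Together these show $r_2(z)=0$, which establishes Eq.~\eqref{eq:binary-annihilator-kernel}. The one point needing care is that the diagonal test uses the full $\mathbb Z_4$ value of $\alpha_a$. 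The test element $\alpha_a=1$ is legitimate because $H_{n,m,2}$ allows arbitrary $\alpha_a\in\mathbb Z_4$ for $a>m$, and $i^{r_a}=1$ indeed forces $r_a\equiv0\pmod 4$. This is the only step where the binary case differs from the odd-prime computation.

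Finally, $r_2$ is a group homomorphism, being coordinate projection on a product of cyclic groups. Hence $z-z'\in\ker r_2$ if and only if $r_2(z)=r_2(z')$, which is Eq.~\eqref{eq:binary-annihilator-coset-equivalence}. The fibers $r_2^{-1}(\zeta)$ are therefore exactly the cosets of $\ker r_2=H_{n,m,2}^\perp$. Each fiber is nonempty for every $\zeta\in\mathcal L_{n,m,2}$, since $r_2$ is surjective: lift $\zeta$ by setting the remaining coordinates to zero.
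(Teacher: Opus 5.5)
Your proposal is correct and follows essentially the same route as the paper: restrict the pairing to $H_{n,m,2}$, observe that only the lower $\mathbb Z_4$ diagonal and lower--lower edge coordinates survive, and use that $r_2$ is a homomorphism to identify its fibers with $\ker r_2$-cosets. The paper leaves the ``equals one for every $h$ iff $r_2(z)=0$'' step implicit. Your generator tests with a single $\alpha_a=1$ or a single $E_{ab}=1$, together with the surjectivity remark showing every fiber is nonempty, simply make that step explicit.
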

\begin{proof}
An element $h\in H_{n,m,2}$ has arbitrary lower coordinates
$\alpha_a$ for $a>m$ and arbitrary lower--lower edge coordinates $E_{ab}$
for $m<a<b\le n$, with all other coordinates zero.  Hence
\begin{equation}
 \chi_{(r,T)}^{(2)}(h)
 =i^{\sum_{a>m}\alpha_a r_a}
  (-1)^{\sum_{m<a<b\le n}E_{ab}T_{ab}}.
 \label{eq:binary-unreported-subgroup-pairing}
\end{equation}
This equals one for every $h\in H_{n,m,2}$ if and only if every lower
$r_a$ is zero modulo $4$ and every lower--lower $T_{ab}$ is zero modulo $2$.
That condition is precisely $r_2(r,T)=0$, proving
Eq.~\eqref{eq:binary-annihilator-kernel}.  The equivalence in
Eq.~\eqref{eq:binary-annihilator-coset-equivalence} follows because $r_2$ is a
group homomorphism, and its fibers are therefore the cosets of its kernel.
\end{proof}

For each $\zeta\in\mathcal L_{n,m,2}$ define
\begin{align}
 \ket{w_\zeta^{(2)}}
 &:=\sum_{\substack{z\in\mathcal C_\zeta^{(2)}\\N_z^{(2)}>0}}
       \sqrt{N_z^{(2)}}\ket{e_z^{(2)}},
 \label{eq:integrated-qubit-lower-block-vector}\\
 T_\zeta^{(2)}
 &:=\sum_{z\in\mathcal C_\zeta^{(2)}}N_z^{(2)},
 &
 d_\zeta^{(2)}
 &:=|\{z\in\mathcal C_\zeta^{(2)}:N_z^{(2)}>0\}|.
 \label{eq:binary-coset-mass-and-support}
\end{align}
Thus $T_\zeta^{(2)}$ counts the ordered tuples whose lower enhanced moment is
$\zeta$, whereas $d_\zeta^{(2)}$ counts the occurring full enhanced moments
above that fixed lower value.

\subsubsection{Conditioned and average states in binary moment coordinates}
The orbit expansion and the annihilator-coset description now have the same roles as in odd characteristic.  Combining them shows that each conditioned state is a sum of rank-one coherence blocks, whereas averaging over every label removes all off-diagonal character terms.  This is the final characteristic-specific fact required by the common optimization.
\begin{proposition}[Binary block form of the exact-identification ensemble]
\label{prop:binary-conditioned-average-block-form}
For the zero quotient label,
\begin{equation}
 \sigma_0^{(2)}
 =2^{-nk}\sum_\zeta
   \ket{w_\zeta^{(2)}}\!\bra{w_\zeta^{(2)}}.
 \label{eq:integrated-qubit-conditioned-blocks}
\end{equation}
The summands for distinct $\zeta$ have orthogonal supports.  The uniform orbit
average is
\begin{equation}
 \overline\sigma^{(2)}
 :=\frac1{|G_{n,2}|}\sum_{g\in G_{n,2}}
   \ket{\Psi_g^{(2)}}\!\bra{\Psi_g^{(2)}}
 =2^{-nk}\sum_{z:N_z^{(2)}>0}
   N_z^{(2)}\ket{e_z^{(2)}}\!\bra{e_z^{(2)}}.
 \label{eq:integrated-qubit-average-state}
\end{equation}
\end{proposition}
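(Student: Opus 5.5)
The plan mirrors the odd-prime argument of Proposition~\ref{prop:conditioned-average-block-form}, replacing the Gram pairing with the enhanced pairing in Eq.~\eqref{eq:binary-enhanced-character-pairing}. It has three steps.

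\textbf{Step 1: expand the conditioned state.} Take $\sigma_0^{(2)}$ to be the $H_{n,m,2}$-average of $\ket{\Psi_h^{(2)}}\!\bra{\Psi_h^{(2)}}$. This uses the zero quotient label with representative $0$, so that $\sigma_0^{(2)}$ is the state in Eq.~\eqref{eq:sigma-y}. Insert the orbit expansion of Lemma~\ref{lem:binary-orbit-moment-expansion}, with $g=h\in H_{n,m,2}$. Each matrix element between occurring sectors $z$ and $z'$ then carries the factor
\[
 2^{-nk}\sqrt{N_z^{(2)}N_{z'}^{(2)}}\ \frac1{|H_{n,m,2}|}\sum_{h\in H_{n,m,2}}\chi_z^{(2)}(h)\overline{\chi_{z'}^{(2)}(h)}.
\]
The pairing is bimultiplicative, so the product of characters equals $\chi^{(2)}_{z-z'}(h)$. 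Character orthogonality on the finite abelian group $H_{n,m,2}$ makes the inner average equal to $1$ when $\chi^{(2)}_{z-z'}$ is trivial on $H_{n,m,2}$, and $0$ otherwise. Triviality means exactly $z-z'\in H^\perp_{n,m,2}$. By Lemma~\ref{lem:binary-annihilator-cosets}, this is equivalent to $r_2(z)=r_2(z')$.

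\textbf{Step 2: read off the block form and orthogonality.} Grouping the surviving terms by $\zeta=r_2(z)$ gives
\[
 2^{-nk}\sum_\zeta\sum_{z,z'\in\mathcal C_\zeta^{(2)}}\sqrt{N_z^{(2)}N_{z'}^{(2)}}\ket{e_z^{(2)}}\!\bra{e_{z'}^{(2)}}.
\]
The inner double sum is precisely $\ket{w_\zeta^{(2)}}\!\bra{w_\zeta^{(2)}}$, which is Eq.~\eqref{eq:integrated-qubit-conditioned-blocks}. The sector vectors $\ket{e_z^{(2)}}$ are orthonormal, since distinct fibers of $\Phi_k^{(2)}$ are disjoint. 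The fibers $\mathcal C^{(2)}_\zeta$ partition $\mathcal M_{n,2}$. Hence the vectors $\ket{w_\zeta^{(2)}}$ for distinct $\zeta$ are supported on disjoint sets of orthonormal vectors and are mutually orthogonal.

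\textbf{Step 3: the uniform average.} For $\overline\sigma^{(2)}$, repeat the same computation with the average taken over all of $G_{n,2}$. The nondegeneracy statement recorded after Eq.~\eqref{eq:binary-enhanced-character-pairing} shows that $\sum_g\chi^{(2)}_{z-z'}(g)$ vanishes unless $z=z'$. Only the diagonal terms $2^{-nk}N_z^{(2)}\ket{e_z^{(2)}}\!\bra{e_z^{(2)}}$ survive, which gives Eq.~\eqref{eq:integrated-qubit-average-state}.

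\textbf{Main obstacle.} The one point needing care is that the mixed $\mathbb Z_4/\F_2$ pairing really is a bicharacter, i.e.\ that $\chi^{(2)}_z(g)\overline{\chi^{(2)}_{z'}(g)}=\chi^{(2)}_{z-z'}(g)$ holds with $z-z'$ computed in $\mathcal M_{n,2}$. I would check this coordinatewise. In the diagonal coordinates, $i^{\alpha_a r_a}$ is well defined modulo $4$ in each variable. In the off-diagonal coordinates, $(-1)^{E_{ab}T_{ab}}$ is well defined modulo $2$. Both are additive in each argument. Beyond this check, the argument is the same as in the odd-prime case.
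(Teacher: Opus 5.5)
Your proposal is correct and follows the paper's proof essentially step for step: you expand via Lemma~\ref{lem:binary-orbit-moment-expansion}, use character orthogonality on $H_{n,m,2}$ together with Lemma~\ref{lem:binary-annihilator-cosets} to keep exactly the pairs with $r_2(z)=r_2(z')$, and use nondegeneracy of the pairing on $G_{n,2}$ for the full average. Your explicit check that the mixed $\mathbb Z_4/\F_2$ pairing is a bicharacter is the only addition; the paper uses this fact implicitly when it writes the averaged factor directly as $\chi^{(2)}_{z-z'}(h)$.
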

\begin{proof}
Insert Eq.~\eqref{eq:integrated-qubit-orbit-expansion} into the average over
$H_{n,m,2}$.  The coefficient of
$\ket{e_z^{(2)}}\!\bra{e_{z'}^{(2)}}$ is multiplied by
\begin{equation}
 \frac1{|H_{n,m,2}|}\sum_{h\in H_{n,m,2}}
 \chi_{z-z'}^{(2)}(h)
 =\mathbf1\{z-z'\in H_{n,m,2}^\perp\}.
 \label{eq:binary-subgroup-character-orthogonality}
\end{equation}
By Lemma~\ref{lem:binary-annihilator-cosets}, the surviving pairs are exactly
those satisfying $r_2(z)=r_2(z')$.  Grouping them by their common value
$\zeta$ gives Eq.~\eqref{eq:integrated-qubit-conditioned-blocks}.  Distinct
fibers contain disjoint sector vectors and hence have orthogonal supports.
Similarly, the whole-group average multiplies the same matrix element by
\begin{equation}
 \frac1{|G_{n,2}|}\sum_{g\in G_{n,2}}
 \chi_{z-z'}^{(2)}(g)=\mathbf1\{z=z'\},
 \label{eq:binary-whole-group-character-orthogonality}
\end{equation}
which proves Eq.~\eqref{eq:integrated-qubit-average-state}.
\end{proof}

\subsubsection{Application of the finite abelian group theorem}
The binary moment construction satisfies the hypotheses of the abstract
finite-group theorem in Section~\ref{sec:seed-sdp} under the following
identification:
\begin{align}
 G&=G_{n,2},&
 H&=H_{n,m,2},&
 Y&=G_{n,2}/H_{n,m,2},
 \label{eq:binary-abstract-dictionary-groups}\\
 \mathcal Z_+
 &=\{z\in\mathcal M_{n,2}:N_z^{(2)}>0\},&
 c_z&=2^{-nk/2}\sqrt{N_z^{(2)}},&
 \ket{e_z}&=\ket{e_z^{(2)}}.
 \label{eq:binary-abstract-dictionary-sectors}
\end{align}
The normalization $\sum_zc_z^2=1$ follows from
$\sum_zN_z^{(2)}=2^{nk}$.  Orthonormality follows from disjointness of the moment-map fibers.
Lemma~\ref{lem:binary-orbit-moment-expansion} supplies the character-sector
orbit representation; Lemma~\ref{lem:binary-annihilator-cosets} supplies the
annihilator-coset partition; and
Proposition~\ref{prop:binary-conditioned-average-block-form} supplies the
conditioned and average state formulas.  Thus no additional characteristic-two
sector assumption remains implicit.

Specializing the abstract theorem gives
\begin{equation}
 P_{\mathrm{id}}^{(k),*}(n,m;2)
 =\frac{2^{-nk}}{\lvert Y_{n,m,2}\rvert}
 \sum_\zeta\left(
 \sum_{\substack{z\in\mathcal C_\zeta^{(2)}\\N_z^{(2)}>0}}
 \sqrt{N_z^{(2)}}
 \right)^2.
 \label{eq:integrated-qubit-finite-pgm}
\end{equation}
For each $\zeta$, Cauchy--Schwarz yields
\begin{equation}
 \left(
 \sum_{\substack{z\in\mathcal C_\zeta^{(2)}\\N_z^{(2)}>0}}
 \sqrt{N_z^{(2)}}
 \right)^2
 \le d_\zeta^{(2)}T_\zeta^{(2)}.
 \label{eq:integrated-qubit-CS}
\end{equation}
Thus the enhanced binary moment construction satisfies the same abstract  
character-sector and annihilator-coset hypotheses as the odd-prime  
construction.  No further characteristic-two phase calculation is needed for  
the common optimization.  

\subsection{PGM optimality for finite abelian quotient ensembles}
\label{sec:seed-sdp}
We now solve the abstract quotient-ensemble problem common to both moment
realizations.  Subgroup averaging supplies orthogonal rank-one coherence
blocks, while quotient covariance fixes the diagonal of the measurement seed
within each block.  The resulting independent positive-semidefinite block
optimizations make the finite optimum explicit.  All spaces are finite
dimensional, so the finite-outcome POVM optimum is attained.

\subsubsection{Finite abelian quotient ensembles}
Let $G$ be a finite abelian group, let $H\le G$ be a subgroup, and put
$Y:=G/H$.  Let
$\widehat G$ denote the character group of $G$.  Fix a finite set
$\mathcal Z_+\subseteq\widehat G$, an orthonormal family
$\{\ket{e_z}:z\in\mathcal Z_+\}$, and positive amplitudes
$c_z>0$ satisfying
\begin{equation}
 \sum_{z\in\mathcal Z_+}c_z^2=1.
 \label{eq:abstract-amplitude-normalization}
\end{equation}
Since $\mathcal Z_+\subseteq\widehat G$, we write $\chi_z$ for the
character represented by $z\in\mathcal Z_+$.
Write
\begin{equation}
 \mathcal K:=\operatorname{span}\{\ket{e_z}:z\in\mathcal Z_+\},
 \quad
 V_g\ket{e_z}:=\chi_z(g)\ket{e_z},
 \label{eq:abstract-sector-representation}
\end{equation}
where $\chi_z$ is the character indexed by $z$.  Define the orbit vectors
\begin{equation}
 \ket{\Psi_g}:=V_g\ket{\Psi_0},
 \quad
 \ket{\Psi_0}:=\sum_{z\in\mathcal Z_+}c_z\ket{e_z}.
 \label{eq:abstract-orbit-vectors}
\end{equation}
Choose any representative-selection map $s_Y:Y\to G$, that is, a section of
the quotient map satisfying $s_Y(y)+H=y$, and define the quotient-conditioned
states
\begin{equation}
 \sigma_y:=\frac1{\lvert H\rvert}\sum_{h\in H}
 \ket{\Psi_{s_Y(y)+h}}\!\bra{\Psi_{s_Y(y)+h}},
 \quad y\in Y.
 \label{eq:abstract-conditioned-ensemble}
\end{equation}
The state $\sigma_y$ is independent of the representative $s_Y(y)$ because
changing that representative only permutes the average over $H$.

Let
\begin{equation}
 H^\perp:=\{z\in\widehat G:\chi_z(h)=1\text{ for every }h\in H\}.
 \label{eq:abstract-annihilator}
\end{equation}
This subgroup is the annihilator of $H$: its characters are trivial on $H$.
Consequently, two characters lie in the same $H^\perp$-coset exactly when they
have the same restriction to $H$.  Partition $\mathcal Z_+$ by these cosets in
$\widehat G$.  Denote
the nonempty intersections with $\mathcal Z_+$ by $\mathcal C$, and define
\begin{equation}
 \ket{w_{\mathcal C}}:=\sum_{z\in\mathcal C}c_z\ket{e_z},
 \quad
 \ket{u_{\mathcal C}}:=\sum_{z\in\mathcal C}\ket{e_z}.
 \label{eq:abstract-coset-vectors}
\end{equation}
Character orthogonality gives
\begin{equation}
 \sigma_0=\sum_{\mathcal C}
 \ket{w_{\mathcal C}}\!\bra{w_{\mathcal C}},
 \quad
 \overline\sigma:=\frac1{\lvert Y\rvert}\sum_{y\in Y}\sigma_y
 =\sum_{z\in\mathcal Z_+}c_z^2
 \ket{e_z}\!\bra{e_z}.
 \label{eq:abstract-conditioned-average-blocks}
\end{equation}
Here a character sector is one of the orthogonal one-dimensional spaces
spanned by $|e_z\rangle$, on which the group acts through the character
$\chi_z$.  Thus the only structural assumptions are a finite abelian orbit,
an orthonormal character-sector decomposition, and nonnegative reference
amplitudes.  The PGM optimality below is not a generic consequence of
covariance.  It uses these orthogonal one-dimensional character sectors, the
rank-one form of the subgroup-averaged coherence blocks, and the fact that
quotient covariance fixes only the diagonal entries of each surviving seed
block.  These structural properties are sufficient to prove PGM optimality, as stated
in the next theorem.
\begin{theorem}[PGM optimality for quotient-conditioned character-sector ensembles]
\label{thm:detailed-seed-solution}
For the uniform ensemble $\{ \lvert Y\rvert^{-1},\sigma_y:y\in Y\}$ defined above, the
PGM is minimum-error optimal.  On the common
support $\mathcal K$, an optimal covariant seed is given below.  Here a
\emph{seed} is the POVM element assigned to the zero quotient label; its
translates under the quotient action generate all remaining POVM elements.
Explicitly,
\begin{equation}
 \Xi^*=\frac1{\lvert Y\rvert}\sum_{\mathcal C}
 \ket{u_{\mathcal C}}\!\bra{u_{\mathcal C}},
 \label{eq:global-seed}
\end{equation}
and the optimal success probability is
\begin{equation}
 P_{\rm opt}
 =\frac1{\lvert Y\rvert}\sum_{\mathcal C}
 \left(\sum_{z\in\mathcal C}c_z\right)^2.
 \label{eq:abstract-optimal-coset-formula}
\end{equation}
The inverse in the PGM is the Moore--Penrose inverse on $\mathcal K$: it
inverts the nonzero eigenvalues on $\mathcal K$ and acts as zero on the
orthogonal kernel.  Sectors with zero amplitude are excluded from
$\mathcal Z_+$, and an arbitrary
positive POVM completion on $\mathcal K^\perp$ leaves the success probability
unchanged.  The covariant POVM and its success probability are independent of
the choice of the quotient section $s_Y$.
\end{theorem}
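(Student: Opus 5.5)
The proof will go through the standard covariant reduction followed by a block-diagonal semidefinite optimization, closed by a dual certificate. First I restrict to covariant measurements. The quotient group $Y$ acts on $\mathcal K$ through $V_g$ and permutes the ensemble, $V_g\sigma_yV_g^\dagger=\sigma_{y+[g]}$. Twirling any POVM over this action therefore preserves the uniform success probability. Hence it suffices to optimize over seeds $\Xi\ge0$ supported on $\mathcal K$ with $\sum_{y}V_{s_Y(y)}\Xi V_{s_Y(y)}^\dagger=I_{\mathcal K}$. For a covariant POVM the success probability equals $\Tr[\Xi\sigma_0]$.

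The key step is to read off the completeness constraint in the sector basis. The matrix element $\langle e_z|\sum_y V_{s_Y(y)}\Xi V_{s_Y(y)}^\dagger|e_{z'}\rangle$ equals $\Xi_{zz'}\sum_y\chi_{z-z'}(s_Y(y))$. This quantity depends on the section unless $z-z'\in H^\perp$. I will therefore first symmetrize the seed also over $H$, i.e. replace $\Xi$ by $|H|^{-1}\sum_{h}V_h\Xi V_h^\dagger$. This is legitimate because $\sigma_0$ is $H$-invariant and the full $G$-twirl of the resulting POVM is still complete. After this symmetrization, $\Xi$ is block diagonal along the $H^\perp$-cosets $\mathcal C$. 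Completeness then reduces to $|Y|\,\Xi_{zz}=1$ for every $z\in\mathcal Z_+$, with the off-diagonal entries within each block left free, since characters in $H^\perp$ are trivial on $H$ and $\sum_{y}\chi_{0}=|Y|$.

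The problem now splits into independent blocks: maximize $\sum_{\mathcal C}\langle w_{\mathcal C}|\Xi_{\mathcal C}|w_{\mathcal C}\rangle$ subject to $\Xi_{\mathcal C}\ge0$ and $(\Xi_{\mathcal C})_{zz}=1/|Y|$. For a positive semidefinite matrix we have $|\Xi_{zz'}|\le\sqrt{\Xi_{zz}\Xi_{z'z'}}=1/|Y|$. Since all $c_z>0$, this gives $\langle w|\Xi|w\rangle\le|Y|^{-1}(\sum_{z\in\mathcal C}c_z)^2$. Equality holds for the rank-one choice $\Xi_{\mathcal C}=|Y|^{-1}|u_{\mathcal C}\rangle\langle u_{\mathcal C}|$, which yields $\Xi^*$ and the formula in Eq.~\eqref{eq:abstract-optimal-coset-formula}. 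Because the general optimum equals the covariant one, this is the global minimum-error optimum. As a safeguard I would also exhibit the Holevo--Yuen--Kennedy dual operator $\Lambda=\sum_y\sigma_y\Pi_y$ and check that $\Lambda\ge\sigma_y/|Y|$, which is the condition for the $1/|Y|$-weighted ensemble. This also handles attainment on $\mathcal K^\perp$, where any completion contributes zero because every $\sigma_y$ is supported in $\mathcal K$.

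Next I identify the optimum with the PGM. By Eq.~\eqref{eq:abstract-conditioned-average-blocks} we have $\overline\sigma^{-1/2}=\sum_z c_z^{-1}|e_z\rangle\langle e_z|$ on $\mathcal K$. The PGM seed is $|Y|^{-1}\overline\sigma^{-1/2}\sigma_0\overline\sigma^{-1/2}$. Each block contributes $c_z^{-1}c_zc_{z'}c_{z'}^{-1}=1$, so the seed equals $|Y|^{-1}\sum_{\mathcal C}|u_{\mathcal C}\rangle\langle u_{\mathcal C}|=\Xi^*$. The PGM elements are the translates $V_{s_Y(y)}\Xi^*V^\dagger_{s_Y(y)}$. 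These translates are independent of the section, because $\Xi^*$ commutes with $V_h$ for $h\in H$: the block vectors $u_{\mathcal C}$ pick up only a common phase under $V_h$.

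The main obstacle I expect is the careful justification of the $H$-symmetrization and of the resulting diagonal constraint. In particular I must verify that no cross-coset terms survive completeness for an arbitrary section, and that the $G$-twirl, rather than only the $Y$-translates of a single seed, still produces a legitimate POVM labeled by $Y$. I would handle this by first twirling over all of $G$ with labels $g\mapsto g+H$, and then computing the completeness constraint directly via character orthogonality on $G$.
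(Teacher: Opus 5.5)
Your proposal is correct and follows essentially the same route as the paper: twirl to a quotient-covariant POVM, use $H$-invariance of the zero-label seed to get the annihilator-coset block form with diagonal fixed at $|Y|^{-1}$ by quotient-character orthogonality, bound each block entrywise via its $2\times2$ principal minors, attain the bound with the rank-one seed $|Y|^{-1}\ket{u_{\mathcal C}}\!\bra{u_{\mathcal C}}$, and identify this seed with the PGM through $\overline\sigma^{-1/2}\ket{w_{\mathcal C}}=\ket{u_{\mathcal C}}$. Your separate $H$-averaging step is harmless but redundant, because the zero-label element of a $G$-twirled POVM with labels $g\mapsto g+H$ is automatically $H$-invariant, as the paper notes; and if you keep the optional Holevo--Yuen--Kennedy check, the dual operator must be $\Lambda=|Y|^{-1}\sum_y\sigma_y\Pi_y$ for the comparison $\Lambda\ge\sigma_y/|Y|$ to be the right condition.
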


The proof proceeds by reducing to a covariant seed, identifying its  
annihilator-coset blocks and fixed diagonal, and optimizing each positive  
semidefinite block by a rank-one matrix.  

\begin{lemma}[Covariant seed reduction]
\label{lem:covariant-seed-reduction}
The optimum on $\mathcal K$ is attained by a quotient-covariant POVM generated
by a seed $\Xi$ satisfying
\begin{align}
 \Xi&\ge0,
 &V_h\Xi V_h^\dagger&=\Xi\quad(h\in H),
 \label{eq:seed-invariance}\\
 \sum_{y\in Y}V_{s_Y(y)}\Xi V_{s_Y(y)}^\dagger
 &=\Pi_{\mathcal K}.
 \label{eq:seed-normalization}
\end{align}
Here $\Pi_{\mathcal K}$ denotes the orthogonal projector onto the common
support $\mathcal K$.  Its average success probability is
\begin{equation}
 P(\Xi)=\Tr(\Xi\sigma_0).
 \label{eq:seed-objective}
\end{equation}
The generated POVM does not depend on the chosen representatives once the
outcomes are identified with quotient elements.  We do not assert that
$y\mapsto V_{s_Y(y)}$ is a representation of $G/H$ on vectors; only its
conjugation action on the $H$-invariant seed is used.
\end{lemma}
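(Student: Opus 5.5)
The plan is to start from an arbitrary optimal POVM $\{M_y:y\in Y\}$ on $\mathcal K$ and symmetrize it twice. First we use the quotient action, then the subgroup $H$. This follows the standard covariance argument for group-covariant discrimination, adapted to the quotient setting.

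Step one is covariance under $G$. For $g\in G$ the operator $V_g$ is a unitary on $\mathcal K$. It maps $\sigma_y$ to $\sigma_{y+[g]}$: the orbit vectors satisfy $V_g\ket{\Psi_{g'}}=\ket{\Psi_{g+g'}}$, and translation permutes the $H$-average. We define the twirled POVM
\[
M'_y:=\frac1{|G|}\sum_{g\in G}V_g^\dagger M_{y+[g]}V_g .
\]
Each $M'_y$ is positive. Summing over $y$ gives the average of $V_g^\dagger\Pi_{\mathcal K}V_g=\Pi_{\mathcal K}$, because $V_g$ is diagonal in the sector basis and so preserves $\mathcal K$. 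The uniform-average success probability $|Y|^{-1}\sum_y\Tr[M_y\sigma_y]$ is invariant under each relabeling, hence unchanged by the average.

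The twirled POVM satisfies $V_gM'_yV_g^\dagger=M'_{y+[g]}$. We set $\Xi:=M'_0$. Taking $g=h\in H$ gives $V_h\Xi V_h^\dagger=\Xi$, which is Eq.~\eqref{eq:seed-invariance}. Taking $g=s_Y(y)$ gives $M'_y=V_{s_Y(y)}\Xi V_{s_Y(y)}^\dagger$. This expression is independent of the representative, because two representatives differ by an element of $H$ and $\Xi$ is $H$-invariant. Completeness of $\{M'_y\}$ on $\mathcal K$ is then exactly Eq.~\eqref{eq:seed-normalization}.

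Step two is the objective. Covariance together with $V_{s_Y(y)}\sigma_0V_{s_Y(y)}^\dagger=\sigma_y$ makes every term $\Tr[M'_y\sigma_y]$ equal to $\Tr[\Xi\sigma_0]$. This gives Eq.~\eqref{eq:seed-objective}.

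Step three deals with the complement $\mathcal K^\perp$. Every $\sigma_y$ is supported in $\mathcal K$, so the compressions $\Pi_{\mathcal K}M_y\Pi_{\mathcal K}$ attain the same success probability. Any completion on $\mathcal K^\perp$ is therefore harmless, and we may work on $\mathcal K$ throughout. Attainment of the optimum holds because the feasible set of finite-outcome POVMs is compact and the objective is continuous.

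The main point requiring care is that $y\mapsto V_{s_Y(y)}$ need not be a homomorphism. We therefore avoid it entirely and carry out the twirl over all of $G$, which is a genuine representation. The quotient-indexed operators are then recovered only through $H$-invariance of the seed, which is exactly the caveat stated in the lemma.
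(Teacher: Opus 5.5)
Your proposal is correct and follows the paper's proof: you twirl a supported optimal POVM over all of $G$ with outcome relabeling, take the zero-label element as the seed, and read off $H$-invariance, the normalization, the objective $\Tr(\Xi\sigma_0)$, and representative independence. The paper also states the converse, that any positive $H$-invariant seed satisfying the normalization generates a supported covariant POVM with value $\Tr(\Xi\sigma_0)$; this follows from the same computation as your Step two, and it is what later justifies optimizing directly over seeds in Theorem~\ref{thm:detailed-seed-solution}.
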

\begin{proof}
Start from an arbitrary POVM for the uniform quotient ensemble and average it
over $G$ together with the corresponding outcome relabeling.  This twirling
preserves positivity and normalization.  Uniformity of the prior and
covariance of the states show, after the change of variables induced by each
group element, that it also preserves the average success probability.
Hence an optimum is attained by a quotient-covariant POVM.

Let $\Xi$ be the element assigned to the zero coset.  Elements of $H$ leave
that outcome unchanged, so covariance gives
$V_h\Xi V_h^\dagger=\Xi$, which is
Eq.~\eqref{eq:seed-invariance}.  Translating this seed over quotient labels
produces all POVM elements, and their normalization is exactly
Eq.~\eqref{eq:seed-normalization}.  Conversely, any positive seed satisfying
these two equations generates a supported quotient-covariant POVM.

For such a POVM, covariance makes every label contribute the same value after
translation back to the zero label.  The uniform average objective therefore
reduces to
$P(\Xi)=\Tr(\Xi\sigma_0)$, proving
Eq.~\eqref{eq:seed-objective}.

Finally, suppose that another quotient section satisfies
$s_Y'(y)=s_Y(y)+h_y$ with $h_y\in H$.  Seed invariance then implies
\[
 V_{s_Y'(y)}\Xi V_{s_Y'(y)}^\dagger
 =V_{s_Y(y)}\Xi V_{s_Y(y)}^\dagger,
\]
so changing quotient representatives does not change either the generated
POVM or its objective value.
\end{proof}

\begin{lemma}[Annihilator-coset blocks and fixed diagonal]
\label{lem:coset-fixed-diagonal}
Every feasible seed has the block form
\begin{equation}
 \Xi=\bigoplus_{\mathcal C}\Xi_{\mathcal C},
 \label{eq:Xi-blocks}
\end{equation}
where the blocks are indexed by the nonempty intersections of
$H^\perp$-cosets with $\mathcal Z_+$, and
\begin{equation}
 (\Xi_{\mathcal C})_{z,z}=\frac1{\lvert Y\rvert}
 \quad(z\in\mathcal C).
 \label{eq:fixed-diagonal}
\end{equation}
There are no further linear restrictions on the entries inside one block.
\end{lemma}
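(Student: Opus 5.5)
The plan is to expand the seed in the orthonormal sector basis, $\Xi=\sum_{z,z'\in\mathcal Z_+}\Xi_{z,z'}\ket{e_z}\!\bra{e_{z'}}$, and then read off both the block structure and the diagonal directly from the two feasibility conditions of Lemma~\ref{lem:covariant-seed-reduction}.

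\textbf{Block form from $H$-invariance.} Since $V_h$ acts diagonally on the sectors, conjugation gives
\[
(V_h\Xi V_h^\dagger)_{z,z'}=\chi_z(h)\overline{\chi_{z'}(h)}\,\Xi_{z,z'}=\chi_{z-z'}(h)\,\Xi_{z,z'} .
\]
Invariance under all $h\in H$ therefore forces $\Xi_{z,z'}=0$ unless $\chi_{z-z'}$ is trivial on $H$, that is, unless $z-z'\in H^\perp$. This is precisely Eq.~\eqref{eq:Xi-blocks}. Conversely, any matrix supported on the coset blocks is $H$-invariant.

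\textbf{Fixed diagonal from normalization.} Write $g_y:=s_Y(y)$. The $(z,z')$ entry of the left side of Eq.~\eqref{eq:seed-normalization} is
\[
\Xi_{z,z'}\sum_{y\in Y}\chi_{z-z'}(g_y).
\]
For $z=z'$ this entry equals $|Y|\,\Xi_{z,z}$. Setting it equal to the identity entry on $\mathcal K$ gives Eq.~\eqref{eq:fixed-diagonal}.

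For $z\neq z'$ in the same block, the character $\chi_{z-z'}$ is trivial on $H$. It therefore descends to a character of $Y=G/H$, and the sum becomes $\sum_{y\in Y}\overline\chi(y)$ for the induced character $\overline\chi$. Here I need $z\neq z'$ to give distinct characters of $G$; this holds because they are distinct elements of $\widehat G$. Their difference is then a nontrivial character of $Y$, so the sum vanishes by character orthogonality on the finite abelian group $Y$. This is the step requiring care, since $y\mapsto V_{g_y}$ need not be a representation; only the scalar $\chi_{z-z'}(g_y)$ matters, and it is well defined on $Y$ because of $H$-triviality.

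For $z,z'$ in different blocks, $\Xi_{z,z'}=0$ already by the first step. Hence the normalization condition imposes no constraint on the off-diagonal entries within a block.

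\textbf{No further restrictions.} For the final claim, observe that any block-diagonal matrix with the prescribed diagonal automatically satisfies both linear conditions. The only remaining constraint on a seed is positivity, which is not linear.
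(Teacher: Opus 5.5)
Your proof is correct and follows essentially the same route as the paper. In the sector basis, $H$-invariance kills the entries between distinct $H^\perp$-cosets, and the $(z,z')$ entry of the normalization equation, evaluated by orthogonality of the induced characters on $Y=G/H$, fixes the diagonal at $\lvert Y\rvert^{-1}$ while leaving the off-diagonal entries inside each block unconstrained. Your remark that only the scalars $\chi_{z-z'}(s_Y(y))$ matter, and not a representation $y\mapsto V_{s_Y(y)}$, matches the paper's own caveat in Lemma~\ref{lem:covariant-seed-reduction}.
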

\begin{proof}
From Eq.~\eqref{eq:abstract-sector-representation}, $H$-invariance forces
$\Xi_{z,z'}=0$ unless $\chi_z|_H=\chi_{z'}|_H$, equivalently unless $z$ and
$z'$ lie in the same $H^\perp$-coset.  Take the $(z,z')$ matrix element of
Eq.~\eqref{eq:seed-normalization}.  Since $V_g$ is diagonal in the sector
basis, it equals
\begin{equation}
 \Xi_{z,z'}
 \sum_{y\in Y}
 \chi_z(s_Y(y))\overline{\chi_{z'}(s_Y(y))}
 =\delta_{z,z'}.
 \label{eq:seed-normalization-matrix-element}
\end{equation}
Within one surviving $H^\perp$-coset, the ratio
$\chi_z\overline{\chi_{z'}}$ is trivial on $H$ and hence defines a character
of $G/H$.  Quotient-character orthogonality gives
\begin{equation}
 \sum_{y\in Y}
 (\chi_z\overline{\chi_{z'}})(s_Y(y))
 =\lvert Y\rvert\,\mathbf1\{z=z'\}.
 \label{eq:quotient-character-orthogonality-seed}
\end{equation}
Thus Eq.~\eqref{eq:seed-normalization-matrix-element} is equivalent to
$\lvert Y\rvert\Xi_{z,z}=1$ and imposes no condition on off-diagonal entries inside a
surviving block.  Conversely, positive semidefinite blocks with this diagonal
satisfy every matrix element of the normalization equation, so there are no
additional linear restrictions.
\end{proof}

\begin{lemma}[Rank-one solution of one block of the seed]
\label{lem:blockwise-seed-optimum}
For each nonempty block $\mathcal C$,
\begin{equation}
 \max_{\substack{\Xi_{\mathcal C}\ge0\\
                 \operatorname{diag}\Xi_{\mathcal C}=\lvert Y\rvert^{-1}}}
 \bra{w_{\mathcal C}}\Xi_{\mathcal C}\ket{w_{\mathcal C}}
 =\frac1{\lvert Y\rvert}\left(\sum_{z\in\mathcal C}c_z\right)^2,
 \label{eq:block-upper}
\end{equation}
and the maximum is attained by
\begin{equation}
 \Xi_{\mathcal C}^*
 :=\frac1{\lvert Y\rvert}\ket{u_{\mathcal C}}\!\bra{u_{\mathcal C}}.
 \label{eq:block-optimum}
\end{equation}
\end{lemma}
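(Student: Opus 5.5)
The plan is to prove the upper bound by a Cauchy--Schwarz argument on the diagonal of $\Xi_{\mathcal C}$, and then check that the rank-one candidate attains it.

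\textbf{Upper bound.} Let $\Xi_{\mathcal C}\ge0$ be any block whose diagonal entries all equal $\lvert Y\rvert^{-1}$. Positive semidefiniteness gives, for every pair $z,z'\in\mathcal C$,
\[
 \lvert(\Xi_{\mathcal C})_{z,z'}\rvert
 \le\sqrt{(\Xi_{\mathcal C})_{z,z}(\Xi_{\mathcal C})_{z',z'}}
 =\lvert Y\rvert^{-1}.
\]
This is the Cauchy--Schwarz inequality for the positive semidefinite sesquilinear form $(a,b)\mapsto\bra{a}\Xi_{\mathcal C}\ket{b}$, applied to the sector vectors $\ket{e_z}$ and $\ket{e_{z'}}$.

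Expand the objective in the sector basis:
\[
 \bra{w_{\mathcal C}}\Xi_{\mathcal C}\ket{w_{\mathcal C}}
 =\sum_{z,z'\in\mathcal C}c_zc_{z'}(\Xi_{\mathcal C})_{z,z'}.
\]
The amplitudes $c_z$ are positive real numbers. Taking absolute values termwise therefore bounds the objective by $\lvert Y\rvert^{-1}\sum_{z,z'}c_zc_{z'}=\lvert Y\rvert^{-1}(\sum_{z\in\mathcal C}c_z)^2$. This holds for every feasible block, so it bounds the maximum.

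\textbf{Attainment.} Take $\Xi_{\mathcal C}^*=\lvert Y\rvert^{-1}\ket{u_{\mathcal C}}\!\bra{u_{\mathcal C}}$.
\begin{itemize}
 \item It is positive semidefinite, being a nonnegative multiple of a rank-one projector-type operator.
 \item Each diagonal entry is $\lvert Y\rvert^{-1}\lvert\braket{e_z}{u_{\mathcal C}}\rvert^2=\lvert Y\rvert^{-1}$, so it lies in the feasible set.
 \item Its value is $\lvert Y\rvert^{-1}\lvert\braket{u_{\mathcal C}}{w_{\mathcal C}}\rvert^2=\lvert Y\rvert^{-1}(\sum_{z\in\mathcal C}c_z)^2$, which meets the upper bound.
\end{itemize}
This proves both the maximum value and that it is attained at $\Xi_{\mathcal C}^*$.

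\textbf{Main subtlety.} The only step needing care is the equality case of the termwise bound. The bound $\mathrm{Re}(\Xi_{\mathcal C})_{z,z'}\le\lvert Y\rvert^{-1}$ is saturated simultaneously for all pairs precisely when every entry equals $\lvert Y\rvert^{-1}$. This in turn uses the positivity of the $c_z$, so that no phase cancellation occurs in the expansion. Since the problem only asks for the maximum and one maximizer, no uniqueness claim is needed and the argument above suffices.
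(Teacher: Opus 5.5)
Your proposal is correct and follows essentially the same route as the paper. The entrywise bound $\lvert(\Xi_{\mathcal C})_{z,z'}\rvert\le\lvert Y\rvert^{-1}$ comes from the $2\times2$ principal minors. Positivity of the $c_z$ then turns it into the stated maximum, and the rank-one matrix $\lvert Y\rvert^{-1}\ket{u_{\mathcal C}}\!\bra{u_{\mathcal C}}$ has the prescribed diagonal and saturates every entry.
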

\begin{proof}
Every $2\times2$ principal minor of a positive-semidefinite block with the
prescribed diagonal gives
$|(\Xi_{\mathcal C})_{z,z'}|\le \lvert Y\rvert^{-1}$.  Since all $c_z$ are positive,
this entrywise bound yields Eq.~\eqref{eq:block-upper}.  The rank-one matrix
in Eq.~\eqref{eq:block-optimum} is positive semidefinite, has the prescribed
diagonal, and simultaneously saturates all supported entries.
\end{proof}

\begin{proof}[Proof of Theorem~\ref{thm:detailed-seed-solution}]
By Lemma~\ref{lem:covariant-seed-reduction}, it is enough to optimize one
feasible seed.  Lemma~\ref{lem:coset-fixed-diagonal} decomposes that seed into
independent annihilator-coset blocks, while the first identity in
Eq.~\eqref{eq:abstract-conditioned-average-blocks} decomposes the zero-label
state over the same orthogonal blocks.  Consequently the reduced objective is
\begin{equation}
 \Tr(\Xi\sigma_0)
 =\sum_{\mathcal C}
 \bra{w_{\mathcal C}}\Xi_{\mathcal C}\ket{w_{\mathcal C}}.
 \label{eq:abstract-seed-objective-block-decomposition}
\end{equation}
The fixed-diagonal normalization imposes no linear coupling between distinct
blocks.  Lemma~\ref{lem:blockwise-seed-optimum} may therefore be applied
independently in every block.  Summing the blockwise rank-one optimizers gives
$\Xi^*$ in Eq.~\eqref{eq:global-seed}, and summing their objective values gives
Eq.~\eqref{eq:abstract-optimal-coset-formula}.

We finally identify this optimal seed with the PGM.  On the common support,
the supported inverse square root of the average state is
\begin{equation}
 \overline\sigma^{-1/2}
 =\sum_{z\in\mathcal Z_+}c_z^{-1}
 \ket{e_z}\!\bra{e_z}.
 \label{eq:abstract-average-pseudoinverse}
\end{equation}
Applying this inverse square root to a coset vector removes its amplitudes:
$\overline\sigma^{-1/2}\ket{w_{\mathcal C}}
 =\ket{u_{\mathcal C}}$.  Hence the zero-label PGM element is
\begin{align}
 M_0^{\PGM}
 &:=\overline\sigma^{-1/2}
 \left(\frac{\sigma_0}{\lvert Y\rvert}\right)
 \overline\sigma^{-1/2}\notag\\
 &=\frac1{\lvert Y\rvert}\sum_{\mathcal C}
 \ket{u_{\mathcal C}}\!\bra{u_{\mathcal C}}
 =\Xi^*.
 \label{eq:pgm-equals-optimal}
\end{align}
Thus the PGM attains the independently optimized value in every surviving
block.

All ensemble states are supported on $\mathcal K$.  To obtain a POVM on the
entire Hilbert space, choose arbitrary operators $F_y\ge0$ on
$\mathcal K^\perp$ satisfying
$\sum_{y\in Y}F_y=\Pi_{\mathcal K^\perp}$, and replace each supported POVM
element by $M_y^{\PGM}\oplus F_y$.  This completion preserves every success
probability.  Independence from the representative-selection map $s_Y$ was
proved in Lemma~\ref{lem:covariant-seed-reduction}.
\end{proof}

\begin{remark}[Scope of the abstract PGM theorem]
The theorem applies when subgroup averaging produces rank-one coherence blocks
built from orthogonal one-dimensional character sectors and quotient covariance
fixes only their diagonals.  It does not assert PGM optimality for arbitrary
covariant mixed-state ensembles; for general background on the PGM and the
square-root measurement, see
\cite{HausladenWootters1994PGM,EldarForney2001SRM,ItenRenesSutter2017PGM}.
\end{remark}

\subsubsection{Application to the two moment realizations}
It remains to verify that the odd-prime and characteristic-two moment
realizations supply precisely the amplitudes and annihilator-coset blocks
required by the abstract theorem.  Once this identification is made, no
further optimization is needed in either characteristic.

For odd $p$, Eqs.~\eqref{eq:orbit-sector-expansion},
\eqref{eq:sigma-zero-blocks}, and \eqref{eq:average-state} verify the
hypotheses with
\begin{equation}
 c_z=p^{-nk/2}\sqrt{N_z},
 \quad
 \mathcal Z_+=\{z:N_z>0\}.
 \label{eq:odd-abstract-amplitudes}
\end{equation}
For $p=2$, Eqs.~\eqref{eq:integrated-qubit-orbit-expansion},
\eqref{eq:integrated-qubit-conditioned-blocks}, and
\eqref{eq:integrated-qubit-average-state} verify the same hypotheses with
\begin{equation}
 c_z=2^{-nk/2}\sqrt{N_z^{(2)}},
 \quad
 \mathcal Z_+=\{z:N_z^{(2)}>0\}.
 \label{eq:binary-abstract-amplitudes}
\end{equation}
In both cases, the kernel of the partial-label map supplies the same quotient-conditioning
operation, while its annihilator cosets supply the blocks $\mathcal C$.

\begin{corollary}[Finite-size PGM optimality for every prime]
\label{cor:abstract-theorem-all-prime-application}
For every fixed prime $p$ and every finite $n,m,k$, the mixed-state PGM is
minimum-error optimal for exact partial-label identification, and
\begin{equation}
 P_{\mathrm{id}}^{(k),*}(n,m;p)
 =\frac{p^{-nk}}{\lvert Y_{n,m,p}\rvert}\sum_{\zeta}
 \left(\sum_{z\in\mathcal C_{\zeta}}\sqrt{N_z}\right)^2,
 \label{eq:optimal-formula-detailed}
\end{equation}
where the characteristic-two notation suppresses the superscript $(2)$ on
$N_z^{(2)}$ and $\mathcal C_{\zeta}^{(2)}$.
\end{corollary}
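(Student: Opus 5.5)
The corollary follows by applying Theorem~\ref{thm:detailed-seed-solution} once for each characteristic, using the dictionaries recorded just before the statement. First we fix the correspondence between the abstract and concrete data. For odd $p$, the group is $G=\Sym_n(\F_p)\times\F_p^n$ and $H=H_{n,m,p}$ is the subgroup in Eq.~\eqref{eq:H}. Characters of $G$ are identified with moment values $z=(Q,S)$ through the pairing in Eq.~\eqref{eq:pairing}; this pairing is nondegenerate because $\tfrac12\Tr(AQ)+b^TS$ pairs the off-diagonal entries with a unit factor. The abstract ingredients are then
\begin{itemize}
\item the positive set $\mathcal Z_+=\operatorname{im}\Phi_k$;
\item the sector vectors $|e_z\rangle$ from Eq.~\eqref{eq:fiber-sector};
\item the amplitudes $c_z=p^{-nk/2}\sqrt{N_z}$.
\end{itemize}
Normalization holds because $\sum_zN_z=p^{nk}$, and orthonormality holds because distinct fibers of $\Phi_k$ are disjoint. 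Lemma~\ref{lem:orbit-moment-expansion} shows that $|\Psi_g\rangle=V_g|\Psi_0\rangle$ with $V_g$ acting diagonally by $\chi_z(g)$. Hence the concrete conditioned states $\sigma_y$ of Eq.~\eqref{eq:sigma-y} coincide with the abstract states in Eq.~\eqref{eq:abstract-conditioned-ensemble}. For $p=2$ the same identification is given by Eqs.~\eqref{eq:binary-abstract-dictionary-groups}--\eqref{eq:binary-abstract-dictionary-sectors}, using Lemma~\ref{lem:binary-orbit-moment-expansion}.

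Second, the optimal uniform-prior success probability for the quotient ensemble is by definition $P_{\mathrm{id}}^{(k),*}(n,m;p)$. This uses Lemma~\ref{lem:uniform-partial-labels}, or its binary analogue with constant fibers: the uniform complete label induces the uniform quotient label, and the state conditioned on each partial label is $\sigma_y$. Theorem~\ref{thm:detailed-seed-solution} then yields both PGM optimality and the value
\[
\frac1{|Y|}\sum_{\mathcal C}\Bigl(\sum_{z\in\mathcal C}c_z\Bigr)^2.
\]
Here $\mathcal C$ ranges over the nonempty intersections of $H^\perp$-cosets with $\mathcal Z_+$.

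Third, we translate the cosets into the coordinate fibers. Lemma~\ref{lem:unreported-moment-cosets} (odd $p$) and Lemma~\ref{lem:binary-annihilator-cosets} ($p=2$) show that the $H^\perp$-cosets are exactly the fibers $\mathcal C_\zeta=r^{-1}(\zeta)$ and $\mathcal C^{(2)}_\zeta=r_2^{-1}(\zeta)$. Substituting $c_z=p^{-nk/2}\sqrt{N_z}$ gives
\[
\Bigl(\sum_{z\in\mathcal C}c_z\Bigr)^2=p^{-nk}\Bigl(\sum_{z\in\mathcal C_\zeta}\sqrt{N_z}\Bigr)^2 .
\]
Terms with $N_z=0$ contribute nothing, so we may sum over the full fiber $\mathcal C_\zeta$. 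Fibers containing no occurring $z$ contribute zero, so they may be included as well. This produces Eq.~\eqref{eq:optimal-formula-detailed}, with $|Y_{n,m,p}|$ as given in Eq.~\eqref{eq:Y-size}.

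The main point needing care is the first step. We must ensure that the concrete ensemble really is a quotient-conditioned character-sector ensemble, in particular that the moment values index genuinely distinct characters. Distinctness is exactly what makes the orthogonality relations in Lemma~\ref{lem:coset-fixed-diagonal} valid. For $p=2$ the corresponding fact is the stated nondegeneracy of the pairing in Eq.~\eqref{eq:binary-enhanced-character-pairing}. For odd $p$ we would verify that the map $(Q,S)\mapsto\chi_{(Q,S)}$ is injective. Given this, the result is immediate from Theorem~\ref{thm:detailed-seed-solution}.
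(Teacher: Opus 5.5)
Your proposal is correct and follows the paper's own route. Verify that the odd-prime and binary moment realizations satisfy the hypotheses of Theorem~\ref{thm:detailed-seed-solution} with $c_z=p^{-nk/2}\sqrt{N_z}$ and $\mathcal Z_+=\{z:N_z>0\}$, identify the $H^\perp$-cosets with the fibers $\mathcal C_\zeta$, and substitute into Eq.~\eqref{eq:abstract-optimal-coset-formula}. Your explicit check that distinct moment values give distinct characters is one the paper uses only implicitly for odd $p$. It holds because each diagonal entry pairs with the unit coefficient $\tfrac12$ and each off-diagonal entry with coefficient $1$ — not only the off-diagonal entries, as your phrasing suggests.
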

\begin{proof}
For odd $p$, Eq.~\eqref{eq:odd-abstract-amplitudes} gives, for each coset,
\begin{equation}
 \left(\sum_{z\in\mathcal C_{\zeta}}c_z\right)^2
 =p^{-nk}\left(\sum_{z\in\mathcal C_{\zeta}}\sqrt{N_z}\right)^2.
 \label{eq:odd-amplitude-substitution}
\end{equation}
For $p=2$, Eq.~\eqref{eq:binary-abstract-amplitudes} gives the identical
formula with $p=2$ and the suppressed binary superscripts:
\begin{equation}
 \left(\sum_{z\in\mathcal C_{\zeta}^{(2)}}c_z\right)^2
 =2^{-nk}\left(\sum_{z\in\mathcal C_{\zeta}^{(2)}}
 \sqrt{N_z^{(2)}}\right)^2.
 \label{eq:binary-amplitude-substitution}
\end{equation}
Substitute Eq.~\eqref{eq:odd-amplitude-substitution}, respectively
Eq.~\eqref{eq:binary-amplitude-substitution}, into the abstract optimum
Eq.~\eqref{eq:abstract-optimal-coset-formula}.  In both characteristics the
result is Eq.~\eqref{eq:optimal-formula-detailed}.
\end{proof}

\begin{remark}[Finite checks]
\label{rem:checks}
If $k=0$, the common support is one-dimensional and the theorem gives
$P_{\rm opt}=\lvert Y\rvert^{-1}$.  If $m=n$, then $H=\{0\}$ and there is one
annihilator-coset block.  Finally, Cauchy--Schwarz inside each coset, together with
$\sum_zN_z=p^{nk}$, directly gives
$P_{\mathrm{id}}^{(k),*}\le1$.
\end{remark}

The all-prime coset formula now supplies the finite exact-identification input  
to the weighted converse in Section~\ref{sec:converse}.  The direct argument in  
Section~\ref{sec:fourier-achievability} uses a different measurement, namely the  
PGM for the unrestricted transitive ensemble of pure stabilizer states.  
\section{Converse}
\label{sec:converse}
The exact finite-copy formula of
Section~\ref{sec:common-finite-size-framework} retains both the probability
of each lower moment and the number of full moments occurring above it.
A support-dimension estimate discards this distributional information and
fails to reach all copy rates below one.
Theorem~\ref{thm:converse-weighted-probability-completion-reduction} separates
the two factors by a weighted cosetwise bound.  We estimate the probability
through row-space dimension and radical dimension $(a,u)$, and the moment
count through completion ranks $(\ell,t)$.  The odd-prime and binary finite
estimates then meet in the normalized scalar optimization of
Theorem~\ref{thm:converse-common-asymptotic-reduction}, which yields
$J(\alpha,\beta)$.  Only at that final stage are the factors $(K_p)^k$
absorbed into uniform $o(n^2)$ terms, using $m,k=O(n)$.

\subsection{Weighted probability--completion reduction}
\label{sec:unreported-reduction}

The exact PGM formula is a coset sum involving moment-fiber
multiplicities.  For the converse, it is more useful to retain only the
probability of each lower event and the number of occurring full moment values
in its annihilator coset.  Cosetwise weighted counting gives this reformulation.
The lower realizations are then grouped by the row-space dimension $a$ and
radical dimension $u$ of their augmented lower matrices, while the
moment-support factor is bounded by a clipped rank-constrained completion
count depending on the lower rank $a-u$.

\subsubsection{Statement of the weighted reduction}
\label{subsubsec:weighted-reduction-statement}

\begin{theorem}[Weighted probability--completion reduction]
\label{thm:converse-weighted-probability-completion-reduction}
For odd $p$, put $h=n-m+1$.  For a uniform lower matrix
$X_2\in\F_p^{(n-m)\times k}$, define
\begin{equation}
 \mathcal W(X_2):=\operatorname{row}
 \begin{pmatrix}X_2\\ \1_k^T\end{pmatrix},
 \quad a(X_2):=\dim\mathcal W(X_2),
 \quad u(X_2):=\dim\bigl(\mathcal W(X_2)\cap\mathcal W(X_2)^\perp\bigr),
 \label{eq:main-odd-lower-realization-row-space-pair}
\end{equation}
and set
\begin{equation}
 \mathsf p_{n,m,k}^{\mathrm{odd}}(a,u)
 :=\Pr\{a(X_2)=a,\ u(X_2)=u\}.
 \label{eq:main-odd-lower-realization-mass-function}
\end{equation}
Here orthogonality is taken with respect to the standard symmetric bilinear
form on $\F_p^k$.  For a symmetric
$B\in\Sym_h(\F_p)$ of rank $r$, let
\begin{equation}
 \mathcal A_{m,h,k}(r)
 :=\left|\left\{(Q,C)\in\Sym_m(\F_p)\times\F_p^{m\times h}:
 \rank\begin{pmatrix}Q&C\\C^T&B\end{pmatrix}\le k\right\}\right|.
 \label{eq:main-odd-completion-count-interface}
\end{equation}
The right-hand side depends on $B$ only through $r$.  Then
\begin{equation}
 P_{\mathrm{id}}^{(k),*}(n,m;p)
 \le
 \sum_{a,u}\mathsf p_{n,m,k}^{\mathrm{odd}}(a,u)
 \min\left\{1,
 \frac{\mathcal A_{m,h,k}(a-u)}{\lvert Y_{n,m,p}\rvert}\right\}.
 \label{eq:main-odd-weighted-completion-interface}
\end{equation}

For $p=2$, let $\mathsf p_{n,m,k}^{(2)}(a,u)$ be the probability that a
uniform lower matrix $Y\in\F_2^{(n-m)\times k}$ has row-space dimension $a$
and radical dimension $u$.  For a symmetric
$B\in\Sym_{n-m}(\F_2)$ of rank $r$, define
\begin{equation}
 \mathcal N_{m,n-m,k}^{(2)}(r)
 :=\left|\left\{(Q,C)\in\Sym_m(\F_2)
 \times\F_2^{m\times(n-m)}:
 \rank\begin{pmatrix}Q&C\\C^T&B\end{pmatrix}\le k\right\}\right|.
 \label{eq:main-binary-completion-count-interface}
\end{equation}
Again this number depends on $B$ only through $r$.  Then
\begin{equation}
 P_{\mathrm{id}}^{(k),*}(n,m;2)
 \le
 \sum_{a,u}\mathsf p_{n,m,k}^{(2)}(a,u)
 \min\left\{1,
 \frac{2^m\mathcal N_{m,n-m,k}^{(2)}(a-u)}{\lvert Y_{n,m,2}\rvert}
 \right\}.
 \label{eq:main-binary-weighted-completion-interface}
\end{equation}
\end{theorem}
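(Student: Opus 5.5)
The plan starts from the exact coset formula of Corollary~\ref{cor:abstract-theorem-all-prime-application}. We split it into a probability weight for each lower moment value times a count of occurring full moments above it, and then regroup by the invariants $(a,u)$.

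\emph{Step 1: the Cauchy--Schwarz bound in each coset.} As in Eq.~\eqref{eq:integrated-qubit-CS}, for each lower value $\zeta$ we have
$\bigl(\sum_{z\in\mathcal C_\zeta}\sqrt{N_z}\bigr)^2\le d_\zeta T_\zeta$. The trivial bound $\bigl(\sum_z\sqrt{N_z}\bigr)^2\le\bigl(\sum_z\sqrt{N_z}\bigr)\max$ is not enough on its own, so we also use a clip. A single coset contributes at most $p^{-nk}T_\zeta$, because in Lemma~\ref{lem:blockwise-seed-optimum} the block value never exceeds $\|w_\zeta\|^2\cdot|\mathcal C|/|Y|$, and the total success probability is at most one. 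These two bounds together give
\[
P_{\mathrm{id}}^{(k),*}\le\sum_\zeta p^{-nk}T_\zeta\min\{1,d_\zeta/|Y|\}.
\]
The clip holds cosetwise because $d_\zeta\ge1$, and the conditional success probability given $\zeta$ is at most one.

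\emph{Step 2: interpret $p^{-nk}T_\zeta$.} Write the $k$ basis vectors as the columns of $X=\binom{X_1}{X_2}\in\F_p^{n\times k}$. For odd $p$, $T_\zeta/p^{nk}$ is exactly the probability that a uniform $X_2$ satisfies $(X_2X_2^T,X_2\1_k)=\zeta$. The same holds for $p=2$ with the enhanced lower moments. So the sum becomes an expectation over a uniform $X_2$ of $\min\{1,d_{\zeta(X_2)}/|Y|\}$.

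\emph{Step 3: bound $d_\zeta$ by a completion count depending only on $a-u$.} For odd $p$, append the all-ones row and use the Gram matrix of the augmented $h\times k$ matrix $\widetilde X=\binom{X\ }{\1_k^T}$. This Gram matrix is the symmetric $(n+1)\times(n+1)$ matrix containing $XX^T$, $X\1_k$, and the entry $k$. It has rank at most $k$. Its lower $h\times h$ block $B$ is determined by $\zeta$ together with the fixed entry $k$, and $\rank B=a-u$, since the Gram matrix of a spanning set of $\mathcal W$ has rank $\dim\mathcal W-\dim(\mathcal W\cap\mathcal W^\perp)$. Every occurring full moment above $\zeta$ gives a distinct completion $(Q_{11},(Q_{12},S_1))$ with total rank $\le k$. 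This is an injection into the set counted by $\mathcal A_{m,h,k}(a-u)$.

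To see that the count depends only on $r$, use congruence $B\mapsto P^TBP$ and transport $(Q,C)\mapsto(Q,CP)$. This preserves the rank. For odd $p$, symmetric forms of fixed rank split into two classes, and the two classes are interchanged by scaling by a nonsquare. Scaling the whole matrix by that nonsquare then handles the other class. For $p=2$ the same argument works with the lower matrix $Y$ and symmetric Gram blocks over $\F_2$, although alternating versus non-alternating $B$ requires a short check. The additional mod-$4$ diagonal information, namely $r_a$ mod $4$ for $a\le m$ beyond its value mod $2$, is bounded crudely by a factor $2^m$.

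\emph{Step 4: group and finish.} Since $\min\{1,\mathcal A(a(X_2)-u(X_2))/|Y|\}$ depends on $X_2$ only through $(a,u)$, grouping by $(a,u)$ gives Eq.~\eqref{eq:main-odd-weighted-completion-interface} and Eq.~\eqref{eq:main-binary-weighted-completion-interface}.

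I expect the main obstacle to be the binary case of Step 3. There we must show that the mod-$4$ linear moments and the mod-$2$ off-diagonal moments inject into rank-$\le k$ completions up to the factor $2^m$. We also need independence from the congruence class of $B$, since alternating and non-alternating forms are not congruent.
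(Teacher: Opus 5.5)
Your overall route is the paper's: Cauchy--Schwarz inside each annihilator coset, the double count $T_\zeta=p^{mk}M_\zeta$ turning tuple mass into the lower-event probability, injection of the occurring visible moments above $\zeta$ into rank-$\le k$ completions of the lower Gram block, the identity $\rank B_\zeta=a-u$, and regrouping by $(a,u)$.

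One justification in Step~1 is off. The cosetwise cap does not follow from the total success being at most one, nor from $d_\zeta\ge1$. It holds because $d_\zeta\le|\mathcal C_\zeta|=|H^\perp|=|Y|$, so Cauchy--Schwarz already gives $\bigl(\sum_{z\in\mathcal C_\zeta}\sqrt{N_z}\bigr)^2\le d_\zeta T_\zeta\le|Y|\,T_\zeta$.

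The genuine gap is your proof that the completion count depends on $B$ only through $r$. The statement asserts this, and it is needed for $\mathcal A_{m,h,k}(a-u)$ and $\mathcal N^{(2)}_{m,n-m,k}(a-u)$ to be well defined. Congruence with $(Q,C)\mapsto(Q,CP)$ and global scaling do preserve the count. However, scaling by a nonsquare $\epsilon$ multiplies the discriminant of the nondegenerate part by $\epsilon^r$, which is a square when $r$ is even. So for even $r$ the two classes, e.g.\ those of $\operatorname{diag}(1,1,0,\dots,0)$ and $\operatorname{diag}(1,\epsilon,0,\dots,0)$, are each mapped to themselves and are never compared. Over $\F_2$ there is no nontrivial scalar, and alternating and non-alternating forms of equal rank are not congruent, so no ``short check'' of this kind exists either.

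The missing idea (Lemmas~\ref{lem:first-congruence} and~\ref{lem:binary-Schur-reduction}, Proposition~\ref{prop:exact-ambient-completion-count}, Lemma~\ref{lem:integrated-binary-completion-count}) is to avoid classifying $B$ at all. Write $P^TBP=\operatorname{diag}(D,0_s)$ with $D$ invertible and $CP=(C_1\ C_0)$. The characteristic-free block congruence
\[
 \begin{pmatrix}Q&C_1&C_0\\C_1^T&D&0\\C_0^T&0&0_s\end{pmatrix}
 \sim D\oplus
 \begin{pmatrix}Q-C_1D^{-1}C_1^T&C_0\\C_0^T&0_s\end{pmatrix}
\]
shows that, for each fixed $C_1$, the translation $Q\mapsto Q-C_1D^{-1}C_1^T$ is a bijection of $\Sym_m$. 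Hence the count equals $p^{mr}$ times the number of pairs $(A,C_0)$ with $\rank\bigl(\begin{smallmatrix}A&C_0\\C_0^T&0_s\end{smallmatrix}\bigr)\le k-r$, where $s$ is the nullity of $B$. Here $D$, and with it the discriminant or alternating type of $B$, has disappeared.

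With this in place, the binary step you flagged as the main obstacle is routine. Over $\F_2$ one has $x^2=x$, so $Q_z=X_1X_1^T$, $C_z=X_1Y^T$ and $B_\zeta=YY^T$. The full Gram matrix $XX^T$ has rank at most $k$, and forgetting the $m$ visible mod-$4$ diagonal lifts has fibers of size at most $2^m$.
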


The proof begins with Proposition~\ref{prop:converse-weighted-moment-fiber-reduction},
which gives the weighted bound
\[
 P_{\mathrm{id}}^{(k),*}
 \le \sum_\zeta\pi_\zeta\frac{d_\zeta}{\lvert Y\rvert}.
\]
Here $\pi_\zeta$ is the probability of the lower event and
$d_\zeta/\lvert Y\rvert$ is the normalized number of occurring full moment
values in the corresponding annihilator coset.  It is a Fourier-support
quantity, not a count of candidate partial labels.  The characteristic-specific
completion comparisons bound $d_\zeta$ by a completion count depending only
on $\rank B_\zeta$.  For odd $p$, a realizing lower matrix $X_2$ satisfies
$\rank B_\zeta=a(X_2)-u(X_2)$, so the expectation over $X_2$ is grouped by the
realization events $(a,u)$.  The binary branch uses its already defined
lower-matrix row-space events.  These two expectation regroupings produce the
bounds in the theorem.

\subsubsection{Weighted lower-moment bound}

\begin{proposition}[Weighted counting bound]
\label{prop:converse-weighted-moment-fiber-reduction}
For every prime $p$ and all finite integers $n,m,k$ with $1\le m\le n$, let
$Y=Y_{n,m,p}$ be the partial-label quotient.  For each lower-coordinate value
$\zeta$, let $\mathcal C_{\zeta}$ be the corresponding annihilator-coset,
let
\begin{equation}
 d_{\zeta}:=
 \bigl|\mathcal C_{\zeta}\cap\operatorname{im}\Phi_k\bigr|,
 \label{eq:theorem-A-d-zeta}
\end{equation}
and let $\pi_{\zeta}$ be the probability that a uniform lower block
$X_2\in\F_p^{(n-m)\times k}$ has lower moment $\zeta$.  For odd $p$, explicitly,
if $\zeta=(\zeta_Q,\zeta_S)$, then
\begin{equation}
 \pi_{\zeta}:=
 p^{-(n-m)k}
 \left|\left\{X_2\in\F_p^{(n-m)\times k}:
 X_2X_2^T=\zeta_Q,\ X_2\1_k=\zeta_S\right\}\right|.
 \label{eq:theorem-A-pi-zeta}
\end{equation}
For $p=2$, $\zeta$ and $\pi_{\zeta}$ denote the lower enhanced moment and its
probability under the binary moment map.  In the bound below,
$\pi_{\zeta}$ is the probability weight of the lower event, whereas
$d_{\zeta}/\lvert Y\rvert$ is the number of occurring full moment values in
its annihilator coset, normalized by the fixed partial-label alphabet size
$\lvert Y\rvert$.  Fixing a lower moment does not change this denominator.  In either characteristic,
$\pi_{\zeta}\ge0$ and $\sum_{\zeta}\pi_{\zeta}=1$, and
\begin{equation}
 P_{\mathrm{id}}^{(k),*}(n,m;p)
 \le
 \sum_{\zeta}\pi_{\zeta}\frac{d_{\zeta}}{\lvert Y\rvert}.
 \label{eq:weighted-support-bound}
\end{equation}
For odd $p$, Eq.~\eqref{eq:theorem-A-pi-zeta} is precisely the lower-block
coordinate definition induced by $X_2\mapsto(X_2X_2^T,X_2\1_k)$; in
characteristic two the same conclusion follows from the enhanced-moment
coset formula and its cosetwise Cauchy--Schwarz bound.
\end{proposition}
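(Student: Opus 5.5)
The plan is to start from the exact coset formula of Corollary~\ref{cor:abstract-theorem-all-prime-application}, apply Cauchy--Schwarz inside each coset, and then identify the total coset mass with the lower-event probability. First, for each lower value $\zeta$, the sum $\sum_{z\in\mathcal C_\zeta}\sqrt{N_z}$ has exactly $d_\zeta$ nonzero terms. Cauchy--Schwarz therefore gives
\[
\Bigl(\sum_{z\in\mathcal C_\zeta}\sqrt{N_z}\Bigr)^2\le d_\zeta\sum_{z\in\mathcal C_\zeta}N_z=d_\zeta T_\zeta .
\]
In characteristic two this is exactly Eq.~\eqref{eq:integrated-qubit-CS}. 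Substituting into Eq.~\eqref{eq:optimal-formula-detailed} yields
\[
P_{\mathrm{id}}^{(k),*}\le \frac{1}{\lvert Y\rvert}\sum_\zeta d_\zeta\, p^{-nk}T_\zeta .
\]

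Second, I would show that $p^{-nk}T_\zeta=\pi_\zeta$. By Eq.~\eqref{eq:T-zeta-counted-set}, $T_\zeta$ counts the ordered tuples $\boldsymbol x\in(\F_p^n)^k$ whose moment lies in $\mathcal C_\zeta$. Arrange the tuple as an $n\times k$ matrix $X=\binom{X_1}{X_2}$ with columns $x_j$. Then $\sum_j x_jx_j^T=XX^T$ and $\sum_j x_j=X\1_k$, so the lower coordinate $r(\Phi_k(\boldsymbol x))=(Q_{22},S_2)$ equals $(X_2X_2^T,X_2\1_k)$. This depends only on the lower block $X_2$, and $X_1$ is free. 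Hence
\[
T_\zeta=p^{mk}\,\bigl|\{X_2:(X_2X_2^T,X_2\1_k)=\zeta\}\bigr|,
\]
and dividing by $p^{nk}$ gives Eq.~\eqref{eq:theorem-A-pi-zeta}.

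For $p=2$ the same computation applies to $r_2$ from Eq.~\eqref{eq:binary-lower-moment-restriction}. The lower coordinates $(r_a)_{a>m}$ (mod $4$) and $(T_{ab})_{m<a<b}$ (mod $2$) are functions of the rows of $X$ indexed by $a>m$ only. So again $2^{-nk}T^{(2)}_\zeta$ is the probability that a uniform lower block has lower enhanced moment $\zeta$, which is the definition of $\pi_\zeta$.

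Finally, the properties $\pi_\zeta\ge0$ and $\sum_\zeta\pi_\zeta=1$ follow because the $\pi_\zeta$ are the pushforward of the uniform distribution on lower blocks. Equivalently, summing the identity gives $\sum_\zeta T_\zeta=\sum_z N_z=p^{nk}$. Combining the two steps gives Eq.~\eqref{eq:weighted-support-bound}.

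The only step needing care is the second one: checking that the lower coordinates genuinely depend on $X_2$ alone, with no coupling through the upper rows, in both characteristics. In the binary case, note in particular that the mod-$4$ linear moment of a lower index involves only that row. Once this is verified, everything else is routine.
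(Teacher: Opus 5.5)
Your proposal is correct and follows the paper's proof. The paper likewise localizes the lower moment to $X_2$ through the block split $X=\binom{X_1}{X_2}$ (and through the restriction $r_2$ in characteristic two). It then double counts with $X_1$ free to get $p^{-nk}T_\zeta=\pi_\zeta$, and applies Cauchy--Schwarz inside each coset of the exact PGM formula; you perform Cauchy--Schwarz before the mass identification rather than after, which changes nothing.
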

\begin{proof}[Proof of Proposition~\ref{prop:converse-weighted-moment-fiber-reduction}]
The canonical coordinate split first identifies the lower moment labeling an annihilator coset.  A double count then converts the total tuple mass in that coset into a probability weight, after which Cauchy--Schwarz separates this probability from the number of occurring full moments.  Although the odd-prime and characteristic-two moment maps differ, the double-counting and Cauchy--Schwarz arguments are identical.

\smallskip
\noindent\emph{Localization of the lower event to $X_2$.}
Suppose first that $p$ is odd.  One ordered computational-basis tuple may be
written as a matrix
\begin{equation}
 X=(x_1\ \cdots\ x_k)
 =\begin{pmatrix}X_1\\X_2\end{pmatrix},
 \quad
 X_1\in\F_p^{m\times k},\quad
 X_2\in\F_p^{(n-m)\times k}.
 \label{eq:X-blocks}
\end{equation}
For the moment value $\Phi_k(x_1,\ldots,x_k)=(Q,S)$, block multiplication gives
\begin{align}
 Q_{11}&=X_1X_1^T,&
 Q_{12}&=X_1X_2^T,&
 Q_{22}&=X_2X_2^T,\notag\\
 S_1&=X_1\1_k,&
 S_2&=X_2\1_k.
 \label{eq:moment-blocks}
\end{align}
Thus
\begin{equation}
 \zeta(X_2):=(X_2X_2^T,X_2\1_k).
 \label{eq:zeta-of-X2}
\end{equation}
depends only on $X_2$.  By the coordinate description of the
$H^\perp$-cosets in Eq.~\eqref{eq:coset-as-coordinate-fiber}, the full moment
of $X$ lies in $\mathcal C_{\zeta(X_2)}$.  Conversely, fixing $\zeta$ fixes
exactly the lower moment coordinates and leaves $X_1$ unrestricted at this
point.  The weighted reduction uses only this part of the block decomposition.  The augmented Gram matrix and its rank constraint belong to
the later completion theorem and are not used here.
Here and throughout the augmented-Gram formulas, the integer $k$ is read through its image in the prime subfield of $\F_p$.  This convention includes $p\mid k$, where that image is zero, and is consistent with the separate $k=0$ one-dimensional-support check.

For $p=2$, the same split $X=(X_1;X_2)$ is used with the enhanced moment map.
The restriction map $r_2$ in Eq.~\eqref{eq:binary-lower-moment-restriction}
keeps exactly the lower mod-$4$ linear coordinates and the lower--lower
mod-$2$ quadratic coordinates.  Hence the lower enhanced moment
$\zeta_2(X_2)$ again depends only on $X_2$, and
Eq.~\eqref{eq:binary-annihilator-coset-equivalence} identifies its fibers with
the binary annihilator cosets.

\smallskip
\noindent\emph{Conversion of tuple mass into a probability weight.}
For odd $p$, let
\begin{equation}
 M_{\zeta}:=
 \left|\left\{X_2\in\F_p^{(n-m)\times k}:
 (X_2X_2^T,X_2\1_k)=\zeta\right\}\right|.
 \label{eq:unreported-fiber-size}
\end{equation}
Recall that
\begin{equation}
 T_{\zeta}:=\sum_{z\in\mathcal C_{\zeta}}N_z.
 \label{eq:theorem-A-T-zeta}
\end{equation}
counts all ordered tuples whose full moment belongs to
$\mathcal C_{\zeta}$.  This is an auxiliary mass used only in the present
double count.  Count those tuples in two ways.  Partitioning by the full
moment gives $T_{\zeta}$.  Choosing $X_2$ first gives
$M_{\zeta}$ choices, after which $X_1\in\F_p^{m\times k}$ is arbitrary and
has $p^{mk}$ choices.  Therefore
\begin{equation}
 T_{\zeta}=p^{mk}M_{\zeta}.
 \label{eq:unreported-mass-identity}
\end{equation}
The matrices $X_2$ form a uniform sample space of size $p^{(n-m)k}$, so
\begin{equation}
 \pi_{\zeta}:=\frac{M_{\zeta}}{p^{(n-m)k}}.
 \label{eq:lower-block-probability}
\end{equation}
is a probability mass function.

In characteristic two, define
\begin{equation}
 M_{\zeta}^{(2)}
 :=\left|\left\{X_2\in\F_2^{(n-m)\times k}:
 \zeta_2(X_2)=\zeta\right\}\right|,
 \quad
 \pi_{\zeta}^{(2)}:=\frac{M_{\zeta}^{(2)}}{2^{(n-m)k}}.
 \label{eq:binary-lower-enhanced-probability}
\end{equation}
The same double count gives
\begin{equation}
 T_{\zeta}^{(2)}=2^{mk}M_{\zeta}^{(2)}.
 \label{eq:binary-enhanced-mass-identity}
\end{equation}
The lower enhanced events partition the binary sample space, so
$\pi_{\zeta}^{(2)}$ is also a probability mass function.

\smallskip
\noindent\emph{Reduction of the remaining full moments to an ambiguity count.}
For odd $p$, the exact finite-size formula
Eq.~\eqref{eq:optimal-formula-detailed} is
\begin{equation}
 P_{\mathrm{id}}^{(k),*}(n,m;p)
 =\frac{p^{-nk}}{\lvert Y\rvert}\sum_{\zeta}
 \left(\sum_{z\in\mathcal C_{\zeta}}\sqrt{N_z}\right)^2.
 \label{eq:weighted-bound-starting-formula}
\end{equation}
Only the indices $z$ for which $N_z>0$ contribute.  Hence
Cauchy--Schwarz within one coset gives
\begin{equation}
 \left(\sum_{z\in\mathcal C_{\zeta}}\sqrt{N_z}\right)^2
 \le d_{\zeta}\sum_{z\in\mathcal C_{\zeta}}N_z
 =d_{\zeta}T_{\zeta}.
 \label{eq:weighted-bound-coset-CS}
\end{equation}
Substituting Eq.~\eqref{eq:unreported-mass-identity} and using
$p^{-nk}p^{mk}=p^{-(n-m)k}$ yields
\begin{align}
 P_{\mathrm{id}}^{(k),*}(n,m;p)
 \le\frac{p^{-nk}}{\lvert Y\rvert}
 \sum_{\zeta}d_{\zeta}p^{mk}M_{\zeta}
 =\sum_{\zeta}\pi_{\zeta}\frac{d_{\zeta}}{\lvert Y\rvert}.
 \label{eq:weighted-support-bound-odd-proof}
\end{align}

For $p=2$, Eqs.~\eqref{eq:integrated-qubit-finite-pgm} and
\eqref{eq:integrated-qubit-CS} give
\begin{equation}
 P_{\mathrm{id}}^{(k),*}(n,m;2)
 \le\frac{2^{-nk}}{\lvert Y_{n,m,2}\rvert}
 \sum_{\zeta}d_{\zeta}^{(2)}T_{\zeta}^{(2)}.
 \label{eq:binary-weighted-bound-before-mass}
\end{equation}
Substituting Eq.~\eqref{eq:binary-enhanced-mass-identity} gives
\begin{equation}
 P_{\mathrm{id}}^{(k),*}(n,m;2)
 \le\sum_{\zeta}\pi_{\zeta}^{(2)}
 \frac{d_{\zeta}^{(2)}}{\lvert Y_{n,m,2}\rvert}.
 \label{eq:binary-weighted-coset-bound}
\end{equation}
Suppressing the characteristic-two superscripts in the common statement gives
Eq.~\eqref{eq:weighted-support-bound} in both characteristics.

When $k=0$, the tuple space consists of one empty tuple and the same argument
reduces to $P_{\mathrm{id}}^{(0),*}=\lvert Y\rvert^{-1}$, in agreement with
Remark~\ref{rem:checks}.  Thus no positive-copy assumption is needed for the
proposition.
\end{proof}

This proposition supplies the weighted lower-moment input to
Theorem~\ref{thm:converse-weighted-probability-completion-reduction}.  It
remains to control the row-space pair $(a,u)$ and apply the
characteristic-specific completion comparisons.

\subsubsection{Common row-space and radical estimates}
\label{subsec:common-row-space-estimate}

In this subsection, orthogonal complements of subspaces of $\F_p^k$ are
taken with respect to the standard symmetric bilinear form, not the symplectic
form on phase space.  The weighted bound separates the probability of a fixed
lower-coordinate event from the number of full moments compatible with it.  The first factor is
controlled by the following row-space geometry, which is valid for every
prime; only the second factor requires characteristic-specific completion
algebra.  Recall
$\kappa_p^{\mathrm{GL}}=\prod_{j\ge1}(1-p^{-j})>0$ from
Eq.~\eqref{eq:kappa-GL-product}.

\begin{proposition}[All-prime prescribed-radical subspace bound]
\label{prop:all-prime-prescribed-radical-bound}
Fix a prime $p$.  There is a constant $K_p^{\mathrm{rad}}>1$, depending only
on $p$, such that the number $N_p(k;a,u)$ of $a$-dimensional subspaces
$W\subseteq\F_p^k$ satisfying
\begin{equation}
 \dim(W\cap W^\perp)=u.
 \label{eq:all-prime-prescribed-radical-condition}
\end{equation}
is bounded by
\begin{equation}
 N_p(k;a,u)
 \le (K_p^{\mathrm{rad}})^k
 p^{a(k-a)-u(u+1)/2}.
 \label{eq:all-prime-prescribed-radical-bound}
\end{equation}
The same bound remains valid if $W$ is required to contain any prescribed
vector.  We set $N_p(k;a,u)=0$ whenever the parameters are infeasible; in
particular, this convention applies unless
$0\le u\le a$, $2u\le k$, and $a-u\le k-2u$.
\end{proposition}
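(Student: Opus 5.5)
We count $W$ by first choosing its radical and then choosing the nondegenerate remainder in the quotient. Put $U:=W\cap W^\perp$, of dimension $u$. Since the standard form restricted to $W$ has radical exactly $U$, the subspace $W$ lies in $U^\perp$, which has dimension $k-u$. The image $\overline W:=W/U$ is an $(a-u)$-dimensional subspace of $U^\perp/U$. The space $U^\perp/U$ has dimension $k-2u$ and inherits a nondegenerate symmetric form. On it, $\overline W$ is nondegenerate.

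Conversely, a pair consisting of a totally isotropic $U$ with $\dim U=u$ and a nondegenerate $(a-u)$-dimensional $\overline W\subseteq U^\perp/U$ determines $W$ as the preimage of $\overline W$. Hence
\[
 N_p(k;a,u)\le \#\{U\ \text{totally isotropic},\ \dim U=u\}\cdot \max_U\#\{\overline W\subseteq U^\perp/U\}.
\]
This decomposition also explains the feasibility conditions: we need $2u\le k$ for a totally isotropic $U$ to exist, and $a-u\le k-2u$ for $\overline W$ to fit in the quotient.

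\textbf{Counting the quotient part.} We bound the number of choices of $\overline W$ crudely by the Gaussian binomial
\[
 \binom{k-2u}{a-u}_p\le (\kappa_p^{\mathrm{GL}})^{-1}p^{(a-u)(k-a-u)}.
\]

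\textbf{Counting the radical.} This is the step carrying the saving $u(u+1)/2$. A totally isotropic $U$ spanned by $u$ vectors is the row space of a full-rank $u\times k$ matrix $X$ with $XX^T=0$. That condition imposes $u(u+1)/2$ scalar equations, one for each entry of the symmetric matrix $XX^T$. The plan is therefore to show
\[
 \#\{X\in\F_p^{u\times k}:XX^T=0\}\le C_p^{\,k}\,p^{uk-u(u+1)/2}.
\]
We would prove this by the character-sum method of Proposition~\ref{prop:rank-dependent-overlap}: write the count as
\[
 p^{-u(u+1)/2}\sum_{S\in\Sym_u}\Bigl(\sum_{x}\omega^{\frac12x^TSx}\Bigr)^{k},
\]
and bound each Gauss sum by $p^{u-\rank(S)/2}$. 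The terms with $\rank S=s$ number at most $p^{su}$, so they contribute at most $p^{uk-sk/2+su}$ for every $s$; since $u\le k/2$, we have $su\le sk/2$, so every term is at most $p^{uk}$. There are at most $p^{u(u+1)/2}$ matrices $S$, but their number is already absorbed by the prefactor, and this gives the claimed bound with $C_p$ depending only on $p$. For $p=2$, we would instead use the $\mathbb Z_4$ enhancement of Section~\ref{sec:integrated-qubit-foundations} to write the diagonal equations as characters, or simply bound the count by solving the equations row by row, where each new row must be orthogonal to the previous rows and to itself. The row-by-row argument costs at most $i+1$ linear and one quadratic condition on row $i$, up to a constant factor per row. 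Dividing the matrix count by $|\operatorname{GL}(u,\F_p)|\ge\kappa_p^{\mathrm{GL}}p^{u^2}$ gives
\[
 \#U\le C_p'^{\,k}\,p^{u(k-u)-u(u+1)/2}.
\]

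\textbf{Combining.} We add the exponents:
\[
 u(k-u)-\tfrac{u(u+1)}2+(a-u)(k-a-u)=a(k-a)-\tfrac{u(u+1)}2+\bigl[u(k-u)-u(k-a)-u(a-u)\bigr],
\]
and the bracket equals $0$. This gives exactly the exponent $a(k-a)-u(u+1)/2$, with $K_p^{\mathrm{rad}}$ collecting $C_p'$ and the $\kappa_p^{\mathrm{GL}}$ factors; since those factors are independent of $k$, we may absorb them into $(K_p^{\mathrm{rad}})^k$ for $k\ge1$. If we also require $W$ to contain a prescribed vector, the conclusion holds because we are counting a subset, so the same bound applies unchanged.

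The main obstacle is the isotropic-matrix count uniformly in $p$, including $p=2$ and the case where $u$ is close to $k/2$. There the rank-$s$ terms are only barely dominated, and we must check the Gauss-sum bound carefully in characteristic two, for instance using $|\sum_x(-1)^{x^TSx}|\le 2^{u-\lceil\rank S/2\rceil}$ after splitting off the diagonal part of $S$.
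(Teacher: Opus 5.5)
Your argument is correct. Its skeleton is exactly the paper's: split off the radical $U=W\cap W^\perp$, bound the choices of $W/U$ in the $(k-2u)$-dimensional quotient $U^\perp/U$ by the Gaussian binomial, and combine via $u(k-u)+(a-u)(k-a-u)=a(k-a)$.

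The difference lies in counting totally isotropic $u$-subspaces. The paper does this elementarily and uniformly in $p$. It builds an ordered basis $v_1,\ldots,v_u$ successively, using only that $v_j\in\operatorname{span}(v_1,\ldots,v_{j-1})^\perp$, a set of $p^{k-j+1}$ vectors. This gives at most $p^{uk-u(u-1)/2}$ frames, hence at most $(\kappa_p^{\mathrm{GL}})^{-1}p^{u}\,p^{u(k-u)-u(u+1)/2}$ subspaces, and the factor $p^u\le p^k$ is absorbed into $(K_p^{\mathrm{rad}})^k$.

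So the $u$ self-orthogonality conditions contribute only a linear term to the exponent and can simply be discarded. The obstacle you anticipate (uniformity in $p$, the case $p=2$, and $u$ near $k/2$) therefore does not arise. Your row-by-row fallback, read as ``orthogonality to the $j-1$ previously chosen independent rows, ignoring the self-condition at a cost of a factor $p$ per row,'' is precisely this argument. It works verbatim for every prime.

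Your Gauss-sum count for odd $p$ is also valid and slightly sharper, losing a factor $u+1$ instead of $p^u$. Grouping $S\in\Sym_u(\F_p)$ by rank $s$, each class contributes at most $(\kappa_p^{\mathrm{GL}})^{-1}p^{su}p^{k(u-s/2)}\le(\kappa_p^{\mathrm{GL}})^{-1}p^{uk}$, since $2u\le k$. That extra precision is invisible in the $(K_p^{\mathrm{rad}})^k$ form of the statement.

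Two clarifications on that part:
\begin{itemize}
\item The prefactor $p^{-u(u+1)/2}$ is the saving itself and cannot ``absorb'' the number of matrices $S$. That number is handled by the rank grouping, which has only $u+1$ classes.
\item For $p=2$, the literal sum $\sum_x(-1)^{x^TSx}$ does not obey your proposed inequality. Over $\F_2$, $x^TSx=\sum_iS_{ii}x_i$ is linear, and the pairing $\Tr(SM)$ is degenerate on $\Sym_u(\F_2)$. A binary character-sum proof must instead use the pairing $\sum_{i\le j}S_{ij}M_{ij}$, whose characters are the quadratic forms $\sum_iS_{ii}y_i+\sum_{i<j}S_{ij}y_iy_j$, at an extra harmless cost of $2^u$.
\end{itemize}
Neither point affects correctness, since the row-by-row count already covers $p=2$.
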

\begin{proof}
There is nothing to prove outside the feasible range specified in the
statement.  We therefore assume throughout the proof that these inequalities
hold.  We first bound the possible radicals.  Let $T_p(k,u)$ be the number of
totally isotropic $u$-subspaces of the standard nondegenerate symmetric
bilinear space $\F_p^k$.  Choose an ordered basis
$v_1,\ldots,v_u$ successively.  If
$R_{j-1}=\operatorname{span}(v_1,\ldots,v_{j-1})$ is totally isotropic, then
$R_{j-1}^\perp/R_{j-1}$ is nondegenerate of dimension $k-2(j-1)$.
For an upper bound, count all vectors in this quotient, rather than only the
isotropic nonzero classes.  There are at most $p^{k-2(j-1)}$ quotient
choices, each with $p^{j-1}$ lifts, and hence at most $p^{k-j+1}$ choices at
stage $j$.  Thus there are at most
\begin{equation}
 p^{uk-u(u+1)/2+u}.
 \label{eq:all-prime-isotropic-ordered-frame-bound}
\end{equation}
ordered candidate frames.  Every $u$-space has at least
$\kappa_p^{\mathrm{GL}}p^{u^2}$ ordered bases.  Since $u\le k$, division
and $p^u\le p^k$ give
\begin{equation}
 T_p(k,u)\le
 \left(\frac{p}{\kappa_p^{\mathrm{GL}}}\right)^k
 p^{u(k-u)-u(u+1)/2}.
 \label{eq:all-prime-isotropic-subspace-bound}
\end{equation}
Here and below we use that a fixed factor
$(\kappa_p^{\mathrm{GL}})^{-1}$ is at most its $k$th power for $k\ge1$.  This argument does not distinguish
odd characteristic from characteristic two, or alternating from
nonalternating quotient type.

Now put $R=W\cap W^\perp$.  Then
$R\subseteq W\subseteq R^\perp$ and $\dim R=u$.  After choosing $R$ by
Eq.~\eqref{eq:all-prime-isotropic-subspace-bound}, pass to the nondegenerate
quotient $R^\perp/R$, of dimension $k-2u$.  Dropping the requirement that the
form induced on $W/R$ be nondegenerate leaves at most
\begin{equation}
 \genfrac{[}{]}{0pt}{}{k-2u}{a-u}_p
 \le (\kappa_p^{\mathrm{GL}})^{-1}
 p^{(a-u)(k-a-u)}.
 \label{eq:all-prime-radical-Gaussian-bound}
\end{equation}
choices.  The exponent identity
\begin{equation}
 u(k-u)+(a-u)(k-a-u)=a(k-a).
 \label{eq:all-prime-radical-exponent-identity}
\end{equation}
proves Eq.~\eqref{eq:all-prime-prescribed-radical-bound}; for example, one
may take
\begin{equation}
 K_p^{\mathrm{rad}}
 :=p(\kappa_p^{\mathrm{GL}})^{-2}.
 \label{eq:all-prime-radical-constant-choice}
\end{equation}
The case $k=0$ is immediate.  Requiring a prescribed vector selects a
subfamily and cannot increase the count.
\end{proof}

\begin{proposition}[All-prime row-space event bound]
\label{prop:all-prime-row-space-event-bound}
Let $Y$ be uniform on $\F_p^{r\times k}$ and put
\begin{equation}
 A(Y):=\dim\operatorname{row}Y,
 \quad
 U(Y):=\dim\bigl(\operatorname{row}Y\cap
                         (\operatorname{row}Y)^\perp\bigr).
 \label{eq:all-prime-row-space-invariants}
\end{equation}
Thus $a$ records the row-space dimension and $u$ its radical dimension; the
pair $(a,u)$ is the probability-side parameter pair used below.  With
$K_p^{\mathrm{class}}:=K_p^{\mathrm{rad}}$, uniformly over all pairs
$(a,u)$,
\begin{equation}
 \Pr\{A(Y)=a,\ U(Y)=u\}
 \le (K_p^{\mathrm{class}})^k
 p^{-(k-a)(r-a)-u(u+1)/2}.
 \label{eq:all-prime-row-space-event-bound}
\end{equation}
The same estimate holds after restricting the admissible row spaces to those
containing a prescribed vector.
\end{proposition}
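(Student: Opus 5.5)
The plan is to condition on the row space and bound the event probability by a count of row spaces times the probability of landing in a fixed one. The event $\{A(Y)=a,\,U(Y)=u\}$ is the disjoint union, over $a$-dimensional subspaces $W\subseteq\F_p^k$ with $\dim(W\cap W^\perp)=u$, of the events $\{\operatorname{row}Y=W\}$. Hence
\[
 \Pr\{A(Y)=a,\ U(Y)=u\}
 =\sum_{W}\Pr\{\operatorname{row}Y=W\}
 \le N_p(k;a,u)\,\max_{W}\Pr\{\operatorname{row}Y\subseteq W\}.
\]
Proposition~\ref{prop:all-prime-prescribed-radical-bound} bounds the first factor by $(K_p^{\mathrm{rad}})^k p^{a(k-a)-u(u+1)/2}$. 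With the prescribed-vector restriction, the same proposition gives the same count, which handles the final sentence of the statement.

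For a fixed $a$-dimensional $W$, the event $\operatorname{row}Y\subseteq W$ means that each of the $r$ independent uniform rows of $Y$ lies in $W$. Therefore
\[
 \Pr\{\operatorname{row}Y\subseteq W\}=p^{-r(k-a)}.
\]
Multiplying the two factors gives the exponent
\[
 a(k-a)-r(k-a)-\tfrac{u(u+1)}2=-(k-a)(r-a)-\tfrac{u(u+1)}2.
\]
This is exactly the claimed exponent, with $K_p^{\mathrm{class}}=K_p^{\mathrm{rad}}$.

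It remains to check the degenerate ranges. If $a>\min\{r,k\}$ or the parameters violate the feasibility constraints, the event is empty and the bound holds trivially by the convention $N_p=0$. No lower bound on $\Pr\{\operatorname{row}Y=W\}$ is needed, so the upper bound through containment is enough.

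There is no real obstacle here, since the heavy lifting, counting subspaces with prescribed radical, was done in Proposition~\ref{prop:all-prime-prescribed-radical-bound}. The only step I would double-check is that the union is over row spaces rather than matrices, so that each $Y$ is counted once and the inequality $\{\operatorname{row}Y=W\}\subseteq\{\operatorname{row}Y\subseteq W\}$ is applied termwise. I would also verify that the exponent is uniform in $(a,u)$, with constant $(K_p^{\mathrm{rad}})^k$ independent of $r$.
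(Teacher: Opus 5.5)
Your proposal is correct and follows essentially the same route as the paper: for each fixed $a$-space $W$, relax $\{\operatorname{row}Y=W\}$ to the event that all $r$ rows lie in $W$, which has probability $p^{-r(k-a)}$. Summing over the spaces counted in Proposition~\ref{prop:all-prime-prescribed-radical-bound} and simplifying with $a(k-a)-r(k-a)=-(k-a)(r-a)$ gives the bound, and the prescribed-vector clause only shrinks the family summed over.
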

\begin{proof}
For a fixed $a$-space $W$, the exact event
$\operatorname{row}Y=W$ is contained in the event that all $r$ rows lie in
$W$, whose probability is $p^{-r(k-a)}$.  This deliberate relaxation is the
only row-generation estimate needed here: we do not use, or claim, a sharp
asymptotic formula for the probability that the rows span all of $W$.  Sum this bound over the spaces in
Proposition~\ref{prop:all-prime-prescribed-radical-bound}.  The resulting
exponent is simplified by
\begin{equation}
 a(k-a)-r(k-a)=-(k-a)(r-a),
 \label{eq:all-prime-row-space-exponent-identity}
\end{equation}
which proves Eq.~\eqref{eq:all-prime-row-space-event-bound}.  A prescribed
vector condition only reduces the family of spaces being summed over.
\end{proof}

\begin{corollary}[All-prime augmented row-space event bound]
\label{cor:all-prime-augmented-row-space-event-bound}
Fix $v\in\F_p^k$, let $Y$ be uniform on $\F_p^{r\times k}$, and define
\begin{equation}
 W_v(Y):=\operatorname{span}\bigl(\{v\}\cup\operatorname{row}Y\bigr),
 \quad
 A_v(Y):=\dim W_v(Y),
 \quad
 U_v(Y):=\dim\bigl(W_v(Y)\cap W_v(Y)^\perp\bigr).
 \label{eq:all-prime-augmented-row-space-invariants}
\end{equation}
Then, uniformly over all pairs $(a,u)$,
\begin{equation}
 \Pr\{A_v(Y)=a,\ U_v(Y)=u\}
 \le (K_p^{\mathrm{class}})^k
 p^{-(k-a)(r-a)-u(u+1)/2}.
 \label{eq:all-prime-augmented-row-space-event-bound}
\end{equation}
\end{corollary}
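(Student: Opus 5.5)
The plan is to reduce the augmented statement to Proposition~\ref{prop:all-prime-row-space-event-bound} by adjoining $v$ as an extra row. The key steps, in order, are the case $v=0$, a union over candidate spaces containing $v$, and then an exponent comparison.

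\textbf{Case $v=0$.} If $v=0$, then $W_v(Y)=\operatorname{row}Y$. Eq.~\eqref{eq:all-prime-augmented-row-space-event-bound} is then exactly Eq.~\eqref{eq:all-prime-row-space-event-bound}.

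\textbf{Union bound for $v\neq0$.} Assume $v\ne0$. The event $\{A_v(Y)=a,\,U_v(Y)=u\}$ is contained in the union, over all $a$-dimensional subspaces $W\ni v$ with $\dim(W\cap W^\perp)=u$, of the events $\{\operatorname{row}Y\subseteq W\}$. Each such event has probability $p^{-r(k-a)}$, since the $r$ rows are independent and uniform. By the prescribed-vector clause of Proposition~\ref{prop:all-prime-prescribed-radical-bound}, the number of such $W$ is at most $(K_p^{\mathrm{rad}})^k p^{a(k-a)-u(u+1)/2}$. Multiplying the count by the per-space probability and using Eq.~\eqref{eq:all-prime-row-space-exponent-identity} gives
\[
(K_p^{\mathrm{class}})^k p^{-(k-a)(r-a)-u(u+1)/2},
\]
as required.

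\textbf{Main obstacle: the sign of the exponent.} Eq.~\eqref{eq:all-prime-row-space-exponent-identity} gives exponent $-(k-a)(r-a)$. Since $W_v(Y)$ can have dimension up to $r+1$, we may have $a=r+1$. In that case $-(k-a)(r-a)=k-a\ge0$, so the claimed bound exceeds one once the constant is accounted for, and it holds trivially. I expect the remaining care to lie here: checking that no $r$-dependence hidden in the constant is needed.

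\textbf{Feasibility and $k=0$.} When $a\le r$, the union argument above is already sharp enough. Infeasible pairs $(a,u)$ give probability zero by the convention of Proposition~\ref{prop:all-prime-prescribed-radical-bound}. For $k=0$ the statement is immediate, because then $v=0$.
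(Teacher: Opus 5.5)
Your union-bound argument is correct and is essentially the paper's own proof. The event forces every row of $Y$ into an $a$-dimensional space $W\ni v$ with radical dimension $u$, each such inclusion has probability $p^{-r(k-a)}$, and the prescribed-vector clause of Proposition~\ref{prop:all-prime-prescribed-radical-bound} bounds the number of such $W$. Your separate case $v=0$ is unnecessary, since every subspace contains $0$.

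You should drop your ``main obstacle'' paragraph, however. The count-times-probability product already equals the stated bound for every $a$, including $a=r+1$. Moreover, your claim that the bound is then trivially at least one is false in general. For $a=r+1$ the exponent is $(k-a)-u(u+1)/2$, and the term $u(u+1)/2$ can dominate $(k-a)+k\log_p K_p^{\mathrm{class}}$ for large $k$; for example, take $u=a=\lfloor k/2\rfloor$ with $v$ isotropic.
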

\begin{proof}
For every fixed $a$-space $W$ containing $v$, the event $W_v(Y)=W$ implies
that all $r$ rows of $Y$ lie in $W$, and hence has probability at most
$p^{-r(k-a)}$.  Summing over the $a$-spaces containing $v$ and having radical
dimension $u$, and then applying the prescribed-vector clause of
Proposition~\ref{prop:all-prime-prescribed-radical-bound}, gives the stated
bound.  This argument does not identify $W_v(Y)$ with
$\operatorname{row}Y$ and therefore also covers the case
$v\notin\operatorname{row}Y$.
\end{proof}

The probability side of the converse is now controlled entirely by the pair
$(a,u)$: the row-space dimension and its radical dimension.  The individual
lower matrix no longer enters this estimate.  The two characteristic-specific
branches will identify the corresponding lower rank $a-u$ and address only the
remaining completion count.

\begin{lemma}[Common Gram--radical rank identity]
\label{lem:common-Gram-radical-rank}
Let $\F$ be any field, let $Z\in\F^{r\times k}$, and put
$W:=\operatorname{row}Z\subseteq\F^k$.  For the standard bilinear form on
$\F^k$, set
\begin{equation}
 a:=\dim W,
 \quad
 u:=\dim(W\cap W^\perp).
 \label{eq:common-Gram-radical-parameters}
\end{equation}
Then
\begin{equation}
 \rank(ZZ^T)=a-u.
 \label{eq:common-Gram-radical-rank}
\end{equation}
\end{lemma}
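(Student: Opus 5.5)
We compute the kernel of $ZZ^T$ and then use rank--nullity on $\F^r$.

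\emph{Kernel.} A vector $c\in\F^r$ lies in $\ker(ZZ^T)$ exactly when $Z^Tc$ is orthogonal to every row of $Z$. Indeed, the $i$th entry of $ZZ^Tc$ is the standard pairing of the $i$th row $z_i$ of $Z$ with $Z^Tc$. Moreover $Z^Tc=\sum_i c_iz_i$ always lies in $W$. Hence
\[
\ker(ZZ^T)=\{c\in\F^r: Z^Tc\in W\cap W^\perp\}=(Z^T)^{-1}(W\cap W^\perp).
\]

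\emph{Dimension of the kernel.} Consider the linear map $Z^T:\F^r\to\F^k$. Its image is $W=\operatorname{row}Z$, of dimension $a$, so its kernel has dimension $r-a$. Since $W\cap W^\perp\subseteq W=\im Z^T$, the preimage of this $u$-dimensional subspace has dimension $(r-a)+u$.

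\emph{Rank.} Rank--nullity for $ZZ^T$ on $\F^r$ then gives
\[
\rank(ZZ^T)=r-(r-a+u)=a-u.
\]

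No step is a real obstacle. The one point to check is that nothing uses nondegeneracy of the form restricted to $W$, positivity, or characteristic. The argument uses only bilinearity and symmetry of the standard form, and that the radical lies inside the image of $Z^T$. So the identity holds over every field, including $\F_2$ and the case $p\mid k$ used later with augmented matrices.
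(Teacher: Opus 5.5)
Your proof is correct and is essentially the paper's argument. The paper factors $ZZ^T=\pi^*\beta_W\pi$, with $\pi(c)=c^TZ$ surjective onto $W$ and $\pi^*$ injective, so that $\rank(ZZ^T)=\rank\beta_W=a-u$; your direct computation $\ker(ZZ^T)=(Z^T)^{-1}(W\cap W^\perp)$ followed by rank--nullity is exactly the version the paper itself uses in the appendix for the augmented matrix $Z_2$, and, as you note, it needs only bilinearity and symmetry of the standard form, so it holds in every characteristic.
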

\begin{proof}
Let
\begin{equation}
 \pi:\F^r\longrightarrow W,\quad \pi(c):=c^TZ,
 \label{eq:common-Gram-row-coordinate-surjection}
\end{equation}
which is surjective by the definition of the row space.  Let
$\beta_W:W\to W^*$ be the linear map induced by the restriction of the
standard bilinear form to $W$.  Under the standard coordinate
identifications, the map represented by $ZZ^T$ is the composition
\begin{equation}
 \F^r\xrightarrow{\ \pi\ }W
 \xrightarrow{\ \beta_W\ }W^*
 \xrightarrow{\ \pi^*\ }(\F^r)^*.
 \label{eq:common-Gram-pullback-factorization}
\end{equation}
Because $\pi$ is surjective, $\pi^*$ is injective.  Therefore
$\rank(\pi^*\beta_W\pi)=\rank(\beta_W)$.  Finally,
$\ker\beta_W=W\cap W^\perp$, and hence rank--nullity gives
\begin{equation}
 \rank(ZZ^T)=\rank(\beta_W)
 =\dim W-\dim(W\cap W^\perp)=a-u.
\end{equation}
\end{proof}

Lemma~\ref{lem:common-Gram-radical-rank} is characteristic-free.  The
odd-prime branch applies it to the augmented matrix $Z_2$, whereas the binary
branch applies it to the lower matrix $Y$.  Thus the same parameter $a-u$
is the lower rank supplied to both completion theorems; only the type of
moment retained above that bilinear shadow differs.

\subsubsection{Augmented Gram representation of the odd-prime lower moment}
\label{subsec:odd-augmented-Gram-bridge}

We record here the precise bridge from the lower moment in the weighted bound
to the matrix used in the odd-prime completion theorem.  This prevents the
augmentation by the linear moment from being hidden inside the appendix.
Assume first that $k\ge1$.  For
$X_2\in\F_p^{(n-m)\times k}$, define
\begin{equation}
 Z_2:=\begin{pmatrix}X_2\\ \1_k^T\end{pmatrix}
 \in\F_p^{h\times k},
 \quad h:=n-m+1.
 \label{eq:main-odd-augmented-lower-matrix}
\end{equation}
If the lower moment is
$\zeta=(\zeta_Q,\zeta_S)=(X_2X_2^T,X_2\1_k)$, then
\begin{equation}
 B_\zeta:=Z_2Z_2^T
 =\begin{pmatrix}
   \zeta_Q&\zeta_S\\
   \zeta_S^T&k
  \end{pmatrix}\in\Sym_h(\F_p).
 \label{eq:main-odd-augmented-Gram-matrix}
\end{equation}
The lower-right entry is the image of the integer $k$ in the prime subfield
of $\F_p$.  In particular, it is zero when $p\mid k$; no nonvanishing of this
entry is used.

Let
\begin{equation}
 \mathcal W(X_2):=\operatorname{row}Z_2\subseteq\F_p^k,
 \quad a(X_2):=\dim\mathcal W(X_2),
 \quad u(X_2):=\dim\bigl(\mathcal W(X_2)\cap\mathcal W(X_2)^\perp\bigr).
 \label{eq:main-odd-augmented-row-space-pair}
\end{equation}
Applying Lemma~\ref{lem:common-Gram-radical-rank} to $Z_2$ gives
\begin{equation}
 \rank B_\zeta=a(X_2)-u(X_2).
 \label{eq:main-odd-augmented-Gram-rank}
\end{equation}
This identity remains valid when $p\mid k$, because it follows from the
restricted bilinear form rather than from the value of the lower-right
entry.  The case
$k=0$ has the one-dimensional common support treated separately in
Remark~\ref{rem:checks}; the completion estimates below are only needed for
positive copy number.

The appendix makes the remaining cardinality comparison explicit.  For fixed
$\zeta$, the map
\begin{equation}
 (Q,S)\longmapsto\bigl(Q_{11},(Q_{12}\ S_1)\bigr).
 \label{eq:main-visible-to-ambient-map-preview}
\end{equation}
is injective on the occurring full moments above $\zeta$, because the lower
coordinates $(Q_{22},S_2)=\zeta$ are fixed.  Any tuple realizing that moment
gives a full augmented Gram matrix of rank at most $k$.  Hence the occurring
visible moments inject into the ambient rank-constrained completion set.  The
full factorization and the statement that this is only an upper-bound
relaxation are given in Lemma~\ref{lem:visible-to-ambient-completion} and the
following remark.

\subsubsection{Proof of the weighted probability--completion reduction}

\begin{proof}[Proof of Theorem~\ref{thm:converse-weighted-probability-completion-reduction}]
Proposition~\ref{prop:converse-weighted-moment-fiber-reduction} starts from the
common weighted bound
\begin{equation}
 P_{\mathrm{id}}^{(k),*}
 \le\sum_\zeta\pi_\zeta\frac{d_\zeta}{\lvert Y\rvert}.
 \label{eq:weighted-completion-assembly-start}
\end{equation}
We now substitute the characteristic-specific completion bounds into the
normalized occurring-moment-sector factor $d_\zeta/\lvert Y\rvert$.  The
resulting functions of the lower rank are then averaged over realizing lower
matrices and grouped by their row-space parameters $(a,u)$.

For odd $p$, the completion argument in
Appendix~\ref{app:odd-prime-finite-structural-proof} gives
\begin{equation}
 \frac{d_\zeta}{\lvert Y\rvert}
 \le\min\left\{1,
 \frac{\mathcal A_{m,h,k}(\rank B_\zeta)}{\lvert Y\rvert}\right\}.
 \label{eq:weighted-completion-assembly-odd-substitution}
\end{equation}
For a realizing lower matrix $X_2$, the augmented Gram identity gives
$\rank B_\zeta=a(X_2)-u(X_2)$.  Hence, for the clipped completion factor $F$,
\begin{equation}
 \sum_\zeta\pi_\zeta F(\rank B_\zeta)
 =\mathbb E_{X_2}F\bigl(a(X_2)-u(X_2)\bigr)
 =\sum_{a,u}\mathsf p_{n,m,k}^{\mathrm{odd}}(a,u)F(a-u).
 \label{eq:main-odd-realization-expectation-regrouping}
\end{equation}
Applying this identity to Eq.~\eqref{eq:weighted-completion-assembly-start}
gives Eq.~\eqref{eq:main-odd-weighted-completion-interface}.  This is an
expectation over realizing lower matrices, not a partition of lower moments by
$(a,u)$.

For $p=2$, the enhanced-to-bilinear comparison in
Appendix~\ref{app:binary-finite-structural-proof} gives the parallel bound
\begin{equation}
 \frac{d_\zeta^{(2)}}{\lvert Y_{n,m,2}\rvert}
 \le\min\left\{1,
 \frac{2^m\mathcal N_{m,n-m,k}^{(2)}(\rank B_\zeta)}
 {\lvert Y_{n,m,2}\rvert}\right\}.
 \label{eq:weighted-completion-assembly-binary-substitution}
\end{equation}
The associated bilinear form again has rank $a-u$.  Grouping the lower
enhanced moments by the binary row-space event gives
Eq.~\eqref{eq:main-binary-weighted-completion-interface}.

In both cases, the minimum with one is the exact finite cardinality cap
$d_\zeta\le\lvert Y\rvert$; it is not introduced only for the later asymptotic analysis.
\end{proof}
\subsection{Characteristic-specific finite structural estimates}
\label{subsec:converse-characteristic-specific-structural-estimates}
Both branches now have a row-space probability multiplied by a clipped
completion count.  Their finite completion objects differ: an augmented Gram
matrix for odd $p$, and an enhanced moment with an associated bilinear form
for $p=2$.  We state the two estimates separately, using the common
parameters $(a,u,\ell,t)$ summarized below.  This common parameterization
allows the next subsection to optimize both branches together.

\subsubsection{Common rank parameters in the two characteristics}
\label{subsubsec:common-characteristic-rank-parameters}

The two characteristic-specific arguments use different lower-moment
representations but lead to the same four rank parameters.  Their precise
correspondence is summarized here:
\begin{center}
\small
\begin{tabularx}{0.96\textwidth}{@{}lYY@{}}
\toprule
& Odd characteristic & Characteristic two\\
\midrule
Probability-side space
& $\operatorname{row}Z_2$, including the linear-moment augmentation
& $\operatorname{row}Y$ of the lower binary block\\
Lower matrix
& augmented Gram matrix $B_\zeta=Z_2Z_2^T$
& associated bilinear matrix $B_\zeta=YY^T$\\
Common identity
& $\rank B_\zeta=a-u$
& $\rank B_\zeta=a-u$\\
Finite correction
& none beyond $h=n-m+1$
& enhanced-diagonal lift, giving the later term $m-\ell$\\
\bottomrule
\end{tabularx}
\end{center}
In the remainder of the converse, $(a,u)$ denotes the common probability-side pair, and $a-u$ is the corresponding lower rank.

The weighted reduction has therefore brought both branches to the same
form: a row-space probability multiplied by a clipped completion count.  The
next two theorems supply finite bounds for these factors.  In both, $(a,u)$
describes the probability side and $(\ell,t)$ the completion side.  The
characteristic-two theorem additionally passes through the associated
bilinear form and records the enhanced-diagonal lift as the separate factor
$2^m$.  Keeping that correction explicit allows the subsequent asymptotic
argument to treat the two characteristics in one optimization.

\subsubsection{Odd-prime finite structural estimate}
\label{sec:odd-prime-finite-structural-estimate}

By Theorem~\ref{thm:converse-weighted-probability-completion-reduction}, the
odd-prime branch has already been reduced to
Eq.~\eqref{eq:main-odd-weighted-completion-interface}: a sum over the
row-space pair $(a,u)$ of its probability times the clipped completion count
$\mathcal A_{m,h,k}(a-u)/\lvert Y\rvert$.  It remains only to bound the probability
factor and the completion factor uniformly in their rank parameters.  The
completion side is indexed further by the rank pair $(\ell,t)$.

The next theorem eliminates the individual lower moments and concrete
completion matrices and expresses the odd-prime bound in the parameters used
in the common asymptotic argument.

\begin{theorem}[Odd-prime finite structural estimate]
\label{thm:converse-odd-prime-finite-structural-estimate}
Fix an odd prime $p$, put $h:=n-m+1$, and let
$\mathfrak P_n$ be the finite set of positive-mass augmented row-space
parameters $(a,u)$.  Every pair in this set satisfies
\begin{equation}
 0\le u\le a\le k,\quad a+u\le k.
 \label{eq:theorem-B-positive-mass-domain}
\end{equation}
The last inequality follows from the prescribed-radical feasibility conditions
$2u\le k$ and $a-u\le k-2u$.  Equivalently, $u\le k-a$.  This is a
probability-side row-space constraint and does not modify the completion domain.  For each
$(a,u)\in\mathfrak P_n$, define
\begin{equation}
 \mathcal G_{a,u}:=
 \left\{(\ell,t)\in\mathbb Z_{\ge0}^2:
 \begin{array}{l}
 0\le\ell\le\min\{m,h-a+u\},\\
 0\le t\le m-\ell,\\
 a-u+2\ell+t\le k
 \end{array}\right\}.
 \label{eq:theorem-B-G-au}
\end{equation}
Here $(a,u)$ is the row-space dimension and radical dimension.  Conditional on this pair,
$(\ell,t)$ records the completion-side ranks: $\ell$ is the rank of the
radical coupling and $t$ is the rank of the residual symmetric block.
For $(a,u,\ell,t)$ in this domain, set
\begin{align}
 E_{\rm fin}(m,h;a,u,\ell,t)
 :={}&m(a-u)+\ell(m+h-a+u-\ell)
 +\ell m-\frac{\ell(\ell-1)}2\notag\\
 &+\frac{t[2(m-\ell)-t+1]}2,
 \label{eq:theorem-B-E-fin}
\end{align}
and
\begin{equation}
 D_{\rm vis,fin}:=mh+\frac{m(m+1)}2
 =(n+1)m-\frac{m(m-1)}2.
 \label{eq:theorem-B-D-vis-fin}
\end{equation}
Thus $E_{\rm fin}$ is the completion-count exponent, while
$D_{\rm vis,fin}$ is the exponent of the visible label space.  The theorem
combines their difference with the probability cost determined by $(a,u)$.
Then
\begin{align}
 P_{\mathrm{id}}^{(k),*}(n,m;p)
 &\le \sum_{(a,u)\in\mathfrak P_n}
 (K_p^{\mathrm{class}})^k
 p^{-(k-a)(n-m-a)-u(u+1)/2}
 \notag\\[-1mm]
 &\quad\times
 \min\left\{1,
 C_p^{\mathrm{comp}}(m+1)^2
 p^{\max_{(\ell,t)\in\mathcal G_{a,u}}
 E_{\rm fin}(m,h;a,u,\ell,t)-D_{\rm vis,fin}}\right\}.
 \label{eq:finite-probability-comparison-4}
\end{align}
If $\mathcal G_{a,u}$ is empty, the corresponding completion contribution is
zero.  The displayed inequality is fully finite.  If $m,h,k=O(n)$, then
$\log_p[C_p^{\mathrm{comp}}(m+1)^2]=O_p(\log n)$, and hence equivalently,
\begin{align}
 P_{\mathrm{id}}^{(k),*}(n,m;p)
 &\le\sum_{(a,u)\in\mathfrak P_n}
 p^{-(k-a)(n-m-a)-u(u+1)/2+O_p(n)}
 \notag\\[-1mm]
 &\quad\times
 p^{\min\left\{0,
 \max_{(\ell,t)\in\mathcal G_{a,u}}
 E_{\rm fin}(m,h;a,u,\ell,t)
 -D_{\rm vis,fin}+O_p(\log n)\right\}},
 \label{eq:finite-probability-comparison-5}
\end{align}
with uniform remainders.  The minimum with zero is the exact exponent form of
the finite cap $d_{\zeta}/\lvert Y\rvert\le1$ and is part of the conclusion.
\end{theorem}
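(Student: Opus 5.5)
The plan is to start from the odd-prime bound of Theorem~\ref{thm:converse-weighted-probability-completion-reduction}, Eq.~\eqref{eq:main-odd-weighted-completion-interface}, and handle its two factors separately. The sum runs over pairs $(a,u)$ with $\mathsf p^{\mathrm{odd}}_{n,m,k}(a,u)>0$.

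\textbf{Probability factor.} The augmented row space is $\mathcal W(X_2)=W_v(X_2)$ with $v=\1_k$ and $X_2$ uniform on $\F_p^{(n-m)\times k}$. Corollary~\ref{cor:all-prime-augmented-row-space-event-bound}, applied with $r=n-m$, therefore gives directly
\[
\mathsf p^{\mathrm{odd}}_{n,m,k}(a,u)\le (K_p^{\mathrm{class}})^k p^{-(k-a)(n-m-a)-u(u+1)/2}.
\]
The feasibility conditions of Proposition~\ref{prop:all-prime-prescribed-radical-bound} force $u\le a\le k$, $2u\le k$, and $a-u\le k-2u$; together these give $a+u\le k$. This is the domain in Eq.~\eqref{eq:theorem-B-positive-mass-domain}.

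\textbf{Completion factor.} Here I need $\mathcal A_{m,h,k}(a-u)\le C_p^{\mathrm{comp}}(m+1)^2\,p^{\max_{\mathcal G_{a,u}}E_{\rm fin}}$, with $|Y_{n,m,p}|=p^{D_{\rm vis,fin}}$ by Eq.~\eqref{eq:Y-size}. The count depends on $B$ only through $r=\rank B=a-u$, so I take $B=\operatorname{diag}(B_0,0_{h-r})$ with $B_0$ invertible of size $r$, and split $C=(C_1\ C_2)$ with $C_1\in\F_p^{m\times r}$ and $C_2\in\F_p^{m\times(h-r)}$.

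A Schur complement on the invertible block $B_0$ gives
\[
\rank\begin{pmatrix}Q&C\\C^T&B\end{pmatrix}
= r+\rank\begin{pmatrix}Q'&C_2\\C_2^T&0\end{pmatrix},
\qquad Q'=Q-C_1B_0^{-1}C_1^T .
\]
Since $C_1$ and $Q'$ range freely, the choice of $C_1$ contributes $p^{mr}$; this is the term $m(a-u)$.

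For the remaining matrix, let $\ell=\rank C_2\le\min\{m,h-r\}$. Change basis so that $C_2$ becomes the $\ell$-dimensional identity block. Then $Q'$ splits into a block that pairs with the image of $C_2$, which is free and absorbed into the rank $2\ell$, and a residual symmetric block of size $m-\ell$ with some rank $t$. The total rank is then $2\ell+t$, so the rank condition reads $r+2\ell+t\le k$, which is exactly the constraint defining $\mathcal G_{a,u}$.

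The count for fixed $(\ell,t)$ has three pieces, each up to a $p$-dependent constant:
\begin{itemize}
\item rank-$\ell$ matrices $C_2\in\F_p^{m\times(h-r)}$: about $p^{\ell(m+h-r-\ell)}$;
\item the free entries of $Q'$ in the $\ell$ rows and columns coupled to $C_2$: $p^{\ell m-\ell(\ell-1)/2}$;
\item symmetric $(m-\ell)\times(m-\ell)$ matrices of rank $t$: at most a constant times $p^{t[2(m-\ell)-t+1]/2}$.
\end{itemize}
Multiplying these with the $p^{mr}$ from $C_1$ reproduces $E_{\rm fin}$ in Eq.~\eqref{eq:theorem-B-E-fin}. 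The Gaussian-binomial constants depend only on $p$, through $\kappa_p^{\mathrm{GL}}$ from Eq.~\eqref{eq:kappa-GL-product}, and are collected into $C_p^{\mathrm{comp}}$. Summing over the at most $(m+1)^2$ pairs $(\ell,t)$ and bounding each term by the maximum gives the displayed completion bound.

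I expect this exact bookkeeping to be the main obstacle: the normal-form reduction for the block containing $C_2$, and the count of $Q'$ entries that are free versus constrained, must come out exactly in the $\ell m-\ell(\ell-1)/2$ form. I would check it against the extreme cases $\ell=0$ and $r=h$.

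\textbf{Assembly.} If $\mathcal G_{a,u}$ is empty, then $\mathcal A_{m,h,k}(a-u)=0$ and that term contributes nothing. Substituting both factor bounds into Eq.~\eqref{eq:main-odd-weighted-completion-interface}, and keeping the minimum with one, gives Eq.~\eqref{eq:finite-probability-comparison-4}. For Eq.~\eqref{eq:finite-probability-comparison-5}, take $\log_p$ of each factor. When $k=O(n)$ the factor $(K_p^{\mathrm{class}})^k$ is $p^{O_p(n)}$. Since $m=O(n)$, the factor $C_p^{\mathrm{comp}}(m+1)^2$ is $p^{O_p(\log n)}$. Finally, $\min\{1,x\}=p^{\min\{0,\log_p x\}}$.
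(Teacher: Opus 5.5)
Your proposal is correct and follows essentially the same route as the paper. On the probability side you use the weighted reduction with the augmented row-space corollary at $v=\1_k$, $r=n-m$; on the completion side you use the Schur reduction, the radical-coupling rank identity (total rank $a-u+2\ell+t$), term-by-term rectangular and symmetric rank-count bounds, and a maximum over at most $(m+1)^2$ pairs $(\ell,t)$. The bookkeeping you flag is handled exactly as in Lemma~\ref{lem:completion-bijective-parameterization}: fixing one normalizer pair for each rank-$\ell$ block $C_2$ makes $Q'\mapsto RQ'R^T$ a bijection of $\Sym_m(\F_p)$, so $(A_{11},A_{12},A_{22})$ parametrize $Q'$ without overcounting.
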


\begin{proof}
The required finite estimates are proved in
Appendix~\ref{app:odd-prime-finite-structural-proof} and combine as follows.  First,
Lemma~\ref{lem:visible-to-ambient-completion} injects the realizable visible
moments above each fixed lower moment into the ambient rank-constrained
completion set; this is the only inclusion-based relaxation.  Second,
Proposition~\ref{prop:exact-ambient-completion-count} counts that enlarged set
exactly, proves that its cardinality depends on the lower block only through
$r_B=\rank B_\zeta=a-u$, and shows that the completed rank is
\[
 r_B+2\ell+t.
\]
In this decomposition, $\ell$ is the rank of the radical coupling and $t$ is
the rank of the residual symmetric block.  Third, Proposition~\ref{prop:uniform-completion-bound} converts the exact count
into the uniform exponent $E_{\rm fin}$ with the stated completion ranges and
uniform remainder.  Combining this bound with the augmented row-space event
estimate in Corollary~\ref{cor:all-prime-augmented-row-space-event-bound}, and
then retaining the exact cap $d_\zeta/\lvert Y\rvert\le1$, gives
Eqs.~\eqref{eq:finite-probability-comparison-4} and
\eqref{eq:finite-probability-comparison-5}.
\end{proof}

\subsubsection{Characteristic-two finite structural estimate}
\label{sec:characteristic-two-converse-input}

By Theorem~\ref{thm:converse-weighted-probability-completion-reduction}, the
characteristic-two branch has already been reduced to
Eq.~\eqref{eq:main-binary-weighted-completion-interface}: a sum over $(a,u)$
of its probability times the clipped factor
$2^m\mathcal N_{m,n-m,k}^{(2)}(a-u)/\lvert Y_{n,m,2}\rvert$.  The finite structural
theorem expresses these two factors through the common exponent parameters.
The enhanced moment, associated bilinear form, and diagonal-lift data enter only in the
proof of this rank-grouped completion bound.

The next theorem eliminates the enhanced lower moments, their associated
bilinear forms, and the diagonal-lift data.  The resulting finite bound depends only on the two rank pairs and an
explicit characteristic-two correction.  A separate corollary below takes
the asymptotic limit.

\begin{theorem}[Characteristic-two finite structural estimate]
\label{thm:converse-binary-finite-structural-estimate}
Let $p=2$.  Let $\mathfrak P_n^{(2)}$ be the finite set of positive-mass binary
row-space parameters $(a,u)$.  Every pair in this set satisfies
\begin{equation}
 0\le u\le a\le k,\quad a+u\le k.
 \label{eq:theorem-C-positive-mass-domain}
\end{equation}
Again this is the probability-side consequence of $2u\le k$ and
$a-u\le k-2u$, hence $u\le k-a$.  It is independent of the binary
completion-space dimension.  Define
\begin{equation}
 \mathcal G_{a,u}^{(2)}
 :=\left\{(\ell,t)\in\mathbb Z_{\ge0}^2:
 \begin{array}{l}
 0\le\ell\le\min\{m,n-m-a+u\},\\
 0\le t\le m-\ell,\\
 a-u+2\ell+t\le k
 \end{array}\right\}.
 \label{eq:main-binary-completion-ranges}
\end{equation}
Define
\begin{align}
 E_{\rm fin}^{(2)}(m,n;a,u,\ell,t)
 :={}&m+m(a-u)+\ell(n-a+u-\ell)
 +\ell m-\frac{\ell(\ell-1)}2\notag\\
 &+\frac{t[2(m-\ell)-t+1]}2.
 \label{eq:theorem-C-binary-E-fin}
\end{align}
Here $(a,u)$ is the probability-side row-space pair.  Conditional on it,
$(\ell,t)$ is the two ranks used in the completion count: $\ell$ is the radical-coupling rank
and $t$ the residual symmetric rank.  The leading term $m$ records the
separate enhanced-diagonal lift.  The probability side is kept separate by
writing
\begin{equation}
 I_{\rm hid,fin}^{(2)}(n,m,k;a,u)
 :=(k-a)(n-m-a)+\frac{u(u+1)}2.
 \label{eq:theorem-C-binary-hidden-exponent}
\end{equation}
There is a constant $C_2^{\mathrm{comp}}>0$, independent of
$n,m,k,a,u,\ell,t$, such that the following fully finite inequality holds:
\begin{align}
 P_{\mathrm{id}}^{(k),*}(n,m;2)
 \le{}&\sum_{(a,u)\in\mathfrak P_n^{(2)}}
 (K_2^{\mathrm{class}})^k
 2^{-I_{\rm hid,fin}^{(2)}(n,m,k;a,u)}\notag\\
 &\quad\times
 \min\left\{1,
 C_2^{\mathrm{comp}}(m+1)^2
 2^{\max_{(\ell,t)\in\mathcal G_{a,u}^{(2)}}
 E_{\rm fin}^{(2)}(m,n;a,u,\ell,t)
 -D_{\rm vis,fin}}\right\}.
 \label{eq:main-binary-finite-exponent-bound}
\end{align}
The factor $(K_2^{\mathrm{class}})^k$ is the explicit row-space counting
constant from Eq.~\eqref{eq:binary-lower-block-probability-finite}; the factor
$C_2^{\mathrm{comp}}(m+1)^2$ collects the two dimension-independent rank-count
constants and the number of admissible completion-rank pairs.  The exponent in
the completion bound includes the separate enhanced-diagonal lift factor
$2^m$; this is the leading term $m$ in
Eq.~\eqref{eq:theorem-C-binary-E-fin}.  The minimum with one is the exact
finite cap on the normalized occurring-coordinate count.

Moreover, with $h:=n-m+1$, every admissible integer quadruple satisfies the
exact finite identity
\begin{equation}
 E_{\rm fin}^{(2)}(m,n;a,u,\ell,t)
 =E_{\rm fin}(m,h;a,u,\ell,t)+(m-\ell).
 \label{eq:main-binary-finite-exponent-identity}
\end{equation}
Consequently,
\begin{equation}
 0\le\frac{m-\ell}{n^2}\le\frac1n,
 \label{eq:theorem-C-normalized-correction-bound}
\end{equation}
so the normalized characteristic-two correction vanishes uniformly.
\end{theorem}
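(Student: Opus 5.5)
The plan is to mirror the odd-prime finite structural estimate (Theorem~\ref{thm:converse-odd-prime-finite-structural-estimate}), starting from the binary weighted reduction Eq.~\eqref{eq:main-binary-weighted-completion-interface}. That reduction already gives
\[
P_{\mathrm{id}}^{(k),*}(n,m;2)\le\sum_{a,u}\mathsf p^{(2)}_{n,m,k}(a,u)\min\Bigl\{1,\tfrac{2^m\mathcal N^{(2)}_{m,n-m,k}(a-u)}{|Y_{n,m,2}|}\Bigr\},
\]
so two factors have to be bounded separately.

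\emph{Probability factor.} Apply Proposition~\ref{prop:all-prime-row-space-event-bound} with $p=2$, $r=n-m$, and the lower binary block $Y$. This gives $\mathsf p^{(2)}(a,u)\le(K_2^{\mathrm{class}})^k2^{-I^{(2)}_{\rm hid,fin}}$. The positive-mass domain comes from the feasibility convention of Proposition~\ref{prop:all-prime-prescribed-radical-bound}: the conditions $2u\le k$ and $a-u\le k-2u$ give $a+u\le k$, and $u\le a\le k$ is immediate. Lemma~\ref{lem:common-Gram-radical-rank} then identifies $a-u$ with $\rank(YY^T)$, which is the rank fed into $\mathcal N^{(2)}$.

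\emph{Completion factor.} Here I would reuse the odd-prime exact ambient count (Proposition~\ref{prop:exact-ambient-completion-count} and Proposition~\ref{prop:uniform-completion-bound}), but with lower block size $n-m$ instead of $h=n-m+1$, since the binary branch has no linear-moment augmentation. Concretely, I would put $B=YY^T$ of rank $r_B=a-u$ into a congruence normal form. The coupling block $C$ then splits into a part against the nondegenerate part of $B$, which is free and contributes $m(a-u)$. The remaining part is against the radical, of dimension $n-m-(a-u)$, and has rank $\ell\le\min\{m,n-m-a+u\}$. The count of such rectangular matrices is at most $\text{const}\cdot2^{\ell(m+(n-m-a+u)-\ell)}$.

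Completing the square leaves $Q$ determined up to a pivot contribution and a residual symmetric $(m-\ell)$-block of rank $t$. The total rank is $a-u+2\ell+t\le k$. The pivot contribution is bounded by $2^{\ell m-\ell(\ell-1)/2}$ and the residual block by $2^{t[2(m-\ell)-t+1]/2}$, using the standard count of symmetric matrices of given rank up to a $\kappa_2^{\mathrm{GL}}$ constant. Summing over the at most $(m+1)^2$ pairs $(\ell,t)$ bounds the sum by the number of pairs times the maximum. Multiplying by the factor $2^m$ gives the leading $m$ in $E^{(2)}_{\rm fin}$, and dividing by $|Y_{n,m,2}|=2^{D_{\rm vis,fin}}$ from Eq.~\eqref{eq:integrated-qubit-label-cardinality} yields Eq.~\eqref{eq:main-binary-finite-exponent-bound}. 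Keeping the cap $\min\{1,\cdot\}$ is legitimate because $d_\zeta\le|Y|$.

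\emph{Main obstacle.} The delicate point is the characteristic-two rank counting. Symmetric matrices over $\F_2$ split into alternating and non-alternating types, so the normal form of $B$ and of the residual block needs care. I would try to sidestep this by using only upper bounds. The classification of symmetric $\F_2$-forms into alternating and non-alternating types affects the counts only by constants, which can be absorbed into $C_2^{\mathrm{comp}}$. The justification of the $2^m$ enhanced-diagonal lift is assumed from the appendix comparison Eq.~\eqref{eq:weighted-completion-assembly-binary-substitution}.

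\emph{Identity and normalized correction.} The identity Eq.~\eqref{eq:main-binary-finite-exponent-identity} is algebra. Substituting $h=n-m+1$ gives
\[
E_{\rm fin}(m,h)=m(a-u)+\ell(n+1-a+u-\ell)+\dots,
\]
so $E^{(2)}_{\rm fin}-E_{\rm fin}=m-\ell$. Since $0\le\ell\le m\le n$, the bound $0\le(m-\ell)/n^2\le1/n$ follows.
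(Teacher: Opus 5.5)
Your outline matches the paper's proof step for step. Both use the binary weighted reduction with the $2^m$ lift from Proposition~\ref{prop:binary-enhanced-bilinear-comparison}, Proposition~\ref{prop:all-prime-row-space-event-bound} with $r=n-m$, and a Schur/radical-coupling parameterization of the bilinear completions. Both then apply uniform rank counts, the $(m+1)^2$ largest-term bound, and the exponent algebra, which you carry out correctly.

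The gap is in the completion factor. You propose to reuse Propositions~\ref{prop:exact-ambient-completion-count} and~\ref{prop:uniform-completion-bound}, but both are proved only for odd $p$. The rank budget $a-u+2\ell+t\le k$, which restricts the sum to $\mathcal G^{(2)}_{a,u}$, is supported in your text only by ``completing the square''. The Schur translate $A=Q-C_1D^{-1}C_1^T$ is fine over $\F_2$. The remaining claim is that a rank-$\ell$ radical coupling costs exactly $2\ell$, whatever the pivot blocks $A_{11},A_{12}$ are. Lemma~\ref{lem:radical-rank-identity} proves this by a congruence with $-\tfrac12A_{11}$, which has no characteristic-two analogue. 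In fact no congruence can do it. If $A_{11}$ has a nonzero diagonal entry, then $\bigl(\begin{smallmatrix}A_{11}&I_\ell\\ I_\ell&0\end{smallmatrix}\bigr)$ is non-alternating, whereas $\bigl(\begin{smallmatrix}0&I_\ell\\ I_\ell&0\end{smallmatrix}\bigr)$ is alternating, so the two are not congruent over $\F_2$.

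The repair is Lemma~\ref{lem:binary-radical-rank-identity}, and it rests on the fact that only the rank is needed, not the congruence type. Take the reduced matrix $K$ with block rows $(A_{11},A_{12},I_\ell)$, $(A_{12}^T,A_{22},0)$, $(I_\ell,0,0)$. The kernel equations force $x=0$, $A_{22}z=0$ and $y=A_{12}z$. Hence $\dim\ker K=(m-\ell)-\rank A_{22}$ and $\rank K=2\ell+\rank A_{22}$. Equivalently, you can use independent row and column operations, which preserve rank but not congruence class. For an upper bound on the count, $\rank\ge a-u+2\ell+t$ already suffices.

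The obstacle you anticipated is also misplaced. The split $P^TBP=\operatorname{diag}(D,0_s)$ only needs a complement to the radical, so it holds over any field. The bound $N_{\rm sym}(d,t;2)\le(\kappa_2^{\mathrm{GL}})^{-1}2^{t(2d-t+1)/2}$ follows by choosing the radical and counting all symmetric forms on the quotient (Lemma~\ref{lem:integrated-binary-symmetric-rank-bound}). So the alternating/non-alternating distinction never affects the counting; it matters exactly in the rank identity you took for granted. With that lemma substituted, the rest of your argument goes through as in the paper.
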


\begin{proof}
The characteristic-two estimates proved in
Appendix~\ref{app:binary-finite-structural-proof} combine as follows.  Proposition~\ref{prop:binary-enhanced-bilinear-comparison}
maps each fixed lower enhanced moment to its associated bilinear completion
problem and bounds every bilinear completion by at most $2^m$ enhanced
diagonal lifts.  The Schur reduction and binary radical-coupling identity in
Lemmas~\ref{lem:binary-Schur-reduction} and
\ref{lem:binary-radical-rank-identity}, assembled in
Lemma~\ref{lem:integrated-binary-completion-count}, show that the residual
completed rank is
\[
 a-u+2\ell+t.
\]
Here $\ell$ and $t$ have the same interpretation as in the odd-prime case.  Proposition~\ref{prop:binary-finite-weighted-completion-reduction}
combines these facts with the common row-space event bound and the exact
finite cardinality cap, yielding the clipped estimate in
Eq.~\eqref{eq:main-binary-finite-exponent-bound}.  The corresponding exponents satisfy
\[
 E_{\rm fin}^{(2)}(m,n;a,u,\ell,t)
 =E_{\rm fin}(m,n-m+1;a,u,\ell,t)+(m-\ell).
\]
Because $0\le m-\ell\le n$, the normalized correction is uniformly
$O(n^{-1})$, which proves the final assertion.
\end{proof}

\begin{corollary}[Asymptotic form of the characteristic-two structural estimate]
\label{cor:binary-structural-asymptotic-form}
Let $(m_n)$ and $(k_n)$ be integer sequences with $m_n,k_n=O(n)$.  Then the
finite inequality in Eq.~\eqref{eq:main-binary-finite-exponent-bound} implies
\begin{align}
 P_{\mathrm{id}}^{(k_n),*}(n,m_n;2)
 \le{}&\sum_{(a,u)\in\mathfrak P_n^{(2)}}
 2^{-I_{\rm hid,fin}^{(2)}(n,m_n,k_n;a,u)+o(n^2)}\notag\\
 &\quad\times
 \min\left\{1,
 2^{\max_{(\ell,t)\in\mathcal G_{a,u}^{(2)}}
 E_{\rm fin}^{(2)}(m_n,n;a,u,\ell,t)
 -D_{\rm vis,fin}+O(\log n)}\right\},
 \label{eq:binary-structural-asymptotic-corollary}
\end{align}
where the $o(n^2)$ term is uniform over $(a,u)$ and the $O(\log n)$ term is
uniform over the completion indices.
\end{corollary}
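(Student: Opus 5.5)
The plan is to start from the finite inequality \eqref{eq:main-binary-finite-exponent-bound} of Theorem~\ref{thm:converse-binary-finite-structural-estimate} and show that each explicit prefactor is absorbed into the remainders claimed in \eqref{eq:binary-structural-asymptotic-corollary}. Fix $(a,u)\in\mathfrak P_n^{(2)}$. The probability-side factor is $(K_2^{\mathrm{class}})^{k_n}2^{-I^{(2)}_{\rm hid,fin}}$. Since $K_2^{\mathrm{class}}$ is a constant depending only on $p=2$, we have $(K_2^{\mathrm{class}})^{k_n}=2^{k_n\log_2K_2^{\mathrm{class}}}=2^{O(n)}$, because $k_n=O(n)$. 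The bound $O(n)$ does not depend on $(a,u)$, so it is in particular a uniform $o(n^2)$ term.

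For the completion factor, $C_2^{\mathrm{comp}}$ is independent of all parameters and $(m_n+1)^2$ is at most polynomial in $n$. Hence
\[
C_2^{\mathrm{comp}}(m_n+1)^2=2^{\log_2C_2^{\mathrm{comp}}+2\log_2(m_n+1)}=2^{O(\log n)},
\]
uniformly over $(a,u)$ and over the completion indices $(\ell,t)$. We place this prefactor inside the exponent of the clipped term, which gives
\[
\min\{1,2^{\max E^{(2)}_{\rm fin}-D_{\rm vis,fin}+O(\log n)}\}.
\]
The cap with $1$ is applied after this absorption, and that is legitimate because $x\mapsto\min\{1,x\}$ is monotone. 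If $\mathcal G^{(2)}_{a,u}$ is empty, the contribution is zero on both sides, so that case needs no separate treatment.

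The only point needing care is uniformity. Both constants are stated in Theorem~\ref{thm:converse-binary-finite-structural-estimate} to be independent of $n,m,k,a,u,\ell,t$, so the remainders depend only on $n$ through $k_n$ and $m_n$. No parameter-dependent estimate is involved. We do not need to bound the number of summands $|\mathfrak P_n^{(2)}|=O(n^2)$ here, because the corollary keeps the sum over $(a,u)$; that count only costs $2^{O(\log n)}$ later. Once these substitutions are made, the inequality is exactly \eqref{eq:binary-structural-asymptotic-corollary}.
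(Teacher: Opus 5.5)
Your proposal is correct and follows the paper's proof. Both arguments use $k_n=O(n)$ to absorb $(K_2^{\mathrm{class}})^{k_n}$ into a factor $2^{O(n)}=2^{o(n^2)}$, and rewrite $C_2^{\mathrm{comp}}(m_n+1)^2$ as $2^{O(\log n)}$ inside the clipped term, with uniformity coming from the parameter-independence of both constants.
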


\begin{proof}
Because $k_n=O(n)$ and $K_2^{\mathrm{class}}$ is fixed,
$k_n\log_2K_2^{\mathrm{class}}=O(n)=o(n^2)$.  Likewise,
$\log_2(C_2^{\mathrm{comp}}(m_n+1)^2)=O(\log n)$.  Substitution into the
fully finite inequality gives the stated form, uniformly over the finite
parameter ranges.
\end{proof}

The two finite structural theorems therefore use the same probability pair
$(a,u)$, completion pair $(\ell,t)$, clipped rank budget, and quadratic-scale
exponent.  By Eq.~\eqref{eq:main-binary-finite-exponent-identity}, their exact
finite exponents differ by $m-\ell$, where $0\le m-\ell\le n$.  Hence this difference is uniformly
$O(n)=o(n^2)$.  No matrix-level completion data are used in the common
asymptotic reduction below.

\subsection{Asymptotic exponent}
\label{subsec:common-exponent-functions}
The finite bounds depend on the characteristic only through terms of order
$O(n)$ and constants $(K_p)^k$.  Normalizing the four ranks
$(a,u,\ell,t)$ therefore leaves a common constrained scalar optimization.
We retain the finite clipping and endpoint errors, then introduce slack
variables to bound the combined probability and completion cost.  The two
branches of $J(\alpha,\beta)$ correspond to whether the copy rate lies below
or above the unreported-coordinate fraction $1-\beta$; they agree at the
boundary.

\subsubsection{Exponent statement and normalized reduction}
\label{subsubsec:all-prime-exponent-reduction}

The following theorem states the common quadratic exponent. Its proof first removes the characteristic-dependent lower-order terms and then optimizes the remaining normalized finite exponent.

\begin{theorem}[All-prime converse exponent bound]
\label{thm:converse-common-asymptotic-reduction}
Fix a prime $p$ and suppose
\begin{equation}
 m=\beta n+o(n),\quad k=\alpha n+o(n),
 \quad 0<\beta\le1,\quad 0\le\alpha<1.
 \label{eq:direct-finite-scaling}
\end{equation}
Then
\begin{equation}
 \limsup_{n\to\infty}\frac1{n^2}
 \log_p P_{\mathrm{id}}^{(k),*}(n,m;p)
 \le -J(\alpha,\beta),
 \label{eq:direct-finite-limsup}
\end{equation}
where
\begin{equation}
 J(\alpha,\beta)=
 \begin{cases}
 \beta(1-\alpha-\beta/2),&0\le\alpha\le1-\beta,\\[1mm]
 (1-\alpha)^2/2,&1-\beta\le\alpha<1.
 \end{cases}
 \label{eq:direct-finite-J}
\end{equation}
Equivalently,
\begin{equation}
 P_{\mathrm{id}}^{(k),*}(n,m;p)
 \le p^{-J(\alpha,\beta)n^2+o(n^2)}\longrightarrow0.
 \label{eq:quadratic-converse}
\end{equation}
\end{theorem}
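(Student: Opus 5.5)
The plan is to combine the two finite structural estimates: Theorem~\ref{thm:converse-odd-prime-finite-structural-estimate} for odd $p$, and Theorem~\ref{thm:converse-binary-finite-structural-estimate} with Corollary~\ref{cor:binary-structural-asymptotic-form} for $p=2$. From there the asymptotic exponent reduces to a deterministic scalar optimization.

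\emph{Step 1 (finite to continuous).} The index sets $\mathfrak P_n$ and $\mathcal G_{a,u}$ have $O(n^2)$ elements, and $(K_p^{\mathrm{class}})^k=p^{O(n)}$. Also, by Eq.~\eqref{eq:main-binary-finite-exponent-identity}, the binary correction $m-\ell$ is $O(n)$. Hence in both characteristics
\[
 \tfrac1{n^2}\log_pP_{\mathrm{id}}^{(k),*}
 \le\max_{(a,u,\ell,t)}\Bigl[-\tfrac{(k-a)(n-m-a)+u(u+1)/2}{n^2}
 +\min\bigl\{0,\tfrac{E_{\rm fin}-D_{\rm vis,fin}}{n^2}\bigr\}\Bigr]+o(1).
\]
Here the maximum runs over the finite domain of admissible quadruples.

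\emph{Step 2 (normalize).} Put $a=An$, $u=Un$, $\ell=Ln$, $t=Tn$, and absorb the $o(n)$ deviations of $m,k$ and all linear terms into $o(1)$ by uniform continuity on a compact domain. Since $\operatorname{row}Z_2$ has dimension at most $h$, we have $A\le 1-\beta$ (respectively $a\le n-m$ in the binary case), so the probability cost $(\alpha-A)(1-\beta-A)+U^2/2$ is nonnegative. Set $s=A-U$. The completion exponent becomes
\[
 e(s,L,T)=\beta s+L(1-s-L)+L\beta-\tfrac{L^2}2+\tfrac{T(2(\beta-L)-T)}2,
\]
and it is compared with $d=\beta-\beta^2/2$. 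The constraints are
\[
 0\le U\le A,\quad A+U\le\alpha,\quad L\le\min\{\beta,1-\beta-s\},\quad T\le\beta-L,\quad s+2L+T\le\alpha .
\]
It then suffices to show that
\[
 \Lambda:=\sup\Bigl\{-(\alpha-A)(1-\beta-A)-\tfrac{U^2}{2}+\min\{0,e-d\}\Bigr\}\le -J(\alpha,\beta).
\]
Upper semicontinuity handles the passage from finite maxima to this supremum.

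\emph{Step 3 (the optimization, the main obstacle).} For fixed $(A,U)$ I would first maximize $e$ over $(L,T)$. The function $e$ is concave in each variable, and the binding constraint is typically the rank budget $s+2L+T\le\alpha$. I would introduce slack variables for the budget and for $T\le\beta-L$, then check that at the optimum $e-d=-\tfrac12(\beta-\alpha+s)_+^2-(\text{nonneg.\ slack terms})$, or at least that $e-d$ is bounded above by this quantity. This should be read as a rank deficit of the visible $m\times(m+h)$ completion against the residual budget $\alpha-s$.

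Next I would minimize the total cost over $(A,U)$. Increasing $U$ at fixed $s$ only costs more, since it adds $U^2/2$ and shrinks the admissible region through $A+U\le\alpha$, so $U=0$ and $A=s$. This leaves the one-variable problem
\[
 \min_{0\le s\le\min\{\alpha,1-\beta\}}\Bigl[(\alpha-s)(1-\beta-s)+\bigl(\text{completion deficit}\bigr)\Bigr].
\]
I expect this to give the two branches. When $\alpha\le1-\beta$, the minimizer is $s=\alpha$: the probability cost vanishes and the completion deficit equals $\beta(1-\alpha-\beta/2)$. When $\alpha\ge1-\beta$, the minimizer is interior and yields $(1-\alpha)^2/2$. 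The two values agree at $\alpha=1-\beta$.

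This step is the delicate one. The crude first attempt is to take the deficit to be the rank count of symmetric/rectangular blocks of rank at most $\alpha-s$. If that is too lossy near $s\approx\alpha$, I would instead keep $L$ free. The trade-off between $L(1-s-L)$ and the budget then gives the deficit $\beta(1-\alpha-\beta/2)$ precisely at $s=\alpha$.

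\emph{Step 4.} Conclude that $\Lambda\le-J(\alpha,\beta)$, which is Eq.~\eqref{eq:direct-finite-limsup}. Since $J>0$ for $\alpha<1$ and $\beta>0$, Eq.~\eqref{eq:quadratic-converse} and the convergence to $0$ follow.
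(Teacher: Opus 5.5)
Your Steps 1--2 reproduce the paper's normalization: the same $e$, the same constraints, and the same clipped objective. Step 3, however, is where the whole proof lives, and it has a genuine gap: the reduction to $U=0$ is false.

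At fixed $s=A-U$, both the admissible $(L,T)$-region and the value of $e$ depend only on $s$. The probability cost $I:=(\alpha-A)(1-\beta-A)+U^2/2=(\alpha-s-U)(1-\beta-s-U)+U^2/2$ has derivative $-(\alpha-s)-(1-\beta-s)<0$ at $U=0$ whenever some $U>0$ is feasible, because feasibility forces $\alpha-s\ge2U>0$ and $1-\beta-s\ge U>0$. A larger radical dimension therefore \emph{lowers} the cost. For $\beta=1/2$, $\alpha=0.4$, $s=0.3$ it drops from $0.02$ at $U=0$ to $0.00875$ at $U=0.05$.

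Setting $U=0$ thus overestimates the cost, so your one-variable minimization gives no lower bound on the true minimum, which is the direction a converse needs. In this example the total is still $0.00875+0.16875=0.1775\ge J=0.175$, but by a margin your argument does not see.

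Two further problems. The crude deficit $\tfrac12(\beta-\alpha+s)_+^2$ drops exactly the term that produces $J$ in the first branch: at $s=\alpha$ it gives only $\beta^2/2<\beta(1-\alpha-\beta/2)$ unless $\alpha=1-\beta$. In the second branch the optimum is the endpoint $s=1-\beta$, $L=0$, $T=\alpha+\beta-1$, not an interior point.

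The missing idea is that the radical-coupling rank $L$ must be charged to the probability side \emph{jointly} with $U$. Put $r:=1-\beta-s-L$, $\sigma:=\beta-L-T$, $v:=U-L$, and $\delta:=\alpha+\beta-1-L+r$. Then $d-e=r(\beta-L)+\sigma^2/2$ exactly, and
\[
 I-rL=\tfrac{L^2}{2}+(r-v)(\delta-v)+\tfrac{v^2}{2}.
\]
The factors are nonnegative, all up to $O(1/n)$:
\begin{itemize}
\item $r-v=1-\beta-A\ge0$, because the row space is spanned by at most $h$ rows.
\item $\delta-2v=\alpha-A-U\ge0$.
\item $\delta\ge0$, because $r+\sigma=1-s-2L-T\ge1-\alpha$ and $\sigma\le\beta-L$.
\end{itemize}
Hence $\delta-v\ge0$ for either sign of $v$.

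Adding the two identities cancels the $L$-dependence. The total cost is at least $r\beta+\sigma^2/2\ge\beta(1-\alpha-\sigma)_++\sigma^2/2$ with $\sigma\in[0,\beta]$. Its minimum is $J(\alpha,\beta)$, attained at $\sigma=\beta$ or at $\sigma=1-\alpha$. Combined with $-I+\min\{0,e-d\}=-\max\{I,I+d-e\}$, this is the paper's argument.

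A sequential scheme that first optimizes over $(L,T)$ and then sets $U=0$ cannot reproduce this. The inequality $I\ge rL$ is what absorbs the gain $rL$ that coupling buys in $e$, and that inequality genuinely uses $U$ through $v$.
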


\begin{proof}
We apply Theorem~\ref{thm:converse-odd-prime-finite-structural-estimate}
when $p$ is odd and
Theorem~\ref{thm:converse-binary-finite-structural-estimate} when $p=2$.
The two bounds have the same normalized probability and completion variables.
For the characteristic-two branch, the exact finite relation
\begin{equation}
 E_{\rm fin}^{(2)}(m,n;a,u,\ell,t)
 =E_{\rm fin}(m,n-m+1;a,u,\ell,t)+(m-\ell).
 \label{eq:direct-finite-binary-odd-exponent-relation}
\end{equation}
from Eq.~\eqref{eq:main-binary-finite-exponent-identity} shows that the
additional correction is $O(n)=o(n^2)$.  Thus both finite structural theorems reduce, uniformly over their complete
finite feasible domains, to the common normalized exponent treated below.

In this proof only, $A$ and $B$ denote scalar ratios rather than the
symmetric matrices used in the coordinate and completion arguments.  Put
\begin{equation}
 A:=\frac{k}{n},\quad B:=\frac{m}{n},\quad c:=1-A.
 \label{eq:direct-finite-ABc}
\end{equation}
For an admissible quadruple $(a,u,\ell,t)$ in either characteristic branch,
write
\begin{equation}
 a_0:=\frac an,\quad u_0:=\frac un,
 \quad \lambda:=\frac\ell n,\quad \tau:=\frac tn.
 \label{eq:direct-finite-normalized-variables}
\end{equation}
The finite positive-mass and completion constraints imply, uniformly in both
characteristic branches,
\begin{equation}
 0\le u_0\le\min\{a_0,A-a_0\},\quad
 0\le a_0\le A,\quad
 0\le\lambda\le\min\{B,1-B-a_0+u_0\}+O(n^{-1}).
 \label{eq:direct-finite-normalized-feasibility}
\end{equation}
Here $u_0\le A-a_0$ is exactly the normalization of the positive-mass
row-space constraint $a+u\le k$, whereas the upper bound on $\lambda$ is
the normalization of the original completion ceiling
$\ell\le h-a+u$ in odd characteristic and
$\ell\le n-m-a+u$ in characteristic two.  The $O(n^{-1})$ term accounts
only for the odd-prime augmentation $h=n-m+1$.  Define
\begin{align}
 I&:=(A-a_0)(1-B-a_0)+\frac{u_0^2}{2},
 \label{eq:direct-finite-I}\\
 E&:=B(a_0-u_0)
 +\lambda(1-a_0+u_0-\lambda)
 +B\lambda-\frac{\lambda^2}{2}
 +\tau(B-\lambda)-\frac{\tau^2}{2},
 \label{eq:direct-finite-E}\\
 D&:=B-\frac{B^2}{2}.
 \label{eq:direct-finite-D}
\end{align}
Uniformly over all admissible quadruples, division of the finite exponents by
$n^2$ gives
\begin{align}
 \frac{(k-a)(n-m-a)+u(u+1)/2}{n^2}
 &=I+O(n^{-1}),
 \label{eq:direct-finite-hidden-normalization}\\
 \frac{E_{\rm fin}-D_{\rm vis,fin}}{n^2}
 &=E-D+O(n^{-1}).
 \label{eq:direct-finite-visible-normalization}
\end{align}
For odd $p$, $E_{\rm fin}$ is the exponent in
Eq.~\eqref{eq:theorem-B-E-fin}.  For $p=2$, the binary exponent differs from
this common expression by $m-\ell$, hence again by only $O(n)$.  The terms
$u/2$, $\ell/2$, and $t/2$, the odd-prime relation $h=n-m+1$, and all displayed
remainders in the two finite structural theorems are likewise uniform
$O_p(n)$ contributions.

We now bound the quadratic-scale exponent of each summand.  Combining the
probability exponent with the clipped normalized completion factor, every
summand in either finite structural theorem has normalized exponent
\begin{equation}
 -I+\min\{0,E-D\}+O_p(n^{-1}).
 \label{eq:direct-finite-normalized-clipped-summand}
\end{equation}
For the characteristic-two branch, the term $(m-\ell)/n^2$ has already
been absorbed into the displayed $O_p(n^{-1})$ remainder.  Thus the clipping
and all subsequent sign calculations apply to the same common exponent in
both characteristics.  We shall use the exact identity
\begin{equation}
 -I+\min\{0,E-D\}=-\max\{I,I+D-E\}.
 \label{eq:direct-finite-clipping-max-identity}
\end{equation}
Consequently, a uniform lower bound on $I+D-E$ already gives the required
upper bound on every clipped summand; the equality points below additionally
show that this bound is sharp for the limiting variational problem.
The remaining proof has two algebraic steps.  The variable $x$ is the
normalized nonradical hidden rank.  The slack $r$ is the unused ambient
dimension after allocating $x$ and the coupling rank $\lambda$, while $s$ is
the unused queried dimension after allocating the coupling and residual
ranks.  The first identity below controls the visible deficit $D-E$.  Put
\begin{equation}
 x:=a_0-u_0,
 \quad r:=1-B-x-\lambda,
 \quad s:=B-\lambda-\tau.
 \label{eq:direct-finite-slacks}
\end{equation}
The finite completion ranges imply
\begin{equation}
 r\ge-O(n^{-1}),
 \quad 0\le s\le B,
 \quad r+s\ge c-O(n^{-1}).
 \label{eq:direct-finite-slack-ranges}
\end{equation}
Indeed, the first inequality is the normalized coupling-rank bound; the
odd-prime endpoint has only the additional $1/n$ coming from $h=n-m+1$.
The last inequality follows from
\begin{equation}
 x+2\lambda+\tau=1-r-s,
 \label{eq:direct-finite-rank-identity}
\end{equation}
and the finite rank budget $a-u+2\ell+t\le k$.

The first central slack identity follows by an explicit expansion.  Substitute
$x=1-B-\lambda-r$ and $\tau=B-\lambda-s$ into
Eq.~\eqref{eq:direct-finite-E}.  Since $a_0-u_0=x$, the terms containing
$\tau$ satisfy
\begin{align}
 \tau(B-\lambda)-\frac{\tau^2}{2}
 &=\frac{(B-\lambda)^2}{2}-\frac{s^2}{2},
 \label{eq:direct-finite-visible-expansion-one}
\end{align}
and the remaining terms give
\begin{align}
 E
 &=B(1-B-\lambda-r)
   +\lambda(B+r)+B\lambda-\frac{\lambda^2}{2}
   +\frac{(B-\lambda)^2}{2}-\frac{s^2}{2}\notag\\
 &=B-\frac{B^2}{2}-r(B-\lambda)-\frac{s^2}{2}.
 \label{eq:direct-finite-visible-expansion-two}
\end{align}
In the first line, the factor $\lambda(B+r)$ is
$\lambda(1-x-\lambda)$ after substituting
$x=1-B-\lambda-r$.  Expanding the remaining square then cancels all
$\lambda^2$ terms, yielding the second line.  Using $D=B-B^2/2$ proves the
exact identity
\begin{equation}
 D-E=r(B-\lambda)+\frac{s^2}{2}.
 \label{eq:direct-finite-visible-deficit}
\end{equation}
The sign estimate is immediate from the finite ranges:
$r\ge-O(n^{-1})$ by Eq.~\eqref{eq:direct-finite-slack-ranges},
$B-\lambda\ge0$ by Eq.~\eqref{eq:direct-finite-normalized-feasibility}, and
$s^2/2\ge0$.  Hence $D-E\ge-O(n^{-1})$.  More explicitly, if $E-D\le0$, the
clipped term equals $E-D$; if $E-D>0$, then $E-D=-(D-E)\le O(n^{-1})$, so the clipping
replaces a positive term of only order $O(n^{-1})$ by zero.  Hence the finite clipping in
Eq.~\eqref{eq:direct-finite-normalized-clipped-summand} changes the normalized
exponent by at most $O(n^{-1})$ relative to using $E-D$.  It remains to
lower-bound $I+D-E$.

For the second step, $v$ compares the normalized radical dimension with the
coupling rank, while $\delta$ records the positive-mass slack implied by
$a+u\le k$.  The resulting factorization controls $I-r\lambda$.  Set
\begin{equation}
 v:=u_0-\lambda,
 \quad \delta:=B-c-\lambda+r.
 \label{eq:direct-finite-hidden-slacks}
\end{equation}
The signs needed below come from distinct finite antecedents.  First,
\begin{equation}
 r-v=1-B-a_0,
 \label{eq:direct-finite-r-minus-v}
\end{equation}
Here the finite-dimensional reason is branch-specific and explicit.  In the
odd-prime branch, the realized space $\mathcal W(X_2)=\operatorname{row}Z_2$
is generated by the $h=n-m+1$ rows of $Z_2$, so $a\le n-m+1$ and therefore
$r-v\ge-1/n$.  In the characteristic-two branch, the lower row space is generated by
$n-m$ rows, so $a\le n-m$ and $r-v\ge0$.  Thus uniformly in both branches,
$r-v\ge-O(n^{-1})$.  Second,
\begin{equation}
 \delta-2v=A-a_0-u_0,
 \label{eq:direct-finite-delta-minus-two-v}
\end{equation}
which is nonnegative by $a+u\le k$.  Finally,
$r+s\ge c-O(n^{-1})$ and $s\le B-\lambda$ imply
$\delta\ge-O(n^{-1})$.

If $v\ge0$, Eq.~\eqref{eq:direct-finite-delta-minus-two-v} gives
$\delta\ge2v-O(n^{-1})$, and hence $\delta-v\ge-O(n^{-1})$.  If
$v<0$, the bound $\delta\ge-O(n^{-1})$ directly gives
$\delta-v\ge-O(n^{-1})$.  Thus $r-v$ and $\delta-v$ are both
nonnegative up to uniform $O(n^{-1})$ endpoint errors.  The second slack identity follows from the following expansion.  From
$a_0=x+u_0$, $x=1-B-\lambda-r$, and $u_0=\lambda+v$, one obtains
\begin{equation}
 a_0=1-B-r+v,
 \quad 1-B-a_0=r-v.
 \label{eq:direct-finite-hidden-expansion-one}
\end{equation}
Moreover, $c=1-A$ and
$\delta=B-c-\lambda+r$ imply
\begin{equation}
 A-a_0=\delta+\lambda-v.
 \label{eq:direct-finite-hidden-expansion-two}
\end{equation}
Substitution in Eq.~\eqref{eq:direct-finite-I} therefore gives
\begin{align}
 I-r\lambda
 &=(\delta+\lambda-v)(r-v)
   +\frac{(\lambda+v)^2}{2}-r\lambda\notag\\
 &=(r-v)(\delta-v)+\frac{\lambda^2}{2}
   +\lambda\{(r-v)+v-r\}+\frac{v^2}{2}\notag\\
 &=(r-v)(\delta-v)+\frac{\lambda^2}{2}+\frac{v^2}{2}.
 \label{eq:direct-finite-hidden-expansion-three}
\end{align}
Thus the linear terms in $\lambda$ cancel exactly, and we obtain the
factorization
\begin{equation}
 I-r\lambda
 =\frac{\lambda^2}{2}
 +(r-v)(\delta-v)+\frac{v^2}{2}.
 \label{eq:direct-finite-hidden-gap}
\end{equation}
All normalized variables are uniformly bounded, so the endpoint errors in the
product contribute only $O(n^{-1})$.  Therefore
\begin{equation}
 I\ge r\lambda-O(n^{-1}).
 \label{eq:direct-finite-hidden-bound}
\end{equation}
No condition $\ell+t\ge u$ is assumed, and no inequality
$k-a-\ell\ge0$ is inferred from the completion rank budget.
Combining Eqs.~\eqref{eq:direct-finite-visible-deficit} and
\eqref{eq:direct-finite-hidden-bound},
\begin{equation}
 I+D-E
 \ge rB+\frac{s^2}{2}-O(n^{-1}).
 \label{eq:direct-finite-rs-bound}
\end{equation}
The last inequality in Eq.~\eqref{eq:direct-finite-slack-ranges} gives
$r\ge(c-s)_+-O(n^{-1})$, and therefore
\begin{equation}
 I+D-E
 \ge B(c-s)_++\frac{s^2}{2}-O(n^{-1}),
 \quad 0\le s\le B.
 \label{eq:direct-finite-one-variable}
\end{equation}

If $c\ge B$, equivalently $A\le1-B$, the function on the right is
$B(c-s)+s^2/2$ and is minimized on $[0,B]$ at $s=B$.  Thus
\begin{equation}
 I+D-E\ge B\left(1-A-\frac B2\right)-O(n^{-1}).
 \label{eq:direct-finite-case-one}
\end{equation}
If $c\le B$, equivalently $A\ge1-B$, the same function is minimized at
$s=c$, giving
\begin{equation}
 I+D-E\ge\frac{(1-A)^2}{2}-O(n^{-1}).
 \label{eq:direct-finite-case-two}
\end{equation}
At $A=1-B$ the two branch values both equal $B^2/2$; at $B=1$ the
second branch is $(1-A)^2/2$.  Hence the exponent is continuous across the
transition and positive for every $A<1$ and $B>0$.
The corresponding limiting equality points are
\begin{equation}
 (a_0,u_0,\lambda,\tau)=
 \begin{cases}
  (A,0,0,0),&0\le A\le1-B,\\[1mm]
  (1-B,0,0,A+B-1),&1-B\le A<1.
 \end{cases}.
 \label{eq:direct-finite-limiting-equality-points}
\end{equation}
Both points satisfy the full limiting feasible region, including
$u_0\le A-a_0$.  At either point, $I=0$ and $E-D=-J(A,B)$, so
Eq.~\eqref{eq:direct-finite-clipping-max-identity} gives the limiting clipped
exponent $-J(A,B)$.  They also make the hidden-gap, visible-deficit, and
one-variable bounds equalities.  They are limiting algebraic equality points;
no claim is needed that either point is realized exactly for every finite
$n$.  Thus the limiting common objective has
maximum exactly $-J(A,B)$, while the finite theorem retains its
$O(n^{-1})$ endpoint corrections and clipping until compactness passage.
In either case, every admissible summand in the finite structural estimate is
bounded by
\begin{equation}
 p^{-J(A,B)n^2+O_p(n)}.
 \label{eq:direct-finite-summand-bound}
\end{equation}

There are at most $(k+1)^2$ admissible pairs of row-space dimension and
radical dimension $(a,u)$.  The completion variables $(\ell,t)$ have already
been maximized inside each finite structural summand.  Applying the
largest-term bound directly to
Eq.~\eqref{eq:finite-probability-comparison-5} for odd $p$, and to its binary
counterpart in
Theorem~\ref{thm:converse-binary-finite-structural-estimate}, gives
\begin{equation}
 P_{\mathrm{id}}^{(k),*}(n,m;p)
 \le (k+1)^2 p^{-J(k/n,m/n)n^2+O_p(n)}.
 \label{eq:direct-finite-final-bound}
\end{equation}
Taking base-$p$ logarithms, dividing by $n^2$, and using
$k/n\to\alpha$, $m/n\to\beta$, and continuity of $J$ proves
Eq.~\eqref{eq:direct-finite-limsup}.
The argument therefore applies to both the odd-prime and binary finite
structural bounds, proving the stated exponent estimate for every fixed prime
$p$.
\end{proof}

Equation~\eqref{eq:quadratic-converse} proves the exact converse stated
in Theorem~\ref{thm:refined-exact-converse}.  Section~\ref{sec:integrated-all-prime-graded}
combines this converse with the bounded-rank exact-label-list reduction to
prove the stronger graded conclusion
\begin{equation}
 S_{\mathrm{ver}}^{(k),*}(n,m;p)\longrightarrow0.
 \label{eq:graded-conclusion-forward-reference}
\end{equation}

\section{Direct bound}
\label{sec:fourier-achievability}
For achievability, we solve the stronger problem of complete stabilizer
identification and then extract the requested report by deterministic
classical postprocessing.  This route also handles exceptional inputs without
choosing a preferred complement in advance.  First, we prove PGM optimality
for transitive pure-state ensembles and bound its error by pairwise overlaps.
For stabilizer states, grouping those overlaps by Lagrangian-intersection
codimension gives a finite error bound that vanishes when $k-n\to+\infty$.
Clifford covariance makes the success probability state independent, so the
postprocessing gives both worst-case and uniform-average achievability for
the fixed-$W$ and frame-assisted tasks.  Correct reports have verification
score one, which supplies the direct bound for that criterion as well.

\subsection{Transitive PGM optimality and its error estimate}
\label{subsec:unrestricted-stabilizer-direct}
Constructive linear-copy procedures for qubit stabilizer learning are known,
but the present task is the uniform-prior minimum-error problem for every
prime $p$.  We therefore begin with a general finite transitive pure-state
ensemble.  Transitivity makes the PGM covariant and optimal, including when
the average state has nontrivial kernel, and a pairwise-overlap estimate turns
its error into an overlap-energy sum.  The following subsection specializes
that general estimate to stabilizer states, where the energy can be evaluated
by Lagrangian intersection shells.

\subsubsection{Transitive PGM formula}
\label{subsubsec:transitive-pgm-formula}

\begin{theorem}[PGM optimality for transitive pure-state ensembles]
\label{thm:transitive-pure-pgm}
Let $\{N^{-1},\ket{\psi_x}\!\bra{\psi_x}:x\in X\}$ be a finite
uniform pure-state ensemble.  Suppose that a group $\mathcal G$ acts
transitively on the label set $X$ and that, for every $g\in\mathcal G$, there
is a unitary $U_g$ such that
\begin{equation}
 U_g\ket{\psi_x}=e^{i\phi(g,x)}\ket{\psi_{g\cdot x}}
 \quad(x\in X)
 \label{eq:transitive-labeled-ensemble-covariance}
\end{equation}
for suitable phases $\phi(g,x)$.  Then the PGM is minimum-error optimal.  The
statement remains valid when the label-space Gram matrix is singular; all
inverses are Moore--Penrose inverses on the common support.
\end{theorem}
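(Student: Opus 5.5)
The plan is a weak-duality argument. I will show that the PGM value is attained and that it matches the value of an explicit feasible point of the dual semidefinite program. Put $\overline\rho:=N^{-1}\sum_{x\in X}\ket{\psi_x}\!\bra{\psi_x}$ and let $\mathcal K:=\operatorname{supp}\overline\rho=\operatorname{span}\{\ket{\psi_x}\}$. All inverse powers of $\overline\rho$ are Moore--Penrose inverses on $\mathcal K$. The PGM elements are $M_x^{\PGM}=N^{-1}\overline\rho^{-1/2}\ket{\psi_x}\!\bra{\psi_x}\overline\rho^{-1/2}$, completed arbitrarily on $\mathcal K^\perp$ as in the proof of Theorem~\ref{thm:detailed-seed-solution}. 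Its average success probability is
\[
 P_{\PGM}=\frac1{N^2}\sum_x\bigl(\bra{\psi_x}\overline\rho^{-1/2}\ket{\psi_x}\bigr)^2 .
\]

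\emph{Step 1: covariance gives a constant diagonal.} From Eq.~\eqref{eq:transitive-labeled-ensemble-covariance}, $U_g$ permutes the rank-one projectors $\ket{\psi_x}\!\bra{\psi_x}$, because the phases cancel. Hence $U_g\overline\rho U_g^\dagger=\overline\rho$. Consequently $U_g$ preserves $\mathcal K$ and commutes with every functional-calculus power of $\overline\rho$ on $\mathcal K$, including the Moore--Penrose power $\overline\rho^{-1/2}$. It follows that $q_x:=\bra{\psi_x}\overline\rho^{-1/2}\ket{\psi_x}$ satisfies $q_{g\cdot x}=q_x$. By transitivity, $q_x=q$ is independent of $x$, and therefore $P_{\PGM}=q^2/N^2$.

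\emph{Step 2: a dual certificate.} For any POVM $\{M_x\}$ and any operator $Z\ge N^{-1}\ket{\psi_x}\!\bra{\psi_x}$ for all $x$,
\[
 N^{-1}\sum_x\Tr[M_x\ket{\psi_x}\!\bra{\psi_x}]\le\sum_x\Tr[M_xZ]=\Tr Z .
\]
I will choose $Z:=\kappa\,\overline\rho^{1/2}$ with $\kappa>0$ to be fixed. Since $\ket{\psi_x}\in\mathcal K$ and $Z$ is invertible on $\mathcal K$, the operator inequality $Z\ge N^{-1}\ket{\psi_x}\!\bra{\psi_x}$ is equivalent to
\[
 N^{-1}\bra{\psi_x}Z^{-1}\ket{\psi_x}\le1,\qquad\text{that is,}\qquad q\le N\kappa .
\]
This equivalence is the standard rank-one Schur-complement criterion, applied on $\mathcal K$. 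I will therefore take $\kappa=q/N$.

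\emph{Step 3: evaluate the trace.} Using $\overline\rho^{1/2}=\overline\rho^{-1/2}\overline\rho$ on $\mathcal K$,
\[
 \Tr\overline\rho^{1/2}=N^{-1}\sum_x\bra{\psi_x}\overline\rho^{-1/2}\ket{\psi_x}=q .
\]
Hence $\Tr Z=q^2/N^2=P_{\PGM}$. Every POVM therefore has success probability at most $P_{\PGM}$, which proves that the PGM is minimum-error optimal.

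The only delicate point is the singular case, which is handled by working throughout on $\mathcal K$. Three facts are needed there. First, the ensemble vectors lie in $\mathcal K$. Second, the Schur-complement criterion holds on $\mathcal K$, while the inequality $Z\ge N^{-1}\ket{\psi_x}\!\bra{\psi_x}$ is trivial on $\mathcal K^\perp$, where both sides vanish. Third, the covariant unitaries commute with the pseudo-inverse. The phases $e^{i\phi(g,x)}$ play no role because only the rank-one projectors enter the argument.
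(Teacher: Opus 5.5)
Your argument is correct and is essentially the paper's proof: the paper's dual certificate $Y=d\sqrt R$ with $d=(\sqrt G)_{xx}=q/N$ is exactly your $Z=(q/N)\,\overline\rho^{1/2}$, checked by the same rank-one pseudoinverse criterion. The only cosmetic difference is that the paper gets the constant diagonal from the induced monomial action on label space (invariance of $\sqrt G$), whereas you get it from $U_g$ commuting with $\overline\rho^{-1/2}$. There is one arithmetic slip: $P_{\PGM}=N^{-2}\sum_x q_x^2=q^2/N$ and $\Tr Z=(q/N)\,q=q^2/N$, not $q^2/N^2$; because both quantities carry the same error, the primal--dual match and hence the conclusion are unaffected.
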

The conclusion is not inferred from covariance alone.  The transitive action
on the labels first forces the positive square root of the Gram matrix to have
constant diagonal;
the proof then uses that constant-diagonal property to construct an explicit
dual feasible operator whose objective equals the PGM primal value.  Thus the
argument verifies the minimum-error optimality conditions directly, including
when the label-space Gram matrix is singular.
\begin{proof}
Put $\ket{\varphi_x}:=N^{-1/2}\ket{\psi_x}$, and let $S$ be the
synthesis operator with columns $\ket{\varphi_x}$.  Write
\begin{equation}
 G:=S^\dagger S,\quad R:=SS^\dagger,
 \quad \mathcal K:=\supp R=\operatorname{ran}S.
 \label{eq:transitive-pgm-G-R-support}
\end{equation}
We first formulate minimum-error discrimination on the common support
$\mathcal K$.  With the prior absorbed into the subnormalized vectors
$\ket{\varphi_x}$, the primal and dual semidefinite programs are
\begin{align}
 \text{maximize}\quad
 &\sum_{x\in X}\bra{\varphi_x}M_x\ket{\varphi_x},
 &M_x&\ge0,\quad \sum_{x\in X}M_x=\Pi_{\mathcal K},
 \label{eq:transitive-pgm-primal-sdp}\\
 \text{minimize}\quad
 &\Tr_{\mathcal K}Y,
 &Y&\ge\ket{\varphi_x}\!\bra{\varphi_x}
       \quad(x\in X),
 \label{eq:transitive-pgm-dual-sdp}
\end{align}
where the dual variable is supported on $\mathcal K$.  Any supported POVM
can be completed arbitrarily on $\mathcal K^\perp$, and every ensemble state
is supported on $\mathcal K$.  Hence this completion changes neither the
outcome probabilities nor the success probability.

All inverse powers below are Moore--Penrose inverse powers.  Define the
supported PGM vectors by
\begin{equation}
 \ket{\mu_x}:=SG^{+1/2}\ket{x}.
 \label{eq:transitive-pgm-vectors-supported}
\end{equation}
They form a POVM on the common support because
\begin{equation}
 \sum_{x\in X}\ket{\mu_x}\!\bra{\mu_x}
 =SG^+S^\dagger=\Pi_{\mathcal K}.
 \label{eq:transitive-pgm-supported-normalization}
\end{equation}
Moreover, functional calculus on $\supp G$ gives
$G^{+1/2}G=\sqrt G$.  Therefore the PGM overlap matrix is obtained by the
explicit calculation
\begin{equation}
 \langle\mu_x|\varphi_y\rangle
 =\bra{x}G^{+1/2}S^\dagger S\ket{y}
 =\bra{x}\sqrt G\ket{y}
 =(\sqrt G)_{xy}.
 \label{eq:transitive-pgm-overlap-matrix}
\end{equation}

The covariance relation
Eq.~\eqref{eq:transitive-labeled-ensemble-covariance} induces a monomial
unitary $W_g$ on the label space: it combines the permutation
$x\mapsto g\cdot x$ with the corresponding diagonal phase factors.  The
synthesis relation implies $W_gGW_g^\dagger=G$, and functional calculus then
gives $W_g\sqrt G W_g^\dagger=\sqrt G$.  The argument uses only this
conjugation invariance for each $g$; it does not require the monomial operators
$g\mapsto W_g$ to form a strict linear representation on the label space.
Since the given action is transitive
on $X$, all diagonal entries of $\sqrt G$ are equal.  Write
$(\sqrt G)_{xx}=d$.  The ensemble is nonzero, so
$\Tr\sqrt G>0$; the constant-diagonal property implies $d>0$.  Using
Eq.~\eqref{eq:transitive-pgm-overlap-matrix}, the supported PGM has primal
value
\begin{equation}
 P^{\PGM}
 =\sum_{x\in X}\lvert\langle\mu_x|\varphi_x\rangle\rvert^2
 =\sum_{x\in X}d^2=Nd^2.
 \label{eq:transitive-pgm-primal-value}
\end{equation}

We next construct a dual certificate with the same value.  Define
\begin{equation}
 Y:=d\sqrt R
 \quad\text{on }\mathcal K,
 \label{eq:transitive-pgm-dual-certificate}
\end{equation}
and set it equal to zero on $\mathcal K^\perp$.  To evaluate its
pseudoinverse on the signal vectors, take a reduced singular-value
decomposition $S=U\Sigma V^\dagger$.  Then
$R=U\Sigma^2U^\dagger$, $G=V\Sigma^2V^\dagger$, and hence
\begin{equation}
 S^\dagger(\sqrt R)^+S
 =V\Sigma U^\dagger U\Sigma^+U^\dagger U\Sigma V^\dagger
 =V\Sigma V^\dagger=\sqrt G.
 \label{eq:transitive-pgm-svd-pseudoinverse-identity}
\end{equation}
Consequently, for every $x$,
\begin{equation}
 \bra{\varphi_x}Y^+\ket{\varphi_x}
 =d^{-1}\bra{x}S^\dagger(\sqrt R)^+S\ket{x}
 =d^{-1}(\sqrt G)_{xx}=1.
 \label{eq:transitive-pgm-dual-saturation}
\end{equation}

This equality also verifies the dual constraints in
Eq.~\eqref{eq:transitive-pgm-dual-sdp}.  For a positive operator $Y$ and
$v\in\supp Y$, conjugation on $\supp Y$ by $Y^{+1/2}$ gives the rank-one
domination criterion
\begin{equation}
 Y\ge\ket{v}\!\bra{v}
 \quad\Longleftrightarrow\quad
 I\ge\ket{Y^{+1/2}v}\!\bra{Y^{+1/2}v}
 \quad\Longleftrightarrow\quad
 \bra{v}Y^+\ket{v}\le1.
 \label{eq:rank-one-pseudoinverse-domination}
\end{equation}
Each $\ket{\varphi_x}$ belongs to $\mathcal K=\supp Y$, so
Eq.~\eqref{eq:transitive-pgm-dual-saturation} and the criterion imply
$Y\ge\ket{\varphi_x}\!\bra{\varphi_x}$ for every $x$.  Thus $Y$ is dual
feasible.

Finally, $R$ and $G$ have the same nonzero eigenvalues.  The dual objective
therefore equals the PGM primal value:
\begin{equation}
 \Tr_{\mathcal K}Y
 =d\Tr\sqrt R=d\Tr\sqrt G=Nd^2=P^{\PGM}.
 \label{eq:transitive-pgm-primal-dual-match}
\end{equation}
A primal feasible measurement and a dual feasible operator have attained the
same value, so weak duality forces both to be optimal.  Completing the
supported PGM on $\mathcal K^\perp$ proves the theorem on the full Hilbert
space, including the singular-Gram-matrix case.
\end{proof}

We next bound the transitive PGM error by the sum of pairwise squared
overlaps.  This estimate will be evaluated for the unrestricted stabilizer
ensemble by its Lagrangian-intersection classes.
\subsubsection{Overlap-energy error control}
\label{subsubsec:transitive-pgm-error-control}

\begin{lemma}[Squared-overlap control of the transitive PGM error]
\label{lem:transitive-pgm-overlap-energy}
Under the assumptions of Theorem~\ref{thm:transitive-pure-pgm}, fix
$x_0\in X$.  Then
\begin{equation}
 1-P^{\PGM}
 \le\sum_{x\ne x_0}|\langle\psi_{x_0}|\psi_x\rangle|^2.
 \label{eq:transitive-pgm-overlap-energy}
\end{equation}
For the $k$-copy ensemble, the summands on the right are replaced by
$|\langle\psi_{x_0}|\psi_x\rangle|^{2k}$.
\end{lemma}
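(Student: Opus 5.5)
The plan reuses the objects from the proof of Theorem~\ref{thm:transitive-pure-pgm}. By Eq.~\eqref{eq:transitive-pgm-primal-value} we have $P^{\PGM}=Nd^2$, where $d=(\sqrt G)_{xx}$ is the common diagonal entry and $G=S^\dagger S$ has entries $G_{xy}=N^{-1}\langle\psi_x|\psi_y\rangle$. The bound then reduces to the matrix inequality
\[
 1-Nd^2\le\sum_{x\ne x_0}|\langle\psi_{x_0}|\psi_x\rangle|^2 .
\]
The right-hand side equals $N\sum_{x\ne x_0}|G_{x_0x}|^2=N\bigl((G^2)_{x_0x_0}-G_{x_0x_0}^2\bigr)$. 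Since $G_{x_0x_0}=N^{-1}$, this becomes $N(G^2)_{x_0x_0}-N^{-1}$. Transitivity makes $(G^2)_{xx}$ independent of $x$, so it equals $\Tr(G^2)/N$. Thus the right-hand side is $\Tr(G^2)-N^{-1}$. The left-hand side is $1-(\Tr\sqrt G)^2/N$, using $d=\Tr\sqrt G/N$ from the constant diagonal. Both sides are now spectral functions of $G$.

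Let $\lambda_i\ge0$ be the eigenvalues of $G$, with $\sum_i\lambda_i=\Tr G=1$. The goal becomes
\[
 1-\frac{(\sum_i\sqrt{\lambda_i})^2}{N}\le\sum_i\lambda_i^2-\frac1N .
\]
I would prove this with the elementary inequality $\sqrt\lambda\ge\lambda(3-N\lambda)/2\cdot$(scaled), but a cleaner route is the operator inequality $\sqrt G\ge\sqrt N\,\bigl(G-\tfrac N2(G^2-\ldots)\bigr)$. Concretely, I would start from the scalar inequality $\sqrt{y}\ge\tfrac32y-\tfrac12y^2$ for $y\ge0$, which holds since $\sqrt y-\tfrac32y+\tfrac12y^2=\tfrac12\sqrt y(\sqrt y-1)^2(\sqrt y+2)\ge0$. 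Apply it with $y=N\lambda_i$ and sum over $i$, using $\sum_i\lambda_i=1$:
\[
 \sqrt N\sum_i\sqrt{\lambda_i}\ge\tfrac32N-\tfrac12N^2\sum_i\lambda_i^2 .
\]
Set $Q:=N\sum_i\lambda_i^2\ge1$, where the lower bound follows from Cauchy--Schwarz. Dividing by $N$ gives $(\sum_i\sqrt{\lambda_i})/\sqrt N\ge(3-Q)/2$.

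If $Q\ge3$, the target inequality is trivial: its right-hand side equals $(Q-1)/N\ge1$, because $\sum_{x\ne x_0}|\langle\psi_{x_0}|\psi_x\rangle|^2=Q-1$ by the earlier computation (right-hand side $=N(G^2)_{x_0x_0}-1=Q-1$, recomputing carefully with $G_{x_0x_0}=N^{-1}$). Otherwise, squaring gives $Nd^2\ge\bigl((3-Q)/2\bigr)^2\ge 1-(Q-1)$, since $\bigl((3-Q)/2\bigr)^2-(2-Q)=(Q-1)^2/4\ge0$. This yields $1-P^{\PGM}\le Q-1$, which is exactly the claimed sum.

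The main point to check is the normalization bookkeeping: that $\sum_{x\ne x_0}|\langle\psi_{x_0}|\psi_x\rangle|^2=N(G^2)_{x_0x_0}\cdot N-1$ matches $Q-1$ with the factors of $N$ placed correctly. I would verify this directly from $G_{xy}=N^{-1}\langle\psi_x|\psi_y\rangle$. The $k$-copy statement follows by applying the same argument to the transitive ensemble $\{\ket{\psi_x}^{\otimes k}\}$, which inherits the covariance through $U_g^{\otimes k}$, and whose overlaps are the $k$th powers.
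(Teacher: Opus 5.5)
Your argument is correct and is essentially the paper's proof. The scalar inequality $\sqrt y\ge\tfrac32y-\tfrac12y^2$ is exactly the paper's pointwise bound $(\sqrt\lambda-1)^2\le(\lambda-1)^2$ rearranged. Averaging uniformly over the eigenvalues of $NG$ is the same as integrating against the spectral measure of the unweighted Gram matrix at $\ket{x_0}$, because transitivity makes the diagonals of $\sqrt G$ and $G^2$ constant. The paper also avoids your $Q\ge3$ case split by using $1-d_0^2\le2(1-d_0)$ directly, which holds unconditionally since the difference is $(1-d_0)^2$.

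Do clean up the normalization. You have $\sum_{x\ne x_0}|\langle\psi_{x_0}|\psi_x\rangle|^2=N^2(G^2)_{x_0x_0}-1=N\Tr(G^2)-1=Q-1$, not $\Tr(G^2)-N^{-1}$ or $(Q-1)/N$, although your final step does use the correct value $Q-1$.
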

\begin{proof}
Let $H$ be the unweighted Gram matrix, so $H_{xx}=1$.  The same transitive monomial action used in
Theorem~\ref{thm:transitive-pure-pgm} makes
$d_0:=(\sqrt H)_{xx}$ independent of $x$, and the PGM success is $d_0^2$.
Let $\mu$ be the spectral probability measure of $H$ at the coordinate
vector $\ket{x_0}$.  Then
\begin{equation}
 \int\lambda\,d\mu(\lambda)=1,
 \quad d_0=\int\sqrt\lambda\,d\mu(\lambda).
\end{equation}
Jensen's inequality gives
$d_0=\int\sqrt\lambda\,d\mu(\lambda)\le1$.  Because
$(\sqrt\lambda-1)^2\le(\lambda-1)^2$ for $\lambda\ge0$,
\begin{align}
 1-d_0^2
 &=(1-d_0)(1+d_0)\le2(1-d_0)
 =\int(\sqrt\lambda-1)^2\,d\mu(\lambda)\notag\\
 &\le\int(\lambda-1)^2\,d\mu(\lambda)
 =(H^2)_{x_0x_0}-1
 =\sum_{x\ne x_0}\lvert H_{x_0x}\rvert^2.
\end{align}
This proves the first claim.  Tensor powers replace each Gram entry by its
$k$th power.
\end{proof}

\subsection{Stabilizer overlaps and Lagrangian intersection classes}
\label{subsec:direct-overlaps-shells}

We now specialize the transitive-ensemble bound.  For two stabilizer states,
the codimension $r$ of the Lagrangian intersection determines both ingredients
of the overlap energy: compatibility gives squared overlap $p^{-r}$, and the
intersection-shell count gives the number of underlying Lagrangians at that
codimension.  This common parameter therefore converts the abstract PGM bound
into a one-dimensional finite sum.

Fix $x_0\in\mathsf{Stab}_{n,p}$.  For
$x\in\mathsf{Stab}_{n,p}$ put
\begin{equation}
 r(x,x_0):=n-\dim(L_x\cap L_{x_0}).
 \label{eq:unrestricted-intersection-codimension}
\end{equation}
Call $x$ and $x_0$ \emph{compatible on their intersection} if they have the
same eigenvalue on every common physical Pauli operator associated with
$L_x\cap L_{x_0}$.  Equality of the two eigenvalues is independent of the
chosen common phase representative, because changing that representative
multiplies both eigenvalues by the same phase.  Thus the condition is
intrinsic and does not require a common phase section or a
characteristic-independent character coordinate.
\begin{lemma}[Unrestricted stabilizer overlaps]
\label{lem:unrestricted-stabilizer-overlaps}
If $r=r(x,x_0)$, then
\begin{equation}
 |\langle\phi_{x_0}|\phi_x\rangle|^2
 =\begin{cases}
 p^{-r},&x\text{ and }x_0\text{ are compatible on their intersection},\\
 0,&\text{otherwise}.
 \end{cases},
 \label{eq:unrestricted-stabilizer-overlap}
\end{equation}
For every fixed Lagrangian $L$ satisfying
$\dim(L\cap L_{x_0})=n-r$, exactly $p^r$ projective stabilizer states with
underlying Lagrangian $L$ are compatible with $x_0$ on the intersection.
\end{lemma}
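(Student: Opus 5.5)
The plan is to express both states through their stabilizer projectors and reduce the overlap to a character sum over the common stabilizer group. Write $\rho_x=\Pi_{L_x,\chi_x}=p^{-n}\sum_{z\in L_x}\chi_x(z)^{-1}P_z$, which is Eq.~\eqref{eq:intrinsic-character-projector} with $\dim L_x=n$, and similarly for $x_0$. Then
\[
|\langle\phi_{x_0}|\phi_x\rangle|^2=\Tr[\rho_{x_0}\rho_x]=p^{-2n}\sum_{z\in L_{x_0}}\sum_{z'\in L_x}\chi_{x_0}(z)^{-1}\chi_x(z')^{-1}\Tr[P_zP_{z'}].
\]
The trace $\Tr[P_zP_{z'}]$ is nonzero only when $z+z'=0$, by tracelessness of nontrivial Paulis, with value $p^n$ times a phase. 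Only pairs with $z'=-z\in L_x\cap L_{x_0}$ survive.

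A cleaner way to organize the same computation avoids the section phases. Put $I:=L_x\cap L_{x_0}$ of dimension $n-r$. Both states lie in the joint eigenspaces of the commuting family over $I$.
\begin{itemize}
\item If the eigenvalue assignments on $I$ differ, the states lie in orthogonal eigenspaces and the overlap is zero. The difference is intrinsic, by the remark preceding the lemma.
\item If they agree, write $\Pi_I$ for the common joint projector, of rank $p^r$. Then $\rho_{x_0}\le\Pi_I$, and $\Tr[\rho_{x_0}\rho_x]=p^{-n}\sum_{z\in L_{x_0}}\chi_{x_0}(z)^{-1}\langle\phi_x|P_z|\phi_x\rangle$.
\end{itemize}
For $z\notin L_x$, the expectation $\langle\phi_x|P_z|\phi_x\rangle$ vanishes. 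Indeed, some $s\in L_x$ has $\spair{s}{z}\ne0$, since $L_x=L_x^\perp$, so $P_z$ anticommutes up to a nontrivial root of unity with a stabilizer of $\phi_x$. For $z\in I$, the expectation is $\chi_x(z)$. Compatibility therefore makes each of the $p^{n-r}$ surviving terms equal to $1$, giving $p^{-n}p^{n-r}=p^{-r}$. This yields Eq.~\eqref{eq:unrestricted-stabilizer-overlap}.

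For the counting claim, fix $L$ with $\dim(L\cap L_{x_0})=n-r$. The projective stabilizer states with Lagrangian $L$ correspond bijectively to the $p^n$ cocycle-compatible assignments on $L$. These are a torsor under the additive characters of $L$, so there are $p^n$ of them, matching the $p^n$ orthogonal joint eigenvectors.

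Compatibility prescribes the restriction to the $(n-r)$-dimensional subspace $I$, namely the restriction of $\chi_{x_0}$. Viewed through the torsor, this is the condition that a character of $L$ takes prescribed values on $I$. Restriction from characters of $L$ to characters of $I$ is surjective with kernel of size $p^{n-(n-r)}=p^r$. Exactly one restriction class must be matched, and it is nonempty: the assignment $\chi_{x_0}|_I$ is itself cocycle-compatible on $I$ and extends to $L$ by choosing a complement, as in Eq.~\eqref{eq:fixed-W-explicit-cocycle-extension}. Hence exactly $p^r$ states qualify.

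The main obstacle is handling the characteristic-two cocycle cleanly. There, the ``assignments'' are not characters, and $\Tr[P_zP_{-z}]$ carries section phases. I would route everything through intrinsic eigenvalues of physical operators, as sketched above, rather than vector labels. The torsor statement and the extension step then use only cocycle-compatibility and the explicit extension formula already established in Section~\ref{subsec:fixed-W-task}.
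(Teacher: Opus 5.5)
Your proposal is correct and follows essentially the paper's route: you reduce $\Tr(\rho_{x_0}\rho_x)$ to a sum over the common directions $L_x\cap L_{x_0}$, and then count compatible states via the torsor under $L^*$ and the surjective restriction $L^*\to(L\cap L_{x_0})^*$, whose kernel has size $p^r$. Your variations are minor: you expand only $\rho_{x_0}$ and use the vanishing of $\langle\phi_x|P_z|\phi_x\rangle$ off $L_x$; you settle the incompatible case by orthogonality of eigenvectors of the unitary $P_z$ with distinct eigenvalues rather than by a nontrivial character sum; and your extension step leaves implicit that some compatible assignment exists on the chosen complement, which is immediate from any joint eigenvector of that commuting family.
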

\begin{proof}
Put $K:=L_x\cap L_{x_0}$.  For a projective stabilizer state $z$, let
$\mathcal S_z$ denote its phase-normalized stabilizer group, containing one
operator above each vector of $L_z$.  Its rank-one projector has the intrinsic
expansion
\begin{equation}
 \ket{\phi_z}\!\bra{\phi_z}
 =p^{-n}\sum_{P\in\mathcal S_z}P.
 \label{eq:unrestricted-stabilizer-projector-expansion}
\end{equation}
Hence
\begin{align}
 \lvert\langle\phi_{x_0}\mid\phi_x\rangle\rvert^2
 &=\Tr(\rho_{x_0}\rho_x)\notag\\
 &=p^{-2n}\sum_{P\in\mathcal S_{x_0}}
                 \sum_{Q\in\mathcal S_x}\Tr(PQ).
 \label{eq:unrestricted-overlap-projector-trace}
\end{align}
We evaluate the surviving terms by their common phase-space direction.  Fix
any Pauli section $v\mapsto D_v$ on the intersection $K$.  For
$z\in\{x_0,x\}$, write the phase-normalized stabilizer operator above
$v\in K$ as
\begin{equation}
 P_z(v)=\chi_z(v)^{-1}D_v,
 \label{eq:unrestricted-common-direction-representative}
\end{equation}
where changing the section multiplies both $D_v$ and the two coordinate
characters by the same phase.  Hence the ratio
$\chi_x(v)\chi_{x_0}(v)^{-1}$ is intrinsic.  Pauli trace orthogonality reduces
the double sum in
Eq.~\eqref{eq:unrestricted-overlap-projector-trace} to the common directions
$v\in K$ and gives
\begin{equation}
 \Tr(\rho_{x_0}\rho_x)
 =p^{-n}\sum_{v\in K}
 \chi_x(v)\chi_{x_0}(v)^{-1}.
 \label{eq:unrestricted-overlap-common-character-sum}
\end{equation}
The summand is the ratio of the two restricted characters.  If the states are
compatible on $K$, this ratio is the trivial character, so the sum equals
$p^{-n}|K|=p^{-n}p^{n-r}=p^{-r}$.  If compatibility fails, the ratio is a
nontrivial character of the additive group $K$, and character orthogonality
makes the sum zero.  This proves
Eq.~\eqref{eq:unrestricted-stabilizer-overlap} for every prime without choosing
a common phase convention globally.

For the counting statement, fix one cocycle-compatible assignment on $L$.
The ratio of any two compatible assignments on $L$ is an ordinary character
of the additive space $L$, and multiplication by characters acts freely and
transitively on the compatible assignments.  Thus these assignments form a
torsor under $L^*$.  After fixing one compatible restriction on
$K=L\cap L_{x_0}$, the remaining counting problem is the restriction map
$L^*\to K^*$.  This map is surjective: extend a basis of $K$ to a basis of
$L$ and choose the remaining character values arbitrarily.  Its kernel has
dimension $n-(n-r)=r$, so the prescribed compatible restriction has exactly
$p^r$ extensions.  This torsor argument is intrinsic and remains valid in
characteristic two, where the compatible assignments themselves need not be
identified with ordinary characters relative to the fixed physical section.
\end{proof}

\begin{lemma}[Classes of Fixed Lagrangian-Intersection Codimension]
\label{lem:lagrangian-intersection-shells}
The number of Lagrangians $L\subseteq\F_p^{2n}$ satisfying
$\dim(L\cap L_{x_0})=n-r$ is
\begin{equation}
 N_{\rm Lag}(n,r;p)
 =\genfrac{[}{]}{0pt}{}{n}{r}_p p^{r(r+1)/2}.
 \label{eq:lagrangian-intersection-shell-count}
\end{equation}
\end{lemma}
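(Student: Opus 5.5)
The count reduces to the symmetric-matrix normal form of Lemma~\ref{lem:transverse-symmetric-matrix-normal-form}, organized by the intersection subspace. By symplectic transitivity on Lagrangians, I may assume $L_{x_0}=V^\ast:=\{(u,0):u\in\F_p^n\}$ after applying a suitable symplectic map. Any such map permutes the Lagrangians and preserves intersection dimensions, so the count does not depend on $x_0$.

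\textbf{Step 1: choose the intersection.} Set $K:=L\cap L_{x_0}$, an $(n-r)$-dimensional subspace of $L_{x_0}\cong\F_p^n$. There are $\genfrac{[}{]}{0pt}{}{n}{n-r}_p=\genfrac{[}{]}{0pt}{}{n}{r}_p$ choices for $K$. The stabilizer of $L_{x_0}$ in the symplectic group acts on $L_{x_0}$ through all of $\operatorname{GL}(n,\F_p)$, realized by $(u,v)\mapsto(Ru,R^{-T}v)$. This action is transitive on these subspaces, so the number of $L$ with a prescribed $K$ is independent of $K$. It therefore suffices to count for one fixed $K$.

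\textbf{Step 2: reduce to a quotient.} Every Lagrangian $L\supseteq K$ lies in $K^\perp$. Such $L$ correspond bijectively to Lagrangians $\overline L=L/K$ in the $2r$-dimensional symplectic quotient $K^\perp/K$, via $L\mapsto L/K$ with inverse given by preimage. Under this bijection, the condition $L\cap L_{x_0}=K$ becomes $\overline L\cap\overline{L_{x_0}}=\{0\}$, where $\overline{L_{x_0}}=L_{x_0}/K$ is Lagrangian in the quotient. The verification is that $v\in L\cap L_{x_0}$ with $v\notin K$ maps to a nonzero element of that intersection, and conversely.

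\textbf{Step 3: count transverse Lagrangians.} I now need the number of Lagrangians in a $2r$-dimensional symplectic space that are transverse to a fixed Lagrangian $M$. Choose a symplectic basis in which $M$ is the vertical space $V$. Lemma~\ref{lem:transverse-symmetric-matrix-normal-form}, applied with $n$ replaced by $r$, puts these Lagrangians in bijection with $\Sym_r(\F_p)$, so there are exactly $p^{r(r+1)/2}$ of them. The role of vertical versus horizontal is symmetric under the standard symplectic swap, so the choice of $M$ causes no issue. Multiplying the counts from Steps 1 and 3 gives Eq.~\eqref{eq:lagrangian-intersection-shell-count}.

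\textbf{Main obstacle.} The only delicate point is the symplectic reduction in Step 2. I need both that $K^\perp/K$ carries a nondegenerate induced form and that the bijection preserves the transversality condition exactly. The first holds because the radical of the restricted form on $K^\perp$ is $K^{\perp\perp}\cap K^\perp=K$. The second is the short argument already indicated in Step 2.

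As a consistency check, summing over $r$ should give the total number of Lagrangians, $\prod_{j=1}^n(p^j+1)$, quoted in Remark~\ref{rem:symmetric-matrix-normal-form-nonvanishing-fraction}. This is the $q$-binomial identity $\sum_r\genfrac{[}{]}{0pt}{}{n}{r}_pp^{r(r+1)/2}=\prod_{j=1}^n(1+p^j)$.
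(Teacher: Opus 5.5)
Your proof is correct and follows essentially the same route as the paper: choose $K=L\cap L_{x_0}$ in $\genfrac{[}{]}{0pt}{}{n}{r}_p$ ways, pass to the $2r$-dimensional symplectic quotient $K^\perp/K$, and count quotient Lagrangians transverse to $L_{x_0}/K$ as graphs of symmetric $r\times r$ maps. Your preliminary normalization of $L_{x_0}$ and the transitivity argument in Step~1 are harmless but unnecessary, since the quotient count $p^{r(r+1)/2}$ is already independent of $K$.
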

\begin{proof}
First choose
$K=L\cap L_{x_0}$, an $(n-r)$-dimensional subspace of $L_{x_0}$; there are
$\genfrac{[}{]}{0pt}{}{n}{r}_p$ choices.  Since $K$ is isotropic, the quotient
$K^\perp/K$ is a $2r$-dimensional symplectic space.  Every Lagrangian $L$
containing $K$ determines the Lagrangian image $L/K$ in this quotient, and
\begin{equation}
 L\cap L_{x_0}=K
 \quad\Longleftrightarrow\quad
 (L/K)\cap(L_{x_0}/K)=\{0\}.
 \label{eq:lagrangian-shell-quotient-transversality}
\end{equation}
Conversely, the inverse image in $K^\perp$ of every quotient Lagrangian
transverse to $L_{x_0}/K$ is a unique Lagrangian containing $K$ and satisfying
this exact intersection condition.  Thus no additional multiplicity is
introduced when passing to the quotient.

Relative to a complementary quotient Lagrangian, the transverse Lagrangians
are precisely the graphs of symmetric $r\times r$ linear maps.  There are
$p^{r(r+1)/2}$ such maps.  Multiplying by the number of choices for $K$ gives
Eq.~\eqref{eq:lagrangian-intersection-shell-count}.
\end{proof}

\subsection{Finite unrestricted error bound and fixed-\texorpdfstring{$W$}{W} postprocessing}
\label{subsec:direct-finite-bound-postprocessing}

The abstract PGM estimate can now be summed shell by shell.  Each shell
contributes its Lagrangian multiplicity, the $p^r$ compatible stabilizer
characters, and the $k$-copy squared overlap $p^{-kr}$.  This gives the finite
energy below and, after a Gaussian-binomial bound, the unit-rate sequence
condition.  Only after complete-state recovery has been established do we
apply deterministic postprocessing to the fixed-$W$ target.

Here the overlap energy means the sum of the relevant pairwise squared
overlaps, grouped by intersection codimension.  Define the unrestricted
stabilizer overlap energy by
\begin{equation}
 \mathcal E_{n,k}^{\mathrm{stab}}(p)
 :=\sum_{r=1}^n
 \genfrac{[}{]}{0pt}{}{n}{r}_p
 p^{r(r+1)/2}p^{-(k-1)r}.
 \label{eq:unrestricted-stabilizer-energy}
\end{equation}
Indeed, Lemma~\ref{lem:lagrangian-intersection-shells} supplies the
Lagrangian count, Lemma~\ref{lem:unrestricted-stabilizer-overlaps} supplies
$p^r$ compatible projective stabilizer states and $k$-copy squared overlap
$p^{-kr}$, and
therefore
\begin{equation}
 1-P_{\rm stab,id}^{(k),\PGM}(n;p)
 \le\mathcal E_{n,k}^{\mathrm{stab}}(p).
 \label{eq:unrestricted-stabilizer-pgm-energy-bound}
\end{equation}

The next theorem evaluates this energy uniformly in the dimension.  It gives
the finite unrestricted error bound and hence complete-identification success
tending to one for $k=n+\omega(1)$, for every fixed prime.

\begin{theorem}[Finite unrestricted PGM bound and proof of Theorem~\ref{thm:complete-identification-achievability}]
\label{thm:unrestricted-stabilizer-direct}
For every prime $p$, the PGM is minimum-error optimal for the uniform ensemble
of all pure $n$-qudit stabilizer states.  Moreover, if $k\ge n+2$, then
\begin{equation}
 1-P_{\rm stab,id}^{(k),*}(n;p)
 \le(\kappa_p^{\mathrm{GL}})^{-1}
 \frac{p^{-(k-n-1)}}{1-p^{-(k-n-1)}}.
 \label{eq:unrestricted-stabilizer-direct-bound}
\end{equation}
The right-hand side of the finite error bound vanishes whenever
$k_n-n\to+\infty$.  Consequently, the unrestricted success probability tends
to one:
\begin{equation}
 P_{\rm stab,id}^{(k_n),*}(n;p)\longrightarrow1.
 \label{eq:unrestricted-stabilizer-direct-conclusion}
\end{equation}
\end{theorem}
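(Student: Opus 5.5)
The plan has three parts: show the PGM is optimal by transitivity, bound its error by the overlap energy, and then estimate that energy directly.

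\emph{Optimality.} The Clifford group acts on $\mathsf{Stab}_{n,p}$ with the covariance required in Eq.~\eqref{eq:transitive-labeled-ensemble-covariance}. The action on Lagrangians is transitive because $\operatorname{Sp}(2n,\F_p)$ is transitive on Lagrangians (Witt). On a fixed Lagrangian it is transitive on characters because Pauli conjugation multiplies the character by an arbitrary additive character. The same holds for the $k$-copy ensemble $\{\rho_x^{\otimes k}\}$, with $U_g^{\otimes k}$ in place of $U_g$. Theorem~\ref{thm:transitive-pure-pgm} therefore gives PGM optimality, and Lemma~\ref{lem:transitive-pgm-overlap-energy} gives $1-P^{(k),*}_{\rm stab,id}\le\sum_{x\ne x_0}|\langle\phi_{x_0}|\phi_x\rangle|^{2k}$.

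\emph{Grouping the overlap sum.} Group the sum by $r=r(x,x_0)$. Lemma~\ref{lem:unrestricted-stabilizer-overlaps} gives squared overlap $p^{-r}$ in the compatible case and zero otherwise, and it gives exactly $p^r$ compatible characters per Lagrangian. For $r=0$ the only compatible state is $x_0$ itself, so it is excluded. Lemma~\ref{lem:lagrangian-intersection-shells} counts the Lagrangians in each shell. Together these yield Eq.~\eqref{eq:unrestricted-stabilizer-pgm-energy-bound}, which I take as established.

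\emph{Estimating the energy.} This is the only step with real content. I use the Gaussian binomial bound $\genfrac{[}{]}{0pt}{}{n}{r}_p\le(\kappa_p^{\mathrm{GL}})^{-1}p^{r(n-r)}$, which follows from counting ordered bases and was already used in Eq.~\eqref{eq:all-prime-radical-Gaussian-bound}. Each term is then at most
\[(\kappa_p^{\mathrm{GL}})^{-1}p^{r(n-r)+r(r+1)/2-(k-1)r}=(\kappa_p^{\mathrm{GL}})^{-1}p^{-r(k-n-2)-r(r-1)/2}.\]
This is slightly worse than the target exponent $-r(k-n-1)$. I recover the missing $p^{-r}$ from the quadratic term: $r(r-1)/2\ge r-1$ for $r\ge1$, so
\[-r(k-n-2)-r(r-1)/2\le -r(k-n-1)+1.\]
That costs a constant factor $p$. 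If the constant must match the display exactly, I would instead handle $r=1$ directly, where $\genfrac{[}{]}{0pt}{}{n}{1}_p\le p^{n-1}\cdot p/(p-1)$, and use a sharper Gaussian bound $\prod_{j\le r}(1-p^{-j})^{-1}p^{r(n-r)}$ for larger $r$. The main obstacle is this bookkeeping to get exactly $(\kappa_p^{\mathrm{GL}})^{-1}p^{-(k-n-1)}/(1-p^{-(k-n-1)})$; the qualitative vanishing is unaffected either way.

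Summing the geometric series over $r\ge1$ with ratio $p^{-(k-n-1)}<1$, which holds because $k\ge n+2$, gives Eq.~\eqref{eq:unrestricted-stabilizer-direct-bound}. This bound tends to zero when $k_n-n\to\infty$.

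Finally, Eq.~\eqref{eq:unrestricted-stabilizer-wc-av-equality} transfers the conclusion to the worst case. This proves both the statement and Theorem~\ref{thm:complete-identification-achievability}.
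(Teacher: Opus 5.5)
Your route is the paper's route: PGM optimality from Theorem~\ref{thm:transitive-pure-pgm} via Clifford transitivity, the overlap-energy bound \eqref{eq:unrestricted-stabilizer-pgm-energy-bound}, the estimate $\genfrac{[}{]}{0pt}{}{n}{r}_p\le(\kappa_p^{\mathrm{GL}})^{-1}p^{r(n-r)}$, and a geometric series. The one flaw is an algebra slip in the step you identified as the only one with real content. The exponent simplifies to
\[
 r(n-r)+\frac{r(r+1)}2-(k-1)r=-r(k-n-1)-\frac{r(r-1)}2,
\]
not to $-r(k-n-2)-r(r-1)/2$. A quick check: take $n=3$, $r=1$, $k=5$. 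The left side is $2+1-4=-1$, whereas your expression gives $0$.

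So there is no missing factor $p^{-r}$. Each term is at most $(\kappa_p^{\mathrm{GL}})^{-1}p^{-r(k-n-1)-r(r-1)/2}\le(\kappa_p^{\mathrm{GL}})^{-1}p^{-r(k-n-1)}$. Summing over $r\ge1$ with ratio $p^{-(k-n-1)}\le p^{-1}$ when $k\ge n+2$ gives exactly the displayed constant, and this is precisely how the paper argues.

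As written, your workaround through $r(r-1)/2\ge r-1$ is valid but lossy. It proves \eqref{eq:unrestricted-stabilizer-direct-bound} only with an extra factor $p$, so it does not establish the finite inequality as stated.

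Your proposed fallback could not have closed a real deficit of that kind either. That fallback was to treat $r=1$ separately and to use the sharper product form of the Gaussian-binomial bound. But $\genfrac{[}{]}{0pt}{}{n}{r}_p\ge p^{r(n-r)}$ always holds, so refining that estimate only moves constants between $1$ and $(\kappa_p^{\mathrm{GL}})^{-1}$; it can never gain a factor $p^{-r}$.

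The fix is simply to redo the simplification. The qualitative conclusions are unaffected by the slip: PGM optimality, vanishing error when $k_n-n\to+\infty$, and \eqref{eq:unrestricted-stabilizer-direct-conclusion} all stand.
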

\begin{proof}
PGM optimality follows from Theorem~\ref{thm:transitive-pure-pgm}.  Here the
label set consists of the distinct projective pure stabilizer states, and the
Clifford group acts transitively on this label set while satisfying the
covariance relation
Eq.~\eqref{eq:transitive-labeled-ensemble-covariance}.  The Gaussian-binomial estimate gives
\begin{align}
 \mathcal E_{n,k}^{\mathrm{stab}}(p)
 &\le(\kappa_p^{\mathrm{GL}})^{-1}
 \sum_{r=1}^n
 p^{r(n-r)+r(r+1)/2-(k-1)r}\notag\\
 &=(\kappa_p^{\mathrm{GL}})^{-1}
 \sum_{r=1}^n
 p^{-r(k-n-1)-r(r-1)/2}.
 \label{eq:unrestricted-stabilizer-energy-estimate}
\end{align}
For $k\ge n+2$, discard the nonpositive quadratic term and extend the sum to
infinity.  Equation~\eqref{eq:unrestricted-stabilizer-pgm-energy-bound} then
gives Eq.~\eqref{eq:unrestricted-stabilizer-direct-bound}.  In the auxiliary frame-conditioned tasks the learner also knows the announced ordered
reference frame $R$.  A correctly recovered complete state label is therefore
converted by the deterministic classical map
\begin{equation}
 (x,R)\longmapsto
 \bigl(S_1(x;R),\ldots,S_m(x;R),
       \chi_x|_{\operatorname{span}\{S_1(x;R),\ldots,S_m(x;R)\}}\bigr).
 \label{eq:complete-to-frame-assisted-partial-postprocessing}
\end{equation}
Thus complete identification supplies the exact prescribed reference-dual
output, which also has graded score one.
\end{proof}

\begin{remark}[Rate one within the critical-ratio regime]
\label{rem:unrestricted-rate-versus-pointwise}
The theorem applies to $k_n=n+\lceil\sqrt n\rceil$, whose normalized copy
number tends to one.  It therefore gives an achievable copy rate of one.
It does not assert that $k=n$ copies suffice with success tending to one, nor
does it determine the fixed-offset window $k=n+O(1)$.
\end{remark}

\begin{corollary}[Statewise complete-identification and fixed-$W$ achievability]
\label{cor:statewise-fixed-W-achievability}
For the covariant unrestricted PGM of
Theorem~\ref{thm:unrestricted-stabilizer-direct}, the success probability is
constant over all projective pure stabilizer states.  Hence the average error
bound in Eq.~\eqref{eq:unrestricted-stabilizer-direct-bound} is also its
worst-case error bound.  After exact complete identification, the deterministic
postprocessing
\begin{equation}
 (L_x,\chi_x)\longmapsto
 \bigl(K,\chi_x|_K\bigr),
 \quad
 L_x\cap W^\perp=(L_x\cap W)\oplus K,
 \label{eq:fixed-W-direct-postprocessing}
\end{equation}
produces a valid fixed-$W$ output for every state.  Consequently, whenever
$k_n-n\to+\infty$, both the worst-case and uniform Bayesian fixed-$W$ success
probabilities tend to one.
\end{corollary}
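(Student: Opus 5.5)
The plan has three steps. First, show that the covariant PGM has state-independent success probability. Next, convert average error into worst-case error. Finally, check that the deterministic postprocessing of a correct complete label gives a valid fixed-$W$ report.

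For the first step, I would reuse the structure in the proof of Theorem~\ref{thm:transitive-pure-pgm}. The PGM vectors are $\ket{\mu_x}=SG^{+1/2}\ket{x}$, and the success probability on input $x$ is $|(\sqrt G)_{xx}|^2/N$ times the normalization. Equivalently, in the unweighted Gram matrix $H$ it is $((\sqrt H)_{xx})^2$, as in Lemma~\ref{lem:transitive-pgm-overlap-energy}.

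The Clifford group acts transitively on projective pure stabilizer states with the covariance relation \eqref{eq:transitive-labeled-ensemble-covariance}. This induces monomial unitaries $W_g$ with $W_g\sqrt H W_g^\dagger=\sqrt H$. Hence $(\sqrt H)_{xx}$ is constant in $x$, and the statewise success $d_0^2$ is independent of the input.

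Since every input has the same success probability, the uniform-average error equals the statewise error, and hence the worst-case error. Therefore \eqref{eq:unrestricted-stabilizer-direct-bound} bounds the worst-case error of this particular decoder. This is consistent with \eqref{eq:unrestricted-stabilizer-wc-av-equality}.

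For the postprocessing, fix once and for all a deterministic rule, for example Gaussian elimination in a fixed basis. Given $(L_x,\chi_x)$, the rule computes $L_x\cap W^\perp$ and $L_x\cap W$, then selects a complement $K$ as in \eqref{eq:fixed-W-direct-complement}. The rule depends only on the recovered label and the public $W$.

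By \eqref{eq:fixed-W-dimension}, the complement has $\dim K=m$. It satisfies $K\subseteq W^\perp$ and $K\cap W=\{0\}$, and $K$ is isotropic because it lies in $L_x$. The restriction $\chi_x|_K$ is cocycle-compatible because it restricts a compatible assignment. So $(K,\chi_x|_K)\in\mathsf{Out}_{W,m}$, and it is valid by \eqref{eq:fixed-W-valid-output}.

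It follows that the composite POVM, which groups PGM outcomes by their image under the rule, has statewise success $s_{\mathsf M}^{(k)}(x;W)$ at least the complete-identification success on $x$. This lower bound is the same constant for every $x$.

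Taking $\inf_x$ and the average both give at least $1-(\kappa_p^{\mathrm{GL}})^{-1}p^{-(k-n-1)}/(1-p^{-(k-n-1)})$. This bound tends to one when $k_n-n\to+\infty$.

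The only delicate point is the constancy of the diagonal of $\sqrt H$ under a merely projective (phased) action. I would handle this exactly as in Theorem~\ref{thm:transitive-pure-pgm}, noting that conjugation by a monomial unitary preserves the diagonal entries up to permutation.
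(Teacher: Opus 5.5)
Your proposal is correct and is essentially the paper's argument. The paper gets state-independence of the PGM success directly from covariance under the transitive Clifford action (conjugating a state together with its relabelled outcome), which is the same mechanism as your Clifford-invariant diagonal of $\sqrt H$; then, as you do, it picks $K$ by a fixed Gaussian-elimination rule and reads off $\dim K=m$, $K\cap W=\{0\}$ and validity from Eq.~\eqref{eq:fixed-W-dimension}.

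One cosmetic slip: with the paper's normalization the statewise success is $N\,((\sqrt G)_{xx})^2=((\sqrt H)_{xx})^2$, not $|(\sqrt G)_{xx}|^2/N$. The $H$-form you then state is the correct one.
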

\begin{proof}
The PGM is covariant under the transitive Clifford action.  Conjugating a
state and its correctly relabelled POVM outcome leaves the conditional success
probability invariant, so transitivity makes it constant.  The postprocessing first computes $R=L_x\cap W$ and
$S=L_x\cap W^\perp$, and then selects, by a fixed Gaussian-elimination rule, a
linear complement $K$ satisfying $S=R\oplus K$.
Equation~\eqref{eq:fixed-W-dimension} gives $\dim K=m$,
$K\cap W=\{0\}$, and validity of the resulting output.  The finite
complete-identification error bound therefore applies statewise and also after
deterministic postprocessing.
\end{proof}

Thus the complete-state decoder, its statewise covariance, and the
postprocessing map together establish
Theorem~\ref{thm:complete-identification-achievability}.

The exact direct and converse bounds settle the zero-one identification
criterion.  We now return to the verifier's graded score and compare it with
exact identification through bounded-rank lists, using the score spectrum
established in the finite-copy analysis.

\section{Graded comparison via bounded-rank exact-label lists}
\label{sec:integrated-all-prime-graded}
Vanishing exact-identification probability does not yet force the
verification score to vanish: an incorrect report can have a nonzero Born
score.  The elementary comparison below leaves a fixed $p^{-1}$ remainder.
We obtain a vanishing bound by converting compatible reports of bounded rank
distance into exact-target lists, while controlling the remaining reports by
a uniformly small score tail.  For the refined problem, the lists lie in the
canonical quotient label space.  For the generic original problem, they are
constructed directly for the full basis-free report alphabet.  These two
constructions preserve the distinction between the finite verification
problems while applying the same list-to-exact principle.

\subsection{Application to the refined coordinate problem}
\label{subsec:graded-refined-application}

\subsubsection{Finite exact-to-graded comparison}
\label{subsec:graded-finite-sandwich}

The canonical labels used in this section are the standardized form of the
fixed-frame target, not a separate operational game.  If $y$ is the true label
and $\widehat y$ the report, let $d(y,\widehat y)$ be the codimension of their
common queried stabilizer family and call them compatible when their characters
agree on that family.  Equations~\eqref{eq:odd-partial-score-spectrum}
and~\eqref{eq:integrated-qubit-score-spectrum} show that the Born
verification payoff depends only on this codimension and compatibility
relation.  We therefore write it intrinsically as
\begin{equation}
 s_{\rm game}(y,\widehat y):=
 \begin{cases}
 p^{-d(y,\widehat y)},&\text{if they are compatible},\\
 0,&\text{otherwise},
 \end{cases}.
 \label{eq:main-results-graded-payoff}
\end{equation}
\begin{equation}
 s_{\rm game}(y,\widehat y)=1\quad\Longleftrightarrow\quad y=\widehat y,
 \label{eq:main-results-score-one-exactness}
\end{equation}
and every false report scores either zero or at most $p^{-1}$.  We record the
resulting elementary comparison before developing the stronger list reduction.
\begin{proposition}[Finite exact-to-verification comparison]
\label{prop:prob-one-equivalence}
For every finite parameter choice,
\begin{equation}
 P_{\mathrm{id}}^{(k),*}(n,m;p)
 \le S_{\mathrm{ver}}^{(k),*}(n,m;p)
 \le p^{-1}+(1-p^{-1})P_{\mathrm{id}}^{(k),*}(n,m;p).
 \label{eq:score-sandwich}
\end{equation}
\end{proposition}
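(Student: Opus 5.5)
The plan is a pointwise comparison of the two payoffs on the common canonical report alphabet $\Theta_{n,m,p}^{\mathrm{part}}$ (equivalently $Y_{n,m,p}$), followed by optimization over learner POVMs. The refined exact and verification tasks use the same uniform prior over complete labels and the same learner POVMs. They differ only in the payoff attached to a report $\widehat y$ when the true partial label is $y$. Exact identification pays $\mathbf 1\{y=\widehat y\}$, and verification pays $s_{\rm game}(y,\widehat y)$ from Eq.~\eqref{eq:main-results-graded-payoff}. This payoff is a genuine function of the partial label, by the score spectra Eqs.~\eqref{eq:odd-partial-score-spectrum} and \eqref{eq:integrated-qubit-score-spectrum}.

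For the lower bound, the key fact is Eq.~\eqref{eq:main-results-score-one-exactness}: a correct report scores one. Since $s_{\rm game}\ge0$, we get $s_{\rm game}(y,\widehat y)\ge\mathbf 1\{y=\widehat y\}$ pointwise. For any POVM $\mathsf M$, averaging this against the joint law of $(y,\widehat y)$ gives $S_{\mathrm{ver}}^{(k)}(\mathsf M)\ge P_{\mathrm{id}}^{(k)}(\mathsf M)$. Applying this to an exact-optimal POVM, which exists because Theorem~\ref{thm:detailed-seed-solution} makes the PGM optimal and the optimum is attained in finite dimension, yields $P_{\mathrm{id}}^{(k),*}\le S_{\mathrm{ver}}^{(k),*}$.

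For the upper bound, I will use that every false report scores either $0$ or $p^{-d}$ with $d\ge1$. Indeed, $d(y,\widehat y)=0$ means the common queried family is the full $m$-dimensional one, so compatibility forces equal characters and hence $\widehat y=y$. In the binary case this follows from Lemma~\ref{lem:integrated-qubit-score-spectrum}. Hence
\[
s_{\rm game}(y,\widehat y)\le p^{-1}+(1-p^{-1})\mathbf 1\{y=\widehat y\}.
\]
Fix an optimal verification POVM $\mathsf M^{\rm ver}$ and average this inequality against the same joint law. This gives $S_{\mathrm{ver}}^{(k),*}\le p^{-1}+(1-p^{-1})P_{\mathrm{id}}^{(k)}(\mathsf M^{\rm ver})$. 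Since $P_{\mathrm{id}}^{(k)}(\mathsf M^{\rm ver})\le P_{\mathrm{id}}^{(k),*}$ and the coefficient $1-p^{-1}$ is positive, the right inequality in Eq.~\eqref{eq:score-sandwich} follows.

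The step that needs care is not the inequalities themselves but the identification of the alphabets and the binary phase correction. I must check that the Born score used in $S_{\mathrm{ver}}^{(k),*}$, defined through Eq.~\eqref{eq:partial-average-score} or its binary analogue, indeed depends only on the quotient labels. I must also check that its value-one case coincides exactly with the exact-identification event $\widehat y=q_m(A,b)$. For odd $p$ this is immediate from Eq.~\eqref{eq:odd-partial-score-spectrum}. For $p=2$ I will invoke the final clause of Lemma~\ref{lem:integrated-qubit-score-spectrum}: when $C=AJ_m$ the correction $\delta_{A,C}$ vanishes, so score one forces $\gamma=J_m^Tb$. Every other false report scores at most $1/2=p^{-1}$.
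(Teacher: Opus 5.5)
Your proposal is correct and is the paper's argument: averaging the pointwise sandwich $\mathbf 1\{y=\widehat y\}\le s_{\rm game}(y,\widehat y)\le p^{-1}+(1-p^{-1})\mathbf 1\{y=\widehat y\}$ over the common prior and optimizing over POVMs is exactly the paper's split into correct and false reports. The lower bound likewise comes from using an exact decoder as a feasible verification strategy. Invoking PGM optimality to obtain an attained exact optimum is unnecessary, since taking suprema (or compactness of the finite-outcome POVM set) already suffices.
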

\begin{proof}
Correct reports score one and false reports score at most $p^{-1}$.  Splitting
the score into these two events gives the upper bound, while an exact decoder
is a feasible verification strategy and gives the lower bound.
\end{proof}

\subsubsection{Rank truncation and odd-prime difference coordinates}
\label{subsubsec:graded-rank-truncation-odd}
\label{subsec:rank-truncated-graded-to-exact}

The proposition shows that the exact and graded optima approach one together.  
It is insufficient for the converse considered here, however: exact success  
tending to zero yields only the nonvanishing bound  
$\limsup S_{\mathrm{ver}}^{(k),*}\le p^{-1}$.  To prove that the graded score  
itself vanishes, we retain the rank-dependent payoff and convert compatible  
reports of bounded rank into exact-target lists.  

We first verify, separately for odd primes and characteristic two, that rank  
and compatibility are determined by the label difference.  The binary branch  
is formulated in the enhanced quotient label group so that the phase-section  
correction is retained.  After these branchwise checks, the neighborhood count  
and the list-to-exact conversion are common to every prime.  

For odd $p$, recall
\begin{equation}
 \Theta_{n,m,p}^{\mathrm{part}}
 =\mathcal C_{n,m}\times\F_p^m,
 \quad
 \mathcal C_{n,m}
 =\{C\in\F_p^{n\times m}:J_m^TC=C^TJ_m\}.
 \label{eq:rank-truncated-label-space}
\end{equation}
For $y=(C,\gamma)$ and
$\widehat y=(\widehat C,\widehat\gamma)$, put
\begin{equation}
 B(y-\widehat y):=C-\widehat C,
 \quad
 \eta(y-\widehat y):=\gamma-\widehat\gamma.
 \label{eq:odd-rank-truncated-differences}
\end{equation}
The pair is compatible exactly when
\begin{equation}
 \eta(y-\widehat y)^Tx=0
 \quad(x\in\ker B(y-\widehat y)),
 \label{eq:odd-rank-truncated-compatibility}
\end{equation}
and its nonzero score is
$p^{-\rank B(y-\widehat y)}$.

\subsubsection{Characteristic-two enhanced quotient differences}
For $p=2$, the partial-label alphabet is the quotient group
\begin{equation}
 \mathsf Y_{n,m}^{(2)}:=G_{n,2}/H_{n,m,2}.
 \label{eq:binary-graded-enhanced-label-group}
\end{equation}
We use this quotient element, rather than a phase-section-dependent pair
$(C,\gamma)$, as the primary label in the list reduction.  The visible
bilinear projection is the homomorphism
\begin{equation}
 \mathsf B:\mathsf Y_{n,m}^{(2)}\longrightarrow\mathcal C_{n,m},
 \label{eq:binary-graded-rank-projection}
\end{equation}
obtained from an enhanced representative $(\alpha,E)$ by taking the queried
symmetric-matrix column: its diagonal entries are $\alpha_i\bmod2$ for $i\le m$, its visible off-diagonal entries are the
corresponding $E_{ij}$, and its lower--visible entries are $E_{ji}$ with
$i\le m<j$.  In the quotient, a representative is determined by the visible coordinates
\begin{equation}
 ((\alpha_i)_{i\le m},(E_{ij})_{i\le m<j},
   (E_{ij})_{1\le i<j\le m}),
 \label{eq:binary-graded-visible-enhanced-coordinates}
\end{equation}
because adding an element of $H_{n,m,2}$ changes only $\alpha_j$ with
$j>m$ and $E_{ij}$ with $m<i<j$.  The displayed coordinates are therefore
unchanged coordinate by coordinate.  They determine $\mathsf B$ by reducing
$\alpha_i$ modulo two on the diagonal and using the retained edge
coordinates off the diagonal.  Hence $\mathsf B$ is a well-defined group
homomorphism, independent of the representative.  It is surjective: for a
given $C\in\mathcal C_{n,m}$ choose the visible parities and edge
coordinates prescribed by $C$, and choose either lift in $\mathbb Z_4$ for
each visible diagonal parity.

For a quotient difference $\xi\in\mathsf Y_{n,m}^{(2)}$, put
\begin{equation}
 B_\xi:=\mathsf B(\xi),
 \quad N_\xi:=\ker B_\xi,
 \quad d(\xi):=\rank B_\xi.
 \label{eq:binary-graded-difference-rank}
\end{equation}
On $N_\xi$, the true and announced phase-space labels coincide.  Comparing
their physical Pauli eigenvalues defines the intrinsic relative-character
functional
\begin{equation}
 \lambda_\xi:N_\xi\longrightarrow\F_2.
 \label{eq:binary-enhanced-relative-character}
\end{equation}
Concretely, if the two labels are represented in phase-section-dependent vector
coordinates by $(A,b)$ and $(C,\gamma)$, then
\begin{equation}
 \lambda_\xi(x)
 =\gamma^Tx+(J_m^Tb)^Tx+\delta_{A,C}(x),
 \quad x\in N(A,C),
 \label{eq:binary-relative-character-vector-realization}
\end{equation}
which is linear by the calculation following
Eq.~\eqref{eq:integrated-qubit-phase-correction}.  For $x\in N(A,C)$, Eq.~\eqref{eq:integrated-qubit-section-comparison}
identifies the true and announced sections of the same physical Pauli label:
$\mathsf W_C(x)=(-1)^{\delta_{A,C}(x)}\mathsf W_A(J_mx)$.
Multiplying this section ratio by the announced and true eigenvalue characters
gives exactly $(-1)^{\lambda_\xi(x)}$.  Adding an unreported element of
$H_{n,m,2}$ changes only unqueried enhanced coordinates, so neither physical
operator nor either queried eigenvalue changes.  Changing a phase-section-dependent section multiplies the section and its
coordinate character by reciprocal signs.  Their product is therefore invariant.  Consequently, $\lambda_\xi$ depends
only on the quotient difference $\xi$.  The phase correction has been
absorbed, not deleted.

\begin{lemma}[Binary enhanced compatibility depends only on the quotient difference]
\label{lem:binary-enhanced-difference-compatibility}
For $y,\widehat y\in\mathsf Y_{n,m}^{(2)}$, the verifier score is nonzero if
and only if
\begin{equation}
 \lambda_{y-\widehat y}=0\quad\text{on }N_{y-\widehat y}.
 \label{eq:binary-enhanced-compatibility}
\end{equation}
On this event the score is
$2^{-d(y-\widehat y)}$.  In particular, both compatibility and score depend
only on the quotient difference.
\end{lemma}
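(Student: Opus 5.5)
The plan is to reduce the statement to Lemma~\ref{lem:integrated-qubit-score-spectrum}, which already gives the binary verifier score for vector representatives. Fix vector representatives $(A,b)$ of $y$ and $(C,\gamma)$ of $\widehat y$, so that the true queried label is $(AJ_m,J_m^Tb)$ and the announcement is $(C,\gamma)$ with $C=\mathsf B(\widehat y)$. That lemma states that the score is $2^{\kappa(A,C)-m}$ when $(\gamma+J_m^Tb+\delta_{A,C})|_{N(A,C)}=0$, and zero otherwise. The proof then consists of two identifications: the kernel and rank data, and the corrected character.

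\emph{Kernel and rank.} Since $\mathsf B$ is a group homomorphism, $B_{y-\widehat y}=\mathsf B(y)-\mathsf B(\widehat y)=AJ_m-C$ (signs are irrelevant over $\F_2$). Hence $N_{y-\widehat y}=\ker(C-AJ_m)=N(A,C)$. Rank--nullity then gives $m-\kappa(A,C)=\rank B_{y-\widehat y}=d(y-\widehat y)$, so the nonzero score $2^{\kappa(A,C)-m}$ equals $2^{-d(y-\widehat y)}$.

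\emph{Character.} By Eq.~\eqref{eq:binary-relative-character-vector-realization}, on $N(A,C)$ we have $\lambda_{y-\widehat y}(x)=\gamma^Tx+(J_m^Tb)^Tx+\delta_{A,C}(x)$. This is exactly the functional whose vanishing is the compatibility condition in Eq.~\eqref{eq:integrated-qubit-score-spectrum}. Substituting both identifications into that lemma yields the claimed criterion and score.

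\emph{Main obstacle: well-definedness.} The quantities $\gamma$, $b$ and $\delta_{A,C}$ individually depend on the chosen representatives and phase sections, so I must check that the combination depends only on the difference $\xi=y-\widehat y$. I would take the following three steps.
\begin{itemize}
\item Use the preceding discussion, which shows that $\lambda_\xi$ is invariant under changing the representative by elements of $H_{n,m,2}$ and under changing the phase section.
\item Show that replacing $(y,\widehat y)$ by $(y+g,\widehat y+g)$ leaves the score unchanged. Here $g$ is represented by a diagonal Clifford $i^{q_g}$, whose action translates both the true label and the announcement in additive coordinates. Conjugating both the state and the accept projector by this unitary preserves $\Tr[\Pi\rho]$, so the score depends only on $\xi$.
\item Check that the relative eigenvalue comparison on the common Pauli family, namely $(-1)^{\lambda_\xi}$, is preserved by this conjugation, since the same unitary acts on both sections.
\end{itemize}
Together with the representative-level computation above, these steps give the lemma.
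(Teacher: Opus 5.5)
Your proposal is correct and follows the paper's proof essentially step for step. You choose representatives, identify $N_{y-\widehat y}$ with $N(A,C)$ and $d(y-\widehat y)$ with $\rank(C-AJ_m)$, read the compatibility condition of Lemma~\ref{lem:integrated-qubit-score-spectrum} through Eq.~\eqref{eq:binary-relative-character-vector-realization}, and then supply representative independence together with simultaneous translation covariance; your implementation of that translation by the diagonal Clifford $\sum_x i^{q_g(x)}\ket{x}\!\bra{x}$ is simply a concrete version of the paper's ``common phase modulation'' argument.
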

\begin{proof}
Choose phase-section-dependent representatives.  Then
$N_{y-\widehat y}=N(A,C)$ and
$d(y-\widehat y)=m-\dim N(A,C)=\rank(C-AJ_m)$.  The compatibility condition
and score are exactly those in
Eq.~\eqref{eq:integrated-qubit-score-spectrum}, rewritten using
Eq.~\eqref{eq:binary-relative-character-vector-realization}.  Since the left
side of that realization is the relative character of the same physical
operators, changing either enhanced representative leaves it unchanged.

It remains to record simultaneous translation covariance.  For every
$t\in\mathsf Y_{n,m}^{(2)}$, translating both the true and announced enhanced
labels by $t$ applies the same phase modulation to their physical
representatives and leaves their Born overlap unchanged.  Equivalently, the
relative section ratio and the two eigenvalue characters acquire cancelling
factors.  Writing $S_2(y,\widehat y)$ for this Born score, we therefore have
\begin{equation}
 S_2(y+t,\widehat y+t)=S_2(y,\widehat y).
 \label{eq:binary-enhanced-simultaneous-translation-covariance}
\end{equation}
More directly, simultaneous translation leaves the quotient difference, its
visible bilinear projection, its common operator space, and the intrinsic
relative eigenvalue ratio unchanged:
\begin{equation}
 \begin{aligned}
 (y+t)-(\widehat y+t)&=y-\widehat y,\\
 B_{(y+t)-(\widehat y+t)}&=B_{y-\widehat y},\\
 N_{(y+t)-(\widehat y+t)}&=N_{y-\widehat y},\\
 \lambda_{(y+t)-(\widehat y+t)}&=\lambda_{y-\widehat y}.
 \end{aligned}
 \label{eq:binary-enhanced-relative-character-translation-invariance}
\end{equation}
The last equality holds because $\lambda$ is the ratio of the announced and
true eigenvalues of the same physical Pauli operators; the common translation
multiplies both coordinate representatives by cancelling section-phase
factors.  Together with representative independence, this proves that
compatibility and score depend only on the quotient difference.
\end{proof}

\begin{lemma}[Binary compatible character fiber]
\label{lem:binary-compatible-character-fiber}
Fix $B\in\mathcal C_{n,m}$ of rank $d$.  Among the quotient differences
$\xi\in\mathsf Y_{n,m}^{(2)}$ satisfying $B_\xi=B$, exactly $2^d$ satisfy
\begin{equation}
 \lambda_\xi|_{\ker B}=0.
 \label{eq:binary-compatible-affine-fiber-condition}
\end{equation}
Equivalently, in any phase-section-dependent vector realization, the admissible
character-coordinate differences form a nonempty affine coset of
$(\ker B)^\perp$ and hence have cardinality $2^d$.
\end{lemma}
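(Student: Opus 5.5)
The plan is to reduce the statement to a count of visible enhanced coordinates. First fix the rank-constrained matrix $B$; then identify the set of quotient differences $\xi$ with $B_\xi=B$ as a coset of $\ker\mathsf B$. The proof has three steps: describe this fiber explicitly, show that $\lambda_\xi$ acts on it as an affine function of the free coordinates, and count the solutions of the vanishing condition.

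\emph{Describing the fiber.} By the description in Eq.~\eqref{eq:binary-graded-visible-enhanced-coordinates}, a quotient element is determined by its visible coordinates. The map $\mathsf B$ retains every visible edge coordinate and each visible diagonal entry modulo two. Hence $\ker\mathsf B$ consists of the quotient elements whose visible edge coordinates vanish and whose visible $\alpha_i$ lie in $\{0,2\}$. So $\ker\mathsf B\cong\F_2^m$, via $c\mapsto(\alpha_i=2c_i)_{i\le m}$, and the fiber over $B$ is one coset of this group, with $2^m$ elements.

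\emph{Action of $\lambda$ on the fiber.} Next I compare $\lambda_{\xi+\kappa}$ with $\lambda_\xi$ for $\kappa=(2c,0)\in\ker\mathsf B$. Moving from $\xi$ to $\xi+\kappa$ leaves the phase-space labels, and hence $N_\xi=\ker B$, unchanged. In the vector realization of Eq.~\eqref{eq:binary-relative-character-vector-realization}, adding $2c_i$ to $\alpha_i$ flips only the character bit $\gamma_i$. This is the coordinate bijection in Eq.~\eqref{eq:integrated-qubit-additive-coordinates}: $\alpha_i=A_{ii}+2b_i$, and $\delta_{A,C}$ depends only on the parities, which are fixed. Therefore
\[
\lambda_{\xi+\kappa}=\lambda_\xi+c^T(\cdot)\quad\text{on }\ker B.
\]
Representative independence from Lemma~\ref{lem:binary-enhanced-difference-compatibility} guarantees that this computation does not depend on the chosen representatives.

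\emph{Counting.} The condition $\lambda_{\xi+\kappa}|_{\ker B}=0$ becomes $c|_{\ker B}=\lambda_\xi|_{\ker B}$. The restriction map $\F_2^m\cong(\F_2^m)^*\to(\ker B)^*$ is surjective, and its kernel is $(\ker B)^\perp$. Since $\dim\ker B=m-d$, this kernel has dimension $d$. So the solutions $c$ form a nonempty affine coset of $(\ker B)^\perp$, of size $2^d$. This gives both assertions: the count $2^d$, and the affine-coset form in any vector realization, where $c$ corresponds exactly to $\gamma$-differences.

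The main obstacle is the second step. There one must check that the $\mathbb Z_4$ lift of a visible diagonal coordinate enters $\lambda_\xi$ exactly through the linear term $\gamma^Tx$, with no extra contribution hidden in the section correction $\delta_{A,C}$. I would verify this directly from Eq.~\eqref{eq:integrated-qubit-phase-correction}, using that $q_{B_C}$ and $q_A\circ J_m$ depend only on the mod-two matrices $B_C$ and $A$, both of which are unchanged by $\kappa$.
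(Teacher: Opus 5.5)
Your proposal is correct and follows essentially the same route as the paper: the fiber over $B$ is a torsor under the visible diagonal lifts $2\eta$, shifting by $\eta$ changes the relative functional by $\eta|_{\ker B}$ (with $\delta_{A,C}$ and $N(A,C)$ unchanged because they depend only on the binary matrices), and surjectivity of restriction to $\ker B$ yields a coset of $(\ker B)^\perp$ with $2^d$ elements. One point to state explicitly is that nonemptiness of this coset requires $\lambda_{\xi}|_{\ker B}$ to be linear, which the paper obtains from the additivity calculation for $\delta_{A,C}$.
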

\begin{proof}
Fix one enhanced phase-section difference over $B$.  For each visible index,
the fixed parity of $\alpha_i$ has exactly two lifts in $\mathbb Z_4$;
their difference is $2\eta_i$ with $\eta_i\in\F_2$.  Hence all enhanced
quotient differences over the same $B$ form a free transitive torsor under
$\eta\in(\F_2^m)^*$, and there is no further constraint from the unreported
coordinates, which have already been quotiented out.  Relative to any chosen origin, changing it by
$\eta\in(\F_2^m)^*$ changes the functional on $\ker B$ by the restriction
$\eta|_{\ker B}$.  Thus the compatibility equation has the affine form
\begin{equation}
 \eta|_{\ker B}=\delta_B,
 \label{eq:binary-compatible-affine-restriction}
\end{equation}
where $\delta_B$ is the correction functional determined by the fixed
phase-section difference.  Its linearity follows from
Eq.~\eqref{eq:integrated-qubit-delta-additivity-calculation}.  The restriction
map
\begin{equation}
 (\F_2^m)^*\longrightarrow(\ker B)^*,
 \quad \eta\longmapsto\eta|_{\ker B},
 \label{eq:binary-character-restriction-map}
\end{equation}
is surjective, so Eq.~\eqref{eq:binary-compatible-affine-restriction} has a
solution.  Its solution set is a coset of the kernel, namely the annihilator
$(\ker B)^\perp$.  Since
$\dim(\ker B)^\perp=d$, the solution set has $2^d$ elements.
\end{proof}

\subsubsection{Branchwise compatible neighborhoods and their common size}
Having established difference-invariant rank and compatibility in both  
characteristics, we now define their common bounded-rank neighborhoods and  
count their sizes.  For either branch, write the label group as  
\begin{equation}
 \mathsf Y_{n,m,p}:=
 \begin{cases}
 \Theta_{n,m,p}^{\mathrm{part}},&p\text{ odd},\\
 \mathsf Y_{n,m}^{(2)},&p=2.
 \end{cases},
 \label{eq:graded-common-label-group}
\end{equation}
For a difference $\xi$ in this group, let $B_\xi$ denote the odd-prime matrix
difference from Eq.~\eqref{eq:odd-rank-truncated-differences}, respectively
the binary projection in Eq.~\eqref{eq:binary-graded-difference-rank}.  Call
$\xi$ compatible when Eq.~\eqref{eq:odd-rank-truncated-compatibility},
respectively Eq.~\eqref{eq:binary-enhanced-compatibility}, holds.

\begin{definition}[Compatible rank-$D$ neighborhood]
\label{def:compatible-rank-neighborhood}
For $0\le D\le m$ and
$\widehat y\in\mathsf Y_{n,m,p}$, define
\begin{equation}
 \mathcal N_D^{(p)}(\widehat y)
 :=\{y\in\mathsf Y_{n,m,p}:
       \rank B_{y-\widehat y}\le D,
       \ y-\widehat y\text{ is compatible}\}.
 \label{eq:compatible-rank-neighborhood}
\end{equation}
Put
\begin{equation}
 L_D(n,m;p):=|\mathcal N_D^{(p)}(\widehat y)|.
 \label{eq:compatible-rank-list-size}
\end{equation}
\end{definition}

\begin{proposition}[Translation invariance of compatible neighborhoods]
\label{prop:compatible-neighborhood-translation}
For every prime $p$,
\begin{equation}
 \mathcal N_D^{(p)}(\widehat y)
 =\widehat y+\mathcal N_D^{(p)}(0).
 \label{eq:compatible-neighborhood-translation}
\end{equation}
Consequently, $L_D(n,m;p)$ is independent of the center.
\end{proposition}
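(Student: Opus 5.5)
The plan is to show that the defining condition of $\mathcal N_D^{(p)}(\widehat y)$ depends on $y$ and $\widehat y$ only through the difference $y-\widehat y$, and then to substitute $y=\widehat y+\xi$. By Definition~\ref{def:compatible-rank-neighborhood}, $y\in\mathcal N_D^{(p)}(\widehat y)$ holds exactly when $\xi:=y-\widehat y$ satisfies two conditions: $\rank B_\xi\le D$, and $\xi$ is compatible. Both conditions are stated purely in terms of $\xi$. Moreover, $\mathsf Y_{n,m,p}$ is an additive group in both branches, and $y\mapsto y-\widehat y$ is a bijection of it. Hence $y\in\mathcal N_D^{(p)}(\widehat y)$ if and only if $y-\widehat y\in\mathcal N_D^{(p)}(0)$, because for the center $0$ the difference is $\xi$ itself. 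This already gives Eq.~\eqref{eq:compatible-neighborhood-translation}.

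The step requiring care is confirming that "compatible" and $B_\xi$ really are functions of the difference alone. They must not be secretly tied to the pair $(y,\widehat y)$, since the verifier score is defined on pairs.

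\emph{Odd branch.} Here $B(y-\widehat y)=C-\widehat C$ and $\eta(y-\widehat y)=\gamma-\widehat\gamma$ by Eq.~\eqref{eq:odd-rank-truncated-differences}. These are linear in the difference, and the compatibility condition Eq.~\eqref{eq:odd-rank-truncated-compatibility} involves only these two quantities.

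\emph{Binary branch.} Here $B_\xi=\mathsf B(\xi)$, where $\mathsf B$ is the well-defined homomorphism on the quotient group established before Eq.~\eqref{eq:binary-graded-difference-rank}. The functional $\lambda_\xi$ was shown to depend only on the quotient difference: the phase-section correction is absorbed, and simultaneous translation leaves it unchanged, as in Eq.~\eqref{eq:binary-enhanced-relative-character-translation-invariance}. Lemma~\ref{lem:binary-enhanced-difference-compatibility} identifies compatibility of the pair with the condition $\lambda_\xi=0$ on $N_\xi$. I would cite these facts explicitly to justify treating "$\xi$ compatible" as a property of $\xi$ alone.

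With these facts in hand, the set identity follows in both characteristics. Independence of the center then follows because translation by $\widehat y$ is a bijection of the finite group. This gives $|\mathcal N_D^{(p)}(\widehat y)|=|\mathcal N_D^{(p)}(0)|$, so $L_D(n,m;p)$ is well defined.
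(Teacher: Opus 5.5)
Your proposal is correct and follows essentially the same argument as the paper. Membership in $\mathcal N_D^{(p)}(\widehat y)$ depends only on the difference $y-\widehat y$: for odd $p$ this comes from the linear difference formulas, and for $p=2$ from Lemma~\ref{lem:binary-enhanced-difference-compatibility} on the enhanced quotient. Translation by $-\widehat y$ is therefore a bijection onto $\mathcal N_D^{(p)}(0)$, which also gives center independence of $L_D(n,m;p)$.
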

\begin{proof}
For odd $p$, rank and compatibility depend on the additive difference through
Eqs.~\eqref{eq:odd-rank-truncated-differences} and
\eqref{eq:odd-rank-truncated-compatibility}.  For $p=2$, the difference is
taken in the enhanced quotient group, and
Lemma~\ref{lem:binary-enhanced-difference-compatibility} proves that both rank
and compatibility depend only on that difference.  Translation by
$-\widehat y$ therefore gives a bijection from the neighborhood centered at
$\widehat y$ to the neighborhood centered at zero.
\end{proof}

\begin{lemma}[Cardinality of a compatible rank neighborhood]
\label{lem:compatible-rank-neighborhood-cardinality}
Let
\begin{equation}
 N_{\mathcal C}(n,m,d;p)
 :=|\{B\in\mathcal C_{n,m}:\rank B=d\}|.
 \label{eq:admissible-rank-count}
\end{equation}
For a fixed matrix difference of rank $d$, compatibility leaves exactly
$p^d$ character differences.  Thus the rank-$D$ neighborhood size is obtained
by multiplying the number of admissible rank-$d$ matrix differences by this
character-fiber size and summing over $d\le D$:
\begin{equation}
 L_D(n,m;p)
 =\sum_{d=0}^{\min\{D,m\}}
 N_{\mathcal C}(n,m,d;p)p^d.
 \label{eq:compatible-rank-list-exact-count}
\end{equation}
To obtain a uniform upper bound, we may next drop the linear admissibility
constraint and count all rectangular matrix differences of each rank.
With
\begin{equation}
 \kappa_p^{\mathrm{rank}}
 :=\prod_{j=1}^{\infty}(1-p^{-j})
 =\kappa_p^{\mathrm{GL}}>0,
 \label{eq:rank-truncated-kappa}
\end{equation}
where $\kappa_p^{\mathrm{GL}}$ was defined in
Eq.~\eqref{eq:kappa-GL-product}.  One has
\begin{align}
 L_D(n,m;p)
 &\le
 (\kappa_p^{\mathrm{rank}})^{-1}
 \sum_{d=0}^{\min\{D,m\}}p^{d(n+m-d+1)},
 \label{eq:compatible-rank-list-rank-sum-bound}\\
 L_D(n,m;p)
 &\le
 (\kappa_p^{\mathrm{rank}})^{-1}(D+1)
 p^{D(n+m+1)}.
 \label{eq:compatible-rank-list-coarse-bound}
\end{align}
\end{lemma}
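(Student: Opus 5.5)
The plan is to decompose the neighborhood by the matrix part of the difference and to count the compatible characters fiber by fiber. By Proposition~\ref{prop:compatible-neighborhood-translation} it suffices to count $\mathcal N_D^{(p)}(0)$, that is, the compatible differences $\xi$ with $\rank B_\xi\le D$. We group these differences by their matrix projection $B=B_\xi\in\mathcal C_{n,m}$. The exact formula \eqref{eq:compatible-rank-list-exact-count} then follows once we show that every fixed $B$ of rank $d$ carries exactly $p^d$ compatible character differences.

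For odd $p$, the differences over a fixed $B$ are the pairs $(B,\eta)$ with $\eta\in\F_p^m$ arbitrary. Compatibility \eqref{eq:odd-rank-truncated-compatibility} says that $\eta$ lies in the annihilator of $\ker B$. Since $B$ is an $n\times m$ matrix, $\ker B\subseteq\F_p^m$ has dimension $m-d$, so this annihilator has dimension $d$ and contains exactly $p^d$ elements. For $p=2$, the count is exactly Lemma~\ref{lem:binary-compatible-character-fiber}. Summing over $d\le\min\{D,m\}$ gives \eqref{eq:compatible-rank-list-exact-count}; the cap at $m$ holds because $B$ has only $m$ columns.

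For the upper bounds, we use $N_{\mathcal C}(n,m,d;p)\le$ the number of rank-$d$ matrices in $\F_p^{n\times m}$. To bound the latter, choose a $d$-dimensional column space in $\F_p^n$, which gives $\genfrac{[}{]}{0pt}{}{n}{d}_p\le(\kappa_p^{\mathrm{GL}})^{-1}p^{d(n-d)}$ choices. Then choose a surjection from $\F_p^m$ onto that space, of which there are at most $p^{dm}$. The total count is at most $(\kappa_p^{\mathrm{rank}})^{-1}p^{d(n+m-d)}$. Multiplying by the character-fiber size $p^d$ gives the summand $p^{d(n+m-d+1)}$, which is \eqref{eq:compatible-rank-list-rank-sum-bound}. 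For \eqref{eq:compatible-rank-list-coarse-bound}, note that $d(n+m-d+1)\le d(n+m+1)\le D(n+m+1)$, and that there are at most $D+1$ terms.

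The only step that needs care is the Gaussian binomial bound. We want a constant that does not depend on $n$, and the cleanest route is $\genfrac{[}{]}{0pt}{}{n}{d}_p=\prod_{j<d}(p^{n-j}-1)/(p^d-p^j)$. Bound each numerator factor by $p^{n-j}$, and bound the product of the denominator factors below by $\kappa_p^{\mathrm{GL}}p^{d^2}$, which is valid because $\prod_{j\le d}(1-p^{-j})\ge\kappa_p^{\mathrm{GL}}$. This gives $\genfrac{[}{]}{0pt}{}{n}{d}_p\le(\kappa_p^{\mathrm{GL}})^{-1}p^{d(n-d)}$. The remaining steps are routine.
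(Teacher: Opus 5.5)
Your proposal is correct and follows the paper's proof essentially step for step. You group compatible differences by their matrix part and get the fiber size $p^d$ from the annihilator $(\ker B)^\perp$ for odd $p$ and from Lemma~\ref{lem:binary-compatible-character-fiber} for $p=2$. You then drop admissibility and count all rank-$d$ rectangular matrices. The paper simply quotes the uniform bound $N_{\rm rect}(n,m,d;p)\le(\kappa_p^{\mathrm{GL}})^{-1}p^{d(n+m-d)}$, whereas you rederive it from a column-space choice plus a surjection, which is fine.

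One small slip: the product you write for the Gaussian binomial mixes two normalizations and is off by a factor $p^{d(d-1)/2}$. It should read $\prod_{j<d}(p^n-p^j)/(p^d-p^j)$, or equivalently $\prod_{j<d}(p^{n-j}-1)/(p^{d-j}-1)$. With the first form, bounding each numerator factor by $p^n$ and the denominator below by $\kappa_p^{\mathrm{GL}}p^{d^2}$ gives exactly the estimate $(\kappa_p^{\mathrm{GL}})^{-1}p^{d(n-d)}$ you claim.
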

\begin{proof}
Fix $B\in\mathcal C_{n,m}$ of rank $d$.  For odd $p$, compatible character
differences form the annihilator $(\ker B)^\perp$, which has $p^d$ elements.
For $p=2$, Lemma~\ref{lem:binary-compatible-character-fiber} gives a nonempty
affine coset of the same dimension and hence exactly $2^d$ elements.  Summing
over matrices of ranks $d\le D$ proves
Eq.~\eqref{eq:compatible-rank-list-exact-count} in both branches.

For the upper bound only, we now discard the linear admissibility condition
$J_m^TB=B^TJ_m$ and count all $n\times m$ matrices of rank $d$.  This enlarges
the set and gives
\begin{equation}
 N_{\mathcal C}(n,m,d;p)
 \le N_{\mathrm{rect}}(n,m,d;p)
 \le(\kappa_p^{\mathrm{rank}})^{-1}p^{d(n+m-d)}.
 \label{eq:rank-truncated-rectangular-bound}
\end{equation}
Multiplication by the compatible character count $p^d$ and summation over
$d\le D$ proves the first bound.  The sum has at most $D+1$ terms, and each
exponent is at most $D(n+m+1)$, proving the second.
\end{proof}

In both branches the quotient map from the uniform complete-label group has
constant fibers equal to the kernel of the partial-label map.  Hence the
induced prior on $\mathsf Y_{n,m,p}$ is uniform.  Explicitly, a complete label
has point mass $|G_{n,p}|^{-1}$, while every quotient label $y$ has exactly
$|H_{n,m,p}|$ representatives.  Therefore
\begin{equation}
 \Pr\{Y=y\}=\frac{|H_{n,m,p}|}{|G_{n,p}|}
 =|G_{n,p}/H_{n,m,p}|^{-1}.
 \label{eq:uniform-quotient-label-prior}
\end{equation}
The separate factor $|H_{n,m,p}|^{-1}$ appears only in the conditional
average over the complete labels in the coset, as in the definition of
$\sigma_y$ in Eq.~\eqref{eq:sigma-y}; it is not an additional factor in the
quotient point mass.  This argument applies equally to the odd-prime and
characteristic-two complete-label groups.  We now give the reduction from a
low-rank compatible announcement to exact identification.  It does not assume
that the graded measurement is covariant or optimal.

\subsubsection{From compatible lists to exact identification}
\label{subsec:graded-list-to-exact}
\label{subsubsec:list-to-exact-povm-conversion}
The neighborhood bound becomes operational through classical postprocessing
of an arbitrary graded measurement.  Choose uniformly from the compatible
rank-$D$ neighborhood of its report.  Whenever the true label lies in this
list, the postprocessing returns it with probability $1/L_D$.  The following
POVM implements this conversion.

\begin{lemma}[POVM conversion from a compatible list to exact identification]
\label{lem:compatible-list-povm-conversion}
Let
$\mathsf M=\{M_{\widehat y}:\widehat y\in\mathsf Y_{n,m,p}\}$
be any POVM used for graded verification.  For $0\le D\le m$, define
\begin{equation}
 N_y^{(D)}
 :=\frac1{L_D(n,m;p)}
 \sum_{\widehat y:\,y\in\mathcal N_D^{(p)}(\widehat y)}
 M_{\widehat y},
 \quad y\in\mathsf Y_{n,m,p}.
 \label{eq:rank-truncated-exact-povm}
\end{equation}
Then
$\mathsf N^{(D)}=\{N_y^{(D)}:y\in\mathsf Y_{n,m,p}\}$
is a valid exact-identification POVM.  If $\mathsf A_D$ denotes the compatible low-rank event
\begin{equation}
 \mathsf A_D:=\{Y\in\mathcal N_D^{(p)}(\widehat Y)\},
 \label{eq:rank-truncated-event}
\end{equation}
under the uniform partial-label ensemble and the measurement $\mathsf M$,
then
\begin{equation}
 P_{\mathrm{id}}(\mathsf N^{(D)};n,m,p)
 =\frac{\Pr_{\mathsf M}(\mathsf A_D)}{L_D(n,m;p)}.
 \label{eq:rank-truncated-converted-success}
\end{equation}
Consequently,
\begin{equation}
 \Pr_{\mathsf M}(\mathsf A_D)
 \le L_D(n,m;p)
 P_{\mathrm{id}}^{(k),*}(n,m;p).
 \label{eq:rank-truncated-list-event-bound}
\end{equation}
\end{lemma}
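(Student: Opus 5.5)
The plan is to verify the POVM axioms directly, then compute the success probability by exchanging the order of summation. Translation invariance (Proposition~\ref{prop:compatible-neighborhood-translation}) is the structural input throughout.

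\emph{Validity.} Each $N_y^{(D)}$ is a nonnegative combination of positive operators, so it is positive. For normalization I would swap the two sums:
\[
\sum_y N_y^{(D)}=\frac1{L_D}\sum_{\widehat y}\bigl|\mathcal N_D^{(p)}(\widehat y)\bigr|\,M_{\widehat y}.
\]
By Proposition~\ref{prop:compatible-neighborhood-translation}, every neighborhood has exactly $L_D(n,m;p)$ elements. The right-hand side is therefore $\sum_{\widehat y}M_{\widehat y}=I$.

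\emph{Success probability.} Write $\sigma_y$ for the conditioned $k$-copy state and use the uniform quotient prior from Eq.~\eqref{eq:uniform-quotient-label-prior}. Then
\[
P_{\mathrm{id}}(\mathsf N^{(D)})=\frac1{|\mathsf Y|}\sum_y\Tr[N_y^{(D)}\sigma_y]
=\frac1{L_D}\,\frac1{|\mathsf Y|}\sum_{y,\widehat y}\mathbf 1\{y\in\mathcal N_D^{(p)}(\widehat y)\}\Tr[M_{\widehat y}\sigma_y].
\]
The double sum, weighted by the prior, is exactly the joint probability under $\mathsf M$ that the true label $Y$ lies in the neighborhood of the report $\widehat Y$. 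This is $\Pr_{\mathsf M}(\mathsf A_D)$, which proves Eq.~\eqref{eq:rank-truncated-converted-success}.

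\emph{Final bound.} The exact optimum dominates every feasible POVM, so $P_{\mathrm{id}}(\mathsf N^{(D)})\le P_{\mathrm{id}}^{(k),*}(n,m;p)$. Multiplying by $L_D$ gives Eq.~\eqref{eq:rank-truncated-list-event-bound}.

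\emph{Where care is needed.} The only point that requires attention is that the normalization uses the center-independence of $|\mathcal N_D^{(p)}(\widehat y)|$. In characteristic two this independence rests on Lemma~\ref{lem:binary-enhanced-difference-compatibility}, which is already established. In addition, the exact-identification objective must be taken over the same quotient label alphabet as $\mathsf M$. This holds because both are indexed by $\mathsf Y_{n,m,p}$.
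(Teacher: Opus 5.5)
Your proposal is correct and follows the paper's proof essentially step for step. Like the paper, you check positivity, obtain normalization by exchanging the sums and using the center-independence of the neighborhood size from Proposition~\ref{prop:compatible-neighborhood-translation}, compute the success probability as a double count under the uniform quotient prior, and then bound the converted POVM's success by the exact optimum.
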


\begin{proof}
Every $N_y^{(D)}$ is positive semidefinite.  For normalization, exchange the
order of summation and use the center-independence of the neighborhood size:
\begin{align}
 \sum_{y\in\mathsf Y_{n,m,p}}N_y^{(D)}
 =\frac1{L_D(n,m;p)}
 \sum_{\widehat y\in\mathsf Y_{n,m,p}}
 \left(
 \sum_{y\in\mathcal N_D^{(p)}(\widehat y)}1
 \right)M_{\widehat y}
 =\sum_{\widehat y\in\mathsf Y_{n,m,p}}M_{\widehat y}
 =I.
 \label{eq:rank-truncated-povm-normalization}
\end{align}
Thus $\mathsf N^{(D)}$ is a POVM.

Let $\sigma_y$ be the conditioned $k$-copy input state from
Eq.~\eqref{eq:sigma-y} for the partial label $y$.  Since the induced prior on $\mathsf Y_{n,m,p}$ is uniform, the exact
success probability of $\mathsf N^{(D)}$ is
\begin{align}
 &P_{\mathrm{id}}(\mathsf N^{(D)};n,m,p)
 =\frac1{|\mathsf Y_{n,m,p}|}
 \sum_y\Tr(N_y^{(D)}\sigma_y)\notag\\
 =&\frac1{L_D(n,m;p)|\mathsf Y_{n,m,p}|}
 \sum_y\sum_{\widehat y:\,y\in\mathcal N_D^{(p)}(\widehat y)}
 \Tr(M_{\widehat y}\sigma_y)
 =\frac{\Pr_{\mathsf M}(\mathsf A_D)}{L_D(n,m;p)}.
 \label{eq:rank-truncated-success-calculation}
\end{align}
This establishes Eq.~\eqref{eq:rank-truncated-converted-success}.  The constructed
POVM $\mathsf N^{(D)}$ depends on the truncation radius $D$, whereas
$P_{\mathrm{id}}^{(k),*}(n,m;p)$ is the optimum over all exact-identification
POVMs for the fixed input ensemble and therefore does not depend on $D$.
Because $\mathsf N^{(D)}$ is one feasible exact-identification POVM for
the same $k$-copy ensemble, its success probability cannot exceed the
unrestricted exact optimum.  Therefore, for every $D$,
\[
 P_{\mathrm{id}}(\mathsf N^{(D)};n,m,p)
 \le P_{\mathrm{id}}^{(k),*}(n,m;p).
\]
This gives Eq.~\eqref{eq:rank-truncated-list-event-bound}.
\end{proof}

\begin{lemma}[POVM conversion under a uniform list-size upper bound]
\label{lem:bounded-list-povm-conversion}
Let $\mathsf Y'$ be a finite label set, let
$\mathsf M=\{M_{\widehat y}:\widehat y\in\mathsf Y'\}$ be a POVM, and assign
to every reported label $\widehat y$ a list
$\mathcal L(\widehat y)\subseteq\mathsf Y'$.  Suppose that
\begin{equation}
 |\mathcal L(\widehat y)|\le L
 \quad(\widehat y\in\mathsf Y')
 \label{eq:bounded-list-size-assumption}
\end{equation}
for an integer $L\ge1$.  Define
\begin{equation}
 N_y:=\frac1L
 \sum_{\widehat y:\,y\in\mathcal L(\widehat y)}M_{\widehat y},
 \quad y\in\mathsf Y'.
 \label{eq:bounded-list-subpovm}
\end{equation}
Then $N_y\ge0$ and $\sum_{y\in\mathsf Y'}N_y\le I$.  Consequently, the
failure operator
\begin{equation}
 N_\bot:=I-\sum_{y\in\mathsf Y'}N_y
 \label{eq:bounded-list-failure-operator}
\end{equation}
is positive, and adjoining it gives a POVM.  For a uniform true label $Y$
and the report $\widehat Y$ produced by $\mathsf M$, the probability that
this POVM identifies the true label through one of the outcomes in
$\mathsf Y'$ is
\begin{equation}
 \frac{\Pr_{\mathsf M}\{Y\in\mathcal L(\widehat Y)\}}{L}.
 \label{eq:bounded-list-converted-success}
\end{equation}
If the exact decoder is required to have output alphabet $\mathsf Y'$ only,
one may add $N_\bot$ to any one of the operators $N_y$; this preserves
normalization and cannot decrease the exact-identification success
probability.
\end{lemma}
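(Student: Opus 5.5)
The plan mirrors the proof of Lemma~\ref{lem:compatible-list-povm-conversion}, with the exact neighborhood size replaced by the uniform upper bound $L$. Positivity of each $N_y$ is immediate, since it is a nonnegative multiple of a sum of positive operators. For the subnormalization I will exchange the order of summation:
\[
 \sum_{y\in\mathsf Y'}N_y
 =\frac1L\sum_{\widehat y\in\mathsf Y'}
 \bigl|\mathcal L(\widehat y)\bigr|\,M_{\widehat y}
 \le\frac1L\sum_{\widehat y\in\mathsf Y'}L\,M_{\widehat y}=I .
\]
The inequality uses Eq.~\eqref{eq:bounded-list-size-assumption} together with $M_{\widehat y}\ge0$: each coefficient $|\mathcal L(\widehat y)|/L$ lies in $[0,1]$, so replacing it by $1$ can only increase the operator. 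It follows that $N_\bot\ge0$, and the family $\{N_y\}_{y\in\mathsf Y'}\cup\{N_\bot\}$ sums to $I$, so it is a POVM.

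For the success formula, I will let $\sigma_y$ denote the conditioned input state of the true label $y$. The uniform prior gives probability $|\mathsf Y'|^{-1}$ to each label, so
\[
 \frac1{|\mathsf Y'|}\sum_{y}\Tr(N_y\sigma_y)
 =\frac1{L}\cdot\frac1{|\mathsf Y'|}\sum_{y}\sum_{\widehat y:\,y\in\mathcal L(\widehat y)}\Tr(M_{\widehat y}\sigma_y).
\]
The double sum, normalized by $|\mathsf Y'|$, is exactly the joint probability under $\mathsf M$ that the true label $Y$ lies in the list $\mathcal L(\widehat Y)$ of the report. This gives Eq.~\eqref{eq:bounded-list-converted-success}. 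The outcome $N_\bot$ is not counted as a success, so this is the probability of identifying the true label through an outcome in $\mathsf Y'$.

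For the final clause, I will add $N_\bot$ to a single operator $N_{y_0}$. This restores normalization on the alphabet $\mathsf Y'$ and leaves every other $N_y$ unchanged. The success term for $y_0$ becomes $\Tr((N_{y_0}+N_\bot)\sigma_{y_0})\ge\Tr(N_{y_0}\sigma_{y_0})$ because $N_\bot\ge0$, so the success probability cannot decrease.

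I expect no real obstacle. The only point needing care is that the lists are no longer translates of a fixed neighborhood, so $\sum_y N_y$ need not equal $I$. This is exactly why the failure outcome $N_\bot$ is introduced, and its positivity is what the bound $|\mathcal L(\widehat y)|\le L$ supplies.
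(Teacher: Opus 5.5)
Your proposal is correct and follows essentially the same route as the paper. You exchange the order of summation and use $|\mathcal L(\widehat y)|\le L$ with $M_{\widehat y}\ge0$ to get $\sum_y N_y\le I$, repeat the double count of Lemma~\ref{lem:compatible-list-povm-conversion} with denominator $L$, and note that merging the positive operator $N_\bot$ into one outcome cannot decrease the exact success probability.
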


\begin{proof}
Positivity of each $N_y$ is immediate.  Exchanging the sums and applying
Eq.~\eqref{eq:bounded-list-size-assumption} gives
\begin{align}
 \sum_{y\in\mathsf Y'}N_y
 =\frac1L\sum_{\widehat y\in\mathsf Y'}
 |\mathcal L(\widehat y)|M_{\widehat y}
 \le\sum_{\widehat y\in\mathsf Y'}M_{\widehat y}=I.
 \label{eq:bounded-list-subnormalization}
\end{align}
This proves positivity of $N_\bot$.  The success calculation is the same
double count as in Lemma~\ref{lem:compatible-list-povm-conversion}, except
that the common denominator is the upper bound $L$ rather than the actual
list cardinality.  The failure outcome contributes zero to the displayed
success probability.  Finally, merging $N_\bot$ into an existing outcome
adds a positive operator and therefore cannot reduce the corresponding
correct-decoding term.
\end{proof}

\subsubsection{Finite bounded-rank comparison}
\label{subsubsec:finite-bounded-rank-comparison}

\begin{theorem}[Finite-size bounded-rank graded-to-exact comparison]
\label{thm:rank-truncated-graded-to-exact}
For every fixed prime $p$, all finite $n$, $1\le m\le n$, $k\ge0$, and every
integer $0\le D\le m$, low-rank compatible announcements are controlled
by exact identification after paying the list-size factor $L_D(n,m;p)$,
whereas the complementary high-rank event contributes at most
$p^{-(D+1)}$.  Quantitatively, the optimal graded score satisfies
\begin{equation}
 S_{\mathrm{ver}}^{(k),*}(n,m;p)
 \le
 L_D(n,m;p)P_{\mathrm{id}}^{(k),*}(n,m;p)
 +p^{-(D+1)}.
 \label{eq:rank-truncated-graded-to-exact}
\end{equation}
The dependence on $D$ is carried by the converted POVM $\mathsf N^{(D)}$,
the event $\mathsf A_D$, the list size $L_D$, and the tail term.
The exact optimum $P_{\mathrm{id}}^{(k),*}$ has no $D$-dependence because it is
the unrestricted optimum over all exact-identification POVMs.
Here $L_D(n,m;p)$ satisfies the exact formula
\eqref{eq:compatible-rank-list-exact-count} and the upper bounds
\eqref{eq:compatible-rank-list-rank-sum-bound}--
\eqref{eq:compatible-rank-list-coarse-bound}.
\end{theorem}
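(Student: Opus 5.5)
Fix an arbitrary graded-verification POVM $\mathsf M=\{M_{\widehat y}\}$ on the label group $\mathsf Y_{n,m,p}$. We will bound its average score $S_{\mathrm{ver}}^{(k)}(\mathsf M;n,m,p)$ by the right-hand side of Eq.~\eqref{eq:rank-truncated-graded-to-exact}. Since $\mathsf M$ is arbitrary, the same bound then holds for the optimum.

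Write the score with the joint distribution of the uniform true label $Y$ and the report $\widehat Y$. By Eq.~\eqref{eq:uniform-quotient-label-prior}, the induced prior on $\mathsf Y_{n,m,p}$ is uniform, and the learner's state given $Y=y$ is $\sigma_y$. The score is therefore $\mathbb E\,[s_{\rm game}(Y,\widehat Y)]$, where $s_{\rm game}$ is given by Eq.~\eqref{eq:main-results-graded-payoff}. For odd $p$ this uses Eq.~\eqref{eq:odd-partial-score-spectrum}. For $p=2$ it uses Lemma~\ref{lem:binary-enhanced-difference-compatibility}, which writes the Born score as a function of the quotient difference through $\rank B_{Y-\widehat Y}$ and compatibility. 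One point needs checking here. The verifier's payoff in the original game is averaged over complete labels, whereas here it is averaged over quotient labels. This causes no difference, because the score depends on the complete label only through its partial label (Lemma~\ref{lem:uniform-partial-labels}). In the binary case, the relative character $\lambda_\xi$ has already been shown to be independent of the representative.

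Split the expectation according to the compatible low-rank event $\mathsf A_D$ of Eq.~\eqref{eq:rank-truncated-event}.
\begin{itemize}
\item On $\mathsf A_D$, bound the score by one. Lemma~\ref{lem:compatible-list-povm-conversion} (Eq.~\eqref{eq:rank-truncated-list-event-bound}) then gives a contribution of at most $\Pr_{\mathsf M}(\mathsf A_D)\le L_D(n,m;p)P_{\mathrm{id}}^{(k),*}(n,m;p)$. The center-independence of $L_D$ from Proposition~\ref{prop:compatible-neighborhood-translation} is already built into that lemma.
\item On the complement, either the pair is incompatible, so the score is zero, or it is compatible with $\rank B_{Y-\widehat Y}\ge D+1$, so the score is $p^{-\rank}\le p^{-(D+1)}$. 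This part therefore contributes at most $p^{-(D+1)}\Pr(\mathsf A_D^c)\le p^{-(D+1)}$.
\end{itemize}
Adding the two contributions gives Eq.~\eqref{eq:rank-truncated-graded-to-exact}.

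The only delicate step is the identification $\rank B_{y-\widehat y}=d(y,\widehat y)$ together with the requirement that compatibility be a property of the difference in both characteristics. Both facts are exactly what the neighborhood definition needs. For $p$ odd they follow from $B=C-\widehat C$ and Eq.~\eqref{eq:odd-rank-truncated-compatibility}, using the sign convention $q_m(A,b)=(AJ_m,-J_m^Tb)$. For $p=2$ they follow from Lemma~\ref{lem:binary-enhanced-difference-compatibility}. In the odd case we will also check that the character condition $(\gamma+J_m^Tb)|_N=0$ in Eq.~\eqref{eq:odd-partial-score-spectrum} matches the condition $\eta(y-\widehat y)|_{\ker B}=0$ once the true label is written as $(AJ_m,-J_m^Tb)$. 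The $D$-dependence then sits only in $\mathsf N^{(D)}$, $\mathsf A_D$, $L_D$ and the tail term, as the statement asserts.
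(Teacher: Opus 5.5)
Your proposal is correct and is essentially the paper's own proof: fix an arbitrary graded POVM, split the average score on the compatible low-rank event $\mathsf A_D$, bound the score by one there and invoke Eq.~\eqref{eq:rank-truncated-list-event-bound}, bound the complement by $p^{-(D+1)}$ via the score spectrum, and take the supremum over POVMs. Your additional consistency checks (uniform induced quotient prior, and matching the sign convention $q_m(A,b)=(AJ_m,-J_m^Tb)$ to the odd-prime compatibility condition) are also correct; the one small misattribution is that the score's dependence on $(A,b)$ only through its partial label comes from the score formula Eq.~\eqref{eq:odd-partial-score-spectrum}, whereas Lemma~\ref{lem:uniform-partial-labels} supplies the constant fibers.
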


\begin{proof}
Fix an arbitrary graded POVM $\mathsf M$.  On the event $\mathsf A_D$ from
Eq.~\eqref{eq:rank-truncated-event}, the score is at most one.  On its
complement, an incompatible announcement has score zero, while a compatible
announcement has $\rank B_{Y-\widehat Y}\ge D+1$ and therefore score at most
$p^{-(D+1)}$.  Splitting the average score over these two events gives
\begin{align}
 S_{\mathrm{ver}}^{(k)}(\mathsf M;n,m,p)
 &\le \Pr_{\mathsf M}(\mathsf A_D)
 +p^{-(D+1)}\Pr_{\mathsf M}(\mathsf A_D^c)\notag\\
 &\le \Pr_{\mathsf M}(\mathsf A_D)+p^{-(D+1)}.
 \label{eq:rank-truncated-score-split}
\end{align}
Now substitute the list-event estimate
Eq.~\eqref{eq:rank-truncated-list-event-bound} into the first term on the
right-hand side of Eq.~\eqref{eq:rank-truncated-score-split}.  This yields
\begin{equation}
 S_{\mathrm{ver}}^{(k)}(\mathsf M;n,m,p)
 \le
 L_D(n,m;p)P_{\mathrm{id}}^{(k),*}(n,m;p)
 +p^{-(D+1)}.
 \label{eq:rank-truncated-score-bound-arbitrary}
\end{equation}
The right-hand side of
Eq.~\eqref{eq:rank-truncated-score-bound-arbitrary} is independent of the
chosen graded POVM.  Taking the supremum over $\mathsf M$ on the left-hand
side therefore gives Eq.~\eqref{eq:rank-truncated-graded-to-exact}.
\end{proof}

\begin{remark}[Operational interpretation]  
The converted decoder first performs the original graded measurement and then  
applies uniform classical randomization over the reported compatible  
rank-$D$ list.  
\end{remark}

\begin{remark}[Relation to the elementary finite sandwich]
For $D=0$, one has $L_0=1$, and
Eq.~\eqref{eq:rank-truncated-graded-to-exact} gives
\[
 S_{\mathrm{ver}}^{(k),*}
 \le P_{\mathrm{id}}^{(k),*}+p^{-1}.
\]
The upper side of
Proposition~\ref{prop:prob-one-equivalence} is sharper at fixed
finite size.  The advantage of
Theorem~\ref{thm:rank-truncated-graded-to-exact} is instead that $D$ may grow
with $n$, allowing the tail $p^{-(D+1)}$ to vanish while the list size remains
subexponential on the $n^2$ scale.
\end{remark}

\subsubsection{Asymptotic graded converse}
\label{subsec:graded-asymptotic-converse}

A fixed cutoff leaves a nonzero score tail, whereas a cutoff of linear order  
can make the compatible list exponential on the $n^2$ scale.  We therefore  
choose the diverging sublinear cutoff $D_n=\lfloor\sqrt n\rfloor$, for which the  
list-size exponent is $o(n^2)$ and the tail vanishes.  

\begin{corollary}[Vanishing graded score below one learner copy per qudit]
\label{cor:graded-score-to-zero}
Fix a prime $p$ and suppose
\[
 m=\beta n+o(n),\quad k=\alpha n+o(n),
 \quad 0<\beta\le1,\quad 0\le\alpha<1.
\]
Then the optimal graded score vanishes below copy rate one, that is, below one learner copy per qudit:
\begin{equation}
 S_{\mathrm{ver}}^{(k),*}(n,m;p)\longrightarrow0.
 \label{eq:graded-score-to-zero}
\end{equation}
The quantitative bound is obtained by choosing the sublinear cutoff
$D_n=\lfloor\sqrt n\rfloor$.  With this choice,
\begin{equation}
 S_{\mathrm{ver}}^{(k),*}(n,m;p)
 \le
 p^{-J(\alpha,\beta)n^2+o(n^2)}
 +p^{-\sqrt n+O(1)}.
 \label{eq:graded-score-stretched-exponential}
\end{equation}
\end{corollary}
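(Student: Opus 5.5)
The plan is to apply Theorem~\ref{thm:rank-truncated-graded-to-exact} with the cutoff $D=D_n:=\min\{\lfloor\sqrt n\rfloor,m_n\}$. Since $m_n=\beta n+o(n)$ with $\beta>0$, this minimum equals $\lfloor\sqrt n\rfloor$ for all large $n$. The admissibility condition $0\le D\le m$ therefore holds eventually, and finitely many initial $n$ do not affect either the limit or the $o(n^2)$ and $O(1)$ remainders. This gives
\[
 S_{\mathrm{ver}}^{(k),*}(n,m;p)\le L_{D_n}(n,m;p)\,P_{\mathrm{id}}^{(k),*}(n,m;p)+p^{-(D_n+1)} .
\]
I will then bound the two terms separately.

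For the tail term, $D_n+1\ge\sqrt n$, so $p^{-(D_n+1)}\le p^{-\sqrt n}$. This already has the form $p^{-\sqrt n+O(1)}$, and the $O(1)$ slack absorbs the floor.

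For the list factor, I will use the coarse bound of Lemma~\ref{lem:compatible-rank-neighborhood-cardinality}:
\[
 L_{D_n}\le(\kappa_p^{\mathrm{rank}})^{-1}(D_n+1)\,p^{D_n(n+m+1)} .
\]
Since $m\le n$, the exponent satisfies $D_n(n+m+1)\le\sqrt n\,(2n+1)=O(n^{3/2})$. The factors $(D_n+1)$ and $(\kappa_p^{\mathrm{rank}})^{-1}$ contribute only $O(\log n)$ to $\log_p$. Altogether $\log_p L_{D_n}=O(n^{3/2})=o(n^2)$. The bound is uniform because it depends only on $n$, $m\le n$ and $p$.

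For the exact term, I will invoke Theorem~\ref{thm:converse-common-asymptotic-reduction}, which gives $P_{\mathrm{id}}^{(k),*}\le p^{-J(\alpha,\beta)n^2+o(n^2)}$. Multiplying by the list factor changes the exponent only by $o(n^2)$, which yields the displayed bound \eqref{eq:graded-score-stretched-exponential}.

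Vanishing then follows because $J(\alpha,\beta)>0$ for every $0\le\alpha<1$ and $0<\beta\le1$. This positivity was noted in the proof of Theorem~\ref{thm:converse-common-asymptotic-reduction}: in the first branch $1-\alpha-\beta/2\ge\beta/2>0$, and in the second branch $(1-\alpha)^2/2>0$. Hence both terms tend to zero.

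The only delicate point is the choice of cutoff. It must grow so that the tail vanishes, yet stay sublinear so that the list exponent $O(Dn)$ is $o(n^2)$; any such $D$ works, and $\sqrt n$ is one concrete choice. No step here is expected to be a real obstacle, since all the hard estimates come from the earlier theorems.
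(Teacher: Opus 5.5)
Your proposal is correct and follows essentially the same route as the paper's proof: Theorem~\ref{thm:rank-truncated-graded-to-exact} with $D_n=\lfloor\sqrt n\rfloor$ (admissible once $D_n\le m_n$), the coarse list bound giving $\log_p L_{D_n}=O(n^{3/2})=o(n^2)$, the quadratic exact converse of Theorem~\ref{thm:converse-common-asymptotic-reduction}, and the tail $p^{-(D_n+1)}=p^{-\sqrt n+O(1)}$. Your explicit check that $J(\alpha,\beta)>0$ in both branches, and taking $\min\{\lfloor\sqrt n\rfloor,m_n\}$ to cover small $n$, are harmless refinements of the same argument.
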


\begin{proof}
Choose
\begin{equation}
 D_n:=\lfloor\sqrt n\rfloor.
 \label{eq:graded-cutoff-choice}
\end{equation}
Since $m/n\to\beta>0$, Eq.~\eqref{eq:graded-cutoff-choice} satisfies
$D_n\le m$ for all sufficiently large $n$.  Hence the finite comparison
Eq.~\eqref{eq:rank-truncated-graded-to-exact} applies with $D=D_n$.

Substitute $D=D_n$ into the list-size estimate
Eq.~\eqref{eq:compatible-rank-list-coarse-bound}.  Taking base-$p$
logarithms gives
\begin{align}
 \log_p L_{D_n}(n,m;p)
 \le D_n(n+m+1)+O_p(\log n)
 =O(n^{3/2})=o(n^2).
 \label{eq:rank-truncated-list-subquadratic}
\end{align}
The exact-identification converse Eq.~\eqref{eq:quadratic-converse} gives
\begin{equation}
 P_{\mathrm{id}}^{(k),*}(n,m;p)
 \le p^{-J(\alpha,\beta)n^2+o(n^2)},
 \label{eq:graded-use-exact-converse}
\end{equation}
where $J(\alpha,\beta)>0$ for $0\le\alpha<1$.  Multiplying the bounds in
Eqs.~\eqref{eq:rank-truncated-list-subquadratic} and
\eqref{eq:graded-use-exact-converse} gives
\begin{equation}
 L_{D_n}(n,m;p)P_{\mathrm{id}}^{(k),*}(n,m;p)
 \le p^{-J(\alpha,\beta)n^2+O(n^{3/2})+o(n^2)}
 =p^{-J(\alpha,\beta)n^2+o(n^2)}.
 \label{eq:rank-truncated-list-times-exact}
\end{equation}
The tail term in Eq.~\eqref{eq:rank-truncated-graded-to-exact} satisfies
\begin{equation}
 p^{-(D_n+1)}=p^{-\sqrt n+O(1)}\longrightarrow0.
 \label{eq:rank-truncated-tail-decay}
\end{equation}
Finally, substitute Eqs.~\eqref{eq:rank-truncated-list-times-exact} and
\eqref{eq:rank-truncated-tail-decay} into
Eq.~\eqref{eq:rank-truncated-graded-to-exact}.  This gives
Eq.~\eqref{eq:graded-score-stretched-exponential}, and hence
Eq.~\eqref{eq:graded-score-to-zero}.
\end{proof}

The preceding bound establishes  
Theorem~\ref{thm:refined-verification-threshold}.  We next treat the generic  
original Bayes problem without replacing its full basis-free report alphabet  
by the refined coordinate alphabet.

\subsection{Application to the generic original Bayes problem}
\label{subsec:generic-original-graded-application}

We now prove
Theorem~\ref{thm:generic-fixed-W-Bayes-verification-converse} without
identifying the original basis-free report alphabet with the refined
canonical alphabet.  Fix an $(n-m)$-dimensional isotropic subspace $W$ and
use the generic prior and the unique target $y_W(x)$ from
Eq.~\eqref{eq:section4-generic-original-exact-target}.  The argument follows the same bounded-list principle but requires new  
basis-free ingredients: an intrinsic score formula, a Bayes conversion allowing  
distinct target and report alphabets, and a uniform list bound for arbitrary  
legal basis-free reports.  Throughout, the learner may announce any element of  
$\mathsf{Out}_{W,m}$.  

\subsubsection{Basis-free score and exactness}
\label{subsubsec:generic-original-basis-free-score}

For a generic exact target $y=(K,\eta)$ and a legal report
$\widehat y=(\widehat K,\widehat\eta)\in\mathsf{Out}_{W,m}$, use the
basis-free distance $d_W(y,\widehat y)$ from
Eq.~\eqref{eq:section4-generic-original-distance}.
Call the two reports compatible when their joint eigenvalue assignments agree
on the common physical Pauli or Weyl family represented by
$K\cap\widehat K$.  This definition is intrinsic.  In characteristic two it
compares eigenvalues of the same physical Pauli operators, and therefore
already includes any phase-section correction needed by a coordinate
realization.

\begin{lemma}[Basis-free generic verifier score]
\label{lem:generic-original-score-spectrum}
For every generic input $x$ and every legal basis-free report
$\widehat y=(\widehat K,\widehat\eta)$,
\begin{equation}
 \Tr[\Pi_{\widehat K,\widehat\eta}\rho_x]
 =\begin{cases}
 p^{-d_W(y_W(x),\widehat y)},
 &\text{if $y_W(x)$ and $\widehat y$ are compatible},\\
 0,&\text{otherwise}.
 \end{cases}
 \label{eq:generic-original-score-spectrum}
\end{equation}
In particular, the score is one if and only if
$\widehat y=y_W(x)$, and every false report has score either zero or at most
$p^{-1}$.
\end{lemma}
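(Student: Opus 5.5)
We compute the score by expanding both projectors in the physical Pauli or Weyl section, as in the proof of Lemma~\ref{lem:unrestricted-stabilizer-overlaps}. Write $\rho_x=p^{-n}\sum_{v\in L_x}\chi_x(v)^{-1}P_v$ and $\Pi_{\widehat K,\widehat\eta}=p^{-m}\sum_{w\in\widehat K}\widehat\eta(w)^{-1}P_w$. Tracing the product and using $\Tr(P_w^\dagger P_v)=p^n\delta_{v,w}$ (up to the fixed cocycle phases, which cancel once the adjoint is taken consistently) leaves only directions common to $\widehat K$ and $L_x$. This gives
\[
\Tr[\Pi_{\widehat K,\widehat\eta}\rho_x]=p^{-m}\sum_{w\in \widehat K\cap L_x}\widehat\eta(w)^{-1}\chi_x(w),
\]
where the summand is the intrinsic eigenvalue ratio on the common physical operators. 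It is section-independent, because changing a representative multiplies both assignments by the same phase. This ratio is a character of the additive group $\widehat K\cap L_x$, since both $\widehat\eta$ and $\chi_x$ satisfy Eq.~\eqref{eq:cocycle-compatible-assignment} with the same cocycle.

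The next step identifies the common space. Because $\widehat K\subseteq W^\perp$, we have $\widehat K\cap L_x=\widehat K\cap(L_x\cap W^\perp)=\widehat K\cap K_W(x)$. For a generic input, $K_W(x)$ is $m$-dimensional. Hence $\dim(\widehat K\cap L_x)=m-d_W(y_W(x),\widehat y)$, and on this space $\chi_x$ coincides with $\chi_x|_{K_W(x)}=\eta$. Character orthogonality then gives two cases. If the assignments agree on $K\cap\widehat K$, the sum equals $p^{-m}p^{m-d_W}=p^{-d_W}$; otherwise it vanishes. This is Eq.~\eqref{eq:generic-original-score-spectrum}.

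Exactness follows directly. A score of one forces $d_W=0$, that is, $\widehat K=K_W(x)$ because both spaces have dimension $m$, and compatibility then forces $\widehat\eta=\eta$. Any false report therefore has either $d_W\ge1$, giving score at most $p^{-1}$, or $d_W=0$ with mismatched characters, giving score $0$.

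The main obstacle is the characteristic-two bookkeeping in the trace step. We must check that $\Tr(P_w^\dagger P_v)$ together with the cocycle-compatibility of both assignments produces exactly the ratio $\widehat\eta(w)^{-1}\chi_x(w)$, with no leftover cocycle phase, and that this ratio is additive. We will handle this by writing both assignments relative to the same section on the common subspace. Their ratio is then a genuine character, since the cocycle factors $c_p^{-1}$ cancel in the quotient.
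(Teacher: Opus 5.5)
Your proposal is correct and matches the paper's proof step for step. You expand $\Pi_{\widehat K,\widehat\eta}$ in the fixed section, reduce the score to $p^{-m}\sum_{v\in\widehat K\cap L_x}\widehat\eta(v)^{-1}\chi_x(v)$, use $\widehat K\subseteq W^\perp$ to get $\widehat K\cap L_x=\widehat K\cap K_W(x)$, and finish with character orthogonality and the dimension count. The cocycle bookkeeping you flag does not arise: the paper reads off $\Tr[P_v\rho_x]$ directly (zero for $v\notin L_x$, $\chi_x(v)$ for $v\in L_x$), and equivalently you can write $\rho_x=\rho_x^\dagger=p^{-n}\sum_{v\in L_x}\chi_x(v)P_v^\dagger$ and use $\Tr(P_v^\dagger P_w)=p^n\delta_{v,w}$, so no leftover phase ever appears.
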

\begin{proof}
Because $\widehat K\subseteq W^\perp$,
\begin{equation}
 \widehat K\cap L_x
 =\widehat K\cap(L_x\cap W^\perp)
 =\widehat K\cap K_W(x).
 \label{eq:generic-original-common-stabilizer}
\end{equation}
Use throughout the fixed physical Pauli or Weyl section $v\mapsto P_v$ from
Eq.~\eqref{eq:global-Pauli-section-cocycle}.  The intrinsic projector formula
Eq.~\eqref{eq:intrinsic-character-projector} gives
\begin{equation}
 \Pi_{\widehat K,\widehat\eta}
 =p^{-m}\sum_{v\in\widehat K}
 \widehat\eta(v)^{-1}P_v.
 \label{eq:generic-original-projector-expansion}
\end{equation}
For $v\in\widehat K$, the expectation of $P_v$ in $\rho_x$ is zero unless
$v\in L_x$.  On $\widehat K\cap L_x$, it is $\chi_x(v)$ relative to the same
fixed section.  Consequently,
\begin{equation}
 \Tr[\Pi_{\widehat K,\widehat\eta}\rho_x]
 =p^{-m}\sum_{v\in\widehat K\cap L_x}
 \widehat\eta(v)^{-1}\chi_x(v).
 \label{eq:generic-original-intersection-character-sum}
\end{equation}
Because both assignments refer to the same fixed section, their ratio is an
ordinary character of the common additive group $\widehat K\cap L_x$.  Character
orthogonality makes the sum zero unless the two eigenvalue assignments agree
on the intersection; if they agree, the sum equals
$p^{\dim(\widehat K\cap L_x)}$.  Together with
Eq.~\eqref{eq:generic-original-common-stabilizer}, this proves
Eq.~\eqref{eq:generic-original-score-spectrum}.  In characteristic two, both
characters in Eq.~\eqref{eq:generic-original-intersection-character-sum} refer
to the same physical representatives, so no comparison of incompatible phase
sections is being made.

If the score is one, then
$\dim(\widehat K\cap K_W(x))=m$.  Both subspaces have dimension $m$, hence
$\widehat K=K_W(x)$, and compatibility on the whole subspace gives
$\widehat\eta=\chi_x|_{K_W(x)}$.  The converse is immediate.
\end{proof}

Let $P_{W,\rm gen,av}^{(k),*}$ denote the exact Bayes optimum under the uniform
generic prior.  Equation~\eqref{eq:fixed-W-generic-average-comparison} and the
refined exact converse give
\begin{equation}
 P_{W,\rm gen,av}^{(k),*}(n,m;p)
 \le P_{\mathrm{id}}^{(k),*}(n,m;p)
 \le p^{-J(\alpha,\beta)n^2+o(n^2)}
 \label{eq:generic-original-exact-converse}
\end{equation}
under the scaling of Theorem~\ref{thm:refined-exact-converse}.  Only the unique
correct-report event is used in this comparison; no equality of the two full
report alphabets is required.

\subsubsection{A Bayes bounded-list conversion with distinct alphabets}
\label{subsubsec:generic-original-bayes-list-conversion}

The next lemma isolates the form of the list conversion needed here.  It does
not require a uniform prior on distinct exact targets, and the report alphabet
may be larger than the exact-target alphabet.

\begin{lemma}[Bayes conversion under a uniform exact-target-list bound]
\label{lem:generic-bayes-bounded-list-conversion}
Let $X$ have an arbitrary prior $\pi$, let $x\mapsto y(x)$ be a single-valued
exact target in an alphabet $\mathcal Y$, and let the learner report in an
alphabet $\widehat{\mathcal Y}$.  Suppose that every report $\widehat y$ is
assigned a list $\mathcal L_D(\widehat y)\subseteq\mathcal Y$ satisfying
$|\mathcal L_D(\widehat y)|\le L_D$.  For a learner POVM
$\{M_{\widehat y}\}$, let
\begin{equation}
 \mathsf A_D:=\{y(X)\in\mathcal L_D(\widehat Y)\}.
 \label{eq:generic-bayes-list-event}
\end{equation}
Then
\begin{equation}
 \Pr_{\mathsf M}(\mathsf A_D)\le L_D P_{\mathrm{id}}^{(k),*},
 \label{eq:generic-bayes-list-event-bound}
\end{equation}
where the exact optimum uses the same input ensemble and target map.
\end{lemma}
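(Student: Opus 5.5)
The plan is to adapt the POVM conversion of Lemma~\ref{lem:bounded-list-povm-conversion} to the setting of distinct target and report alphabets with an arbitrary prior. Given the learner POVM $\{M_{\widehat y}:\widehat y\in\widehat{\mathcal Y}\}$, I define exact-decoder operators indexed by the target alphabet,
\[
 N_y:=\frac1{L_D}\sum_{\widehat y\in\widehat{\mathcal Y}:\,y\in\mathcal L_D(\widehat y)}M_{\widehat y},
 \qquad y\in\mathcal Y.
\]
Positivity of each $N_y$ is immediate.

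The first step is to exchange the order of summation. This gives $\sum_{y}N_y=L_D^{-1}\sum_{\widehat y}|\mathcal L_D(\widehat y)|M_{\widehat y}\le I$, using the uniform bound $|\mathcal L_D(\widehat y)|\le L_D$. The deficit $I-\sum_yN_y$ is then positive, and I add it to a fixed outcome $y_0\in\mathcal Y$. This yields a genuine exact-identification POVM with output alphabet $\mathcal Y$. The extra positive operator can only increase each correct-decoding term, which is the same argument as in Corollary~\ref{cor:chartwise-fixed-frame-optima}.

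The second step computes the success probability under the prior $\pi$ as a double count:
\[
 \sum_x\pi(x)\Tr[N_{y(x)}\rho_x^{\otimes k}]
 \ \ge\ \frac1{L_D}\sum_x\pi(x)\sum_{\widehat y:\,y(x)\in\mathcal L_D(\widehat y)}\Tr[M_{\widehat y}\rho_x^{\otimes k}]
 =\frac{\Pr_{\mathsf M}(\mathsf A_D)}{L_D}.
\]
Here the inequality accounts only for the merged failure operator. The point to check carefully is that the prior enters solely as a weight on $x$. Because $y(x)$ is single-valued, the inner sum is precisely the conditional probability that the report's list contains the true target. So no uniformity of the induced target distribution and no translation-invariance of the lists is needed, unlike in Lemma~\ref{lem:compatible-list-povm-conversion}.

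Finally, the constructed POVM is feasible for the exact problem with the same ensemble and target map. Its success is therefore at most $P_{\mathrm{id}}^{(k),*}$, and multiplying by $L_D$ gives Eq.~\eqref{eq:generic-bayes-list-event-bound}. I expect no real obstacle. The only subtlety is keeping the report alphabet $\widehat{\mathcal Y}$ distinct from $\mathcal Y$ throughout, and handling the subnormalization with the failure outcome rather than assuming each target lies in exactly $L_D$ lists.
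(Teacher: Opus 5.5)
Your proposal is correct and follows the paper's proof essentially step for step. You use the same averaged sub-POVM $N_y=L_D^{-1}\sum_{\widehat y:\,y\in\mathcal L_D(\widehat y)}M_{\widehat y}$, the same subnormalization from $|\mathcal L_D(\widehat y)|\le L_D$ with the residual merged into a fixed $y_0$, and the same prior-weighted double count giving exact success at least $\Pr_{\mathsf M}(\mathsf A_D)/L_D$, without needing a uniform induced target prior.
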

\begin{proof}
Define a sub-POVM on $\mathcal Y$ by
\begin{equation}
 N_y:=\frac1{L_D}\sum_{\widehat y:\,
 y\in\mathcal L_D(\widehat y)}M_{\widehat y}.
 \label{eq:generic-bayes-list-subpovm}
\end{equation}
Then
\begin{equation}
 \sum_yN_y
 =\frac1{L_D}\sum_{\widehat y}
 |\mathcal L_D(\widehat y)|M_{\widehat y}\le I.
 \label{eq:generic-bayes-list-subnormalization}
\end{equation}
Choose any fixed legal label $y_0\in\mathcal Y$ and add the residual
positive operator $I-\sum_yN_y$ to $N_{y_0}$.  This produces a POVM on the
declared exact-label alphabet and cannot decrease its exact Bayes success.
Before this residual contribution is counted, the success already equals
\begin{align}
 \sum_x\pi(x)\Tr[N_{y(x)}\rho_x^{\otimes k}]
 &=\frac1{L_D}\sum_x\pi(x)
 \sum_{\widehat y:\,y(x)\in\mathcal L_D(\widehat y)}
 \Tr[M_{\widehat y}\rho_x^{\otimes k}]\notag\\
 &=\frac1{L_D}\Pr_{\mathsf M}(\mathsf A_D).
 \label{eq:generic-bayes-list-success}
\end{align}
Optimization gives Eq.~\eqref{eq:generic-bayes-list-event-bound}.  The argument
is performed on the input prior and does not require the induced prior on
$\mathcal Y$ to be uniform or the target map to be injective.
\end{proof}

\subsubsection{Basis-free compatible neighborhoods and their size}
\label{subsubsec:generic-original-compatible-neighborhoods}

Let
\begin{equation}
 \mathcal Y_{W,\rm gen}:=
 \{y_W(x):x\in\mathsf{Stab}_{n,p}^{\mathrm{gen}}(W)\}
 \label{eq:generic-original-target-set}
\end{equation}
be the realizable generic exact-target set.  For
$\widehat y\in\mathsf{Out}_{W,m}$ define
\begin{equation}
 \mathcal N_{D,W}^{\mathrm{gen}}(\widehat y)
 :=\{y\in\mathcal Y_{W,\rm gen}:
 d_W(y,\widehat y)\le D,
 \ y\text{ and }\widehat y\text{ are compatible}\},
 \label{eq:generic-original-compatible-list}
\end{equation}
and put
\begin{equation}
 L_{D,W}^{\mathrm{gen}}(n,m;p)
 :=\max_{\widehat y\in\mathsf{Out}_{W,m}}
 |\mathcal N_{D,W}^{\mathrm{gen}}(\widehat y)|.
 \label{eq:generic-original-list-size}
\end{equation}
Unlike the refined neighborhoods, these lists need not be translation
invariant and their sizes need not be independent of the center.

\begin{lemma}[Uniform basis-free list bound]
\label{lem:generic-original-list-count}
For every $0\le D\le m$,
\begin{align}
 L_{D,W}^{\mathrm{gen}}(n,m;p)
 &\le\sum_{d=0}^{D}
 \genfrac{[}{]}{0pt}{}{m}{d}_p
 \genfrac{[}{]}{0pt}{}{n}{d}_p p^{d^2+d}
 \label{eq:generic-original-list-exact-upper}\\
 &\le (\kappa_p^{\mathrm{GL}})^{-2}(D+1)
 p^{D(n+m+1)}.
 \label{eq:generic-original-list-coarse-upper}
\end{align}
The bound is uniform in $W$ and in the center report.
\end{lemma}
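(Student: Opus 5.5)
Fix the center $\widehat y=(\widehat K,\widehat\eta)\in\mathsf{Out}_{W,m}$. We will bound the number of generic targets $y=(K,\eta)$ with $d_W(y,\widehat y)=d$ that are compatible with $\widehat y$, and then sum over $d\le D$.

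The first step is to choose the intersection. Put $I:=K\cap\widehat K$. This is an $(m-d)$-dimensional subspace of the $m$-dimensional space $\widehat K$. The number of choices is $\genfrac{[}{]}{0pt}{}{m}{m-d}_p=\genfrac{[}{]}{0pt}{}{m}{d}_p$.

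The second step is to choose $K$ once $I$ is fixed. Every generic target satisfies $K\subseteq W^\perp$, $K$ is isotropic of dimension $m$, and $K\cap W=\{0\}$. Hence $K\subseteq I^\perp\cap W^\perp$, and $K$ is a complement-type extension of $I$ by $d$ further dimensions. We pass to the quotient $V:=(I+W)^\perp/(I+W)$. Since $I\subseteq W^\perp$ is isotropic and $I\cap W=\{0\}$, the space $I+W$ is isotropic of dimension $n-d$, so $V$ is symplectic of dimension $2d$.

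The image $\overline K$ of $K$ in $V$ is isotropic and has dimension $d$, so it is a Lagrangian of $V$. The map $K\mapsto\overline K$ is not injective. To recover $K$ from $\overline K$, we choose a lift of a basis of $\overline K$ modulo $I+W$, subject to containing $I$. This gives at most $p^{d(n-d)}$ choices, since each of the $d$ new basis vectors is determined modulo $I+W$, and then modulo $I$ by dividing out the basis change.

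A cruder count that avoids the cocycle bookkeeping is the following. The number of Lagrangians of $V$ is $\prod_{j=1}^d(p^j+1)\le(\kappa_p^{\mathrm{GL}})^{-1}p^{d(d+1)/2}$. The lifts of a given $\overline K$ to $K\supseteq I$ number $p^{d(n-m)}$, because the lifts are parametrized by $\operatorname{Hom}(\overline K,W)$. Multiplying these gives at most $\genfrac{[}{]}{0pt}{}{n}{d}_p p^{d^2}$ up to a constant, using $\genfrac{[}{]}{0pt}{}{n}{d}_p\ge p^{d(n-d)}$ and $d(n-m)+d(d+1)/2\le d(n-d)+d^2$ when $d\le m$. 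This is the asserted factor $\genfrac{[}{]}{0pt}{}{n}{d}_p p^{d^2}$ in Eq.~\eqref{eq:generic-original-list-exact-upper}. I will verify the exponent comparison carefully. If it fails, I will instead bound the number of $d$-dimensional extensions inside the $(n+d)$-dimensional space $I^\perp\cap W^\perp$ modulo $I$ directly.

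The third step is to choose the character $\eta$. Compatibility fixes $\eta|_I=\widehat\eta|_I$. As in the torsor argument of Lemma~\ref{lem:unrestricted-stabilizer-overlaps}, the cocycle-compatible extensions to $K$ form a torsor under the kernel of $K^*\to I^*$. That kernel has dimension $d$, giving exactly $p^d$ choices. The argument is intrinsic, so it also covers $p=2$.

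Multiplying the three counts and summing over $d$ gives Eq.~\eqref{eq:generic-original-list-exact-upper}. For Eq.~\eqref{eq:generic-original-list-coarse-upper}, we use the bounds $\genfrac{[}{]}{0pt}{}{m}{d}_p\le(\kappa_p^{\mathrm{GL}})^{-1}p^{d(m-d)}$ and $\genfrac{[}{]}{0pt}{}{n}{d}_p\le(\kappa_p^{\mathrm{GL}})^{-1}p^{d(n-d)}$. The exponent of the $d$-th term is then $d(m+n+1)-d^2\le D(n+m+1)$, and there are $D+1$ terms. Uniformity holds because nothing in the count depends on $\widehat y$ beyond $\dim\widehat K=m$, and all counts hold for every $W$.

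The main obstacle is the second step: counting the isotropic extensions $K\supseteq I$ inside $W^\perp$ with $K\cap W=\{0\}$ sharply enough to land within $\genfrac{[}{]}{0pt}{}{n}{d}_p p^{d^2}$. Any polynomial-exponent slack of size $O(dn)$ is harmless for the later $D=\lfloor\sqrt n\rfloor$ application, so if the sharp form is delicate, I will settle for the coarse bound directly.
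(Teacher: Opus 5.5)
Your step 2 does not prove \eqref{eq:generic-original-list-exact-upper} as displayed. The symplectic-quotient count itself is sound. Every target $K\supseteq I$ maps to a Lagrangian $\overline K$ of the $2d$-dimensional space $V=(I+W)^\perp/(I+W)$. For fixed $\overline K$, the admissible $K$ are the complements of $W$ that contain $I$ inside the preimage Lagrangian, and there are exactly $p^{d(n-m)}$ of them; your intermediate figure ``at most $p^{d(n-d)}$ choices'' should be dropped.

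The problem is the constant. The resulting bound $\prod_{j=1}^d(p^j+1)\,p^{d(n-m)}$ is only $\le(\kappa_p^{\mathrm{GL}})^{-1}\genfrac{[}{]}{0pt}{}{n}{d}_p p^{d^2}$, and ``up to a constant'' is not the stated inequality. The loss is genuine. Take $n=m=d=1$, so $W=\{0\}$ and $I=\{0\}$. Your count is $p+1$, against $\genfrac{[}{]}{0pt}{}{1}{1}_p\,p=p$. The extra term appears because lifting \emph{all} Lagrangians of $V$ also produces $\widehat K$ itself, and in general every $K$ with $K\cap\widehat K\supsetneq I$.

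In that example your bound for $L_{1,W}^{\mathrm{gen}}$ becomes $1+p+p^2$, whereas the lemma asserts $1+p^2$, which holds with equality there. Your fallback, counting $d$-subspaces of $(I^\perp\cap W^\perp)/I$, overcounts in the same way: it gives $\genfrac{[}{]}{0pt}{}{2}{1}_p=p+1$ in that example. Moreover, deriving \eqref{eq:generic-original-list-coarse-upper} from a first inequality that already carries a constant would produce $(\kappa_p^{\mathrm{GL}})^{-3}$ rather than $(\kappa_p^{\mathrm{GL}})^{-2}$.

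The missing observation is that isotropy is not needed for the first inequality at all. The paper drops isotropy, transversality and realizability. It counts every $m$-subspace $K$ of the $(n+m)$-dimensional space $W^\perp$ with $K\cap\widehat K=I$:
\begin{itemize}
\item the image $U$ of $K$ in the $n$-dimensional quotient $W^\perp/\widehat K$ is $d$-dimensional, giving $\genfrac{[}{]}{0pt}{}{n}{d}_p$ choices;
\item $K/I$ is the graph of a linear map $U\to\widehat K/I$, giving $p^{d^2}$ choices.
\end{itemize}
This Schubert-cell count is exact, so \eqref{eq:generic-original-list-exact-upper} follows with no constant once your character-torsor factor $p^d$ (identical to the paper's) is included.

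Your finer count is still useful. Applied directly with the Gaussian-binomial bound, it bounds the $d$-th shell by $(\kappa_p^{\mathrm{GL}})^{-2}p^{dn-d^2/2+3d/2}\le(\kappa_p^{\mathrm{GL}})^{-2}p^{D(n+m+1)}$. So \eqref{eq:generic-original-list-coarse-upper}, and the later $D_n=\lfloor\sqrt n\rfloor$ application, survive on your route. It simply cannot deliver the first displayed inequality.
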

\begin{proof}
The ambient space $W^\perp$ has dimension $n+m$.  Fix the $m$-subspace
$\widehat K$.  The number of $m$-dimensional subspaces $K\subseteq W^\perp$
with $\dim(K\cap\widehat K)=m-d$ is
\begin{equation}
 \genfrac{[}{]}{0pt}{}{m}{d}_p
 \genfrac{[}{]}{0pt}{}{n}{d}_p p^{d^2}.
 \label{eq:generic-original-grassmann-shell}
\end{equation}
Indeed, first choose $I=K\cap\widehat K$, an $(m-d)$-subspace of
$\widehat K$, in $\genfrac{[}{]}{0pt}{}{m}{d}_p$ ways.  Next choose the
$d$-dimensional image $U$ of $K$ in $W^\perp/\widehat K$, whose dimension is
$n$, in $\genfrac{[}{]}{0pt}{}{n}{d}_p$ ways.  For fixed $(I,U)$, the possible
subspaces $K$ are the graphs of linear maps
$U\to\widehat K/I$, giving $p^{d^2}$ choices.  This proves
Eq.~\eqref{eq:generic-original-grassmann-shell}.  Once the character is fixed
on the $(m-d)$-dimensional intersection, its extensions to the $m$-dimensional
commuting group $K$ form a torsor over the character group of $K/I$ and hence
number exactly $p^d$.  Counting all such subspaces and extensions only enlarges
the desired list because it drops isotropy, transversality to $W$, and
realizability as a generic target.
Summing over $d\le D$ proves Eq.~\eqref{eq:generic-original-list-exact-upper}.
The standard Gaussian-binomial estimate
\begin{equation}
 \genfrac{[}{]}{0pt}{}{r}{d}_p
 \le(\kappa_p^{\mathrm{GL}})^{-1}p^{d(r-d)}
 \label{eq:generic-original-gaussian-bound}
\end{equation}
gives a summand at most
$(\kappa_p^{\mathrm{GL}})^{-2}p^{d(n+m+1)-d^2}$, and the coarser bound in
Eq.~\eqref{eq:generic-original-list-coarse-upper} follows.
\end{proof}

\subsubsection{Generic original graded-to-exact comparison}
\label{subsubsec:generic-original-graded-to-exact}

\begin{theorem}[Generic original bounded-rank graded-to-exact comparison]
\label{thm:generic-original-bounded-rank-graded-to-exact}
For every prime $p$, all finite $n,m,k$, every fixed admissible $W$, and every
$0\le D\le m$,
\begin{equation}
 S_{W,\rm ver,gen,av}^{(k),*}(n,m;p)
 \le L_{D,W}^{\mathrm{gen}}(n,m;p)
 P_{W,\rm gen,av}^{(k),*}(n,m;p)+p^{-(D+1)}.
 \label{eq:generic-original-graded-to-exact}
\end{equation}
Consequently,
\begin{align}
 S_{W,\rm ver,av}^{(k),*}(n,m;p)
 &\le\delta_{n,m,p}+(1-\delta_{n,m,p})
 \bigl[L_{D,W}^{\mathrm{gen}}(n,m;p)
 P_{W,\rm gen,av}^{(k),*}(n,m;p)+p^{-(D+1)}\bigr].
 \label{eq:full-original-finite-graded-bound}
\end{align}
\end{theorem}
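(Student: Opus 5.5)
The plan mirrors the proof of Theorem~\ref{thm:rank-truncated-graded-to-exact}, carried out on the basis-free alphabet. We fix an arbitrary learner POVM $\mathsf M=\{M_{\widehat y}:\widehat y\in\mathsf{Out}_{W,m}\}$ and draw $X$ uniformly from $\mathsf{Stab}_{n,p}^{\mathrm{gen}}(W)$. For each report we use the list $\mathcal N_{D,W}^{\mathrm{gen}}(\widehat y)$ and consider the event
\[
 \mathsf A_D=\{y_W(X)\in\mathcal N_{D,W}^{\mathrm{gen}}(\widehat Y)\}.
\]
By Lemma~\ref{lem:generic-original-score-spectrum}, the verification payoff $\Tr[\Pi_{\widehat K,\widehat\eta}\rho_x]$ takes one of two forms:
\begin{itemize}
 \item it is $0$ when $y_W(x)$ and $\widehat y$ are incompatible;
 \item it is $p^{-d_W}$ when they are compatible.
\end{itemize}
Consequently, off $\mathsf A_D$ a report is either incompatible, scoring zero, or compatible with $d_W\ge D+1$, scoring at most $p^{-(D+1)}$. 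On $\mathsf A_D$ the score is at most one. Splitting the generic Bayes score \eqref{eq:section4-generic-original-Bayes-score} over $\mathsf A_D$ and its complement therefore gives
\[
 S^{(k)}_{W,\rm ver,gen,av}(\mathsf M)\le \Pr_{\mathsf M}(\mathsf A_D)+p^{-(D+1)}.
\]

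Next we apply Lemma~\ref{lem:generic-bayes-bounded-list-conversion}. The inputs are:
\begin{itemize}
 \item the prior uniform on generic states;
 \item the single-valued target $x\mapsto y_W(x)\in\mathcal Y_{W,\rm gen}$;
 \item the report alphabet $\mathsf{Out}_{W,m}$;
 \item the lists $\mathcal N_{D,W}^{\mathrm{gen}}(\widehat y)$, whose sizes are bounded by $L_{D,W}^{\mathrm{gen}}$ by definition \eqref{eq:generic-original-list-size}. Lemma~\ref{lem:generic-original-list-count} ensures this bound is finite and uniform in the report.
\end{itemize}
This gives
\[
 \Pr_{\mathsf M}(\mathsf A_D)\le L_{D,W}^{\mathrm{gen}}\,P^*,
\]
where $P^*$ is the exact Bayes optimum for the target $y_W$ over POVMs with outcomes in $\mathcal Y_{W,\rm gen}$.

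The one point needing care is identifying $P^*$ with $P_{W,\rm gen,av}^{(k),*}$, which is defined over POVMs with outcomes in $\mathsf{Out}_{W,m}$.
\begin{itemize}
 \item $\mathcal Y_{W,\rm gen}\subseteq\mathsf{Out}_{W,m}$, so any decoder on the target alphabet is a legal basis-free decoder with the same exact success. Hence $P^*\le P_{W,\rm gen,av}^{(k),*}$, which is the direction we need.
 \item For generic inputs the valid report is unique, so $s_{\mathsf M}$ counts exactly the event $\widehat y=y_W(x)$, matching the exact-target success.
\end{itemize}
Taking the supremum over $\mathsf M$ yields \eqref{eq:generic-original-graded-to-exact}. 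The main obstacle is ensuring the list conversion does not require a uniform prior on targets or equal alphabets; Lemma~\ref{lem:generic-bayes-bounded-list-conversion} was stated precisely to avoid that, so we expect no difficulty here.

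For \eqref{eq:full-original-finite-graded-bound}, fix any POVM $\mathsf M$ and split the uniform average over $\mathsf{Stab}_{n,p}$ into its exceptional part, of mass $\delta_{n,m,p}$, and its generic part, of mass $1-\delta_{n,m,p}$.
\begin{itemize}
 \item On the exceptional part, every statewise verification score $v_{\mathsf M}^{(k)}(x;W)$ is at most one, since it is an average of Born probabilities.
 \item On the generic part, the conditional average equals $S^{(k)}_{W,\rm ver,gen,av}(\mathsf M)$, which the first part bounds by the bracketed expression uniformly in $\mathsf M$.
\end{itemize}
Taking the supremum over $\mathsf M$ completes the proof.
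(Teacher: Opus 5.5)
Your proposal is correct and follows the paper's proof essentially step for step: the score split over the compatible rank-$D$ list event via Lemma~\ref{lem:generic-original-score-spectrum}, the list-to-exact conversion via Lemma~\ref{lem:generic-bayes-bounded-list-conversion}, and the generic/exceptional split of the full prior. You also check explicitly that decoders on $\mathcal Y_{W,\rm gen}\subseteq\mathsf{Out}_{W,m}$ are legal basis-free decoders with the same exact success, so that the converted optimum is at most $P_{W,\rm gen,av}^{(k),*}$; the paper leaves this detail implicit, and you handle it correctly.
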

\begin{proof}
Apply Lemma~\ref{lem:generic-bayes-bounded-list-conversion} to the lists in
Eq.~\eqref{eq:generic-original-compatible-list}.  The probability of the
event that the true target lies in the reported compatible rank-$D$ list is at
most
\begin{equation}
 L_{D,W}^{\mathrm{gen}}(n,m;p)
 P_{W,\rm gen,av}^{(k),*}(n,m;p).
 \label{eq:generic-original-low-rank-event-bound}
\end{equation}
On this event the score is at most one.  Outside it, an incompatible report
has score zero, while a compatible report has distance at least $D+1$ and
therefore score at most $p^{-(D+1)}$ by
Lemma~\ref{lem:generic-original-score-spectrum}.  This proves
Eq.~\eqref{eq:generic-original-graded-to-exact}.  Splitting the full uniform
prior into generic and exceptional parts and bounding the exceptional score by
one proves Eq.~\eqref{eq:full-original-finite-graded-bound}.
\end{proof}

\subsubsection{Asymptotic generic original converse}
\label{subsubsec:generic-original-asymptotic-converse}
\label{subsubsec:proof-generic-fixed-W-Bayes-verification-converse}
We now apply the finite basis-free comparison with  
$D_n=\lfloor\sqrt n\rfloor$.  Although the cutoff is the same as in the refined  
problem, the list bound and exact input are those of the generic original task,  
and the estimates must remain uniform over the admissible sequence $W_n$.  
Since $m_n=\beta n+o(n)$ with $\beta>0$, we  
have $D_n\le m_n$ for all sufficiently large $n$.  By
Lemma~\ref{lem:generic-original-list-count},
\begin{equation}
 \log_p L_{D_n,W_n}^{\mathrm{gen}}(n,m_n;p)
 \le D_n(n+m_n+1)-D_n^2+O(\log n)
 =O(n^{3/2})=o(n^2),
 \label{eq:generic-original-list-subquadratic}
\end{equation}
uniformly in $W_n$.  The exact comparison in
Eq.~\eqref{eq:generic-original-exact-converse} is also uniform in $W_n$:
its right-hand side is the standardized refined optimum and therefore does
not depend on the particular admissible subspace.  Hence
Eqs.~\eqref{eq:generic-original-exact-converse} and
\eqref{eq:generic-original-graded-to-exact} give
\begin{align}
 &S_{W_n,\rm ver,gen,av}^{(k_n),*}(n,m_n;p)\notag\\
 &\quad\le L_{D_n,W_n}^{\mathrm{gen}}(n,m_n;p)
       p^{-J(\alpha,\beta)n^2+o(n^2)}
       +p^{-(D_n+1)}\notag\\
 &\quad\le p^{-J(\alpha,\beta)n^2+o(n^2)}
       +p^{-\sqrt n+O(1)}\longrightarrow0.
 \label{eq:generic-original-stretched-exponential}
\end{align}
Both remainder estimates are uniform in the admissible sequence $(W_n)$.  This is precisely
Eqs.~\eqref{eq:section4-generic-original-verification-converse} and
\eqref{eq:section4-generic-original-verification-quantitative}, and completes
the proof of
Theorem~\ref{thm:generic-fixed-W-Bayes-verification-converse}.

This completes the two verification routes: the refined coordinate converse  
and the generic original basis-free converse.  We next apply the resulting  
threshold to fixed-diagonal qubit graph-basis submodels.  
\section{Fixed-diagonal qubit graph-basis submodels}
\label{sec:fixed-diagonal-graph-basis-submodels}
For qubits, fixing the diagonal of the symmetric-matrix coordinate produces a
natural family of graph-basis submodels.  A known product phase gate removes
the prescribed diagonal, so all such submodels are unitarily equivalent to
the zero-diagonal graph-basis family.  The full-model converse transfers to
this restricted alphabet with only a $2^{O(n)}$ multiplicative loss, while the unrestricted
complete-state decoder gives achievability.  The same compatible-list argument
then transfers the converse to the graded criterion.  Thus the restriction
changes the coordinate alphabet but not the first-order copy threshold.

\subsection{Graph-basis realization of a fixed diagonal}
\label{subsec:fixed-diagonal-graph-basis-submodels}

Write
\begin{equation}
 A=\Gamma+\operatorname{diag}(d),
 \quad \operatorname{diag}(\Gamma)=0,
 \label{eq:fixed-diagonal-decomposition}
\end{equation}
For a bit $a\in\F_2$, let $\widetilde a\in\{0,1\}\subset\mathbb Z$
denote its integer lift.  Define the product phase gate by
\begin{equation}
 S(d)\ket{x}
 :=i^{\sum_{j=1}^n\widetilde d_j\widetilde x_j}\ket{x},
 \qquad
 S(d)=\bigotimes_{j=1}^n\operatorname{diag}(1,i^{\widetilde d_j}),
 \label{eq:fixed-diagonal-phase-gate}
\end{equation}
where the exponent is read modulo four.  In particular, the exponent in
Eq.~\eqref{eq:fixed-diagonal-phase-gate} is the integer dot product of the
lifted bits, not the binary dot product in $\F_2$.
Then
\begin{equation}
 \ket{\psi_{A,b}^{(2)}}=S(d)Z(b)\ket{G_\Gamma},
 \label{eq:fixed-diagonal-graph-basis-decomposition}
\end{equation}
where $\ket{G_\Gamma}:=\ket{\psi_{\Gamma,0}^{(2)}}$ is the ordinary graph
state associated with the zero-diagonal adjacency matrix $\Gamma$
\cite{HeinEtAl2006GraphStates}.  Thus
$d=0$ gives the graph-basis family
$\{Z(b)\ket{G_\Gamma}:\Gamma=\Gamma^T,\operatorname{diag}(\Gamma)=0\}$,
with both the graph and the basis element unknown.  If $d$ is fixed and
publicly known, conjugation by $S(d)^\dagger$ reduces the model to this
zero-diagonal family.
The same known unitary transports the admissible report projectors.  For every
report in the fixed-diagonal slice,
\begin{equation}
 S(d)^\dagger\Pi_{C,\gamma}^{(m,2)}S(d)
 =\Pi_{C-\operatorname{diag}(d)J_m,\gamma}^{(m,2)}.
 \label{eq:fixed-diagonal-report-projector-conjugation}
\end{equation}
Indeed, conjugating each tensor factor gives
$S(d)^\dagger Z(Cx)X(J_mx)S(d)$ with the $Z$ coordinate shifted by
$\operatorname{diag}(d)J_mx$; the change in the quadratic section phase is
exactly the corresponding diagonal term.  Hence no additional character shift
appears in Eq.~\eqref{eq:fixed-diagonal-report-projector-conjugation}.  Thus
the unitary equivalence acts simultaneously on the state family and on the
exact and graded report tests, rather than only on the state vectors.

For fixed $n,m,k$ and a publicly known diagonal $d\in\F_2^n$, let
$P_{\mathrm{id},d}^{(k),*}(n,m)$ and $S_{\mathrm{ver},d}^{(k),*}(n,m)$ denote the optimal
exact success probability and graded score when $(\Gamma,b)$ is uniform,
$A=\Gamma+\operatorname{diag}(d)$, and both the true and announced labels are
restricted to this fixed-diagonal alphabet.  In the enhanced partial-label
coordinates, define this restricted alphabet explicitly by
\begin{equation}
 Y_d:=\{y\in Y_{n,m,2}:\operatorname{diag}\mathsf B(y)=J_m^Td\},
 \label{eq:fixed-diagonal-partial-label-slice}
\end{equation}
where $\mathsf B(y)$ is the visible symmetric block represented by the quotient
label $y$.  Every value of the remaining visible coordinates has the same
number of complete-label extensions with diagonal $d$.  Hence uniform
$(\Gamma,b)$ induces the uniform prior on $Y_d$.  The graded conversion below
in fact uses only the prior-independent Bayes form of
Lemma~\ref{lem:bounded-list-povm-conversion}; the constant-fiber observation is
included to identify the restricted model precisely.  The diagonal condition
is a promise on the symmetric-matrix coordinates obtained after standardizing
the announced frame; it is not a frame-independent property of the original
stabilizer state.
Fixing the diagonal removes exactly $n$ bits from the complete coordinate label and $m$ bits from the partial label.  Thus a fixed-diagonal family has $2^{\binom n2+n}=2^{n(n+1)/2}$ complete coordinate labels, while its partial-label cardinality is $2^{-m}\lvert Y_{n,m,2}\rvert$.  These losses are only linear in $n$ and therefore do not change the quadratic label-rate exponent $(\beta-\beta^2/2)n^2+o(n^2)$ when $m=\beta n+o(n)$.

\subsection{Copy-rate threshold}
\label{subsec:fixed-diagonal-copy-rate-threshold}

The fixed-diagonal rate statement combines the two main mechanisms already
established.  For the converse, a decoder for the restricted alphabet is
embedded into one for the full binary symmetric-matrix model by guessing the
missing diagonal.  The resulting factor is only exponential in $n$ and is
therefore negligible relative to the quadratic converse exponent.  For the
direct part, complete stabilizer identification followed by deterministic
postprocessing applies without modification.

\begin{theorem}[Copy-rate threshold for fixed-diagonal graph-basis submodels]
\label{thm:fixed-diagonal-graph-basis-threshold}
Let $m_n=\beta n+o(n)$ with $0<\beta\le1$.  Uniformly over every
sequence of publicly known diagonals $d_n\in\F_2^n$, the following two
sequence statements hold.  For the converse, let
$k_n=\alpha n+o(n)$ with $0\le\alpha<1$.  Then both exact and graded
performance vanish, and the exact success obeys the same quadratic converse
exponent as the full model:
\begin{align}
 P_{\mathrm{id},d_n}^{(k_n),*}(n,m_n)
 &\le 2^{-J(\alpha,\beta)n^2+o(n^2)}\longrightarrow0,
 \label{eq:fixed-diagonal-exact-converse}\\
 S_{\mathrm{ver},d_n}^{(k_n),*}(n,m_n)&\longrightarrow0.
 \label{eq:fixed-diagonal-graded-converse}
\end{align}
For the achievability direction, let instead $k_n-n\to+\infty$.  Then
$P_{\mathrm{id},d_n}^{(k_n),*}(n,m_n)$ and
$S_{\mathrm{ver},d_n}^{(k_n),*}(n,m_n)$ both tend to one.  Hence every fixed-diagonal submodel, including the
zero-diagonal graph-basis submodel, has the same proved first-order threshold:
vanishing performance at every normalized rate below one and successful
recovery under a diverging supercritical overhead.
\end{theorem}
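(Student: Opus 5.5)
The plan has three parts: the exact converse, the graded converse, and achievability. For the exact converse I would embed any fixed-diagonal decoder into a decoder for the full binary model of Section~\ref{sec:integrated-qubit-foundations}, paying only a factor $2^{n}$. By Eq.~\eqref{eq:fixed-diagonal-graph-basis-decomposition} and Eq.~\eqref{eq:fixed-diagonal-report-projector-conjugation}, conjugation by the public gate $S(d)$ identifies all fixed-diagonal submodels with the zero-diagonal one. It therefore suffices to compare with the full uniform model of Corollary~\ref{cor:abstract-theorem-all-prime-application} with $p=2$.

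The full model's uniform prior is the uniform mixture over $d\in\F_2^n$ of the fixed-diagonal priors. This holds because each diagonal slice has the same size $2^{n(n+1)/2}$ and induces the uniform prior on its slice $Y_d$. Given optimal decoders $\mathsf M^{(d)}$ for each slice, the full-model learner guesses $d$ uniformly at random and applies $\mathsf M^{(d)}$. Its success is then at least $2^{-n}\cdot 2^{-n}\sum_d P_{\mathrm{id},d}^{(k),*}$. By the unitary equivalence this equals $2^{-n}P_{\mathrm{id},d_n}^{(k),*}$ for every $d_n$. Hence
\begin{equation*}
P_{\mathrm{id},d_n}^{(k),*}(n,m)\le 2^{n}P_{\mathrm{id}}^{(k),*}(n,m;2).
\end{equation*}
A sharper factor $2^m$ would come from guessing only the visible diagonal, but it is not needed. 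Theorem~\ref{thm:converse-common-asymptotic-reduction} then gives Eq.~\eqref{eq:fixed-diagonal-exact-converse}, since $2^{n}=2^{o(n^2)}$. One point to check is that a guessed wrong $d$ cannot produce a false ``success''. This is fine, because success is only counted on the true label, so the bound is a lower bound on the full-model success. The bound is uniform in $d_n$ because of the exact unitary equivalence.

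For the graded converse I would repeat Theorem~\ref{thm:rank-truncated-graded-to-exact} inside the restricted alphabet $Y_d$. Define lists $\mathcal N_D^{(2)}(\widehat y)\cap Y_d$. These have size at most $L_D(n,m;2)$, but their sizes need not be constant, so I would use Lemma~\ref{lem:bounded-list-povm-conversion} (the prior-independent version) rather than the translation-invariant one. The score spectrum Eq.~\eqref{eq:integrated-qubit-score-spectrum} still separates compatible reports of rank at most $D$ from a tail of at most $2^{-(D+1)}$. This yields
\begin{equation*}
S_{\mathrm{ver},d}^{(k),*}\le L_D\,P_{\mathrm{id},d}^{(k),*}+2^{-(D+1)}.
\end{equation*}
Choosing $D_n=\lfloor\sqrt n\rfloor$ as in Corollary~\ref{cor:graded-score-to-zero} then gives Eq.~\eqref{eq:fixed-diagonal-graded-converse}, using the previous step.

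For achievability I would take the complete-identification PGM of Theorem~\ref{thm:unrestricted-stabilizer-direct}. By Corollary~\ref{cor:statewise-fixed-W-achievability} its success probability is the same for every stabilizer state, in particular for every state of the slice. From the recovered state and the public standard frame, I compute $(A,b)$ deterministically and output its quotient label. That label lies in $Y_d$ and has score one. This gives exact success and graded score tending to one whenever $k_n-n\to+\infty$, uniformly in $d_n$.

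The main obstacle I expect is bookkeeping rather than analysis. The step needing the most care is confirming that the slice prior is exactly uniform on $Y_d$ and that the mixture identity holds in the enhanced $\mathbb Z_4$ coordinates, where the diagonal is $\alpha_j \bmod 2$. I would verify it by counting lifts coordinatewise, as in Lemma~\ref{lem:uniform-partial-labels}.
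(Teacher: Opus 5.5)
Your proposal is correct and follows the paper's proof essentially step for step: a $2^{n}$ diagonal-guessing embedding into the full binary model (your mixture of slice-optimal decoders is the paper's conjugated zero-diagonal decoder in another form), then the restricted lists $\mathcal N_D^{(2)}(\widehat y)\cap Y_d$ with the uniform-upper-bound conversion of Lemma~\ref{lem:bounded-list-povm-conversion} and $D_n=\lfloor\sqrt n\rfloor$, and finally the statewise-constant unrestricted PGM followed by deterministic postprocessing. I would delete the aside claiming a factor $2^m$ from guessing only the visible diagonal: the slice decoder is also tuned to the lower diagonal, since it fixes each lower $\alpha_j$ to a coset of $\{0,2\}$ in $\mathbb Z_4$, so that shortcut is not a valid reduction as stated, though nothing in your argument depends on it.
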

\begin{proof}
All fixed diagonals are unitarily equivalent by
Eq.~\eqref{eq:fixed-diagonal-graph-basis-decomposition}, so it suffices to
prove the converse for $d=0$.  Given any exact decoder for the zero-diagonal
submodel, construct a decoder for the full binary symmetric-matrix model by
choosing $\widehat d\in\F_2^n$ uniformly, applying
$S(\widehat d)^\dagger$ to every input copy, running the zero-diagonal decoder,
and restoring $\widehat d$ in the reported label.  When $\widehat d$ equals
the true diagonal, which occurs with probability $2^{-n}$, the conditional
success probability is exactly that of the submodel decoder.  Therefore
\begin{equation}
 P_{\mathrm{id},0}^{(k),*}(n,m)
 \le 2^n P_{\mathrm{id}}^{(k),*}(n,m;2).
 \label{eq:fixed-diagonal-decoder-comparison}
\end{equation}
Theorem~\ref{thm:refined-exact-converse} absorbs the factor $2^n$ into its
$o(n^2)$ term and proves Eq.~\eqref{eq:fixed-diagonal-exact-converse}.  This
argument uses finite PGM optimality only for the full symmetric-matrix model;
no finite-size PGM theorem for the restricted submodel is required.

For the graded task, let $Y_d$ be the fixed-diagonal partial-label alphabet and
put
\begin{equation}
 \mathcal N_{D,d}(\widehat y)
 :=\mathcal N_D^{(2)}(\widehat y)\cap Y_d.
 \label{eq:fixed-diagonal-restricted-neighborhood}
\end{equation}
Its cardinality is at most the full-model list bound $L_D(n,m;2)$, although
the restricted cardinality need not be independent of the center
$\widehat y$.  Apply Lemma~\ref{lem:bounded-list-povm-conversion} with
$\mathsf Y'=Y_d$, $\mathcal L(\widehat y)=\mathcal N_{D,d}(\widehat y)$, and
$L=L_D(n,m;2)$.  The resulting operators on $Y_d$ form a sub-POVM; adjoining
the positive failure operator, or merging it into an existing exact-label
outcome, gives a normalized exact decoder without decreasing its success.
Thus the conversion requires only the displayed uniform upper bound and not
equality of the restricted list sizes.  Repeating the low-rank/tail split of
Theorem~\ref{thm:rank-truncated-graded-to-exact} consequently gives
\begin{equation}
 S_{\mathrm{ver},d}^{(k),*}(n,m)
 \le L_D(n,m;2)P_{\mathrm{id},d}^{(k),*}(n,m)+2^{-(D+1)}.
 \label{eq:fixed-diagonal-graded-list-bound}
\end{equation}
With $D=\lfloor\sqrt n\rfloor$, the logarithm of the list factor is $o(n^2)$,
while the second term vanishes.  Equation
\eqref{eq:fixed-diagonal-exact-converse} therefore proves
Eq.~\eqref{eq:fixed-diagonal-graded-converse}.

For achievability, apply the covariant unrestricted PGM used in
Theorem~\ref{thm:unrestricted-stabilizer-direct}.  The constant diagonal of
$\sqrt G$ in Theorem~\ref{thm:transitive-pure-pgm} shows that this PGM has
the same conditional success probability for every pure stabilizer state,
not merely the stated uniform-prior average.  That common success probability
tends to one whenever $k=n+\omega(1)$.  Restricting the prior to any fixed-diagonal
slice therefore preserves the same success probability.  Classical
postprocessing gives the partial exact label, and an exact output has graded
score one.
\end{proof}

\begin{remark}[Scope of the submodel result]
The zero-diagonal parameter set is a natural subgroup of the enhanced binary
parameter group and can in principle be given its own finite character-sector
and PGM analysis.  Theorem~\ref{thm:fixed-diagonal-graph-basis-threshold} does
not claim such a finite formula.  Its point is that the quadratic converse for
the full symmetric-matrix model already implies the copy-rate converse for
every fixed-diagonal graph-basis submodel, and the restricted compatible-list
argument transfers the conclusion to the graded task.
\end{remark}
We have now matched the exact-learning threshold, transferred it to the
graded verification criterion, and specialized the result to the fixed-
diagonal qubit family.  We conclude by interpreting the absence of a
first-order copy-rate discount and recording the remaining regimes.

\section{Conclusions and outlook}  
\label{sec:discussion}  
We have determined the first-order learner-copy complexity of recovering the  
exact complementary stabilizer information associated with a prescribed  
commuting Pauli measurement.  Fix a prime $p$ and suppose that  
$m_n/n\to\beta\in(0,1]$.  At every copy rate strictly below one, the optimal  
exact-recovery probabilities and outcome-averaged verification scores vanish,  
whereas all four quantities tend to one when $k_n-n\to+\infty$.  Thus the  
report relation studied here has first-order normalized copy rate one whenever  
the requested complement contains a linear number of independent stabilizer  
directions.  

The proof distinguishes two report models.  The refined coordinate problem has  
a canonical reference-dual alphabet, while the original fixed-constraint  
problem permits the larger basis-free alphabet.  The original verification  
converse is proved directly for that larger alphabet rather than by identifying  
its finite optimum with the refined one.  

\subsection{Partial output without a copy-rate discount}  
The fixed-constraint formulation separates the requested physical information  
from the coordinate devices used in the converse.  The prescribed object is an  
isotropic Pauli subspace $W$ of dimension $n-m$.  For an input stabilizer state  
$x$, a valid report gives an $m$-dimensional complement $K$ of  
$L_x\cap W$ in $L_x\cap W^\perp$, together with the restricted character on  
$K$.  Once the outcome of the prescribed measurement is known, that outcome  
and the reported pair determine the corresponding conditional pure stabilizer  
state.  

The report is sufficient for this conditional-state reconstruction, but it  
does not contain every initial stabilizer that commutes with the measurement.  
In particular, it need not retain the directions in $L_x\cap W$ or their  
initial eigenvalues, and consequently it need not determine which measurement  
outcomes can occur.  

The result therefore concerns a genuinely reduced output whenever $m<n$:  
stabilizer directions not needed for the stipulated conditional-state report  
are not reconstructed.  Nevertheless, if $m=\beta n+o(n)$ with fixed  
$\beta>0$, this reduction does not lower the coefficient of the leading linear  
term in the learner-copy requirement.  The conclusion concerns this  
first-order rate only.  It does not exclude improvements in finite-size  
constants, bounded offsets from $k=n$, classical storage, postprocessing, or  
measurement implementation.  

The announced transverse frame and the symmetric-matrix chart are converse  
and coordinate devices.  They are not additional public resources in the  
original fixed-$W$ task.  

\subsection{Comparison with classical-shadow prediction}  
Classical-shadow and related prediction methods estimate prescribed  
observables or state properties under accuracy and confidence requirements  
that depend on the observable family and measurement ensemble  
\cite{HuangKuengPreskill2020ClassicalShadows,  
ChenYuZengFlammia2021RobustShadows,  
HadfieldEtAl2022LocallyBiasedShadows,  
BertoniEtAl2024ShallowShadows}.  Such a task need not identify the underlying  
state or distinguish states that are equivalent for the requested predictions.  
Even under a structural promise, the resulting sample complexity is therefore  
tied to the particular prediction problem and its required accuracy, rather  
than to output size alone.  

The present task has a different target.  The learner must exactly recover a  
complementary stabilizer subspace and its character, sufficient to identify  
each positive-probability conditional state after the measurement outcome is  
revealed.  The theorem shows that reducing the output in this specific way  
does not reduce the first-order copy rate when the complement dimension is a  
fixed positive fraction of $n$.  It does not by itself determine how much of  
the advantage of any particular prediction protocol should be attributed to  
its target, approximation criterion, or measurement resources.  

\subsection{Scope and open problems}  
The proved threshold concerns arbitrary collective learner measurements, fixed  
prime local dimension, and targets satisfying  
$m_n/n\to\beta\in(0,1]$.  
These questions are beyond the scope of the present paper. Results on the critical window $k=n+O(1)$, sublinear target dimensions, optimal success exponents, and the cost of efficient or local measurement implementations are currently being prepared and will be presented in a forthcoming paper \cite{Pre}.

The worst-case and uniform-average statements enter the converse through  
slightly different routes.  For exact recovery, the worst-case bound follows  
by restricting to the generic branch, while the full uniform-average bound  
also includes the exceptional contribution $O(p^{-m})$.  For the original  
verification task, the argument first controls the complete basis-free report  
alphabet under the generic prior, restores the exceptional part of the full  
prior, and then bounds the worst-case optimum by the uniform-average optimum.  
The exceptional term limits the quadratic exponent established by the present  
full-prior upper bound; it does not show that a sharper unconditional exponent  
is impossible.  Likewise, $J(\alpha,\beta)$ is an explicit converse exponent,  
not a claimed optimal exponent.  

First, the bounded-offset window $k=n+O(1)$ may distinguish complete and  
partial identification even though their first-order rates agree.  The present  
direct bound requires a diverging positive overhead and does not determine the  
success probability at any fixed offset.  

Second, the regime $m=o(n)$ is not obtained by setting $\beta=0$ in the  
fixed-positive-$\beta$ theorem.  Both the quadratic exact converse and the  
subquadratic compatible-list balance use a target dimension with a positive  
linear fraction, so a separate analysis is needed to decide whether a genuine  
copy-rate discount occurs for sublinear targets.  

Third, the tightness of $J(\alpha,\beta)$ remains open.  Equality points in the  
limiting relaxed variational problem identify where the present upper-bound  
calculation is sharp, but they do not provide a matching operational lower  
bound for the success probability.  

Finally, the information-theoretic model permits arbitrary collective  
measurements.  The converse therefore also applies to restricted measurement  
classes, but efficient or local achievability requires separate implementation  
and restricted-model analyses.  

The central conclusion is consequently report-specific and operational.  
Discarding initial stabilizer information that is unnecessary for the  
stipulated conditional-state report does not reduce the first-order learner-copy  
rate when the requested complement has dimension proportional to $n$.  
Questions about smaller targets, critical offsets, exponent optimality, and  
measurement implementation remain separate.  
\section*{Acknowledgements}
Microsoft 365 Copilot and Codex, using GPT-5.4/5.6 for thinking and GPT-5.6 Sol/6 Astra for reviewing and editing, made substantive contributions to the refined formulation of the partial stabilizer identification problem. The authors formulated both the original basis-free problem and the refined frame-assisted problem and determined the conceptual relationship between them. AI was used to carry out and organize calculations for the refined problem, to develop derivations and proof arguments, and to assist in converting these calculations into a complete mathematical presentation. The arguments transferring the refined results to the original problem were based principally on ideas supplied by the authors. 
All AI-assisted material was critically reviewed, verified, and revised by the authors, who take full responsibility for the accuracy and integrity of the manuscript.

\section*{Funding}
M.H. was supported in part by the General R\&D Projects of the 1+1+1
CUHK--CUHK(SZ)--GDST Joint Collaboration Fund (Grant No.~GRDP2025-022) and
the Guangdong Provincial Quantum Science Strategic Initiative
(Grant No.~GDZX2505003).
Y.L. was supported in part by the National Natural Science Foundation of China under Grant No.~12475023, Dushi Program, 
and a startup funding from Yau Mathematical Sciences Center.

\appendix

\section*{Appendices}

\addcontentsline{toc}{section}{Appendices}
The main text contains the operational reductions, the common limiting
exponent, the unrestricted direct theorem, and the graded-to-exact argument.
The appendices are reserved for the longer characteristic-specific algebra
needed to justify the finite estimates used there.  The first two appendices
prove the odd-prime and characteristic-two structural estimates invoked by the
converse, including the finite completion exponents, exact clipping step, and
endpoint-uniform passage to the limiting bound.  The remaining appendices
give the diagnostic support-dimension comparison, odd-prime overlap and
rank-enumeration identities, and an auxiliary binary Fourier-energy
consistency check.  Each appendix opens
by identifying the main-text result it supports and the reason the details are
deferred.

\section{Proof of the odd-prime finite structural estimate}
\label{app:odd-prime-finite-structural-proof}
This appendix proves Theorem~\ref{thm:converse-odd-prime-finite-structural-estimate}, which supplies the finite odd-prime ambiguity and lower-block probability bounds used in the converse.  The theorem statement and its role in the common exponent argument remain in the main text.  The details are deferred because they require a separate rank-constrained symmetric-completion count, exact rank reductions, and a row-space probability estimate.  We first count the enlarged completion set, then estimate the probability of the relevant lower moment, and finally assemble the finite exponent.

\subsection{Rank-constrained completion}
The main text reduces the odd-prime converse to counting upper blocks that complete a fixed lower symmetric matrix without exceeding rank $k$.  The realizable moment set is first embedded into this larger algebraic completion set.  The remainder of the subsection parameterizes that set by the coupling to the radical of the lower form and by the rank of a residual symmetric block.  This is the only relaxation in the odd-prime completion argument.
\label{sec:singular-completion}

Equation~\eqref{eq:weighted-support-bound} reduces exact success to
$\sum_{\zeta}\pi_{\zeta}d_{\zeta}/\lvert Y\rvert$.  In this section the value $\zeta=(Q_{22},S_2)$ is fixed, and only the normalized occurring-coordinate factor
$d_{\zeta}/\lvert Y\rvert$ is bounded.  The probability weight
$\pi_{\zeta}$ is not estimated here.

Every element of $\mathcal C_{\zeta}\cap\operatorname{im}\Phi_k$ yields a symmetric completion of the fixed
augmented lower-coordinate Gram matrix $B_{\zeta}$ with rank at most $k$.  We enlarge the
realizable set once, to all symmetric completions satisfying that rank
constraint, and count the enlarged set exactly.

The resulting bound depends on the lower moment only through the rank of its augmented Gram matrix, as stated next.

\subsubsection{Completion set and moment-derived inclusion}
\label{subsubsec:odd-completion-set-inclusion}

\begin{theorem}[Rank-constrained completion bound for a fixed set with fixed lower coordinates]
\label{thm:completion-bound-converse}
Let $h:=n-m+1$ and fix a lower moment
$\zeta=(\zeta_Q,\zeta_S)\in
\Sym_{n-m}(\F_p)\times\F_p^{n-m}$.  Its augmented Gram matrix is
\begin{equation}
 B_{\zeta}:=
 \begin{pmatrix}\zeta_Q&\zeta_S\\\zeta_S^T&k\end{pmatrix}
 \in\Sym_h(\F_p),
 \label{eq:completion-theorem-B-zeta}
\end{equation}
The realizable visible coordinates above this fixed lower moment form the set
\begin{equation}
 \mathcal D_{\zeta}:=
 \left\{(Q_{11},Q_{12},S_1):
 \begin{array}{l}
 \text{there exists }\boldsymbol x\in(\F_p^n)^k\text{ with }\\
 \Phi_k(\boldsymbol x)=(Q,S),\quad
 (Q_{22},S_2)=\zeta
 \end{array}\right\}.
 \label{eq:completion-theorem-D-zeta}
\end{equation}
By construction, the occurring-coordinate count is the cardinality of this set:
\begin{equation}
 d_{\zeta}=|\mathcal D_{\zeta}|.
 \label{eq:completion-theorem-d-zeta}
\end{equation}
Write the rank and nullity of the fixed lower block as
\begin{equation}
 r_B:=\rank B_{\zeta},\quad s:=h-r_B,
 \label{eq:lower-block-rank-nullity}
\end{equation}
and let $N_{\rm rect}(M,S,L;p)$ and $N_{\rm sym}(D,T;p)$ denote the numbers of
rank-$L$ rectangular matrices and rank-$T$ symmetric matrices, respectively.
If $k<r_B$, then $d_{\zeta}=0$.  If $k\ge r_B$, then
\begin{align}
 d_{\zeta}\le \mathcal A_{m,h,k}(r_B)
 :={}&p^{m r_B}
 \sum_{\ell=0}^{\min\{m,s,\lfloor(k-r_B)/2\rfloor\}}
 N_{\rm rect}(m,s,\ell;p)
 p^{\ell m-\ell(\ell-1)/2}\notag\\
 &\quad\times
 \sum_{t=0}^{\min\{m-\ell,k-r_B-2\ell\}}
 N_{\rm sym}(m-\ell,t;p).
 \label{eq:exact-completion-count}
\end{align}
If $B,B'\in\Sym_h(\F_p)$ have the same rank, then
\begin{equation}
 \mathcal A_{m,h,k}[B]=\mathcal A_{m,h,k}[B']
 =\mathcal A_{m,h,k}(\rank B).
 \label{eq:completion-count-rank-only-theorem}
\end{equation}

To average the fixed-lower-block estimate, let $X_2$ be uniform on
$\F_p^{(n-m)\times k}$ and define its augmented Gram rank by
\begin{equation}
 R(X_2):=\rank
 \begin{pmatrix}X_2X_2^T&X_2\1_k\\
 \1_k^TX_2^T&k\end{pmatrix}.
 \label{eq:random-augmented-rank}
\end{equation}
Grouping the weighted lower-moment bound by this random rank gives
\begin{equation}
 P_{\mathrm{id}}^{(k),*}(n,m;p)
 \le\sum_{r=0}^{h}\Pr\{R(X_2)=r\}
 \min\left\{1,\frac{\mathcal A_{m,h,k}(r)}{\lvert Y\rvert}\right\}.
 \label{eq:rank-grouped-bound}
\end{equation}
\end{theorem}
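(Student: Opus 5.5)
The plan is to prove the theorem in three stages: an injection into an ambient completion set, an exact count of that set by block elimination, and a regrouping of the weighted bound of Proposition~\ref{prop:converse-weighted-moment-fiber-reduction}.

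\textbf{Stage 1 (injection).} Let $\boldsymbol x$ realize a full moment $(Q,S)$ with $(Q_{22},S_2)=\zeta$. Write $X=(X_1;X_2)$ and form the augmented matrix $Z=(X_1;X_2;\1_k^T)\in\F_p^{(m+h)\times k}$. Then $ZZ^T$ is the symmetric matrix
\[
\begin{pmatrix}Q_{11}&(Q_{12}\ S_1)\\(Q_{12}\ S_1)^T&B_\zeta\end{pmatrix},
\]
and its rank is at most $k$. Since $\zeta$ is fixed, the map $(Q_{11},Q_{12},S_1)\mapsto(Q_{11},C)$ with $C=(Q_{12}\ S_1)$ is injective on $\mathcal D_\zeta$. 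Hence $d_\zeta$ is bounded by the number of pairs $(Q,C)\in\Sym_m\times\F_p^{m\times h}$ whose completion with lower block $B_\zeta$ has rank at most $k$. The rank of that completion is at least $r_B$, so when $k<r_B$ there are no realizable moments and $d_\zeta=0$.

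\textbf{Stage 2 (exact count).} Choose $P\in\operatorname{GL}_h$ with $P^TB_\zeta P=\operatorname{diag}(B_0,0_s)$, where $B_0$ is nondegenerate of size $r_B$. Substituting $C\mapsto CP$ is a bijection that preserves rank, so the count depends only on $r_B$. This proves \eqref{eq:completion-count-rank-only-theorem}. Write $CP=(C_0\ C_1)$ with $C_0\in\F_p^{m\times r_B}$ and $C_1\in\F_p^{m\times s}$.

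\begin{enumerate}
\item Eliminating the invertible $B_0$ by a Schur complement gives
\[
\rank=r_B+\rank\begin{pmatrix}Q'&C_1\\C_1^T&0\end{pmatrix},\qquad Q'=Q-C_0B_0^{-1}C_0^T.
\]
Here $C_0$ is free, which contributes $p^{mr_B}$.
\item Let $\ell=\rank C_1$; there are $N_{\rm rect}(m,s,\ell)$ choices. Change basis in $\F_p^m$ so that the column space of $C_1$ is spanned by $e_1,\dots,e_\ell$.
\item In these coordinates the rank of the bordered matrix equals $2\ell+\rank Q''$. Here $Q''$ is the $(m-\ell)\times(m-\ell)$ block of $Q'$ on a complement of the column space of $C_1$. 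The entries of $Q'$ that touch the first $\ell$ indices are unconstrained; they number $\ell(m-\ell)+\ell(\ell+1)/2=\ell m-\ell(\ell-1)/2$.
\item Summing over $t=\rank Q''$ gives $N_{\rm sym}(m-\ell,t)$ choices, subject to $r_B+2\ell+t\le k$. Together with $\ell\le\min\{m,s\}$ and $t\le m-\ell$, this yields exactly \eqref{eq:exact-completion-count}.
\end{enumerate}

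The main obstacle is step 3. It requires verifying the rank identity for the bordered matrix $\begin{pmatrix}Q'&C_1\\C_1^T&0\end{pmatrix}$. It also requires checking that, over any field including characteristic two, the parametrization has no multiplicity beyond what is recorded. One needs both that the rank identity holds and that the count of free entries is $p^{\ell m-\ell(\ell-1)/2}$. I would handle this with explicit row and column operations that use the rank-$\ell$ block of $C_1$ to clear the rows and columns of $Q'$ it meets.

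\textbf{Stage 3 (regrouping).} Start from $\sum_\zeta\pi_\zeta d_\zeta/|Y|$ and cap each term using $d_\zeta\le|Y|$, the trivial bound on a coset's size. By construction $\pi_\zeta$ is the law of $\zeta(X_2)$, and $\rank B_{\zeta(X_2)}=R(X_2)$. The sum therefore equals $\mathbb E\min\{1,\mathcal A(R(X_2))/|Y|\}$, and grouping by the value of $R$ gives \eqref{eq:rank-grouped-bound}.
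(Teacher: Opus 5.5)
Your proposal is correct and follows essentially the paper's route: the Gram-factorization inclusion into the ambient completion set, the Schur reduction, the radical-coupling identity $\rank=r_B+2\ell+t$, the bijective block parametrization, and then capping by $|Y|$ and regrouping by $R(X_2)$. Your row-and-column-operation proof of the rank identity is characteristic-free, whereas the paper eliminates $A_{11}$ by a congruence using $-\frac12A_{11}$ and treats $p=2$ separately by a kernel computation.

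One correction concerns the rank-only identity. The substitution $C\mapsto CP$ by itself only shows that the count depends on the congruence class of $B_\zeta$. For odd $p$ there are two such classes of nondegenerate $B_0$ in each positive rank, so that step alone does not give the identity. It actually follows from the Schur translation $Q\mapsto Q-C_0B_0^{-1}C_0^T$, which removes $B_0$, that is, from your final formula, which involves only $r_B$ and $s$.
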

The proof is deferred until after
Lemma~\ref{lem:visible-to-ambient-completion},
Lemma~\ref{lem:first-congruence},
Lemma~\ref{lem:radical-rank-identity}, and
Proposition~\ref{prop:exact-ambient-completion-count}.  These results establish,
in that order, the actual-to-enlarged relaxation, the Schur-complement rank
reduction, the radical-coupling rank identity, and the exact count of the
enlarged set.
Accordingly, these four displayed results form the complete forward proof map for Theorem~\ref{thm:completion-bound-converse}; the only inequality in that path is the initial enlargement of the realizable moment set.

The argument combines one relaxation with two exact rank reductions and an exact count:
\[
 B_{\zeta}
 \longrightarrow \operatorname{diag}(D_{r_B},0_s)
 \longrightarrow r_B+2\ell+\rank A_{22}
 \longrightarrow \mathcal A_{m,h,k}(r_B).
\]
The first arrow separates the invertible part of $B_{\zeta}$ from its radical.
The second computes the exact rank cost of radical coupling.  The third counts
the remaining choices.  The only inequality occurs before these arrows, when
realizable moments are enlarged to all rank-constrained completions.

\medskip\noindent\emph{Relaxing realizable remaining-coordinate moments to low-rank completions.}\par

Fix $\zeta=(Q_{22},S_2)$.  The moment-derived visible data above this lower
coordinate are collected directly, without choosing a tuple factorization, in
\begin{equation}
 \mathcal V_{\zeta}
 :=\left\{
 \bigl(Q_{11},(Q_{12}\ S_1)\bigr):
 \begin{array}{l}
 (Q_{11},Q_{12},Q_{22},S_1,S_2)\in\operatorname{im}\Phi_k,\\
 (Q_{22},S_2)=\zeta
 \end{array}
 \right\}.
 \label{eq:actual-visible-completion-set}
\end{equation}
This definition depends only on the moment value and does not choose a
particular factorization through a particular $X_2$.  In contrast, an ordered basis tuple in the relevant fiber will be
used below only to prove the rank constraint.

The fixed augmented matrix associated with $\zeta$ is
$B_{\zeta}=\bigl(\begin{smallmatrix}Q_{22}&S_2\\S_2^T&k\end{smallmatrix}\bigr)$
from Eq.~\eqref{eq:completion-theorem-B-zeta}.  It depends only on $\zeta$,
even though a factor $Z_2$ satisfying $B_{\zeta}=Z_2Z_2^T$ need not be unique.
Relax the realizable set to all symmetric completions satisfying the same rank
constraint.  The enlarged set is
\begin{equation}
 \mathcal R_{\zeta}
 :=\left\{(Q,C)\in\Sym_m(\F_p)\times\F_p^{m\times h}:
 \rank\begin{pmatrix}Q&C\\C^T&B_{\zeta}\end{pmatrix}\le k\right\},
 \label{eq:enlarged-low-rank-completion-set}
\end{equation}
Its cardinality is denoted by
\begin{equation}
 \mathcal A_{m,h,k}[B_{\zeta}]
 :=|\mathcal R_{\zeta}|.
 \label{eq:A-completion-definition}
\end{equation}
The square brackets indicate that the matrix $B_{\zeta}$ itself is fixed.

\begin{lemma}[Inclusion of moment-derived data in the low-rank completion set]
\label{lem:visible-to-ambient-completion}
For fixed $\zeta$, let $\mathcal V_{\zeta}$ and $\mathcal R_{\zeta}$ be the sets
in Eqs.~\eqref{eq:actual-visible-completion-set} and
\eqref{eq:enlarged-low-rank-completion-set}.  Then
\begin{equation}
 \mathcal V_{\zeta}\subseteq\mathcal R_{\zeta}.
 \label{eq:moment-derived-completion-inclusion}
\end{equation}
Consequently,
\begin{equation}
 d_{\zeta}=|\mathcal V_{\zeta}|
 \le |\mathcal R_{\zeta}|
 =\mathcal A_{m,h,k}[B_{\zeta}].
 \label{eq:realizable-versus-ambient-completions}
\end{equation}
\end{lemma}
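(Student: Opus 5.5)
The plan is to exhibit, for each element of $\mathcal V_{\zeta}$, an explicit rank-$\le k$ symmetric matrix realizing it as a completion of $B_{\zeta}$. Fix $(Q_{11},(Q_{12}\ S_1))\in\mathcal V_{\zeta}$. By definition of $\mathcal V_{\zeta}$ there is a full moment $(Q,S)\in\operatorname{im}\Phi_k$ with these visible blocks and $(Q_{22},S_2)=\zeta$. We choose any ordered tuple $\boldsymbol x$ in the fiber $\Phi_k^{-1}(Q,S)$ and write it as the matrix $X=(X_1;X_2)$ of Eq.~\eqref{eq:X-blocks}. The choice of tuple affects only the proof, not the set.

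Next we form the augmented $(n+1)\times k$ matrix
\[
 Z:=\begin{pmatrix}X_1\\ Z_2\end{pmatrix},
 \qquad Z_2=\begin{pmatrix}X_2\\ \1_k^T\end{pmatrix},
\]
and compute $ZZ^T$ by block multiplication. By Eq.~\eqref{eq:moment-blocks}, the upper-left block is $X_1X_1^T=Q_{11}$. The coupling block is $X_1Z_2^T=(X_1X_2^T\ \ X_1\1_k)=(Q_{12}\ S_1)$. The lower block is $Z_2Z_2^T$, which equals $B_{\zeta}$ by Eq.~\eqref{eq:main-odd-augmented-Gram-matrix}; there the lower-right entry is $\1_k^T\1_k=k$, read in the prime subfield, so the identity also holds when $p\mid k$. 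Hence
\[
 \begin{pmatrix}Q_{11}&(Q_{12}\ S_1)\\(Q_{12}\ S_1)^T&B_{\zeta}\end{pmatrix}=ZZ^T .
\]
Since $ZZ^T$ factors through $\F_p^k$, its rank is at most $k$. Moreover $Q_{11}$ is symmetric and $C:=(Q_{12}\ S_1)\in\F_p^{m\times h}$. Therefore $(Q_{11},C)\in\mathcal R_{\zeta}$, which proves the inclusion in Eq.~\eqref{eq:moment-derived-completion-inclusion}.

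For the cardinality statement, $d_{\zeta}=|\mathcal D_{\zeta}|$ by Eq.~\eqref{eq:completion-theorem-d-zeta}. The set $\mathcal D_{\zeta}$ corresponds bijectively to $\mathcal V_{\zeta}$ by regrouping $(Q_{12},S_1)$ into the single block $(Q_{12}\ S_1)$. Each side is moreover the injective image of $\mathcal C_{\zeta}\cap\operatorname{im}\Phi_k$ under the map in Eq.~\eqref{eq:main-visible-to-ambient-map-preview}, which is injective because the lower coordinates are fixed. Combining this with the inclusion gives Eq.~\eqref{eq:realizable-versus-ambient-completions}.

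The only delicate point is that $B_{\zeta}$ must be independent of the chosen tuple. This holds because $B_{\zeta}$ is built from $\zeta$ and $k$ alone, so different factors $Z_2$ all give the same lower block. No other obstacle is expected.
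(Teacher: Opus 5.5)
Your proposal is correct and follows the paper's proof essentially step for step. You realize the moment by a tuple and use the augmented matrix $\bigl(\begin{smallmatrix}X\\ \1_k^T\end{smallmatrix}\bigr)$ to write the completed block matrix as a Gram matrix of rank at most $k$, and you obtain $d_{\zeta}=|\mathcal V_{\zeta}|$ from the injectivity of the visible-coordinate map on $\mathcal C_{\zeta}\cap\operatorname{im}\Phi_k$. Your remark that $B_{\zeta}$ depends only on $\zeta$ and $k$, including when $p\mid k$, matches the paper's observation that the factor $Z_2$ need not be unique.
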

\begin{proof}
For $z=(Q,S)\in\mathcal C_{\zeta}\cap\operatorname{im}\Phi_k$, use the
block form of Eq.~\eqref{eq:moment-blocks} and retain precisely its visible
coordinates through the map
\begin{equation}
 \iota_{\zeta}(z)
 :=\bigl(Q_{11},(Q_{12}\ S_1)\bigr).
 \label{eq:visible-moment-to-completion-map}
\end{equation}
Because $(Q_{22},S_2)=\zeta$ is fixed throughout
$\mathcal C_{\zeta}$, the tuple $(Q_{11},Q_{12},S_1)$ uniquely specifies the
full moment value
$z=(Q,S)$.  Thus the map in
Eq.~\eqref{eq:visible-moment-to-completion-map} is injective.  Its image is exactly the set in
Eq.~\eqref{eq:actual-visible-completion-set}.  Hence this map identifies
the two finite sets and gives
\begin{equation}
 d_{\zeta}
 =|\mathcal C_{\zeta}\cap\operatorname{im}\Phi_k|
 =|\mathcal V_{\zeta}|.
 \label{eq:visible-moment-completion-bijection-count}
\end{equation}

To complete the comparison, we prove the inclusion $\mathcal V_{\zeta}\subseteq\mathcal R_{\zeta}$.  Take
$(Q_{11},C)\in\mathcal V_{\zeta}$.  By
Eq.~\eqref{eq:actual-visible-completion-set}, there is an ordered basis tuple
$X=\bigl(\begin{smallmatrix}X_1\\X_2\end{smallmatrix}\bigr)$ realizing the
corresponding full moment.  For this realization, block multiplication gives
$C=X_1Z_2^T=(Q_{12}\ S_1)$.  Appending the linear moment coordinate then
produces the full augmented Gram factorization
\begin{equation}
 \begin{pmatrix}Q_{11}&C\\C^T&B_{\zeta}\end{pmatrix}
 =\begin{pmatrix}X\\\1_k^T\end{pmatrix}
  \begin{pmatrix}X^T&\1_k\end{pmatrix}.
 \label{eq:actual-completion-Gram-factorization}
\end{equation}
The matrix on the right of
Eq.~\eqref{eq:actual-completion-Gram-factorization} is a product of an
$(n+1)\times k$ matrix and a $k\times(n+1)$ matrix, so its rank is at most
$k$.  Hence $(Q_{11},C)\in\mathcal R_{\zeta}$ by
Eq.~\eqref{eq:enlarged-low-rank-completion-set}.  Combining this inclusion
with Eq.~\eqref{eq:visible-moment-completion-bijection-count} proves
Eq.~\eqref{eq:realizable-versus-ambient-completions}.
\end{proof}

\begin{remark}[What the relaxation does not claim]
The set $\mathcal R_{\zeta}$ contains every moment-derived coordinate triple,
but a general member of $\mathcal R_{\zeta}$ need not have the special Gram
factorization in Eq.~\eqref{eq:actual-completion-Gram-factorization}.  Thus the
count below is exact for the enlarged set $\mathcal R_{\zeta}$ and supplies
only an upper bound for the actual number $d_{\zeta}$.
\end{remark}

\medskip\noindent\emph{Separating the invertible and radical parts of a fixed symmetric matrix.}\par
For a symmetric matrix $B$, write
$\operatorname{rad}(B):=\ker B$.  The next lemma separates the nondegenerate
part of $B$ from its radical and records every congruence used in that
reduction.

\subsubsection{Schur reduction and radical coupling}
\label{subsubsec:odd-schur-radical-reduction}

\begin{lemma}[Schur-complement rank reduction]
\label{lem:first-congruence}
Let $B\in\Sym_h(\F_p)$ have rank $r_B$ and nullity
$s=h-r_B$.  Choose a congruence that separates its nondegenerate and radical
parts:
\begin{equation}
 P^TBP=\begin{pmatrix}D_{r_B}&0\\0&0_s\end{pmatrix},
 \quad D_{r_B}\in\Sym_{r_B}(\F_p)
 \text{ invertible}.
 \label{eq:first-congruence-assumptions}
\end{equation}
Use the same change of basis to split the coupling matrix according to those
two parts:
\begin{equation}
 CP=(C_1\ C_0),
 \quad C_1\in\F_p^{m\times r_B},\quad
 C_0\in\F_p^{m\times s},
 \label{eq:first-congruence-C-split}
\end{equation}
The Schur complement of the invertible block is
\begin{equation}
 A:=Q-C_1D_{r_B}^{-1}C_1^T.
 \label{eq:first-congruence-A-definition}
\end{equation}
The completed rank separates into the fixed nondegenerate contribution and a
residual radical-coupling rank:
\begin{equation}
 \rank\begin{pmatrix}Q&C\\C^T&B\end{pmatrix}
 =r_B+
 \rank\begin{pmatrix}A&C_0\\C_0^T&0_s\end{pmatrix}.
 \label{eq:first-rank-reduction}
\end{equation}
For fixed $C_1$, the map $Q\mapsto A$ is a bijection of
$\Sym_m(\F_p)$.
\end{lemma}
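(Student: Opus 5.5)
The plan is to perform one block congruence of the full completed matrix and then a block elimination (a Schur-complement step) on the invertible block $D_{r_B}$. Put $\widetilde P:=\operatorname{diag}(I_m,P)$. Congruence by $\widetilde P$ preserves rank and symmetry. Using Eqs.~\eqref{eq:first-congruence-assumptions} and~\eqref{eq:first-congruence-C-split}, it gives
\[
 \widetilde P^T\begin{pmatrix}Q&C\\C^T&B\end{pmatrix}\widetilde P
 =\begin{pmatrix}Q&C_1&C_0\\C_1^T&D_{r_B}&0\\C_0^T&0&0_s\end{pmatrix}.
\]
Here I only need $(CP)^T=P^TC^T$ and the stated block form of $P^TBP$.

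Next I use the invertibility of $D_{r_B}$. Let $E$ be the block unipotent matrix whose only off-diagonal block is $-D_{r_B}^{-1}C_1^T$, placed in the (middle row, first column) position. Conjugating by $E$ (that is, forming $E^T(\cdot)E$) clears the off-diagonal blocks $C_1$ and $C_1^T$. The only block changed is the upper-left one, which becomes $Q-C_1D_{r_B}^{-1}C_1^T=A$. The blocks $C_0$ are untouched, because the middle row and column have zero entries against the radical block.

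The resulting matrix is the direct sum of $D_{r_B}$ with $\begin{pmatrix}A&C_0\\C_0^T&0_s\end{pmatrix}$, after a permutation congruence that reorders the blocks. Rank is additive over direct sums, and $\rank D_{r_B}=r_B$. This yields Eq.~\eqref{eq:first-rank-reduction}. The one point to check carefully is that the elimination uses $D_{r_B}^{-1}$ symmetrically, so that $A$ stays symmetric. That follows from $D_{r_B}^T=D_{r_B}$.

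For the final claim, fix $C_1$. The correction $C_1D_{r_B}^{-1}C_1^T$ is a fixed symmetric matrix, since $D_{r_B}^{-1}$ is symmetric. So $Q\mapsto A$ is translation by a fixed element of the additive group $\Sym_m(\F_p)$, and hence a bijection whose inverse is $A\mapsto A+C_1D_{r_B}^{-1}C_1^T$.

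I expect no real obstacle here. The only things to take care with are the bookkeeping of the congruence and noting that the existence of $P$ in Eq.~\eqref{eq:first-congruence-assumptions} is an assumption of the lemma. Such a $P$ exists for any field, for instance by taking a basis adapted to a complement of $\ker B$ together with $\ker B$; this needs no diagonalization, so it works in odd characteristic as well.
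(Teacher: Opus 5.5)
Your proposal is correct and follows the paper's proof. Both apply the congruence by $\operatorname{diag}(I_m,P)$, then a Schur block congruence (which you write out explicitly via $E$, where the paper just calls it standard) to reach $D_{r_B}\oplus\bigl(\begin{smallmatrix}A&C_0\\C_0^T&0_s\end{smallmatrix}\bigr)$, then use rank additivity and the translation argument for $Q\mapsto A$. One small slip in your closing aside: your adapted-basis construction of $P$ avoids diagonalization, so the point is that it works in characteristic two as well (which the paper's binary Schur reduction relies on), not in odd characteristic.
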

\begin{proof}
Apply the congruence $\operatorname{diag}(I_m,P)$ and use
Eqs.~\eqref{eq:first-congruence-assumptions} and
\eqref{eq:first-congruence-C-split}.  A standard block congruence then gives
\begin{equation}
 \begin{pmatrix}
 Q&C_1&C_0\\
 C_1^T&D_{r_B}&0\\
 C_0^T&0&0_s
 \end{pmatrix}
 \sim
 D_{r_B}\oplus
 \begin{pmatrix}A&C_0\\C_0^T&0_s\end{pmatrix},
 \quad
 A=Q-C_1D_{r_B}^{-1}C_1^T.
 \label{eq:schur-direct-sum-form}
\end{equation}
Since $D_{r_B}$ has rank $r_B$, this proves
Eq.~\eqref{eq:first-rank-reduction}.  Moreover, $D_{r_B}^{-1}$ is symmetric,
so $A\in\Sym_m(\F_p)$; for fixed $C_1$, the map $Q\mapsto A$ is a
translation of $\Sym_m(\F_p)$ and hence a bijection.
\end{proof}

\medskip\noindent\emph{The exact rank cost of radical coupling.}\par
\begin{lemma}[Radical-coupling rank identity]
\label{lem:radical-rank-identity}
Let $A\in\Sym_m(\F_p)$ and let $C_0\in\F_p^{m\times s}$ have rank
$\ell$.  Choose row and column bases that put this coupling into its standard
rank normal form:
\begin{equation}
 RC_0S=\begin{pmatrix}I_\ell&0\\0&0\end{pmatrix},
 \label{eq:radical-lemma-normal-form}
\end{equation}
Decompose the transformed symmetric matrix relative to the image of the
coupling:
\begin{equation}
 RAR^T=\begin{pmatrix}A_{11}&A_{12}\\A_{12}^T&A_{22}\end{pmatrix}.
 \label{eq:radical-lemma-A-split}
\end{equation}
relative to $\F_p^m=\F_p^\ell\oplus\F_p^{m-\ell}$.  The coupling contributes a hyperbolic block of rank $2\ell$, leaving only the
rank of the residual symmetric block:
\begin{equation}
 \rank\begin{pmatrix}A&C_0\\C_0^T&0_s\end{pmatrix}
 =2\ell+\rank A_{22}.
 \label{eq:radical-rank-identity}
\end{equation}
\end{lemma}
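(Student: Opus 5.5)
The plan is to reduce the matrix by one further explicit block congruence that uses the identity blocks of $RC_0S$ to clear the cross terms. The rank of the full matrix is unchanged if we apply $\operatorname{diag}(R,S^T)$ on the left and its transpose on the right. Since $R$ and $S$ are invertible, this congruence replaces the matrix by
\[
 \begin{pmatrix}RAR^T&RC_0S\\ S^TC_0^TR^T&0_s\end{pmatrix}.
\]
Splitting $\F_p^m=\F_p^\ell\oplus\F_p^{m-\ell}$ and $\F_p^s=\F_p^\ell\oplus\F_p^{s-\ell}$, and using Eqs.~\eqref{eq:radical-lemma-normal-form} and~\eqref{eq:radical-lemma-A-split}, the transformed matrix has the block form
\[
 \begin{pmatrix}
 A_{11}&A_{12}&I_\ell&0\\
 A_{12}^T&A_{22}&0&0\\
 I_\ell&0&0&0\\
 0&0&0&0
 \end{pmatrix}.
\]
The last block row and column are zero and can be discarded.

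Next I would eliminate $A_{11}$ and $A_{12}$ using the identity blocks in the third block row and column. Subtract $A_{12}^T$ times the third block row from the second block row, together with the transposed column operation; this removes $A_{12}$ and $A_{12}^T$. Then subtract $\tfrac12A_{11}$ times the third block row from the first, with the transposed column operation. Each of these is a congruence by a unipotent block matrix, so rank is preserved. The result is
\[
 \begin{pmatrix}0&I_\ell\\ I_\ell&0\end{pmatrix}\oplus A_{22}\oplus 0.
\]
The first summand has rank $2\ell$, which gives Eq.~\eqref{eq:radical-rank-identity}.

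The main obstacle is the division by two in clearing $A_{11}$, which is fine for odd $p$ but not for the later binary use. To avoid it, I would not insist on a congruence for that step. Since only rank is claimed, arbitrary invertible row operations and independent column operations suffice. Subtracting $A_{11}$ times the third block row from the first row alone kills $A_{11}$. After that, the $A_{12}$ entries in the first block row and column are cleared by the pivots $I_\ell$. This is pure elimination and holds over any field. I would present this row/column version as the proof, noting that it holds in every characteristic.
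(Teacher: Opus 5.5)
Your proposal is correct, and your congruence argument is essentially the paper's proof. Like the paper, you conjugate by $\operatorname{diag}(R^T,S)$, delete the zero summand, eliminate $A_{12}$, and then clear $A_{11}$ with $\tfrac12A_{11}$ for odd $p$. Your characteristic-free variant, using independent row and column operations, is also valid because only the rank is claimed. It would also cover the binary case, which the paper instead handles separately in Lemma~\ref{lem:binary-radical-rank-identity} by an explicit kernel computation.
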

\begin{proof}
The congruence $\operatorname{diag}(R^T,S)$, followed by deletion of the zero
$(s-\ell)$-dimensional direct summand, reduces the matrix to
\begin{equation}
 \begin{pmatrix}
 A_{11}&A_{12}&I_\ell\\
 A_{12}^T&A_{22}&0\\
 I_\ell&0&0
 \end{pmatrix}.
 \label{eq:radical-reduced-normal-form}
\end{equation}
Standard simultaneous row and column operations first eliminate
$A_{12}$ and $A_{12}^T$.  Since $p$ is odd, a second congruence using
$-\frac12A_{11}$ eliminates $A_{11}$.  Thus
Eq.~\eqref{eq:radical-reduced-normal-form} is congruent to
\begin{equation}
 \begin{pmatrix}0&I_\ell\\I_\ell&0\end{pmatrix}
 \oplus A_{22}.
 \label{eq:radical-hyperbolic-direct-sum}
\end{equation}
The first summand is nonsingular of rank $2\ell$, which proves
Eq.~\eqref{eq:radical-rank-identity}.
\end{proof}

\begin{remark}[Characteristic-two comparison]
The congruence that eliminates $A_{11}$ uses division by two and is not
available over $\F_2$.  Lemma~\ref{lem:binary-radical-rank-identity} in
Appendix~\ref{app:binary-finite-structural-proof} obtains the same rank cost
by an explicit kernel calculation without dividing by two.  Thus the rank constraint is common, while the elimination itself is
characteristic-dependent.
\end{remark}

Substitute Eq.~\eqref{eq:radical-rank-identity} into the second rank term on
the right-hand side of Eq.~\eqref{eq:first-rank-reduction}.  Writing $t:=\rank A_{22}$, the two exact rank reductions combine to give
\begin{equation}
 \rank\begin{pmatrix}Q&C\\C^T&B_{\zeta}\end{pmatrix}
 =r_B+2\ell+t.
 \label{eq:full-rank-identity}
\end{equation}
Comparing Eq.~\eqref{eq:full-rank-identity} with the defining rank constraint
in Eq.~\eqref{eq:enlarged-low-rank-completion-set}, a completion belongs to
$\mathcal R_{\zeta}$ exactly when
\begin{equation}
 r_B+2\ell+t\le k.
 \label{eq:completion-rank-budget}
\end{equation}

\medskip\noindent\emph{Counting the enlarged low-rank completions.}\par
The objects counted below are the completion pairs $(Q,C)$ in
$\mathcal R_{\zeta}$.  We first make the parameterization underlying the
product count explicit.

\subsubsection{Fixed-rank parameterization and exact count}
\label{subsubsec:odd-fixed-rank-parameterization}

\begin{lemma}[Bijective parameterization at fixed completion ranks]
\label{lem:completion-bijective-parameterization}
Fix $B$, a matrix $P$ satisfying
Eq.~\eqref{eq:first-congruence-assumptions}, and integers $(\ell,t)$.  For each
rank-$\ell$ matrix $C_0\in\F_p^{m\times s}$, choose once and for all matrices
$R(C_0)$ and $S(C_0)$ satisfying
Eq.~\eqref{eq:radical-lemma-normal-form}.  For fixed ranks $\rank C_0=\ell$ and $\rank A_{22}=t$, the completion pairs
$(Q,C)$ are parametrized bijectively by the tuple
\begin{equation}
 (C_1,C_0,A_{11},A_{12},A_{22}),
 \label{eq:completion-parameter-tuple}
\end{equation}
where $C_1$ is arbitrary, $C_0$ has rank $\ell$,
$A_{11}\in\Sym_\ell(\F_p)$ and
$A_{12}\in\F_p^{\ell\times(m-\ell)}$ are arbitrary, and
$A_{22}\in\Sym_{m-\ell}(\F_p)$ has rank $t$.
\end{lemma}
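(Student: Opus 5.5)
The plan is to build the correspondence as a composite of bijections, one for each change of variables already fixed in Lemmas~\ref{lem:first-congruence} and~\ref{lem:radical-rank-identity}, and then check that the rank conditions are read off from the tuple.

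First, since $P$ is invertible, $C\mapsto CP=(C_1\ C_0)$ is a bijection from $\F_p^{m\times h}$ onto $\F_p^{m\times r_B}\times\F_p^{m\times s}$. This step involves no constraint, so $C_1$ ranges over all matrices and $C_0$ over all $m\times s$ matrices. Restricting to $\rank C_0=\ell$ selects exactly the pairs with that coupling rank.

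Second, fix $(C_1,C_0)$. Lemma~\ref{lem:first-congruence} states that $Q\mapsto A=Q-C_1D_{r_B}^{-1}C_1^T$ is a translation of $\Sym_m(\F_p)$, hence a bijection, with inverse $A\mapsto A+C_1D_{r_B}^{-1}C_1^T$.

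Third, fix $C_0$ and the chosen $R=R(C_0)$. Because $R$ is invertible, $A\mapsto RAR^T$ is a bijection of $\Sym_m(\F_p)$. Splitting along $\F_p^\ell\oplus\F_p^{m-\ell}$ identifies $\Sym_m(\F_p)$ bijectively with triples $(A_{11},A_{12},A_{22})$, where $A_{11}\in\Sym_\ell(\F_p)$ and $A_{22}\in\Sym_{m-\ell}(\F_p)$ are symmetric and $A_{12}$ is arbitrary. The point that needs care is the ordering of choices: $R$ and $S$ must be fixed functions of $C_0$ alone, chosen once as the statement stipulates. That ordering is what makes the inverse map $(C_1,C_0,A_{11},A_{12},A_{22})\mapsto(Q,C)$ well defined. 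Explicitly, the inverse is
\[
Q=R^{-1}\begin{pmatrix}A_{11}&A_{12}\\A_{12}^T&A_{22}\end{pmatrix}R^{-T}+C_1D_{r_B}^{-1}C_1^T,\qquad C=(C_1\ C_0)P^{-1}.
\]

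Finally, the rank conditions must be intrinsic to the tuple. By Eq.~\eqref{eq:full-rank-identity}, the completed rank equals $r_B+\rank C_0+\rank A_{22}$, and this quantity is computed from the tuple itself. The set of $(Q,C)$ with $\rank C_0=\ell$ and $\rank A_{22}=t$ is therefore exactly the image of the stated parameter set. Composing the three bijections gives the claimed parameterization.

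I expect no real obstacle here. The only subtle point is the one noted in the third step: the well-definedness of the dependence of $R$ on $C_0$, which the fixed choice settles.
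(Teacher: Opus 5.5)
Your proposal is correct and follows the paper's proof essentially step for step: the translation $Q\mapsto A$ for fixed $C_1$, the block decomposition of $R(C_0)AR(C_0)^T$ via the once-and-for-all normalizer, and reconstruction through $C=(C_1\ C_0)P^{-1}$. One slip to fix: Eq.~\eqref{eq:full-rank-identity} gives the completed rank as $r_B+2\rank C_0+\rank A_{22}$, not $r_B+\rank C_0+\rank A_{22}$; this does not affect the bijection, but the factor $2$ is what yields the budget $r_B+2\ell+t\le k$ used in the subsequent count.
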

\begin{proof}
For each fixed $C_1$, Lemma~\ref{lem:first-congruence} identifies $Q$
affinely and bijectively with $A\in\Sym_m(\F_p)$.  For each $C_0$, the
normalizers fixed in the statement give the unique block decomposition
\[
 R(C_0)AR(C_0)^T
 =\begin{pmatrix}A_{11}&A_{12}\\A_{12}^T&A_{22}\end{pmatrix}.
\]
Conversely, these blocks reconstruct $A$, then $Q$, while
$C=(C_1\ C_0)P^{-1}$.  Hence the correspondence is bijective.  The fixed
choice of normalizers for each $C_0$ prevents overcounting, and
Lemma~\ref{lem:radical-rank-identity} gives the stated rank classification.
\end{proof}

\begin{proposition}[Exact count of the enlarged low-rank completion set]
\label{prop:exact-ambient-completion-count}
The matrix-dependent count $\mathcal A_{m,h,k}[B]$ depends on
$B\in\Sym_h(\F_p)$ only through $r_B=\rank B$.  It is zero when
$k<r_B$.  When $k\ge r_B$, it equals
\begin{align}
 \mathcal A_{m,h,k}(r_B)
 :={}&p^{m r_B}
 \sum_{\ell=0}^{\min\{m,s,\lfloor(k-r_B)/2\rfloor\}}
 N_{\rm rect}(m,s,\ell;p)
 p^{\ell m-\ell(\ell-1)/2}\notag\\
 &\quad\times
 \sum_{t=0}^{\min\{m-\ell,k-r_B-2\ell\}}
 N_{\rm sym}(m-\ell,t;p).
 \label{eq:exact-enlarged-completion-count-derived}
\end{align}
\end{proposition}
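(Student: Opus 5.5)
The count will follow from the bijective parameterization of Lemma~\ref{lem:completion-bijective-parameterization} and the rank identity in Eq.~\eqref{eq:full-rank-identity}; the task is to sum the parameter-space sizes over the admissible rank pairs $(\ell,t)$.

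First fix a congruence $P$ as in Eq.~\eqref{eq:first-congruence-assumptions} and fixed normalizers $R(C_0),S(C_0)$ for every $C_0$. Partition $\mathcal R_\zeta$, or more generally the set of $(Q,C)$ counted by $\mathcal A_{m,h,k}[B]$, according to $\ell=\rank C_0$ and $t=\rank A_{22}$. By Eq.~\eqref{eq:full-rank-identity}, a pair lies in the set exactly when $r_B+2\ell+t\le k$. Together with the trivial constraints $\ell\le\min\{m,s\}$ and $t\le m-\ell$, this gives exactly the summation ranges in Eq.~\eqref{eq:exact-enlarged-completion-count-derived}. When $k<r_B$ no pair qualifies, since even $\ell=t=0$ violates the budget, so the count is zero.

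For fixed $(\ell,t)$, the lemma identifies the class bijectively with the product of the following factors:
\[
 C_1\in\F_p^{m\times r_B}\ (p^{mr_B}\text{ choices}),\qquad
 C_0\ (N_{\rm rect}(m,s,\ell;p)\text{ choices}),
\]
\[
 A_{11}\in\Sym_\ell(\F_p)\ (p^{\ell(\ell+1)/2}),\qquad
 A_{12}\in\F_p^{\ell\times(m-\ell)}\ (p^{\ell(m-\ell)}),\qquad
 A_{22}\ (N_{\rm sym}(m-\ell,t;p)).
\]
The exponents combine as $\ell(\ell+1)/2+\ell(m-\ell)=\ell m-\ell(\ell-1)/2$. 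Summing first over $t$ and then over $\ell$ yields the displayed formula. Every factor depends only on $m,h,k$ and $r_B$ (recall $s=h-r_B$), which proves that the count depends on $B$ only through its rank.

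The only delicate point is that the parameterization really is a bijection for each fixed $B$, with no dependence on the choice of $P$ beyond relabeling. The argument relies on the choice of normalizers being fixed once per $C_0$, as built into the lemma. Independence of $P$ is then automatic, because the left-hand side $|\mathcal R|$ is intrinsic and the right-hand side does not mention $P$. I would also check that $C\mapsto (C_1\ C_0)=CP$ is a bijection of $\F_p^{m\times h}$, which holds because $P$ is invertible, and that $A$ ranges over all of $\Sym_m(\F_p)$, which holds because $Q\mapsto A$ is a translation. Once these are confirmed, the count is straightforward.
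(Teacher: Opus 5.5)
Your proposal is correct and follows the paper's proof essentially step for step. Like the paper, you partition by $(\ell,t)$, read the summation ranges off the rank identity $r_B+2\ell+t\le k$ together with the matrix dimensions, multiply the factor cardinalities from Lemma~\ref{lem:completion-bijective-parameterization}, and conclude rank-only dependence because every factor involves only $r_B$ and $s=h-r_B$, with $C\mapsto CP$ and $Q\mapsto A$ bijective.
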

\begin{proof}
If $k<r_B$, Eq.~\eqref{eq:first-rank-reduction} shows that every
completed matrix has rank at least $r_B>k$, so the enlarged set is
empty.

Assume $k\ge r_B$.  Partition the enlarged set by the two residual ranks
$\ell=\rank C_0$ and $t=\rank A_{22}$.  By
Lemma~\ref{lem:completion-bijective-parameterization}, the components of the
tuple in Eq.~\eqref{eq:completion-parameter-tuple} can be chosen freely and
successively within their stated sets.  Their cardinalities are:
\begin{align}
 |\{C_1\}|&=p^{m r_B},
 \label{eq:completion-count-C1}\\
 |\{C_0:\rank C_0=\ell\}|&=N_{\rm rect}(m,s,\ell;p),
 \label{eq:completion-count-C0}\\
 |\{(A_{11},A_{12})\}|
 &=p^{\ell(\ell+1)/2}\,p^{\ell(m-\ell)}
 =p^{\ell m-\ell(\ell-1)/2},
 \label{eq:completion-count-free-A-blocks}\\
 |\{A_{22}:\rank A_{22}=t\}|
 &=N_{\rm sym}(m-\ell,t;p).
 \label{eq:completion-count-A22}
\end{align}
Here $A_{11}$ has $\ell(\ell+1)/2$ independent symmetric entries and
$A_{12}$ has $\ell(m-\ell)$ arbitrary entries.

The exact rank identity imposes $r_B+2\ell+t\le k$.  Combining this budget
with the matrix dimensions gives the admissible summation ranges
\begin{equation}
 0\le\ell\le
 \min\{m,s,\lfloor(k-r_B)/2\rfloor\},
 \quad
 0\le t\le
 \min\{m-\ell,k-r_B-2\ell\}.
 \label{eq:completion-count-ranges}
\end{equation}
Multiplying Eqs.~\eqref{eq:completion-count-C1}--
\eqref{eq:completion-count-A22} for fixed $(\ell,t)$ and summing over the
admissible range proves
Eq.~\eqref{eq:exact-enlarged-completion-count-derived}.  This is exactly the
completion formula stated earlier in Eq.~\eqref{eq:exact-completion-count}.

Finally, the transformations $C\mapsto CP$ and $Q\mapsto A$ are bijective,
and every factor in the count depends only on $r_B$ and
$s=h-r_B$.  Hence the count depends on $B$ only through its rank.  In
particular, this conclusion does not use a classification of nondegenerate
symmetric forms and does not assume that rank determines their congruence
type.  Any dependence on the chosen invertible block $D_{r_B}$ is removed by
the affine translation $Q\mapsto A=Q-C_1D_{r_B}^{-1}C_1^T$ for each fixed
$C_1$.
\end{proof}

The proposition proves the rank-only identity
\begin{equation}
 \mathcal A_{m,h,k}[B]
 =\mathcal A_{m,h,k}(\rank B).
 \label{eq:A-completion-rank-definition}
\end{equation}

\begin{proof}[Proof of Theorem~\ref{thm:completion-bound-converse}]
The proof was deferred until the actual-to-enlarged relaxation and the exact
count of the enlarged set had been established.  If
$k<r_B$, Eq.~\eqref{eq:first-rank-reduction} implies that no actual
completion can have rank at most $k$, and hence $d_{\zeta}=0$.

Assume $k\ge r_B$.  Substitute the rank-only identity
Eq.~\eqref{eq:A-completion-rank-definition} into the last term of the
relaxation Eq.~\eqref{eq:realizable-versus-ambient-completions}.  This yields
\begin{equation}
 d_{\zeta}
 \le\mathcal A_{m,h,k}(\rank B_{\zeta}).
 \label{eq:dh-completion-bound}
\end{equation}
Independently,
$\mathcal C_{\zeta}\cap\operatorname{im}\Phi_k\subseteq\mathcal C_{\zeta}$
and Eq.~\eqref{eq:H-perp} give
\begin{equation}
 d_{\zeta}\le|\mathcal C_{\zeta}|=\lvert Y\rvert.
 \label{eq:dh-trivial-visible-cap}
\end{equation}
Intersecting the completion bound with the trivial coset-size cap gives the
clipped ambiguity estimate
\begin{equation}
 \frac{d_{\zeta}}{\lvert Y\rvert}
 \le\min\left\{1,
 \frac{\mathcal A_{m,h,k}(\rank B_{\zeta})}{\lvert Y\rvert}\right\}.
 \label{eq:dh-clipped-completion-bound}
\end{equation}
Substitute Eq.~\eqref{eq:dh-clipped-completion-bound} into the weighted bound
Eq.~\eqref{eq:weighted-support-bound}.  This yields
\begin{equation}
 P_{\mathrm{id}}^{(k),*}(n,m;p)
 \le\sum_{\zeta}\pi_{\zeta}
 \min\left\{1,
 \frac{\mathcal A_{m,h,k}(\rank B_{\zeta})}{\lvert Y\rvert}\right\}.
 \label{eq:rank-bound-before-grouping}
\end{equation}
Let $R:=\rank B_{\zeta}$ when $\zeta$ is distributed according to
$\pi_{\zeta}$.  Regrouping any function of the lower block by this rank gives
\begin{equation}
 \sum_{\zeta}\pi_{\zeta}f(\rank B_{\zeta})
 =\sum_{r=0}^{h}\Pr\{R=r\}f(r).
 \label{eq:lower-block-rank-regrouping-identity}
\end{equation}
Apply Eq.~\eqref{eq:lower-block-rank-regrouping-identity} to
Eq.~\eqref{eq:rank-bound-before-grouping} with
$f(r)=\min\{1,\mathcal A_{m,h,k}(r)/\lvert Y\rvert\}$.  The result is
Eq.~\eqref{eq:rank-grouped-bound}.
\end{proof}
\begin{remark}[Checks and interpretation]
If $s=0$, then only $\ell=0$ occurs, and the formula reduces to the ordinary Schur-complement count with an invertible fixed lower-coordinate block
\[
 \mathcal A_{m,h,k}(h)=p^{mh}
 \sum_{t=0}^{\min\{m,k-h\}}N_{\rm sym}(m,t;p).
\]
If $r_B=0$, every coupled radical direction costs exactly two units of rank.
More generally, increasing $\ell$ creates additional free entries of the completion matrix but
uses two units of the rank constraint, while increasing $t$ uses the remaining
budget one unit at a time.  This competition becomes the later optimization
over the scaled variables $\lambda$ and $\tau$.
\end{remark}

The completion side has now been reduced to two ranks.  The radical-coupling
rank $\ell$ costs two units of the rank budget, while the residual symmetric
rank $t$ costs one.  The next subsection turns from this ambiguity count to the
probability pair $(a,u)$, which determines the lower rank through $r_B=a-u$.

\subsection{Lower row-space invariants and probability bound}
The completion count depends on the lower matrix only through its rank.  To average this count with the correct probability weight, we express that rank through the row-space dimension and radical dimension of the augmented lower data.  The common row-space estimate from the main text then supplies the required probability bound.
\label{sec:lower-block-classes}

The completion bound depends on $\zeta$ only through
$r_B=\rank B_{\zeta}$.  The probability estimate is organized by the row
space $W=\operatorname{row}(Z_2)$ and the row-space dimension and radical dimension
$a=\dim W$ and $u=\dim(W\cap W^\perp)$.  Since
$B_{\zeta}=Z_2Z_2^T$, one has $r_B=a-u$.  This subsection bounds the total
probability of each $(a,u)$-event; the completion argument uses only the
derived rank $a-u$.

\begin{table}[t]
\centering
\caption{Data passed from augmented-row-space event counting to the completion-count bound.}
\label{tab:unreported-visible-relation}
\small
\begin{tabularx}{\textwidth}{@{}l Y Y@{}}
\toprule
Quantity & Definition & Role in the weighted bound\\
\midrule
$\zeta$ & Fixed lower moment coordinates $(Q_{22},S_2)$ & Indexes the fiber $r^{-1}(\zeta)$.\\
$\pi_{\zeta}$ & Probability that uniform $X_2$ satisfies $\zeta(X_2)=\zeta$ & Supplies the probabilistic weight.\\
$d_{\zeta}/\lvert Y\rvert$ & Normalized cardinality of $r^{-1}(\zeta)\cap\operatorname{im}\Phi_k$ & Supplies the ambiguity factor.\\
$B_{\zeta}$ & Augmented Gram matrix determined by $\zeta$ & Defines the rank-constrained completion count.\\
$(a,u)$ & $a=\dim W$, $u=\dim(W\cap W^\perp)$ & Gives $\rank B_{\zeta}=a-u$ and indexes the probability estimate.\\
\bottomrule
\end{tabularx}
\end{table}

We now define the row-space parameters used in the probability theorem and
then bound the mass of every integer pair that can occur.  The dimension $a$
alone is insufficient because the completion rank is $a-u$.

\medskip\noindent\emph{Row-space parameters and the rank identity.}\par
The radical $W\cap W^\perp$ is the subspace of vectors in $W$ orthogonal to every vector of $W$; its dimension is denoted by $u$.

Let $X_2$ be uniform on $\F_p^{(n-m)\times k}$.  Adjoining the prescribed
all-ones row defines the lower augmented matrix and its row space:
\begin{equation}
 Z_2(X_2):=\begin{pmatrix}X_2\\\1_k^T\end{pmatrix},
 \quad
 W(X_2):=\operatorname{row}(Z_2(X_2))\subseteq\F_p^k,
 \label{eq:random-unreported-row-space}
\end{equation}
The two invariants needed by the completion estimate are the row-space
dimension and the dimension of its radical:
\begin{equation}
 A(X_2):=\dim W(X_2),
 \quad
 U(X_2):=\dim\bigl(W(X_2)\cap W(X_2)^\perp\bigr).
 \label{eq:a-u-definitions}
\end{equation}
Their joint event probability is denoted by
\begin{equation}
 \mathsf p_{n,m,k}^{\mathrm{odd}}(a,u)
 :=\Pr\{A(X_2)=a,\ U(X_2)=u\}.
 \label{eq:lower-block-class-mass-function}
\end{equation}
For a fixed realization of $X_2$, abbreviate
$Z_2=Z_2(X_2)$ and $W=W(X_2)$.  Because the final row of $Z_2$ is
$\1_k^T$, every such $W$ contains the prescribed vector $\1_k^T$.  The Gram
matrix of the rows of $Z_2$ is $B_{\zeta(X_2)}$.  To compute the Gram rank without assuming that the rows of $Z_2$ are
independent, use the surjective coefficient-to-row-space map
\begin{equation}
 \varphi:\F_p^h\longrightarrow W,
 \quad \varphi(c):=c^TZ_2.
 \label{eq:unreported-row-coefficient-map}
\end{equation}
Let $\langle\cdot,\cdot\rangle$ be the standard nondegenerate bilinear form
on $\F_p^k$.  The lower augmented Gram matrix $B_{\zeta}=Z_2Z_2^T$ represents the pullback
of the ambient bilinear form along this map:
\begin{equation}
 b_{\varphi}(c,d)
 :=\langle\varphi(c),\varphi(d)\rangle
 =c^TB_{\zeta}d.
 \label{eq:unreported-pullback-bilinear-form}
\end{equation}
A coefficient vector $c$ belongs to the radical of $b_{\varphi}$ exactly when
$\varphi(c)$ is orthogonal to every vector in $W$, equivalently when
$\varphi(c)\in W\cap W^\perp$.  Hence
\begin{equation}
 \ker B_{\zeta}
 =\varphi^{-1}(W\cap W^\perp).
 \label{eq:unreported-Gram-kernel-preimage}
\end{equation}
Since $\varphi$ is surjective, $\dim\ker\varphi=h-a$.  Its inverse image of the
$u$-dimensional space $W\cap W^\perp$ therefore has dimension
$(h-a)+u$.  Rank--nullity therefore gives
the completion rank directly as
\begin{equation}
 r_B=\rank B_{\zeta}
 =h-\bigl((h-a)+u\bigr)=a-u.
 \label{eq:lower-block-rank-row-space}
\end{equation}
Therefore the completion factor for an $(a,u)$-class is
$\mathcal A_{m,h,k}(a-u)$: the rank-only count in
Eq.~\eqref{eq:A-completion-rank-definition} evaluated at
$r_B=a-u$.  Equation~\eqref{eq:lower-block-rank-row-space} remains
valid whether or not $\1_k$ is isotropic, and hence whether or not $p$ divides
$k$.

\medskip\noindent\emph{Application of the common row-space bound.}\par
The augmented row space $W=\operatorname{row}(Z_2)$ equals
$W_{\1_k}(X_2)=\operatorname{span}(\{\1_k\}\cup\operatorname{row}X_2)$.
It is therefore governed directly by
Corollary~\ref{cor:all-prime-augmented-row-space-event-bound}, with
$v=\1_k$ and $r=n-m$.  This formulation does not require
$\1_k\in\operatorname{row}X_2$ and gives the following branch-specific
bound.

\subsubsection{Class probability from lower row-space invariants}
\label{subsubsec:odd-lower-class-probability}

\begin{proposition}[Odd-prime lower-block class probability]
\label{prop:lower-block-class-probability}
For every positive-mass integer pair $(a,u)$, the augmented row-space event
obeys the uniform bound
\begin{equation}
 \mathsf p_{n,m,k}^{\mathrm{odd}}(a,u)
 \le
 (K_p^{\mathrm{class}})^k
 p^{-(k-a)(n-m-a)-u(u+1)/2}.
 \label{eq:lower-block-class-probability}
\end{equation}
\end{proposition}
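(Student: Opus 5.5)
The plan is to reduce Proposition~\ref{prop:lower-block-class-probability} directly to Corollary~\ref{cor:all-prime-augmented-row-space-event-bound}. The reduction runs in three steps.

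\emph{Step 1: identify the random subspace.} By Eq.~\eqref{eq:random-unreported-row-space}, the row space of $Z_2(X_2)$ is spanned by the rows of $X_2$ together with the prescribed row $\1_k^T$. Hence
\[
 W(X_2)=W_{\1_k}(X_2)=\operatorname{span}\bigl(\{\1_k\}\cup\operatorname{row}X_2\bigr),
\]
exactly in the notation of Eq.~\eqref{eq:all-prime-augmented-row-space-invariants}, with $v=\1_k$, $r=n-m$ and $Y=X_2$. It follows that $A(X_2)=A_{\1_k}(X_2)$ and $U(X_2)=U_{\1_k}(X_2)$. The event whose probability is $\mathsf p_{n,m,k}^{\mathrm{odd}}(a,u)$ therefore coincides with the event $\{A_{\1_k}(X_2)=a,\ U_{\1_k}(X_2)=u\}$.

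\emph{Step 2: invoke the corollary.} Since $X_2$ is uniform on $\F_p^{(n-m)\times k}$, Eq.~\eqref{eq:all-prime-augmented-row-space-event-bound} gives
\[
 \mathsf p_{n,m,k}^{\mathrm{odd}}(a,u)\le (K_p^{\mathrm{class}})^k\,p^{-(k-a)(n-m-a)-u(u+1)/2}.
\]
This is Eq.~\eqref{eq:lower-block-class-probability}.

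\emph{Step 3: check the hypotheses of the corollary.} Three points need confirming.
\begin{itemize}
\item[(a)] The corollary was proved for an arbitrary fixed $v$. It covers the case $v\notin\operatorname{row}X_2$ and places no isotropy restriction on $v$. This matters because $\1_k$ is isotropic exactly when $p\mid k$.
\item[(b)] The positive-mass restriction on $(a,u)$ is harmless: the corollary holds uniformly in $(a,u)$, and the count $N_p(k;a,u)$ is defined to vanish for infeasible pairs.
\item[(c)] The exponent does not become favourable beyond what is true when $a>n-m$, in which case $(k-a)(n-m-a)\le 0$. In that regime the relaxation used in the corollary still applies. It bounds the probability that all $r$ rows of $X_2$ lie in a fixed $a$-space by $p^{-r(k-a)}$, and this holds for every $a$. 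The subsequent sum over spaces uses only Proposition~\ref{prop:all-prime-prescribed-radical-bound}, whose prescribed-vector clause also holds for every $a$.
\end{itemize}

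The only point needing care is (c), together with the degenerate case $k=0$, where the single empty tuple makes the statement trivial. I expect no real obstacle: the content of the proposition is the identification in Step~1, and all the counting has already been done in the common row-space section.
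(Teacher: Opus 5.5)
Your proposal is correct and is essentially the paper's proof: the paper likewise notes that $\operatorname{row}Z_2=W_{\1_k}(X_2)$ and applies Corollary~\ref{cor:all-prime-augmented-row-space-event-bound} with $Y=X_2$, $v=\1_k$, $r=n-m$. As you also observe, no isotropy assumption on $\1_k$ is needed, so the case $p\mid k$ is covered. Your extra checks for $a=n-m+1$ and $k=0$ are harmless, since the corollary's relaxation and subspace count hold for every $a$.
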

\begin{proof}
Apply Corollary~\ref{cor:all-prime-augmented-row-space-event-bound} with
$Y=X_2$, $v=\1_k$, and $r=n-m$.  The conclusion is independent of whether
$\1_k$ is isotropic, and hence also covers the case $p\mid k$.
\end{proof}

The probability side is now closed: it depends only on the pair $(a,u)$.
The next subsection returns to the completion side and places its exact count on a
uniform exponential scale before the two factors are combined.

\subsection{Uniform completion exponent}
The exact completion formula still contains finite rank-counting factors.  This subsection bounds them uniformly and identifies the quadratic exponent used in the main text.  Keeping the admissible ranges explicit is essential because the later clipping and asymptotic optimization depend on the same finite rank budget.
\label{sec:uniform-completion-exponent}

This subsection converts the exact enlarged completion count into a
dimension-uniform exponential bound.  The pair $(a,u)$ is fixed, and only the
completion-side ranks $(\ell,t)$ are optimized.  No
row-space-event probability estimate is used here.

\medskip\noindent\emph{Uniform rank-count inputs.}\par
The exact completion formula contains one rectangular and one symmetric rank
count.  We first bound both with constants independent of the matrix
dimensions.

The rectangular factor is controlled from its exact rank formula
\begin{equation}
 N_{\rm rect}(M,S,L;p)
 =\prod_{i=0}^{L-1}
 \frac{(p^M-p^i)(p^S-p^i)}{p^L-p^i}.
 \label{eq:rectangular-exact-formula}
\end{equation}
Factoring out the powers of $p$ and using
Eq.~\eqref{eq:kappa-GL-product} gives the dimension-uniform bound
\begin{equation}
 N_{\rm rect}(M,S,L;p)
 \le (\kappa_p^{\mathrm{GL}})^{-1}p^{L(M+S-L)}.
 \label{eq:rectangular-uniform-bound}
\end{equation}

We also require the standard uniform estimate for symmetric matrices.

\subsubsection{Uniform rank counts}
\label{subsubsec:odd-uniform-rank-counts}

\begin{lemma}[Uniform symmetric-rank bound]
\label{lem:symmetric-rank-bound}
For fixed odd $p$, there is a fixed constant $K_p^{\mathrm{sym}}>0$, used only for symmetric-rank counts, such that
\begin{equation}
 N_{\rm sym}(D,T;p)
 \le K_p^{\mathrm{sym}}p^{T(2D-T+1)/2},
 \label{eq:symmetric-rank-bound}
\end{equation}
for every $0\le T\le D$.
\end{lemma}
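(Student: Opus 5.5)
The plan is to write $N_{\rm sym}(D,T;p)$ as (number of $T$-dimensional subspaces) times (number of nondegenerate forms on each), and bound each factor separately.

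First I would use the fact that a symmetric matrix $M\in\Sym_D(\F_p)$ of rank $T$ is determined by two pieces of data. One is its row space $U=\operatorname{row}M$, a $T$-dimensional subspace of $\F_p^D$. The other is a nondegenerate symmetric bilinear form on the dual of $U$, or equivalently on $\F_p^D/\ker M$, with $\ker M=U^\perp$ under the standard form. Concretely, choose a basis matrix $V\in\F_p^{T\times D}$ of $U$. Then every such $M$ factors uniquely as $M=V^TNV$ with $N\in\Sym_T(\F_p)$ invertible, and conversely each such product has rank $T$ and row space $U$. This gives the exact identity
\[
 N_{\rm sym}(D,T;p)=\genfrac{[}{]}{0pt}{}{D}{T}_p\,\bigl|\{N\in\Sym_T(\F_p):\det N\ne0\}\bigr|.
\]
The factorization is linear algebra valid in every characteristic. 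Uniqueness follows because $V$ has full row rank, so $V^TNV$ determines $N$.

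Second, I would bound each factor uniformly in $D$ and $T$. The invertible symmetric $T\times T$ matrices are a subset of $\Sym_T(\F_p)$, so there are at most $p^{T(T+1)/2}$ of them. For the Grassmannian factor I would use the Gaussian-binomial estimate already used in the paper, Eq.~\eqref{eq:generic-original-gaussian-bound}:
\[
 \genfrac{[}{]}{0pt}{}{D}{T}_p\le(\kappa_p^{\mathrm{GL}})^{-1}p^{T(D-T)}.
\]
Multiplying the two bounds gives the exponent
\[
 T(D-T)+\tfrac{T(T+1)}2=\tfrac{T(2D-T+1)}2 .
\]
So the bound holds with $K_p^{\mathrm{sym}}:=(\kappa_p^{\mathrm{GL}})^{-1}$, which depends only on $p$. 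Oddness of $p$ is not actually needed, which is consistent with reusing the same bound in the binary branch.

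The only step needing care is the bijectivity of the factorization $M=V^TNV$ once $U$ and its basis $V$ are fixed. I would argue it directly. Any $M$ with row space $U$ can be written $M=V^TX$ for some $X\in\F_p^{T\times D}$. Symmetry together with full row rank of $V$ forces $X=NV$ for a symmetric $N$. The rank condition then forces $N$ to be invertible. The remaining estimates are routine.
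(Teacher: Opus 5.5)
Your argument is correct, and it reaches the bound by a slightly different and leaner route than the paper's proof of this lemma.

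The paper does not fix a basis of the image. It parametrizes rank-$T$ matrices by all pairs $(X,J)$, with $X\in\F_p^{D\times T}$ of full column rank and $J\in\Sym_T(\F_p)$ invertible. It shows that the fibers of $(X,J)\mapsto XJX^T$ are exactly the $\operatorname{GL}(T,p)$-orbits $(XG,G^{-1}JG^{-T})$. It then divides $(\kappa_p^{\mathrm{GL}})^{-1}p^{DT}\cdot p^{T(T+1)/2}$ by the lower bound $|\operatorname{GL}(T,p)|\ge\kappa_p^{\mathrm{GL}}p^{T^2}$, which gives a constant of order $(\kappa_p^{\mathrm{GL}})^{-2}$.

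You instead fix one basis matrix $V$ per $T$-dimensional subspace. This turns the orbit count into a genuine bijection and gives the exact identity
\[
N_{\rm sym}(D,T;p)=\genfrac{[}{]}{0pt}{}{D}{T}_p\,S_p(T)
\]
directly, where $S_p(T)$ is the number of invertible symmetric $T\times T$ matrices. The paper proves this same identity separately in Theorem~\ref{thm:exact-symmetric-rank-count} via the radical/quotient argument, and uses the same mechanism in the binary Lemma~\ref{lem:integrated-binary-symmetric-rank-bound}.

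Your route buys several things:
\begin{itemize}
\item no lower bound on $|\operatorname{GL}(T,p)|$ is needed;
\item no appeal to the MacWilliams count is needed, since the trivial bound $p^{T(T+1)/2}$ is all the paper actually extracts from it;
\item the constant improves to $K_p^{\mathrm{sym}}=(\kappa_p^{\mathrm{GL}})^{-1}$;
\item independence from the characteristic is visible.
\end{itemize}
The paper's orbit count avoids choosing a normal-form basis. But $\operatorname{GL}(T,p)$-orbits of full-rank $X$ are exactly the $T$-subspaces, so the two arguments are the same count organized differently.

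Your sketch of the bijectivity step is sound. Symmetry puts the columns of $M$ in $U$, so $M=V^TX$. With $L$ a left inverse of $V^T$, $X=LM$ has rows in $U$, hence $X=NV$. Injectivity of $V^T$ and surjectivity of $V$ then give uniqueness of $N$, its symmetry, and its invertibility.
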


\begin{proof}
We separate the representation, its change of basis, and the count.
Every rank-$T$ matrix $M\in\Sym_D(\F_p)$ admits a full-column-rank
factorization of the form
\begin{equation}
 M=XJX^T,
 \label{eq:symmetric-rank-factorization}
\end{equation}
where $X\in\F_p^{D\times T}$ has full column rank and
$J\in\Sym_T(\F_p)$ is invertible.  Choose the columns of $X$ as a basis of
$\operatorname{im}M$, choose a left inverse $L$ with $LX=I_T$, and put
\begin{equation}
 J:=LML^T.
 \label{eq:v84-symmetric-factor-J}
\end{equation}
Then $J$ is symmetric.  Since $XL$ is the identity on
$\operatorname{im}X=\operatorname{im}M$, one has $XLM=M$; transposing and
using $M=M^T$ gives $ML^TX^T=M$.  Therefore
\begin{equation}
 XJX^T=XLML^TX^T=M.
 \label{eq:v84-symmetric-factor-verification}
\end{equation}
Finally,
$T=\rank M=\rank(XJX^T)\le\rank J\le T$, so $J$ is invertible.

Changing the basis of the image space produces the action
\begin{equation}
 (X,J)\cdot G=(XG,G^{-1}JG^{-T}).
 \label{eq:symmetric-factor-change-of-basis}
\end{equation}
Equation~\eqref{eq:symmetric-rank-factorization} is unchanged under
Eq.~\eqref{eq:symmetric-factor-change-of-basis}.  Conversely, two full-rank
factorizations of the same $M$ have the same column space
$\operatorname{im}M$, hence their matrices $X$ differ by a unique
$G\in\operatorname{GL}(T,p)$; the equality of the two factorizations then
forces the displayed transformation of $J$.  Thus the fibers of the map
$(X,J)\mapsto XJX^T$ have cardinality exactly
$|\operatorname{GL}(T,p)|$.

The number of full-column-rank $D\times T$ matrices is
$N_{\rm rect}(D,T,T;p)$ and, by Eq.~\eqref{eq:rectangular-uniform-bound}, is at
most
\begin{equation}
 (\kappa_p^{\mathrm{GL}})^{-1}p^{DT}.
 \label{eq:symmetric-count-full-rank-X}
\end{equation}
The exact nonsingular symmetric-form count is
Eq.~\eqref{eq:MacWilliams-full-rank-count}; its uniform bound
Eq.~\eqref{eq:nonsingular-symmetric-uniform-bounds} gives
\begin{equation}
 |\{J\in\Sym_T(\F_p):J\text{ invertible}\}|
 \le p^{T(T+1)/2}.
 \label{eq:symmetric-count-invertible-J}
\end{equation}
Thus no dimension- or orthogonal-type-dependent constant is needed in this
factor.  Finally,
\begin{equation}
 |\operatorname{GL}(T,p)|
 \ge\kappa_p^{\mathrm{GL}}p^{T^2}.
 \label{eq:symmetric-count-GL-lower}
\end{equation}
Multiply Eqs.~\eqref{eq:symmetric-count-full-rank-X} and
\eqref{eq:symmetric-count-invertible-J}, divide by
Eq.~\eqref{eq:symmetric-count-GL-lower}, and simplify the exponent:
\begin{equation}
 DT+\frac{T(T+1)}2-T^2
 =\frac{T(2D-T+1)}2.
 \label{eq:symmetric-rank-exponent-assembly}
\end{equation}
Absorbing the two fixed product constants into $K_p^{\mathrm{sym}}$ proves
Eq.~\eqref{eq:symmetric-rank-bound}.
\end{proof}

\begin{corollary}[Symmetric rank partial sums]
\label{cor:symmetric-rank-partial-sum}
For $0\le R\le D$,
\begin{equation}
 \sum_{T=0}^R N_{\rm sym}(D,T;p)
 \le K_p^{\mathrm{sym}}(D+1)p^{R(2D-R+1)/2}.
 \label{eq:symmetric-partial-sum}
\end{equation}
\end{corollary}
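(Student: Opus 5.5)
The plan is to sum the pointwise bound of Lemma~\ref{lem:symmetric-rank-bound} over $T=0,\dots,R$ and then dominate every summand by the one at $T=R$. Concretely, Lemma~\ref{lem:symmetric-rank-bound} gives, for each $0\le T\le R\le D$,
\[
 N_{\rm sym}(D,T;p)\le K_p^{\mathrm{sym}}p^{f(T)},
 \qquad f(T):=\frac{T(2D-T+1)}2 .
\]
The left side of Eq.~\eqref{eq:symmetric-partial-sum} is therefore at most $K_p^{\mathrm{sym}}\sum_{T=0}^R p^{f(T)}$.

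The key step is monotonicity of $f$ on the integers $0\le T\le D$. The forward difference is
\[
 f(T+1)-f(T)=D-T .
\]
This is nonnegative for every $T\le D-1$, so $f$ is nondecreasing on $\{0,1,\dots,D\}$. Consequently $f(T)\le f(R)$ for all $0\le T\le R$. The sum has $R+1\le D+1$ terms, each at most $p^{f(R)}$, so
\[
 \sum_{T=0}^R N_{\rm sym}(D,T;p)\le K_p^{\mathrm{sym}}(R+1)p^{f(R)}\le K_p^{\mathrm{sym}}(D+1)p^{R(2D-R+1)/2},
\]
which is exactly Eq.~\eqref{eq:symmetric-partial-sum}.

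There is no real obstacle. The only point to check is that the hypothesis $R\le D$ is what places the whole summation range inside the region where $f$ increases. One could sharpen the bound to a geometric sum with a $D$-independent constant, but the stated factor $D+1$ is all that is needed downstream, since it contributes only an $O_p(\log n)$ term to the exponent.
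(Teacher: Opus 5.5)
Your proposal is correct and matches the paper's proof: both apply the pointwise bound of Lemma~\ref{lem:symmetric-rank-bound}, use the forward difference $f(T+1)-f(T)=D-T\ge 0$ to dominate every summand by the $T=R$ term, and count at most $D+1$ terms. Your aside about sharpening to a $D$-independent constant is also valid, since $f(R)-f(T)\ge R-T$ on this range, but the paper does not need it.
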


\begin{proof}
The exponent $f_D(T)=T(2D-T+1)/2$ is nondecreasing on $0\le T\le D$, because $f_D(T+1)-f_D(T)=D-T$.  Hence every term with $T\le R$ is bounded by the $T=R$ exponent, and there are at most $D+1$ terms.
\end{proof}

\medskip\noindent\emph{Finite exponent in the completion bound.}\par
Fix a positive-mass probability class $(a,u)$.  For completion ranks
$\ell,t\ge0$, collect the logarithmic contribution of every counting factor in
\begin{align}
 E_{\rm fin}(m,h;a,u,\ell,t)
 :={}&m(a-u)+\ell(m+h-a+u-\ell)
 +\ell m-\frac{\ell(\ell-1)}2\notag\\
 &+\frac{t[2(m-\ell)-t+1]}2.
 \label{eq:E-fin}
\end{align}
This is the base-$p$ exponent of the summand in the enlarged low-rank completion count
corresponding to $\rank C_0=\ell$ and $\rank A_{22}=t$.  The four contributions to this exponent are
\begin{equation}
 \begin{aligned}
 E_{C_1}&=m(a-u),\\
 E_{C_0}&=\ell(m+h-a+u-\ell),\\
 E_{A_{11},A_{12}}&=\ell m-\frac{\ell(\ell-1)}2,\\
 E_{A_{22}}&=\frac{t[2(m-\ell)-t+1]}2.
 \end{aligned}.
 \label{eq:odd-completion-factor-exponent-dictionary}
\end{equation}
Here $E_{C_1}$ counts the choices of $C_1$, $E_{C_0}$ counts matrices $C_0$
of rank $\ell$, $E_{A_{11},A_{12}}$ counts the free $(A_{11},A_{12})$
coordinates, and $E_{A_{22}}$ counts matrices $A_{22}$ of rank $t$.
Their sum is exactly Eq.~\eqref{eq:E-fin}; no factor is absorbed into the
lower-order remainder.

\subsubsection{Completion exponent bound}
\label{subsubsec:odd-completion-exponent-bound}

\begin{proposition}[Finite completion bound for rank $a-u$]
\label{prop:uniform-completion-bound}
Let
\begin{equation}
 C_p^{\mathrm{comp}}:=(\kappa_p^{\mathrm{GL}})^{-1}K_p^{\mathrm{sym}}.
 \label{eq:odd-completion-uniform-constant}
\end{equation}
For every $(a,u)\in\mathfrak P_n$, the exact completion sum satisfies the
fully finite bound
\begin{equation}
 \mathcal A_{m,h,k}(a-u)
 \le C_p^{\mathrm{comp}}(m+1)^2
 p^{\max_{(\ell,t)\in\mathcal G_{a,u}}
 E_{\rm fin}(m,h;a,u,\ell,t)}.
 \label{eq:uniform-completion-bound}
\end{equation}
where the maximum is over the integer pairs $(\ell,t)$ satisfying
\begin{equation}
 0\le\ell\le\min\{m,h-a+u\},\quad
 0\le t\le m-\ell,\quad
 a-u+2\ell+t\le k.
 \label{eq:finite-completion-ranges}
\end{equation}
If no such pair exists, then $\mathcal A_{m,h,k}(a-u)=0$.
The constant $C_p^{\mathrm{comp}}$ depends only on $p$ and is independent of
$m,h,k,a,u,\ell,t$.
\end{proposition}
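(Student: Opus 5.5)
The plan is to start from the exact count in Proposition~\ref{prop:exact-ambient-completion-count}, evaluated at $r_B=a-u$ with $s=h-r_B=h-a+u$, and to bound it term by term. The summation range there is $0\le\ell\le\min\{m,s,\lfloor(k-r_B)/2\rfloor\}$ and $0\le t\le\min\{m-\ell,k-r_B-2\ell\}$. This is exactly the set $\mathcal G_{a,u}$ of Eq.~\eqref{eq:finite-completion-ranges}: the condition $\ell\le s$ reads $\ell\le h-a+u$, and the rank budget $a-u+2\ell+t\le k$ already forces $\ell\le\lfloor(k-r_B)/2\rfloor$. Two cases follow immediately. If $\mathcal G_{a,u}$ is empty, then either $k<r_B$ or no $\ell\ge0$ satisfies the budget. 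In both situations the proposition gives $\mathcal A_{m,h,k}(a-u)=0$, which is the last claim of the statement.

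For a nonempty range, fix $\ell$ and bound the factors one at a time.
\begin{itemize}
\item The factor $p^{mr_B}$ is $p^{E_{C_1}}$.
\item The rectangular count obeys $N_{\rm rect}(m,s,\ell;p)\le(\kappa_p^{\mathrm{GL}})^{-1}p^{\ell(m+s-\ell)}$ by Eq.~\eqref{eq:rectangular-uniform-bound}. Since $s=h-a+u$, this is $(\kappa_p^{\mathrm{GL}})^{-1}p^{E_{C_0}}$.
\item The factor $p^{\ell m-\ell(\ell-1)/2}$ equals $p^{E_{A_{11},A_{12}}}$ exactly.
\item For the inner sum over $t$, let $R_\ell:=\min\{m-\ell,k-r_B-2\ell\}$. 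Apply Corollary~\ref{cor:symmetric-rank-partial-sum} with $D=m-\ell$ and $R=R_\ell$. This bounds the sum by $K_p^{\mathrm{sym}}(m-\ell+1)p^{R_\ell(2(m-\ell)-R_\ell+1)/2}$, and the exponent is $E_{A_{22}}$ at $t=R_\ell$.
\end{itemize}
Since $(\ell,R_\ell)\in\mathcal G_{a,u}$, the $\ell$-summand is at most $C_p^{\mathrm{comp}}(m+1)\,p^{E_{\rm fin}(m,h;a,u,\ell,R_\ell)}$, using $m-\ell+1\le m+1$. Summing over at most $m+1$ values of $\ell$ produces the second factor $(m+1)$. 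Each summand is bounded by the maximum of $E_{\rm fin}$ over $\mathcal G_{a,u}$, which gives Eq.~\eqref{eq:uniform-completion-bound}.

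I expect the step needing most care to be the monotonicity used inside the corollary. The function $t\mapsto t(2D-t+1)/2$ is nondecreasing only on $0\le t\le D$. The truncation $R_\ell\le m-\ell$ keeps $t$ in that range, so the partial sum is legitimately dominated by its endpoint. I also need to check that $R_\ell\ge0$ whenever $\ell$ lies in the outer range, so that the pair $(\ell,R_\ell)$ really belongs to $\mathcal G_{a,u}$. Finally, the constant must stay uniform: only the single factor $(\kappa_p^{\mathrm{GL}})^{-1}$ and the single constant $K_p^{\mathrm{sym}}$ appear, each independent of $m,h,k,a,u$. So $C_p^{\mathrm{comp}}$ depends only on $p$, as the statement claims.
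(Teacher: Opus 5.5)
Your proposal is correct and follows the paper's proof. Both start from the exact count of Proposition~\ref{prop:exact-ambient-completion-count} at $r_B=a-u$, $s=h-a+u$, insert the rectangular bound \eqref{eq:rectangular-uniform-bound} and the symmetric-rank bound, and pay a polynomial factor for the number of $(\ell,t)$ terms; the only cosmetic difference is that you collapse the inner $t$-sum with Corollary~\ref{cor:symmetric-rank-partial-sum}, landing at the admissible endpoint $(\ell,R_\ell)$, whereas the paper bounds every summand by $C_p^{\mathrm{comp}}p^{E_{\rm fin}}$ and uses $|\mathcal G_{a,u}|\le(m+1)^2$ directly.
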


\begin{proof}
Start from the exact count in Eq.~\eqref{eq:exact-completion-count}, with
$r_B=a-u$ and $s=h-a+u$.  For a fixed admissible pair $(\ell,t)$,
substitute the rectangular estimate Eq.~\eqref{eq:rectangular-uniform-bound}
and the symmetric estimate Eq.~\eqref{eq:symmetric-rank-bound} into the
corresponding summand of Eq.~\eqref{eq:exact-completion-count}.  Up to a
constant depending only on $p$, that summand is bounded by
\begin{align}
 &p^{m(a-u)}
 p^{\ell[m+(h-a+u)-\ell]}
 p^{\ell m-\ell(\ell-1)/2}
 p^{t[2(m-\ell)-t+1]/2}\notag\\
 &\quad
 =p^{E_{\rm fin}(m,h;a,u,\ell,t)},
 \label{eq:completion-summand-exponent-assembly}
\end{align}
where the equality is exactly the definition in Eq.~\eqref{eq:E-fin}.  The
rectangular and symmetric rank bounds contribute the dimension-independent constant
$C_p^{\mathrm{comp}}$ from
Eq.~\eqref{eq:odd-completion-uniform-constant}.  Thus
\begin{equation}
 \mathcal A_{m,h,k}(a-u)
 \le C_p^{\mathrm{comp}}\sum_{(\ell,t)\in\mathcal G_{a,u}}
 p^{E_{\rm fin}(m,h;a,u,\ell,t)}.
 \label{eq:completion-bound-before-max}
\end{equation}
The set $\mathcal G_{a,u}$ has at most $(m+1)^2$ elements.  Therefore
Eq.~\eqref{eq:completion-bound-before-max} implies
\begin{align}
 \mathcal A_{m,h,k}(a-u)
 &\le C_p^{\mathrm{comp}}(m+1)^2
 p^{\max_{(\ell,t)\in\mathcal G_{a,u}}
 E_{\rm fin}(m,h;a,u,\ell,t)}.
 \label{eq:completion-max-term-bound}
\end{align}
This is Eq.~\eqref{eq:uniform-completion-bound}.  The constant and the bound
$|\mathcal G_{a,u}|\le(m+1)^2$ are independent of $(a,u)$, proving the stated
uniformity.
\end{proof}

Normalization is by the full visible-coordinate alphabet.  Its cardinality is
\begin{equation}
 \lvert Y\rvert=p^{D_{\rm vis,fin}},
 \quad
 D_{\rm vis,fin}:=mh+\frac{m(m+1)}2.
 \label{eq:D-vis-fin}
\end{equation}
Substituting $h=n-m+1$ gives
\begin{align}
 mh+\frac{m(m+1)}2
 &=m(n-m+1)+\frac{m(m+1)}2\notag\\
 &=(n+1)m-\frac{m(m-1)}2,
 \label{eq:visible-dimension-finite-algebra}
\end{align}
which agrees with the exponent of $\lvert Y\rvert$ in Eq.~\eqref{eq:Y-size}.  For fixed integers $(a,u)$ with positive row-space-event probability, the normalized completion-count exponent is therefore obtained by subtracting $D_{\rm vis,fin}$ from the maximum of $E_{\rm fin}$.  The common reduction below performs this normalization after scaling by $n$.
More explicitly, under $m/n\to B$, $k/n\to A$,
$a/n\to a_0$, $u/n\to u_0$, $\ell/n\to\lambda$, and
$t/n\to\tau$, division by $n^2$ sends the finite rank budget
$a-u+2\ell+t\le k$ to
$a_0-u_0+2\lambda+\tau\le A$.  Each quadratic term in $E_{\rm fin}/n^2$ converges to the main-text
quantity $E$ defined in Eq.~\eqref{eq:direct-finite-E}, while
$D_{\rm vis,fin}/n^2$ converges to $D$ defined in
Eq.~\eqref{eq:direct-finite-D}.  The linear and logarithmic corrections are
$o(1)$ after normalization.  Thus the normalized completion contribution is
$E-D$, exactly as in the common variational bound.

The branch-specific matrix data have now been reduced to a finite domain and
an exponent indexed by the probability pair $(a,u)$ and the completion pair
$(\ell,t)$.  The proof below assembles the completion bound, the row-space
probability estimate, and the uniform exponent estimate into the stated
structural theorem.

\subsection{Completion of the proof of the odd-prime finite structural estimate}

\begin{proof}[Proof of Theorem~\ref{thm:converse-odd-prime-finite-structural-estimate}]
The proof now combines three independently established finite statements: the
rank-constrained completion bound, the lower row-space probability estimate,
and the uniform completion-exponent bound.  The first controls ambiguity, the
second supplies its probability weight, and the third places the completion
count on the exponent scale required by the common asymptotic theorem.

\smallskip
\noindent\emph{Rank-constrained completion.}
Fix a lower-coordinate value $\zeta$ and put
\begin{equation}
 B_{\zeta}:=
 \begin{pmatrix}Q_{22}&S_2\\S_2^T&k\end{pmatrix}.
 \label{eq:theorem-B-B-zeta}
\end{equation}
The canonical block decomposition in Proposition~\ref{prop:converse-weighted-moment-fiber-reduction}
shows that every occurring visible coordinate triple above $\zeta$ produces a
symmetric completion
\begin{equation}
 \begin{pmatrix}Q_{11}&C\\C^T&B_{\zeta}\end{pmatrix}.
 \label{eq:theorem-B-completion-matrix}
\end{equation}
of rank at most $k$.  Lemma~\ref{lem:visible-to-ambient-completion} enlarges
the realizable set only in the upper-bound direction.  Congruence separates
the nondegenerate part of $B_{\zeta}$ from its radical, and
Lemma~\ref{lem:radical-rank-identity} shows that a radical coupling of rank
$\ell$ costs exactly $2\ell$ units of rank.  The remaining symmetric block has
rank $t$.  Proposition~\ref{prop:exact-ambient-completion-count} therefore
gives the exact enlarged count in Eq.~\eqref{eq:exact-completion-count}, which
depends on the fixed lower matrix only through
$r_B=\rank B_{\zeta}$.

For a lower block with row-space parameters $(a,u)$,
Eq.~\eqref{eq:lower-block-rank-row-space} gives
\begin{equation}
 r_B=a-u.
 \label{eq:theorem-B-rB-au}
\end{equation}
The completion indices consequently obey
\begin{equation}
 0\le\ell\le\min\{m,h-a+u\},\quad
 0\le t\le m-\ell,\quad
 a-u+2\ell+t\le k.
 \label{eq:theorem-B-proof-finite-completion-ranges}
\end{equation}
Applying the uniform rectangular- and symmetric-rank estimates term by term to
the exact count yields Proposition~\ref{prop:uniform-completion-bound}:
\begin{equation}
 \mathcal A_{m,h,k}(a-u)
 \le
 p^{\max_{(\ell,t)\in\mathcal G_{a,u}}
 E_{\rm fin}(m,h;a,u,\ell,t)+O_p(\log n)}.
 \label{eq:theorem-B-proof-uniform-completion-bound}
\end{equation}
The remainder is uniform in the linear-size regime.  Since the actual number
of occurring visible coordinates is at most both this enlarged count and
$\lvert Y\rvert$, the clipped factor used in the exponent comparison is
\begin{equation}
 \frac{d_{\zeta}}{\lvert Y\rvert}
 \le
 \min\left\{1,
 \frac{\mathcal A_{m,h,k}(a-u)}{\lvert Y\rvert}\right\}.
 \label{eq:theorem-B-proof-clipped-completion-bound}
\end{equation}
The minimum with one is not an asymptotic convenience; it is the exact finite
cardinality cap.
Taking the base-$p$ logarithm uses the exact identity
\begin{equation}
 \log_p\min\{1,p^x\}=\min\{0,x\}.
 \label{eq:finite-clipping-log-identity}
\end{equation}
Moreover, the map $x\mapsto\min\{0,x\}$ is one-Lipschitz.  Consequently an
additive remainder bounded uniformly before clipping remains bounded by the
same remainder after clipping, including when the maximizing exponent crosses
zero.  No separation from the clipping boundary is required.

\smallskip
\noindent\emph{B1. Rank-constrained completion bound.}\par
The preceding completion argument gives the clipped ambiguity factor in
Eq.~\eqref{eq:theorem-B-proof-clipped-completion-bound}, including the
uniform finite exponent estimate and its exact cardinality cap.

\smallskip
\noindent\emph{B2. Probability of the row-space event.}
Let $Z_2=(X_2;\1_k^T)$ and let $R=\operatorname{row}Z_2$.  Write
$a=\dim R$ and $u=\dim(R\cap R^\perp)$.  The kernel argument in
Eq.~\eqref{eq:lower-block-rank-row-space} gives
$\rank(Z_2Z_2^T)=a-u$, consistently with the completion rank used above.  Counting row spaces of
dimension $a$ and radical dimension $u$, and then counting matrices with that
row space, gives Proposition~\ref{prop:lower-block-class-probability}:
\begin{equation}
 \mathsf p_{n,m,k}^{\mathrm{odd}}(a,u)
 \le
 (K_p^{\mathrm{class}})^k
 p^{-(k-a)(n-m-a)-u(u+1)/2}.
 \label{eq:theorem-B-proof-lower-block-probability}
\end{equation}
The constant depends only on the fixed prime and the estimate is uniform over
all positive-mass pairs.  The totally isotropic-subspace estimates used here
are upper bounds, so no unproved classification or equality assumption enters
the probability estimate.

\smallskip
\noindent\emph{Assembly and clipping.}
Group the weighted bound of
Proposition~\ref{prop:converse-weighted-moment-fiber-reduction} by the pair $(a,u)$.
Multiplying the probability of each $(a,u)$ class by its clipped ambiguity
factor gives
\begin{equation}
 P_{\mathrm{id}}^{(k),*}(n,m;p)
 \le\sum_{(a,u)\in\mathfrak P_n}
 \mathsf p_{n,m,k}^{\mathrm{odd}}(a,u)
 \min\left\{1,
 \frac{\mathcal A_{m,h,k}(a-u)}{\lvert Y\rvert}\right\}.
 \label{eq:finite-probability-comparison-3}
\end{equation}
By Eq.~\eqref{eq:theorem-B-D-vis-fin},
$\lvert Y\rvert=p^{D_{\rm vis,fin}}$.  Substitute the bounds from B1 and B2 into
Eq.~\eqref{eq:finite-probability-comparison-3}.  This gives
Eq.~\eqref{eq:finite-probability-comparison-4} exactly, including the uniform
$O_p(\log n)$ term.

Finally, clipping and the probability prefactor are converted to base-$p$
exponents by
\begin{equation}
 \min\{1,p^x\}=p^{\min\{0,x\}},
 \quad
 (K_p^{\mathrm{class}})^k=p^{O_p(k)}.
 \label{eq:theorem-B-clipping-identities}
\end{equation}
When $m,h,k=O(n)$ these identities convert
Eq.~\eqref{eq:finite-probability-comparison-4} into
Eq.~\eqref{eq:finite-probability-comparison-5}, with uniform remainders.  This is
the claimed odd-prime finite structural estimate.
\end{proof}

The odd-prime branch is now complete.  The individual lower moments and all
completion matrices have been absorbed into the two structural parameter
pairs, their finite feasible domain, the exponent $E_{\rm fin}$, and the
clipping rule.  Only this interface is carried into the common asymptotic
reduction.

\section{Proof of the characteristic-two finite structural estimate}
\label{app:binary-finite-structural-proof}
This appendix proves Theorem~\ref{thm:converse-binary-finite-structural-estimate}, the characteristic-two input to the all-prime converse.  The binary case cannot be obtained by formally setting $p=2$ in the odd-prime quadratic argument: the observable moment is a bilinear shadow together with an enhanced diagonal lift.  The proof is therefore separated into a bilinear completion count, the comparison with enhanced moments and its $2^m$ lift cost, the common row-space probability estimate, and the finite exponent assembly.

\subsection{Bilinear completion count}
The characteristic-two argument first counts completions of the bilinear form associated with the enhanced quadratic data.  At this level the radical-coupling and residual-rank parameterization parallels the odd-prime calculation.  The distinction between the bilinear form and its enhanced diagonal data is restored in the following subsection rather than being hidden inside this count.
\label{subsubsec:binary-complete-finite-proof}
\label{sec:integrated-qubit-completion}
This subsection derives the exact bilinear completion count.  In the present finite calculation
the enhanced mod-$4$ diagonal data are deliberately absent; the next
subsection maps the actual enhanced moments to the bilinear objects counted
here and restores their diagonal lifts.  Let $Y$ be uniformly distributed on
$\F_2^{(n-m)\times k}$.  Its row space and the two invariants used by the
bilinear completion problem are
\begin{equation}
 W_2(Y):=\operatorname{row}Y\subseteq\F_2^k,
 \quad
 A_2(Y):=\dim W_2(Y),
 \quad
 U_2(Y):=\dim\bigl(W_2(Y)\cap W_2(Y)^\perp\bigr).
 \label{eq:integrated-qubit-au-lower-block-rank}
\end{equation}
The joint event probability and the induced lower bilinear rank are
\begin{equation}
 \mathsf p_{n,m,k}^{(2)}(a,u)
 :=\Pr\{A_2(Y)=a,\ U_2(Y)=u\},
 \quad r_{\rm bin}:=a-u.
 \label{eq:integrated-binary-unreported-mass-function}
\end{equation}
Thus $a$ and $u$ are the row-space dimension and radical dimension,
respectively, while $r_{\rm bin}=a-u$ is the lower bilinear rank used in the
completion count.  For a fixed lower bilinear rank $r_{\rm bin}$, the following completion count of
rank-constrained symmetric block completions is
\begin{align}
 \mathcal N_{m,r,k}^{(2)}(r_{\rm bin})
 :={}&2^{m r_{\rm bin}}
 \sum_{\ell=0}^{\min\{m,r-r_{\rm bin},\lfloor(k-r_{\rm bin})/2\rfloor\}}
 N_{\rm rect}(m,r-r_{\rm bin},\ell;2)
 2^{\ell m-\ell(\ell-1)/2}\notag\\
 &\quad\times
 \sum_{t=0}^{\min\{m-\ell,k-r_{\rm bin}-2\ell\}}
 N_{\rm sym}(m-\ell,t;2).
 \label{eq:integrated-qubit-completion-count}
\end{align}
We first prove that this is the exact number of bilinear block completions.
The next three lemmas eliminate the concrete completion matrices in favor of
the coupling rank $\ell$ and residual rank $t$.

\subsubsection{Binary Schur reduction and radical coupling}
\label{subsubsec:binary-schur-radical-reduction}

\begin{lemma}[Binary Schur reduction]
\label{lem:binary-Schur-reduction}
Let $B\in\Sym_r(\F_2)$ have rank $r_{\rm bin}$ and nullity
$s:=r-r_{\rm bin}$.  Choose a congruence separating its invertible and radical
parts:
\begin{equation}
 P^TBP=\begin{pmatrix}D&0\\0&0_s\end{pmatrix},
 \quad D\in\Sym_{r_{\rm bin}}(\F_2)\text{ invertible}.
 \label{eq:binary-lower-block-rank-decomposition}
\end{equation}
Split the coupling matrix in the same basis and form the Schur translate
\begin{equation}
 A:=Q-C_1D^{-1}C_1^T.
 \label{eq:binary-Schur-translation}
\end{equation}
The completed rank is the sum of the fixed lower-block rank and the residual
radical-coupling rank:
\begin{equation}
 \rank\begin{pmatrix}Q&C\\C^T&B\end{pmatrix}
 =r_{\rm bin}+
 \rank\begin{pmatrix}A&C_0\\C_0^T&0_s\end{pmatrix},
 \label{eq:binary-first-rank-reduction}
\end{equation}
and, for fixed $C_1$, $Q\mapsto A$ is a bijection of
$\Sym_m(\F_2)$.
\end{lemma}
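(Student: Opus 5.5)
The plan is to mirror the odd-prime proof of Lemma~\ref{lem:first-congruence} line by line, checking that no step divides by two. Since $D$ in Eq.~\eqref{eq:binary-lower-block-rank-decomposition} is invertible over $\F_2$, the Schur complement is a purely field-theoretic operation. The characteristic therefore enters only through the symmetry of the matrices involved, and not through any halving.

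\emph{Congruence.} First apply the congruence $\operatorname{diag}(I_m,P)$ to the completed matrix. Rank is invariant under congruence by an invertible matrix. Writing $CP=(C_1\ C_0)$ with $C_1\in\F_2^{m\times r_{\rm bin}}$ and $C_0\in\F_2^{m\times s}$, the completed matrix becomes the $3\times3$ block matrix
\[
\begin{pmatrix}Q&C_1&C_0\\ C_1^T&D&0\\ C_0^T&0&0_s\end{pmatrix}.
\]

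\emph{Elimination.} Next eliminate $C_1$ with the unipotent congruence $E=\begin{pmatrix}I_m&0&0\\ -D^{-1}C_1^T&I&0\\ 0&0&I_s\end{pmatrix}$. Over $\F_2$ the minus sign is immaterial. Computing $E^T(\cdot)E$ explicitly, the $(1,1)$ block becomes $Q-C_1D^{-1}C_1^T=A$ and the $(1,2)$ and $(2,1)$ blocks vanish. The $C_0$ blocks are unchanged, because the second block row and column carry no coupling to the radical coordinates. This exhibits the completed matrix as congruent to
\[
D\oplus\begin{pmatrix}A&C_0\\ C_0^T&0_s\end{pmatrix}.
\]
Since $D$ is invertible of rank $r_{\rm bin}$, Eq.~\eqref{eq:binary-first-rank-reduction} follows from additivity of rank over direct sums.

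\emph{Bijection.} For fixed $C_1$, the matrix $D^{-1}$ is symmetric because $D$ is, so $C_1D^{-1}C_1^T\in\Sym_m(\F_2)$. The map $Q\mapsto A$ is therefore a translation of $\Sym_m(\F_2)$ and hence a bijection.

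The main point to verify is exactly this symmetry of $C_1D^{-1}C_1^T$, since it is what keeps $A$ inside $\Sym_m(\F_2)$. This concerns symmetric, not alternating, matrices: the diagonal of $A$ may be nonzero, and we do not need $A$ to be alternating. No enhanced diagonal data appear at this level, so the lemma is genuinely characteristic-free. The division-by-two issue arises only later, in the radical-coupling step (Lemma~\ref{lem:binary-radical-rank-identity}).
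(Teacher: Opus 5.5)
Your proposal is correct and follows the paper's own route. The paper's proof invokes the field-independent Schur-complement reduction of Lemma~\ref{lem:first-congruence}: congruence by $\operatorname{diag}(I_m,P)$, then a block congruence to $D\oplus\bigl(\begin{smallmatrix}A&C_0\\ C_0^T&0_s\end{smallmatrix}\bigr)$. It then observes that symmetry of $D^{-1}$ makes $Q\mapsto A$ a translation of $\Sym_m(\F_2)$. Your version differs only in writing out the unipotent elimination matrix explicitly, which confirms directly that no division by two occurs.
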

\begin{proof}
The field-independent Schur-complement reduction of
Lemma~\ref{lem:first-congruence} applies over $\F_2$ with $r_B=r_{\rm bin}$.
It gives Eq.~\eqref{eq:binary-first-rank-reduction} directly.  In
characteristic two, the translated block may equivalently be written
$A=Q+C_1D^{-1}C_1^T$.  Since $D^{-1}$ is symmetric, this translation preserves
$\Sym_m(\F_2)$ and is bijective for fixed $C_1$.
\end{proof}

\begin{lemma}[Binary radical-coupling rank identity]
\label{lem:binary-radical-rank-identity}
Let $A\in\Sym_m(\F_2)$ and let $C_0\in\F_2^{m\times s}$ have rank
$\ell$.  Choose bases that put the coupling into standard rank form and split
the transformed symmetric matrix accordingly:
\begin{equation}
 RC_0S=\begin{pmatrix}I_\ell&0\\0&0\end{pmatrix},
 \quad
 RAR^T=\begin{pmatrix}A_{11}&A_{12}\\A_{12}^T&A_{22}\end{pmatrix}.
 \label{eq:binary-radical-normal-forms}
\end{equation}
The coupling contributes exactly $2\ell$ units of rank, with the remaining
contribution carried by $A_{22}$:
\begin{equation}
 \rank\begin{pmatrix}A&C_0\\C_0^T&0_s\end{pmatrix}
 =2\ell+\rank A_{22}.
 \label{eq:binary-radical-rank-identity}
\end{equation}
\end{lemma}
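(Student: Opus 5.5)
The plan follows the odd-prime proof of Lemma~\ref{lem:radical-rank-identity} up to the last step. At that step the congruence by $-\frac12A_{11}$ is unavailable over $\F_2$, so it is replaced by a direct rank computation that never divides by two.

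\emph{Normalization.} Apply the congruence $\operatorname{diag}(R^T,S)$ to the matrix $\bigl(\begin{smallmatrix}A&C_0\\C_0^T&0_s\end{smallmatrix}\bigr)$. Then delete the zero direct summand of size $s-\ell$. This reduces the claim to computing the rank of
\[
 M:=\begin{pmatrix}A_{11}&A_{12}&I_\ell\\ A_{12}^T&A_{22}&0\\ I_\ell&0&0\end{pmatrix}.
\]
Congruence preserves rank over any field, and deleting a zero summand does not change rank either.

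\emph{Clearing $A_{12}$.} Use simultaneous row and column operations that add multiples of the third block to the second. Concretely, conjugate by the unimodular matrix that subtracts $A_{12}^T$ times the third block row from the second block row, together with the transposed column operation. This removes $A_{12}$ and $A_{12}^T$ and leaves $A_{22}$ unchanged. It involves only additions, so it is valid in characteristic two.

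\emph{Rank of the remaining matrix without dividing by two.} Instead of eliminating $A_{11}$ by a congruence, I would compute the rank by plain row reduction, using that rank is invariant under arbitrary invertible left multiplication and need not be preserved as a congruence. Subtracting $A_{11}$ times the third block row from the first block row turns the first block row into $(0,0,I_\ell)$. The matrix then becomes block-permutation equivalent to $\bigl(\begin{smallmatrix}0&I_\ell\\I_\ell&0\end{smallmatrix}\bigr)\oplus A_{22}$, whose rank is $2\ell+\rank A_{22}$.

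An equivalent route is a kernel calculation. A vector $(x,y,z)$ lies in $\ker M$ exactly when three conditions hold. The third block row gives $x=0$. The second block row then gives $A_{22}y=0$. The first block row then gives $z=-A_{12}y-A_{11}x=-A_{12}y$. Hence $\dim\ker M=\dim\ker A_{22}$. Since $M$ has size $2\ell+(m-\ell)$, rank--nullity gives $\rank M=2\ell+(m-\ell)-\dim\ker A_{22}=2\ell+\rank A_{22}$, which is Eq.~\eqref{eq:binary-radical-rank-identity}. I would present this kernel calculation as the proof, since it is shortest and makes the independence from the characteristic evident.

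\emph{Main obstacle.} The only subtle point is remembering that the claimed identity concerns rank, not congruence type. Over $\F_2$ the hyperbolic block is alternating while $A_{11}$ may have a nonzero diagonal, so no congruence can remove $A_{11}$. The kernel argument sidesteps this entirely. Everything else is mechanical.
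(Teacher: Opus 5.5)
Your proposal is correct, and the kernel calculation you say you would present is essentially the paper's proof: congruence by $\operatorname{diag}(R^T,S)$, deletion of the zero summand of size $s-\ell$, the block equations forcing $x=0$, $A_{22}y=0$ and $z=A_{12}y$ (the sign is immaterial over $\F_2$), and rank--nullity on the resulting $(m+\ell)\times(m+\ell)$ matrix. Your alternative route is also valid: clear $A_{12}$ by a congruence, then remove $A_{11}$ by a one-sided row operation; it is the row-reduction form of the same computation.
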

\begin{proof}
With $U:=\operatorname{diag}(R^T,S)$, substitution of
Eq.~\eqref{eq:binary-radical-normal-forms} into $U^T(\cdot)U$ gives
\begin{equation}
 U^T\begin{pmatrix}A&C_0\\C_0^T&0_s\end{pmatrix}U
 =\begin{pmatrix}
 A_{11}&A_{12}&I_\ell&0\\
 A_{12}^T&A_{22}&0&0\\
 I_\ell&0&0&0\\
 0&0&0&0_{s-\ell}
 \end{pmatrix}.
 \label{eq:binary-radical-full-normal-form}
\end{equation}
The last coordinate block is a zero direct summand.  Delete it and call the
remaining matrix $K$.  For a column vector $(x,z,y)$ with block sizes
$\ell,m-\ell,\ell$, the equation $K(x,z,y)^T=0$ is equivalent to
\begin{align}
 A_{11}x+A_{12}z+y=0,~
 A_{12}^Tx+A_{22}z=0,~
 x=0.
 \label{eq:binary-radical-kernel-system}
\end{align}
The third equation forces $x=0$; the second then becomes $A_{22}z=0$; and the
first uniquely determines $y=A_{12}z$.  Hence
\begin{equation}
 \ker K=\{(0,z,A_{12}z):z\in\ker A_{22}\},
 \quad \dim\ker K=(m-\ell)-\rank A_{22}.
 \label{eq:binary-radical-kernel-parametrization}
\end{equation}
Since $K$ has size $m+\ell$, rank--nullity gives
\begin{equation}
 \rank K=(m+\ell)-((m-\ell)-\rank A_{22})
 =2\ell+\rank A_{22}.
 \label{eq:binary-radical-rank-from-kernel}
\end{equation}
This establishes Eq.~\eqref{eq:binary-radical-rank-identity}.  Notice that the
argument never divides by two and does not require eliminating $A_{11}$ by a
congruence.
\end{proof}

\subsubsection{Binary parameterization and completion count}
\label{subsubsec:binary-completion-parameterization}

\begin{lemma}[Explicit binary completion parameterization]
\label{lem:binary-completion-parameterization}
Fix $B$, the matrix $P$ in Eq.~\eqref{eq:binary-lower-block-rank-decomposition}, and
ranks $(\ell,t)$.  For each rank-$\ell$ matrix $C_0$, fix one pair
$R(C_0),S(C_0)$ satisfying Eq.~\eqref{eq:binary-radical-normal-forms}.
For fixed ranks $\rank C_0=\ell$ and $\rank A_{22}=t$, the completion pairs
$(Q,C)$ are parametrized bijectively by
\begin{equation}
 (C_1,C_0,A_{11},A_{12},A_{22}),
 \label{eq:binary-completion-parameter-tuple}
\end{equation}
where $C_1$ is arbitrary, $C_0$ has rank $\ell$,
$A_{11}\in\Sym_\ell(\F_2)$ and
$A_{12}\in\F_2^{\ell\times(m-\ell)}$ are arbitrary, and
$A_{22}\in\Sym_{m-\ell}(\F_2)$ has rank $t$.
\end{lemma}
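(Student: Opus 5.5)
The plan is to mirror the odd-prime proof of Lemma~\ref{lem:completion-bijective-parameterization}, replacing each odd-prime ingredient by its binary counterpart. The one place characteristic two matters is the rank classification, and that classification has already been proved without dividing by two.

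\textbf{Step 1: reduce $Q$ to the Schur translate.} Fix $B$ and the congruence $P$ of Eq.~\eqref{eq:binary-lower-block-rank-decomposition}. The map $C\mapsto CP=(C_1\ C_0)$ is a bijection of $\F_2^{m\times r}$. For each fixed $C_1$, Lemma~\ref{lem:binary-Schur-reduction} shows that $Q\mapsto A=Q+C_1D^{-1}C_1^T$ is a translation of $\Sym_m(\F_2)$, hence bijective. So the pairs $(Q,C)$ correspond bijectively to the triples $(C_1,C_0,A)$, with $C_1$ arbitrary. Under this correspondence the completed rank equals $r_{\rm bin}$ plus the rank of the radical-coupling matrix.

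\textbf{Step 2: split $A$ using the fixed normalizers.} Let $C_0$ have rank $\ell$, and use the once-and-for-all chosen normalizers $R(C_0),S(C_0)$. The map
\[
A\longmapsto R(C_0)AR(C_0)^T
\]
is a bijection of $\Sym_m(\F_2)$, since $R(C_0)$ is invertible and congruence preserves symmetry. Its image decomposes uniquely into the blocks
\[
A_{11}\in\Sym_\ell(\F_2),\qquad A_{12}\in\F_2^{\ell\times(m-\ell)},\qquad A_{22}\in\Sym_{m-\ell}(\F_2).
\]
Conversely, any such blocks assemble into a symmetric matrix. Inverting the congruence then recovers $A$, next $Q=A+C_1D^{-1}C_1^T$, and finally $C=(C_1\ C_0)P^{-1}$. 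Because each $C_0$ has exactly one chosen normalizer pair, no completion is counted twice.

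\textbf{Step 3: identify the rank condition.} By Lemma~\ref{lem:binary-radical-rank-identity}, the radical-coupling rank equals $2\ell+\rank A_{22}$. Combined with Eq.~\eqref{eq:binary-first-rank-reduction}, the completion has total rank $r_{\rm bin}+2\ell+t$. Fixing $\rank C_0=\ell$ and $\rank A_{22}=t$ therefore carves out exactly the stated parameter set. The blocks $A_{11}$ and $A_{12}$ remain unconstrained, which is consistent with the kernel computation in that lemma never eliminating $A_{11}$.

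\textbf{Main obstacle.} The subtle point is making sure the block decomposition is well defined and injective. This depends on fixing the normalizer pair per $C_0$, not per completion. A secondary point is confirming that the blocks really are free in characteristic two, i.e.\ that no parity or diagonal constraint ties $A_{11}$ or $A_{22}$ to the coupling. Symmetric matrices over $\F_2$ form a linear space closed under congruence, so I expect no such constraint to arise.
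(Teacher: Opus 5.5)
Your proposal is correct and follows essentially the same route as the paper: the affine Schur translation $Q\mapsto A$ for fixed $C_1$, the unique block decomposition of $R(C_0)AR(C_0)^T$ with the normalizer fixed once per $C_0$, and reconstruction of $A$, then $Q$, and $C=(C_1\ C_0)P^{-1}$. Your Step~3 is also sound. It uses Lemma~\ref{lem:binary-radical-rank-identity} to show that $\rank A_{22}$ does not depend on the normalizer choice. The paper expresses the same point more briefly: any other fixed choice of normalizer only relabels the block coordinates bijectively.
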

\begin{proof}
By the binary Schur-complement reduction, for each fixed $C_1$ the variable
$Q$ is in affine bijection with $A\in\Sym_m(\F_2)$.  For each $C_0$, the
normalizers fixed in the statement give the unique block decomposition
\[
 R(C_0)AR(C_0)^T
 =\begin{pmatrix}A_{11}&A_{12}\\A_{12}^T&A_{22}\end{pmatrix}.
\]
Conversely, these blocks reconstruct $A$, then $Q$, while
$C=(C_1\ C_0)P^{-1}$.  Thus the correspondence is bijective.  Fixing one
normalizer for each $C_0$ prevents overcounting; any other fixed choice merely
relabels the block coordinates bijectively.
\end{proof}

\begin{lemma}[Binary completion count]
\label{lem:integrated-binary-completion-count}
Let $B\in\Sym_r(\F_2)$ have rank $r_{\rm bin}$.  The number of pairs
$(Q,C)\in\Sym_m(\F_2)\times\F_2^{m\times r}$ satisfying
\begin{equation}
 \rank\begin{pmatrix}Q&C\\C^T&B\end{pmatrix}\le k.
 \label{eq:integrated-binary-completion-condition}
\end{equation}
is the quantity $\mathcal N_{m,r,k}^{(2)}(r_{\rm bin})$ in
Eq.~\eqref{eq:integrated-qubit-completion-count}; in particular, it depends on
$B$ only through $r_{\rm bin}$.
\end{lemma}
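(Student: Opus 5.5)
The count follows the odd-prime Proposition~\ref{prop:exact-ambient-completion-count} step for step. The three binary lemmas just established do the work, and no division by two occurs anywhere. Fix $B\in\Sym_r(\F_2)$ of rank $r_{\rm bin}$, put $s:=r-r_{\rm bin}$, and choose once and for all a congruence $P$ as in Eq.~\eqref{eq:binary-lower-block-rank-decomposition}.

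First, if $k<r_{\rm bin}$, Eq.~\eqref{eq:binary-first-rank-reduction} shows that every completion has rank at least $r_{\rm bin}>k$. The set is then empty, consistent with the empty summation range in Eq.~\eqref{eq:integrated-qubit-completion-count}.

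Otherwise, I would partition the completion pairs $(Q,C)$ by $\ell:=\rank C_0$ and $t:=\rank A_{22}$. Here $CP=(C_1\ C_0)$, and $A_{22}$ is the residual block obtained from the Schur translate $A$ through the fixed normalizers $R(C_0),S(C_0)$. Combining Lemma~\ref{lem:binary-Schur-reduction} with Lemma~\ref{lem:binary-radical-rank-identity} gives the completed rank exactly as $r_{\rm bin}+2\ell+t$. Hence condition Eq.~\eqref{eq:integrated-binary-completion-condition} is equivalent to $r_{\rm bin}+2\ell+t\le k$. Together with $\ell\le\min\{m,s\}$ and $t\le m-\ell$, this reproduces precisely the summation ranges in Eq.~\eqref{eq:integrated-qubit-completion-count}.

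Within a fixed class $(\ell,t)$, Lemma~\ref{lem:binary-completion-parameterization} supplies a bijection with tuples $(C_1,C_0,A_{11},A_{12},A_{22})$ whose components range independently. This yields the product count
\[
 2^{mr_{\rm bin}}\cdot N_{\rm rect}(m,s,\ell;2)\cdot 2^{\ell(\ell+1)/2+\ell(m-\ell)}\cdot N_{\rm sym}(m-\ell,t;2).
\]
I would then check the exponent identity $\ell(\ell+1)/2+\ell(m-\ell)=\ell m-\ell(\ell-1)/2$ and sum over the admissible $(\ell,t)$. Every factor depends only on $m$, $r_{\rm bin}$, $s=r-r_{\rm bin}$, $\ell$ and $t$, so the count depends on $B$ only through its rank. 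The choice of the invertible block $D$ is absorbed by the translation $Q\mapsto A$ for fixed $C_1$.

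The one point requiring care is the bijectivity underlying the product count. That point was settled in Lemma~\ref{lem:binary-completion-parameterization} by fixing one normalizer pair for each $C_0$, which keeps the block coordinates $A_{ij}$ well defined. The binary radical identity is the only place where characteristic two could cause trouble, and it was handled there by a kernel computation rather than by eliminating $A_{11}$. With both in hand, the remaining step is pure bookkeeping.
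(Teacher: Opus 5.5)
Your proposal is correct and takes essentially the same route as the paper's proof. Both combine the binary Schur reduction with the kernel-based radical-coupling identity to get the completed rank $r_{\rm bin}+2\ell+t$, use Lemma~\ref{lem:binary-completion-parameterization} for the product count in each $(\ell,t)$ class, and conclude rank-only dependence because every factor involves only $r_{\rm bin}$ and $s=r-r_{\rm bin}$. Your explicit treatment of $k<r_{\rm bin}$ is a harmless addition that the paper leaves implicit in the empty summation range.
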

\begin{proof}
Combining Lemma~\ref{lem:binary-Schur-reduction} with
Lemma~\ref{lem:binary-radical-rank-identity} and writing
$t:=\rank A_{22}$ gives the exact completion rank
\begin{equation}
 \rank\begin{pmatrix}Q&C\\C^T&B\end{pmatrix}
 =r_{\rm bin}+2\ell+t.
 \label{eq:binary-full-completion-rank}
\end{equation}
For fixed $(\ell,t)$ satisfying $r_{\rm bin}+2\ell+t\le k$,
Lemma~\ref{lem:binary-completion-parameterization} gives a bijective sequence
of choices.  Their cardinalities are
\begin{align}
 |\{C_1\}|&=2^{mr_{\rm bin}},
 \label{eq:binary-count-C1}\\
 |\{C_0:\rank C_0=\ell\}|&=N_{\rm rect}(m,s,\ell;2),
 \label{eq:binary-count-C0}\\
 |\{(A_{11},A_{12})\}|
 &=2^{\ell(\ell+1)/2+\ell(m-\ell)}
 =2^{\ell m-\ell(\ell-1)/2},
 \label{eq:binary-count-free-A}\\
 |\{A_{22}:\rank A_{22}=t\}|
 &=N_{\rm sym}(m-\ell,t;2).
 \label{eq:binary-count-A22}
\end{align}
The dimensions and Eq.~\eqref{eq:binary-full-completion-rank} give exactly the
summation ranges in Eq.~\eqref{eq:integrated-qubit-completion-count}.
Multiplying Eqs.~\eqref{eq:binary-count-C1}--
\eqref{eq:binary-count-A22} and summing over these ranges proves that formula.
The maps $C\mapsto CP$ and $Q\mapsto A$ are bijections, and the remaining
counts use only $r_{\rm bin}$ and $s=r-r_{\rm bin}$; hence the result is
independent of the entries and bilinear type of $B$.
\end{proof}

The bilinear counting target is now fixed.  We next return to the actual
enhanced moments, forget only their extra diagonal information, and control its
restoration by a separate lift factor.

\subsection{Quadratic data, associated bilinear forms, and diagonal lifts}
A binary enhanced moment contains more information than its associated bilinear form.  We now compare their completion problems explicitly.  Every enhanced completion determines a bilinear completion, while a fixed bilinear completion admits only a controlled number of diagonal lifts.  This comparison is the source of the separate factor $2^m$ retained in the finite theorem.
The characteristic-two case uses two finite objects that must be kept distinct.  The
lower enhanced moment records the lower mod-$4$ diagonal data together with
the lower--lower mod-$2$ edge data.  Its associated bilinear form retains only the
binary Gram matrix.  Unlike the odd-prime branch, no row of ones is appended:
the missing mod-$4$ diagonal information is handled instead by the explicit
$2^m$ lift bound below.  Thus the binary lower block has dimension $n-m$,
whereas the odd-prime augmented lower block has dimension $n-m+1$.

For a lower matrix $Y\in\F_2^{(n-m)\times k}$, retain both its mod-$4$
diagonal data and mod-$2$ edge data in the enhanced lower moment
\begin{equation}
 \zeta_2(Y)
 :=\left(
   \left(\sum_{j=1}^kY_{aj}\pmod4\right)_{a=1}^{n-m},
   \left(\sum_{j=1}^kY_{aj}Y_{bj}\pmod2\right)_{1\le a<b\le n-m}
   \right)
 \in\mathcal L_{n,m,2}.
 \label{eq:binary-lower-enhanced-moment-of-Y}
\end{equation}
Reducing the enhanced diagonal coordinates modulo two and retaining the edge
coordinates defines the associated bilinear form
\begin{equation}
 B_\zeta\in\Sym_{n-m}(\F_2),
 \quad
 (B_\zeta)_{aa}:=r_a^{\rm low}\pmod2,
 \quad
 (B_\zeta)_{ab}:=T_{ab}^{\rm low}\quad(a<b).
 \label{eq:binary-bilinear-shadow}
\end{equation}
Whenever $Y$ realizes the enhanced value $\zeta$, its ordinary binary Gram
matrix is exactly this associated form:
\begin{equation}
 B_\zeta=YY^T.
 \label{eq:binary-shadow-is-Gram}
\end{equation}

For a fixed lower enhanced value $\zeta$, collect the occurring full enhanced
moments in
\begin{equation}
 \mathcal E_\zeta
 :=\left\{z=(r,T)\in\mathcal C_\zeta^{(2)}:
                 N_z^{(2)}>0\right\}.
 \label{eq:binary-enhanced-occurring-set}
\end{equation}
Thus $|\mathcal E_\zeta|=d_\zeta^{(2)}$.  For $z\in\mathcal E_\zeta$, reduce the visible enhanced diagonal modulo two
and combine it with the visible edge data to form
\begin{equation}
 Q_z\in\Sym_m(\F_2),
 \quad
 (Q_z)_{ii}:=r_i\pmod2,
 \quad
 (Q_z)_{ij}:=T_{ij}\quad(i<j),
 \label{eq:binary-visible-bilinear-block}
\end{equation}
The visible--lower edge coordinates form the cross block
\begin{equation}
 C_z\in\F_2^{m\times(n-m)},
 \quad
 (C_z)_{ia}:=T_{i,m+a}.
 \label{eq:binary-cross-bilinear-block}
\end{equation}
The corresponding rank-constrained bilinear completion set is
\begin{equation}
 \mathcal R[B_\zeta]
 :=\left\{(Q,C)\in\Sym_m(\F_2)\times
                   \F_2^{m\times(n-m)}:
 \rank\begin{pmatrix}Q&C\\C^T&B_\zeta\end{pmatrix}\le k
 \right\}.
 \label{eq:binary-shadow-completion-set}
\end{equation}
By Lemma~\ref{lem:integrated-binary-completion-count},
\begin{equation}
 |\mathcal R[B_\zeta]|
 =\mathcal N_{m,n-m,k}^{(2)}(\rank B_\zeta).
 \label{eq:binary-shadow-completion-cardinality}
\end{equation}

\subsubsection{Enhanced moments and bilinear completions}
\label{subsubsec:enhanced-bilinear-comparison}

\begin{proposition}[Comparison between enhanced moments and bilinear completions]
\label{prop:binary-enhanced-bilinear-comparison}
For every lower enhanced value $\zeta$, forgetting the visible mod-$4$ lifts
defines the map
\begin{equation}
 f_\zeta:\mathcal E_\zeta\longrightarrow
          \Sym_m(\F_2)\times\F_2^{m\times(n-m)},
 \quad
 f_\zeta(z):=(Q_z,C_z),
 \label{eq:binary-enhanced-forgetful-map}
\end{equation}
This map satisfies
\begin{equation}
 f_\zeta(\mathcal E_\zeta)\subseteq\mathcal R[B_\zeta].
 \label{eq:binary-forgetful-image-in-completion-set}
\end{equation}
Each visible diagonal parity has at most two mod-$4$ lifts, so every fiber of
$f_\zeta$ has cardinality at most $2^m$.  Consequently,
\begin{equation}
 d_\zeta^{(2)}
 \le 2^m\mathcal N_{m,n-m,k}^{(2)}(\rank B_\zeta).
 \label{eq:binary-enhanced-support-completion-bound}
\end{equation}
\end{proposition}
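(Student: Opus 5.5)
The plan has three parts, taken in order: membership in the completion set, the fiber bound, and then counting.

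\textbf{Step 1: the image of $f_\zeta$ lies in $\mathcal R[B_\zeta]$.}
Take $z\in\mathcal E_\zeta$ and pick an ordered tuple $\boldsymbol x$ realizing it, which exists because $N_z^{(2)}>0$. Write $X=(X_1;X_2)\in\F_2^{n\times k}$, with $X_2$ the lower block. Since $r_2(z)=\zeta$, the block $X_2$ realizes the lower enhanced value $\zeta$, and Eq.~\eqref{eq:binary-shadow-is-Gram} gives $X_2X_2^T=B_\zeta$.

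Now compare the visible data with $XX^T$ over $\F_2$. For a single bit we have $x_a^2=x_a$, so $\sum_j (x_j)_a\bmod 2$ equals the diagonal entry $(X_1X_1^T)_{aa}$. The edge coordinates $T_{ij}$ are exactly the off-diagonal Gram entries. This gives $Q_z=X_1X_1^T$ and $C_z=X_1X_2^T$. Hence the block matrix $\begin{pmatrix}Q_z&C_z\\C_z^T&B_\zeta\end{pmatrix}$ equals $XX^T$, whose rank is at most $k$. This proves Eq.~\eqref{eq:binary-forgetful-image-in-completion-set}.

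\textbf{Step 2: each fiber of $f_\zeta$ has at most $2^m$ elements.}
An element $z=(r,T)\in\mathcal C_\zeta^{(2)}$ is determined by three pieces of data: the lower coordinates (fixed to $\zeta$), the visible diagonal lifts $r_i\in\mathbb Z_4$ for $i\le m$, and the edge coordinates $T_{ij}$ with $i\le m$.

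The value $f_\zeta(z)$ records all edges with $i\le m$, both visible–visible and visible–lower, together with the parities $r_i\bmod 2$. So once $f_\zeta(z)$ is fixed, the only remaining freedom is which of the two $\mathbb Z_4$ lifts to take for each of the $m$ fixed parities. That is at most $2^m$ choices.

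I will double-check here that the coordinate split of $\mathcal M_{n,2}$ really is visible/lower as described, namely that $r_2$ keeps exactly the $r_a$ with $a>m$ and the $T_{ab}$ with $m<a<b$. This is what guarantees that no coordinate is missed.

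\textbf{Step 3: counting.}
By Step 2 and Step 1,
\[
d_\zeta^{(2)}=|\mathcal E_\zeta|\le 2^m\,|f_\zeta(\mathcal E_\zeta)|\le 2^m|\mathcal R[B_\zeta]|.
\]
Eq.~\eqref{eq:binary-shadow-completion-cardinality}, which is Lemma~\ref{lem:integrated-binary-completion-count} applied with $r=n-m$, then yields Eq.~\eqref{eq:binary-enhanced-support-completion-bound}.

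\textbf{Expected obstacle.}
The only subtle point is Step 1. The diagonal of the Gram matrix in characteristic two must coincide with the mod-2 reduction of the mod-4 linear moment. This rests on the identity $x^2=x$ for bits and the fact that reduction mod 2 of a mod-4 sum equals the $\F_2$ sum. Everything else is bookkeeping.
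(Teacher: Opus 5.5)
Your proposal is correct and follows the paper's proof essentially step for step. You realize $z$ by a tuple and use $x^2=x$ over $\F_2$ to identify the block matrix with $XX^T$, which has rank at most $k$. You bound each fiber by the two $\mathbb Z_4$ lifts of each of the $m$ fixed visible diagonal parities. You then sum the fiber bound over the image inside $\mathcal R[B_\zeta]$. The coordinate check you flag, that $r_2$ keeps only the lower $r_a$ and the lower--lower $T_{ab}$, is exactly what the paper relies on to conclude that $(Q_z,C_z)$ and $\zeta$ fix everything except those diagonal lifts.
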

\begin{proof}
Take $z\in\mathcal E_\zeta$.  Choose an ordered tuple realizing it and split
the associated coordinate matrix into visible and lower rows:
\begin{equation}
 X=\begin{pmatrix}X_1\\Y\end{pmatrix},
 \quad
 X_1\in\F_2^{m\times k},
 \quad Y\in\F_2^{(n-m)\times k},
 \label{eq:binary-realizing-tuple-blocks}
\end{equation}
with enhanced moment $z$.  Equations
\eqref{eq:binary-visible-bilinear-block} and
\eqref{eq:binary-cross-bilinear-block}, together with $x^2=x$ over
$\F_2$, give
\begin{equation}
 Q_z=X_1X_1^T,
 \quad C_z=X_1Y^T,
 \quad B_\zeta=YY^T.
 \label{eq:binary-realized-bilinear-blocks}
\end{equation}
These three identities assemble into the full bilinear Gram factorization
\begin{equation}
 \begin{pmatrix}Q_z&C_z\\C_z^T&B_\zeta\end{pmatrix}
 =XX^T.
 \label{eq:binary-shadow-full-Gram-factorization}
\end{equation}
which has rank at most $k$.  This proves
Eq.~\eqref{eq:binary-forgetful-image-in-completion-set}.

Now fix $(Q,C)$ in the image.  All off-diagonal visible--visible and
visible--lower enhanced coordinates are already determined by $(Q,C)$.
For each visible index $i\le m$, the forgotten coordinate is
$r_i\in\mathbb Z_4$, and its parity is fixed by the diagonal entry of $Q$:
\begin{equation}
 r_i\pmod2
 =\sum_{j=1}^k(X_1)_{ij}
 =\sum_{j=1}^k(X_1)_{ij}^2
 =(Q)_{ii}.
 \label{eq:binary-enhanced-diagonal-parity}
\end{equation}
A prescribed parity class contains exactly two elements of $\mathbb Z_4$.
Therefore each of the $m$ forgotten visible diagonal coordinates has at most
two lifts, while all remaining enhanced coordinates are already fixed by
$(Q,C)$ and the lower value $\zeta$.  Therefore
$|f_\zeta^{-1}(Q,C)|\le2^m$.  Summing the fiber bound over
$f_\zeta(\mathcal E_\zeta)$, using the inclusion in
Eq.~\eqref{eq:binary-forgetful-image-in-completion-set}, and then applying
Eq.~\eqref{eq:binary-shadow-completion-cardinality} gives
Eq.~\eqref{eq:binary-enhanced-support-completion-bound}.
\end{proof}

For the binary lower matrix $Y$, Lemma~\ref{lem:common-Gram-radical-rank}
gives
\begin{equation}
 \rank(YY^T)=a-u.
 \label{eq:binary-shadow-rank-au}
\end{equation}
This is the same Gram--radical identity used in the odd-prime branch; no
quadratic refinement is needed for the rank calculation itself.


\subsubsection{Finite weighted completion reduction}
\label{subsubsec:binary-finite-weighted-reduction}

\begin{proposition}[Binary finite weighted completion reduction]
\label{prop:binary-finite-weighted-completion-reduction}
For all finite $n,m,k$,
\begin{equation}
 P_{\mathrm{id}}^{(k),*}(n,m;2)
 \le\sum_{a,u}\mathsf p_{n,m,k}^{(2)}(a,u)
 \min\left\{1,
 \frac{2^m\mathcal N_{m,n-m,k}^{(2)}(a-u)}{\lvert Y_{n,m,2}\rvert}
 \right\}.
 \label{eq:integrated-qubit-finite-converse}
\end{equation}
\end{proposition}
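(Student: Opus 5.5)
The plan is to start from the binary weighted counting bound of Proposition~\ref{prop:converse-weighted-moment-fiber-reduction}, written as Eq.~\eqref{eq:binary-weighted-coset-bound}:
\[
 P_{\mathrm{id}}^{(k),*}(n,m;2)\le\sum_{\zeta}\pi_\zeta^{(2)}\frac{d_\zeta^{(2)}}{\lvert Y_{n,m,2}\rvert}.
\]
I would bound the normalized count $d_\zeta^{(2)}/\lvert Y_{n,m,2}\rvert$ in two independent ways and keep the better of the two.

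\textbf{The trivial cap.} The occurring full enhanced moments above $\zeta$ form a subset of the annihilator coset $\mathcal C_\zeta^{(2)}$. By Lemma~\ref{lem:binary-annihilator-cosets}, this coset is a coset of $H_{n,m,2}^\perp$. Nondegeneracy of the pairing gives $|H_{n,m,2}^\perp|=|G_{n,2}/H_{n,m,2}|=\lvert Y_{n,m,2}\rvert$, so the ratio is at most $1$.

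\textbf{The completion bound.} Proposition~\ref{prop:binary-enhanced-bilinear-comparison} gives $d_\zeta^{(2)}\le 2^m\mathcal N_{m,n-m,k}^{(2)}(\rank B_\zeta)$. Together with the cap,
\[
 \frac{d_\zeta^{(2)}}{\lvert Y_{n,m,2}\rvert}\le F(\rank B_\zeta),
 \qquad
 F(r):=\min\left\{1,\frac{2^m\mathcal N^{(2)}_{m,n-m,k}(r)}{\lvert Y_{n,m,2}\rvert}\right\}.
\]

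\textbf{Regrouping by $(a,u)$.} Next I would turn the sum over $\zeta$ into an expectation over a uniform lower block $Y\in\F_2^{(n-m)\times k}$. By definition in Eq.~\eqref{eq:binary-lower-enhanced-probability}, $\pi_\zeta^{(2)}$ is the probability that $\zeta_2(Y)=\zeta$. Hence
\[
 \sum_\zeta\pi_\zeta^{(2)}F(\rank B_\zeta)=\mathbb E_Y\,F\bigl(\rank B_{\zeta_2(Y)}\bigr).
\]
The key identification is that the associated bilinear form is determined by $\zeta_2(Y)$ alone and equals $YY^T$, by Eq.~\eqref{eq:binary-shadow-is-Gram}. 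This holds because the mod-$4$ diagonal reduces mod $2$ to $\sum_j Y_{aj}=\sum_j Y_{aj}^2$.

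Lemma~\ref{lem:common-Gram-radical-rank} then gives $\rank(YY^T)=A_2(Y)-U_2(Y)$. So the integrand depends on $Y$ only through the pair $(A_2(Y),U_2(Y))$. Partitioning the sample space by the events $\{A_2=a,\,U_2=u\}$ gives
\[
 \mathbb E_Y\,F\bigl(A_2(Y)-U_2(Y)\bigr)=\sum_{a,u}\mathsf p^{(2)}_{n,m,k}(a,u)\,F(a-u),
\]
which is exactly Eq.~\eqref{eq:integrated-qubit-finite-converse}.

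\textbf{Main care point.} I expect the delicate step to be checking that this regrouping is legitimate. Distinct $Y$ realizing the same $\zeta$ must give the same rank, and indeed they do, since $YY^T=B_\zeta$ depends only on $\zeta$. We never need a partition of the $\zeta$'s themselves by $(a,u)$, only an expectation over $Y$.

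\textbf{Edge case.} For $k=0$ there is only the empty tuple, so the bound reduces to $\lvert Y_{n,m,2}\rvert^{-1}$. I would check this separately against Remark~\ref{rem:checks}.
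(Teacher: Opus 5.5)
Your proposal is correct and follows the paper's proof essentially step for step. Like the paper, you combine the trivial coset cap $d_\zeta^{(2)}\le\lvert Y_{n,m,2}\rvert$ with the enhanced-to-bilinear comparison of Proposition~\ref{prop:binary-enhanced-bilinear-comparison}, insert the result into Eq.~\eqref{eq:binary-weighted-coset-bound}, and regroup $\sum_\zeta\pi_\zeta^{(2)}F(\rank B_\zeta)$ as an expectation over uniform $Y$ using $B_{\zeta_2(Y)}=YY^T$ and Lemma~\ref{lem:common-Gram-radical-rank}.
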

\begin{proof}
For every $\zeta$, the trivial inclusion
$\mathcal E_\zeta\subseteq\mathcal C_\zeta^{(2)}$ gives
$d_\zeta^{(2)}\le\lvert Y_{n,m,2}\rvert$.  Combining the trivial coset-size cap with the enhanced-to-bilinear comparison
gives the clipped finite ambiguity bound
\begin{equation}
 \frac{d_\zeta^{(2)}}{\lvert Y_{n,m,2}\rvert}
 \le\min\left\{1,
 \frac{2^m\mathcal N_{m,n-m,k}^{(2)}(\rank B_\zeta)}
 {\lvert Y_{n,m,2}\rvert}\right\}.
 \label{eq:binary-clipped-completion-bound}
\end{equation}
Insert Eq.~\eqref{eq:binary-clipped-completion-bound} into
Eq.~\eqref{eq:binary-weighted-coset-bound}.  Under the probability law
$\pi_\zeta^{(2)}$, choose $Y$ uniformly and set $\zeta=\zeta_2(Y)$.
By Eq.~\eqref{eq:binary-shadow-is-Gram} and
Lemma~\ref{lem:common-Gram-radical-rank},
\begin{equation}
 \rank B_{\zeta_2(Y)}
 =\rank(YY^T)=A_2(Y)-U_2(Y).
 \label{eq:binary-zeta-rank-regrouping}
\end{equation}
Regrouping any function of the associated bilinear rank by the row-space
invariants gives
\begin{equation}
 \sum_\zeta\pi_\zeta^{(2)}F(\rank B_\zeta)
 =\sum_{a,u}\mathsf p_{n,m,k}^{(2)}(a,u)F(a-u).
 \label{eq:binary-zeta-to-au-regrouping}
\end{equation}
Apply this identity to the clipped function in
Eq.~\eqref{eq:binary-clipped-completion-bound}.  The result is
Eq.~\eqref{eq:integrated-qubit-finite-converse}.
\end{proof}

The factor $2^m$ is only an upper bound on the multiplicity of enhanced
diagonal lifts.  It is external to the exact bilinear completion count and is
kept separate until the exponent comparison.  Equations
\eqref{eq:binary-enhanced-forgetful-map}--
\eqref{eq:binary-zeta-to-au-regrouping} now give the complete finite comparison
\[
 \begin{aligned}
 \text{enhanced moment coset}
 &\longrightarrow \text{associated bilinear form}
 \longrightarrow \text{low-rank completion}\\
 &\longrightarrow \text{row-space event }(a,u)
 \longrightarrow \text{weighted success bound}.
 \end{aligned}
\]

The row-space-event estimate is now an immediate specialization of the
all-prime result.  For each realizing lower matrix, the associated bilinear
form has rank $a-u$ by Lemma~\ref{lem:common-Gram-radical-rank}, and any
prescribed augmentation only restricts the admissible row spaces.

The ambiguity side is now expressed through the completion pair
$(\ell,t)$, with the enhanced diagonal data isolated in the factor $2^m$.
The next subsection closes the probability side in the pair $(a,u)$.

\subsection{Binary row-space probability bound}
\begin{lemma}[Binary row-space probability bound]
\label{lem:integrated-binary-lower-block-probability}
For every positive-mass pair $(a,u)$, the common row-space estimate gives
\begin{equation}
 \mathsf p_{n,m,k}^{(2)}(a,u)
 \le (K_2^{\mathrm{class}})^k
 2^{-(k-a)(n-m-a)-u(u+1)/2}.
 \label{eq:binary-lower-block-probability-finite}
\end{equation}
Consequently, when $m,k=O(n)$,
\begin{equation}
 \mathsf p_{n,m,k}^{(2)}(a,u)
 \le2^{-(k-a)(n-m-a)-u(u+1)/2+o(n^2)}.
 \label{eq:integrated-binary-lower-block-probability}
\end{equation}
uniformly over the finite parameter ranges.
\end{lemma}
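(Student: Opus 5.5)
The plan is to specialize Proposition~\ref{prop:all-prime-row-space-event-bound} to $p=2$. Take $r=n-m$, so the random matrix $Y\in\F_2^{(n-m)\times k}$ of Eq.~\eqref{eq:integrated-qubit-au-lower-block-rank} is exactly the uniform matrix of that proposition. Its invariants $A(Y),U(Y)$ coincide with $A_2(Y),U_2(Y)$, because both are defined through the standard symmetric bilinear form on $\F_2^k$ and the row space $\operatorname{row}Y$. In particular, no all-ones row is appended in the binary branch, so no prescribed-vector clause is needed. The proposition then gives Eq.~\eqref{eq:binary-lower-block-probability-finite} directly, with $K_2^{\mathrm{class}}=K_2^{\mathrm{rad}}$.

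I will check that nothing in the underlying argument is characteristic-specific. Proposition~\ref{prop:all-prime-prescribed-radical-bound} bounds the number of totally isotropic $u$-subspaces by counting all quotient vectors at each stage. This counting never separates alternating from nonalternating quotient types, which matters over $\F_2$, where the standard form restricted to a subspace can be alternating. The Gaussian-binomial bound and the inequality $|\operatorname{GL}(u,\F_2)|\ge\kappa_2^{\mathrm{GL}}2^{u^2}$ hold for every prime. The row-generation relaxation uses $\Pr\{\operatorname{row}Y\subseteq W\}=2^{-(n-m)(k-a)}$, which is elementary. For pairs outside the feasible range, the convention $N_2(k;a,u)=0$ makes the probability zero, so the bound holds trivially.

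For Eq.~\eqref{eq:integrated-binary-lower-block-probability}, write $(K_2^{\mathrm{class}})^k=2^{k\log_2K_2^{\mathrm{class}}}$. When $k=O(n)$ this exponent is $O(n)=o(n^2)$. It depends only on $k$, not on $(a,u)$, so the remainder is uniform over all parameter pairs.

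The only place I expect any care to be needed is confirming that the radical dimension $U_2$ in the binary definition uses the same orthogonal complement as in the proposition. It does, since both use the standard bilinear form on $\F_2^k$ and not the symplectic form. Beyond that check, the lemma is a direct corollary with no real obstacle.
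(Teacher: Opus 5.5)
Your proposal is correct and is the paper's own proof: specialize Proposition~\ref{prop:all-prime-row-space-event-bound} to $p=2$ with $r=n-m$. Then note that $(K_2^{\mathrm{class}})^k=2^{O(k)}=2^{o(n^2)}$ when $k=O(n)$, uniformly in $(a,u)$. Your extra check that the isotropic-subspace count is characteristic-free, including over $\F_2$ where restricted forms may be alternating, agrees with the paper's remark in Proposition~\ref{prop:all-prime-prescribed-radical-bound}, though it is already built into that all-prime statement.
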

\begin{proof}
Apply Proposition~\ref{prop:all-prime-row-space-event-bound} with $p=2$ and
$r=n-m$.  Since $k=O(n)$,
$k\log_2K_2^{\mathrm{class}}=O(n)=o(n^2)$, which proves the second statement.
\end{proof}

\begin{lemma}[Uniform binary symmetric-rank bound]
\label{lem:integrated-binary-symmetric-rank-bound}
There is a dimension-independent constant $C_2>0$ for which every binary
symmetric-rank count satisfies
\begin{equation}
 N_{\rm sym}(d,t;2)
 \le C_2\,2^{t(2d-t+1)/2}
 \quad(0\le t\le d).
 \label{eq:integrated-binary-symmetric-rank-bound}
\end{equation}
\end{lemma}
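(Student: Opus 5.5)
The plan is to copy the proof of Lemma~\ref{lem:symmetric-rank-bound} over $\F_2$, changing only the steps that used $p$ odd. The odd-prime argument never divided by two in the factorization step. The construction is as follows. Given a rank-$t$ matrix $M\in\Sym_d(\F_2)$, take $X\in\F_2^{d\times t}$ whose columns form a basis of $\operatorname{im}M$, and choose a left inverse $L$ with $LX=I_t$. Set $J:=LML^T$. Then $J$ is symmetric, and the identities $XLM=M$ and $ML^TX^T=M$ give $M=XJX^T$ exactly as in Eq.~\eqref{eq:v84-symmetric-factor-verification}. The rank sandwich $t=\rank(XJX^T)\le\rank J\le t$ shows that $J$ is invertible. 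None of this uses the characteristic, so every rank-$t$ binary symmetric matrix has a factorization $M=XJX^T$ with $X$ of full column rank and $J\in\Sym_t(\F_2)$ invertible.

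\textbf{Fibers.} Two full-rank factorizations of $M$ have the same column space $\operatorname{im}M$. Their $X$ factors therefore differ by a unique $G\in\operatorname{GL}(t,\F_2)$, and equality of the products forces $J'=G^{-1}JG^{-T}$. So each fiber of $(X,J)\mapsto XJX^T$ has exactly $|\operatorname{GL}(t,\F_2)|\ge\kappa_2^{\mathrm{GL}}2^{t^2}$ elements.

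\textbf{Counting.} There are $N_{\rm rect}(d,t,t;2)\le(\kappa_2^{\mathrm{GL}})^{-1}2^{dt}$ choices of $X$, by Eq.~\eqref{eq:rectangular-uniform-bound}. For $J$ I would not use the MacWilliams formula, which distinguishes alternating from nonalternating type in characteristic two. Instead I would take the trivial bound that there are at most $|\Sym_t(\F_2)|=2^{t(t+1)/2}$ invertible symmetric $t\times t$ matrices. Multiplying the two counts and dividing by the fiber size gives
\[
N_{\rm sym}(d,t;2)\le(\kappa_2^{\mathrm{GL}})^{-2}\,2^{dt+t(t+1)/2-t^2}.
\]
The exponent identity Eq.~\eqref{eq:symmetric-rank-exponent-assembly} turns this into the claim with $C_2:=(\kappa_2^{\mathrm{GL}})^{-2}$, which is independent of $d$ and $t$.

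\textbf{Main check.} I expect the only real point to be confirming that the factorization, the fiber-size step and the counting step never used odd $p$ or a classification of forms. The left-inverse construction is field-free, and the count of $J$ is a crude upper bound that needs no type information, so characteristic two causes no trouble. The cases $t=0$ (where $M=0$ and the count is $1$) and $d=0$ are immediate.
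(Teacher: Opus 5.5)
Your proof is correct, but it takes a different route from the paper's. The paper does not use the factorization $M=XJX^T$ for the binary lemma. It argues as in Theorem~\ref{thm:exact-symmetric-rank-count}:
\begin{itemize}
\item a rank-$t$ form is determined by its radical, a codimension-$t$ subspace chosen in at most $\genfrac{[}{]}{0pt}{}{d}{d-t}_2\le(\kappa_2^{\mathrm{GL}})^{-1}2^{t(d-t)}$ ways;
\item together with a nondegenerate form on the $t$-dimensional quotient, which is bounded crudely by all $2^{t(t+1)/2}$ symmetric forms.
\end{itemize}
This gives $C_2=(\kappa_2^{\mathrm{GL}})^{-1}$ directly.

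You instead transplant the orbit-counting proof of Lemma~\ref{lem:symmetric-rank-bound}, replacing the MacWilliams input by the same trivial bound on invertible $J$. Your checks that the left-inverse factorization is field-independent and that $\operatorname{GL}(t,\F_2)$ acts freely and transitively on each fiber are exactly what this requires.

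The two counts are the same quantity in disguise. The number of full-column-rank $X$ divided by $|\operatorname{GL}(t,\F_2)|$ equals $\genfrac{[}{]}{0pt}{}{d}{t}_2$, which is the number of choices of $\operatorname{im}M=(\ker M)^\perp$.

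The paper's version is shorter and avoids the fiber bookkeeping. Yours loses an extra factor $(\kappa_2^{\mathrm{GL}})^{-1}$, because you bound the number of full-rank $X$ and $|\operatorname{GL}(t,\F_2)|$ separately. You can recover $C_2=(\kappa_2^{\mathrm{GL}})^{-1}$ by using the trivial bound $N_{\rm rect}(d,t,t;2)\le 2^{dt}$ instead. Your version has the modest merit that the odd-prime and binary symmetric-rank lemmas then share a single proof.
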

\begin{proof}
A rank-$t$ symmetric form has a radical of dimension $d-t$.  Choose this
radical in at most
\begin{equation}
 \genfrac{[}{]}{0pt}{}{d}{d-t}_2
 \le(\kappa_2^{\mathrm{GL}})^{-1}2^{t(d-t)}.
 \label{eq:binary-symmetric-rank-radical-choice}
\end{equation}
ways.  For a fixed radical, forms whose radical contains it are precisely the
symmetric forms on the $t$-dimensional quotient, of which there are
$2^{t(t+1)/2}$.  Retaining only the nondegenerate quotient forms can only
decrease this number.  Hence the claim holds with
$C_2=(\kappa_2^{\mathrm{GL}})^{-1}$, because
$t(d-t)+t(t+1)/2=t(2d-t+1)/2$.
\end{proof}

\begin{remark}[Location of the complete binary comparison]
\label{rem:appendix-binary-comparison-location}
The inputs proved in the preceding binary proof block are inserted into the
finite converse and compared term by term with the common exponent in
Eq.~\eqref{eq:main-binary-finite-exponent-identity} and
Theorem~\ref{thm:converse-common-asymptotic-reduction}.  In
particular, the exact finite relation is
\[
 E_{\rm fin}^{(2)}=E_{\rm fin}(m,n-m+1;\cdot)+(m-\ell),
\]
not an unexpanded assertion that the outside factor $2^m$ changes a matrix
dimension.  Since
\begin{equation}
 0\le\frac{m-\ell}{n^2}\le\frac1n,
 \label{eq:binary-remark-normalized-correction}
\end{equation}
the correction converges uniformly to zero after division by $n^2$.  The main text also displays each finite factor and its limiting
contribution, including the indispensable term
$\beta\lambda-\lambda^2/2$ from the free symmetric coordinates.
\end{remark}

Both estimates now use the same structural parameters: $(a,u)$ controls the
lower-event probability, and $(\ell,t)$ controls the completion count.  The
next subsection combines them and identifies the only remaining difference
from the odd-prime exponent.

\subsection{Finite exponent comparison}
The probability estimate and completion count can now be combined without reference to individual enhanced moments.  The remaining issue is to compare the resulting binary exponent with the common odd-prime expression.  The exact difference is only linear in $n$, so it vanishes on the quadratic scale while remaining visible in the finite statement.
The preceding proof established the exact bilinear completion count, the
separate enhanced-diagonal lift factor, and the uniform row-space-event
probability estimate.  We now express the resulting bounds using the four variables
$(a,u,\ell,t)$ on which the common finite exponent depends.  No new matrix algebra is
introduced in this finite-output summary.

\subsubsection{Binary finite domain and completion count}\label{subsubsec:binary-finite-domain-completion-count}
Let $Y_2$ be uniform on $\F_2^{(n-m)\times k}$.  The probability-side
row-space data are
\begin{equation}
 W_2(Y_2):=\operatorname{row}Y_2,
 \quad
 A_2(Y_2):=\dim W_2(Y_2),
 \quad
 U_2(Y_2):=\dim\bigl(W_2(Y_2)\cap W_2(Y_2)^\perp\bigr).
 \label{eq:main-binary-unreported-parameters}
\end{equation}
Their joint probability and the induced lower bilinear rank are
\begin{equation}
 \mathsf p_{n,m,k}^{(2)}(a,u)
 :=\Pr\{A_2(Y_2)=a,\ U_2(Y_2)=u\},
 \quad r_{\rm bin}:=a-u.
 \label{eq:main-binary-unreported-mass-function}
\end{equation}
Thus $r_{\rm bin}$ is the rank of the induced bilinear form on the row space
for the event appearing in this probability.
The probability-side dimensions necessarily lie in the finite domain
\begin{equation}
 0\le a\le\min\{k,n-m\},
 \quad 0\le u\le\min\{a,k-a\}.
 \label{eq:main-binary-unreported-ranges}
\end{equation}
The two completion ranks $(\ell,t)$ range over
$\mathcal G_{a,u}^{(2)}$ as defined in
Eq.~\eqref{eq:main-binary-completion-ranges}.
Lemma~\ref{lem:integrated-binary-completion-count}, specifically
Eq.~\eqref{eq:integrated-qubit-completion-count}, gives for fixed $(a,u)$
\begin{align}
 \mathcal N_{m,n-m,k}^{(2)}(a-u)
 ={}&2^{m(a-u)}
 \sum_{\ell}
 N_{\rm rect}(m,n-m-a+u,\ell;2)
 2^{\ell m-\ell(\ell-1)/2}\notag\\
 &\quad\times\sum_t N_{\rm sym}(m-\ell,t;2),
 \label{eq:main-binary-completion-count}
\end{align}
where the two sums have exactly the ranges in
Eq.~\eqref{eq:main-binary-completion-ranges}.  Substitute this completion count and the row-space-event distribution into
the finite weighted reduction of
Proposition~\ref{prop:binary-finite-weighted-completion-reduction}, namely
Eq.~\eqref{eq:integrated-qubit-finite-converse}.
This gives the finite exact-identification bound
\begin{equation}
 P_{\mathrm{id}}^{(k),*}(n,m;2)
 \le\sum_{a,u}\mathsf p_{n,m,k}^{(2)}(a,u)
 \min\left\{1,
 \frac{2^m\mathcal N_{m,n-m,k}^{(2)}(a-u)}
 {\lvert Y_{n,m,2}\rvert}\right\}.
 \label{eq:main-binary-finite-converse}
\end{equation}
The prefactor $2^m$ is the enhanced-diagonal lift bound proved in
Proposition~\ref{prop:binary-enhanced-bilinear-comparison}, specifically
Eq.~\eqref{eq:binary-enhanced-support-completion-bound}.  It is not part of the
exact bilinear completion count in
Eq.~\eqref{eq:integrated-qubit-completion-count}; it records the additional
visible mod-$4$ diagonal lifts above fixed binary bilinear completion data.  We retain it
explicitly, and it becomes the first term $m$ in
Eq.~\eqref{eq:detail-main-binary-E-fin}.  The probability input used with this count
is Lemma~\ref{lem:integrated-binary-lower-block-probability}, in particular
Eq.~\eqref{eq:integrated-binary-lower-block-probability}.

\subsubsection{Characteristic-two finite exponent decomposition}\label{subsubsec:binary-finite-exponent-decomposition}
For an admissible quadruple $(a,u,\ell,t)$, collect the logarithmic
contribution of the external lift and the four completion factors in
\begin{align}
 E_{\rm fin}^{(2)}(m,n;a,u,\ell,t)
 :={}&m+m(a-u)+\ell(n-a+u-\ell)
 +\ell m-\frac{\ell(\ell-1)}2\notag\\
 &+\frac{t[2(m-\ell)-t+1]}2.
 \label{eq:detail-main-binary-E-fin}
\end{align}
The five terms record, respectively, the external binary lift, coupling to the
nondegenerate lower quotient, rectangular coupling to the radical, free
symmetric coordinates, and residual symmetric rank.  The corresponding exponent contributions are
\begin{align}
 \text{external binary factor:}\quad &m,
 \label{eq:main-binary-exponent-external}\\
 \text{nondegenerate lower-block coupling:}\quad &m(a-u),
 \label{eq:main-binary-exponent-unreported-coupling}\\
 \text{rank-$\ell$ rectangular coupling:}\quad
 &\ell(n-a+u-\ell),
 \label{eq:main-binary-exponent-rectangular}\\
 \text{free symmetric coordinates:}\quad
 &\ell m-\frac{\ell(\ell-1)}2,
 \label{eq:main-binary-exponent-free-symmetric}\\
 \text{residual symmetric rank $t$:}\quad
 &\frac{t[2(m-\ell)-t+1]}2.
 \label{eq:main-binary-exponent-residual}
\end{align}
These expressions define the exponents associated with the corresponding finite factors.  The dimension-uniform rank-count bounds that justify them in
the probability estimate are introduced later, exactly where they are used.

\subsubsection{Clipped characteristic-two finite bound}\label{subsubsec:binary-clipped-finite-bound}
\label{subsec:binary-probability-completion-estimate}

We now return to the finite exact-identification probability.  The following
rank-count estimates are introduced here because this is the point at which
they enter the argument:
\begin{align}
 N_{\rm rect}(M,S,L;2)
 &\le C_{\rm rect}\,2^{L(M+S-L)},
 \label{eq:main-binary-rectangular-bound}\\
 N_{\rm sym}(D,T;2)
 &\le C_{\rm sym}\,2^{T(2D-T+1)/2},
 \label{eq:main-binary-symmetric-bound}
\end{align}
with absolute constants independent of all dimensions and ranks.  Applying these bounds term by term to
Eq.~\eqref{eq:main-binary-completion-count}, and then bounding the finite sum
by its largest term, gives
\begin{equation}
 2^m\mathcal N_{m,n-m,k}^{(2)}(a-u)
 \le
 2^{\max_{(\ell,t)}E_{\rm fin}^{(2)}(m,n;a,u,\ell,t)+O(\log n)},
 \label{eq:main-binary-completion-exponent-bound}
\end{equation}
where the maximum uses the ranges in
Eq.~\eqref{eq:main-binary-completion-ranges}.

The binary row-space-event estimate is
\begin{equation}
 \mathsf p_{n,m,k}^{(2)}(a,u)
 \le 2^{-(k-a)(n-m-a)-u(u+1)/2+o(n^2)},
 \label{eq:main-binary-lower-block-probability}
\end{equation}
with a remainder uniform over the finite parameter ranges when $m,k=O(n)$.
Multiplying the probability of each $(a,u)$ class by its clipped completion
factor yields
\begin{align}
 P_{\mathrm{id}}^{(k),*}(n,m;2)
 \le\sum_{a,u}
 2^{-(k-a)(n-m-a)-u(u+1)/2+o(n^2)}
 \min\Bigl\{1,
 2^{\max_{(\ell,t)}E_{\rm fin}^{(2)}
       -D_{\rm vis,fin}+O(\log n)}\Bigr\}.
 \label{eq:detail-main-binary-finite-exponent-bound}
\end{align}
Here the visible-label exponent is the all-prime finite quantity
$D_{\rm vis,fin}$ defined in Eq.~\eqref{eq:D-vis-fin}; for $p=2$ it equals
$\log_2\lvert Y_{n,m,2}\rvert$.
After division by $n^2$, the deterministic scaling argument converts this
finite bound to the main-text normalized exponent
\[
 -I+\min\{0,E-D\},
\]
where $I$, $E$, and $D$ are defined in
Eqs.~\eqref{eq:direct-finite-I}--\eqref{eq:direct-finite-D}.  Thus the finite
hidden exponent in Eq.~\eqref{eq:theorem-C-binary-hidden-exponent}, the
maximized binary completion exponent, and $D_{\rm vis,fin}$ converge to the
three quantities already used in the common variational argument.  The
probability estimate is therefore complete before that argument begins.

The binary-specific data have now been reduced to the two common parameter
pairs, the clipped finite bound, and the correction $m-\ell$.  The proof below
assembles the bilinear completion bound, the enhanced-lift comparison, the
row-space probability estimate, and the finite exponent relation into the
stated structural theorem.

\subsection{Completion of the proof of the characteristic-two finite structural estimate}

\begin{proof}[Proof of Theorem~\ref{thm:converse-binary-finite-structural-estimate}]
For characteristic two, the enhanced lower moment is first mapped to its
bilinear shadow, while the omitted diagonal data are restored through the
separate factor $2^m$.  The exact bilinear completion count supplies the
ambiguity factor, and the binary row-space estimate supplies its probability
weight.  These two finite ingredients are then combined before comparison with
the common formal exponent.  No odd-prime probability or completion theorem is
used.

\smallskip
\noindent\emph{Enhanced moment and associated bilinear form.}
Fix a lower enhanced moment $\zeta$.  Its associated bilinear form is the lower--lower
mod-$2$ block
\begin{equation}
 B_{\zeta}^{\rm bil}:=YY^T,
 \label{eq:theorem-C-bilinear-shadow}
\end{equation}
where $Y\in\F_2^{(n-m)\times k}$ realizes $\zeta$.  The parity constraints in
the enhanced moment make this shadow independent of the chosen realization.
Proposition~\ref{prop:binary-enhanced-bilinear-comparison} defines the forgetful
map from occurring enhanced visible moments to bilinear completion data.  Its
image is contained in the set of symmetric block completions of
$B_{\zeta}^{\rm bil}$ with total rank at most $k$.

For each bilinear completion, all off-diagonal mod-$2$ data are fixed, while
each of the $m$ visible mod-$4$ diagonal coordinates has at most two compatible
lifts.  The enhanced support size therefore obeys
\begin{equation}
 d_{\zeta}^{(2)}
 \le 2^m\mathcal N_{m,n-m,k}^{(2)}
       (\rank B_{\zeta}^{\rm bil}).
 \label{eq:theorem-C-enhanced-support-completion-bound}
\end{equation}
The factor $2^m$ is external to the exact bilinear completion count.  Keeping
it separate prevents it from being counted twice.

\smallskip
\noindent\emph{Bilinear completion count.}
Let $W=\operatorname{row}Y$, $a=\dim W$, and
$u=\dim(W\cap W^\perp)$.  Lemma~\ref{lem:common-Gram-radical-rank} gives
\begin{equation}
 \rank(YY^T)=a-u.
 \label{eq:theorem-C-shadow-rank}
\end{equation}
The binary Schur reduction and radical-coupling identity reduce every enlarged
completion to a rectangular radical coupling of rank $\ell$ and a residual
symmetric block of rank $t$.  Lemma~\ref{lem:integrated-binary-completion-count}
then gives the exact count
\begin{align}
 \mathcal N_{m,n-m,k}^{(2)}(a-u)
 ={}&2^{m(a-u)}
 \sum_{\ell}N_{\rm rect}(m,n-m-a+u,\ell;2)
 2^{\ell m-\ell(\ell-1)/2}\notag\\
 &\quad\times\sum_t N_{\rm sym}(m-\ell,t;2),
 \label{eq:theorem-C-bilinear-completion-count}
\end{align}
where
\begin{equation}
 0\le\ell\le\min\{m,n-m-a+u\},\quad
 0\le t\le m-\ell,\quad
 a-u+2\ell+t\le k.
 \label{eq:theorem-C-completion-ranges}
\end{equation}
Apply the uniform rectangular- and symmetric-rank estimates to each summand in
Eq.~\eqref{eq:theorem-C-bilinear-completion-count}.  As in the odd-prime
case, the admissible set contains at most $(m+1)^2$ pairs $(\ell,t)$.
Consequently the passage from the sum to its maximum contributes at most
$2\log_2(n+1)$ to the exponent.  The product constants in the rank estimates
contribute only an additional constant, so the total remainder is
$O(\log n)$ uniformly in $(a,u)$ and throughout all boundary cases.  We obtain
\begin{equation}
 2^m\mathcal N_{m,n-m,k}^{(2)}(a-u)
 \le
 2^{\max_{(\ell,t)}E_{\rm fin}^{(2)}(m,n;a,u,\ell,t)+O(\log n)}.
 \label{eq:theorem-C-completion-exponent-bound}
\end{equation}
The leading term $m$ in $E_{\rm fin}^{(2)}$ is exactly the separate lift factor
from the enhanced-to-bilinear comparison.
After division by $n^2$, this linear lift term vanishes, while it remains
essential in the finite inequality.  With the same scaled variables as in the
odd-prime branch, the binary rank budget converges to
$a_0-u_0+2\lambda+\tau\le A$, and the remaining quadratic
terms converge term by term to the common visible-completion exponent.
Therefore the two characteristics have the same limiting optimization, even
though their finite prefactors and moment interpretations differ.

\smallskip
\noindent\emph{Binary row-space probability.}
For uniform $Y\in\F_2^{(n-m)\times k}$ define
\begin{equation}
 \mathsf p_{n,m,k}^{(2)}(a,u)
 :=\Pr\{\dim\operatorname{row}Y=a,
 \ \dim(\operatorname{row}Y\cap\operatorname{row}Y^\perp)=u\}.
 \label{eq:theorem-C-unreported-mass-function}
\end{equation}
The binary isotropic-subspace count and the count of matrices with a prescribed
row space give Lemma~\ref{lem:integrated-binary-lower-block-probability}:
\begin{equation}
 \mathsf p_{n,m,k}^{(2)}(a,u)
 \le 2^{-(k-a)(n-m-a)-u(u+1)/2+o(n^2)},
 \label{eq:theorem-C-lower-block-probability}
\end{equation}
where the remainder is uniform over the finite domain when $m,k=O(n)$.
This estimate is proved directly for the binary bilinear form and does not use
the odd-prime row-space probability theorem.

\smallskip
\noindent\emph{Combination of the finite bounds and exact exponent comparison.}
Apply Proposition~\ref{prop:converse-weighted-moment-fiber-reduction} in
characteristic two, use the lift-and-completion bound from the enhanced-to-bilinear and completion bounds, and group the
lower enhanced moments by $(a,u)$.  This gives
\begin{equation}
 P_{\mathrm{id}}^{(k),*}(n,m;2)
 \le\sum_{a,u}\mathsf p_{n,m,k}^{(2)}(a,u)
 \min\left\{1,
 \frac{2^m\mathcal N_{m,n-m,k}^{(2)}(a-u)}{\lvert Y_{n,m,2}\rvert}\right\}.
 \label{eq:theorem-C-finite-weighted-completion}
\end{equation}
Since $\lvert Y_{n,m,2}\rvert=2^{D_{\rm vis,fin}}$, substituting the completion exponent
and row-space probability bounds into this clipped sum proves
Eq.~\eqref{eq:main-binary-finite-exponent-bound}.  The minimum with one is the
finite support cap and is not removed.

The final finite-size comparison identifies the binary exponent with the odd-prime finite exponent up to a uniform lower-order correction.  Put $h=n-m+1$ and substitute this value into
$E_{\rm fin}(m,h;a,u,\ell,t)$ from
Eq.~\eqref{eq:theorem-B-E-fin}.  The
rectangular term becomes
\begin{equation}
 \ell(m+h-a+u-\ell)
 =\ell(n+1-a+u-\ell)
 =\ell(n-a+u-\ell)+\ell.
 \label{eq:theorem-C-rectangular-comparison}
\end{equation}
All remaining terms agree with the binary expression, except for its external
lift contribution $m$.  Since the common rectangular term contains the extra
$\ell$ displayed above, the exact difference is
\begin{equation}
 E_{\rm fin}^{(2)}(m,n;a,u,\ell,t)
 -E_{\rm fin}(m,n-m+1;a,u,\ell,t)=m-\ell,
 \label{eq:theorem-C-exponent-difference}
\end{equation}
which is Eq.~\eqref{eq:main-binary-finite-exponent-identity}.  Finally,
$0\le\ell\le m\le n$ gives
$0\le(m-\ell)/n^2\le1/n$, uniformly over the complete finite domain.  This
proves the theorem.
\end{proof}

The combined estimates establish the characteristic-two finite structural bound stated in the main text.

\subsubsection{Endpoint-uniform clipping argument}
\label{subsubsec:binary-endpoint-clipping}

\begin{lemma}[Endpoint-uniform passage through clipping]
\label{lem:endpoint-uniform-clipping-passage}
Let $F_n(\theta)$ and $F(\theta)$ be real functions on finite parameter sets,
and suppose
\begin{equation}
 \sup_\theta\left|\frac{F_n(\theta)}{n^2}-F(\theta)\right|
 \le \varepsilon_n,
 \quad \varepsilon_n\longrightarrow0.
 \label{eq:endpoint-uniform-preclipping-error}
\end{equation}
Then
\begin{equation}
 \sup_\theta\left|
 \min\left\{0,\frac{F_n(\theta)}{n^2}\right\}
 -\min\{0,F(\theta)\}\right|
 \le\varepsilon_n.
 \label{eq:endpoint-uniform-postclipping-error}
\end{equation}
The same conclusion holds after taking a maximum over any nonempty admissible
parameter set.
\end{lemma}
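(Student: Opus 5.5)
The plan rests on one fact: the map $\phi(x):=\min\{0,x\}$ is one-Lipschitz on $\mathbb R$. With that in hand, uniform closeness before clipping transfers to uniform closeness after clipping, with no hypothesis on how far the values lie from the clipping point $0$.

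First I will check the Lipschitz property by cases on the signs of $x$ and $y$. If $x,y\le0$, then $|\phi(x)-\phi(y)|=|x-y|$. If $x,y\ge0$, both values are zero. In the mixed case, say $x\le0\le y$, we get $|\phi(x)-\phi(y)|=|x|=-x\le y-x=|x-y|$. Hence $|\phi(x)-\phi(y)|\le|x-y|$ for all real $x,y$.

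Next, for each fixed $\theta$, I will apply this with $x=F_n(\theta)/n^2$ and $y=F(\theta)$. The hypothesis in Eq.~\eqref{eq:endpoint-uniform-preclipping-error} gives $|x-y|\le\varepsilon_n$, so the clipped difference is at most $\varepsilon_n$. This bound is the same for every $\theta$, so taking the supremum over the finite parameter set yields Eq.~\eqref{eq:endpoint-uniform-postclipping-error}. The finiteness of the parameter set is not actually needed here, since the estimate is pointwise in $\theta$.

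For the statement about maxima, I will use the elementary fact that $|\max_S f-\max_S g|\le\sup_S|f-g|$ on any nonempty finite set $S$. To see it, let $\theta^*$ attain $\max_S f$; then $\max_S f-\max_S g\le f(\theta^*)-g(\theta^*)\le\sup_S|f-g|$, and the reverse inequality follows by swapping $f$ and $g$. This applies in two orders. If the maximum over the admissible set is taken before clipping, the maximized functions are still within $\varepsilon_n$, and clipping then preserves this by the Lipschitz property. If the maximum is taken after clipping, the inequality is applied directly to the clipped functions. In both orders the error stays at most $\varepsilon_n$, and the two orders give the same value because $\phi$ is nondecreasing. I do not expect a real obstacle; the only point needing care is the mixed-sign case above, which is exactly where an exponent crosses the clipping boundary.
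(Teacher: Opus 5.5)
Your proposal is correct and follows essentially the same route as the paper: the paper's proof just cites the one-Lipschitz property of $x\mapsto\min\{0,x\}$ and of the maximum functional in the uniform norm and applies them in turn. Your case check of the mixed-sign situation and your attained-maximizer argument simply spell out those two facts in detail.
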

\begin{proof}
Both $x\mapsto\min\{0,x\}$ and the maximum functional in the uniform norm are
one-Lipschitz.  Applying these two facts successively proves the claim.
\end{proof}

For the present finite exponents, every discrepancy caused by replacing
$h=n-m+1$ by $n-m$, by the binary lift correction $m-\ell$, or by an endpoint
rounding of a normalized integer parameter is $O(n)$ before division by
$n^2$.  Hence it is $O(1/n)$ afterward, uniformly for $0\le m,k,a,u,\ell,t\le
n$.  Lemma~\ref{lem:endpoint-uniform-clipping-passage} shows that this estimate
survives both clipping at zero and maximization.  This closes the finite-to-asymptotic passage uniformly at the auxiliary
endpoints $m=0$ and $m=n$, as well as at vanishing ranks and saturated rank
budgets.  The principal linear-size theorem separately assumes
$1\le m_n\le n$ and $m_n/n\to\beta\in(0,1]$.

The normalization by the visible-label alphabet size is already present before
this passage: for each lower moment $\zeta$, the success contribution is the
probability of that moment multiplied by the normalized occurring-moment-sector
count $d_\zeta/|Y|$, with $d_\zeta/|Y|\le1$.  Thus clipping represents the
exact finite support-count cap, not a later relaxation, and the
probability-weighted sum remains normalized
when the finite factors are replaced by their exponent bounds.

\section{Common auxiliary estimates}
\label{app:common-auxiliary-estimates}
\label{app:dimension-bound-insufficient}
This appendix collects estimates used in more than one characteristic-specific branch.  They are separated from the main proof because they are technical support lemmas rather than additional operational reductions.  In particular, the support-dimension comparison provides a diagnostic bound for the finite converse, while the remaining estimates fix uniform constants and elementary counting inequalities cited by the characteristic-specific arguments.

\subsection{The elementary support-dimension bound}
\begin{proposition}[Elementary support-dimension bound]
\label{prop:elementary-support-dimension}
Let $\Theta$ be a finite label set with $E:=\lvert \Theta\rvert$, equipped with the
uniform prior.  Let $\{\rho_\theta:\theta\in\Theta\}$ be density operators
supported on a common subspace $\mathcal K$, and put $D:=\dim\mathcal K$.
The common support dimension alone then gives the exact-identification bound
\begin{equation}
 P_{\mathrm{id}}^*\le \min\left\{1,\frac{D}{E}\right\}.       \label{eq:elementary-support-dimension}.
\end{equation}
\end{proposition}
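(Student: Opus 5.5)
The bound follows from a trace argument applied to an arbitrary POVM. Fix any POVM $\{M_\theta:\theta\in\Theta\}$. Write $\Pi_{\mathcal K}$ for the orthogonal projector onto the common support. Because every $\rho_\theta$ is supported on $\mathcal K$, we have $\rho_\theta=\Pi_{\mathcal K}\rho_\theta\Pi_{\mathcal K}$. Hence
\[
\Tr[M_\theta\rho_\theta]=\Tr[\Pi_{\mathcal K}M_\theta\Pi_{\mathcal K}\rho_\theta]\le\lambda_{\max}\!\left(\Pi_{\mathcal K}M_\theta\Pi_{\mathcal K}\right)\le\Tr[\Pi_{\mathcal K}M_\theta\Pi_{\mathcal K}].
\]
The first inequality uses that $\rho_\theta$ is a density operator. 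The second uses that the compressed operator is positive semidefinite, so its largest eigenvalue is at most its trace.

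Next, average over the uniform prior and use the normalization $\sum_\theta M_\theta=I$:
\[
\frac1E\sum_{\theta}\Tr[M_\theta\rho_\theta]\le\frac1E\sum_\theta\Tr[\Pi_{\mathcal K}M_\theta\Pi_{\mathcal K}]=\frac1E\Tr[\Pi_{\mathcal K}]=\frac DE.
\]
The right-hand side does not depend on the chosen POVM. Taking the supremum over all POVMs therefore gives $P_{\mathrm{id}}^*\le D/E$. The cap at one is immediate, since a success probability never exceeds one. Together these give Eq.~\eqref{eq:elementary-support-dimension}.

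No step here is a genuine obstacle. The only point needing care is to confine the argument to $\mathcal K$, that is, to compress each $M_\theta$ by $\Pi_{\mathcal K}$ before bounding by the trace. Bounding the full-space trace $\Tr M_\theta$ instead would give the ambient dimension rather than $D$.
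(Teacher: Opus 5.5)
Your proof is correct and follows essentially the same route as the paper. Both compress each POVM element to $\mathcal K$, bound $\Tr(M_\theta\rho_\theta)$ by the trace of the compressed element, and sum using $\sum_\theta M_\theta=I$ to obtain $\Tr\Pi_{\mathcal K}=D$; the paper does the middle step in one move via $\rho_\theta\le I_{\mathcal K}$, whereas you pass through $\lambda_{\max}$.
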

\begin{proof}
Compress an arbitrary identifying POVM $\{M_\theta\}$ to $\mathcal K$.  Since
$0\le\rho_\theta\le I_{\mathcal K}$,
\[
 \Tr(M_\theta\rho_\theta)\le\Tr M_\theta.
\]
Summing this trace bound over the uniformly weighted labels gives
\[
 P_{\mathrm{id}}(\mathsf M)
 =\frac1E\sum_{\theta\in\Theta}\Tr(M_\theta\rho_\theta)
 \le\frac1E\sum_{\theta\in\Theta}\Tr M_\theta
 =\frac{D}{E}.
\]
The additional upper bound by one is probabilistic.
\end{proof}
The proposition uses no covariance and no information about how the states are
distributed inside their common support.  These features make it general but
also coarse.

\subsection{Application to partial-label identification}
For the conditioned partial-label ensemble, the number of equiprobable labels
is
\begin{equation}
 E=\lvert Y\rvert=p^{mh+m(m+1)/2},\quad h=n-m+1.                  \label{eq:appendix-label-count}.
\end{equation}
All conditioned $k$-copy states are mixtures of vectors
$|\psi_{A,b}\rangle^{\otimes k}$ and are therefore supported on the symmetric
tensor power of the $p^n$-dimensional single-copy space.  This gives the common
support estimate
\begin{equation}
 D\le \dim\operatorname{Sym}^k(\mathbb C^{p^n})
 =\binom{p^n+k-1}{k}
 \le p^{nk}.                                             \label{eq:appendix-support-bound}.
\end{equation}
The last inequality also follows directly because the symmetric tensor space
is a subspace of the full $k$-fold tensor product.

Under the linear scaling
\[
 m=\beta n+o(n),\quad k=\alpha n+o(n),
\]
Eqs.~\eqref{eq:appendix-label-count} and
\eqref{eq:appendix-support-bound} give
\begin{align}
 \log_p E
 &=\left(\beta-\frac{\beta^2}{2}\right)n^2+o(n^2),       \label{eq:appendix-label-rate}\\
 \log_p D
 &\le \alpha n^2+o(n^2).                                \label{eq:appendix-support-rate}.
\end{align}
Comparing these two exponents through
Proposition~\ref{prop:elementary-support-dimension} yields
\begin{equation}
 P_{\mathrm{id}}^{(k),*}(n,m;p)
 \le
 p^{-\left(\beta-\beta^2/2-\alpha\right)n^2+o(n^2)}     \label{eq:appendix-elementary-rate}.
\end{equation}
This upper bound vanishes exponentially only in the region
\begin{equation}
 \alpha<\beta-\frac{\beta^2}{2}.                         \label{eq:appendix-elementary-region}.
\end{equation}
Outside this region, the resulting upper estimate is at least one on the
exponential scale and therefore does not improve the trivial probabilistic
bound.

\subsection{Comparison with the weighted converse obtained from the moment formula}
\begin{center}
\small
\begin{tabularx}{\textwidth}{@{}Y >{\centering\arraybackslash}p{0.27\textwidth} Y@{}}
\toprule
Method & Region where exact success is proved to vanish & Information retained\\
\midrule
Elementary support-dimension bound
 & $\alpha<\beta-\beta^2/2$ & Total support dimension only\\
Weighted converse obtained from the moment formula
 & $\alpha<1$ & Fiber weights and unreported-coset structure\\
\bottomrule
\end{tabularx}
\end{center}
The elementary region is intrinsically limited because
$0<\beta\le1$ implies
\begin{equation}
 \beta-\frac{\beta^2}{2}\le\frac12.
 \label{eq:appendix-dimension-max-half}
\end{equation}
The elementary argument misses at least the entire interval
$[1/2,1)$, and generally more.  At the complete-label endpoint $\beta=1$, it
proves vanishing only for $\alpha<1/2$, whereas
Theorem~\ref{thm:converse-common-asymptotic-reduction} proves vanishing exact success for every
$\alpha<1$.

The loss occurs because Eq.~\eqref{eq:elementary-support-dimension} replaces
the full ensemble by a single number, the dimension of a common supporting
space.  It discards the numbers of tuples with each moment $N_z$, their grouping inside
$H^\perp$-cosets, the lower-block probabilities $\pi_{\zeta}$, and the conditional
occurring-coordinate counts $d_{\zeta}$.  The weighted moment-fiber analysis retains precisely these data.  Its finite
bound is
\[
 P_{\mathrm{id}}^{(k),*}
 \le\sum_{\zeta}\pi_{\zeta}\frac{d_{\zeta}}{\lvert Y\rvert}
\]
and ultimately yields the full below-one converse.  Thus the elementary bound is
useful as a diagnostic comparison, but it is not a component of the sharp
proof.  In particular, Appendix~\ref{app:common-auxiliary-estimates} is not used in the rank-truncated graded
comparison: that argument uses the quadratic exact converse, not the weaker
dimension bound.

\section{Odd-prime auxiliary algebra}
\label{app:odd-prime-algebra}
This appendix supplies the odd-prime overlap and rank-counting identities.
Character orthogonality and quadratic Gauss sums prove the coordinate overlap
formula in Proposition~\ref{prop:rank-dependent-overlap}; symmetric-rank
enumeration supplies the uniform completion-count bound used in the converse.
The main direct bound instead uses unrestricted stabilizer overlaps and
Lagrangian-intersection classes.  No separate learning task or graded-score
optimization is introduced here.  Throughout, $p$ is a fixed odd prime.

\subsection{Quadratic Gauss sums and overlaps}
\label{app:quadratic-Gauss-sums}

This subsection supplies the complete calculation deferred from
Proposition~\ref{prop:rank-dependent-overlap}.  The partial-verification score
is derived directly from its projector in
Eq.~\eqref{eq:odd-partial-score-character-sum}.

\subsubsection{Character orthogonality and additive-character conventions}
\label{appsubsec:character-orthogonality}

Throughout this appendix, $p$ is a fixed odd prime and the standard additive
character is generated by
\begin{equation}
  \omega=e^{2\pi i/p}.
  \label{eq:appB-omega}
\end{equation}
The factor $1/2$ appearing below denotes the inverse of $2$ in $\F_p$.
We first record the additive-character identity used repeatedly in the proof.
It supplies both the one-dimensional cancellation rule and its product form on
$\F_p^m$, which will be used to remove incompatible linear terms from the
quadratic sums below.

\begin{lemma}[Character orthogonality]
\label{lem:appB-character-orthogonality}
For $a\in\F_p$,
\begin{equation}
  \sum_{z\in\F_p}\omega^{az}
  =
  \begin{cases}
    p,&a=0,\\
    0,&a\ne0.
  \end{cases}.
  \label{eq:appB-one-dimensional-character-sum}
\end{equation}
More generally, for $c\in\F_p^m$,
\begin{equation}
  \sum_{z\in\F_p^m}\omega^{c^Tz}
  =p^m\mathbf 1\{c=0\}.
  \label{eq:appB-multidimensional-character-sum}
\end{equation}
\end{lemma}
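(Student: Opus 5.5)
The one-dimensional identity in Eq.~\eqref{eq:appB-one-dimensional-character-sum} has two cases. For $a=0$ every summand equals $\omega^0=1$, so the sum is $|\F_p|=p$. For $a\ne0$ I would use translation invariance rather than an explicit geometric series. Put $S:=\sum_{z\in\F_p}\omega^{az}$. Since $a$ is invertible in $\F_p$, there is $z_0$ with $az_0=1$, and $z\mapsto z+z_0$ is a bijection of $\F_p$. This gives
\[
S=\sum_{z}\omega^{a(z+z_0)}=\omega\,S.
\]
Because $p\ge2$, $\omega=e^{2\pi i/p}\ne1$, so $(1-\omega)S=0$ forces $S=0$. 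A preliminary remark makes the exponents legitimate: $z\mapsto\omega^{z}$ is well defined on $\F_p=\mathbb Z/p\mathbb Z$ because $\omega^p=1$, and it is a homomorphism from $(\F_p,+)$ to $\mathbb C^\times$. That homomorphism property is exactly what the translation step uses.

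For Eq.~\eqref{eq:appB-multidimensional-character-sum} I would factor over coordinates. Writing $c^Tz=\sum_{j=1}^m c_jz_j$ in $\F_p$ and using the homomorphism property gives $\omega^{c^Tz}=\prod_j\omega^{c_jz_j}$. Summing over $z=(z_1,\dots,z_m)\in\F_p^m$, the sum splits as the product $\prod_{j=1}^m\sum_{z_j\in\F_p}\omega^{c_jz_j}$. By the one-dimensional case each factor is $p\,\mathbf 1\{c_j=0\}$. The product is therefore $p^m$ when every $c_j=0$, that is when $c=0$, and zero otherwise. This is the claimed $p^m\mathbf 1\{c=0\}$.

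No step is a genuine obstacle. The only point that needs care is that exponents are elements of $\F_p$ while $\omega$ is a complex root of unity, so one must note the well-definedness and additivity of $z\mapsto\omega^z$ before translating or factoring. Oddness of $p$ is not needed for this lemma; it matters only later, when $1/2\in\F_p$ appears in the quadratic Gauss sums.
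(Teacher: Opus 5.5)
Your proof is correct and essentially the paper's own argument: the $a=0$ case and the coordinatewise factorization for Eq.~\eqref{eq:appB-multidimensional-character-sum} are identical. The only variation is in the case $a\ne0$, where you use the translation $z\mapsto z+z_0$ to get $S=\omega S$, whereas the paper uses the permutation $z\mapsto az$ to reduce to the geometric series over all $p$th roots of unity; these are interchangeable standard arguments.
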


\begin{proof}
If $a=0$, every term in
Eq.~\eqref{eq:appB-one-dimensional-character-sum} is one.  If $a\ne0$, the
map $z\mapsto az$ permutes $\F_p$, and the sum is the geometric series over
all $p$th roots of unity.  Equation
\eqref{eq:appB-multidimensional-character-sum} follows by factorizing the sum
coordinatewise.
\end{proof}

\subsubsection{One-dimensional quadratic Gauss sums}
\label{appsubsec:one-dimensional-Gauss-sums}

For $a\in\F_p$, denote the one-dimensional quadratic character sum by
\begin{equation}
  G_p(a):=\sum_{z\in\F_p}\omega^{\frac12az^2}.
  \label{eq:appB-Gauss-sum-definition}
\end{equation}
Only its absolute value is needed here.

\begin{lemma}[Absolute value of a nondegenerate Gauss sum]
\label{lem:appB-one-dimensional-Gauss-absolute-value}
If $a\ne0$, then
\begin{equation}
  \lvert G_p(a)\rvert=\sqrt p.
  \label{eq:appB-one-dimensional-Gauss-absolute-value}
\end{equation}
For $a=0$, one has $G_p(0)=p$.
\end{lemma}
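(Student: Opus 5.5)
The plan is to compute $\lvert G_p(a)\rvert^2=G_p(a)\overline{G_p(a)}$ directly as a double sum and reduce it with Lemma~\ref{lem:appB-character-orthogonality}, never evaluating $G_p(a)$ itself. The case $a=0$ is immediate: every summand in Eq.~\eqref{eq:appB-Gauss-sum-definition} equals one, so $G_p(0)=p$.

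For $a\ne0$ we write
\[
 \lvert G_p(a)\rvert^2=\sum_{z,w\in\F_p}\omega^{\frac12a(z^2-w^2)}.
\]
We then change variables to $z=w+y$, which is a bijection of $\F_p^2$. The exponent becomes $\frac12a(2wy+y^2)=awy+\frac12ay^2$. Because $\frac12$ is the inverse of $2$ in $\F_p$, the cross term is exactly $awy$, with no residual factor. This is the only place where oddness of $p$ enters, and it is the one step that needs care.

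Summing first over $w$ for fixed $y$, Eq.~\eqref{eq:appB-one-dimensional-character-sum} gives
\[
 \sum_{w}\omega^{awy}=p\,\mathbf 1\{ay=0\}=p\,\mathbf 1\{y=0\},
\]
since $a\ne0$ and $\F_p$ is a field. Only the $y=0$ term survives, and its quadratic factor $\omega^{\frac12 a\cdot 0}$ equals one. Hence $\lvert G_p(a)\rvert^2=p$, and taking the positive square root gives Eq.~\eqref{eq:appB-one-dimensional-Gauss-absolute-value}.

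No step here is a real obstacle. The argument is a routine completion of the square combined with character orthogonality.
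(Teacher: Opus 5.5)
Your proof is correct and follows essentially the paper's argument: expand $\lvert G_p(a)\rvert^2$ as a double sum, change variables so the exponent becomes linear in one variable, and apply character orthogonality. The only difference is the substitution. The paper uses $(r,s)=(x-y,x+y)$, which needs $2$ invertible to be a bijection and gives the purely bilinear exponent $\frac12 ars$. Your shear $z=w+y$ is always bijective, uses invertibility of $2$ only to clear the cross term, and leaves a factor $\omega^{\frac12 ay^2}$ that equals one on the single surviving term $y=0$.
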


\begin{proof}
The case $a=0$ is immediate.  Suppose $a\ne0$.  Expanding the squared absolute value gives
\begin{align}
  \lvert G_p(a)\rvert^2
  &=\sum_{x,y\in\F_p}
    \omega^{\frac12a(x^2-y^2)}.                                  \label{eq:appB-Gauss-square-1}
\end{align}
Because $2$ is invertible in $\F_p$, the sum can be evaluated using the
bijective change of variables
\begin{equation}
  r:=x-y,
  \quad
  s:=x+y.
  \label{eq:appB-Gauss-change-variables}
\end{equation}
This change of variables is a bijection of $\F_p^2$.  Since $x^2-y^2=rs$,
\begin{align}
  \lvert G_p(a)\rvert^2
  =\sum_{r,s\in\F_p}\omega^{\frac12ars}
  =\sum_{r\in\F_p}
    \left(\sum_{s\in\F_p}\omega^{(ar/2)s}\right)
=p,                                                            \label{eq:appB-Gauss-square-2},
\end{align}
where the last equality follows from
Lemma~\ref{lem:appB-character-orthogonality}: the inner sum is nonzero only
for $r=0$.  Taking square roots gives
Eq.~\eqref{eq:appB-one-dimensional-Gauss-absolute-value}.
\end{proof}

\subsubsection{Diagonalization of symmetric forms}
\label{appsubsec:symmetric-form-diagonalization}

The multidimensional Gauss sum will be reduced to the one-dimensional result
by congruence diagonalization of its symmetric quadratic form.

\begin{lemma}[Congruence diagonalization]
\label{lem:appB-congruence-diagonalization}
Let $D\in\Sym_n(\F_p)$ have rank $t$.  There exists
$R\in\operatorname{GL}(n,\F_p)$ and nonzero elements
$\lambda_1,\ldots,\lambda_t\in\F_p$ such that
\begin{equation}
  R^TDR
  =\operatorname{diag}(\lambda_1,\ldots,\lambda_t,
                        0,\ldots,0).
  \label{eq:appB-congruence-diagonalization}
\end{equation}
\end{lemma}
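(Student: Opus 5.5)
The plan is induction on $n$, following the standard symmetric Gram--Schmidt diagonalization of a symmetric bilinear form over a field of odd characteristic. The case $D=0$ is trivial: take $R=I_n$ and $t=0$. Otherwise $D\neq0$, and the first step is to find a vector $v\in\F_p^n$ with $v^TDv\neq0$.

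If some diagonal entry $D_{ii}$ is nonzero, take $v=e_i$. If every diagonal entry vanishes, pick an off-diagonal entry $D_{ij}\neq0$ and set $v=e_i+e_j$. Then
\[
v^TDv=D_{ii}+2D_{ij}+D_{jj}=2D_{ij}\neq0,
\]
because $p$ is odd. This is the only place where odd characteristic is used. It is exactly the step that fails over $\F_2$, where alternating forms cannot be diagonalized. That is consistent with the paper restricting this appendix to odd $p$ and treating $p=2$ separately by kernel calculations.

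Next, put $\lambda_1:=v^TDv$ and consider the subspace
\[
v^{\perp_D}:=\{w: v^TDw=0\}.
\]
This is the kernel of the nonzero linear functional $w\mapsto v^TDw$, which is nonzero because it does not vanish at $v$. Hence it has dimension $n-1$. Since $v\notin v^{\perp_D}$, we get
\[
\F_p^n=\operatorname{span}\{v\}\oplus v^{\perp_D}.
\]
Choose a basis $w_2,\ldots,w_n$ of $v^{\perp_D}$ and let $R_1$ be the invertible matrix with columns $v,w_2,\ldots,w_n$. Then
\[
R_1^TDR_1=\lambda_1\oplus D',
\]
where $D'\in\Sym_{n-1}(\F_p)$ is the Gram matrix of $D$ restricted to $v^{\perp_D}$.

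Apply the induction hypothesis to $D'$, which yields $R'$ diagonalizing it. Set $R:=R_1(1\oplus R')$. This $R$ is invertible and $R^TDR$ is diagonal.

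Finally, permute the basis (a permutation matrix is invertible) so that the nonzero diagonal entries come first. Congruence by an invertible matrix preserves rank, and the rank of a diagonal matrix equals its number of nonzero entries. Therefore exactly $t=\rank D$ of the $\lambda_i$ are nonzero, which gives Eq.~\eqref{eq:appB-congruence-diagonalization}.

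The only real obstacle is producing the anisotropic vector $v$. I will handle it by the case split on whether $D$ has a nonzero diagonal entry, using the invertibility of $2$ in $\F_p$ in the second case.
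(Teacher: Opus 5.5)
Your proof is correct and follows essentially the same route as the paper. Both induct on $n$, obtain a vector with $v^TDv\neq0$ by the same case split (a nonzero diagonal entry, or $e_i+e_j$ giving $2D_{ij}\neq0$ because $p$ is odd), and split off the one-dimensional block. The paper writes your basis of $v^{\perp_D}$ explicitly as $e_\ell-D_{11}^{-1}D_{1\ell}e_1$, which is the same congruence step.
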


\begin{proof}
We argue by induction on $n$.  The statement is trivial for $D=0$.  Suppose
$D\ne0$.

If some diagonal entry $D_{jj}$ is nonzero, permute coordinates so that
$D_{11}\ne0$.  For each $\ell>1$, replace the $\ell$th basis vector by
\begin{equation}
  e_\ell-D_{11}^{-1}D_{1\ell}e_1.
  \label{eq:appB-diagonal-elimination}
\end{equation}
This congruence operation eliminates the entries in the first row and column
outside the diagonal, leaving a one-dimensional nondegenerate block and a
symmetric $(n-1)\times(n-1)$ block.

If every diagonal entry is zero, then some off-diagonal entry is nonzero, say
$D_{12}\ne0$.  Because the characteristic is not two, the vector
$e_1+e_2$ has
\begin{equation}
  (e_1+e_2)^TD(e_1+e_2)=2D_{12}\ne0.
  \label{eq:appB-create-nonzero-diagonal}
\end{equation}
After changing the first basis vector to $e_1+e_2$, we reduce to the preceding
case.  Iterating the argument diagonalizes $D$ by congruence.  The number of
nonzero diagonal entries equals $\rank(D)=t$.
\end{proof}

\subsubsection{Degenerate quadratic Gauss sums}
\label{appsubsec:degenerate-quadratic-Gauss-sums}

For a symmetric quadratic part $D\in\Sym_n(\F_p)$ and a linear term
$d\in\F_p^n$, consider the possibly degenerate Gauss sum
\begin{equation}
  \mathcal G(D,d)
  :=\sum_{x\in\F_p^n}
    \omega^{\frac12x^TDx+d^Tx}.
  \label{eq:appB-general-Gauss-sum}
\end{equation}

\begin{lemma}[Image--kernel orthogonality]
\label{lem:appB-image-kernel-orthogonality}
For every symmetric matrix $D\in\Sym_n(\F_p)$,
\begin{equation}
  \im(D)=(\ker D)^\perp.
  \label{eq:appB-image-kernel-orthogonality}
\end{equation}
\end{lemma}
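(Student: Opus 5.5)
The plan is to prove the inclusion $\im(D)\subseteq(\ker D)^\perp$ directly from symmetry, and then to upgrade it to equality by comparing dimensions. Throughout, $\perp$ is taken with respect to the standard symmetric bilinear form $(x,y)\mapsto x^Ty$ on $\F_p^n$. This is the same argument already sketched inside the proof of Proposition~\ref{prop:rank-dependent-overlap}, now written out in full.

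\emph{The inclusion.} Let $w=Dy$ be an element of $\im(D)$ and let $z\in\ker D$. Using $D^T=D$, we have
\[
 w^Tz=y^TD^Tz=y^TDz=y^T\cdot0=0 .
\]
Hence every vector in $\im(D)$ is orthogonal to all of $\ker D$, which gives $\im(D)\subseteq(\ker D)^\perp$.

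\emph{The dimension count.} By rank--nullity, $\dim\im(D)=\rank D$ and $\dim\ker D=n-\rank D$. The standard form on $\F_p^n$ is nondegenerate, so for any subspace $U\subseteq\F_p^n$ we have $\dim U^\perp=n-\dim U$. One way to see this is that $U^\perp$ is the kernel of the map $\F_p^n\to\F_p^{\dim U}$ given by a matrix whose rows form a basis of $U$. That matrix has full row rank, so the map is surjective. Taking $U=\ker D$ yields $\dim(\ker D)^\perp=\rank D=\dim\im(D)$. A subspace contained in a space of the same finite dimension must equal it, so $\im(D)=(\ker D)^\perp$.

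There is no real obstacle in this argument. The only point to watch is that nondegeneracy of the dot product over $\F_p$ is used for the complement dimension formula, even though $U\cap U^\perp$ may be nonzero. Such self-orthogonal intersections do not affect the dimension count.
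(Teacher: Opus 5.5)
Your proposal is correct and follows the paper's proof: you establish $\im(D)\subseteq(\ker D)^\perp$ from $D^T=D$, then conclude equality because both spaces have dimension $\rank D$. The one addition is your explicit justification of $\dim U^\perp=n-\dim U$ via a full-row-rank matrix, which the paper leaves implicit.
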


\begin{proof}
If $y=Dx$ and $z\in\ker D$, then
\begin{equation}
  z^Ty=z^TDx=(Dz)^Tx=0,
  \label{eq:appB-image-contained-orthogonal}
\end{equation}
so $\im(D)\subseteq(\ker D)^\perp$.  Both spaces have dimension
$\rank(D)$, hence they are equal.
\end{proof}

\begin{proposition}[Degenerate quadratic Gauss-sum dichotomy]
\label{prop:appB-degenerate-Gauss-dichotomy}
Let $D\in\Sym_n(\F_p)$ have rank $t$, and let $d\in\F_p^n$.  If the linear
term is incompatible with the quadratic image, translation along the kernel
forces
\begin{equation}
  \mathcal G(D,d)=0
  \quad\text{if }d\notin\im(D).
  \label{eq:appB-incompatible-Gauss-vanishes}
\end{equation}
If instead $d\in\im(D)$, completing the square leaves a rank-$t$
nondegenerate contribution and gives
\begin{equation}
  \lvert\mathcal G(D,d)\rvert=p^{n-t/2}.
  \label{eq:appB-compatible-Gauss-absolute-value}
\end{equation}
\end{proposition}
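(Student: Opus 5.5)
The plan is to handle the two cases separately, using Lemma~\ref{lem:appB-image-kernel-orthogonality} to describe the incompatible case and Lemmas~\ref{lem:appB-congruence-diagonalization} and~\ref{lem:appB-one-dimensional-Gauss-absolute-value} to evaluate the compatible case.

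For the vanishing statement \eqref{eq:appB-incompatible-Gauss-vanishes}, suppose $d\notin\im(D)$. By Lemma~\ref{lem:appB-image-kernel-orthogonality}, $\im(D)=(\ker D)^\perp$, so there is some $z\in\ker D$ with $d^Tz\ne0$. We substitute $x\mapsto x+z$ in \eqref{eq:appB-general-Gauss-sum}. Since $Dz=0$ and $D$ is symmetric, $\frac12(x+z)^TD(x+z)=\frac12x^TDx$. The linear term therefore acquires the factor $\omega^{d^Tz}$, which gives $\mathcal G(D,d)=\omega^{d^Tz}\mathcal G(D,d)$. Because $\omega^{d^Tz}\ne1$, it follows that $\mathcal G(D,d)=0$.

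For \eqref{eq:appB-compatible-Gauss-absolute-value}, suppose $d\in\im(D)$ and pick $y$ with $Dy=-d$. We substitute $x=w+y$ and expand the exponent:
\[
\tfrac12x^TDx+d^Tx=\tfrac12w^TDw+w^TDy+\tfrac12y^TDy+d^Tw+d^Ty.
\]
Using symmetry, $w^TDy=-d^Tw$, so the two terms linear in $w$ cancel. What remains is a constant phase times the pure quadratic sum, so $\lvert\mathcal G(D,d)\rvert=\lvert\mathcal G(D,0)\rvert$.

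Next we apply Lemma~\ref{lem:appB-congruence-diagonalization} and choose $R\in\operatorname{GL}(n,\F_p)$ with $R^TDR=\operatorname{diag}(\lambda_1,\ldots,\lambda_t,0,\ldots,0)$, where each $\lambda_j\ne0$. The substitution $w=Rv$ is a bijection of $\F_p^n$, and under it the sum factorizes coordinatewise:
\[
\mathcal G(D,0)=\prod_{j=1}^tG_p(\lambda_j)\cdot p^{\,n-t}.
\]
By Lemma~\ref{lem:appB-one-dimensional-Gauss-absolute-value}, each factor $G_p(\lambda_j)$ has absolute value $\sqrt p$. Hence $\lvert\mathcal G(D,d)\rvert=p^{t/2}p^{n-t}=p^{n-t/2}$.

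No step here is a real obstacle. The only point needing care is that the completing-the-square shift, and the quadratic term in the first case, use the inverse of $2$ in $\F_p$ together with the symmetry of $D$. This is available because $p$ is odd, which is assumed throughout this appendix.
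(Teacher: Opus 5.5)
Your proof is correct and follows the paper's argument essentially step for step. For the vanishing case you translate by a kernel vector $z$ with $d^Tz\ne0$; for the compatible case you complete the square, diagonalize by congruence, and use the one-dimensional Gauss-sum modulus. The only difference is cosmetic: you solve $Dy=-d$ and shift $x=w+y$, whereas the paper solves $Dy=d$ and writes $x=z-y$, which is the same substitution.
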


\begin{proof}
Assume first that $d\notin\im(D)$.  By
Lemma~\ref{lem:appB-image-kernel-orthogonality}, there exists
$z\in\ker D$ such that $d^Tz\ne0$.  Translating $x\mapsto x+z$ in
Eq.~\eqref{eq:appB-general-Gauss-sum} gives
\begin{align}
  \mathcal G(D,d)
  =\sum_{x\in\F_p^n}
    \omega^{\frac12(x+z)^TD(x+z)+d^T(x+z)}
=\omega^{d^Tz}\mathcal G(D,d),
  \label{eq:appB-translation-cancellation}
\end{align}
because $Dz=0$.  Since $\omega^{d^Tz}\ne1$, the sum must vanish.

Now suppose $d\in\im(D)$.  Choose a vector that represents the linear term
through the quadratic form:
\begin{equation}
  Dy=d.
  \label{eq:appB-choose-completion-vector}
\end{equation}
Set $x=z-y$.  Using the symmetry of $D$ and Eq.~\eqref{eq:appB-choose-completion-vector},
\begin{align}
  \frac12x^TDx+d^Tx
  =\frac12(z-y)^TD(z-y)+d^T(z-y)
  =\frac12z^TDz-\frac12y^TDy.
  \label{eq:appB-complete-the-square}
\end{align}
The translated sum therefore differs from the homogeneous quadratic sum only
by a unit-modulus phase:
\begin{equation}
  \mathcal G(D,d)
  =\omega^{-\frac12y^TDy}\mathcal G(D,0),
  \label{eq:appB-linear-term-removed}
\end{equation}
so it remains to evaluate $\lvert\mathcal G(D,0)\rvert$.

Choose $R$ as in Lemma~\ref{lem:appB-congruence-diagonalization} and make the
bijective change of variables $x=Rz$.  Then
\begin{align}
  \mathcal G(D,0)
  &=\sum_{z\in\F_p^n}
    \omega^{\frac12\sum_{j=1}^t\lambda_jz_j^2}                   \\
  &=p^{n-t}\prod_{j=1}^tG_p(\lambda_j).
  \label{eq:appB-factorized-Gauss-sum}
\end{align}
By Lemma~\ref{lem:appB-one-dimensional-Gauss-absolute-value},
\begin{equation}
  \lvert\mathcal G(D,0)\rvert
  =p^{n-t}p^{t/2}
  =p^{n-t/2}.
  \label{eq:appB-factorized-Gauss-absolute-value}
\end{equation}
The phase in Eq.~\eqref{eq:appB-linear-term-removed} has unit absolute value,
which proves Eq.~\eqref{eq:appB-compatible-Gauss-absolute-value}.
\end{proof}

\begin{remark}[Independence of the completion vector]
\label{rem:appB-completion-vector-independence}
If $y'$ is another solution of $Dy'=d$, then $y'-y\in\ker D$.  Symmetry gives
\begin{equation}
  y'^TDy'-y^TDy=0,
  \label{eq:appB-completion-vector-phase-independent}
\end{equation}
so the phase in Eq.~\eqref{eq:appB-linear-term-removed} is independent of the
chosen solution.  This fact is not needed for the absolute-value calculation,
but it confirms that completion of the square is well defined.
\end{remark}

\subsubsection{Proof of the rank-dependent overlap formula}
\label{appsubsec:proof-rank-dependent-overlap}

We now apply the Gauss-sum dichotomy to pairs of quadratic stabilizer states.
Their coordinate representation is
\begin{equation}
  |\psi_{A,b}\rangle
  =p^{-n/2}\sum_{x\in\F_p^n}
   \omega^{\frac12x^TAx+b^Tx}|x\rangle.
  \label{eq:appB-state-recall}
\end{equation}
For two labels $(A,b)$ and $(A',b')$, substituting these expansions converts
the overlap into the general Gauss sum:
\begin{align}
  \langle\psi_{A',b'}|\psi_{A,b}\rangle
  &=p^{-n}\sum_{x\in\F_p^n}
    \omega^{\frac12x^T(A-A')x+(b-b')^Tx}                          \\
  &=p^{-n}\mathcal G(\Delta A,\Delta b),
  \label{eq:appB-overlap-as-Gauss-sum}
\end{align}
where
\begin{equation}
  \Delta A:=A-A',
  \quad
  \Delta b:=b-b'.
  \label{eq:appB-overlap-differences}
\end{equation}
The relevant quadratic rank is
\begin{equation}
  t:=\rank(\Delta A).
  \label{eq:appB-overlap-rank}
\end{equation}
If the character difference is incompatible with the image of the quadratic
difference, the dichotomy gives an orthogonal pair:
\begin{equation}
  \langle\psi_{A',b'}|\psi_{A,b}\rangle=0.
  \label{eq:appB-overlap-zero}
\end{equation}
If the character difference is compatible, the overlap magnitude depends only
on the rank $t$:
\begin{equation}
  \lvert\langle\psi_{A',b'}|\psi_{A,b}\rangle\rvert
  =p^{-n}p^{n-t/2}
  =p^{-t/2}.
  \label{eq:appB-overlap-absolute-value}
\end{equation}
Squaring and combining the compatible and incompatible cases yields the
rank-dependent overlap formula
\begin{equation}
  \lvert\langle\psi_{A',b'}|\psi_{A,b}\rangle\rvert^2
  =
  \begin{cases}
    p^{-t},&\Delta b\in\im(\Delta A),\\
    0,&\Delta b\notin\im(\Delta A),
  \end{cases},
  \label{eq:appB-rank-dependent-overlap-final}
\end{equation}
which proves Proposition~\ref{prop:rank-dependent-overlap}.

\label{appsubsec:overlap-special-cases}
Two special cases are useful for interpreting the complete $m=n$ endpoint.

If $A'=A$, then $t=0$ and $\im(A-A')=\{0\}$.  Hence
\begin{equation}
  \lvert\langle\psi_{A,b'}|\psi_{A,b}\rangle\rvert^2
  =\delta_{b,b'},
  \label{eq:appB-full-label-overlap}
\end{equation}
recovering the fixed-$A$ orthonormality of
Lemma~\ref{lem:fixed-A-orthonormality}.

If $A-A'$ is invertible, then $t=n$ and
$\im(A-A')=\F_p^n$.  Every character difference is compatible, and
\begin{equation}
  \lvert\langle\psi_{A',b'}|\psi_{A,b}\rangle\rvert^2=p^{-n},
  \label{eq:appB-full-rank-mutually-unbiased}
\end{equation}
for all $b,b'$.  Thus the two bases $\mathcal B_A$ and $\mathcal B_{A'}$ are
mutually unbiased whenever $A-A'$ is nonsingular.


\subsection{Fixed-rank symmetric matrices}
\label{app:symmetric-rank-counts}

This subsection records the exact number of symmetric matrices of a
prescribed rank and derives the uniform estimate used in the completion
exponent.  We state the result over a finite field
$\F_q$ of odd cardinality $q$ and later specialize to $q=p$.  The only external enumerative input is the classical full-rank count of
MacWilliams~\cite{MacWilliams1969}.  The reduction from prescribed rank $t$ to
that full-rank enumeration is proved below within the manuscript: one chooses
the radical and then a nondegenerate symmetric form on the quotient.

\subsubsection{Gaussian binomial coefficients}
\label{appsubsec:Gaussian-binomial}

For $0\le t\le n$, denote the number of $t$-dimensional subspaces of
$\F_q^n$ by the Gaussian binomial coefficient
\begin{equation}
  \genfrac{[}{]}{0pt}{}{n}{t}_q
  :=\prod_{j=0}^{t-1}\frac{q^{n-j}-1}{q^{t-j}-1}.
  \label{eq:Gaussian-binomial-definition}
\end{equation}
By symmetry, it is also the number of codimension-$t$ subspaces.  Factoring out
the powers of $q$ gives
\begin{equation}
  \genfrac{[}{]}{0pt}{}{n}{t}_q
  =q^{t(n-t)}
   \frac{\prod_{j=0}^{t-1}(1-q^{-(n-j)})}
        {\prod_{j=1}^{t}(1-q^{-j})}.
  \label{eq:Gaussian-binomial-normalized}
\end{equation}

Define the positive constant
\begin{equation}
  \kappa_q
  :=\prod_{j=1}^{\infty}(1-q^{-j}).
  \label{eq:kappa-q-definition}
\end{equation}
The infinite product converges to a strictly positive number.  From
Eq.~\eqref{eq:Gaussian-binomial-normalized},
\begin{equation}
  \kappa_q q^{t(n-t)}
  \le \genfrac{[}{]}{0pt}{}{n}{t}_q
  \le \kappa_q^{-1}q^{t(n-t)}
  \quad(0\le t\le n).
  \label{eq:Gaussian-binomial-uniform-bounds}
\end{equation}
Indeed, both finite products in
Eq.~\eqref{eq:Gaussian-binomial-normalized} lie between $\kappa_q$ and $1$.
The bounds include $t=0$ and $t=n$, where the empty-product
convention applies.  Thus the constants do not deteriorate at either endpoint.
Taking logarithms, the entire product correction is bounded in absolute value
by $|\log_q\kappa_q|$, independently of $n$ and $t$; after division by $n^2$
it is therefore uniformly $O(n^{-2})$.

\subsubsection{Nonsingular symmetric matrices}
\label{appsubsec:nonsingular-symmetric-count}

Let $S_q(t)$ denote the number of nonsingular symmetric $t\times t$
matrices over $\F_q$:
\begin{equation}
  S_q(t)
  :=\lvert\{B\in\Sym_t(\F_q):\det B\ne0\}\rvert.
  \label{eq:nonsingular-symmetric-count-definition}
\end{equation}

\begin{theorem}[MacWilliams full-rank count]
\label{thm:MacWilliams-full-rank-count}
For odd $q$ and every nonnegative integer $t$,
\begin{equation}
  S_q(t)
  =q^{t(t+1)/2}
   \prod_{\substack{1\le j\le t\\ j\ \mathrm{odd}}}
   (1-q^{-j}).
  \label{eq:MacWilliams-full-rank-count}
\end{equation}
The empty product for $t=0$ is one.  The cited enumeration is used here only for fields of odd cardinality; in particular, no characteristic-two specialization of this formula is invoked.
\end{theorem}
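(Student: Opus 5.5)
We will establish the MacWilliams count $S_q(t)$ via a recursion in $t$ that we then solve. The idea is to classify a symmetric $t\times t$ matrix by whether its first row and column split off a nondegenerate piece.

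\textbf{Step 1: the recursion.} Let $T_q(t):=\lvert\Sym_t(\F_q)\rvert=q^{t(t+1)/2}$. Count nonsingular symmetric $B$ through the bilinear form on $\F_q^t$ and the vector $e_1$. Take the orthogonal decomposition relative to $e_1$.
\begin{itemize}
\item If $B_{11}\ne0$, the congruence of Eq.~\eqref{eq:appB-diagonal-elimination} writes $B$ as $B_{11}\oplus B'$, with $B'$ a nonsingular symmetric $(t-1)\times(t-1)$ Schur complement. The map $(B_{11},B_{12},\ldots,B_{1t},B')\mapsto B$ is bijective. This case contributes $(q-1)q^{t-1}S_q(t-1)$.
\item If $B_{11}=0$, nonsingularity forces the first row $c=(B_{12},\ldots,B_{1t})$ to be nonzero. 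Complete $e_1$ to a hyperbolic pair $(e_1,f)$ with $B(e_1,f)=1$. The orthogonal complement of this pair is then nondegenerate of dimension $t-2$.
\end{itemize}

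\textbf{Step 2: counting the second case.} Parametrize the case $B_{11}=0$ by $c\ne0$ and the remaining block $B''\in\Sym_{t-1}(\F_q)$. After a change of basis fixing $e_1$ that sends $c$ to a standard vector, nonsingularity of $B$ becomes nonsingularity of a Schur complement of size $t-2$. The remaining entries in the row and column of $f$ are free. This should give $(q^{t-1}-1)q^{t-1}S_q(t-2)$. We would verify this by writing $B''=\bigl(\begin{smallmatrix}\beta&d^T\\ d&B'''\end{smallmatrix}\bigr)$ in the adapted basis and checking that $\beta,d$ are unconstrained. The full recursion is then
\[
S_q(t)=(q-1)q^{t-1}S_q(t-1)+(q^{t-1}-1)q^{t-1}S_q(t-2),
\]
with $S_q(0)=1$ and $S_q(1)=q-1$.

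\textbf{Step 3: solving the recursion.} Verify by induction that the right-hand side of Eq.~\eqref{eq:MacWilliams-full-rank-count} satisfies the recursion. Write $S_q(t)=q^{t(t+1)/2}\pi_t$, where $\pi_t=\prod_{j\le t,\ j\ \mathrm{odd}}(1-q^{-j})$. Divide the recursion by $q^{t(t+1)/2}$. This reduces to the scalar identity
\[
\pi_t=(1-q^{-1})\pi_{t-1}+(q^{-1}-q^{-t})\pi_{t-2}.
\]
Split by the parity of $t$.
\begin{itemize}
\item For $t$ even, $\pi_t=\pi_{t-1}=(1-q^{-(t-1)})\pi_{t-2}$. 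The identity becomes $(1-q^{-(t-1)})=(1-q^{-1})(1-q^{-(t-1)})+q^{-1}-q^{-t}$, which one expands directly.
\item For $t$ odd, $\pi_t=(1-q^{-t})\pi_{t-1}$ and $\pi_{t-1}=\pi_{t-2}$. The identity becomes $(1-q^{-t})=(1-q^{-1})+q^{-1}-q^{-t}$, which is immediate.
\end{itemize}

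\textbf{Main obstacle.} The main obstacle is Step 2: making the count of the $B_{11}=0$ case exact without overcounting. Our first attempt would be to fix the first row $c$, apply an explicit unipotent change of basis to $e_2,\ldots,e_t$ normalizing $c$ to $e_2^T$, and then read off which entries are free. Alternatively, if that bookkeeping proves awkward, the result may simply be cited from MacWilliams~\cite{MacWilliams1969}, as the text already does. The proof is then only a consistency check.
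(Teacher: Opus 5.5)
Your argument is correct, and it takes a genuinely different route. The paper does not prove this count at all: it imports it as the full-rank case of MacWilliams' rank enumeration, and then derives the prescribed-rank formula separately by choosing the radical and a nondegenerate quotient form.

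The step you flag as the main obstacle closes exactly as you describe. For fixed $c\ne0$, choose $P\in\operatorname{GL}(t-1,\F_q)$ with $P^Tc=e_1$. Congruence by $\operatorname{diag}(1,P)$ turns $B$ into $\bigl(\begin{smallmatrix}0&1&0\\1&\beta&d^T\\0&d&B'''\end{smallmatrix}\bigr)$, with $(\beta,d,B''')$ in bijection with $B''\in\Sym_{t-1}(\F_q)$. Expanding along the first row gives determinant $-\det B'''$. Hence $\beta$ and $d$ are free, and this case contributes exactly $(q^{t-1}-1)q^{t-1}S_q(t-2)$. The Schur complement of the block $\bigl(\begin{smallmatrix}0&1\\1&\beta\end{smallmatrix}\bigr)$ is literally $B'''$, because its correction term vanishes.

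Your count for $B_{11}\ne0$ is also right, and so is the parity check in Step~3. The resulting recursion reproduces the values $S_q(0),\dots,S_q(3)$ listed after the theorem, so you do not need the fallback to the citation.

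The citation is economical and carries MacWilliams' full rank distribution, though the paper only ever needs the full-rank case from it. Your proof makes the appendix self-contained, using only the congruence step of Eq.~\eqref{eq:appB-diagonal-elimination} and a determinant expansion. Moreover, neither case of your recursion divides by $2$, so it proves the formula over every finite field, including even $q$. That is more than the paper needs, since its binary branch relies only on the cruder upper bound of Lemma~\ref{lem:integrated-binary-symmetric-rank-bound}.
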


Theorem~\ref{thm:MacWilliams-full-rank-count} is the full-rank
specialization of MacWilliams' rank enumeration in
\cite[Theorem~2, pp.~154--155]{MacWilliams1969}; simplifying that
specialization gives Eq.~\eqref{eq:MacWilliams-full-rank-count}.  MacWilliams'
Theorem~2 is not restricted to characteristic two, while the present theorem
uses it only for fields of odd cardinality.  The prescribed-rank formula below
is derived separately here by selecting the radical and pulling back a
nondegenerate form from the quotient.  For orientation, the first cases
are
\begin{align}
  S_q(0)&=1,                                                     \label{eq:S-q-zero}\\
  S_q(1)&=q-1,                                                   \label{eq:S-q-one}\\
  S_q(2)&=q^2(q-1),                                             \label{eq:S-q-two}\\
  S_q(3)&=q^2(q-1)(q^3-1).                                     \label{eq:S-q-three}.
\end{align}

Let
\begin{equation}
  \kappa_q^{\mathrm{odd}}
  :=\prod_{m=0}^{\infty}(1-q^{-(2m+1)}).
  \label{eq:kappa-q-odd-definition}
\end{equation}
Then $\kappa_q\le\kappa_q^{\mathrm{odd}}<1$, and
Eq.~\eqref{eq:MacWilliams-full-rank-count} immediately yields
\begin{equation}
  \kappa_q^{\mathrm{odd}}q^{t(t+1)/2}
  \le S_q(t)
  \le q^{t(t+1)/2}.
  \label{eq:nonsingular-symmetric-uniform-bounds}
\end{equation}

\subsubsection{Exact rank-\texorpdfstring{$t$}{t} enumeration}
\label{appsubsec:exact-rank-t-enumeration}

For the prescribed-rank enumeration, write
\begin{equation}
  N_{\mathrm{sym}}(n,t;q)
  :=\lvert\{D\in\Sym_n(\F_q):\rank(D)=t\}\rvert.
  \label{eq:rank-t-symmetric-count-q}
\end{equation}

\begin{theorem}[Symmetric matrices of rank \texorpdfstring{$t$}{t}]
\label{thm:exact-symmetric-rank-count}
For odd $q$ and $0\le t\le n$,
\begin{align}
  N_{\mathrm{sym}}(n,t;q)
  &=\genfrac{[}{]}{0pt}{}{n}{t}_q S_q(t)                                       \label{eq:rank-t-count-quotient-form}\\
  &=\genfrac{[}{]}{0pt}{}{n}{t}_q q^{t(t+1)/2}
    \prod_{\substack{1\le j\le t\\j\ \mathrm{odd}}}
    (1-q^{-j}).                                                  \label{eq:rank-t-count-exact}.
\end{align}
\end{theorem}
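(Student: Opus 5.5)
The plan is to set up a bijection between rank-$t$ symmetric matrices $D\in\Sym_n(\F_q)$ and pairs $(V,\bar D)$. Here $V\subseteq\F_q^n$ is a codimension-$t$ subspace playing the role of the radical, and $\bar D$ is a nondegenerate symmetric bilinear form on the $t$-dimensional quotient $\F_q^n/V$. Once this bijection is in place, the first equality, Eq.~\eqref{eq:rank-t-count-quotient-form}, is just a product count. There are $\genfrac{[}{]}{0pt}{}{n}{t}_q$ choices for $V$, by the symmetry of Gaussian binomials between dimension and codimension noted after Eq.~\eqref{eq:Gaussian-binomial-definition}. For each $V$ there are $S_q(t)$ choices for $\bar D$. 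The second equality, Eq.~\eqref{eq:rank-t-count-exact}, then follows by substituting the MacWilliams count, Theorem~\ref{thm:MacWilliams-full-rank-count}.

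\textbf{Construction of the map.} Identify $D$ with the symmetric bilinear form $\beta_D(x,y)=x^TDy$ on $\F_q^n$. Send $D$ to the pair $(V,\bar D)$ with $V:=\ker D$. Symmetry gives $\ker D=\{x:\beta_D(x,\cdot)=0\}=\{y:\beta_D(\cdot,y)=0\}$, so $\beta_D$ descends to a well-defined symmetric form $\bar\beta$ on $\F_q^n/V$. By rank--nullity, $\dim V=n-t$.

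\textbf{Nondegeneracy of the quotient form.} Suppose a class $[x]$ pairs to zero with every class. Then $\beta_D(x,y)=0$ for all $y$, so $x\in\ker D=V$ and $[x]=0$. Hence $\bar\beta$ is nondegenerate on a $t$-dimensional space.

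\textbf{Inverse map.} Given a codimension-$t$ subspace $V$ and a nondegenerate symmetric form $\bar\beta$ on $\F_q^n/V$, define $\beta(x,y):=\bar\beta(\pi x,\pi y)$, where $\pi$ is the quotient map, and let $D$ be its Gram matrix. Then $D$ is symmetric. Moreover, $x\in\ker D$ exactly when $\pi x$ lies in the radical of $\bar\beta$, which is zero; so $\ker D=V$ and $\rank D=t$. The two constructions are mutually inverse, because a form vanishing on $V$ in either slot is determined by its descent.

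\textbf{Counting per fiber.} Fix $V$ and choose an ordered basis of $\F_q^n/V$. Nondegenerate symmetric forms on the quotient are then in bijection with $S_q(t)$ nonsingular $t\times t$ symmetric matrices. The number $S_q(t)$ does not depend on which basis is chosen, so every fiber has the same size. Multiplying by the number of subspaces $V$ gives Eq.~\eqref{eq:rank-t-count-quotient-form}.

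The only delicate point is checking that the map is a bijection, and in particular that the inverse really produces kernel exactly $V$ and not something larger. This is handled by the nondegeneracy of $\bar\beta$. The argument works over any field, so the odd-$q$ hypothesis enters only through the use of Theorem~\ref{thm:MacWilliams-full-rank-count} in the second equality.
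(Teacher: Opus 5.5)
Your proposal is correct and follows essentially the same route as the paper. You choose the radical $\ker D$, counted by the Gaussian binomial, and identify the rank-$t$ forms with that radical with nondegenerate symmetric forms on the $t$-dimensional quotient, counted by $S_q(t)$, before substituting the MacWilliams count. Your check that the pulled-back form has kernel exactly $V$ is sound, and so is your remark that the first equality holds over every finite field, with oddness of $q$ needed only for the second.
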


\begin{proof}
A symmetric matrix $D$ defines a symmetric bilinear form
\begin{equation}
  \beta_D(x,y):=x^TDy.
  \label{eq:bilinear-form-from-matrix}
\end{equation}
on $V:=\F_q^n$.  Its radical is
\begin{equation}
  \operatorname{rad}(\beta_D)
  =\{x\in V:\beta_D(x,y)=0\ \text{for all }y\in V\}
  =\ker D.
  \label{eq:radical-equals-kernel}
\end{equation}
Thus $D$ has rank $t$ exactly when its radical $K$ has codimension $t$.  The
Gaussian binomial coefficient counts the possible radicals:
there are $\genfrac{[}{]}{0pt}{}{n}{t}_q$ choices for $K$.

Fix one such $K$.  A symmetric form on $V$ whose radical contains $K$ descends
uniquely to a symmetric form on the quotient $V/K$ through
\begin{equation}
  \overline\beta(x+K,y+K):=\beta(x,y).
  \label{eq:form-descends-to-quotient}
\end{equation}
Its radical is exactly $K$ if and only if $\overline\beta$ is nondegenerate.
Conversely, every nondegenerate symmetric form on $V/K$ pulls back uniquely to
a symmetric form on $V$ with radical $K$.  After choosing any basis of the
$t$-dimensional quotient, the number of such forms is $S_q(t)$; this number is
basis independent.  The choices of the radical and the nondegenerate quotient form are independent.
Multiplying their counts proves
Eq.~\eqref{eq:rank-t-count-quotient-form}.  Substitution of
Eq.~\eqref{eq:MacWilliams-full-rank-count} gives
Eq.~\eqref{eq:rank-t-count-exact}.
\end{proof}

For the smallest dimension $n=1$, the formula gives
\begin{equation}
  N_{\mathrm{sym}}(1,0;q)=1,
  \quad
  N_{\mathrm{sym}}(1,1;q)=q-1,
  \label{eq:rank-count-n-one-check}
\end{equation}
which sums to $q$, the number of $1\times1$ symmetric matrices.  More
generally, summing over the possible ranks gives
\begin{equation}
  \sum_{t=0}^nN_{\mathrm{sym}}(n,t;q)
  =q^{n(n+1)/2},
  \label{eq:rank-count-sums-to-all-symmetric}
\end{equation}
the total number of symmetric $n\times n$ matrices.

\subsubsection{Uniform exponent bounds}
\label{appsubsec:uniform-rank-exponent}

The exact rank formula separates into a subspace count and a nonsingular-form
count.  Applying their uniform bounds gives constants independent of both
$n$ and $t$.

\begin{corollary}[Uniform rank-count bounds]
\label{cor:uniform-symmetric-rank-count}
For fixed odd $q$ and all integers $0\le t\le n$,
\begin{equation}
  \kappa_q\kappa_q^{\mathrm{odd}}
  q^{t(2n-t+1)/2}
  \le
  N_{\mathrm{sym}}(n,t;q)
  \le
  \kappa_q^{-1}
  q^{t(2n-t+1)/2}.
  \label{eq:uniform-rank-count-multiplicative}
\end{equation}
Taking base-$q$ logarithms converts this constant-factor two-sided estimate
into a uniform additive exponent estimate:
\begin{equation}
  \log_qN_{\mathrm{sym}}(n,t;q)
  =\frac{t(2n-t+1)}2+O_q(1).
  \label{eq:uniform-rank-count-logarithmic}
\end{equation}
uniformly over $n$ and $t$.
\end{corollary}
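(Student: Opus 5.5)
The multiplicative bound in Eq.~\eqref{eq:uniform-rank-count-multiplicative} will be obtained by inserting the two uniform estimates already recorded into the exact factorization of Theorem~\ref{thm:exact-symmetric-rank-count}. That theorem gives
\[
 N_{\mathrm{sym}}(n,t;q)=\genfrac{[}{]}{0pt}{}{n}{t}_q\,S_q(t).
\]
Each factor can then be bounded on both sides separately.

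For the subspace factor, Eq.~\eqref{eq:Gaussian-binomial-uniform-bounds} gives
\[
 \kappa_q q^{t(n-t)}\le\genfrac{[}{]}{0pt}{}{n}{t}_q\le\kappa_q^{-1}q^{t(n-t)}
\]
for all $0\le t\le n$, including the endpoints.

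For the nonsingular-form factor, Eq.~\eqref{eq:nonsingular-symmetric-uniform-bounds} gives
\[
 \kappa_q^{\mathrm{odd}}q^{t(t+1)/2}\le S_q(t)\le q^{t(t+1)/2}.
\]
This follows from MacWilliams' count in Theorem~\ref{thm:MacWilliams-full-rank-count}. The reason is that the finite product over odd $j\le t$ lies between the infinite odd product $\kappa_q^{\mathrm{odd}}$ and $1$, since every factor lies in $(0,1)$.

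Multiplying the two pairs of inequalities, lower bounds with lower bounds and upper with upper, is legitimate because all quantities are positive. The exponent identity
\[
 t(n-t)+\frac{t(t+1)}2=\frac{t(2n-t+1)}2
\]
then produces exactly the stated constants $\kappa_q\kappa_q^{\mathrm{odd}}$ below and $\kappa_q^{-1}$ above.

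For the logarithmic form in Eq.~\eqref{eq:uniform-rank-count-logarithmic}, I will take $\log_q$ of the two-sided estimate. This gives
\[
 \Bigl|\log_qN_{\mathrm{sym}}(n,t;q)-\tfrac{t(2n-t+1)}2\Bigr|
 \le\max\{|\log_q(\kappa_q\kappa_q^{\mathrm{odd}})|,\ |\log_q\kappa_q|\},
\]
and the right-hand side depends only on $q$. Positivity of $\kappa_q$, and hence of $\kappa_q^{\mathrm{odd}}\ge\kappa_q$, was noted after Eq.~\eqref{eq:kappa-q-definition}, so both logarithms are finite.

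The argument is entirely routine. The only point that needs care is uniformity at the endpoints $t=0$ and $t=n$, and there the empty-product conventions were already built into the cited bounds.
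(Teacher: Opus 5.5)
Your proposal is correct and follows essentially the same route as the paper. The paper likewise inserts the uniform bounds \eqref{eq:Gaussian-binomial-uniform-bounds} and \eqref{eq:nonsingular-symmetric-uniform-bounds} into the factorization of Theorem~\ref{thm:exact-symmetric-rank-count}, combines the exponents via $t(n-t)+t(t+1)/2=t(2n-t+1)/2$, and takes logarithms; your remarks on positivity and on the $q$-only additive constant just make explicit what the paper leaves implicit.
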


\begin{proof}
The exponent is the sum
\begin{equation}
  t(n-t)+\frac{t(t+1)}2
  =\frac{t(2n-t+1)}2.
  \label{eq:rank-count-exponent-identity}
\end{equation}
The multiplicative factors are bounded by
Eqs.~\eqref{eq:Gaussian-binomial-uniform-bounds} and
\eqref{eq:nonsingular-symmetric-uniform-bounds}, proving
Eq.~\eqref{eq:uniform-rank-count-multiplicative}.  Taking logarithms gives
Eq.~\eqref{eq:uniform-rank-count-logarithmic}.
\end{proof}

For the paper's odd-prime notation, set $q=p$.  Corollary
\ref{cor:uniform-symmetric-rank-count} then improves the provisional estimate
to the uniform $O_p(1)$ form needed in the main text:
\begin{equation}
  \log_pN_{\mathrm{sym}}(n,t;p)
  =\frac{t(2n-t+1)}2+O_p(1).
  \label{eq:paper-uniform-rank-count}
\end{equation}
This uniform precision is the form used in the main text: after summing over
only polynomially many rank indices it contributes at most $O_p(\log n)$ to
the finite converse count and does not change the $n^2$-scale exponent.  The
direct argument uses the separate Lagrangian-intersection count in
Lemma~\ref{lem:lagrangian-intersection-shells}.

\section{Characteristic-two auxiliary algebra}
\label{app:characteristic-two-algebra}
This appendix records a characteristic-two overlap-energy calculation as an independent consistency check.  The main direct bound instead uses the unrestricted transitive pure-stabilizer PGM and Lagrangian-intersection shells.  The finite binary completion argument needed for the converse is proved earlier in Appendix~\ref{app:binary-finite-structural-proof}; it is not repeated here.  What remains is the $\mathbb Z_4$-enhanced Gauss-sum classification and the exact Fourier-energy calculation, whose role is to replace the ordinary quadratic-form identities available for odd primes.  The graded comparison uses only the binary score spectrum already established in the main text and introduces no additional binary optimization in this appendix.

\subsection{\texorpdfstring{$\mathbb Z_4$}{Z4}-valued Gauss sums and Fourier energy}
\label{subsec:integrated-qubit-gauss-achievability}

\subsubsection{Radical compatibility and overlap magnitude}
\label{subsubsec:z4-radical-overlap}
For $g\in G_{n,2}$, let $q_g:\F_2^n\to\mathbb Z_4$ be the enhanced quadratic
form in Eq.~\eqref{eq:integrated-qubit-phase}, and let $B_g$ be its symmetric
binary polarization.  The radical, restricted enhancement, and bilinear rank
are
\begin{equation}
 R_g:=\operatorname{rad}B_g,
 \quad \eta_g:=q_g/2|_{R_g}:R_g\to\F_2,
 \quad t_g:=\rank B_g.
 \label{eq:integrated-qubit-gauss-invariants}
\end{equation}
If $\eta_g\ne0$, translation by a radical vector introduces a nontrivial
phase and forces the quadratic Gauss sum to vanish.  If $\eta_g=0$, the
enhancement descends to a nondegenerate form $\bar q_g$ on the quotient
$\F_2^n/R_g$, giving the complex overlap
\begin{equation}
 \langle\psi_{0,0}^{(2)}|\psi_g^{(2)}\rangle
 =2^{-t_g/2}
 \exp\!\left(\frac{\pi i}{4}\operatorname{Br}(\bar q_g)\right),
 \label{eq:integrated-qubit-overlap-classification}
\end{equation}
where $\operatorname{Br}(\bar q_g)\in\mathbb Z_8$ is the Brown invariant,
normalized so that the Gauss-sum phase is
$\exp(\pi i\operatorname{Br}(\bar q_g)/4)$
\cite{Brown1972Kervaire,Taylor2022GaussSums}.  This invariant records the
normalized Gauss-sum phase; it does not by itself classify the quadratic
enhancement up to isometry.  This phase convention is confined to the present auxiliary consistency check; the main direct theorem uses only the intrinsic stabilizer-overlap magnitude and Lagrangian-intersection shells.
Thus the Brown invariant controls the characteristic-two phase, while the
nonzero magnitude depends only on $t_g$.  Fourier energy uses the $2k$th power
of the modulus, so the phase cancels exactly:
\begin{equation}
 \left|2^{-t_g/2}
 \exp\!\left(\frac{\pi i}{4}\operatorname{Br}(\bar q_g)\right)\right|^{2k}
 =2^{-kt_g};
 \label{eq:integrated-qubit-Brown-phase-cancels}
\end{equation}
the Brown phase disappears at this exact step, before the fixed-rank classes are
summed.  It specifies the phase of the complex overlap, but it has no effect on
the energy exponent.

\subsubsection{Rank enumeration and Fourier-energy decay}
\label{subsubsec:z4-rank-energy-decay}

For a fixed rank-$t$ symmetric binary form, exactly $2^t$ of its $2^n$
enhancements have trivial restriction to the radical and hence contribute
nonzero overlap.  Grouping these contributions by rank gives the exact
off-identity Fourier energy
\begin{equation}
 \mathcal E_{n,k}^{(2)}
 :=\sum_{g\in G_{n,2}\setminus\{0\}}
 |\langle\psi_{0,0}^{(2)}|\psi_g^{(2)}\rangle|^{2k}
 =\sum_{t=1}^nN_{\rm sym}(n,t;2)2^{-(k-1)t}.
 \label{eq:integrated-qubit-exact-fourier-energy}
\end{equation}
To bound this rank sum, specialize
Lemma~\ref{lem:integrated-binary-symmetric-rank-bound} to $d=n$:
\begin{equation}
 N_{\rm sym}(n,t;2)\le C_2\,2^{t(2n-t+1)/2}\le C_2\,2^{nt}.
 \label{eq:integrated-qubit-rank-bound}
\end{equation}
Combining this count with the factor $2^{-(k-1)t}$ makes the rank sum
geometric.  Hence, for every fixed $0<\delta<\alpha-1$ and
$k=\alpha n+o(n)$,
\begin{equation}
 \mathcal E_{n,k}^{(2)}=O(2^{-\delta n}).
 \label{eq:integrated-qubit-energy-decay}
\end{equation}

\end{document}